\theoremstyle{plain}
\newtheorem{theorem}{Theorem}[section]
\newtheorem{lemma}{Lemma}
\newtheorem{remark}{Remark}
\newtheorem{condition}{Condition}
\newtheorem{proposition}{Proposition}
\newtheorem{assumption}{Assumption}
\newtheorem{corollary}{Corollary}
\theoremstyle{plain}
\let\hat\widehat
\let\tilde\widetilde
\newcommand{\ba}{\bm{a}}
\newcommand{\bb}{\bm{b}}
\newcommand{\be}{\bm{e}}
\newcommand{\bbf}{\bm{f}}
\newcommand{\bn}{\bm{n}}
\newcommand{\br}{\bm{r}}
\newcommand{\bu}{\bm{u}}
\newcommand{\bv}{\bm{v}}
\newcommand{\bw}{\bm{w}}
\newcommand{\bx}{\bm{x}}
\newcommand{\bz}{\bm{z}}
\newcommand{\Ab}{\mathbf{A}}
\newcommand{\Bb}{\mathbf{B}}
\newcommand{\Db}{\mathbf{D}}
\newcommand{\Eb}{\mathbf{E}}
\newcommand{\Gb}{\mathbf{G}}
\newcommand{\Hb}{\mathbf{H}}
\newcommand{\Ib}{\mathbf{I}}
\newcommand{\Lb}{\mathbf{L}}
\newcommand{\Mb}{\mathbf{M}}
\newcommand{\Pb}{\mathbf{P}}
\newcommand{\Rb}{\mathbf{R}}
\newcommand{\Tb}{\mathbf{T}}
\newcommand{\Ub}{\mathbf{U}}
\newcommand{\Vb}{\mathbf{V}}
\newcommand{\Xb}{\mathbf{X}}
\newcommand{\Yb}{\mathbf{Y}}
\newcommand{\Zb}{\mathbf{Z}}
\newcommand{\bB}{\bm{B}}
\newcommand{\bD}{\bm{D}}
\newcommand{\bG}{\bm{G}}
\newcommand{\bI}{\bm{I}}
\newcommand{\bS}{\bm{S}}
\newcommand{\bU}{\bm{U}}
\newcommand{\bV}{\bm{V}}
\newcommand{\bX}{\bm{X}}
\newcommand{\bY}{\bm{Y}}
\newcommand{\bZ}{\bm{Z}}
\newcommand{\cB}{\mathcal{B}}
\newcommand{\cH}{\mathcal{H}}
\newcommand{\cL}{\mathcal{L}}
\newcommand{\cM}{\mathcal{M}}
\newcommand{\cN}{\mathcal{N}}
\newcommand{\cQ}{\mathcal{Q}}
\newcommand{\cR}{\mathcal{R}}
\newcommand{\cS}{{\mathcal{S}}}
\newcommand{\cU}{\mathcal{U}}
\newcommand{\cV}{\mathcal{V}}
\newcommand{\EE}{\mathbb{E}}
\newcommand{\PP}{\mathbb{P}}
\newcommand{\RR}{\mathbb{R}}
\newcommand{\balpha}{\bm{\alpha}}
\newcommand{\bbeta}{\bm{\beta}}
\newcommand{\bgamma}{\bm{\gamma}}
\newcommand{\bdelta}{\bm{\delta}}
\newcommand{\bvarepsilon}{\bm{\varepsilon}}
\newcommand{\btheta}{\bm{\theta}}
\newcommand{\blambda}{\bm{\lambda}}
\newcommand{\bnu}{\bm{\nu}}
\newcommand{\bupsilon}{\bm{\upsilon}}
\newcommand{\bphi}{\bm{\phi}}
\newcommand{\bpsi}{\bm{\psi}}
\newcommand{\bomega}{\bm{\omega}}
\newcommand{\bGamma}{\bm{\Gamma}}
\newcommand{\bDelta}{\bm{\Delta}}
\newcommand{\bTheta}{\bm{\Theta}}
\newcommand{\bLambda}{\bm{\Lambda}}
\newcommand{\bPi}{\bm{\Pi}}
\newcommand{\bSigma}{\bm{\Sigma}}
\newcommand{\bUpsilon}{\bm{\Upsilon}}
\newcommand{\bPhi}{\bm{\Phi}}
\newcommand{\bPsi}{\bm{\Psi}}
\newcommand{\bOmega}{\bm{\Omega}}
\newcommand{\argmin}{\mathop{\mathrm{argmin}}}
\newcommand{\sign}{\mathop{\mathrm{sign}}}
\DeclareMathOperator{\Cov}{\rm Cov}
\newcommand*{\zero}{{\bm 0}}
\newcommand*{\one}{{\bm 1}}
\newcommand{\diag}{{\rm diag}}
\def\T{{ \intercal }}
\begin{document}

\begin{frontmatter}
\title{Statistical Inference for Covariate-Adjusted and Interpretable Generalized Latent Factor Model
with Application to Testing Fairness
}
\runtitle{Inference for Covariate-Adjusted Generalized Factor Model}

\begin{aug}

\author[A]
{\fnms{Jing}~\snm{Ouyang}\ead[label=e1]{jingoy@hku.hk}},
\author[B]{\fnms{Chengyu}~\snm{Cui}\ead[label=e2]{chyc@umich.edu}}\footnote{Ouyang and Cui contribute equally to this work},
\author[B]{\fnms{Kean Ming}~\snm{Tan}\ead[label=e3]{keanming@umich.edu}}
\and
\author[B]{\fnms{Gongjun}~\snm{Xu}\ead[label=e4]{gongjun@umich.edu}}
\address[A]{Faculty of Business and Economics,
       University of Hong Kong \printead[presep={ ,\ }]{e1}}

\address[B]{Department of Statistics,
University of Michigan\printead[presep={,\ }]{e2,e3,e4}}
\end{aug}

\begin{abstract}
Latent variable models are popularly used to measure latent embedding factors from large-scale assessment data. Beyond understanding these latent factors, the covariate effect on responses controlling for latent factors is also of great scientific interest and has wide applications, such as evaluating the fairness of educational testing, where the covariate effect reflects whether a test question is biased toward certain individual characteristics (e.g., gender and race), taking into account their latent abilities. However, the large sample sizes and high dimensional responses pose challenges to developing efficient methods and drawing valid inferences. Moreover, to accommodate the commonly encountered discrete responses, generalized latent factor models are often assumed, adding further complexity. To address these challenges, we consider a covariate-adjusted generalized factor model and develop novel and interpretable conditions to address the identifiability issue. Based on the identifiability conditions, we propose a joint maximum likelihood estimation method and establish estimation consistency and asymptotic normality results for the covariate effects. Furthermore, we derive estimation and inference results for latent factors and the factor loadings.  We illustrate the finite sample performance of the proposed method through extensive numerical studies and an educational assessment dataset from the Programme for International Student Assessment (PISA). 
\end{abstract}


\begin{keyword}
\kwd{Generalized latent factor model}
\kwd{Identifiability}
\kwd{Large-scale assessment}
\end{keyword}

\end{frontmatter}

\section{Introduction}
\label{sec:intro}

Latent factors, often referred to as hidden factors, play an increasingly important role in modern statistics and machine learning to analyze large-scale complex measurement data and find wide-ranging applications across various scientific fields, including educational assessments~\citep{reckase2009, hambleton2013item}, macroeconomics forecasting~\citep{stock2002forecasting, lam2011estimation}, and biomedical diagnosis~\citep{carvalho2008high, frichot2013testing}.
For instance, in educational testing and social sciences, latent factors are used to model unobservable traits of respondents, such as skills, personality, and attitudes~\citep{von2008toefl, reckase2009}; in biology and genomics, latent factors are used to capture underlying genetic factors, gene expression patterns, or hidden biological mechanisms~\citep{carvalho2008high, frichot2013testing}.  
To uncover the latent factors and analyze large-scale complex data, various latent factor models
have been developed and extensively investigated in the existing literature~\citep{bai2012statistical, fan2013large, chen2023statistical, wang2022maximum}.
In addition to measuring the latent factors, the observed covariates and the covariate effects conditional on the latent factors hold significant scientific interpretations in many applications~\citep{reboussin2008locally, park2018explanatory}.

\subsection{Application and Motivation}
One important application is testing fairness, which receives increasing attention in the fields of education, psychology, and social sciences~\citep{candell1988iterative, belzak2020improving, chen2023dif}. In educational assessments, testing fairness, or measurement invariance, implies that groups from diverse backgrounds have the same probability of endorsing the test items, controlling for individual proficiency levels~\citep{millsap2012statistical}. 
Testing fairness is not only of scientific interest for psychometricians and statisticians but also attracts wide public awareness~\citep{lawsuit1984}.  
In the era of rapid technological advancement, large-scale international educational assessments are becoming increasingly prevalent. 
One example is the Programme for International Student Assessment (PISA), which is a large-scale worldwide assessment to evaluate the academic performance of 15-year-old students across many countries and economies~\citep{oecd2019}. The testing programme is conducted every four years and in the 2018 cycle, over 600,000 students from 79 countries and economies participated in this assessment, representing a population of approximately 31 million 15-year-olds~\citep{pisatechnicalreport2018}. PISA 2018 adopted the computer-based assessment mode and the assessment mainly evaluated students' proficiency in mathematics, reading, and science domains~\citep{oecd2019}. 
In addition to the testing questions, background questionnaires were administered to collect information including student demographics (e.g., gender and race), family background, and school characteristics~\citep{pisatechnicalreport2018}. 
By collecting both response data and background information, PISA aims to evaluate not only students' latent abilities but also the fairness of the assessment~\citep{pisatechnicalreport2018}. 
 To evaluate testing fairness in large-scale assessments, statisticians and psychometricians have been developing modern and computationally efficient methodologies for interpreting the effects of demographic covariates (e.g., gender and race) on the responses to testing items.

 However, the discrete nature of item responses, the growing sample size, and the large number of test items in modern educational assessments pose great statistical challenges for the estimation and inference of covariate effects and latent factors. 
For instance, in educational and psychological measurement, such a testing fairness issue (measurement invariance) is typically assessed by differential item functioning (DIF) analysis of item response data, which aims to detect DIF items whose response distributions depend not only on the latent factors but also on respondents’ covariates (such as group membership). 
The DIF problem has been extensively studied, yet many existing methods rely on domain knowledge to pre-specify DIF-free items, commonly referred to as anchor items. Traditional anchor-based methods can be classified into two categories: one line of work does not assume an item response theory (IRT) model~\citep{mantel1959statistical, dorans1986demonstrating, zwick2000using,  may2006multilevel, soares2009integrated, frick2015rasch}, while another line is developed within the IRT framework~\citep{thissen1988use, oort1998simulation, steenkamp1998assessing, tay2016item, cao2017monte}. Compared to non-IRT models, IRT-based methods rely on an explicitly specified IRT model prescribed prior to the study, which enables a clearer formulation of the DIF problem. 
Within the IRT-based framework,  DIF methods are often developed under the multiple indicators, multiple causes (MIMIC) IRT model~\citep{joreskog1975estimation}, a structural equation modeling framework for both continuous~\citep{zellner1970estimation, goldberger1972structural} and categorical item responses~\citep{muthen1985method, muthen1991instructionally}. To address DIF issues, the MIMIC model is constructed by two components including a measurement model, which models the dependence of item responses on latent abilities and observed covariates, and a structural model, which models the conditional distribution of latent abilities given observed covariates~\citep{muthen1989latent}. 



  Despite its widespread use, the anchor-based method has raised a major concern that misspecifying anchor items can lead to biased estimation and invalid inference~\citep{thissen1988use, tay2016item}. 
To address this limitation, researchers have proposed item purification methods that iteratively select anchor items using stepwise selection models~\citep{candell1988iterative, fidalgo2000effects, kopf2015framework}, as well as anchor alignment methods based on inequality criterion~\citep{asparouhov2014multiple, strobl2021anchor}. Recently, more non-anchor-based approaches have been developed, including tree-based methods~\citep{tutz2016item, bollmann2018item}, regularized estimation procedures~\citep{bauer2020simplifying, belzak2020improving, wang2023using}, and other empirical methods~\citep{yuan2021differential}. For example, \cite{bechger2015statistical} introduced a method based on item pair functioning that does not require pre-specified anchor items; \cite{bauer2020simplifying} and~\cite{belzak2020improving} introduced LASSO-type regularized methods for model selection and parameter estimation. However, regularized estimation approaches often involve intensive computation due to the need to solve multiple regularized maximum likelihood estimation problems and perform parameter tuning. Moreover, the existing non-anchor-based methods often do not provide valid statistical inference guarantees for testing the covariate effects. It remains an open problem to perform statistical inference on the covariate effects and the latent factors.





\subsection{Our Contribution}
Motivated by these applications, we study the statistical estimation and inference for a general family of covariate-adjusted generalized latent factor models, which include the popular factor models for binary, count, continuous, and mixed-type data that commonly occur in educational and psychological assessments.
Despite recent progress in the factor analysis literature, most existing studies focus on estimation and inference under linear factor models \citep{stock2002forecasting, bai2012statistical, fan2013large}
and covariate-adjusted linear factor models~\citep{leek2008general, wang2017confounder, gerard2020empirical, bing2023inference}. 
The techniques employed in linear factor model settings are not applicable here due to the nonlinearity inherent in the general models under consideration. 
The nonlinear nature of the setting has led to a significant research gap in the literature, and the inference theory of generalized factor analysis has remained largely underexplored. 
Recently, several researchers have also investigated the parameter estimation and inference for generalized linear factor models~\citep{chen2019joint, wang2022maximum, chen2023statistical}. 
However, they either focus only on the overall consistency properties of the estimation or do not incorporate covariates into the models. 
In generalized factor analysis incorporating covariates, \cite{chen2021nonlinear} considered homogeneous covariate effects in a nonlinear panel data model, which captures a homogeneous effect invariant across all items. However, such homogeneous effect makes it not applicable to the motivating  applications such as educational and psychological assessments, where item-specific covariate effects are of primary interest. 
More recently \cite{du2023simultaneous} considered a generalized linear factor model with covariates in single-cell omics and studied its inference theory, using latent factors as surrogate variables to control for unmeasured confounding. However, they imposed relatively stringent assumptions on the sparsity of covariate effects and the dimension of covariates, and their theoretical results also rely on data-splitting. Moreover, \cite{chen2021nonlinear} and \cite{du2023simultaneous} focused primarily on statistical inference on the covariate effects, while that on factors and loadings was unexplored, which is of both theoretical and practical interest.

Establishing inference results for covariate effects and latent factors simultaneously under generalized latent factor models remains an open and challenging problem.
 A key difficulty lies in the inherent identifiability issues, which arise not only between factors and loadings but also between the latent factors and covariate effects, steming from the possible  correlations between the latent factors and covariates.   
To overcome these issues,  we develop a novel framework for performing statistical inference on all model parameters and latent factors under a general family of covariate-adjusted generalized factor models. 
We propose new identifiability conditions for identifying the model parameters, and further incorporate these conditions into the development of new likelihood-based statistical inference theory. In particular, our identifiability conditions are practically interpretable and easily satisfied in common educational and psychological assessments.  
Under these identifiability conditions, we develop new techniques to address the aforementioned theoretical challenges and obtain estimation consistency and asymptotic normality results for  covariate effects as well as the latent factors and factor loadings.


The rest of the paper is organized as follows. In Section~\ref{sec:model setup}, we introduce the model setup of the covariate-adjusted generalized factor model. Section~\ref{sec:method} discusses the associated identifiability issues and further presents the proposed identifiability conditions and estimation method. Section~\ref{sec:theory} establishes the theoretical properties for not only the covariate effects but also the latent factors and factor loadings. In Section~\ref{sec:simulation}, we perform extensive numerical studies to illustrate the performance of the proposed estimation method and the validity of the theoretical results. In Section~\ref{sec:data application}, we analyze an educational testing dataset from Programme for International Student Assessment (PISA) and identify test items that may lead to potential bias among different test-takers.
We provide some potential future directions in Section~\ref{sec:discussion}. The proofs for the theoretical results presented in the paper, along with additional simulation results, are included in the Supplementary Material.   
\\

\noindent Notation: For any integer $N$, let $[N] = \{1, \dots, N\}$. 
For any set $\cS$, let $\# \cS$ be its cardinality.
For any vector $\br = (r_1, \dots, r_l)^{\T}$, let $\| \br\|_0  = \# (\{j: r_j \neq 0 \})$,  $\|\br \|_{\infty}= \max_{j = 1, \ldots, l} |r_j|$, and $\|\br\|_q = (\sum_{j=1}^l |r_j|^q)^{1/q}$ for $q \geq 1$.
 We define $\bm{1}_x^{(y)}$ as the $y$-dimensional vector with $x$-th entry to be 1 and all other entries to be 0.
For any symmetric matrix $\Mb$, let $\lambda_{\min}(\Mb)$ and $\lambda_{\max}(\Mb)$ be the smallest and largest eigenvalues of $\Mb$, respectively. 
For any matrix $\Ab = (a_{ij})_{n\times l}$, let $\|\Ab\|_{\infty,1} = \max_{j=1,\ldots, l} $ $\sum_{i = 1}^n |a_{ij} |$ be the maximum absolute column sum, $\| \Ab\|_{1,\infty} = \max_{i=1,\ldots, n} \sum_{j=1}^l |a_{ij}| $ be the maximum of the absolute row sum, $\| \Ab\|_{\max} = \max_{i=1, \dots, n;j=1, \dots,l} |a_{ij}|$ be the maximum of the absolute matrix entry,
$\| \Ab\|_{F} = (\sum_{i=1}^n \sum_{j=1}^l |a_{ij}|^2)^{1/2}$ be the Frobenius norm of $\Ab$, and $\|\Ab\|=\sqrt{\lambda_{\max }\left(\Ab^{\T} \Ab\right)}$ be the spectral norm of $\Ab$.  Let $\| \cdot \|_{\varphi_1}$ be sub-exponential norm. 
 Define $\Ab_v = \text{vec}(\Ab) \in \RR^{nl}$ to indicate the vectorized form of matrix $\Ab\in \RR^{n\times l}$. Finally, we denote $\otimes $ as the Kronecker product. For ease of reading, we also provide a summary of the key concepts and notations in Table~S1 in the Supplementary Material.

\section{Model Setup}
\label{sec:model setup}
Consider $n$ independent subjects with $q$ measured responses and $p_*$ observed covariates.  
For the $i$th subject, let $\bY_i \in \RR^q$ be a $q$-dimensional vector of responses corresponding to $q$ measurement items and   $\bX_i^c \in \RR^{p_*}$ be a $p_*$-dimensional vector of observed covariates. Moreover, let $\bU_i$ be a $K$-dimensional vector of latent factors representing the unobservable traits such as skills and personalities, where we assume $K$ is specified as in many educational assessments.  
 We assume that the $q$-dimensional responses $\bY_i$ are conditionally independent, given $\bX_i^c$ and $\bU_i$.  
 Specifically, for $i \in [n]$ and $j \in [q]$, we model the $j$th response for the $i$th subject, $Y_{ij}$, by the following  conditional distribution:
\begin{equation}
\label{eq:modelsetup}
	Y_{ij} \sim p_{ij}( y \mid w_{ij}), \qquad \mathrm{where}~~w_{ij} = \beta_{j0} + \bgamma_j^{\intercal}\bU_i  + \bbeta_{jc}^{\intercal}\bX_i^c. 
\end{equation}
 Here $p_{ij}( y | w_{ij})$ denotes the probability density or mass function of the response $Y_{ij}$. We assume that the parametric form of $p_{ij}( y | w_{ij})$ is known and thus this conditional probability function is determined by $w_{ij}$. 
The $\beta_{j0} \in \RR$ is the intercept parameter, $\bbeta_{jc} =(\beta_{j1},\ldots,\beta_{jp_*})^{\intercal} \in \RR^{p_*}$ are the coefficient parameters for the observed covariates, and $\bgamma_j = (\gamma_{j1},\ldots,\gamma_{jK})^{\intercal}\in \RR^K$ are the factor loadings. 
For ease of presentation, we write $\bbeta_j= (\beta_{j0},\bbeta_{jc}^{\T})^{\T}$ as an assembled vector of intercept and coefficients and define $\bX_i = (1, (\bX_i^{c})^{\T})^{\T}$ with dimension $p = p_* + 1$, which gives an equivalent form, $w_{ij} =  \bgamma_j^{\intercal}\bU_i  + \bbeta_{j}^{\intercal}\bX_i. $ 

In this paper, we consider a general and flexible modeling framework by allowing different types of $p_{ij}$ functions to model diverse response data in wide-ranging applications, such as binary item response data in educational and psychological assessments \citep{mellenbergh1994generalized, reckase2009} and mixed types of data in educational and macroeconomic applications \citep{rijmen2003nonlinear,wang2022maximum}; see also Remark~\ref{remark1}.
 A schematic diagram of the proposed model setup is presented in Figure~\ref{fig:model setup}.
\begin{figure}[!htp]
    \centering
    \includegraphics[width=0.65\textwidth]{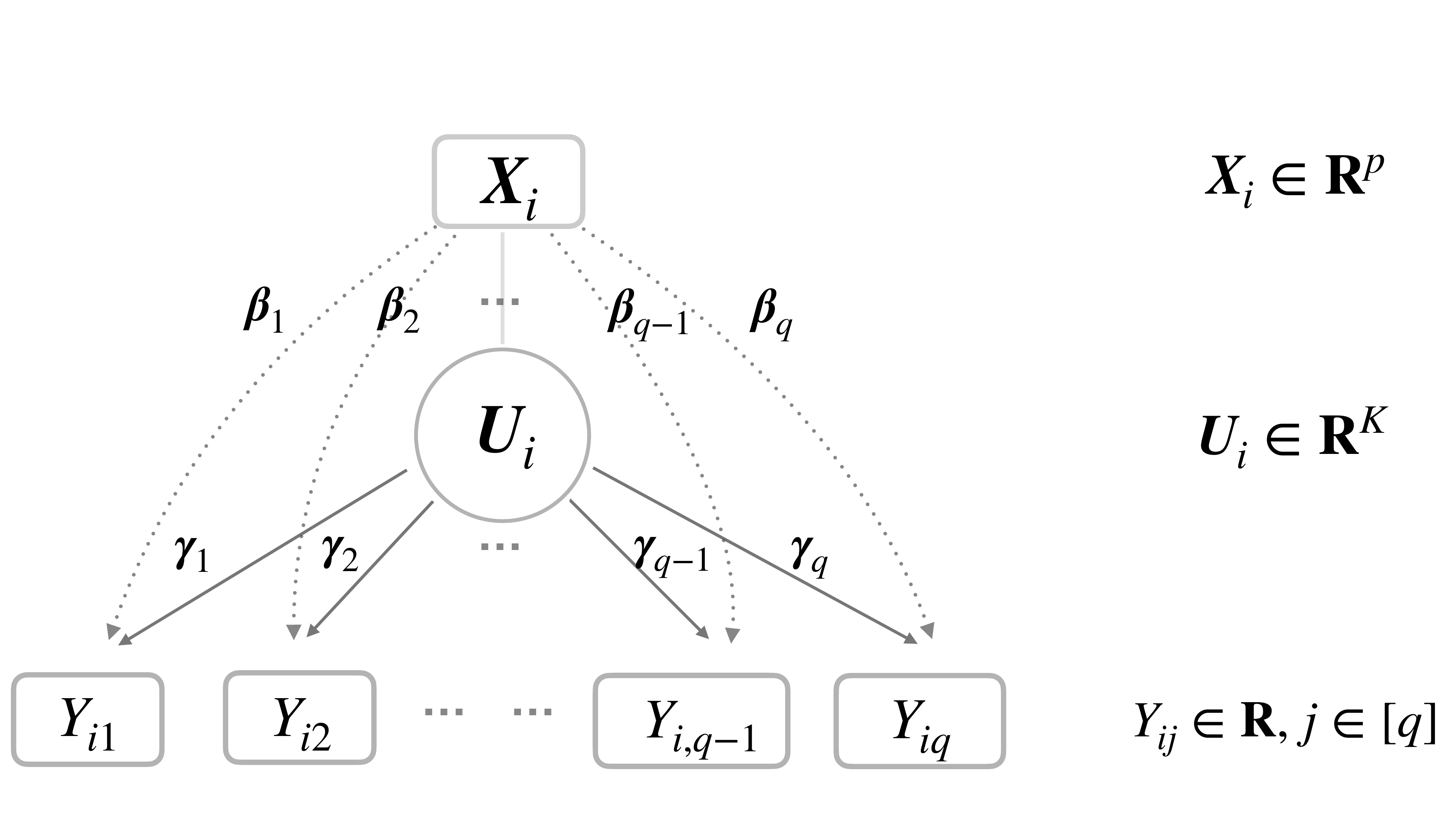}
    \caption{A schematic diagram of the proposed model in~\eqref{eq:modelsetup}. The subscript $i$ indicates the $i$th subject, out of $n$ independent subjects. The response variable $Y_{ij}$ can be discrete or continuous. The observed covariates $\bX_i \in \RR^p$ and the latent factors $\bU_i \in \RR^K$ can be arbitrarily correlated.
    }
    \label{fig:model setup}
\end{figure}

The considered covariate-adjusted generalized factor model in~\eqref{eq:modelsetup} is motivated by applications in testing fairness.
 In the context of educational assessment, the subject's responses to question items are dependent on latent factors $\bU_i$ such as students' abilities and skills, and are potentially affected by observed covariates $\bX_i^c$ such as age, gender, and race, among others~\citep{collins2009}. 
  The intercept $\beta_{j0}$ 
  is often interpreted as the difficulty level of item $j$ and is referred to as the difficulty parameter in psychometrics~\citep{hambleton2013item, reckase2009}.
 The capability of item $j$ to further differentiate individuals based on their latent abilities is captured by $\bgamma_j = (\gamma_{j1},\ldots,\gamma_{jK})^{\intercal}$, which are also referred to as discrimination parameters~\citep{hambleton2013item, reckase2009}. 
 The effects of observed covariates $\bX_i^c$ on subject's response to the $j$th question $Y_{ij}$, conditioned on latent abilities $\bU_i$, are captured by $\bbeta_{jc} =(\beta_{j1},\ldots,\beta_{jp_*})^{\intercal}$, which are referred to as differential item functioning (DIF) effects in psychometrics~\citep{holland2012differential}.
Here a DIF item with nozero $\bbeta_{jc}$ has a response distribution that depends not only on the   latent factors  of interest but also respondents’ covariates (such as group membership). 
This setting gives rise to the fairness problem of validating whether the response probabilities to the measurements differ across different genders, races, or countries of origin while holding their abilities and skills at the same level. 
    To place our considered model within the MIMIC framework, the measurement model is given in equation (1).
Moreover, our model allows for a flexible structural relationship between the latent factors $\bU_i$ and observed covariates $\bX_i^c$, generalizing beyond the linear form or Guassian distribution of $\bU_i$ conditioned on $\bX_i^c$, commonly assumed in traditional MIMIC models.

Given the observed data from $n$ independent subjects, we are interested in studying the relationships between $\bY_i$ and $\bX_i^c$ after adjusting for the latent factors $\bU_i$ in~\eqref{eq:modelsetup}. 
Specifically, our goal is to test the statistical hypothesis $H_0: \beta_{js} = 0$ versus $H_a: \beta_{js} \neq 0$ for $s \in [p_*]$, where $\beta_{js}$ is 
the regression coefficient for the $s$th covariate and the $j$th response, after adjusting for the latent factor $\bU_i$. 
In many applications, the latent factors and factor loadings also carry important scientific interpretations such as students' abilities and test items' characteristics. This also motivates us to perform statistical inferences on the parameters $\beta_{j0}$, $\bgamma_j$, and~$\bU_i$.  

\vspace{-0.1in}

\begin{remark}\label{remark1}
 The proposed model setup~\eqref{eq:modelsetup} is general and flexible as various functions $p_{ij}$'s could be used to model diverse types of response data in wide-ranging applications.
For instance, in educational assessments, logistic factor model~\citep{reckase2009} with 
$p_{ij} (y \mid w_{ij}) = \exp(w_{ij}y)/\{1+\exp(w_{ij})\},  y \in \{0,1\},$
and probit factor model~\citep{birnbaum1968some} with $p_{ij}  (y \mid w_{ij}) = \{\Phi (w_{ij})\}^y \{1-\Phi (w_{ij})\}^{1-y}, y \in \{0,1\},$
where $\Phi(\cdot)$ is the cumulative density function of standard normal distribution,
are widely used to model the binary responses, indicating correct or incorrect answers to the test items.
 Such types of models are often referred to as item response theory models \citep{reckase2009}.
In economics and finances, linear factor models with
$
p_{ij} (y\mid w_{ij}) \propto \exp \{ - (y - w_{ij})^2/(2\sigma^2)\}
$, where $y \in \RR$ and $\sigma^2$ is the variance parameter,
are commonly used to model continuous responses, such as GDP, interest rate, and consumer index~\citep{bai2003inferential, bai2012statistical, stock2016dynamic}. Moreover, 
different types of functions
$p_{ij}$'s can be used to model various types of item responses. Therefore, mixed types of data, which are common in educational measurements~\citep{rijmen2003nonlinear},
can also be analyzed by our proposed model.
\end{remark}

\vspace{-0.2in}

\begin{remark} 
   In our motivating applications, the covariate effect is item-specific. For example, in psychometrics, DIF effects often vary across items. 
   Different from the existing panel data model in \cite{chen2021nonlinear}, which treats covariate effect as a unified parameter $\bbeta$, our model allows $\bbeta_j$ to vary with $j\in[q]$. Therefore, our approach is able to capture heterogeneous covariate effects across different items, addressing the limitations of these existing models that are only applicable to measure homogeneous covariate effects and therefore not suitable for our applications.
      In addition to testing fairness, the considered model finds wide-ranging applications in the real world. For instance, in genomics, the gene expression status may depend on unmeasured confounders or latent biological factors  and also be associated with the variables of interest including medical treatment, disease status, and gender~\citep{wang2017confounder,ouyang2023,du2023simultaneous}.
The covariate-adjusted general factor model helps to investigate the effects of the variables of interest on gene expressions, controlling for the latent factors~\citep{du2023simultaneous}.  This setting is also applicable to other scenarios, such as brain imaging, where the activity of a brain region may depend on measurable spatial distances 
and latent structures due to unmodeled factors~\citep{leek2008general}.
\end{remark}

\vspace{-0.05in}



To analyze large-scale measurement data, we aim to develop inference theory for quantifying uncertainty in the estimation.
Motivated by recent work in high-dimensional factor analysis, we treat the latent factors as fixed parameters and apply a joint maximum likelihood method for estimation~\citep{bai2003inferential,fan2013large,chen2020structured}. 
Specifically, we let the collection of the item responses from $n$ independent subjects be $\Yb = (\bY_1, \dots, \bY_n)_{n\times q}^{\intercal}$ and the design matrix of observed covariates to be $\Xb = (\bX_1, \dots, \bX_n)_{n \times p}^{\intercal}$. For model parameters, the discrimination parameters for all $q$ items are denoted as 
$\bGamma = (\bgamma_1, \dots, \bgamma_q)_{q\times K}^{\intercal}$, while the intercepts and the covariate effects for all $q$ items are denoted as $\Bb = (\bbeta_1, \dots, \bbeta_q)_{q \times p}^{\intercal}$. The latent factors from all $n$ subjects are $\Ub = (\bU_1, \dots, \bU_n)_{n \times K}^{\intercal}$.
Then the joint log-likelihood function can be written as 
\begin{align}
     L( \Yb \mid \bGamma, \Ub, \Bb, \Xb) = \frac{1}{nq}\sum_{i=1}^n \sum_{j=1}^q l_{ij}(\beta_{j0} + \bgamma_j^{\intercal}\bU_i  + \bbeta_{jc}^{\intercal}\bX_i^c), \label{eq:log likelihood}
\end{align}
where the function $l_{ij} (w_{ij}) = \log p_{ij} (Y_{ij} | w_{ij}) $ is the individual log-likelihood function with $w_{ij} = \beta_{j0} + \bgamma_j^{\intercal}\bU_i  + \bbeta_{jc}^{\intercal}\bX_i^c$. 
We aim to obtain $(\hat{\bGamma}, \hat{\Ub}, \hat{\Bb}) $ from maximizing the joint likelihood function $L( \Yb \mid \bGamma, \Ub, \Bb, \Xb)$.

\begin{remark}

    The considered modeling approach has also been studied in the context of generalized linear latent variable models (GLLVMs) by~\cite{moustaki2000generalized, skrondal2004generalized}. 
Existing work often treats latent factors as random effects and adopts a marginal maximum likelihood estimation method~\citep{moustaki2003general, moustaki2004factor, chen2023dif, wallin2024dif}. Such models can be fitted through R package \texttt{gllvm}~\citep{niku2019gllvm}, stata program \texttt{GLLAMM}~\citep{rabe2004gllamm}, \texttt{Mplus}~\citep{muthen2002latent}, \texttt{LatentGold}~\citep{vermunt2005latent}, and many other softwares~\citep{korhonen2024review}. Despite the importance of this marginal maximum likelihood method, challenges arise in computational efficiency and theoretical development, especially for large-scale response data as considered in this work. 
\end{remark}

While the estimators can be computed efficiently by maximizing the joint likelihood function through an alternating minimization algorithm~\citep{collins2001generalization,chen2019joint},  challenges emerge for performing statistical inference on the model parameters. 
  One challenge concerns the model identifiability.
Without additional constraints, the covariate effects are not identifiable due to the incorporation of covariates and their potential dependence on latent factors. 
Ensuring that the model is statistically identifiable is the fundamental prerequisite for achieving model reliability and making valid inferences~\citep{allman2009,gu2018partial}.
Moreover,  we emphasize that our problem setting is fundamentally different from the existing work \cite{wang2022maximum,chen2021nonlinear}, and due to the correlation between the covariates and latent factors,  the identifiability requirements are different and existing techniques cannot be applied.
 Another challenge arises from the nonlinearity of our proposed model. 
In the existing literature, most studies focus on the statistical inference for our proposed setting in the context of linear models 
\citep{bai2012statistical, fan2013large, wang2017confounder}.
On the other hand, settings with general log-likelihood function $l_{ij} (w_{ij})$, including covariate-adjusted logistic and probit factor models, are less investigated. 
Common techniques for linear models 
are not applicable to the considered generalized model setting.



Motivated by these challenges, we propose interpretable and practical identifiability conditions in Section~\ref{sec:identifiability}. We then incorporate these conditions into the joint-likelihood-based estimation method in Section~\ref{sec:jml}. Furthermore, we introduce a novel inference framework for performing statistical inference on $\bbeta_j$, $\bgamma_j$, and $\bU_i$ in Section~\ref{sec:theory}.

\section{Method}
	\label{sec:method} 
 
 \subsection{Model Identifiability}
 \label{sec:identifiability}

{ 
Identifiability issues commonly occur in latent variable models~\citep{allman2009,bai2012statistical,xu2017}. 
By definition, the model parameters are identifiable if and only if for any response $\Yb$, there does not exist $(\bGamma, \Ub, \Bb) \neq (\tilde{\bGamma}, \tilde{\Ub}, \tilde{\Bb})$ such that $ P( \Yb \mid \bGamma, \Ub, \Bb, \Xb) = P(\Yb \mid \tilde{\bGamma}, \tilde{\Ub}, \tilde{\Bb}, \Xb)$, where $P( \Yb \mid \bGamma, \Ub, \Bb, \Xb)$ is the joint probability distribution of responses and under the model setup in~\eqref{eq:modelsetup}.
The proposed model in~\eqref{eq:modelsetup} has two major identifiability issues. 
The first issue is that the proposed model remains unchanged after certain linear transformations of both $\Bb$ and $\Ub$.
 Specifically,  for any $(\bGamma, \Ub, \Bb)$ and any transformation matrix $\Ab\in \RR^{K\times p}$, there exist $\tilde{\bGamma} = \bGamma$, $\tilde{\Ub} =  \Ub + \Xb \Ab^{\T}$, and $\tilde{\Bb} = \Bb -  \bGamma \Ab$ such that $P( \Yb \mid \bGamma, \Ub, \Bb, \Xb) = P(\Yb \mid \tilde{\bGamma}, \tilde{\Ub}, \tilde{\Bb}, \Xb)$. 
This identifiability issue causes the covariate effects together with the intercepts, represented by $\Bb$, and the latent factors, denoted by $\Ub$, to be unidentifiable. 
 The second issue is that the model is invariant after an invertible transformation of both $\Ub$ and $\bGamma$ as in the linear factor models~\citep{bai2012statistical, fan2013large}. 
 Specifically, for any $({\bGamma}, {\Ub}, {\Bb})$ and any invertible matrix $\Gb\in \RR^{K\times K}$, there exist $\bar{\bGamma} = \bGamma (\Gb^{\T})^{-1}$, $\bar{\Ub} = {\Ub} \Gb$, and $\bar{\Bb} = {\Bb}$ such that $P( \Yb \mid {\bGamma}, {\Ub}, {\Bb}, \Xb) = P(\Yb \mid \bar{\bGamma}, \bar{\Ub}, \bar{\Bb}, \Xb)$. This causes the latent factors $\Ub$ and factor loadings $\bGamma$ to be unidentifiable. }
Considering the two issues together, for any set of model parameters $(\bGamma, \Ub, \Bb)$,  there exist transformed parameters 
\begin{align}
    \tilde{\bGamma} = \bGamma (\Gb^{\T})^{-1}, \quad \tilde{\Ub} =  (\Ub + \Xb \Ab^{\T})\Gb, \quad \tilde{\Bb} = \Bb-\bGamma \Ab, \label{eq:id transformation}
\end{align} with $\Ab\in \RR^{K\times p} $ and invertible $\Gb\in \RR^{K\times K}$,
such that
$
    \tilde{\Ub} \tilde  \bGamma^{\T} + \Xb \tilde \Bb^{\T}  =  (\Ub + \Xb \Ab^{\T})\Gb \Gb^{-1} \bGamma^{\T} +   \Xb(\Bb-\bGamma \Ab)^{\T}  = \Ub \bGamma^{\T} + \Xb \Bb^{\T},
$
or equivalently,
$P (\Yb \mid \bGamma, \Ub, \Bb, \Xb) = P (\Yb \mid \tilde{\bGamma}, \tilde{\Ub}, \tilde{\Bb}, \Xb),$
which leads to nonidentifiability of the model parameters. 
In the rest of this subsection, we propose identifiability conditions to address these two identifiability issues caused by the transformation matrices  $\Ab$ and  $\Gb$.


\vspace{0.2in}

 \noindent {\bf Identifiability Conditions.} 
 Motivated by the different scientific interpretations of the intercept $\beta_{j0}$ and covariate effects $\bbeta_{jc}$, we propose different identifiability conditions for them. For instance, in educational and psychological assessment, the intercept $\beta_{j0}$ is commonly interpreted to as the difficulty parameter of a test item, while $\bbeta_{jc}$ represents the effects of observed covariates, namely DIF effects, on the response to item $j$~\citep{reckase2009, holland2012differential}. 
 Practically it is common to assume that the biased DIF items are relatively few, leading to a sparse structure of the $\bbeta_{jc}$'s, while such an assumption is not applied to the intercept $\beta_{j0}$'s.
Following the existing literature \citep{reckase2009}, we thus propose a centering condition on $\Ub$ to ensure the identifiability of the intercept $\beta_{j0}$ for all items $j \in [q]$.
On the other hand, to identify the covariate effects $\bbeta_{jc}$, we follow the commonly used assumption in the literature to assume the covariate effects $\bbeta_{jc}$ for all items $j \in [q]$ to be relatively sparse
\citep[e.g.,][]{candell1988iterative, belzak2020improving}. 
In particular, motivated by~\cite{chen2023dif}, we propose the following practical minimal $\ell_1$ condition applicable to the considered general family of models. To better present the identifiability conditions, we write $\Ab = (\ba_0, \ba_1, \dots, \ba_{p_*}) \in \RR^{K\times p}$ and define $\Ab_c = (\ba_1, \dots, \ba_{p_*}) \in \RR^{K \times p_*}$ as the part applied to the covariate effects.
\begin{condition}
\label{cond:ID1}
(i) $\sum_{i=1}^n\bU_i = \bm{0}_K$.
(ii)
$ \sum_{j=1}^q \| \bbeta_{jc} \|_1 < \sum_{j=1}^q \| \bbeta_{jc} - \Ab_{c}^{\T} \bgamma_j\|_1$ for any $\Ab_c\neq \zero$. 
\end{condition}

Condition~\ref{cond:ID1}(i) assumes that the latent abilities $\Ub$ are centered to ensure the identifiability of the intercepts $\bbeta_{j0}$'s, a common assumption in the item response theory literature~\citep{reckase2009}. 
The minimal $\ell_1$ condition,  Condition~\ref{cond:ID1}(ii), is satisfied when the majority of the items are unbiased (i.e., having zero covariate effects), which is commonly the case in educational and psychological assessment~\citep{holland2012differential, chen2023dif}. In particular, the following Proposition \ref{prop:cond1 hold} present a sufficient and necessary condition for Condition~\ref{cond:ID1}(ii) to hold.

\begin{proposition}
\label{prop:cond1 hold}
    Condition~\ref{cond:ID1}(ii) holds if and only if for any $\bv \in\RR^{K}$ with $\|\bv\|=1$, 
    \begin{equation}
        \sum_{j=1}^q\big|\bv^\T\bgamma_j\big|I{(\beta_{js}= 0)}>\sum_{j=1}^q\sign(\beta_{js})\bv^\T\bgamma_j I{(\beta_{js}\neq 0)},\quad\forall s\in[p_*]. \label{eq:prop1}
    \end{equation}
\end{proposition}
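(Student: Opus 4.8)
The plan is to recast Condition~\ref{cond:ID1}(ii) as the statement that the origin is the strict global minimizer of a family of convex piecewise-linear functions, and then to translate strict minimality into positivity of a directional derivative, which turns out to be exactly the difference of the two sides of~\eqref{eq:prop1}.

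First I would exploit the coordinatewise separability of the $\ell_1$ norm. Writing $\Ab_c = (\ba_1,\dots,\ba_{p_*})$ with columns $\ba_s\in\RR^K$, the $s$th entry of $\Ab_c^{\T}\bgamma_j$ is $\ba_s^{\T}\bgamma_j$, so that $\sum_{j}\|\bbeta_{jc}-\Ab_c^{\T}\bgamma_j\|_1 = \sum_{s=1}^{p_*}\sum_{j=1}^q|\beta_{js}-\ba_s^{\T}\bgamma_j|$ and likewise $\sum_j\|\bbeta_{jc}\|_1 = \sum_{s}\sum_j|\beta_{js}|$. Defining $g_s(\ba)=\sum_{j=1}^q|\beta_{js}-\ba^{\T}\bgamma_j|$, Condition~\ref{cond:ID1}(ii) reads $\sum_{s}\{g_s(\ba_s)-g_s(\zero)\}>0$ whenever $(\ba_1,\dots,\ba_{p_*})\neq\zero$. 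Since each summand vanishes at $\ba_s=\zero$, taking all but one block to be zero shows this aggregate inequality is equivalent to the per-coordinate statement that $\zero$ is the strict global minimizer of $g_s$ for every $s\in[p_*]$; conversely, if each $g_s$ has $\zero$ as its strict minimizer then every block term is nonnegative with at least one strictly positive.

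Next I would invoke convex analysis. Each $g_s$ is a finite sum of absolute values of affine functions of $\ba$, hence convex and piecewise linear (polyhedral). For such a function the one-sided directional derivative $g_s'(\zero;\bv)=\lim_{t\downarrow0}t^{-1}\{g_s(t\bv)-g_s(\zero)\}$ exists, is positively homogeneous in $\bv$, and in fact $g_s(t\bv)=g_s(\zero)+t\,g_s'(\zero;\bv)$ for all small enough $t>0$ because $t\bv$ stays in a single linear piece. Consequently $\zero$ is the strict global minimizer of $g_s$ if and only if $g_s'(\zero;\bv)>0$ for all $\bv\neq\zero$: the ``if'' direction follows since the difference quotient $t^{-1}\{g_s(t\bv)-g_s(\zero)\}$ is nondecreasing in $t$ and bounded below by $g_s'(\zero;\bv)>0$, forcing $g_s(t\bv)>g_s(\zero)$; the ``only if'' direction uses piecewise linearity, since $g_s'(\zero;\bv_0)\le0$ at a minimizer forces $g_s'(\zero;\bv_0)=0$ and hence $g_s(t\bv_0)=g_s(\zero)$ on a whole segment, contradicting strictness. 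By homogeneity it suffices to test $\|\bv\|=1$.

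Finally I would compute $g_s'(\zero;\bv)$ termwise. For $j$ with $\beta_{js}\neq0$, the sign of $\beta_{js}-t\bv^{\T}\bgamma_j$ equals $\sign(\beta_{js})$ for small $t$, contributing $-\sign(\beta_{js})\bv^{\T}\bgamma_j$; for $j$ with $\beta_{js}=0$ the term equals $t\,|\bv^{\T}\bgamma_j|$, contributing $|\bv^{\T}\bgamma_j|$. Summing gives
\[
g_s'(\zero;\bv)=\sum_{j=1}^q|\bv^{\T}\bgamma_j|\,I(\beta_{js}=0)-\sum_{j=1}^q\sign(\beta_{js})\,\bv^{\T}\bgamma_j\,I(\beta_{js}\neq0),
\]
so $g_s'(\zero;\bv)>0$ is precisely~\eqref{eq:prop1}. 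Chaining the three reductions yields the stated equivalence. I expect the main obstacle to be the convex-analytic step: justifying that \emph{strict} minimality corresponds to \emph{strict} positivity of the directional derivative, where the polyhedral structure of $g_s$ is essential to exclude the degenerate flat-direction case that a smooth convex function such as $\ba\mapsto\|\ba\|^2$ would otherwise permit.
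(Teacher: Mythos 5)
Your proof is correct and follows essentially the same route as the paper's: both reduce Condition~\ref{cond:ID1}(ii) coordinatewise to strict minimality of $L_s(\ba)=\sum_{j=1}^q|\beta_{js}-\ba^{\T}\bgamma_j|$ at the origin, identify the left-minus-right side of~\eqref{eq:prop1} as the one-sided directional derivative of $L_s$ at $\zero$, and then use convexity for one implication and sign-preservation of each term along a short segment (piecewise linearity) for the other. Your write-up merely makes the strict-minimality/strict-positivity equivalence more explicit than the paper's contradiction-based phrasing, but the underlying argument is identical.
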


\begin{remark}
  Proposition~\ref{prop:cond1 hold} implies that Condition~\ref{cond:ID1}(ii) holds when $\{j: \beta_{js} \neq 0\}$ is separated into $\{j: \beta_{js} > 0\}$ and $\{j: \beta_{js} < 0\}$ in a balanced way.
 With diversified signs of $\beta_{js}$, Proposition~\ref{prop:cond1 hold} holds when a considerable proportion of test items have no covariate effect (i.e., $\beta_{js} = 0$). 
For example, when $\bgamma_j = m \bm{1}_{K}^{(k)}$ with $m >0$, 
Condition~\ref{cond:ID1}(ii) holds if and only if $\sum_{j=1}^q |m| \{ - I(\beta_{js}/m > 0) + I(\beta_{js}/m \leq  0)\} > 0$ and $\sum_{j=1}^q |m| \{- I(\beta_{js}/m \geq 0) + I(\beta_{js}/m <  0)\} <0$. With slightly more than $q/2$ items correspond to $\beta_{js} = 0$, Condition~\ref{cond:ID1}(ii) holds. Moreover, if $\#\{j: \beta_{js} > 0\}$ and $\#\{j: \beta_{js} < 0\}$ are comparable, then Condition~\ref{cond:ID1}(ii) holds even when less than $q/2$ items correspond to $\beta_{js} = 0$ and more than $q/2$ items correspond to $\beta_{js} \neq 0$. 
Though assuming a ``sparse'' structure, our assumption here differs from existing high-dimensional literature. 
In high-dimensional regression,
the covariate coefficient when regressing the dependent variable on high-dimensional covariates, is often assumed to be sparse, with the proportion of the non-zero covariate coefficients asymptotically approaching zero. In our setting, Condition~\ref{cond:ID1}(ii) allows for relatively dense settings where the proportion of items with non-zero covariate effects is some positive constant.
\end{remark}


 Besides the covariate effects, the factor loading $\bGamma$ and latent factor $\Ub$ hold important scientific interpretations in many applications. Their estimation and inference significantly contribute to our understanding of generalized latent factor models. 
To perform simultaneous estimation and inference on $\bGamma$ and $\Ub$, we consider the following identifiability conditions to address the second identifiability issue. 
 \begin{condition}
 \label{cond:ID2}
  (i) $\Ub^{\T} \Ub$ is diagonal with distinct and nonzero elements. (ii) $\bGamma^{\T} \bGamma$ is diagonal with distinct and nonzero elements. (iii) $n^{-1} \Ub^{\T} \Ub=q^{-1}\bGamma^{\intercal} \bGamma$.
 \end{condition}

 Condition~\ref{cond:ID2} is a set of commonly used identifiability conditions in the factor analysis literature~\citep{bai2003inferential, bai2012statistical, wang2022maximum}. 
This condition addresses the identifiability issue related to $\Gb$ and brings practical and theoretical benefits to parameter estimation methods and uncertainty quantification.
 It is worth mentioning that this condition may be replaced by other identifiability conditions in the factor analysis literature \citep{bai2012statistical, cui2025identifiability}, and the proposed estimation method and theoretical results in the subsequent sections still apply, up to certain   transformation.  

 In the following proposition, we show that the proposed Conditions~\ref{cond:ID1} and \ref{cond:ID2} ensure the uniqueness of the transformation matrices  $\Ab$ and  $\Gb$ (up to signed permutation).
 

 \begin{proposition}\label{prop_id}
     If $(\bGamma,\Ub,\Bb)$ and  $(\tilde\bGamma,\tilde\Ub, \tilde\Bb)$ in~\eqref{eq:id transformation} both satisfy Condition~\ref{cond:ID1}, we have  $\Ab = \zero_{K\times p}$ and thus $\Bb=\tilde\Bb$.
     Addtionally, if $(\bGamma,\Ub,\Bb)$ and  $(\tilde\bGamma,\tilde\Ub, \tilde\Bb)$  also satisfy Condition~\ref{cond:ID2}, then $\Gb$ must only be a signed permutation matrix and thus 
     $\bGamma=\tilde\bGamma$ and $\Ub=\tilde\Ub$ up to signed column permutation. 
 \end{proposition}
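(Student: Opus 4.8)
The plan is to handle the two sources of nonidentifiability in~\eqref{eq:id transformation} in sequence: first pin down $\Ab$ using Condition~\ref{cond:ID1}, then pin down $\Gb$ using Condition~\ref{cond:ID2}. Reading off rows of~\eqref{eq:id transformation}, the relation $\tilde\Bb = \Bb - \bGamma\Ab$ becomes $\tilde\bbeta_j = \bbeta_j - \Ab^\T\bgamma_j$, which (writing $\Ab = (\ba_0,\Ab_c)$) splits into an intercept part $\tilde\beta_{j0} = \beta_{j0} - \ba_0^\T\bgamma_j$ and a covariate part $\tilde\bbeta_{jc} = \bbeta_{jc} - \Ab_c^\T\bgamma_j$, while $\tilde\bGamma = \bGamma(\Gb^\T)^{-1}$ gives $\tilde\bgamma_j = \Gb^{-1}\bgamma_j$. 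The strategy is to show $\Ab_c=\zero$ from Condition~\ref{cond:ID1}(ii), then $\ba_0=\zero$ from Condition~\ref{cond:ID1}(i), and finally that $\Gb$ is a signed permutation from Condition~\ref{cond:ID2}.

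First I would show $\Ab_c=\zero$. The key observation is that Condition~\ref{cond:ID1}(ii) says precisely that $\Mb=\zero$ is the \emph{unique} minimizer over $\Mb\in\RR^{K\times p_*}$ of $f(\Mb) := \sum_{j=1}^q\|\bbeta_{jc} - \Mb^\T\bgamma_j\|_1$. Substituting the tilde quantities, a direct computation gives $\sum_{j=1}^q\|\tilde\bbeta_{jc} - \Mb^\T\tilde\bgamma_j\|_1 = f\big(\Ab_c + (\Gb^\T)^{-1}\Mb\big)$, and since $\Gb$ is invertible the affine map $\Mb \mapsto \Ab_c + (\Gb^\T)^{-1}\Mb$ is a bijection of $\RR^{K\times p_*}$. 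Applying Condition~\ref{cond:ID1}(ii) to the tilde parameters, $\Mb=\zero$ is the unique minimizer of the left-hand side; since the bijection sends $\Mb=\zero$ to the point $\Ab_c$ and the unique minimizer of $f$ is $\zero$, this forces $\Ab_c=\zero$. Next, with $\Ab_c=\zero$ and $\bX_i = (1,(\bX_i^c)^\T)^\T$, one has $\Xb\Ab^\T = \one_n\ba_0^\T$, so $\tilde\Ub = (\Ub + \one_n\ba_0^\T)\Gb$. Summing the rows and invoking the centering Condition~\ref{cond:ID1}(i) for both parameter sets yields $\Gb^\T\big(\sum_i\bU_i + n\ba_0\big) = \sum_i\tilde\bU_i = \zero$, and since $\sum_i\bU_i = \zero$ and $\Gb$ is invertible, $\ba_0=\zero$. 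Hence $\Ab=\zero$ and $\tilde\Bb=\Bb$.

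Having reduced to $\tilde\Ub = \Ub\Gb$ and $\tilde\bGamma = \bGamma(\Gb^\T)^{-1}$, I would next establish that $\Gb$ is orthogonal. Writing $\bD_U = \Ub^\T\Ub$ and $\bD_\Gamma = \bGamma^\T\bGamma$ (diagonal, positive, with distinct entries by Condition~\ref{cond:ID2}), the Gram matrices transform as $\tilde\Ub^\T\tilde\Ub = \Gb^\T\bD_U\Gb$ and $\tilde\bGamma^\T\tilde\bGamma = \Gb^{-1}\bD_\Gamma(\Gb^\T)^{-1}$. Condition~\ref{cond:ID2}(iii) gives $\bD_\Gamma = (q/n)\bD_U$ for both triples, so these two identities collapse to $\Gb^\T\bD_U\Gb = \Gb^{-1}\bD_U(\Gb^\T)^{-1}$. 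Left- and right-multiplying by $\Gb$ and $\Gb^\T$ yields $\Mb\bD_U\Mb = \bD_U$ with $\Mb := \Gb\Gb^\T\succ 0$. Conjugating by $\bD_U^{1/2}$ turns this into $\big(\bD_U^{1/2}\Mb\bD_U^{1/2}\big)^2 = \bD_U^2$, and uniqueness of the positive-definite square root forces $\bD_U^{1/2}\Mb\bD_U^{1/2} = \bD_U$, i.e. $\Mb = \Gb\Gb^\T = \Ib$. I expect this step — extracting orthogonality from the two normalization identities via the square-root uniqueness argument — to be the main obstacle; everything else is bookkeeping.

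Finally, with $\Gb$ orthogonal, the relation $\tilde\Ub^\T\tilde\Ub = \Gb^\T\bD_U\Gb$ rearranges to $\bD_U\Gb = \Gb\,(\tilde\Ub^\T\tilde\Ub)$, so each column of $\Gb$ is a unit eigenvector of the diagonal matrix $\bD_U$ with eigenvalue a diagonal entry of the diagonal matrix $\tilde\Ub^\T\tilde\Ub$. Because $\bD_U$ has distinct diagonal entries, its unit eigenvectors are exactly $\pm$ the standard coordinate vectors, and orthonormality of the columns of $\Gb$ forces the associated indices to form a permutation; hence $\Gb$ is a signed permutation matrix. Then $\tilde\Ub = \Ub\Gb$ and $\tilde\bGamma = \bGamma(\Gb^\T)^{-1} = \bGamma\Gb$ are exactly $\Ub$ and $\bGamma$ with columns permuted and sign-flipped, which is the claimed equality up to signed column permutation.
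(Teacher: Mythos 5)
Your proof is correct and follows essentially the same route as the paper's: applying Condition~1(ii) to both triples forces $\Ab_c=\zero$, the centering condition forces $\ba_0=\zero$, and the Gram-matrix identities from Condition~2 together with the distinct-eigenvalue argument force $\Gb$ to be a signed permutation. Two minor refinements work in your favor: by establishing $\Ab_c=\zero$ \emph{before} $\ba_0=\zero$ you avoid the paper's implicit assumption that the covariate columns of $\Xb$ are centered, and your positive-definite square-root argument makes explicit why $\Gb\Gb^\T\bD_U\Gb\Gb^\T=\bD_U$ implies $\Gb\Gb^\T=\Ib$, a step the paper merely asserts.
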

Note that the identifiability issue related to the signed column permutation of $\bGamma$ and $\Ub$ is trivial and does not affect the interpretation of the factor model since the relationships between responses and factors remain unchanged. 
 In factor analysis, it is common to assume that the loading matrix estimator has the same column signs as the true matrix \citep[e.g.,][]{bai2012statistical}, which we also follow in this work.
 
  We further show in the following proposition that, when $(\bGamma,\Ub,\Bb)$ satisfy Conditions~\ref{cond:ID1} and~\ref{cond:ID2}, for any $(\tilde\bGamma,\tilde\Ub, \tilde\Bb) $ defined in \eqref{eq:id transformation} at which Conditions~\ref{cond:ID1} and~\ref{cond:ID2} may not necessarily hold, they can be transformed back to $(\bGamma,\Ub,\Bb)$ through transformation matrices depending only on $(\tilde\bGamma,\tilde\Ub, \tilde\Bb) $. 
 \begin{proposition}
     Suppose $(\bGamma,\Ub,\Bb)$ satisfy Conditions~\ref{cond:ID1} and~\ref{cond:ID2}. For any $\tilde{\bphi} = (\tilde\bGamma,\tilde\Ub, \tilde\Bb) $ defined in \eqref{eq:id transformation}, there exist linear transformation matrices $\tilde\Ab$ and $\tilde\Gb$, depending only on $\tilde\bphi$, such that $(\tilde\bGamma(\tilde\Gb^{\T})^{-1},(\tilde\Ub +\Xb\tilde\Ab^\T)\tilde\Gb, \tilde\Bb -\tilde\bGamma\tilde\Ab) = (\bGamma, \Ub,\Bb)$, where the equivalences for $\bGamma$ and $\Ub$ are up to signed column permutation. Specifically, we have $\tilde\Ab = (\tilde\ba_0,\tilde\Ab_c)$, where $\tilde\Ab_c = \arg\min_{\Ab\in\RR^{K\times p_*}}\sum_{j=1}^q\|\tilde\bbeta_{jc}-\Ab^\T\tilde\bgamma_j\|_1$ and $\tilde\ba_0 = -n^{-1}\sum_{i=1}^n\tilde\bU_i $. We have $\tilde\Gb = (q^{-1}\tilde\bGamma^{\T} \tilde\bGamma)^{1/2}$ $ \tilde\cV \tilde\cU^{-1/4}$ where $\tilde\cU = \diag(\tilde\varrho_1, \dots, \tilde\varrho_K)$ is a diagonal matrix that contains the $K$ eigenvalues of $(nq)^{-1} (\tilde\bGamma^{\T} \tilde\bGamma )^{1/2} (\tilde\Ub + \Xb \tilde\Ab^{\T})^{\T} (\tilde\Ub + \Xb \tilde\Ab^{\T})$ $(\tilde\bGamma^{\T} \tilde\bGamma )^{1/2}$ and $\tilde\cV$ is a matrix that contains its corresponding eigenvectors. \label{prop_id_trans}
 \end{proposition}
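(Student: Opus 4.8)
The plan is to reduce the claim to the uniqueness statement already proved in Proposition~\ref{prop_id}. Concretely, I would show that the reconstructed triple $\hat{\bphi} := (\tilde\bGamma(\tilde\Gb^{\T})^{-1},\ (\tilde\Ub + \Xb\tilde\Ab^{\T})\tilde\Gb,\ \tilde\Bb - \tilde\bGamma\tilde\Ab)$ is again related to $(\bGamma,\Ub,\Bb)$ by a transformation of the form~\eqref{eq:id transformation} and, crucially, that $\hat{\bphi}$ \emph{itself} satisfies Conditions~\ref{cond:ID1} and~\ref{cond:ID2}; Proposition~\ref{prop_id} then forces the connecting matrices to be trivial, giving $\hat{\bphi} = (\bGamma,\Ub,\Bb)$ up to signed column permutation. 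The first point is bookkeeping: writing $\tilde\bphi$ as the image of $(\bGamma,\Ub,\Bb)$ under the unknown pair $(\Ab,\Gb)$ in~\eqref{eq:id transformation}, a direct substitution shows $\hat\bphi$ is the image under $(\Ab + \Gb^{-\T}\tilde\Ab,\ \Gb\tilde\Gb)$, so two such transformations compose to one and the invariance $\hat\Ub\hat\bGamma^{\T} + \Xb\hat\Bb^{\T} = \Ub\bGamma^{\T} + \Xb\Bb^{\T}$ holds automatically.

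The substantive work is to verify the two identifiability conditions for $\hat\bphi$, and the key intermediate claim is $\tilde\Ab = -\Gb^{\T}\Ab$. For the covariate block I would use that $\tilde\Ab_c$ is, by construction, the minimizer of $F(\Ab'') := \sum_{j=1}^q\|\tilde\bbeta_{jc} - \Ab''^{\T}\tilde\bgamma_j\|_1$. Substituting $\tilde\bgamma_j = \Gb^{-1}\bgamma_j$ and $\tilde\bbeta_{jc} = \bbeta_{jc} - \Ab_c^{\T}\bgamma_j$ rewrites $F(\Ab'') = \sum_j\|\bbeta_{jc} - \Db^{\T}\bgamma_j\|_1$ under the affine bijection $\Db = \Ab_c + \Gb^{-\T}\Ab''$; Condition~\ref{cond:ID1}(ii) for the original parameters makes $\Db = \zero$ the \emph{unique} minimizer, so $\tilde\Ab_c$ is unique and equals $-\Gb^{\T}\Ab_c$. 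For the intercept block, the choice $\tilde\ba_0 = -n^{-1}\sum_i\tilde\bU_i$ is a recentering: using Condition~\ref{cond:ID1}(i) together with the standard normalization that the covariates are centered, $\sum_i\bX_i^c = \zero$, one gets $\sum_i\tilde\bU_i = n\Gb^{\T}\ba_0$, hence $\tilde\ba_0 = -\Gb^{\T}\ba_0$. Combining the blocks yields $\tilde\Ab = -\Gb^{\T}\Ab$, whence $\Wb := \tilde\Ub + \Xb\tilde\Ab^{\T} = \Ub\Gb$ and $\hat\Bb = \Bb$; consequently $\hat\bphi$ reduces to the pure rotation $\hat\bGamma = \bGamma(\Qb^{\T})^{-1}$, $\hat\Ub = \Ub\Qb$ with $\Qb := \Gb\tilde\Gb$, from which Condition~\ref{cond:ID1} for $\hat\bphi$ is immediate.

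The heart of the argument is Condition~\ref{cond:ID2}, where the precise spectral form of $\tilde\Gb$ earns its keep. Abbreviating $\Mb := q^{-1}\tilde\bGamma^{\T}\tilde\bGamma$ so that $\tilde\Gb = \Mb^{1/2}\tilde\cV\tilde\cU^{-1/4}$ and $\Sbb := n^{-1}\Mb^{1/2}\Wb^{\T}\Wb\Mb^{1/2} = \tilde\cV\tilde\cU\tilde\cV^{\T}$ with $\tilde\cV$ orthogonal, a direct computation using $\tilde\cV^{\T}\tilde\cV = \Ib$ gives $\hat\bGamma^{\T}\hat\bGamma = \tilde\Gb^{-1}(q\Mb)\tilde\Gb^{-\T} = q\tilde\cU^{1/2}$ and $\hat\Ub^{\T}\hat\Ub = \tilde\Gb^{\T}\Wb^{\T}\Wb\tilde\Gb = n\tilde\cU^{1/2}$, both diagonal and obeying $n^{-1}\hat\Ub^{\T}\hat\Ub = q^{-1}\hat\bGamma^{\T}\hat\bGamma = \tilde\cU^{1/2}$, which is Condition~\ref{cond:ID2}(iii). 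To see that these diagonals are nonzero and distinct (and that $\tilde\cU^{-1/4}$ is well defined) I would invoke $\Wb = \Ub\Gb$ from the previous step: then $\Sbb$ is similar to $(nq)^{-1}(\bGamma^{\T}\bGamma)(\Ub^{\T}\Ub) = (q^{-1}\bGamma^{\T}\bGamma)(n^{-1}\Ub^{\T}\Ub) = \bLambda^2$, where $\bLambda = \diag(\lambda_1,\dots,\lambda_K)$ is the common diagonal of Condition~\ref{cond:ID2}(iii) for the original parameters, so $\tilde\varrho_k = \lambda_k^2$ are positive and distinct. This makes $\tilde\Gb$ well defined (with $\tilde\cV$ unique up to column signs) and delivers Conditions~\ref{cond:ID2}(i)--(ii).

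With Conditions~\ref{cond:ID1} and~\ref{cond:ID2} established for $\hat\bphi$, and $\hat\bphi$ related to $(\bGamma,\Ub,\Bb)$ via~\eqref{eq:id transformation} with pair $(\zero,\Qb)$, Proposition~\ref{prop_id} forces $\Qb$ to be a signed permutation matrix; hence $\hat\bGamma = \bGamma$ and $\hat\Ub = \Ub$ up to signed column permutation, while $\hat\Bb = \Bb$, which is the assertion. I expect the main obstacle to be the coupling inside the second step: $\tilde\Ab$ and $\tilde\Gb$ are built purely from the observable $\tilde\bphi$, yet must exactly invert the unknown $(\Ab,\Gb)$, and the entire spectral reduction to $\bLambda^2$ rests on first pinning down $\tilde\Ab = -\Gb^{\T}\Ab$. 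That identity in turn hinges on the \emph{uniqueness} of the $\ell_1$ minimizer — i.e.\ on the strict inequality in Condition~\ref{cond:ID1}(ii) — together with the covariate-centering normalization for the intercept block; verifying this uniqueness carefully, and confirming the eigenvalues are distinct so that $\tilde\Gb$ is unambiguous, is where the real care is required.
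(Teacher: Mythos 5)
Your proposal is correct and follows essentially the same route as the paper's proof: the key identity $\tilde\Ab = -\Gb^{\T}\Ab$ is obtained exactly as the paper does it (the affine change of variables in the $\ell_1$ objective combined with the strict inequality in Condition~\ref{cond:ID1}(ii), plus the centering argument for $\tilde\ba_0$), and the spectral verification that the back-transformed parameters satisfy Condition~\ref{cond:ID2} with distinct eigenvalues $\lambda_k^2$ matches the paper's computation. The only cosmetic difference is that you make the final uniqueness step an explicit appeal to Proposition~\ref{prop_id}, whereas the paper invokes the same uniqueness implicitly after checking the eigenvalue distinctness.
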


An important implication of this proposition is that for model estimation under Conditions~\ref{cond:ID1} and~\ref{cond:ID2}, given any estimators $(\tilde\bGamma,\tilde\Ub, \tilde\Bb)$ as in \eqref{eq:id transformation}, which may not necessarily satisfy Conditions~\ref{cond:ID1} and~\ref{cond:ID2},   we can still transform $(\tilde\bGamma,\tilde\Ub, \tilde\Bb)$ to obtain the target estimators satisfying Conditions~\ref{cond:ID1} and~\ref{cond:ID2}.
This motivates the development of a transformation-based estimation method for the model parameters, whose details are presented in Section~\ref{sec:jml}.
 

 \subsection{Joint Maximum Likelihood Estimation}
	\label{sec:jml}


In this section, we introduce the likelihood-based estimation method for the covariate effect $\Bb$, the latent factors $\Ub$, and factor loadings $\bGamma$ simultaneously. For notational convenience, we write  $\bphi^* = (\bGamma^*, \Ub^*, \Bb^*)$ as the true parameters and assume they satisfy Conditions~\ref{cond:ID1} and~\ref{cond:ID2}. Inspired by Proposition~\ref{prop_id_trans}, we propose to estimate the parameters by first obtaining an unconstrained maximum likelihood estimator (MLE) and then constructing transformations in accordance with Conditions~\ref{cond:ID1} and~\ref{cond:ID2}, which would be more computationally efficient compared to directly solving the constrained MLE.
This transformation-based approach has been commonly used for factor models without covariates~\citep{bai2003inferential,lam2011estimation,chen2019joint,wang2022maximum}. Specifically, we first derive the unconstrained MLE $\hat{\bphi} = ( \hat{\bGamma}, \hat{\Ub}, \hat{\Bb})$~by
\begin{equation}
	\hat{\bphi} = \underset{\bphi \in \cB(D)}{\argmin}~-L(\Yb \mid  \bphi, \Xb), \label{eq:MLE}
\end{equation}
where the parameter space $\cB(D)$ is given as $ \cB(D)=\{\bphi:\|\bphi\|_{\max}\le D,\;\max_{i,j}|w_{ij}|\le D \}$ for some large constant $D$. This constant $D$ is empirically chosen to ensure that the estimation remains within the bounds required by the regularity conditions given in Section~\ref{sec:theory}, while also covering the true parameter values inside $\cB(D)$. To solve~\eqref{eq:MLE}, we employ an alternating minimization algorithm. 
For steps $t = 1,2,\ldots$, we iterate the following until convergence: 
\begin{align}
     \hat{\bGamma}^{(t)}, \hat{\Bb}^{(t)} &= \argmin_{\substack{ \|\bGamma\|_{\max}\le D,\|\Bb\|_{\max}\le D, \|\Bb\Xb^\T\|_{\max}\le D}} - L(\Yb \mid  \bGamma, \hat\Ub^{(t-1)},  \Bb, \Xb), \nonumber \\
    \hat{\Ub}^{(t)} &= \argmin_{\|\Ub\|_{\max}\le D} - L(\Yb \mid   \hat\bGamma^{(t)}, \Ub, \hat\Bb^{(t)}, \Xb). \nonumber
\end{align}
  Empirically, the algorithm ends when the quantity 
$\|(\hat{\bGamma}^{(t)}(\hat{\Ub}^{(t)})^\T) + \hat{\Bb}^{(t)}\Xb^\T) - $ $(\hat{\bGamma}^{(t-1)} $ $(\hat{\Ub}^{(t-1)})^\T$ $+\hat{\Bb}^{(t-1)}\Xb^\T)\|_F$ is less than some pre-specified tolerance value for convergence, which is invariant to the transformations in \eqref{eq:id transformation}. The alternating minimization scheme ensures a non-increasing sequence of the objective function $-L(\Yb|\bphi,\Xb)$ at each iteration, i.e., $-L(\Yb| \hat\bGamma^{(t)}, \hat{\Ub}^{(t)}, \hat\Bb^{(t)}, \Xb)\ge -L(\Yb| \hat\bGamma^{(t)}, \hat{\Ub}^{(t-1)}, \hat\Bb^{(t)}, \Xb)\ge -L(\Yb| \hat\bGamma^{(t-1)}, \hat{\Ub}^{(t-1)}, $ $ \hat\Bb^{(t-1)}, \Xb)$. This monotonic decrease guarantees that the algorithm converges to a local minimum of the objective function. To search the global maximum, we follow a common practice of initializing the algorithm multiple times with random starting values and selecting the solution that achieves the highest likelihood among these local optima~\citep{collins2001generalization,wang2022maximum}. The alternating minimization algorithm is computationally efficient, as the two iterative steps only involve fitting generalized linear models, which can be easily implemented.


Note that the unconstrained MLE $\hat{\bphi}$ is not unique and is identifiable up to the transformations shown in \eqref{eq:id transformation}. With the estimator $\hat{\bphi}$, following the construction of  $\tilde{\Ab}$ and $\tilde\Gb$ in Proposition~\ref{prop_id_trans}, we compute the corresponding transformation matrices $\hat\Ab$ and  $\hat\Gb$ and obtain the final estimators as follows.
 Specifically, we have 
$\hat\Ab = (\hat\ba_0,\hat\Ab_c)$, with
$\hat\Ab_c$ obtained by minimizing the $\ell_1$-norm 
 \begin{equation}
     \hat{\Ab}_c = \argmin_{\Ab_c\in\RR^{K\times p_*}} \sum_{j=1}^q \| \hat{\bbeta}_{jc} - \Ab_c^{\T}\hat{\bgamma}_j \|_1,\label{eq: estimate_A}
 \end{equation}
and $\hat{\ba}_0 = - n^{-1} \sum_{i=1}^n \hat{\bU}_i$. Here $\hat\bbeta_{jc}$ is the estimator of $\bbeta_{jc}$ given in \eqref{eq:MLE} and similarly $\hat\bgamma_j$ is the estimator of $\bgamma_j$ obtained in \eqref{eq:MLE}. The optimization problem~\eqref{eq: estimate_A} is convex and can be efficiently solved in practice using the \texttt{optim} function in R.

 Given $\hat{\Ab}$,   we then 
 construct an estimator of ${\Bb}^*$ as
\begin{equation}\label{B-est}
     \hat{\Bb}^*  = \hat{\Bb} - \hat{\bGamma} \hat{\Ab}.
\end{equation}
Following Proposition~\ref{prop_id_trans}, we take $\hat\Gb = (q^{-1}\hat\bGamma^{\T} \hat\bGamma)^{1/2}$ $ \hat\cV \hat\cU^{-1/4}$, where we compute the singular value decomposition of $(nq)^{-1} (\hat{\bGamma}^{\T} \hat{\bGamma} )^{1/2} $ $ (\hat{\Ub}+ \Xb \hat{\Ab}^{\T})^{\T} (\hat{\Ub} + \Xb \hat{\Ab}^{\T})$ $(\hat{\bGamma}^{\T} \hat{\bGamma} )^{1/2}$ and let $\hat\cU = \diag(\hat\varrho_1, \dots, \hat\varrho_K)$ be a diagonal matrix that contains the eigenvalues and $\hat\cV$ be a matrix that contains its corresponding eigenvectors. 
 Given $\hat{\Ab}$ and $\hat\Gb$, we obtain our estimators for ${\bGamma}^*$ and ${\Ub}^*$ as 
\begin{align}\label{GammaU-est}
        \hat{\bGamma}^* = \hat{\bGamma} (\hat{\Gb}^{\T})^{-1}~\mbox{ and }~
        \hat{\Ub}^* = (\hat{\Ub} + \Xb \hat{\Ab}^{\T}) \hat{\Gb}. 
\end{align}

The statistical guarantees for our proposed estimators $\hat{\bphi}^* = (\hat{\bGamma}^*, \hat{\Ub}^*, \hat{\Bb}^*)$ in \eqref{B-est} and \eqref{GammaU-est}
are given in Section~\ref{sec:theory}. 
We show that under certain regularity conditions, our proposed estimators are unique and consistently estimate the true parameters. We also provide uncertainty quantification for our estimators. Specifically, in 
Theorem~\ref{thm:asymptotic normality post-transformation beta} of Section~\ref{sec:theory}, we establish the asymptotic normality result for $\hat{\bbeta}_j^*$, which allows us to make inference on the covariate effects $\bbeta_{j}^*$. 
Moreover, as the latent factors $\bU_i^*$ and factor loadings $\bgamma_j^*$ often have important interpretations in domain sciences, we are also interested in the inference on parameters $\bU_i^*$ and $\bgamma_j^*$. In Theorem~\ref{thm:asymptotic normality post-transformation beta}, we derive the asymptotic distributions for estimators $\hat{\bU}_i^*$ and $\hat{\bgamma}_j^*$, providing inference results for parameters $\bU_i^*$ and $\bgamma_j^*$.

\section{Theoretical Results}
\label{sec:theory}

We propose a novel framework to establish the estimation consistency and asymptotic normality for the proposed joint-likelihood-based estimators $\hat{\bphi}^* = (\hat{\bGamma}^*, \hat{\Ub}^*, \hat{\Bb}^*)$ in Section~\ref{sec:method}.

To establish the theoretical results for $\hat{\bphi}^*$, we impose   several regularity assumptions. The detailed assumptions (Assumptions 1--5) and their discussions are presented in the Supplementary Material due to the space limitation,  which are briefly summarized as follows. 
In Assumption~1, we assume compact parameter spaces of $\Ub^*$ and $\bGamma^*$ and necessary regularity conditions on $\Xb$. We assume certain smoothness conditions for the log-likelihood function $l_{ij}(w_{ij})$ in Assumption~2. 
These assumptions are mild regularity conditions and commonly imposed for generalized latent factor models~\citep[e.g.,][]{bai2012statistical,wang2022maximum}. 
In Assumption~3, we impose a scaling condition that implies $p=o(n^{1/2}\wedge q)$ up to a small order term. This condition helps achieve a key theoretical property that the derivative of the log-likelihood function evaluated at the MLE equals zero with high probability, avoiding irregular boundary cases. 
Furthermore, Assumption~4 assumes the existence of the asymptotic covariance matrices $\bPhi_{jz}^*$, as the limit of $-n^{-1} \sum_{i=1}^n \EE l_{ij}^{\prime\prime}(w_{ij}^*) \bZ_i^* (\bZ_i^{*})^{\T}$ with $\bZ_i^* = ({\bU_i^*}^\T,\bX_i^\T)^\T$, and $\bPhi_{i\gamma}^*$, as the  limit of $-q^{-1} \sum_{j=1}^q \EE l_{ij}^{\prime\prime}(w_{ij}^*) \bgamma_j^{*} (\bgamma_j^{*})^\T$, which  
are used to derive the asymptotic covariance matrices for the estimators $(\hat{\bGamma}^*, \hat{\Ub}^*, \hat{\Bb}^*)$. Here $w_{ij}^* = (\bgamma^*)^\T\bU_i^* + (\bbeta_j^*)^\T\bX_i$.
Assumption~5 relates to the dependence between covariates $\Xb$ and factors $\Ub^*$.
For the details of these technical assumptions, we refer the reader to Section~A of the  Supplementary Material.

We consider the  asymptotic regime with $n, q \rightarrow \infty$ while allowing diverging $p$.
  We now present our convergence rate results in the following theorem.

\begin{theorem} 
    \label{prop:ave convergence post-transformation}
	Suppose the true parameters $\bphi^* = (\bGamma^*, \Ub^*, \Bb^*)$ satisfy Conditions~\ref{cond:ID1} and~\ref{cond:ID2}. Under Assumptions~1--5, as $n, q \rightarrow \infty$, we have
 \begin{equation}
       q^{-1} \|\hat{\Bb}^* - \Bb^* \|_F^2  = O_p\left( {\frac{p^2\log qp}{n}} + {\frac{p\log n}{q}}  \right). \label{eq:betahat star rate}
 \end{equation}
 If we further assume that  $p^{3/2}(nq)^{\epsilon+3/\xi}(p^{1/2} n^{-1/2} +q^{-1/2})=o(1)$  for a sufficiently small $\epsilon >0$ and large constant $\xi>0$ specified in Assumption~2, then as $n, q \rightarrow \infty$, we have
 \begin{align}
        n^{-1}\| \hat{\Ub}^* - \Ub^*\|_F^2 & =  O_p\left( {\frac{p\log qp}{n}} + {\frac{\log n}{q}}  \right); \label{eq:Uhat star rate} \\
          q^{-1} \|\hat{\bGamma}^* - \bGamma^* \|_F^2 &=O_p\left( {\frac{p\log qp}{n}} + {\frac{\log n}{q}}  \right).\label{eq:Gammahat star rate}
 \end{align}
\end{theorem}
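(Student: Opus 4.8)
The plan is to reduce the whole statement to two ingredients: (i) a convergence rate for the \emph{unconstrained} MLE $\hat\bphi=(\hat\bGamma,\hat\Ub,\hat\Bb)$ of \eqref{eq:MLE}, which is pinned down only up to the transformations in \eqref{eq:id transformation}, and (ii) sharp control of the data-driven transformation matrices $\hat\Ab$ and $\hat\Gb$ that enter \eqref{B-est}--\eqref{GammaU-est}. First I would establish a preliminary lemma for the unconstrained MLE. Since the log-likelihood \eqref{eq:log likelihood} depends on the parameters only through the natural-parameter matrix $\bTheta=\Ub\bGamma^\T+\Xb\Bb^\T$, the object to control first is $\hat\bTheta=\hat\Ub\hat\bGamma^\T+\Xb\hat\Bb^\T$. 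Using strong convexity of $-L$ from the bound $-l_{ij}''\ge b_L$ (Assumption~2), sub-exponential concentration of the scores $l_{ij}'(w_{ij}^*)$ (Assumption~2), and maximal inequalities over the $nq$ entries, I would derive an average bound on $\tfrac{1}{nq}\|\hat\bTheta-\bTheta^*\|_F^2$ together with the block refinements that are what actually feed the final rates: the $j$th item block $(\hat\bgamma_j,\hat\bbeta_j)\in\RR^{K+p}$, fit from $n$ observations, concentrates at rate $p\log(qp)/n$ uniformly in $j$ (the union bound over items producing the $\log(qp)$ factor), while the $i$th factor $\hat\bU_i\in\RR^K$, fit from $q$ observations, concentrates at rate $\log n/q$. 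The scaling Assumption~3 is what drives the MLE into the interior so that $\nabla(-L)(\hat\bphi)=0$ with high probability and these block rates hold.

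Next I would control the two transformations around their population analogues $\Ab^\ddagger$ and $\Gb^\ddagger$ introduced before Assumption~5. For $\hat\Gb=(q^{-1}\hat\bGamma^\T\hat\bGamma)^{1/2}\hat\cV\hat\cU^{-1/4}$, the matrix whose eigendecomposition defines $\hat\cV,\hat\cU$ concentrates around its population version built from $(\Ib_n-\Xb(\Xb^\T\Xb)^{-1}\Xb^\T)\Ub^*$; invoking the eigenvalue separation in Condition~\ref{cond:ID2} and Assumption~1(iv), a Weyl/Davis--Kahan perturbation argument yields $\|\hat\Gb-\Gb^\ddagger\|$ and $\|\hat\Gb^{-1}-(\Gb^\ddagger)^{-1}\|$ of the same order as the MLE error, with $\hat\Gb\to\bar\Gb^\ddagger$. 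For $\hat\Ab=(\hat\ba_0,\hat\Ab_c)$, the centering part $\hat\ba_0=-n^{-1}\sum_i\hat\bU_i$ inherits the factor rate directly, whereas $\hat\Ab_c$ solving \eqref{eq: estimate_A} is a median-regression fit of the estimated ``responses'' $\hat\bbeta_{jc}$ on the estimated ``design'' $\hat\bgamma_j$. I would treat this as an errors-in-variables median regression and deploy the Bahadur-representation framework set up in Assumption~5 (following \cite{he1996bahadur}): encoding the joint estimation errors of $(\hat\bgamma_j,\hat\bbeta_{jc})$ through the perturbations $\bdelta_{js}$, the estimating equation $\hat\chi_s(\balpha_{s0})=0$ together with the nonsingular-derivative and Lipschitz conditions of Assumption~5 give $\|\hat\Ab_c-\Ab_c^\ddagger\|$ controlled at order $\iota_{nq,p}=o((p\sqrt n)^{-1})$ plus a linearized score term.

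Finally I would propagate these bounds through the exact transformation identities of Proposition~\ref{prop_id_trans}. For $\hat\Bb^*=\hat\Bb-\hat\bGamma\hat\Ab$, I would decompose $\hat\Bb^*-\Bb^*$ into the direct unconstrained error in $\hat\Bb$, the term $(\hat\bGamma-\bGamma^*)\Ab^\ddagger$, and the transformation correction $\hat\bGamma(\hat\Ab-\Ab^\ddagger)$, bounding each with the block MLE rates and the transformation rates above; the extra factor of $p$ in \eqref{eq:betahat star rate} relative to \eqref{eq:Uhat star rate}--\eqref{eq:Gammahat star rate} traces to the coupling of the full $p$-dimensional covariate block with the loading estimation through $\hat\Ab^\T\hat\bgamma_j$ in each $p$-dimensional row $\hat\bbeta_j^*$, and this first bound requires only Assumptions~1--5. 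Substituting into $\hat\bGamma^*=\hat\bGamma(\hat\Gb^\T)^{-1}$ and $\hat\Ub^*=(\hat\Ub+\Xb\hat\Ab^\T)\hat\Gb$, expanding around $\Gb^\ddagger$ and $\Ab^\ddagger$, and using $\|\hat\Gb-\Gb^\ddagger\|$ and $\|\hat\Ab-\Ab^\ddagger\|$ small (here $\|\Xb\|\sim\sqrt n$ amplifies the $\hat\Ab$ error, so $\hat\Ab$ must be accurate), gives \eqref{eq:Uhat star rate} and \eqref{eq:Gammahat star rate}; the strengthened scaling $p^{3/2}(nq)^{\epsilon+3/\xi}(p^{1/2}n^{-1/2}+q^{-1/2})=o(1)$ is precisely what renders the transformation-induced cross terms negligible against the leading block rates.

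The main obstacle is step (ii) for $\hat\Ab_c$. The program \eqref{eq: estimate_A} is non-smooth, and both its ``design'' $\hat\bgamma_j$ and its ``response'' $\hat\bbeta_{jc}$ are random and correlated through the same MLE, so standard M-estimation theory for median regression with a fixed design does not apply. The perturbations $\bdelta_{js}$ are only asymptotically normal and weakly dependent rather than i.i.d.\ Gaussian, and reconciling the Bahadur expansion with this dependence---while tracking the $p$-dependence uniformly over $s\in[p_*]$ and ensuring the remainder stays below the $o((p\sqrt n)^{-1})$ threshold of Assumption~5---is where the delicate analysis concentrates and where the novel techniques of the paper are needed.
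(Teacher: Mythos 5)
Your overall architecture (bound the MLE, control the transformation matrices $\hat\Ab$ and $\hat\Gb$, propagate through the identities $\hat\Bb^*=\hat\Bb-\hat\bGamma\hat\Ab$, $\hat\bGamma^*=\hat\bGamma(\hat\Gb^\T)^{-1}$, $\hat\Ub^*=(\hat\Ub+\Xb\hat\Ab^\T)\hat\Gb$) matches the paper's strategy, and you correctly identify the errors-in-variables median regression for $\hat\Ab_c$, handled via the Bahadur representation of \cite{he1996bahadur} under Assumption~5, as the technical core. However, there is a genuine gap in step (i), and it contaminates step (ii): you assert block-level concentration of the \emph{unconstrained} MLE --- that $(\hat\bgamma_j,\hat\bbeta_j)$ concentrates at rate $p\log(qp)/n$ and $\hat\bU_i$ at rate $\log n/q$ --- and that $\hat\Ab\to\Ab^{\ddagger}$, $\hat\Gb\to\Gb^{\ddagger}$. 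These statements are not well-posed. The unconstrained maximizer of \eqref{eq:MLE} is an arbitrary element of an equivalence class: for any maximizer and any $(\Ab,\Gb)$, the transformed triple in \eqref{eq:id transformation} is also a maximizer, so the individual blocks $\hat\bgamma_j,\hat\bbeta_j,\hat\bU_i$ do not concentrate around the true values (or around any fixed values), and the matrices $\hat\Ab,\hat\Gb$ computed from $\hat\bphi$ have limits that depend on which representative the optimizer returns. Only the composite $\hat\bphi^*$ is (asymptotically) invariant. Strong convexity of $-L$ in $\bTheta=\Ub\bGamma^\T+\Xb\Bb^\T$ gives you $\|\hat\bTheta-\bTheta^*\|_F$, but the likelihood is exactly flat along the transformation directions, so no amount of convexity or score concentration disentangles $\Ub\bGamma^\T$ from $\Xb\Bb^\T$, or $\Ub$ from $\bGamma$, for the unconstrained problem.

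The paper closes this gap by fixing a canonical representative before any block-level analysis: it introduces the orthogonalized working parameters $\bphi^0$ with $(\Ub^0)^\T\Xb=\bm{0}$ and $n^{-1}(\Ub^0)^\T\Ub^0=q^{-1}(\bGamma^0)^\T\bGamma^0$ diagonal (Conditions $1'$--$2'$), and analyzes the \emph{constrained} MLE $\hat\bphi^0$ (via an equivalent penalized formulation). All individual rates (its Lemmas on average and individual consistency, proved through a penalized Hessian analysis, first-order conditions holding w.h.p., and a Davis--Kahan argument to split $\hat\bTheta^0$ into factors and loadings) are stated for $\hat\bphi^0$ relative to $\bphi^0$; the median regression then estimates the fixed matrices $\Ab^0,\Gb^0$ mapping $\bphi^0$ back to $\bphi^*$, and invariance of the final estimator (via Proposition~\ref{prop_id_trans} and uniqueness of the transformations) justifies the reduction from the unconstrained MLE of Section~\ref{sec:jml} to this constrained one. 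Your population objects $\Ab^{\ddagger}$ and $\Gb^{\ddagger}$ are precisely the matrices relating $\bphi^*$ to this orthogonalized representative, so your outline becomes correct only after you insert this canonicalization step --- state that WLOG the analyzed maximizer satisfies the orthogonality and diagonality constraints, prove that such a maximizer exists and that $\hat\bphi^*$ does not depend on the choice --- and then carry out the block-rate analysis for the constrained problem, which is substantially more delicate than the convexity-plus-concentration sketch you give.
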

\begin{remark}
 
 In Theorem~\ref{prop:ave convergence post-transformation}, we consider the double asymptotic regime with $n, q \rightarrow \infty$ for estimation consistency of joint MLE. As widely recognized in the literature,  the theoretical properties of the joint MLE differ fundamentally when $q$ is fixed versus when $q$ grows to infinity. When $q$ is fixed, the joint MLE is statistically inconsistent due to the simultaneous growth of the number of parameters and the number of observations. This phenomenon is well known as the Neyman-Scott Paradox~\citep{neyman1948consistent}.
 A simple example in the context of educational testing is that, if a student responds to only one question (i.e.,  $q = 1$), it is impossible to consistently estimate the student’s latent ability (latent factor). Theorem~\ref{prop:ave convergence post-transformation} suggests that when $p$ is fixed, the consistency is achieved when $n, q\to\infty$ with $q \gg \log n$ and $n\gg\log q$, which is a mild condition and consistent with the literature~\citep[e.g.,][]{haberman1977maximum, chen2023statistical}.

\end{remark}

\begin{remark} \label{remark6}
 One major challenge in establishing the estimation consistency for $\hat{\bphi}^*$ arises from handling the unrestricted dependence structure between $\Ub^*$ and $\Xb$ under the proposed identifiablity constraints. 
To illustrate this, if we consider the ideal case where the columns of $\Ub^*$ and $\Xb$ are orthogonal, i.e., $(\Ub^{*})^{\T} \Xb = \bm{0}_{K\times p}$, then we can achieve sharper consistency rates with less stringent assumptions. Specifically, with Assumptions~1--3 only, we can obtain the same convergence rates for ideal estimators  $\hat\Ub^*$ and $\hat\bGamma^*$ as in~\eqref{eq:Uhat star rate} and~\eqref{eq:Gammahat star rate}, respectively. Moreover, with Assumptions~1--3, the average convergence rate for the consistent estimator of $\Bb^*$ is $ O_p( n^{-1}p\log qp + q^{-1}\log n)$, which is tighter than~\eqref{eq:betahat star rate} by a factor of $p$. 
Moreover, due to the correlation between $\Ub^*$ and $\Xb$,  we emphasize that our problem setting is fundamentally different from those without covariates \cite{wang2022maximum}, where the orthogonality of the factors is assumed and plays a key role in the technical proofs. As a result, existing techniques cannot be applied to our setting under the proposed identifiablity conditions.
\end{remark}

\begin{remark}
  Theorem~\ref{prop:ave convergence post-transformation} presents the average convergence rates of $\hat{\bphi}^*$. 
Consider an oracle case with $\Ub^*$ and $\bGamma^*$ known, the estimation of $\Bb^*$ reduces to an $M$-estimation problem. For $M$-estimators under general parametric models, 
it can be shown that the optimal convergence rates in squared $\ell_2$-norm is $O_p(p/n)$ under the scaling condition $p (\log p)^3 /n \rightarrow 0$~\citep{he2000parameters}. 
In terms of our average convergence rate on $\hat{\Bb}^*$, the first term in \eqref{eq:betahat star rate}, $n^{-1} p^2\log (qp)$, approximately matches the convergence rate $O_p(p/n)$ up to a relatively small order term of $p\log (qp)$. The second term in \eqref{eq:betahat star rate}, $q^{-1}p\log n$, is mainly due to the estimation error for the latent factor ${\Ub}^*$. 
In educational applications, it is common to assume the number of subjects $n$ is much larger than the number of items $q$.
Under such a practical setting with $n \gg q$ and $p$ relatively small, the term $q^{-1}\log n$ in \eqref{eq:Uhat star rate} dominates in the derived convergence rate of $\hat{\Ub}^*$, which matches with the optimal convergence rate $O_p(q^{-1})$ for factor models without covariates~\citep{bai2012statistical,wang2022maximum}  up to a small order term. 
The additional scaling condition  $p^{3/2}(nq)^{\epsilon+3/\xi}(p^{1/2} n^{-1/2} +q^{-1/2})=o(1)$ in Theorem \ref{prop:ave convergence post-transformation} is used to handle the challenges related to the invertible matrix $\Gb$ affecting the theoretical properties of $\hat{\Ub}^*$ and $\hat{\bGamma}^*$. It is needed for establishing the consistency of $\hat{\Ub}^*$ and $\hat{\bGamma}^*$ but not for that of $\hat{\Bb}^*$. With sufficiently large $\xi$ and small $\epsilon$, this assumption is approximately $p = o(n^{1/4} \wedge q^{1/3})$ up to a small term.
\end{remark}

 With estimation consistency results established, we next derive the asymptotic normal distributions for the estimators, which enable us to perform statistical inference on the true parameters.
 To better present the inference results, we introduce the following notations. For simplicity, we
let $\bZ_i^* = \big((\bU_i^*)^\T,\bX_i^\T\big)^\T$. As assumed in Assumption~4, for any $j \in [q]$, we have $-n^{-1} \sum_{i=1}^n \EE l_{ij}^{\prime\prime}(w_{ij}^*)   \bZ_i^* (\bZ_i^{*})^{\T}  \overset{}{\rightarrow}  \bPhi_{jz}^* $ in Frobenius norm  with $\bPhi_{jz}^*$ positive definite, and for any $i \in [n]$, $-q^{-1} \sum_{j=1}^q \EE l_{ij}^{\prime\prime}(w_{ij}^*) \bgamma_j^{*} (\bgamma_j^{*})^\T \overset{}{\rightarrow} \bPhi_{i\gamma}^*$ for some positive definite matrix $\bPhi_{i\gamma}^*$.
Next, we define the transformation matrix $\bar\Ab^{\ddagger} = \bSigma_{ux}^*(\bSigma_{x})^{-1}$ where $\bSigma_{ux}^* = \lim_{n\rightarrow \infty}$ $ n^{-1} \sum_{i=1}^n $ $\bU_i^* \bX_i^{\T}$ and $\bSigma_{x} = \lim_{n\rightarrow \infty }n^{-1} \sum_{i=1}^n\bX_{i} \bX_{i}^{\T}$. The transformation matrix $\bar\Gb^{\ddagger}$ is defined as the limit of $\Gb^{\ddagger} = (q^{-1}(\bGamma^{*})^{\T} \bGamma^*)^{1/2}$ $ \cV^* (\cU^{*})^{-1/4}$, where $\cU^* = \diag(\varrho_1^*, \dots, \varrho_K^*)$ with diagonal elements being the $K$ eigenvalues of 
$(nq)^{-1} ((\bGamma^{*})^{\T} \bGamma^* )^{1/2}   (\Ub^{*})^{\T}  (\Ib_{n} - \Pb_x) \Ub^*  ((\bGamma^{*})^{\T} \bGamma^* )^{1/2} $
with $\Pb_x = \Xb (\Xb^{\T} \Xb)^{-1}\Xb^{\T}$ and $\cV^*$ denotes the matrix containing corresponding eigenvectors.

 \begin{theorem}[Asymptotic Normality]
\label{thm:asymptotic normality post-transformation beta} 
 Suppose the true parameters $\bphi^* = (\bGamma^*, \Ub^*, \Bb^*)$ satisfy  Conditions~\ref{cond:ID1} and~\ref{cond:ID2}. Under Assumptions~1--5, 
we have the asymptotic distributions  
as follows. 
If $n, q \rightarrow \infty$ and $p^{3/2}\sqrt{n}(nq)^{3/\xi}(n^{-1}{p\log qp}+q^{-1}{\log n})\to 0$, for any $j \in [q]$ and $\ba \in \RR^p$ with $\|\ba\|_2 = 1$, 
  \begin{equation}        
                  \sqrt{n} \ba^{\T} (\bSigma_{\beta,j}^*)^{-1/2}(\hat\bbeta_j^*-\bbeta_j^*)  \overset{d}{\to} \cN(0,1), \label{eq:asymp beta}
        \end{equation} 
      where $\bSigma_{\beta, j}^* = \bar\Tb_\beta(\bPhi_{jz}^*)^{-1}\bar\Tb^\T_\beta$ with $\bar\Tb_\beta = \big(-(\bar\Ab^{\ddagger})^{\T}\bar\Gb^{\ddagger}(\bar\Gb^{\ddagger})^{-\T},\; \Ib_p + (\bar\Ab^{\ddagger})^{\T}\bar\Gb^{\ddagger}\bar\Ab^{\ddagger}\big) $
      and for any $j\in[q]$,
        \begin{equation}
       \sqrt{n}(\bSigma_{\gamma, j}^*)^{-1/2}(\hat{\bgamma}_j^*-\bgamma_j^*)    \overset{d}{\to} \cN(\zero, \Ib_K),   \label{eq:asymp gamma}
\end{equation}
where $\bSigma_{\gamma, j}^* = \bar\Tb_\gamma(\bPhi_{jz}^*)^{-1}\bar\Tb^\T_\gamma$ with $\bar\Tb_\gamma = \big (-\bar\Gb^{\ddagger}(\bar\Gb^{\ddagger})^{-\T},\; \bar\Gb^{\ddagger}\bar\Ab^{\ddagger}\big) $. 
Furthermore, for any $i \in [n]$, if $q= O(n)$, $n, q \rightarrow \infty$, and $p^{3/2}\sqrt{q}(nq)^{3/\xi}(n^{-1}{p\log qp}+q^{-1}{\log n})\to 0$, we have
 \begin{equation}
      \sqrt{q}(\bSigma_{u, i}^*)^{-1/2}(\hat{\bU}_i^*-\bU_i^*)  \overset{d}{\to} \cN(\zero, \Ib_K ), \label{eq:asymp U}
 \end{equation}
  where $\bSigma_{u,i}^* = (\bPhi_{i\gamma}^*)^{-1}$. 
    \end{theorem}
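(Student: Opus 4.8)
\noindent\emph{Proof plan.} I would derive all three limit laws from asymptotic linear (Bahadur-type) representations of the joint MLE, obtained by analyzing its first-order conditions and then transporting them through the identifiability maps \eqref{B-est}--\eqref{GammaU-est}. Writing $\bZ_i^*=\big((\bU_i^*)^\T,\bX_i^\T\big)^\T$ and $\btheta_j^*=\big((\bgamma_j^*)^\T,(\bbeta_j^*)^\T\big)^\T$ so that $w_{ij}^*=(\btheta_j^*)^\T\bZ_i^*$, the stationarity of \eqref{eq:MLE} (an interior optimum with probability tending to one under Assumption~3) together with the invariance \eqref{eq:id transformation} implies that the transformed estimators satisfy, for each item $j$, the estimating equation $\sum_i l_{ij}^{\prime}(\hat w_{ij})\hat\bZ_i^*=\zero$ with $\hat\bZ_i^*=\big((\hat\bU_i^*)^\T,\bX_i^\T\big)^\T$, and, for each subject $i$, $\sum_j l_{ij}^{\prime}(\hat w_{ij})\hat\bgamma_j^*=\zero$. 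First I would Taylor-expand each equation around the truth and invert the negative Hessians, which concentrate at $\bPhi_{jz}^*$ and $\bPhi_{i\gamma}^*$ by Assumption~4, writing $\sqrt n(\hat\btheta_j^*-\btheta_j^*)$ and $\sqrt q(\hat\bU_i^*-\bU_i^*)$ as a leading primitive score plus correction terms that carry the estimation error of the other block---the latent factors in the item equation and the loadings in the subject equation---and of the transformation matrices $\hat\Ab,\hat\Gb$.

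The second step is to show that the leading primitive score governs the limit. For fixed $j$ this term is $n^{-1/2}\sum_i l_{ij}^{\prime}(w_{ij}^*)\bZ_i^*$, a sum of conditionally independent, mean-zero, sub-exponential summands by Assumption~2; the information identity $\EE\{l_{ij}^{\prime}(w_{ij}^*)^2\}=-\EE l_{ij}^{\prime\prime}(w_{ij}^*)$ makes its covariance converge to $\bPhi_{jz}^*$, so the Lindeberg--Feller central limit theorem applies. The correction terms are bounded using the consistency rates of Theorem~\ref{prop:ave convergence post-transformation}: inserting the subject-level representation of $\hat\bU_i^*-\bU_i^*$, the contributions pairing item $j$ with items $j^\prime\neq j$ vanish in conditional mean and are $O_p((nq)^{-1/2})$, whereas the diagonal $j^\prime=j$ term is $O_p(q^{-1})$; multiplied by $\sqrt n$ both vanish exactly under the stated scaling $p^{3/2}\sqrt n(nq)^{3/\xi}(n^{-1}p\log qp+q^{-1}\log n)\to0$. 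The derivation of \eqref{eq:asymp U} is the mirror image with $(n,j)$ and $(q,i)$ interchanged, which is why $q=O(n)$ is required and the variance collapses to $(\bPhi_{i\gamma}^*)^{-1}$.

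The third step is to transport the representation through the transformation so as to produce $\bar\Tb_\beta$ and $\bar\Tb_\gamma$. Since $\hat\bbeta_j^*=\hat\bbeta_j-\hat\Ab^\T\hat\bgamma_j$ and $\hat\bgamma_j^*=\hat\Gb^{-1}\hat\bgamma_j$, I would linearize these maps and track the fluctuations of $\hat\Ab,\hat\Gb$: the parts entering each aggregate at order $1/q$ average out across items, while the remainder recombines with the direct score into a single-item linear functional whose deterministic coefficient is $\bar\Tb_\beta$ (respectively $\bar\Tb_\gamma$). The regression matrix $\bar\Ab^\ddagger=\bSigma_{ux}^*\bSigma_x^{-1}$ and the orthogonalizing $\bar\Gb^\ddagger$ (built from $\Ib_n-\Pb_x$) are precisely the channel through which the correlation between $\Ub^*$ and $\Xb$ enters the limit, yielding $\bSigma_{\beta,j}^*=\bar\Tb_\beta(\bPhi_{jz}^*)^{-1}\bar\Tb_\beta^\T$ and $\bSigma_{\gamma,j}^*=\bar\Tb_\gamma(\bPhi_{jz}^*)^{-1}\bar\Tb_\gamma^\T$.

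I expect the main obstacle to be twofold and tightly coupled. First, because the columns of $\Ub^*$ and $\Xb$ are not orthogonal (Remark~\ref{remark6}), the item- and subject-level errors feed back into each other and cannot be separated by the usual orthogonality shortcut; the two expansions must be solved simultaneously as a coupled linear system, with every remainder controlled uniformly over $i$ and $j$ through Theorem~\ref{prop:ave convergence post-transformation} and the scaling conditions as $p$ diverges. Second, the block $\hat\Ab_c$ of $\hat\Ab$ solves the nonsmooth $\ell_1$ program \eqref{eq: estimate_A}, a median-regression-type $M$-estimation whose regressors $\hat\bgamma_j$ carry weakly dependent, asymptotically Gaussian measurement error; here I would rely on Assumption~5 and the $M$-estimation Bahadur representation it invokes to show that the fluctuation of $\hat\Ab_c$ about its population limit is of order $o((p\sqrt n)^{-1})$, so that it contributes $o_p(n^{-1/2})$ to $\hat\bbeta_j^*$ and leaves the first-order law unchanged. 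Reconciling these two controls---the simultaneous resolution of the feedback and the negligibility of the nonsmooth transformation error---is the crux of the argument.
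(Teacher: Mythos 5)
Your proposal reproduces the paper's high-level architecture — score expansion plus Lindeberg--Feller for a working estimator, an M-estimation/Bahadur-representation (Assumption~5, \citealp{he1996bahadur}) control of the nonsmooth $\ell_1$ transformation $\hat\Ab_c$, eigen-perturbation control of $\hat\Gb$, and a final linear transport that produces $\bar\Tb_\beta$ and $\bar\Tb_\gamma$ — and it correctly identifies the two obstacles (the factor--covariate feedback and the nonsmooth transformation error). However, there is a genuine gap in the first step. You propose to Taylor-expand the estimating equations $\sum_i l_{ij}^{\prime}(\hat w_{ij})\hat\bZ_i^*=\zero$ and $\sum_j l_{ij}^{\prime}(\hat w_{ij})\hat\bgamma_j^*=\zero$ in the \emph{target} parametrization and invert "the negative Hessians, which concentrate at $\bPhi_{jz}^*$ and $\bPhi_{i\gamma}^*$," treating the feedback as a coupled linear system to be solved simultaneously. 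But the joint Hessian of the unconstrained likelihood is singular: the transformation invariance \eqref{eq:id transformation} creates a $K^2+Kp$-dimensional null space, so the coupled system you write down is degenerate, and the per-block inversions yielding $\bPhi_{jz}^*$ and $\bPhi_{i\gamma}^*$ do not follow until the identifiability constraints are built into the expansion. Moreover, in the target parametrization the constraint that pins down $\Bb$ is the nonsmooth $\ell_1$-minimality of Condition~\ref{cond:ID1}(ii), so the constrained optimum cannot be characterized by smooth first-order conditions there at all.

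The paper's proof resolves exactly this with a device your plan lacks: it introduces an \emph{intermediate orthogonalized parametrization} $\bphi^0=(\bGamma^0,\Ub^0,\Bb^0)$ satisfying $(\Ub^0)^{\T}\Xb=\bm 0$ and the diagonal scaling conditions, under which the constrained MLE $\hat\bphi^0$ can be written as the minimizer of a smooth penalized objective $-L-P$ whose penalty Hessian exactly fills the null space of the likelihood Hessian. Local strong convexity of this penalized Hessian (after scaling) is what makes the full joint expansion $\hat\bphi^0-\bphi^0=-\cH^{-1}\bS-\tfrac12\cH^{-1}\Rb$ well-posed; Woodbury/Schur-complement block bounds then show $[\cH^{-1}\bS]_{[P_j]}$ is approximated by the per-item block $[\Hb_{Lff^{\prime}}^{-1}\bS_f]_{[P_j]}$ up to errors of order $p(nq)^{3/\xi}/\sqrt{nq}$ and $p(nq)^{3/\xi}\zeta_{nq,p}^{-2}$, which is precisely where the scaling conditions in the theorem statement come from. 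Only after the CLT is established for $\hat\bbf_j^0$ and $\hat\bU_i^0$ does the paper transform back, using $\hat\Ab^0$ (controlled via Assumption~5, as you propose) and $\hat\Gb^0$, to obtain \eqref{eq:asymp beta}--\eqref{eq:asymp U} with the stated $\bar\Tb_\beta$, $\bar\Tb_\gamma$ coefficients. Without this intermediate parametrization — or an equivalent mechanism for restoring invertibility of the coupled system and making the block-inverse bounds provable despite the correlation between $\Ub^*$ and $\Xb$ — your step one cannot be carried out, and the per-item limiting variances you assert remain heuristic.
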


    The asymptotic covariance matrices in Theorem~\ref{thm:asymptotic normality post-transformation beta}  can be consistently estimated. Due to space limitations, we defer the construction of the estimators $\hat\bSigma_{\beta, j}^*$,  $\hat\bSigma_{\gamma, j}^*$, and $\hat\bSigma_{u, i}^*$ to the Section C.4 of Supplementary Material. 

\begin{corollary}
 
\label{prop:consistent estimator cov}
Suppose the true parameters $\bphi^* = (\bGamma^*, \Ub^*, \Bb^*)$ satisfy  Conditions~\ref{cond:ID1} and~\ref{cond:ID2}.  Under Assumptions~1--5, the asymptotic normality results
~\eqref{eq:asymp beta}--\eqref{eq:asymp U} hold when $\bSigma_{\beta, j}^*$, $\bSigma_{\gamma, j}^*$, and $\bSigma_{u, i}^*$ are substituted with their estimators
$\hat\bSigma_{\beta, j}^*$ in (A8), $\hat\bSigma_{\gamma, j}^*$ in~(A9), and $\hat\bSigma_{u, i}^*$ in~(A10) of the Supplementary Material. 
   
\end{corollary}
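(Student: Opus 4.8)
The plan is to prove the corollary as a standard studentization (Slutsky) argument: once each estimated covariance matrix is shown to be consistent for its population counterpart, the asymptotic normality of Theorem~\ref{thm:asymptotic normality post-transformation beta} transfers to the self-normalized statistics. For the covariate-effect statement I would write
\begin{equation*}
\sqrt{n}\,\ba^{\T}(\hat\bSigma_{\beta,j}^*)^{-1/2}(\hat\bbeta_j^*-\bbeta_j^*) = \ba^{\T}\bW_n + \sqrt{n}\,\ba^{\T}\big[(\hat\bSigma_{\beta,j}^*)^{-1/2}-(\bSigma_{\beta,j}^*)^{-1/2}\big](\hat\bbeta_j^*-\bbeta_j^*),
\end{equation*}
where $\bW_n = \sqrt{n}(\bSigma_{\beta,j}^*)^{-1/2}(\hat\bbeta_j^*-\bbeta_j^*)$ satisfies $\ba^{\T}\bW_n\overset{d}{\to}\cN(0,1)$ by~\eqref{eq:asymp beta}. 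It then suffices to show the remainder term is $o_p(1)$; the analogous decompositions for $\hat\bgamma_j^*$ and $\hat\bU_i^*$ follow the same template in the fixed dimension $K$ using~\eqref{eq:asymp gamma} and~\eqref{eq:asymp U}.

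The bulk of the work is the consistency of the plug-in covariance estimators, which I would reduce to the consistency of their building blocks. First, for the second-derivative matrix I would show that the plug-in $\hat\bPhi_{jz}^*$, formed from $-n^{-1}\sum_i l_{ij}^{\prime\prime}(\hat w_{ij}^*)\hat\bZ_i^*(\hat\bZ_i^*)^{\T}$, converges to $\bPhi_{jz}^*$: replacing the estimated arguments by the true ones costs at most $b_U\,|\hat w_{ij}^*-w_{ij}^*|$ per term by the bounded-third-derivative condition in Assumption~2, and $|\hat w_{ij}^*-w_{ij}^*|$ together with $\hat\bZ_i^*-\bZ_i^*$ is controlled by the uniform estimation rates implied by Theorem~\ref{prop:ave convergence post-transformation}; what remains, $-n^{-1}\sum_i l_{ij}^{\prime\prime}(w_{ij}^*)\bZ_i^*(\bZ_i^*)^{\T}$, converges to $\bPhi_{jz}^*$ by Assumption~4 (with a law-of-large-numbers step if $l_{ij}^{\prime\prime}$ carries randomness through $Y_{ij}$). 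Second, the transformation-matrix estimates converge: the sample analogue of $\bar\Ab^{\ddagger}$ built from $n^{-1}\sum_i\hat\bU_i^*\bX_i^{\T}$ and $n^{-1}\sum_i\bX_i\bX_i^{\T}$, and the spectral construction $\hat\Gb$ of Section~\ref{sec:jml} converging to $\bar\Gb^{\ddagger}$, both follow from the consistency of $\hat\bU_i^*,\hat\bGamma^*$ and the continuous mapping theorem, where the distinct-eigenvalue requirement in Condition~\ref{cond:ID2} guarantees continuity of the eigenvectors. Assembling $\hat\Tb_\beta,\hat\Tb_\gamma$ by continuous mapping then yields $\hat\bSigma_{\beta,j}^*\overset{p}{\to}\bSigma_{\beta,j}^*$, and similarly for $\hat\bSigma_{\gamma,j}^*,\hat\bSigma_{u,i}^*$; since $\bPhi_{jz}^*$ is positive definite and $\bar\Tb_\beta$ has full row rank (its $p\times p$ block is $\Ib_p$ plus a rank-$K$ perturbation), each limiting covariance is positive definite, so the map $\bm{M}\mapsto\bm{M}^{-1/2}$ is continuous there and $(\hat\bSigma_{\beta,j}^*)^{-1/2}\overset{p}{\to}(\bSigma_{\beta,j}^*)^{-1/2}$.

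The Slutsky step is immediate for $\hat\bgamma_j^*$ and $\hat\bU_i^*$ because their dimension $K$ is fixed: there $\|(\hat\bSigma)^{-1/2}-(\bSigma)^{-1/2}\|=o_p(1)$ together with tightness of $\sqrt{n}(\hat\bgamma_j^*-\bgamma_j^*)$ (resp.\ $\sqrt{q}(\hat\bU_i^*-\bU_i^*)$) makes the remainder $o_p(1)$. The main obstacle is the covariate-effect case, where $p\to\infty$: the remainder is bounded by $\|(\hat\bSigma_{\beta,j}^*)^{-1/2}-(\bSigma_{\beta,j}^*)^{-1/2}\|\cdot\|\sqrt{n}(\hat\bbeta_j^*-\bbeta_j^*)\|_2$, and because~\eqref{eq:asymp beta} forces each coordinate of $\bW_n$ to be $O_p(1)$ it only gives $\|\sqrt{n}(\hat\bbeta_j^*-\bbeta_j^*)\|_2=O_p(\sqrt{p})$. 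Hence mere consistency of the variance estimator does not suffice; I must upgrade it to the operator-norm rate $\|(\hat\bSigma_{\beta,j}^*)^{-1/2}-(\bSigma_{\beta,j}^*)^{-1/2}\|=o_p(p^{-1/2})$. Propagating the Theorem~\ref{prop:ave convergence post-transformation} rates through the diverging-dimensional plug-ins $\hat\bPhi_{jz}^*$ and the sample version of $\bar\Ab^{\ddagger}$ while keeping the accumulated error below $p^{-1/2}$ is the crux of the argument, and is precisely where the extra scaling condition of Theorem~\ref{thm:asymptotic normality post-transformation beta}, namely $p^{3/2}\sqrt{n}(nq)^{3/\xi}(n^{-1}p\log qp+q^{-1}\log n)\to0$, is consumed.
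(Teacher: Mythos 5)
Your overall strategy matches the paper's: establish consistency of the plug-in sandwich covariance estimators (plug-in convergence of $l_{ij}^{\prime}$ and $l_{ij}^{\prime\prime}$ via estimation rates and the boundedness in Assumption~2, a weak law of large numbers plus Assumption~4 for the quadratic forms, and consistency of the estimated transformation matrices), then transfer Theorem~\ref{thm:asymptotic normality post-transformation beta} to the studentized statistics by a Slutsky argument. The paper's proof does exactly this, except that it works in the transformed parameterization in which the estimators (A8)--(A10) are actually defined — the building blocks are $\hat\bU_i^0$, $\hat l_{ij}^{\prime}$, $\hat l_{ij}^{\prime\prime}$, and the transformation estimates $\hat\Ab^0$, $\hat\Gb^0$ whose consistency comes from its dedicated lemma on transformation matrices (not a bare continuous-mapping step) — and it concludes by proving consistency of the bilinear forms $\bb^\T\hat\bSigma_{\beta,j}^*\ba$ along fixed unit directions before asserting the corollary.

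Two concrete issues. First, you control $|\hat w_{ij}^*-w_{ij}^*|$ and $\hat\bZ_i^*-\bZ_i^*$ by ``the uniform estimation rates implied by Theorem~\ref{prop:ave convergence post-transformation}'', but that theorem gives only average (Frobenius) rates over $i$ and over $j$; an average rate over $j$ says nothing about the error at the one fixed index $j$ entering $\hat\bSigma_{\beta,j}^*$ (and likewise for the fixed $i$ in $\hat\bSigma_{u,i}^*$). The paper instead invokes its separately proved individual consistency results (Lemma~\ref{prop:ind consistency}, obtained from the first-order conditions), so this step of your argument, as written, has a gap — though it is repairable with results already in the supplement. Second, your diagnosis of the diverging-$p$ Slutsky step is legitimate and in fact more careful than the paper: directional consistency alone does not control $\sqrt{n}\,\ba^\T\bigl[(\hat\bSigma_{\beta,j}^*)^{-1/2}-(\bSigma_{\beta,j}^*)^{-1/2}\bigr](\hat\bbeta_j^*-\bbeta_j^*)$ when $\|\sqrt n(\hat\bbeta_j^*-\bbeta_j^*)\|_2=O_p(\sqrt p)$. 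But the operator-norm rate $o_p(p^{-1/2})$ you say is needed is never established — you assert it is ``where the scaling condition is consumed'' without a derivation, and the paper does not prove it either (its proof stops at bilinear-form consistency and treats the substitution as immediate). So your proposal is faithful and complete for the fixed-dimension statements on $\hat\bgamma_j^*$ and $\hat\bU_i^*$, while the step you correctly single out as the crux for $\hat\bbeta_j^*$ remains a plan rather than a proof. (A minor point: full row rank of $\bar\Tb_\beta$ is cleanest from writing it as $(-(\bar\Ab^{0})^\T,\ \Ib_p)$ times an invertible matrix; ``$\Ib_p$ plus a rank-$K$ perturbation'' is not by itself sufficient, since such a perturbation can be singular.)
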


Theorem~\ref{thm:asymptotic normality post-transformation beta} provides the asymptotic distributions for all estimators $\hat{\bbeta}_j^*$'s, $\hat{\bgamma}_j^*$'s, and $\hat{\bU}_i^*$'s. 
In particular, with the asymptotic distributions for the covariate effects and the estimators for asymptotic covariance matrices, we can perform hypothesis testing on any sub-vector of $\bbeta_{j}^*$, 
such as testing on a single entry or the entire vector. 
These testing problems find wide practical applications in educational assessments and psychological measurements, where the practitioners are often interested in investigating whether the test items are biased for particular sets of covariates or even across all covariates. 
Specifically, as introduced in Section~\ref{sec:model setup}, we aim to test the covariate effect from the $s$th covariate to the $j$th response, i.e., perform hypothesis testing on single entry $\beta_{js}^*$ for $j \in [q]$ and $s \in [p_*]$. We reject the null hypothesis $\beta_{js}^* = 0$ at significance level $\alpha$ if $| \sqrt{n} (\hat{\sigma}_{\beta,js}^*)^{-1} \hat{\beta}_{js}^* | >\Phi^{-1} ( 1- \alpha/2)$, where $(\hat{\sigma}_{\beta,js}^*)^2$ is the $(s+1)$-th diagonal entry in $\hat{\bSigma}_{\beta, j}^*$. 
Empirically, with these inference results, we conduct simulation studies in Section~\ref{sec:simulation} and real data analysis 
in Section~\ref{sec:data application}.


For the asymptotic normality of $\hat{\bbeta}_j^*$, the condition $ p^{3/2}\sqrt{n}(nq)^{3/\xi}(n^{-1}p\log qp+q^{-1}\log n )$ $\to 0$ together with Assumption~3 gives $ p = o\{n^{1/5} \wedge ({q^2}/n)^{1/3}\}$ up to a small order term, and further implies $n \ll q^2$, which is consistent with established conditions in the existing factor analysis literature~\citep{bai2012statistical, wang2022maximum}.
For the asymptotic normality of $\hat{\bU}_i^*$, the additional condition that $q = O(n)$ is a reasonable assumption in educational applications where the number of items $q$ is much smaller than the number of subjects $n$.
In this case, the scaling conditions imply $p = o\{ q^{1/3} \wedge (n^2/q)^{1/5}\}$ up to a small order term. 
Similarly for the asymptotic normality of $\hat{\bgamma}_j^*$, 
the proposed conditions give $p = o\{n^{1/5} \wedge (q^2/n)^{1/3}\}$ up to a small order term. 

\begin{remark}\label{remark-6+}
   Similar to the discussion in Remark \ref{remark6}, the challenges arising from the unrestricted dependence between $\Ub^*$ and $\Xb$ also affect the derivation of the asymptotic distributions for the proposed estimators. If we consider the ideal case with $(\Ub^{*})^{\T} \Xb = \bm{0}_{K\times p}$,  we can establish the asymptotic normality for all individual estimators under Assumptions~1--4 only and weaker scaling conditions. 
 Specifically, when $(\Ub^{*})^{\T} \Xb = \bm{0}_{K\times p}$, 
 the scaling condition becomes $p\sqrt{n}(nq)^{3/\xi}(n^{-1}p\log qp+q^{-1}\log n )\to 0$ for deriving asymptotic normality of $\hat\bbeta_j^*$ and $\hat\bgamma_j^*$, which is milder than that for~\eqref{eq:asymp beta} and~\eqref{eq:asymp gamma}.
\end{remark}

\begin{remark} As illustrated in Remarks \ref{remark6} and \ref{remark-6+},
compared with existing work on generalized latent factor models without covariates \citep{wang2022maximum},
incorporating covariates into generalized latent factor models substantially expands the parameter space and requires additional constraints to handle the dependence between $\Ub^*$ and $\Xb$ and the indeterminacy issue from the transformation matrix $\Ab$. As a result, the theoretical analysis of such constrained MLE is more complex and challenging compared to that of existing generalized latent factor models. For instance, a different and more sophisticated analytical framework is developed to establish the average consistency of estimators, address the local convexity, and derive the bounds for the Hessian matrix, which is non-trivial as the Hessian matrix significantly increases in scale with a diverging number of covariates that are correlated with the latent factors. In addition to the theoretical insights provided by the asymptotic distribution results, these findings also hold significant practical value. Beyond assessing test fairness using the inferential results for $\hat\bbeta_{js}$, our framework enables further downstream analyses involving the latent factors under DIF settings. These results lay important groundwork for future methodological advancements and applied research.

\end{remark}

\section{Simulation Study}
\label{sec:simulation}
In this section, we study the finite-sample performance of the proposed joint-likelihood-based estimator. 
We focus on the logistic latent factor model in~\eqref{eq:modelsetup} with $p_{ij} (y \mid w_{ij}) = \exp(w_{ij}y)/\{1+\exp(w_{ij})\}$, where $w_{ij} = (\bgamma_j^*)^{\T} \bU_i^* + (\bbeta_j^*)^{\T} \bX_i$.  
The logistic latent factor model is commonly used in the context of educational assessment and is also referred to as the item response theory model \citep{mellenbergh1994generalized,hambleton2013item}. 
We apply the proposed method to estimate $\Bb^*$ and 
perform statistical inference on testing the null hypothesis $\beta_{js}^* = 0$.  

We start with presenting the data generating process. 
We set the number of subjects $n \in \{300, 500,  1000, 1500, 2000\}$, the number of items $q \in \{100, 300, 500\}$, the covariate dimension $p_* \in \{5, 10, 30\}$, and the factor dimension $K = 2$, respectively. 
We jointly generate $\bX_{i}^c$ and $\bU_{i}^*$ from $ \cN(\bm{0}_{K+p^*}, \bSigma)$ where $\bSigma\in\RR^{(K+p^*)\times (K+p^*)}$ with the $(i,j)$-th entry being $\bSigma_{ij} = \tau^{|i - j|}$ for $\tau \in \{0, 0.2, 0.5, 0.7\}$. 
In addition, we set the loading matrix $\bGamma_{[,k]}^* = \bm{1}_k^{(K)} \otimes \bv_k$, where  $\otimes $ is the Kronecker product and $\bv_k$ is a $(q/K)$-dimensional vector with each entry generated independently and identically from Unif$[0.5, 1.5]$.
For the covariate effects $\Bb^*$, we set the intercept terms to equal $\beta_{j0}^* = 0$. 
For the remaining entries in $\Bb^*$, we consider the following two settings: (1) sparse setting: $\beta_{js}^* = \rho$ for $s = 1, \dots, p_*$ and $j = 5s-4, \dots, 5s$ and other $\beta_{js}^*$ are set to zero; (2) dense setting: $\beta_{js}^* = \rho$ for $s = 1, \dots, p_*$ and $j =  R_s q/5 + 1, \dots, (R_s + 1) q/5$ with  $R_s = s - 5 \lfloor{s/5}\rfloor$, and other $\beta_{js}^*$ are set to zero. 
Here, the signal strength is set as $\rho \in \{0.3, 0.5\}$. Intuitively, in the sparse setting, we set 5 items to be biased for each covariate whereas in the dense setting, 20$\%$ of items are biased items for each covariate.

\begin{figure}[htbp]
\centering    
   \subfigure{
        \includegraphics[width=1.9in]{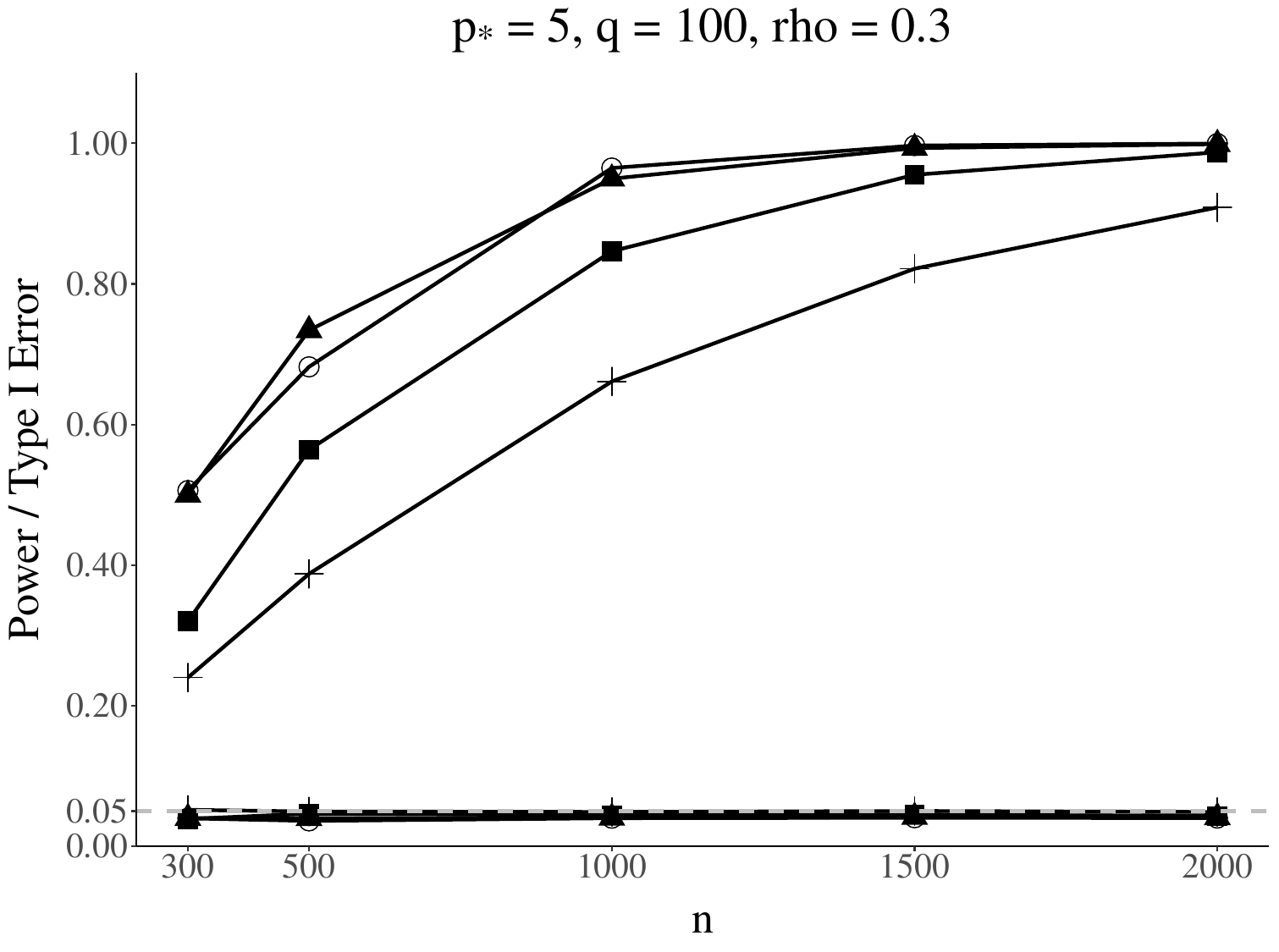}}
         \subfigure{
        \includegraphics[width=1.9in]{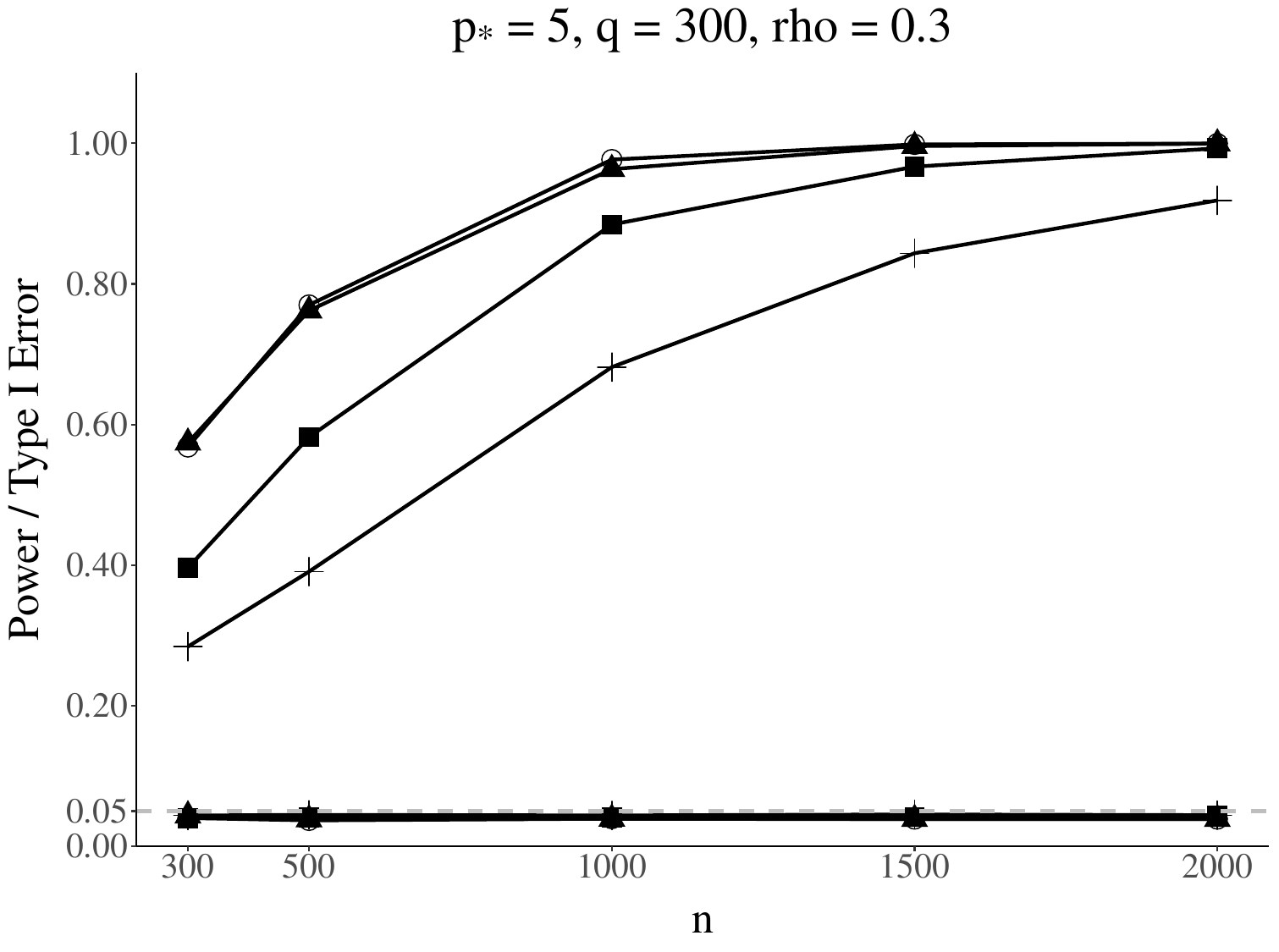}}
         \subfigure{
        \includegraphics[width=1.9in]{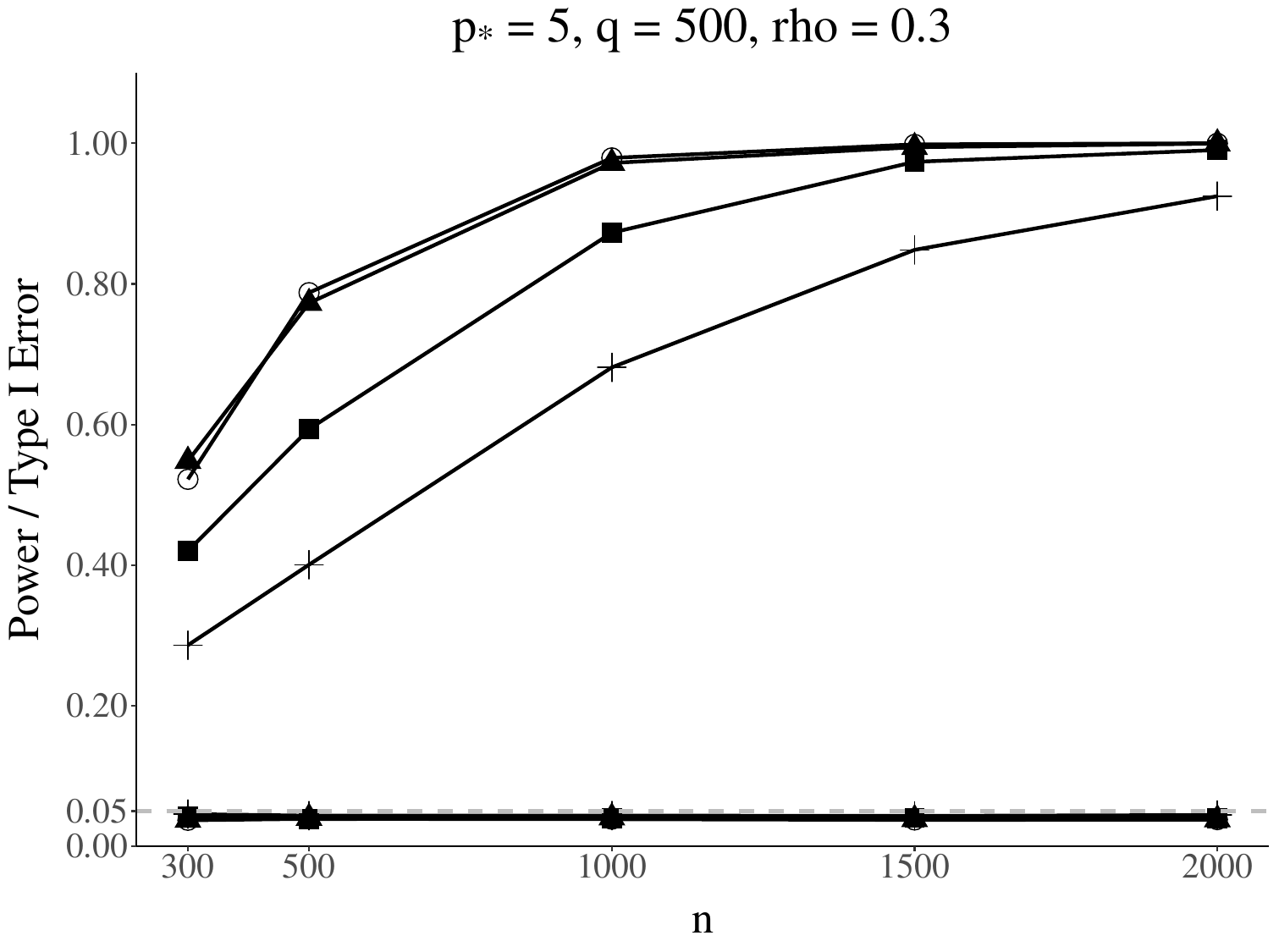}}
        \\
        \subfigure{
        \includegraphics[width=1.9in]{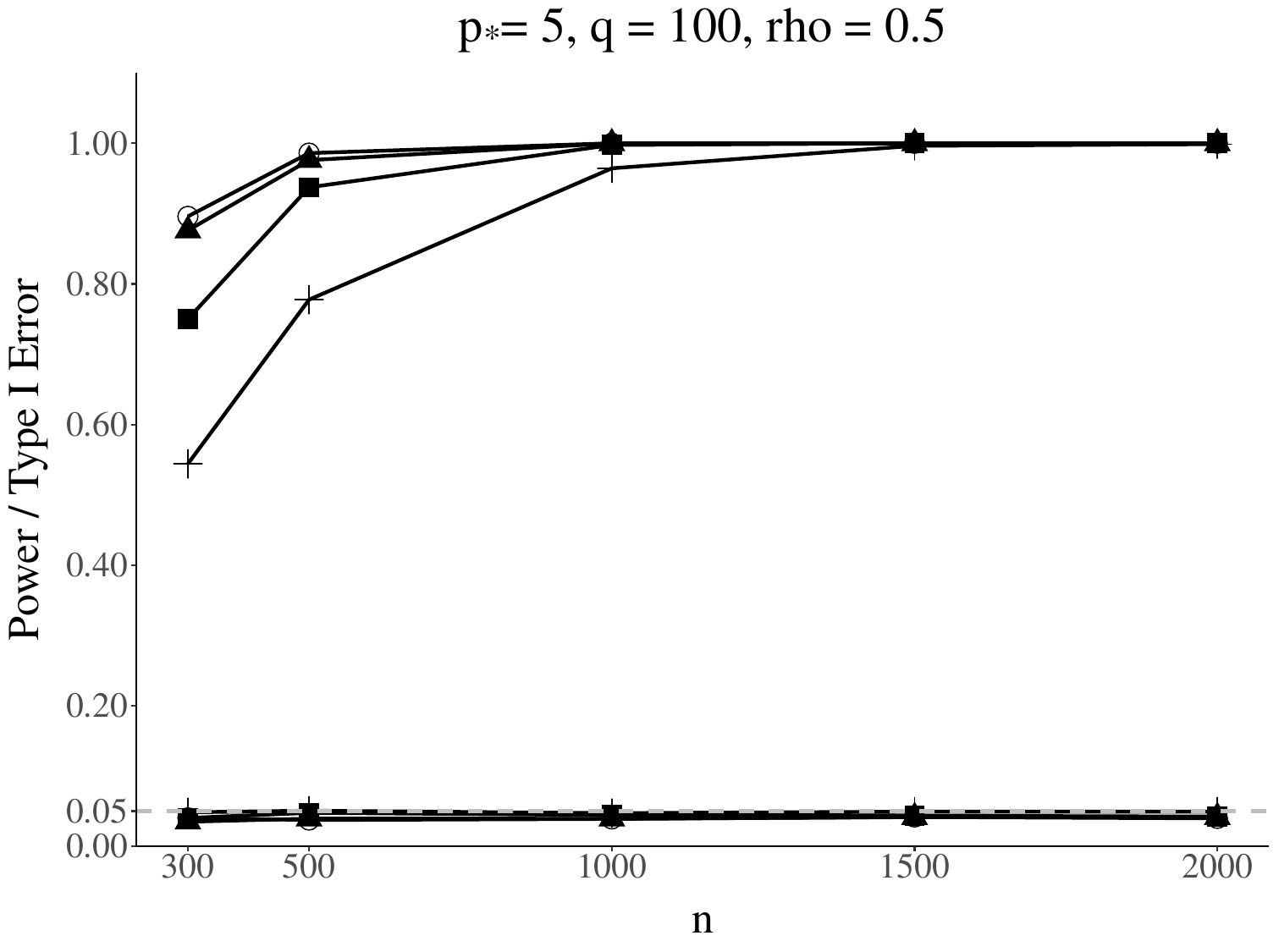}}%
          \subfigure{
        \includegraphics[width=1.9in]{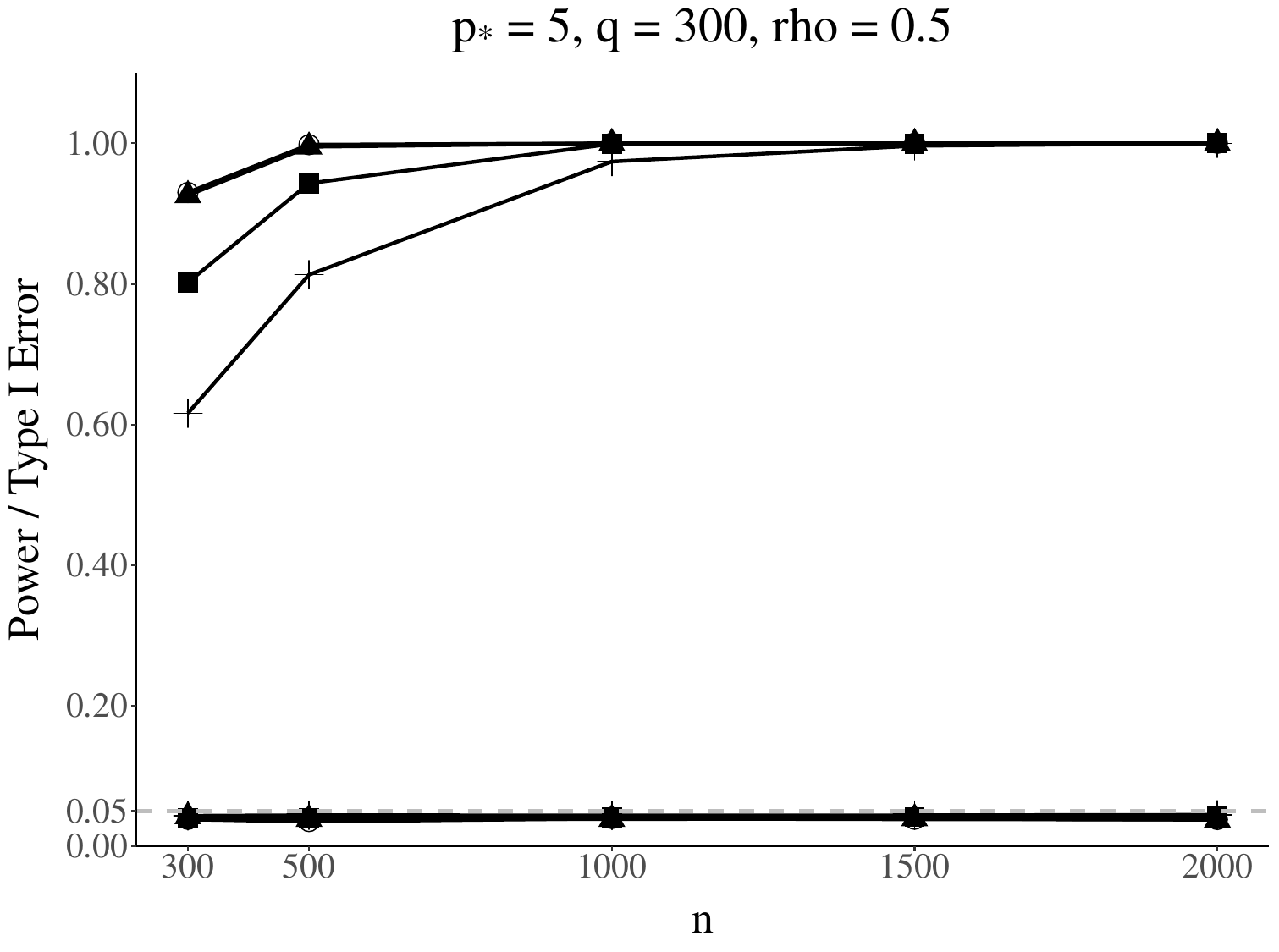}}%
          \subfigure{
        \includegraphics[width=1.9in]{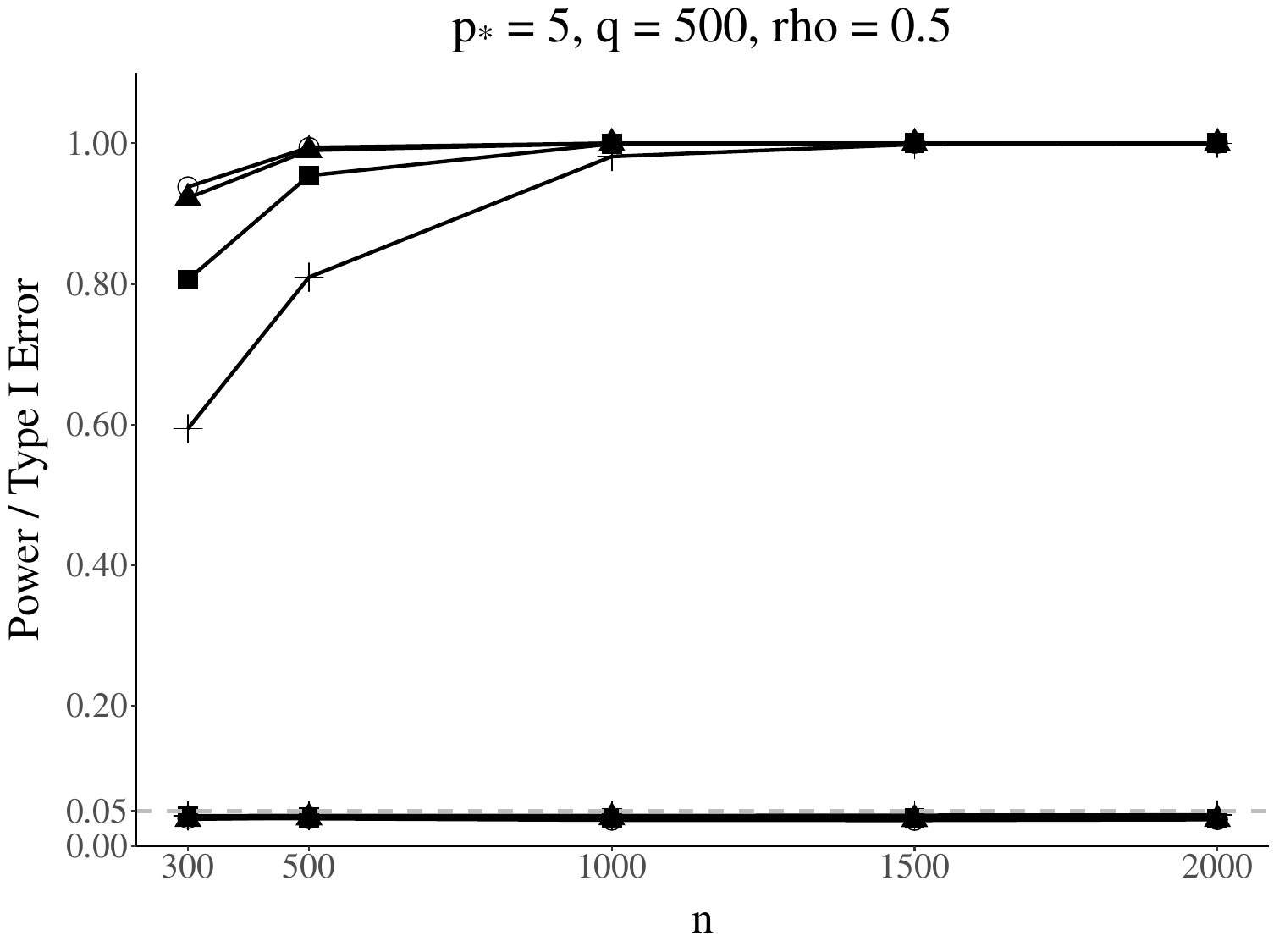}}
   
    \caption{Powers and type I errors under sparse setting at $p_*=5$. Circles (\protect\includegraphics[height=0.8em]{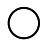}) denote correlation parameter $\tau = 0$. Triangles (\protect\includegraphics[height=0.8em]{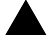})  represent the case $\tau = 0.2$. Squares (\protect\includegraphics[height=0.8em]{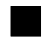}) indicate $\tau = 0.5$. Crosses (\protect\includegraphics[height=1em]{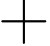}) represent the $\tau = 0.7$.}
    \label{fig:non-anchor-correlated p5}
\end{figure}

\begin{figure}[htbp]
\centering    
   \subfigure{
        \includegraphics[width=1.9in]{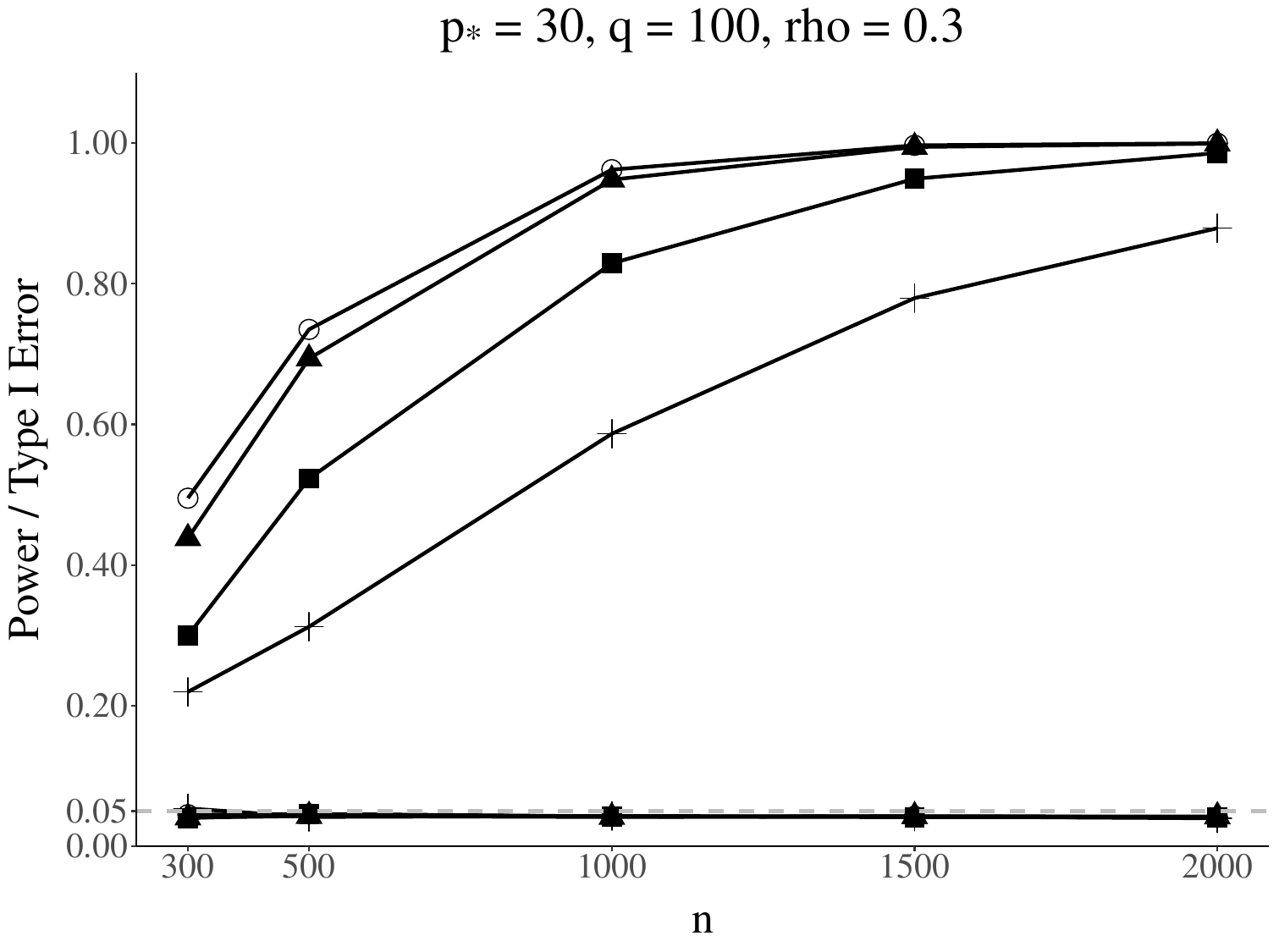}}
        \subfigure{
        \includegraphics[width=1.9in]{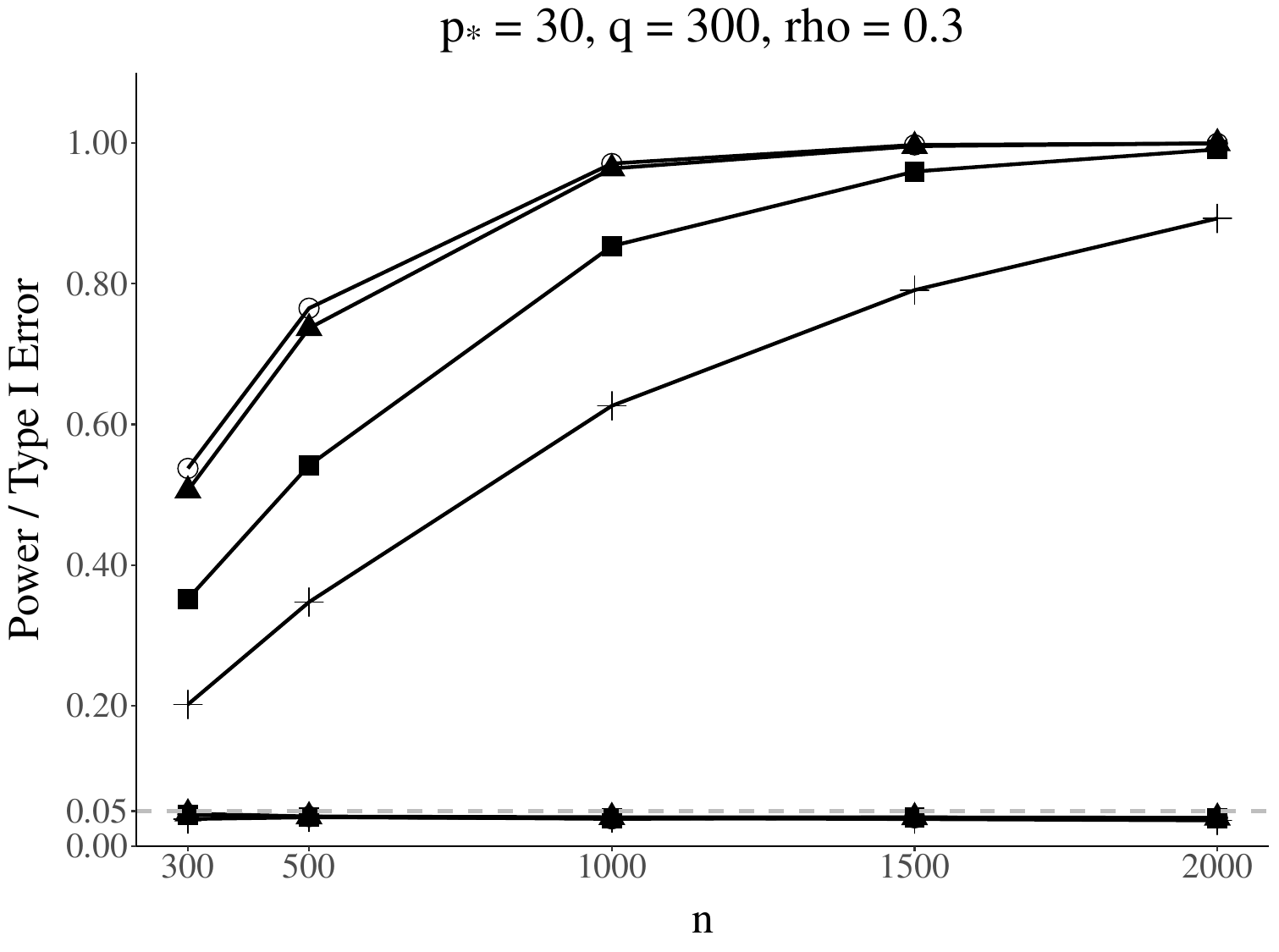}}
        \subfigure{
        \includegraphics[width=1.9in]{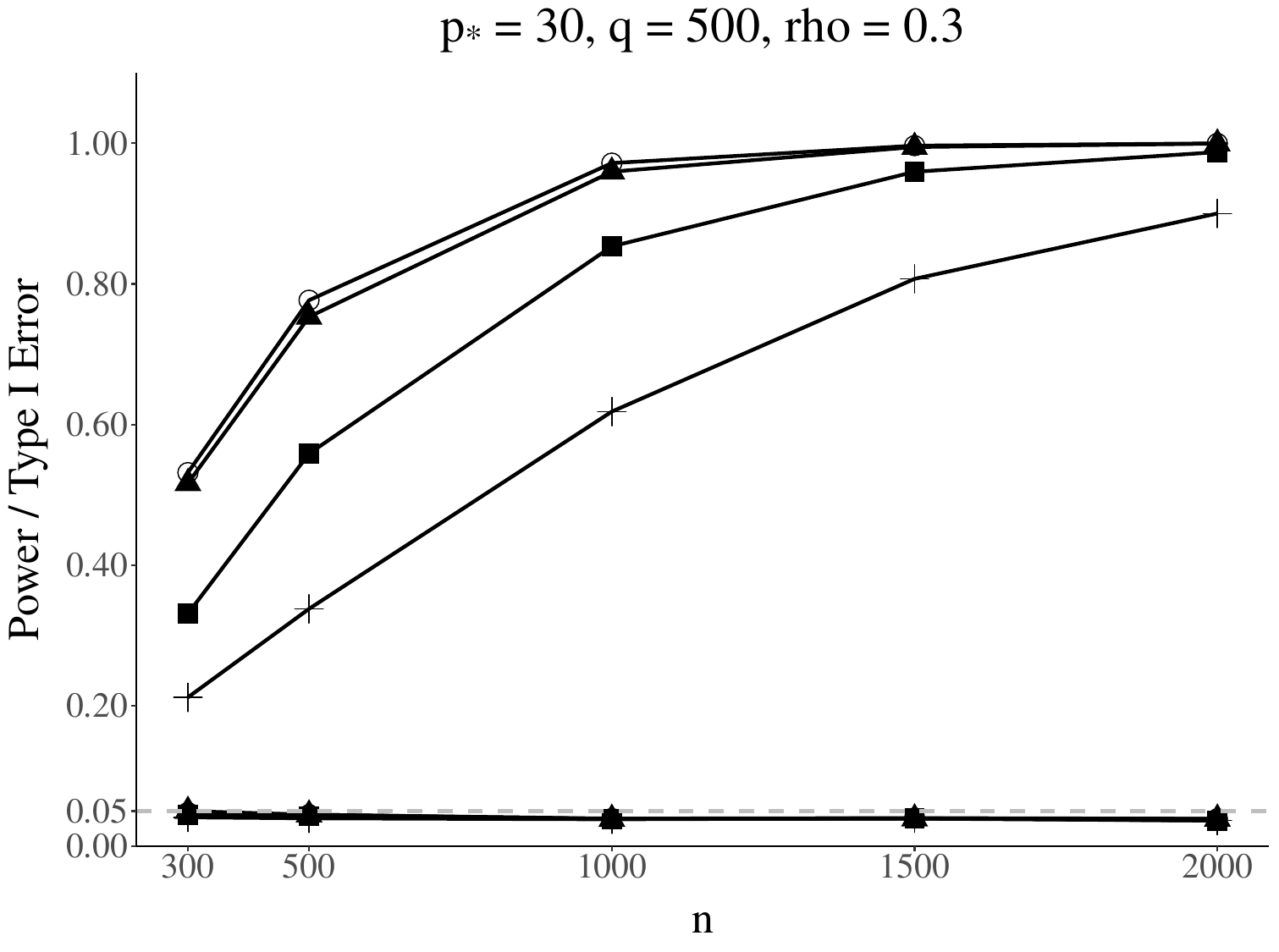}}
        \\
          \subfigure{
        \includegraphics[width=1.9in]{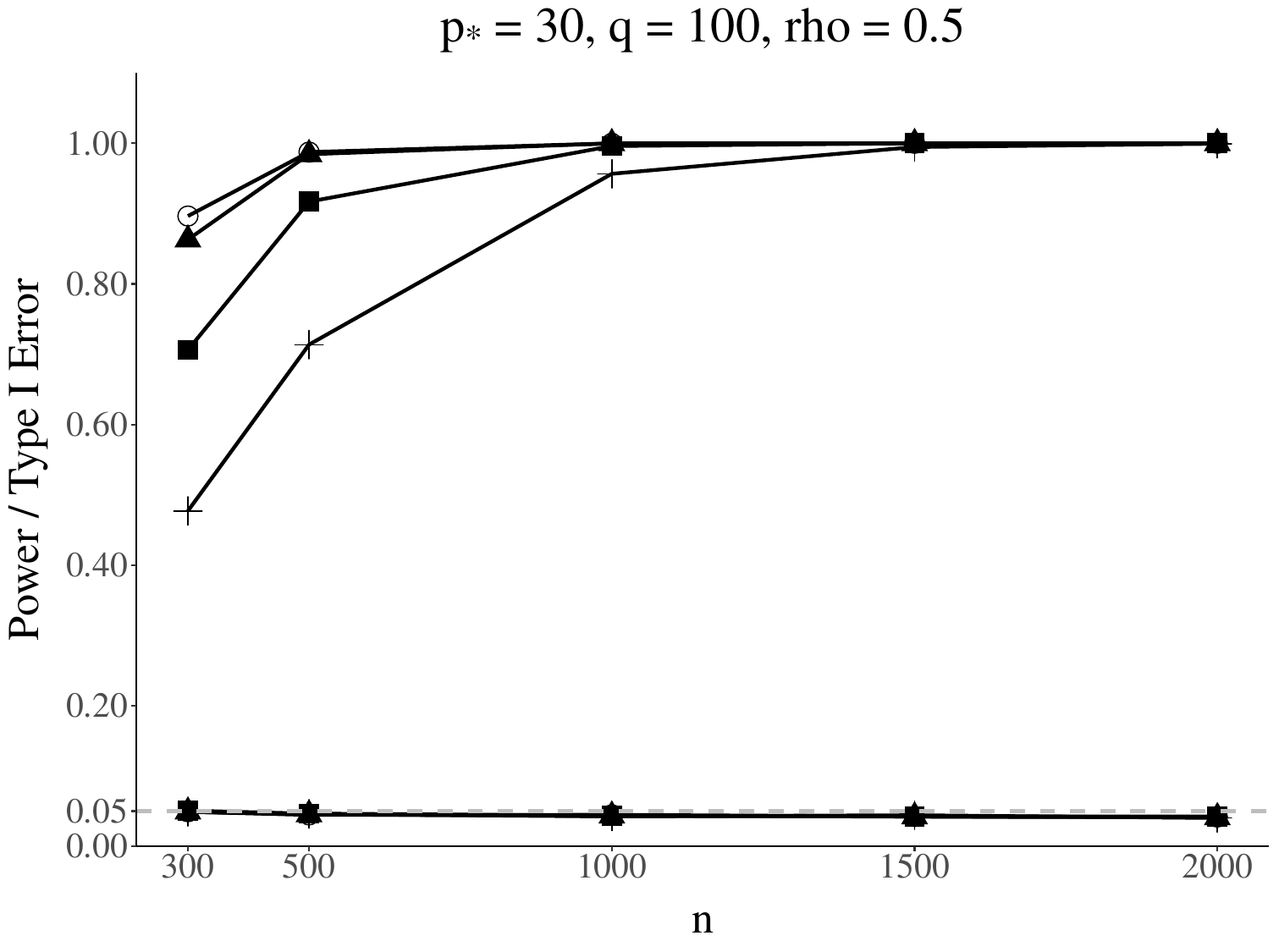}}%
          \subfigure{
        \includegraphics[width=1.9in]{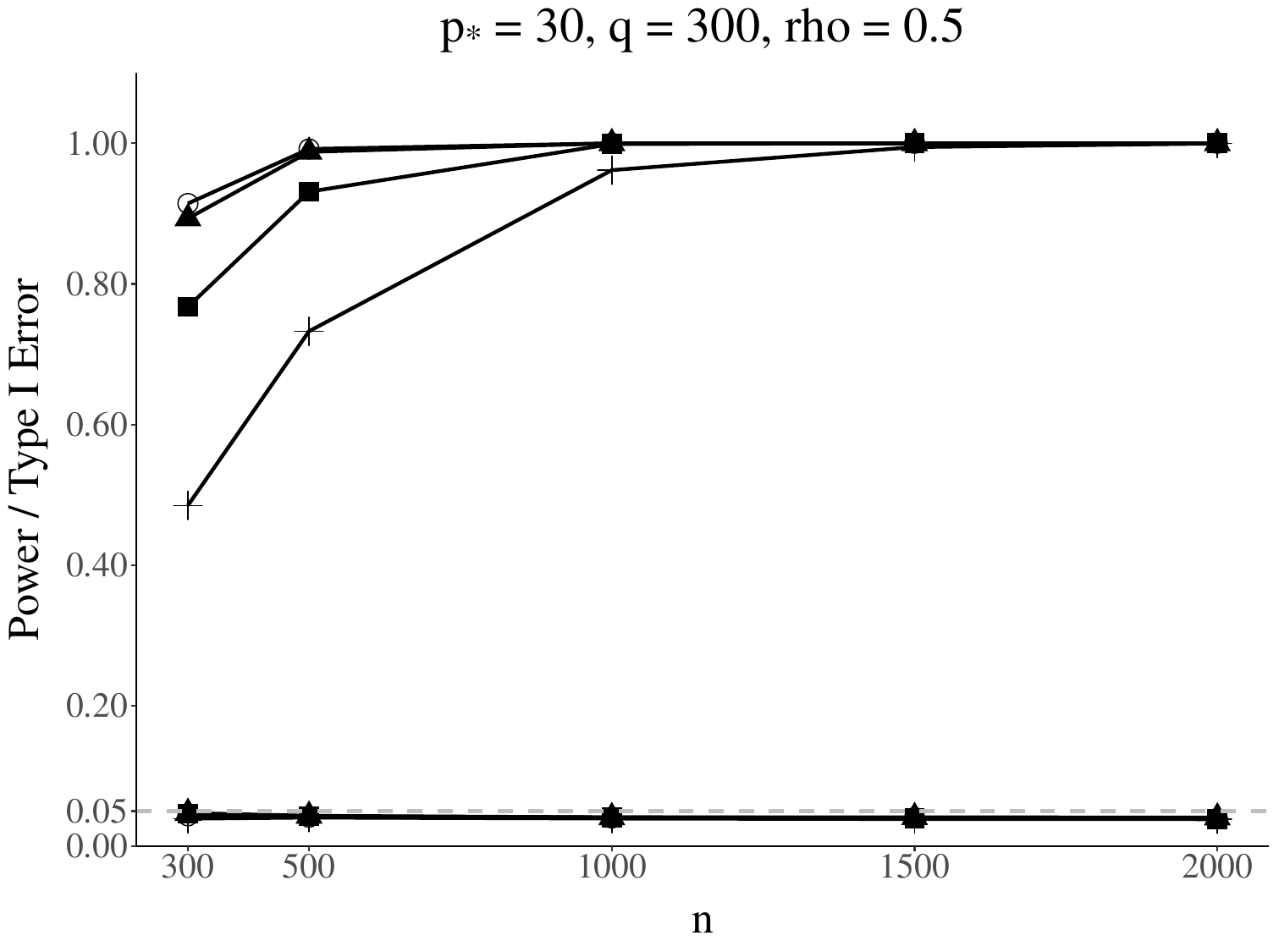}}%
          \subfigure{
        \includegraphics[width=1.9in]{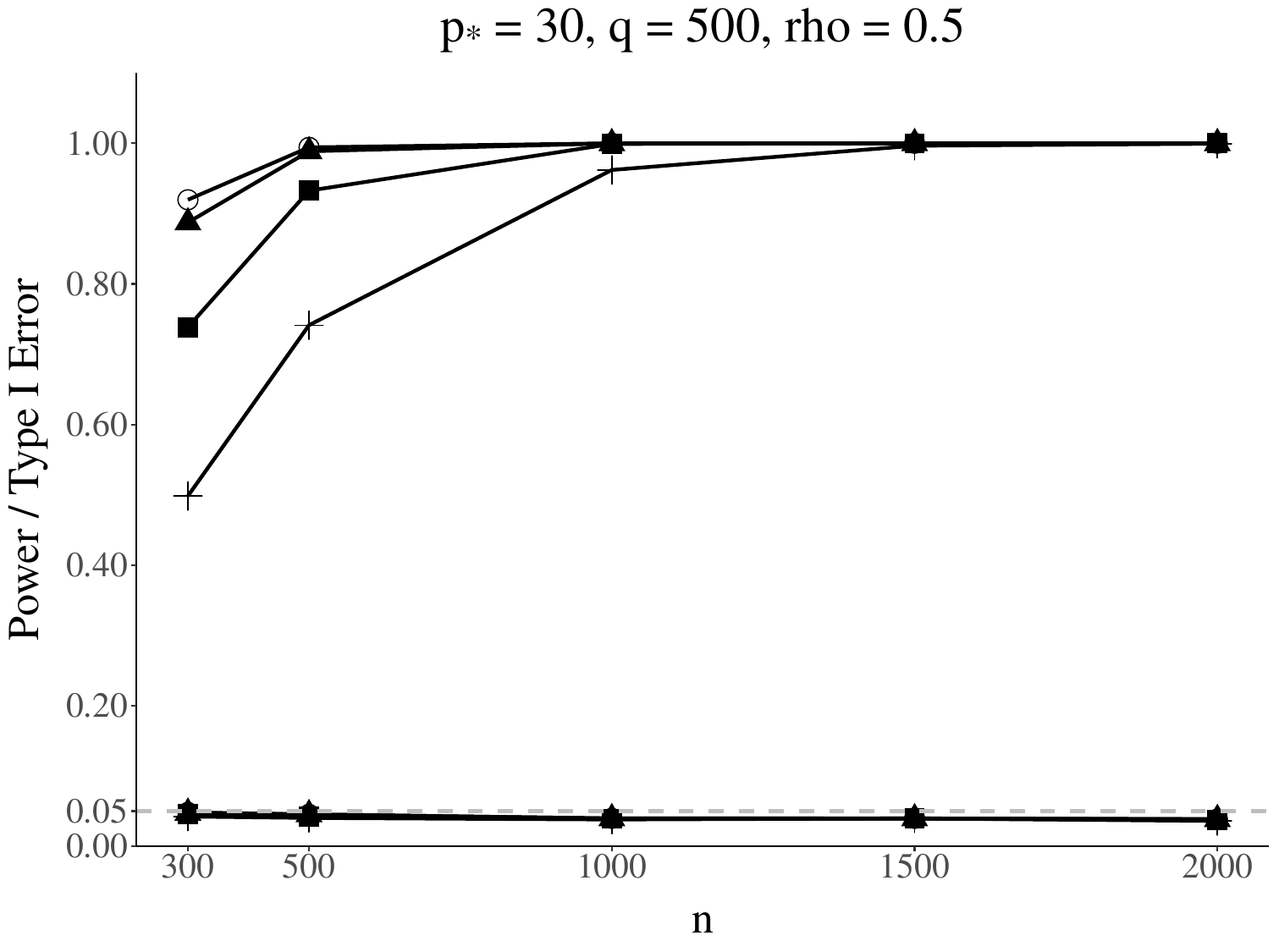}}
    \caption{Powers and type I errors under sparse setting at $p_*=30$. Circles (\protect\includegraphics[height=0.8em]{legend/new_rho0.png}) denote correlation parameter $\tau = 0$. Triangles (\protect\includegraphics[height=0.8em]{legend/rho0.2.png})  represent the case $\tau = 0.2$. Squares (\protect\includegraphics[height=0.8em]{legend/rho0.5.png}) indicate $\tau = 0.5$. Crosses (\protect\includegraphics[height=1em]{legend/rho0.7.png}) represent the $\tau = 0.7$.}
    \label{fig:non-anchor-correlated p30}
\end{figure}

\begin{figure}[htbp]
\centering    
   \subfigure{
        \includegraphics[width=1.9in]{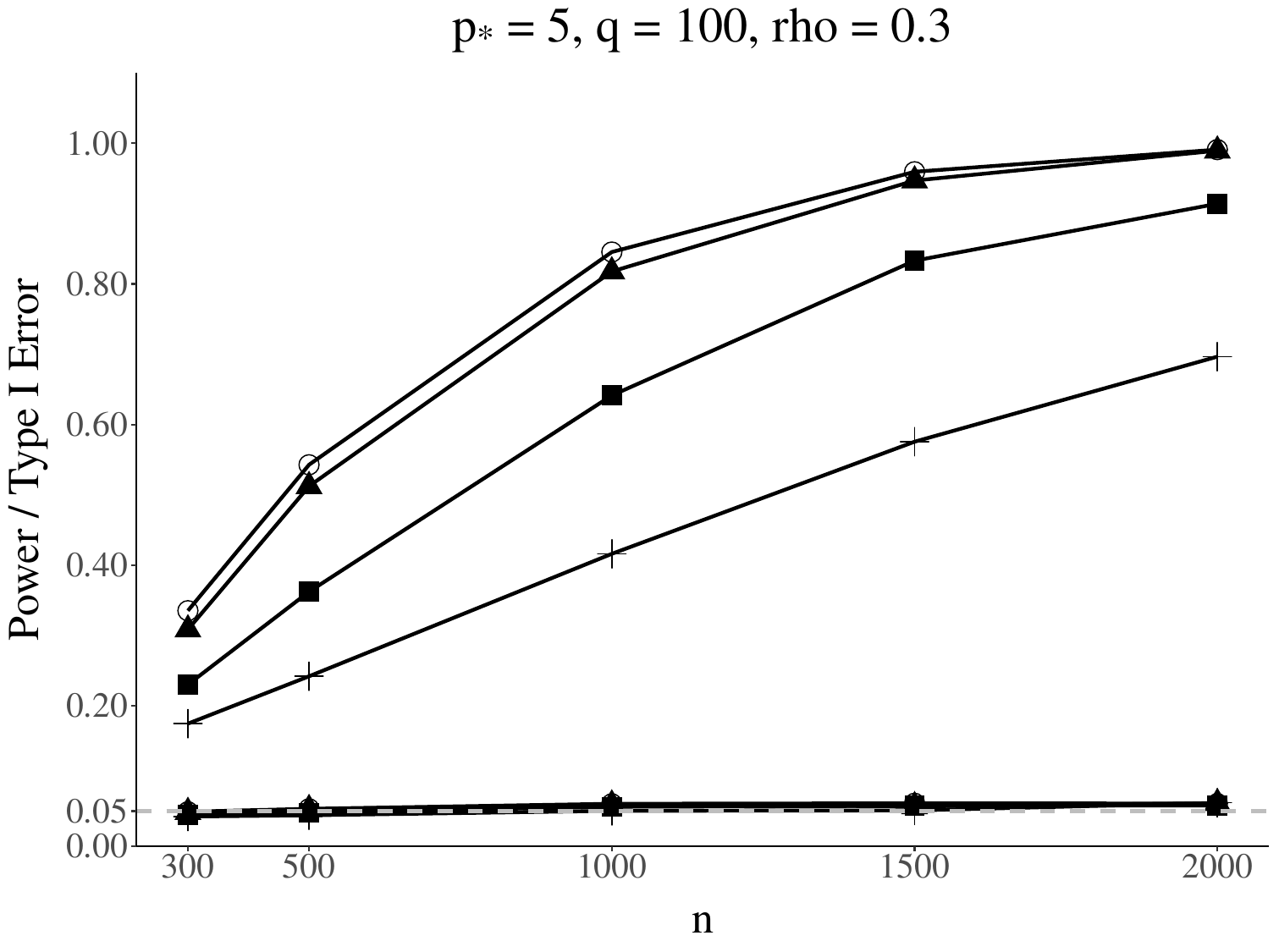}}
         \subfigure{
        \includegraphics[width=1.9in]{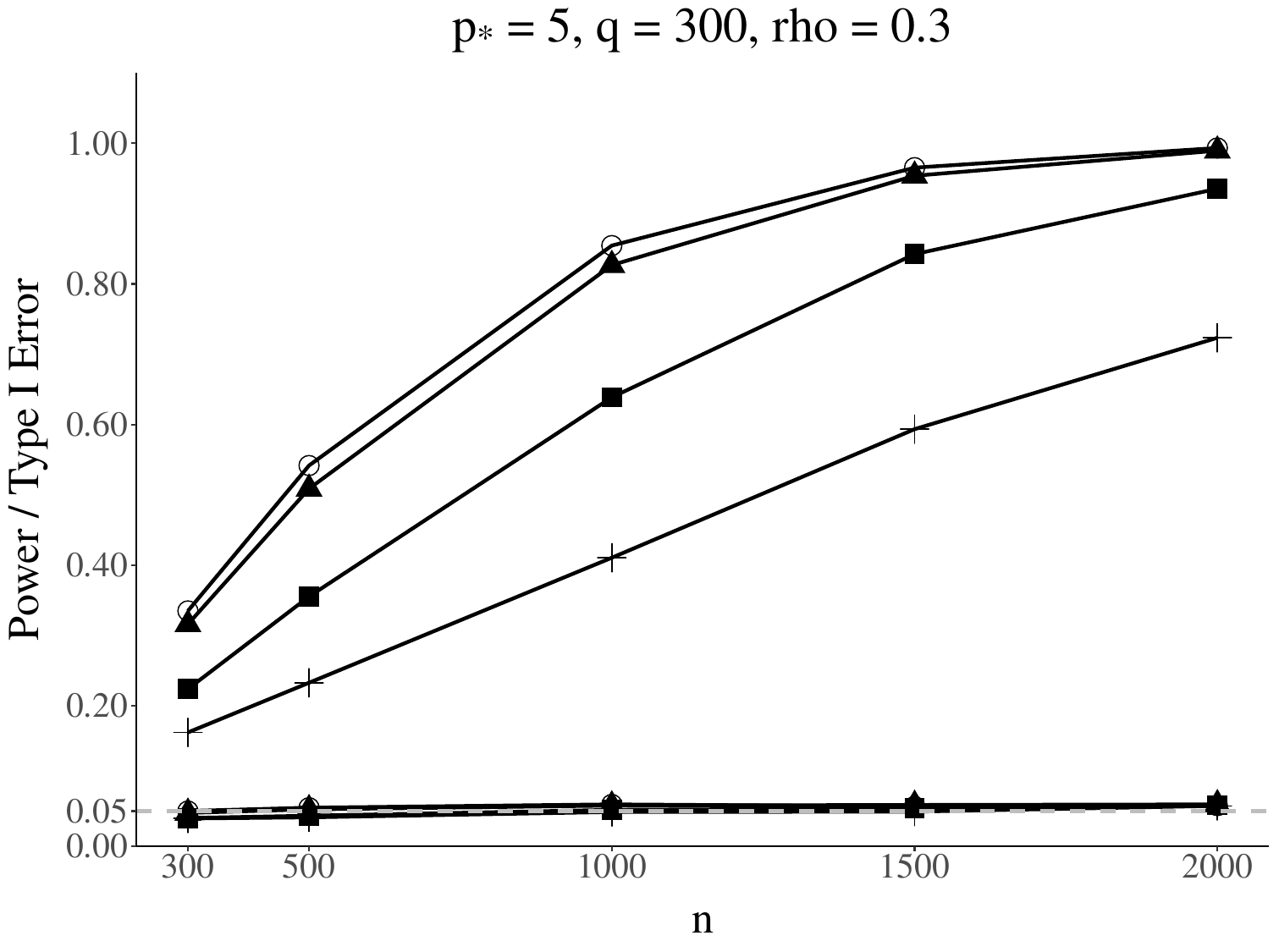}}
        \subfigure{
        \includegraphics[width=1.9in]{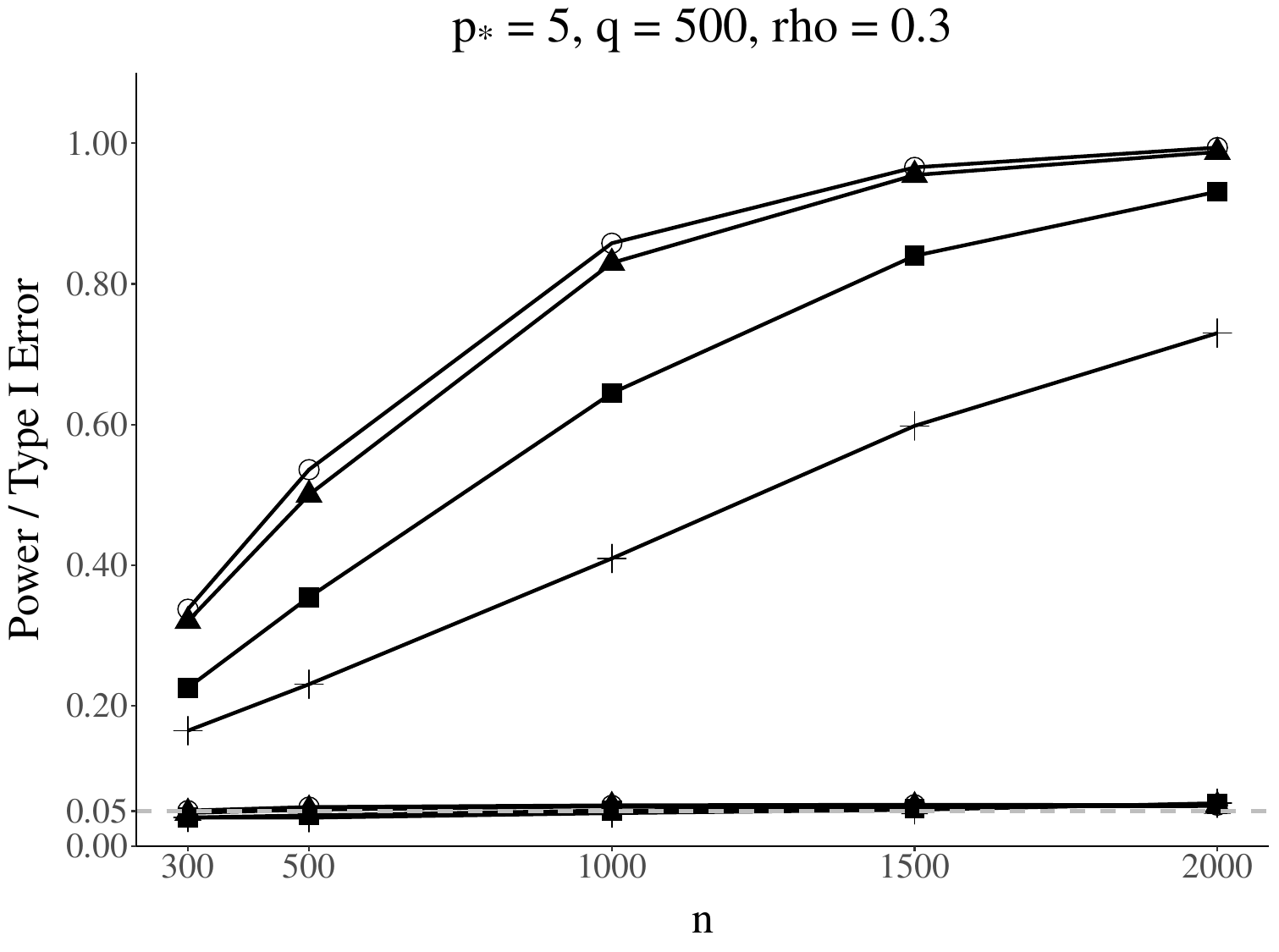}}
        \\
          \subfigure{
        \includegraphics[width=1.9in]{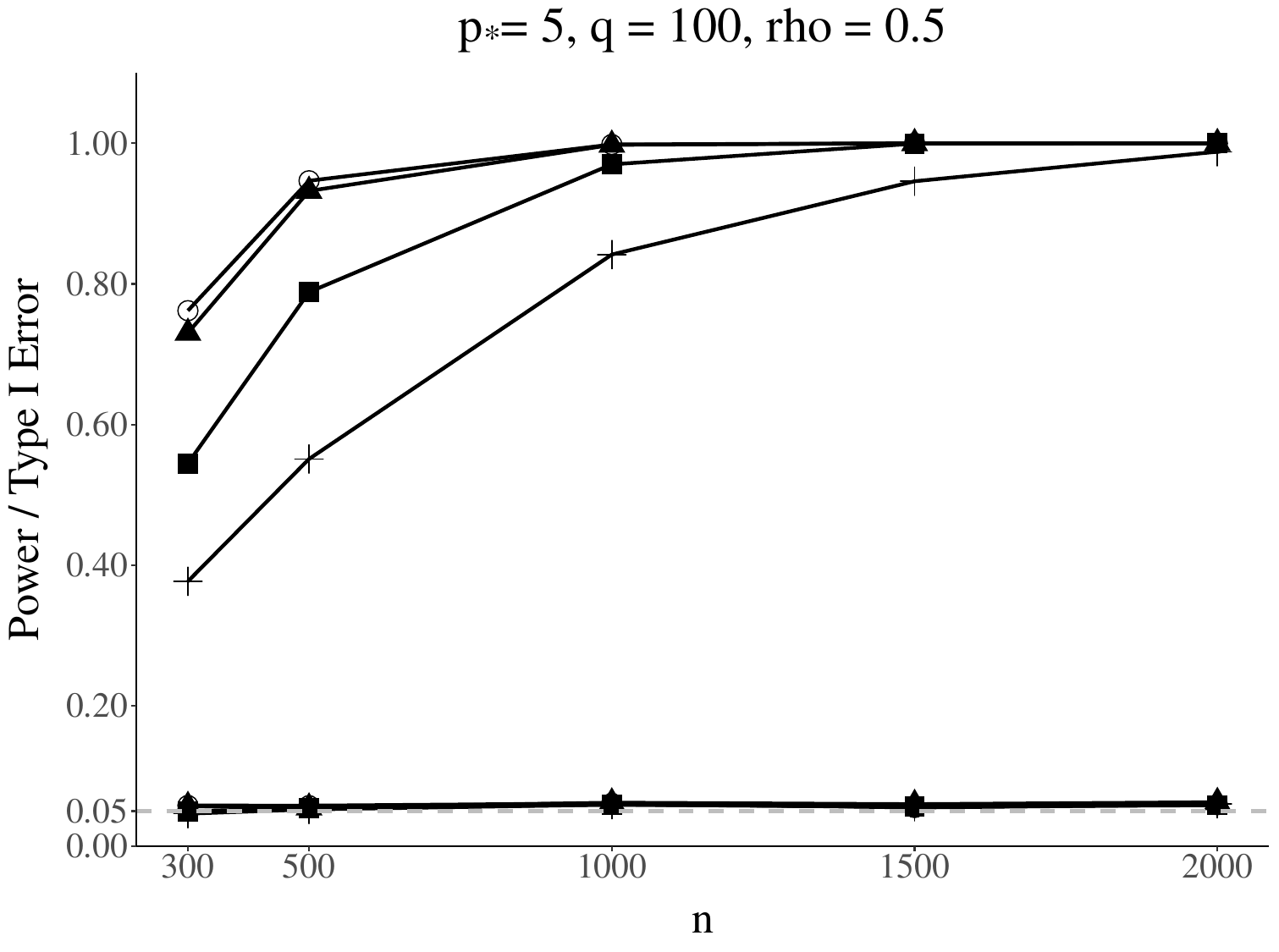}}%
          \subfigure{
        \includegraphics[width=1.9in]{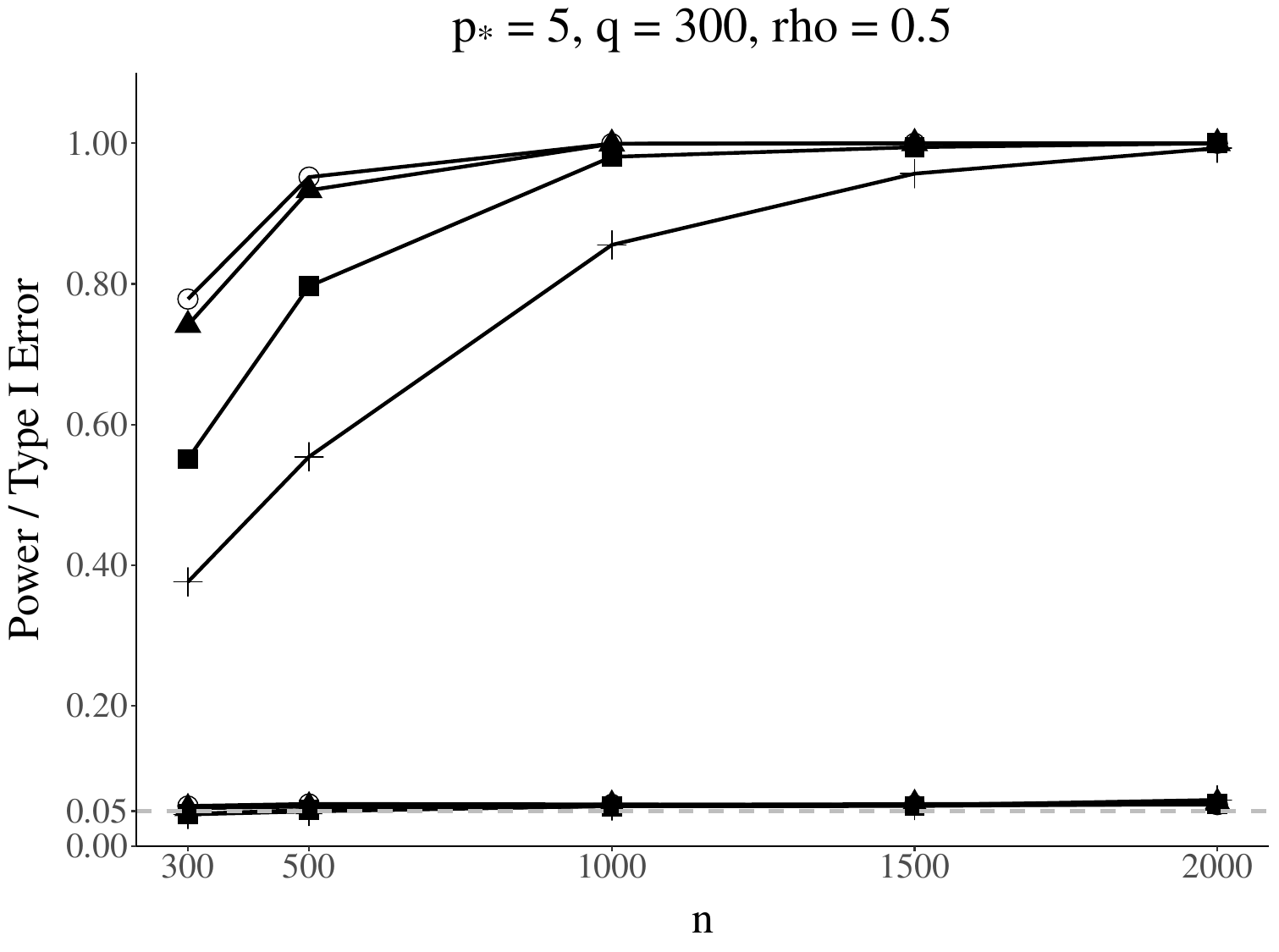}}%
        \subfigure{
        \includegraphics[width=1.9in]{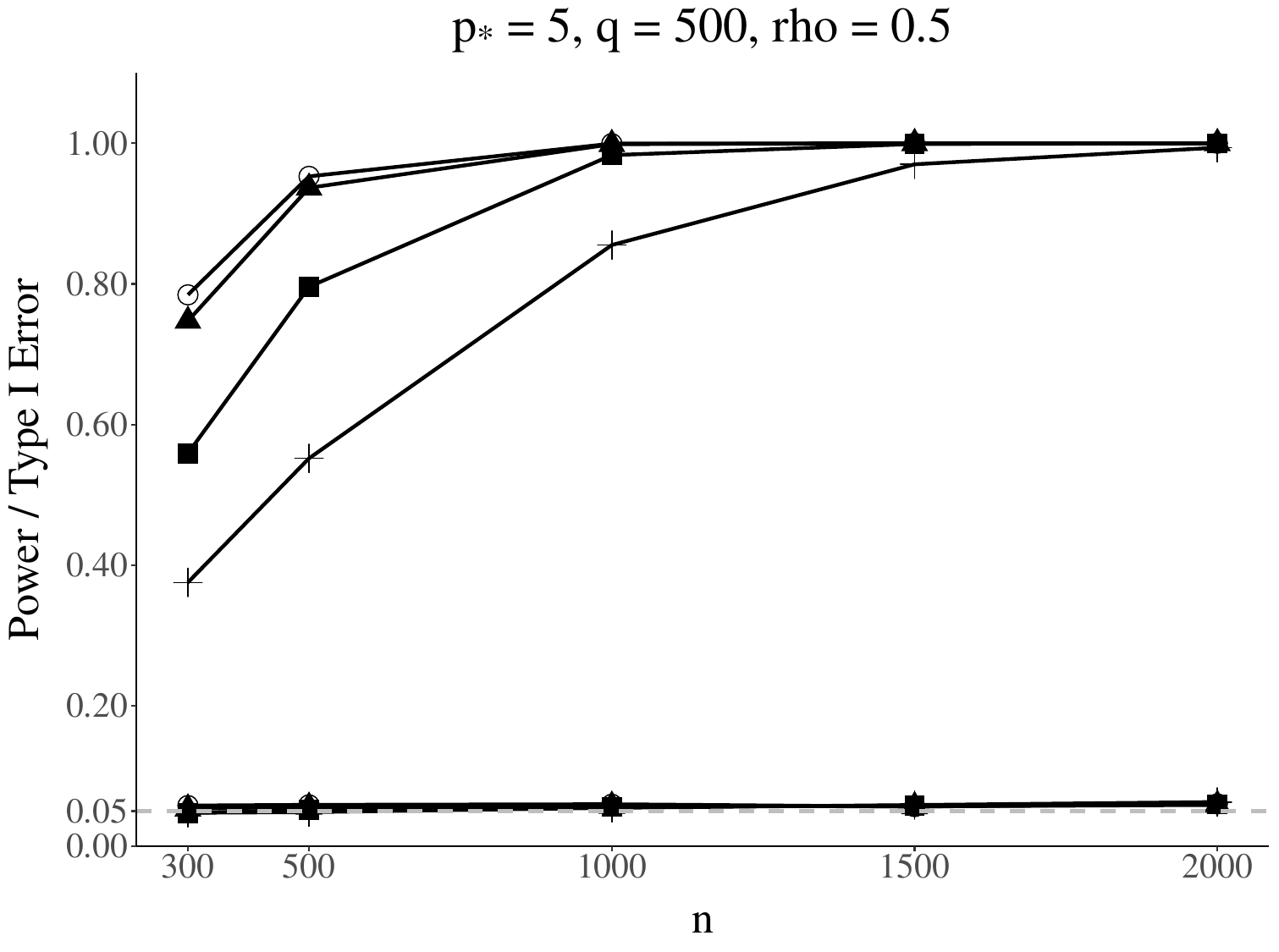}}%
    \caption{Powers and type I errors under dense setting at $p_*=5$. Circles (\protect\includegraphics[height=0.8em]{legend/new_rho0.png}) denote correlation parameter $\tau = 0$. Triangles (\protect\includegraphics[height=0.8em]{legend/rho0.2.png})  represent the case $\tau = 0.2$. Squares (\protect\includegraphics[height=0.8em]{legend/rho0.5.png}) indicate $\tau = 0.5$. Crosses (\protect\includegraphics[height=1em]{legend/rho0.7.png}) represent the $\tau = 0.7$.}
    \label{fig:non-anchor-correlated p5 dense}
\end{figure}

\begin{figure}[htbp]
\centering    
   \subfigure{
        \includegraphics[width=1.9in]{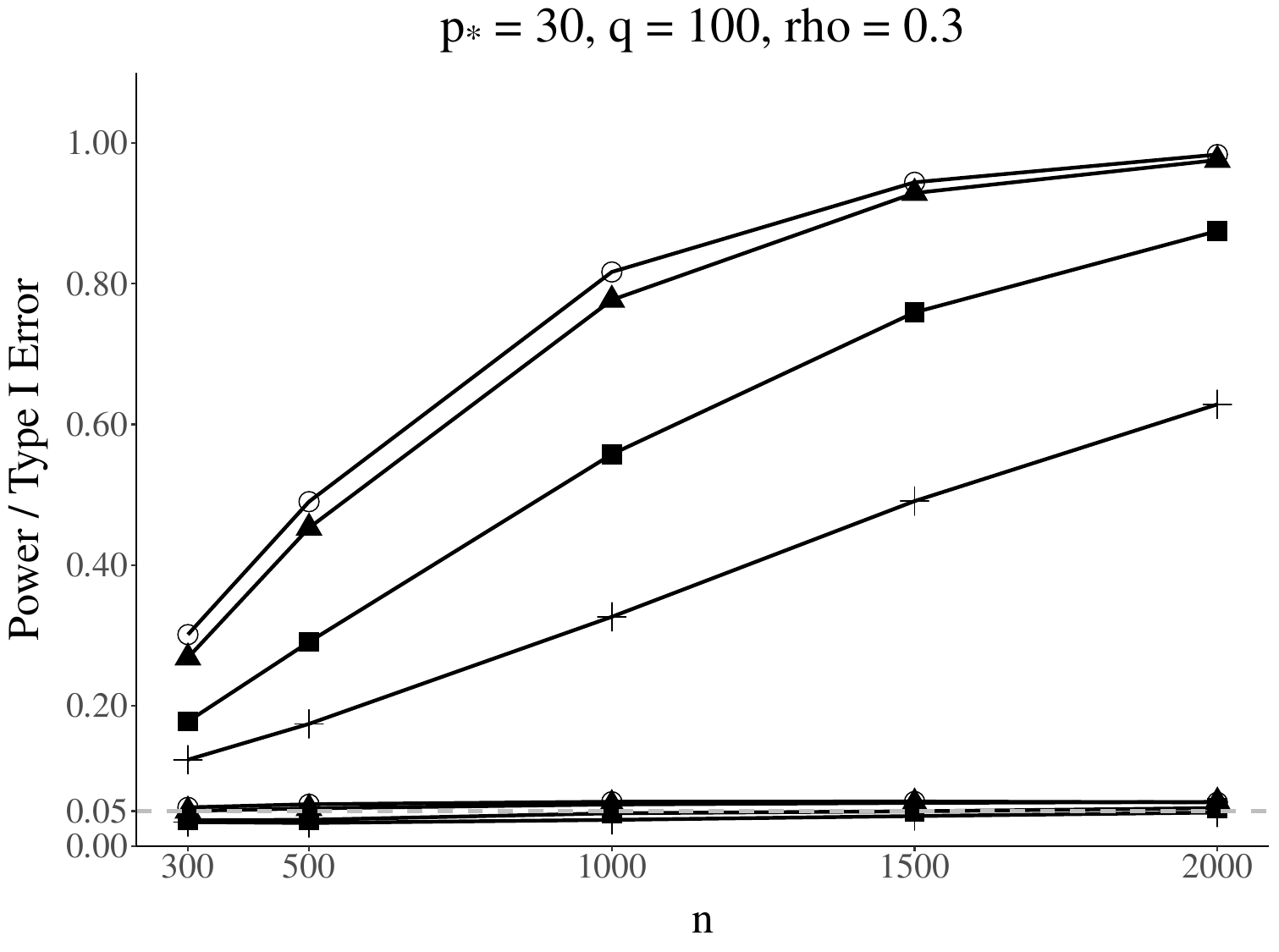}}
         \subfigure{
        \includegraphics[width=1.9in]{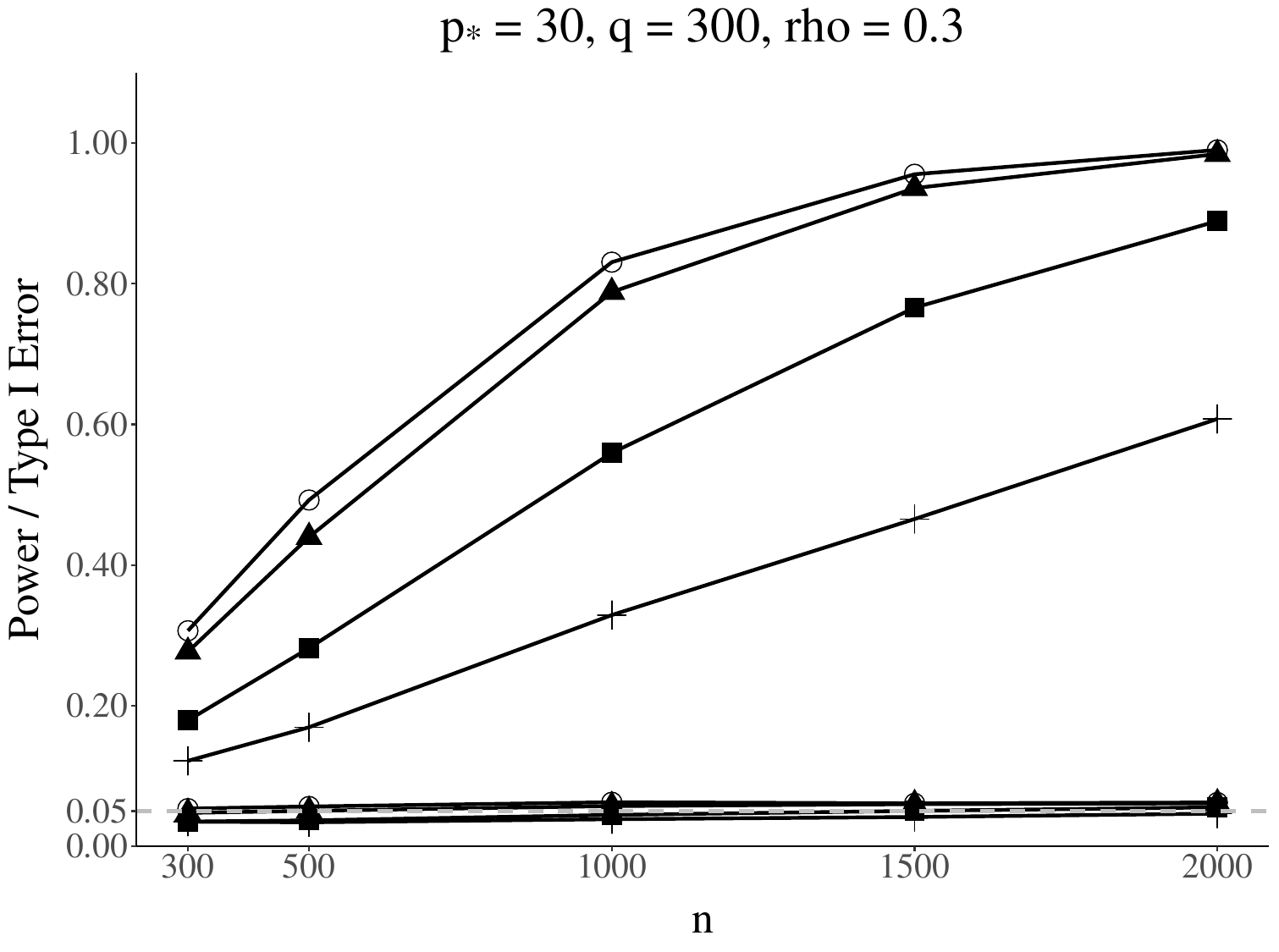}}
         \subfigure{
        \includegraphics[width=1.9in]{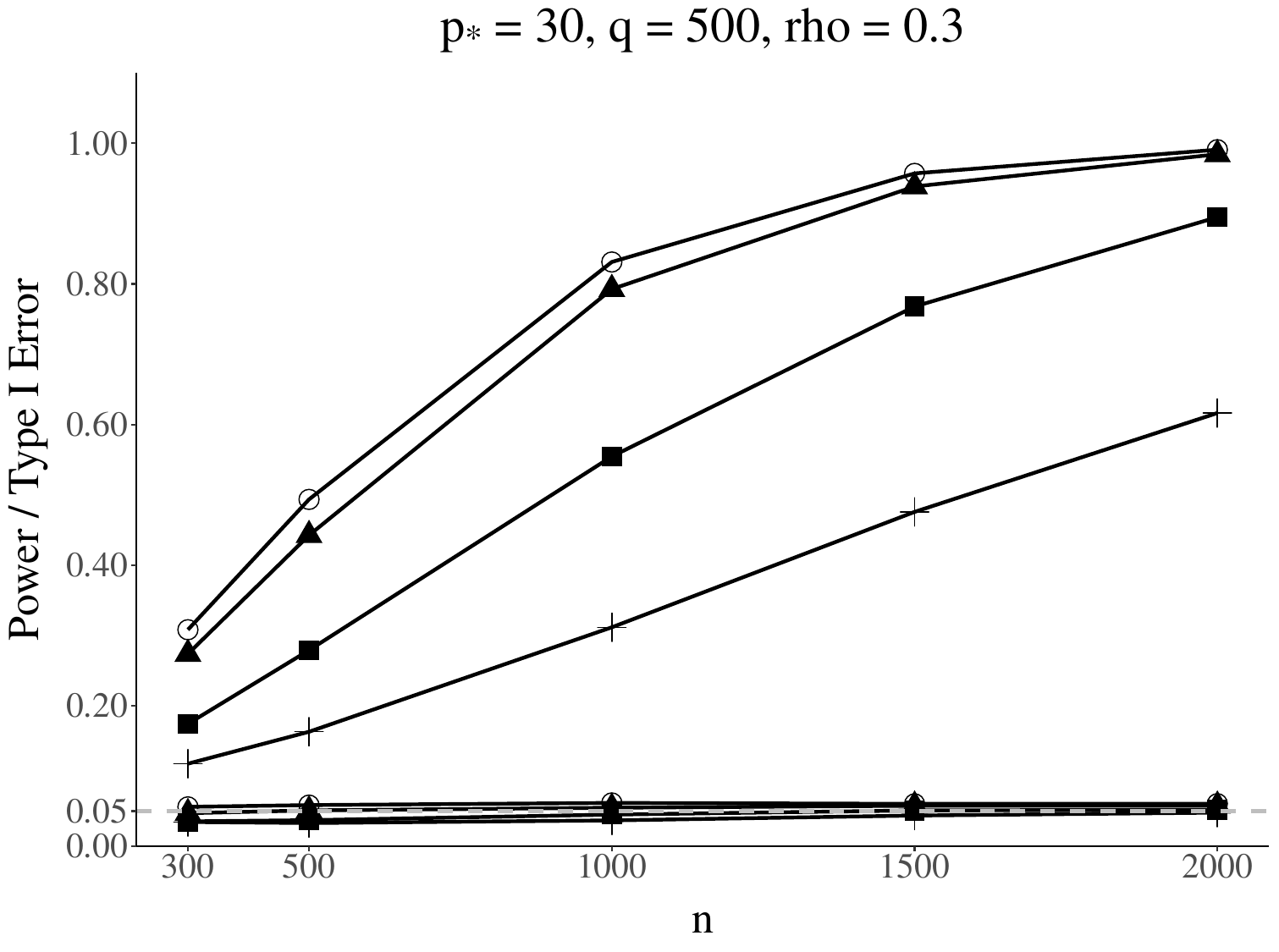}}
        \\
          \subfigure{
        \includegraphics[width=1.9in]{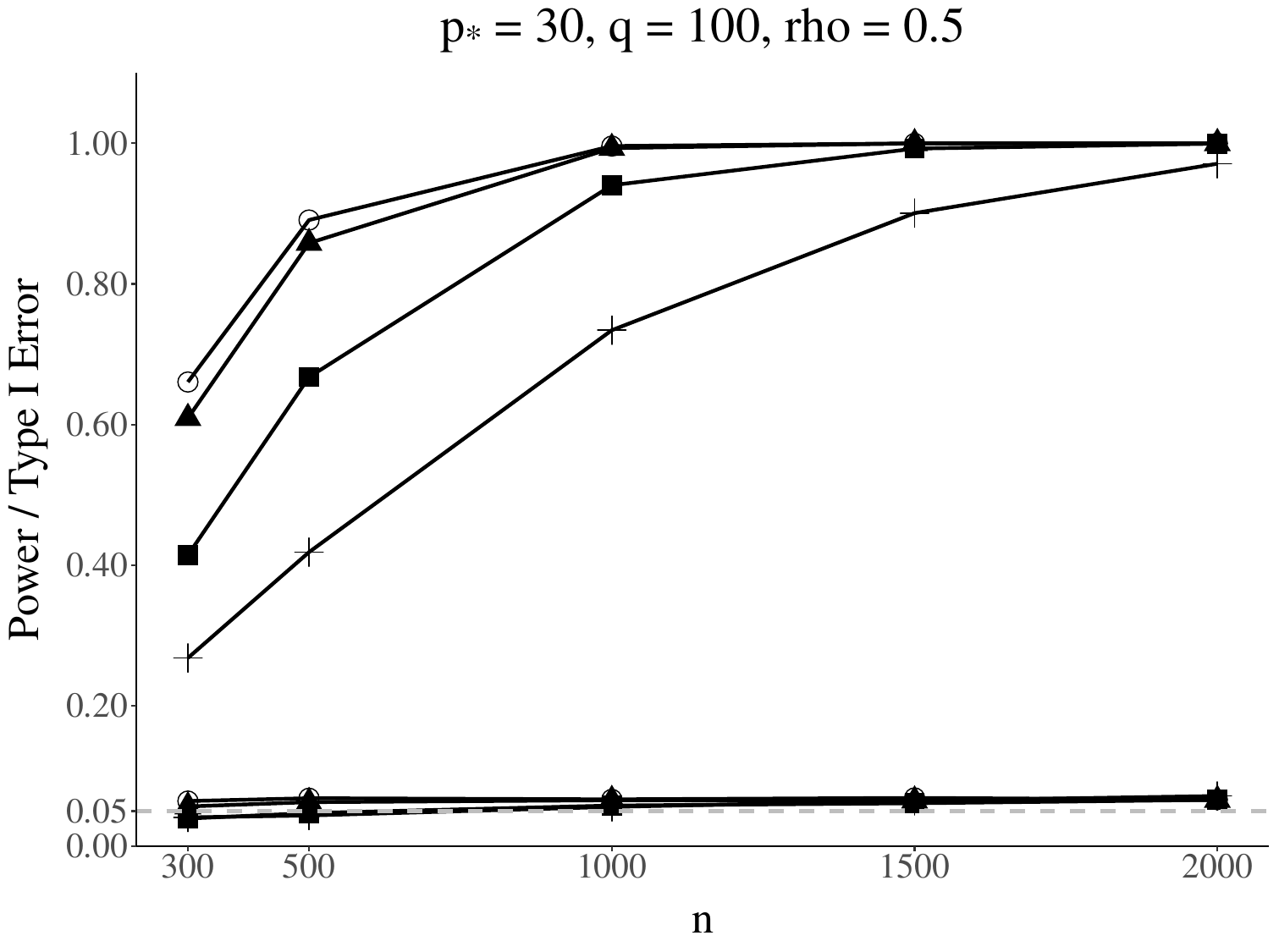}}%
          \subfigure{
        \includegraphics[width=1.9in]{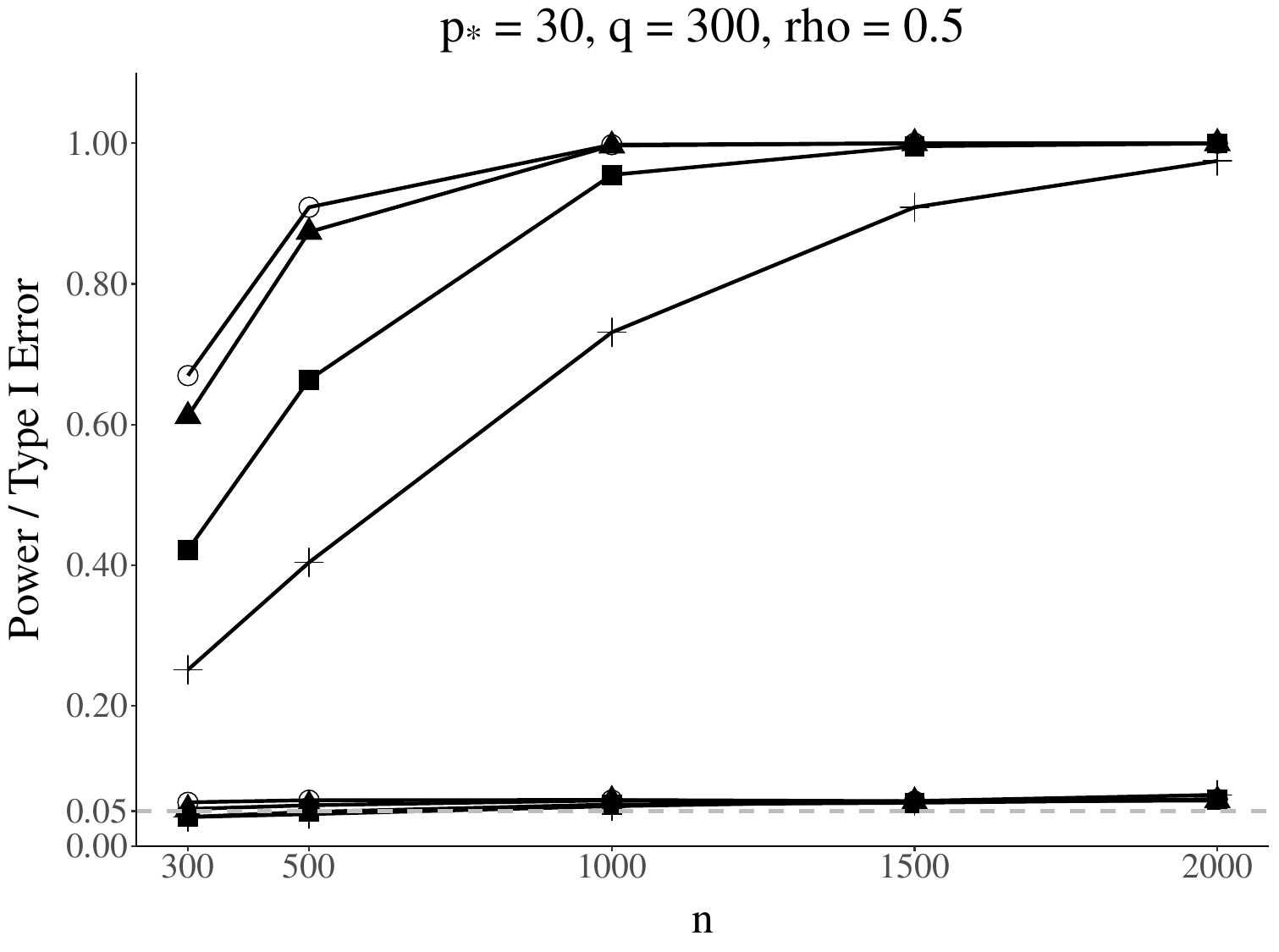}}%
          \subfigure{
        \includegraphics[width=1.9in]{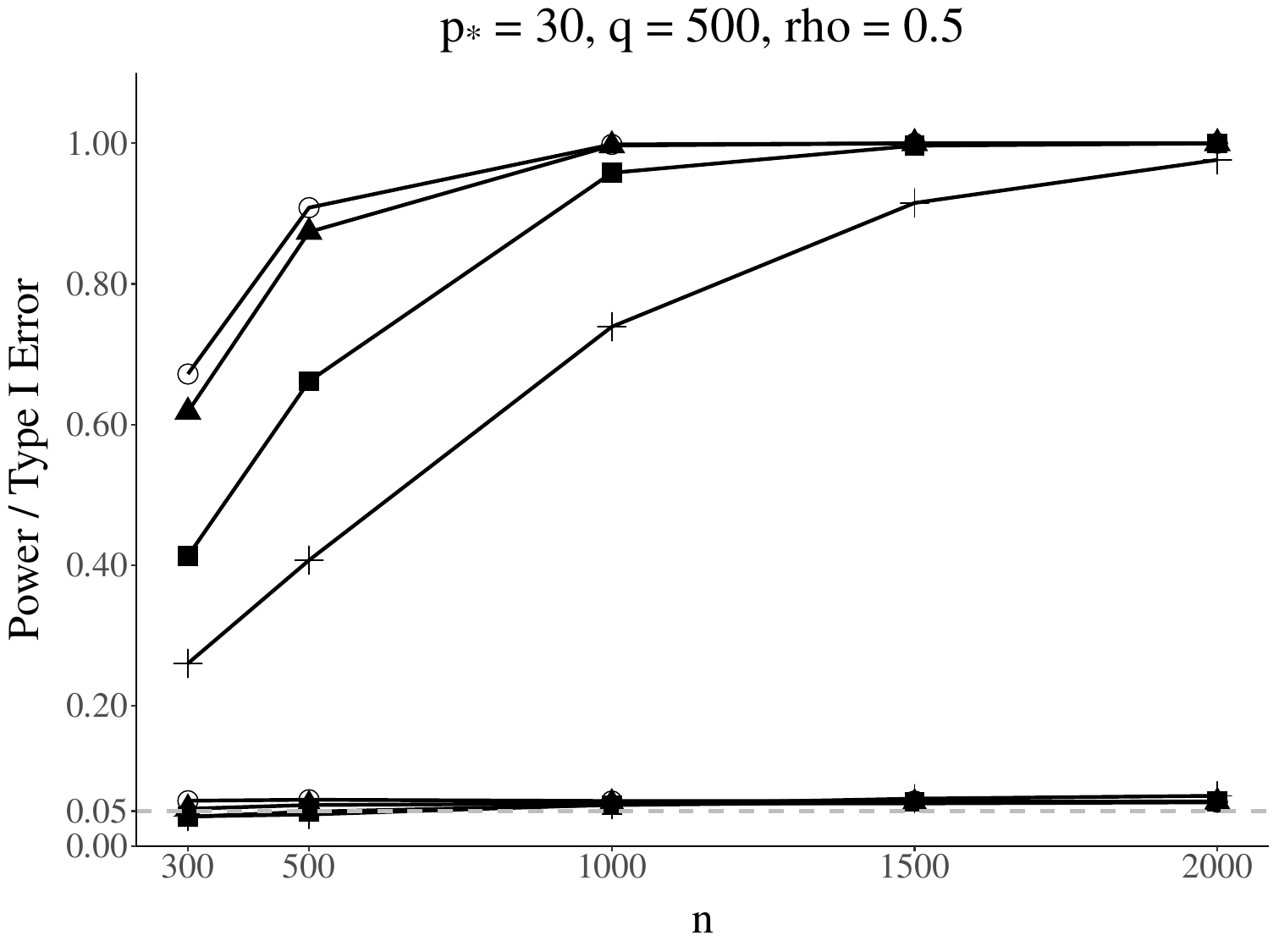}}
    \caption{Powers and type I errors under dense setting at $p_*=30$. Circles (\protect\includegraphics[height=0.8em]{legend/new_rho0.png}) denote correlation parameter $\tau = 0$. Triangles (\protect\includegraphics[height=0.8em]{legend/rho0.2.png})  represent the case $\tau = 0.2$. Squares (\protect\includegraphics[height=0.8em]{legend/rho0.5.png}) indicate $\tau = 0.5$. Crosses (\protect\includegraphics[height=1em]{legend/rho0.7.png}) represent the $\tau = 0.7$.}
    \label{fig:non-anchor-correlated p30 dense}
\end{figure}

For better empirical stability, after reaching convergence in the proposed  alternating minimization algorithm and transforming the obtained MLEs into ones that satisfy Conditions~\ref{cond:ID1} and~\ref{cond:ID2}, we repeat another round of maximization and transformation. 
We take the significance level at $5\%$ and calculate the averaged type I error based on each of the entries $\beta_{js}^* = 0$ and the averaged power for each of non-zero entries, over 100 replications.
We also include the empirical coverage probability of $\bU_i^*$ as an evaluation metric for inferential results of latent factors.
Specifically, we construct confidence intervals for each $U_{ik}$ for $i \in [n]$ and $k \in [K]$ and calculate the empirical coverage probabilities of these intervals on true parameter values $U_{ik}^*$ over 100 replications.
The averaged hypothesis testing results are presented in Figures~\ref{fig:non-anchor-correlated p5}--\ref{fig:non-anchor-correlated p30 dense} for $p_* = 5$ and $p_* = 30$, across different settings. 
Additional numerical results are provided in Section~F of the Supplementary Material.

From Figures~\ref{fig:non-anchor-correlated p5}--\ref{fig:non-anchor-correlated p30 dense}, we observe that the type I errors are well controlled at the significance level $5\%$, which is consistent with the asymptotic properties of $\hat{\Bb}^*$ in Theorem~\ref{thm:asymptotic normality post-transformation beta}. 
Moreover, the power increases to one as the sample size $n$ increases across all of the settings we consider. 
Comparing the upper row $(\rho = 0.3)$ to the bottom row $(\rho = 0.5)$ in Figures~3--6, we see that the power increases as we increase the signal strength $\rho$. 
Comparing the plots in Figures~\ref{fig:non-anchor-correlated p5}--\ref{fig:non-anchor-correlated p30} to the corresponding plots in Figures~\ref{fig:non-anchor-correlated p5 dense}--\ref{fig:non-anchor-correlated p30 dense}, we see that the powers under the sparse setting (Figures~\ref{fig:non-anchor-correlated p5}--\ref{fig:non-anchor-correlated p30}) are generally higher than that of the dense setting (Figures~\ref{fig:non-anchor-correlated p5 dense}--\ref{fig:non-anchor-correlated p30 dense}).  
Nonetheless, our proposed method is generally stable under both sparse and dense settings. 
In addition, we observe similar results when we increase the
covariate dimension $p_*$ from $p_*=5$ (Figures~\ref{fig:non-anchor-correlated p5} and~\ref{fig:non-anchor-correlated p5 dense}) to $p_* = 30$ (Figures~\ref{fig:non-anchor-correlated p30} and~\ref{fig:non-anchor-correlated p30 dense}).
We refer the reader to the Supplementary Material for additional numerical results for $p_*=10$.   
Moreover, we observe similar results when we increase the test length $q$ from $q=100$ (left panel) to $q=500$ (right panel) in 
Figures~\ref{fig:non-anchor-correlated p5}--\ref{fig:non-anchor-correlated p30 dense}. 
In terms of the correlation between $\Xb$ and $\Ub^*$, we observe that while the power converges to one as we increase the sample size, the power decreases as the correlation $\tau$ increases.

\section{Data Application}
\label{sec:data application}
We apply our proposed method to analyze the Programme for International Student Assessment (PISA) 2018 data\footnote{The data can be downloaded from: https://www.oecd.org/pisa/data/2018database/}. In this study, we focus on PISA 2018 data from Taipei. 
The observed responses are binary, indicating whether students' responses to the test items are correct, and we use the popular item response theory model with the logit link \citep[i.e., logistic latent factor model;][]{reckase2009}. Due to the block design nature of the large-scale assessment, each student was only assigned to a subset of the test items, and for the Taipei data, $86\%$ response matrix is unobserved. Note that this missingness can be considered as conditionally independent of the responses given the students' characteristics. 
 Our theoretical results can be extended to accommodate missing data. Under commonly studied missing patterns, such as when the missingness status indicators are independently and identically distributed Bernoulli random variables~\citep{davenport20141}, follow non-uniform distributions~\citep{cai2013max}, or follow a flexible missing-entry scheme that generalizes beyond random sampling scheme~\citep{chen2023statistical}, our results can be easily extended to the joint maximum likelihood estimation in the presence of missing data. 
Specifically, we can modify the 
joint log-likelihood function in~\eqref{eq:log likelihood} into $L^{obs}(\Yb \mid \bGamma, \Ub, \Bb,\Xb) = \sum_{i=1}^n \sum_{j \in \cQ_i} l_{ij}(\bgamma_j^{\intercal}\bU_i + \bbeta_j^{\intercal}\bX_i)$,
where $\cQ_i$ defines the set of questions to which the responses from student $i$ are observed. It can then be verified that our consistency results in Theorem 1 still hold under some regularity conditions and asymptotic normality in Theorem 2 and can also be established, with appropriate modifications to parameter definitions. Our real data analysis is carried out with certain adjustments to accommodate missing data. Detailed discussion can be found in Section G of the Supplementary Materials.



In this study, we include gender and 8 variables for school strata as covariates $(p_*=9)$.  The detailed description of school strata variables is provided in Supplementary Materials. These variables record whether the school is public, in a rural place, etc. In data preprocessing, we focus on data from Taipei as a representative sample. Items with binary scores are selected, while those with more than two response categories are excluded. We also retain only students who answered at least 10 questions. This screening process produces a final sample of $n=6063$ students and $q = 194$ items. 
In practice, the number of latent factors is typically pre-specified based on domain expertise or prior analysis~\citep{brown2012confirmatory, chen2025item}. Specifically, PISA 2018 evaluates student performance via three domains: mathematics, science, and reading. Accordingly, following existing literature~\citep{schleicher2019pisa, pisatechnicalreport2018}, we pre-specify the number of latent factors to be $K = 3$ to align with the three underlying abilities targeted in these domains. For the application to PISA 2018 dataset, the choice of $K=3$ carries the scientific meaning to directly correspond to each of latent abilities in this assessment.

We apply the proposed method to estimate the effects of gender and school strata variables on students' responses. We obtain the estimators of the gender effect for each PISA question and construct the corresponding $95\%$ confidence intervals.
The constructed $95\%$ confidence intervals for the gender coefficients are presented in Figure~\ref{fig:TAP gender bias}. There are 7 questions highlighted in dark as their estimated gender effect is statistically significant after the Bonferroni correction.
Among the reading items, there are two significant items and the corresponding confidence intervals are below zero, indicating that these questions are biased towards female test-takers, conditioning on the students' latent abilities.
Most of the confidence intervals corresponding to the biased items in the math and science sections are above zero, indicating that
these questions are biased towards male test-takers.
In social science research, it is documented that female students typically score better than male students during reading tests, while male students often outperform female students during math and science tests~\citep{quinn2015science, Balart2019females}. Our results indicate that there may exist potential measurement biases resulting in such an observed  gender gap in educational testing. Our proposed method offers a useful tool to identify such biased test items, thereby contributing to enhancing testing fairness by providing practitioners with valuable information for item calibration.

\begin{figure}[h]
    \centering
    \includegraphics[scale=0.56]
    {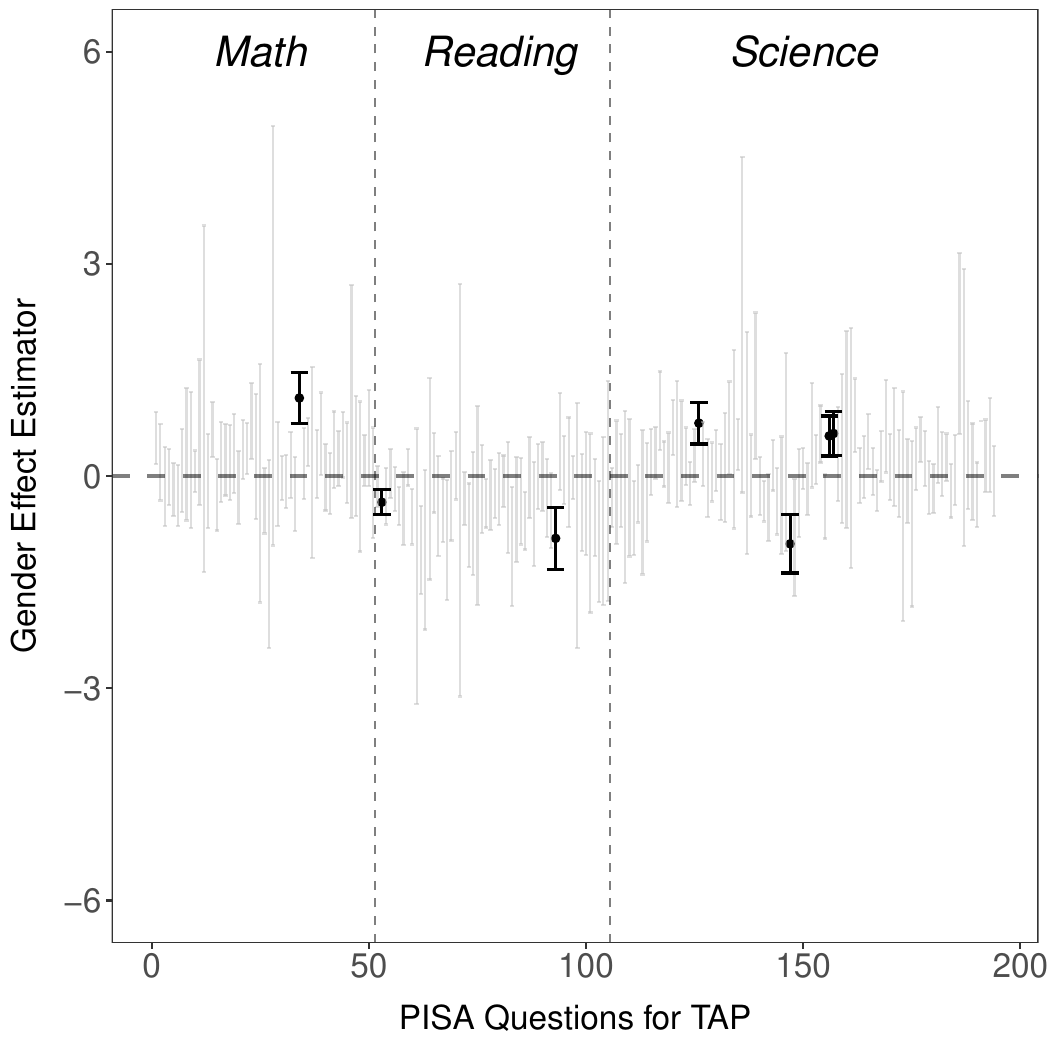}
    \caption{Confidence intervals for the effect of gender covariate on each PISA question using Taipei data. Dark intervals correspond to confidence intervals for questions with significant gender bias after Bonferroni correction. (For illustration purposes, we omit the confidence intervals with the upper bounds exceeding 6 and the lower bounds below -6 in this figure).}
    \label{fig:TAP gender bias}
\end{figure}


 \begin{table}[!h]
\begin{center}
\begin{tabular}{c|c|c|c|c}
\hline
Item code &  Item Title & Female ($\%$)  & Male ($\%$) & p-value \\  \hline
\multicolumn{5}{c}{\em Mathematics}   \\  \hline
 CM915Q01S & Carbon Tax &  55.93 &   62.05 & <1$\times 10^{-8}\; (+)$  \\  \hline

\multicolumn{5}{c}{\em Reading}   \\  \hline
  CR424Q03S &Fair trade&  65.88 &  58.16 & 3.96$\times 10^{-5}\; (-)$ \\  \hline
  CR466Q06S &Work right3&  91.91 &  86.02 & 8.23$\times 10^{-5}\; (-)$ \\  \hline

\multicolumn{5}{c}{\em Science}   \\  \hline
CS626Q01S &Sounds in Marine Habitats& 31.54 & 47.11 & 6.10$\times 10^{-7}\; (+)$
\\  \hline 
CS602Q04S &Urban Heat Island Effect& 79.53 & 72.93 & 5.49$\times 10^{-6}\; (-)$
\\  \hline 
CS527Q03S &Extinction of dinosours2& 59.14 & 70.81 & 8.38$\times 10^{-5}\; (+)$
\\  \hline 
CS527Q04S& Extinction of Dinosours3& 36.19 & 50.18 & 1.28$\times 10^{-4}\; (+)$
\\  \hline 
\end{tabular} 
\caption{Proportion of full credit in females and males to significant items of PISA2018 in Taipei. $(+)$ and $(-)$ denote the items with positively and negatively estimated gender effects, respectively.}
\label{table:gender gap}
\end{center}
\end{table}

 To further illustrate the estimation results, Table~\ref{table:gender gap} lists the $p$-values for testing the gender effect for each of the identified 7 significant questions, along with
 the proportions of female and male test-takers who answered each question correctly.
We can see that the signs of the estimated gender effect by our proposed method align with the disparities in the reported proportions between females and males.
For example, 
the estimated gender effect corresponding to the item ``CM915Q01S Carbon Tax'' is positive with a $p$-value of $<1\times 10^{-8}$, implying that this question is statistically significantly biased towards male test-takers. This is consistent with the observation that in Table~\ref{table:gender gap}, $62.05\%$ of male students correctly answered this question, which exceeds the proportion of females,  $55.93\%$.

 Based on the estimation and inference results of the individual effects $\beta_{js}$, we can also conduct group-wise testing of an overall covariate effect within a certain group of interest. For example, to test whether there is a group-wise gender bias on mathematics items, we consider the null hypothesis $H_0: \beta_{js}^* = 0, \forall j \in \cQ_{math}$. Thanks to the asymptotic results, we can use a chi-square type test statistics: $\sum_{j\in \cQ_{math}} \{(\hat{\beta}_{js}^*
            -0)/{s.e.(\hat{\beta}_{js}^*)}\}^2 = 117.32$, yielding a p-value of $2.46\times 10^{-7} < 0.05$, indicating a significant overall gender bias in the math category of items.
       Similarly, the group-wise tests are significant for testing the overall gender effect on all reading items and all science items with test statistics $151.05$ (p-value $=3.41\times 10^{-9}$) and $237.32$ (p-value $<1\times 10^{-10}$), respectively. 
These results are consistent with the individual tests, which indicate that for each category, there is at least one item that is gender-biased.       

Besides gender effects, we estimate the effects of school strata on the students' response and present part of the
 point and interval estimation results in the left panel of Figure~\ref{fig:TAP stratum1 bias} for illustrative purpose, and leave the complete results in Section G of Supplementary Materials. All the detected biased questions are from math and science sections, with 9 questions for significant effects of whether attending public school and 3 questions for whether residing in rural areas.
  To further investigate the importance of controlling for the latent ability factors, we compare results from our proposed method with the latent factors, to the results from directly regressing responses on covariates without latent factors.
From the right panel of Figure~\ref{fig:TAP stratum1 bias}, we see that without conditioning on the latent factors, there are much more items detected for the covariate of whether the school is public. On the other hand, there are no biased items detected if we only apply generalized linear regression to estimate the effect of the covariate of whether the school is in rural areas.


\begin{figure}[h!]
    \centering
    \subfigure{
        \includegraphics[width=2in]{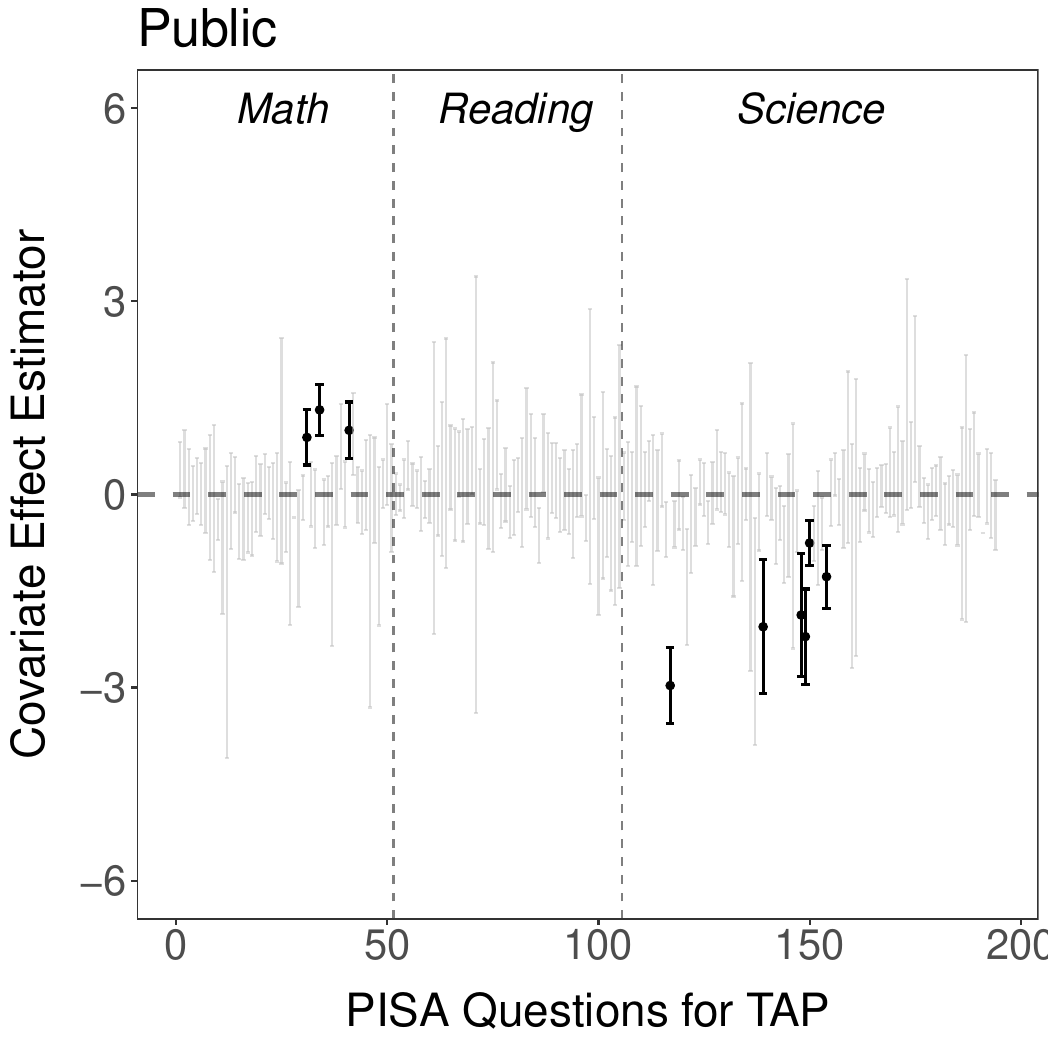}}
           \qquad 
           \subfigure{
        \includegraphics[width=2.2in]{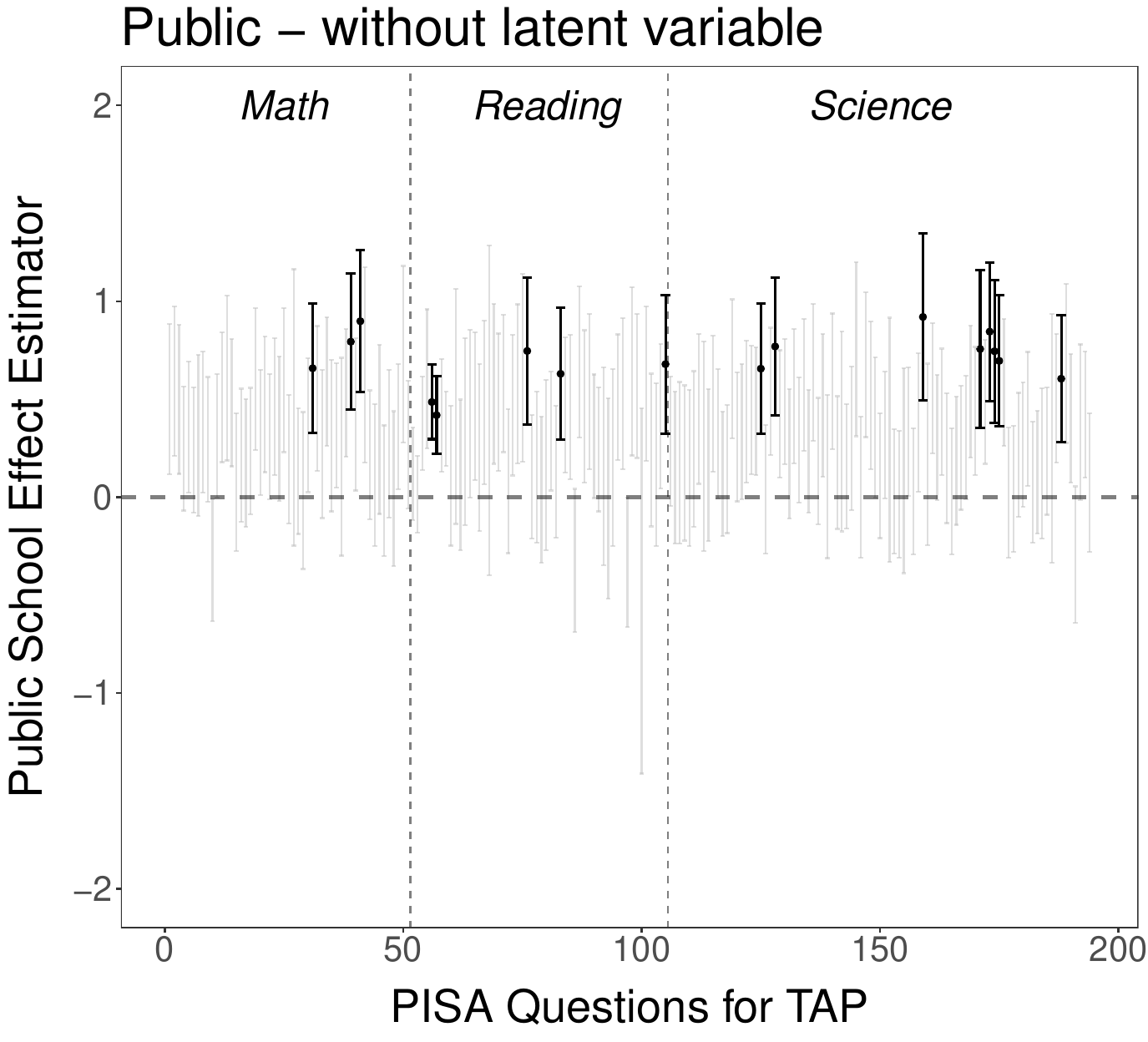}}
\\
    \subfigure{
        \includegraphics[width=2in]{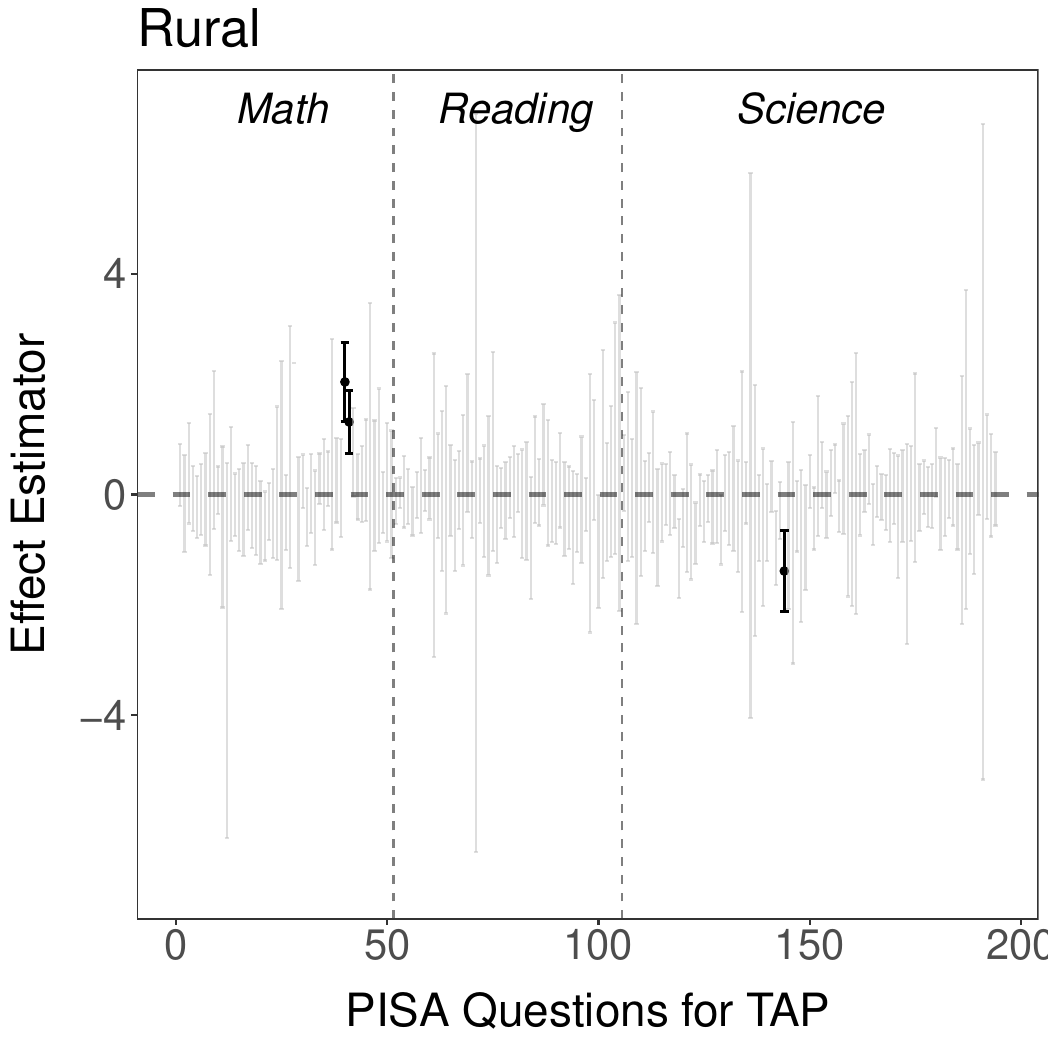}}
        \qquad  
        \subfigure{
        \includegraphics[width=2.2in]{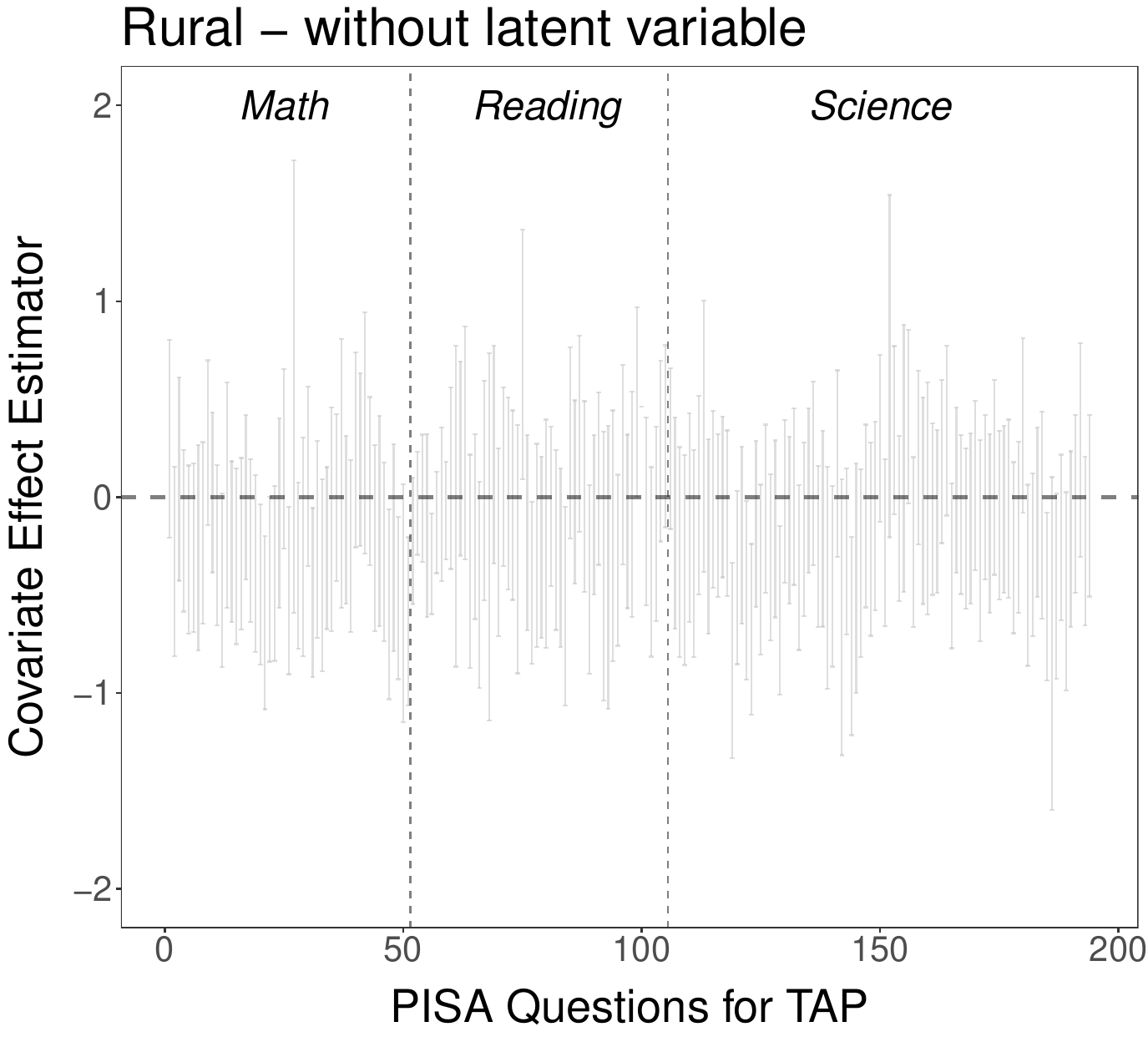}}
        \\
    \caption{Confidence intervals for the effect of part of school stratum covariate on each PISA question. Dark intervals correspond to confidence intervals for questions with significant school stratum bias after Bonferroni correction. 
    }
    \label{fig:TAP stratum1 bias}
\end{figure}

\section{Discussion}
\label{sec:discussion}

In this work, we study the covariate-adjusted generalized factor model that has wide interdisciplinary applications such as educational assessments and psychological measurements. 
In particular, new identifiability issues arise due to the incorporation of covariates in the model setup. To address the issues and identify the model parameters, we propose novel and interpretable conditions that are crucial for developing the estimation approach and inference results. With model identifiability guaranteed, we consider a joint-likelihood-based estimation method for model parameters.
Theoretically, we obtain the estimation consistency and asymptotic normality for not only the covariate effects but also latent factors and factor loadings. 

In our work, the covariate-adjusted generalized factor model is used within a confirmatory factor analysis framework, where the number of factors is usually pre-specified based on theoretical considerations or prior research~\citep{brown2012confirmatory, chen2025item}.
Specifically, in our data application to PISA 2018, the number of latent factors is pre-specified as $K = 3$, corresponding to the three latent abilities assessed by PISA 2018: mathematics, science, and reading~\citep{schleicher2019pisa, pisatechnicalreport2018}. In other datasets where prior knowledge about the factor structure may be limited, exploratory data analytic tools may be applied to determine the number of latent factors, including scree plots~\citep{cattell1966scree}, parallel analysis~\citep{horn1965}, the eigenvalue ratio method~\citep{kaiser1960application, lam2012factor}, and information-based criteria such as AIC~\citep{akaike1987factor}, BIC~\citep{schwarz1978estimating}, and other likelihood-based information criteria~\citep{chen2022determining}. 

 There are several future directions motivated by the proposed method. 
 In this manuscript, we focus on the case in which $p$ grows at a slower rate than the number of subjects $n$ and the number of items $q$, a common setting in educational assessments.  
It is interesting to further develop estimation and inference results under the high-dimensional setting in which $p$ is larger than $n$ and $q$. 
Moreover, in this manuscript, we assume that the dimension of the latent factors $K$ is fixed and known. 
One possible generalization is to allow $K$ to grow with $n$ and $q$.  
Intuitively, an increasing latent dimension $K$ makes the identifiability and inference issues more challenging due to the increasing degree of freedom of the transformation matrix. 
With the theoretical results in this work, another interesting related problem is to further develop simultaneous inference on group-wise covariate coefficients over all $q$ items, which we leave for future investigation. 
Finally, the model setup considered in this manuscript corresponds to the uniform DIF setting in psychometrics, where the effects of observed covariates on item responses are invariant across the latent factors~\citep{holland2012differential}. Besides the uniform DIF setting, the non-uniform DIF setting, where the covariate effects may vary across the latent factors, also enjoys wide applications in educational assessments~\citep{holland2012differential, wang2023using}. For instance, the advantage of certain demographic groups on exam questions can change along the continuum of the assessed latent skills and abilities~\citep{wang2023using}.
In the future, it is also an interesting direction to extend our considered model setup to accommodate the non-uniform DIF setting and develop the estimation approach and theoretical results under such generalized models. Moreover, in this paper, we focus on joint maximum likelihood estimation because of its computational efficiency. Nonetheless, it would also be interesting to explore marginal marginal maximum likelihood estimation, which treats latent factors as random effects different from joint MLE, where latent factors are treated as fixed effects.  Statistical inference for marginal MLE in the double asymptotic regime $n, q \rightarrow \infty$ remains an open yet important problem and we leave this interesting problem for future investigation.

\section{Significance Statement}\label{sec_sigf_state}


This paper addresses an important applied problem of detecting biased items in large-scale educational assessments. A test item is considered biased if students with the same level of latent ability (e.g., reading or math proficiency) but with different individual characteristics (e.g., gender, race, or other demographics) have different response functions to it. We develop a novel statistical framework to detect biased items in large-scale assessments.  The framework is developed under a general setting that accommodates different types of responses and enables efficient estimation in large-scale setups. We establish statistical guarantees for the estimated model parameters and provide uncertainty quantification for the detection of biased items. In addition, our results enable simultaneous statistical inference for both latent abilities and item parameters.
Empirically, we apply the proposed method to Programme for International Student Assessment (PISA) 2018, a large-scale educational assessment dataset, and demonstrate that our method detects biased test items and provides valid inference results. These findings can assist researchers and practitioners in identifying and revising biased items, and support policymakers in designing fair assessments and promoting equity in education systems.
\\


\begin{acks}[Acknowledgments]
The authors would like to thank the four anonymous referees, an Associate Editor, and the Editor for their constructive comments that improved the
quality of this paper.
\end{acks}

\begin{funding}
Ouyang is partially supported by Hong Kong Early Career Scheme (ECS) Grant \#27308125 and Seed Fund for Basic Research \#2401102046.
Xu is partially supported by NSF SES-2150601 and SES-1846747.
\end{funding}


\begin{center}
{SUPPLEMENTARY MATERIAL}
\end{center}

The Supplementary Materials contain the proofs of the theorems and additional numerical study results. The R code for the simulation studies and data application is available at \url{https://github.com/jingoystat/Covariate_Adjusted_Generalized_Factor_Analysis/tree/main}.


\bibliographystyle{imsart-nameyear} 
\bibliography{arxiv}       


\newpage 
\begin{appendix}
\renewcommand{\theequation}{A\arabic{equation}}
\renewcommand{\thefigure}{S\arabic{figure}} 
\renewcommand{\thetable}{S\arabic{table}} 

\setcounter{page}{1}    

\begin{center}
    {\Large \bf Supplementary Material for ``Statistical Inference for Covariate-Adjusted and Interpretable Generalized Latent Factor Model with Application to Testing Fairness''}

\end{center}




\vspace{0.2in}


This Supplementary Material provides proofs of the theoretical results in the main text and additional simulation results.
It is organized as follows. Section~\ref{sec:reg assumptions} presents the detailed technical assumptions for establishing the theoretical results. 
Section~\ref{sec:supp prove mainprop} proves the propositions in the main text. 
Section~\ref{sec:prove main} contains the proofs of the main theorems and Corollary~1. 
Sections~\ref{sec:proveadd_prop} presents the proofs of lemmas used in Section~\ref{sec:prove main}. Section~\ref{sec:prove_other} provides the proofs of other technical lemmas. Finally, Section~\ref{sec:a_simula} includes additional simulation results.

We begin by introducing notations and expressions used throughout the subsequent theoretical proofs. For any integer $N$, let $[N]=\{1,\cdots,N\}$. For any vector $\br = (r_1, \dots, r_l)^{\T}$, let $\| \br\|_0  = \text{card} (\{j: r_j \neq 0 \})$,  $\|\br \|_{\infty}= \max_{j = 1, \ldots, l} |r_j|$, and $\|\br\|_q = (\sum_{j=1}^l |r_j|^q)^{1/q}$ for $q \geq 1$.  We define $\bm{1}_x^{(y)}$ as the $y$-dimensional vector with $x$-th entry to be 1 and all other entries to be 0.
For any symmetric matrix $\Mb$, let $\lambda_{\min}(\Mb)$ and $\lambda_{\max}(\Mb)$ be the smallest and largest eigenvalues of $\Mb$. 
For any matrix $\Ab = (a_{ij})_{n\times l}$, let $\|\Ab\|_{1} = \max_{j=1,\ldots, l} $ $\sum_{i = 1}^n |a_{ij} |$ be the maximum absolute column sum, $\| \Ab\|_{\infty} = \max_{i=1,\ldots, n} \sum_{j=1}^l |a_{ij}| $ be the maximum of the absolute row sum, and $\| \Ab\|_{\max} = \max_{i,j} |a_{ij}|$ be the maximum of the absolute matrix entry. Let $\Ab_v = \text{vec}(\Ab) \in \RR^{nl}$ to indicate the vectorized form of matrix $\Ab\in \RR^{n\times l}$.
For subsets $S_1 \subseteq [n]$ and $S_2\subseteq [l]$, we denote $[\Ab]_{[S_1,S_2]}$ the sub-matrix of $\Ab$ with entries in rows indexed by $S_1$ and columns indexed by $S_2$. When $S_1=[n]$, we omit $S_1$ and write $[\Ab]_{[,S_2]}$. Similarly, we simplify the notation to $[\Ab]_{
[S_1,]}$ when $S_2=[l]$.   The abbreviation ``$w.h.p.$'' stands for ``with high probability approaching 1''.
For notation simplicity, we define the following expressions frequently used in the proofs: $\delta_{nq} = \min \{n^{1/2}, q^{1/2}\}$, $\epsilon_{nq}=(nq)^{\epsilon}$ for sufficiently small $\epsilon>0$, and 
\begin{equation*}
    \zeta_{nq,p}= \left(\sqrt{\frac{p\log qp}{n}} + \sqrt{\frac{\log n}{q}}  \right)^{-1}. 
\end{equation*}
To aid the reader's understanding, we summarize the key notations used throughout the main text in Table~\ref{tab:notation}.

\renewcommand{\thetable}{S\arabic{table}} %

\begin{table}[ht]
\centering
\caption{Table of Notations}
\begin{tabular}{ll}
\toprule
\textbf{Notation} & \textbf{Description} \\
\midrule
$\{Y_{ij}\}_{n\times q}$& Item responses
from $n$ subjects to $q$ items.\\
$\bX_i = (1,(\bX_i^c)^\T)^\T$& Covariates for subject $i$.\\
$\bbeta_{j} = (\beta_{j0},\bbeta_{jc}^\T)^\T$ & Assembled vector of intercept and coefficients for item $j$.\\
$\Xb = (\bX_1,\dots,\bX_n)_{n\times p}^\T$&  Design matrix of observed covariates associated with the $n$ subjects.\\
$\Bb = (\bbeta_1,\dots,\bbeta_q)^\T_{q\times p}$& Matrix for intercepts and the covariate
effects.\\
$\Ub=(\bU_1,\dots,\bU_n)^\T_{n\times K}$& Factor matrix.\\
$\bGamma = (\bgamma_1,\dots,\bgamma_q)^\T_{q\times K}$&Loading matrix.\\
$\bphi = (\bGamma,\Ub,\Bb)$ & Model parameters for (1). \\
$\bphi^* = (\bGamma^*,\Ub^*,\Bb^*)$ & True parameters for (1). \\
$\hat\bphi^* = (\hat\bGamma^*,\hat\Ub^*,\hat\Bb^*)$ & Joint maximum likelihood estimation of $\bphi^*$ given in Section~3.2. \\
$\bZ_i^* = ({\bU_i^*}^\T,\bX_i^\T)^\T$&Combined vector of subject $i$'s unobserved latent factor \\&and observed covariates in Assumption~\ref{assumption:asymptotic normality}. \\
$\Gb^{\ddagger}$, $\Ab^{\ddagger}$&Auxiliary transformation matrices defined two paragraph prior to Assumption~\ref{assumption:A consistency}.\\
$\bSigma_{\beta,j}^*$, $\bSigma_{\gamma,j}^*$, $\bSigma_{u,i}^*$ &  Asymptotic variance-covariance matrices in Theorem~4.2.\\
\bottomrule
\end{tabular}
\label{tab:notation}
\end{table}



\section{Regularity Assumptions}

\label{sec:reg assumptions}

 We denote $\bphi^* = (\bGamma^*, \Ub^*, \Bb^*)$ as the true parameters and introduce the regularity assumptions as follows. 

\begin{assumption}
	\label{assumption: psd covariance}
There exist constants $M >0$ and $\kappa  >0$ such that:

(i) There exist a positive definite $\bSigma_u^*$ such that  $n^{-1} (\Ub^{*})^{\T} \Ub^*  {\rightarrow}  \bSigma_u^* $ as $n\rightarrow\infty$.  For $i \in [n]$,  $ \| \bU_i^*\|_{2} \le M$.

 (ii) There exist a positive definite $\bSigma_\gamma^*$ such that $q^{-1} (\bGamma^{*})^{\T} \bGamma^*  {\rightarrow} \bSigma_\gamma^*$ as $q\rightarrow\infty$. For $j \in [q]$,  $\|\bgamma_j^*\|_{2} \le M$. 

 (iii) 
 $\bSigma_x = \lim_{n\rightarrow \infty} n^{-1} \sum_{i=1}^n\bX_{i} \bX_{i}^{\T}$ exists and satisfy $ 1/ \kappa^2 \le \lambda_{\min} (\bSigma_x)  \le \lambda_{\max} (\bSigma_x) \le \kappa^{ 2}$. $\max_{i\in[n]}\|\bX_i\|_{\infty} \le M$;  $\max_{i\in[n],j\in[q]}\big|(\bbeta_j^*)^\T\bX_i\big|\le M$.

(iv)  There exist $\bSigma_{ux}^*$ such that $n^{-1} \sum_{i=1}^n \bU_i^* \bX_i^{\T} {\rightarrow} \bSigma_{ux}^*  $ as $n \rightarrow \infty$ with $\|\bSigma_{ux}^*\bSigma_{x}^{-1}\|_{\infty}\le M$. The eigenvalues of $(\bSigma_{u}^*-\bSigma_{ux}^*\bSigma_{x}^{-1}(\bSigma_{ux}^{*})^{\T})\bSigma_{\gamma}^*$ are distinct and nonzero.

	\end{assumption}

		Assumptions~\ref{assumption: psd covariance} is commonly used in the factor analysis literature. In particular, 
  Assumptions~\ref{assumption: psd covariance}(i)--(ii) correspond to Assumptions A-B in~\cite{bai2003inferential} under linear factor models, ensuring the compactness of the parameter space on $\Ub^*$ and $\bGamma^*$. 
  Under nonlinear factor models, such conditions on compact parameter space are also commonly assumed~\citep{wang2022maximum, chen2023statistical}. Assumptions~\ref{assumption: psd covariance}(iii) and~\ref{assumption: psd covariance}(iv) are standard regularity conditions for the nonlinear setting that is needed to establish the concentration of the gradient and estimation error for the model parameters when $p$ diverges. 
 In this work, we treat the latent factors $\bU_i$s as fixed model parameters and the covariates $\bX_i$ as realizations of random variables representing each subject's covariates. Our theoretical framework is established based on the probability density function of $Y_{ij}$ given $\bX_i$ and the derived asymptotic properties of the estimators are also developed conditionally on $\bX_i$. Instead of making distributional assumptions on the random variables, we make assumptions on the sample moments of $\bX_i$.


	\begin{assumption}
	\label{assumption:smoothness}
For any $i \in [n]$ and $j \in [q]$, assume that 
$l_{ij} (\cdot)$ is three times differentiable, and we denote the first, second, and third order derivatives of $l_{ij}(w_{ij})$ with respect to $w_{ij}$ as $l_{ij}^{\prime} (w_{ij}), l_{ij}^{\prime\prime} (w_{ij})$, and $l_{ij}^{\prime\prime\prime} (w_{ij})$, respectively. There exist $M > 0$ and $\xi \ge 4$ such that  $\EE(|l_{ij}^{\prime} (w_{ij}^*) |^{\xi}) \le M$ and 
 $|l_{ij}^{\prime} (w_{ij}^*)|$ is sub-exponential with $\|l_{ij}^{\prime} (w_{ij}^*)\|_{\varphi_1} \le M$. Furthermore, we assume $\EE\{l_{ij}^{\prime} (w_{ij}^*)\} = 0$. 
 Within a compact space of $w_{ij}$, we have $b_L \le -l_{ij}^{\prime\prime} (w_{ij}) \le b_U$ and $|l_{ij}^{\prime\prime\prime}(w_{ij})| \le b_U$ for $b_U > b_L > 0 $. 

   \end{assumption}

Assumption~\ref{assumption:smoothness} assumes smoothness on the log-likelihood function $l_{ij}(w_{ij})$. 
In particular, it assumes sub-exponential distributions and finite fourth-moments of the first order derivatives $l^{\prime}_{ij} (w_{ij}^*)$. 
For commonly used linear or nonlinear factor models, the assumption is not restrictive and can be satisfied with a large $\xi$. 
For instance, consider the logistic
model with $l_{ij}^{\prime}(w_{ij}) = Y_{ij} - \exp(w_{ij})/\{1+\exp(w_{ij})\}$, we have $|l_{ij}^{\prime}(w_{ij})|\le 1$ and $\xi$ can be taken as $\infty$.
The boundedness conditions for $l_{ij}^{\prime\prime}(w_{ij})$ and $l_{ij}^{\prime\prime\prime}(w_{ij})$ 
are necessary to guarantee the convexity of the joint likelihood function. 
In a special case of linear factor models, 
$l_{ij}^{\prime\prime}(w_{ij})$ is a constant and the boundedness conditions naturally hold.
For popular nonlinear models such as logistic factor models, probit factor models, and Poisson factor models, the boundedness of $l_{ij}^{\prime\prime}(w_{ij})$ and $l_{ij}^{\prime\prime\prime}(w_{ij})$ can also be easily verified.

\begin{assumption}\label{assumption: Scaling}
    For $\xi$  specified in Assumption 2 and a sufficiently small $\epsilon >0$, 
we assume that as $n,q,p \rightarrow \infty$,
    \begin{equation}
    \frac{p}{\sqrt{n \wedge(pq)}}   (nq)^{\epsilon+3/\xi}\to 0.
    \end{equation}
\end{assumption}

Assumption~\ref{assumption: Scaling} is imposed to ensure that the derivative of the likelihood function equals zero at the maximum likelihood estimator with high probability, a key property in the theoretical analysis. 
In particular, we need the estimation errors of all model parameters to converge to 0 uniformly with high probability. Such uniform convergence results require delicate analysis of the convexity of the objective function, for which technically we need Assumption~\ref{assumption: Scaling}. For most of the popularly used generalized factor models, $\xi$ can be taken as any large value as discussed above,  thus
$(nq)^{\epsilon + 3/\xi}$ is of a smaller order of $\sqrt{n \wedge (pq)}$, given a small $\epsilon$.
Specifically, Assumption~\ref{assumption: Scaling} implies $p=o(n^{1/2}\wedge q)$ up to a small order term, an asymptotic regime that is reasonable for many educational assessment data. 
 
\begin{assumption}
\label{assumption:asymptotic normality}
(i) Define $\bZ_i^* = \big((\bU_i^*)^\T,\bX_i^\T\big)^\T$. For any $j \in [q]$, 
$\lim_{n\rightarrow\infty} \| -n^{-1} \sum_{i=1}^n \EE $ $ l_{ij}^{\prime\prime}(w_{ij}^*) \bZ_i^* (\bZ_i^{*})^{\T} - \bPhi_{jz}^* \|_F = 0 $ with $\bPhi_{jz}^*$ positive definite.
(ii) For any $i \in [n]$, $-q^{-1} \sum_{j=1}^q $ $ \EE l_{ij}^{\prime\prime}(w_{ij}^*) \bgamma_j^{*} (\bgamma_j^{*})^\T $ $ {\rightarrow} \bPhi_{i\gamma}^*$ for some positive definite matrix $\bPhi_{i\gamma}^*$.\end{assumption}

 Assumption~\ref{assumption:asymptotic normality} assumes the existence of the asymptotic covariance matrices $\bPhi_{jz}^*$  and $\bPhi_{i\gamma}^*$, which are used to derive the asymptotic covariance matrices for the MLE estimators. 
For popular generalized factor models, this assumption holds under mild conditions. For example, under linear models, $l_{ij}^{\prime\prime}(w_{ij})$ is a constant. Then $\bPhi_{jz}^*$ and $\bPhi_{i\gamma}^*$ naturally exist and are positive definite from Assumption~\ref{assumption: psd covariance}. 
Under logistic and probit models,  $l_{ij}^{\prime\prime}(w_{ij})$ is finite within a compact parameters space, and similar arguments can be applied to show the validity of Assumption~\ref{assumption:asymptotic normality}.

The next assumption relates to handling the dependence between the covariates and latent factors and plays an important role in deriving asymptotic distributions for the proposed estimators. Before we formally introduce the assumption, we first define some notations. Let ${\Gb}^{\ddagger} = (q^{-1}(\bGamma^{*})^{\T} \bGamma^*)^{1/2}$ $ \cV^* (\cU^{*})^{-1/4}$ and $\Ab^{\ddagger} = (\Ub^{*})^{\T}\Xb(\Xb^\T\Xb)^{-1}$, where $\cU^* = \diag(\varrho_1^*, \dots, \varrho_K^*)$ with diagonal elements being the $K$ eigenvalues of 
$(nq)^{-1} ((\bGamma^{*})^{\T} \bGamma^* )^{1/2} (\Ub^{*})^{\T} $ $ (\Ib_{n} - \Pb_x) \Ub^*  ((\bGamma^{*})^{\T} \bGamma^* )^{1/2} $
with $\Pb_x = \Xb (\Xb^{\T} \Xb)^{-1}\Xb^{\T}$ and $\cV^*$ containing the matrix of corresponding eigenvectors. 
We further define $\Ab^0=(\Gb^{\ddagger})^\T\Ab^{\ddagger}= (\ba_0^{0}, \dots, \ba_{p_*}^{0})$. Let $\bar\Ab^{\ddagger} = \bSigma_{ux}^*(\bSigma_{x})^{-1}$ and $\bar\Gb^{\ddagger}$ be the probability limit of $\Gb^{\ddagger}$, as $n,q,p \rightarrow \infty$.

The estimation problem~(6) in the main text is related to the median regression problem with measurement errors. To understand the properties of this estimator, following existing M-estimation literature~\citep{he1996bahadur, he2000parameters}, we define $\psi_{js}^0(\ba)=(\Gb^\ddagger)^{-1}\bgamma_j^*\sign\{\beta_{js}^*+\ba^\T(\Gb^\ddagger)^{-1}\bgamma_j^*\}$ and
    $\chi_{s}(\ba)=\sum_{j=1}^q\psi_{js}^0(\ba)$ for $j \in [q]$ and $s \in [p_*]$. For $s\in[p_*]$, we define a perturbed version of $\psi_{js}^0(\ba)$, denoted as $\psi_{js}(\ba,\bdelta_{js})$, 
as follows:
     \begin{align}
        \psi_{js}(\ba,\bdelta_{js})=(\Gb^\ddagger)^{-1}\Big(\bgamma_j^*+\frac{[\bdelta_{js}]_{[1:K]}}{\sqrt{n}}\Big) \sign\Big\{\beta_{js}^*+\frac{[\bdelta_{js}]_{K+1}}{\sqrt{n}}+\ba^\T(\Gb^\ddagger)^{-1}\big(\bgamma_j^*+\frac{[\bdelta_{js}]_{[1:K]}}{\sqrt{n}}\big)\Big\},
        \label{eq:phi js def}
    \end{align}
    where the perturbation
\begin{align*}
        \bdelta_{js}&=\begin{pmatrix}
            \Ib_K& \bm{0}\\\zero&(\one_s^{(p)})^\T
\end{pmatrix}\Big(-\sum_{i=1}^nl_{ij}^{\prime\prime}(w_{ij}^*)\bZ_i^*(\bZ_i^*)^\T\Big)^{-1}\Big(\sqrt{n}\sum_{i=1}^nl_{ij}^{\prime}(w_{ij}^*)\bZ_i^*\Big),
    \end{align*}  
    follows asymptotically normal distribution by verifying Lindeberg-Feller condition. 
    Define $\hat\chi_s(\ba)=\sum_{j=1}^q\EE\psi_{js}(\ba,\bdelta_{js})$.

\begin{assumption}
    \label{assumption:A consistency}  
    For $\chi_s(\ba)$, we assume that there exists some constant $c>0$ such that $\min_{\ba\neq \zero}|q^{-1}\chi_s(\ba)|> c$ holds for all $s\in[p_*]$. Assume there exists $\ba_{s0}$ for each $s\in[p_*]$ such that $\hat\chi_{s}(\ba_{s0})$ $= 0$ with $p\sqrt{n}\|\balpha_{s0}\|\to0$. In a neighbourhood of $\balpha_{s0}$, $\hat\chi_{s}(\ba)$ has a nonsingular derivative such that $\{q^{-1}\nabla_{\ba}\hat\chi_{s}(\balpha_{s0})\}^{-1}=O(1)$ and $q^{-1}|\nabla_{\ba}\hat\chi_{s}(\ba)-\nabla_{\ba}\hat\chi_{s}(\balpha_{s0})|\le k|\ba-\balpha_{s0}|$. We assume $\iota_{nq,p}:=\max \big\{\|\balpha_{s0}\|,q^{-1}\sum_{j=1}^q\psi_{js}(\ba_{s0},\bdelta_{js})\big\}=o \big((p\sqrt{n})^{-1}\big)$. 
\end{assumption}

Assumption~\ref{assumption:A consistency} is required to address the theoretical difficulties in establishing the consistent estimation for $\Ab^0$, a challenging problem related to median regression with weakly dependent measurement errors. 
In Assumption~\ref{assumption:A consistency}, we treat the minimizer of $|\sum_{j=1}^q\EE \psi_{js}(\ba,\bdelta_{js})|$ as an $M$-estimator and adopt the Bahadur representation results in \cite{he1996bahadur} for the theoretical analysis. 
For an ideal case where $\bdelta_{js}$ are independent and normally distributed with finite variances, which corresponds to the setting in median regression with measurement errors \citep{he2000quantile}, these assumptions can be easily verified.
Assumption~\ref{assumption:A consistency} discusses beyond such an ideal case and covers general settings. 
In addition to independent and Gaussian measurement errors, this condition also accommodates the case when $\bdelta_{js}$ are asymptotically normal and weakly dependent with finite variances, as implied by Lindeberg-Feller central limit theorem and the conditional independence of $Y_{ij}$.

We want to emphasize that Assumption~\ref{assumption:A consistency} allows for both sparse and dense settings of the covariate effects. Consider an example of $K=p=1$ and $\gamma_j=1$ for $j\in [q]$. Suppose $\beta_{js}^*$ is zero for all $j \in [q_1]$ and nonzero otherwise. 
Then this condition is satisfied as long as $\#\{j: \beta_{js}^* > 0\}$ and $\#\{j: \beta_{js}^* < 0\}$ are comparable, even when the sparsity level $q_1$ is small.

\section{Proofs of Propositions 1--3 in Main Text}\label{sec:supp prove mainprop}
\subsection{Proof of Proposition~1}



For $s\in [p_*]$, let $L_s(\ba)=\sum_{j=1}^q|\beta_{js}-\ba^\T\bgamma_j|$. We denote the directional gradient of $L_s$ as 
\begin{equation*}
    \nabla_{\bv}L_s(\ba)=\sum_{j=1}^q|\bv^\T\bgamma_j|I{(\beta_{js}=0)}+\sum_{j=1}^q\text{sign}(\beta_{js})\bv^\T\bgamma_jI{(\beta_{js}\neq 0)}.
\end{equation*}

   \textbf{ `if' part:} suppose~(4)  holds and for some $\Ab_c\neq \zero$ we have $\sum_{j=1}^q\|\bbeta_{jc}\|_1\ge \sum_{j=1}^q\|\bbeta_{jc}-\Ab^\T_c\bgamma_j\|_1$. Then there exists some $s\in [p_*]$ such that $\sum_{j=1}^q|\beta_{js}|\ge \sum_{j=1}^q|\beta_{js}-\ba_s^\T\bgamma_j|$ and $\ba_s\neq \zero$, where $\ba_s$ is the $s$-th column of $\Ab_c$. Then we know by the convexity of $L_s$ that $\nabla_{\ba_s}L_s(\zero)\le0$, which contradicts (4) when taking $\bv$ as $-\ba_s$.

    \textbf{`only if' part:} suppose 
    (4) fails to hold for some $\bv^*$ and $s\in [p_*]$, which implies that $\nabla_{\bv^*}L_s(\zero)\le0$. We claim that there exists some $\varepsilon$ such that $L_s(\bv^*\varepsilon)\le L_s(\zero)$. Note that we can find small $\varepsilon$ such that when on the segment from $\zero$ to $\bv^*\varepsilon$, each term in $L_s(\ba)$ preserves its sign, and thus we always have $\nabla_{\bv^*}L(\ba)\le 0$ when $\ba$ moves from $\zero$ to $\bv^*\varepsilon$.
    Taking the $s$-th column of $\Ab_c$ as $\bv^*\varepsilon$ and the other columns as $\zero$s, then we have $\sum_{j=1}^q\|\bbeta_{jc}\|_1\ge \sum_{j=1}^q\|\bbeta_{jc}-\Ab_c^\T\bgamma_j\|_1$, contradicting Condition 1(ii).

\subsection{Proof of Proposition~2}

We first prove $\Ab = \bm{0}_{K\times p}$ when both $(\bGamma,\Ub,\Bb)$ and $(\tilde\bGamma,\tilde\Ub,\tilde\Bb)$ satisfy Conditions~1. 
By Condition~1(i) we know $\zero_K = n^{-1}\one_n^\T(\Ub+\Xb\Ab^\T)\Gb^\T = n^{-1}\one^\T\Xb\Ab^\T\Gb^\T$ as $\Ub$ is centered. Since the first column of $\Xb$ is all $1$s and other columns have been centered, we know from $\Gb$ is invertible that $\Ab_{[,1]}=\zero_K$. Suppose $\Ab_c = \Ab_{[,2:p]}\neq \zero$. Since  $(\bGamma,\Ub,\Bb)$ satisfies Condition~1(ii), we know
\begin{equation*}
    \sum_{j=1}^q\|\bbeta_{jc}\|_1<\sum_{j=1}^q\|\bbeta_{jc} - \Ab_c^\T\bgamma_j\|_1.
\end{equation*} But $\big(\bGamma\Gb^{-1},(\Ub+\Xb\Ab^\T)\Gb^\T,\Bb-\bGamma\Ab\big)$ also satisfies Condition~1(ii), which implies
\begin{align*}
    \sum_{j=1}^q\big\|[\bbeta_{j} - \Ab^\T\bgamma_j]_{2:p}\|_1=&\,\sum_{j=1}^q\big\|\bbeta_{jc} - \Ab_c^\T\bgamma_j\|_1\\<&\sum_{j=1}^q\big\|\bbeta_{jc} - \Ab_c^\T\bgamma_j -\big(-\Ab_c^\T\Gb\big)\Gb^{-\T}\bgamma_j\|_1\\=&\,\,\sum_{j=1}^q\|\bbeta_{jc}\|_1,
\end{align*}
which leads to a contradiction. Next we show that $\Gb$ can at most be a signed permutation. Since $n^{-1}\Ub^\T\Ub = q^{-1}\bGamma^\T\bGamma$ is diagonal with distinct elements, denoted by $\Db=\text{diag}(d_1,\cdots,d_K)$, we have by $\Ab =\zero$ that
\begin{equation*}
    \Gb\Db\Gb^\T = \Gb^{-\T}\Db\Gb^{-1} =\tilde\Db.
\end{equation*}Then $\Gb^\T\Gb\Db\Gb^\T\Gb=\Db$ with $\Db$ having distinct elements. Then $\Gb^\T\Gb=\Ib_K$. Since $\Db\Gb^\T = \tilde\Db^\T\tilde\Db$, we know that the rows of $\Gb$ are the eigenvectors of $\Db$ by definition. Since $\Db$ is diagonal with distinct elements, the eigenvectors can be uniquely given as $\pm \be_r$ where $\be_r$ is indicator vector. Taking these together, $\Gb$ can only be a signed permutation matrix.

\subsection{Proof of Proposition~3}
Recall that we define $\tilde\Ab = (\tilde\ba_0,\tilde\Ab_c)$ with $\tilde\Ab_c =$ $  \arg\min_{\Ab\in\RR^{K\times p_*}}$ $\sum_{j=1}^q$ $\|\tilde\bbeta_{jc}-\Ab^\T\tilde\bgamma_j\|_1$ and $\tilde\ba_0 = -n^{-1}\sum_{i=1}^n\tilde\bU_i $.
Following the definitions of $\tilde\bbeta_{jc} = \bbeta_{jc}-\Ab_c^\T\bgamma_j $ and $\tilde\bgamma_j = (\Gb)^{-1}\bgamma_j$ as in~(3) of main text, we write
\begin{align*}
    \argmin_{\Ab\in\RR^{K\times p_*}}\sum_{j=1}^q\|\tilde\bbeta_{jc} - \Ab^\T\tilde\bgamma_j\|=\argmin_{\Ab\in\RR^{K\times p_*}}\sum_{j=1}^q\|\bbeta_{jc}-\Ab_c^\T\bgamma_j - \Ab^\T(\Gb)^{-1}\bgamma_j\|.
\end{align*}
Here we see the solution to above problem is $-(\Gb)^\T \Ab_c$, and is unique as $(\bGamma,\Ub,\Bb)$ satisfy Condition~1. Next it is easy to verify that $-n^{-1}\sum_{i=1}^n\tilde\Ub_i = -n^{-1}\sum_{i=1}^n\big(\Gb^\T\bU_i + \Gb^\T\Ab\bX_i \big) = -\Gb^\T\ba_0$ as the first element in $\bX_i$ is $1$. For $\tilde\Gb$, we check that 
\begin{equation*}
    q^{-1}(\tilde\Gb)^{-1}\tilde\bGamma^\T \tilde\bGamma (\tilde\Gb)^{-\T} =\tilde\cU^{1/2} = n^{-1}\tilde\Gb^\T (\tilde\Ub+\Xb\tilde\Ab^\T)^\T(\tilde\Ub+\Xb\tilde\Ab^\T)\tilde\Gb.
\end{equation*}
Since the diagonal elements of $q^{-1}\bGamma^\T\bGamma = n^{-1}\Ub^\T\Ub$ are distinct, the eigenvalues of
\begin{equation*}
    (nq)^{-1} (\tilde\bGamma^{\T} \tilde\bGamma )(\tilde\Ub + \Xb \tilde\Ab^{\T})^{\T} (\tilde\Ub + \Xb \tilde\Ab^{\T})= (nq)^{-1} \Gb^{-1}(\bGamma^{\T} \bGamma )\Ub\Ub^\T\Gb
\end{equation*}
are also distinct. Therefore $\tilde\Gb$ is unique up to a signed permutation.

\section{Proofs of Main Theorems}\label{sec:prove main}
In this section, we start with an overview of the proof strategy of the main results in the main text. Then we provide detailed proofs of Theorems~4.1 and~4.2 in Section~\ref{sec:prove_thm1} and \ref{sec:prove_thm2}, respectively. In Section~\ref{sec:estimate_variance}, we present consistent estimation for the asymptotic covariance matrices defined in Theorem~4.2. Then in Section~\ref{sec:consistent estimator cov} we prove Corollary~1.

 Throughout this subsequent proof, for the statement convenience, we slightly adjust the notation and re-index the covariate effects as $\bbeta_{j}=(\beta_{j1},\cdots,\beta_{jp})^\T$ with the first component $\beta_{j1}$ being the intercept and others $(\beta_{j2}, \dots, \beta_{jp})$ being the covariate effects, namely DIF effects in psychometrics~\citep{holland2012differential}.
Accordingly, we clarify that $\Xb\in\RR^{n\times p}$ is a design matrix with the first column being all ones and that each $\bX_i \in \RR^{p} $ incorporates entry-one in the vector. 
We further define $\bZ_i=(\bU_i^\T,\bX_i^{\T})^\T$, $\boldsymbol{f}_j=(\bgamma_j^\T,\bbeta_j^\T)^\T$, and $w_{ij}=\bgamma_j^\T\bU_i+\bbeta_j^\T\bX_i=\boldsymbol{f}_j^\T\bZ_i$ for $i \in [n]$ and $j \in [q]$. Following the above introduced notations, we let $\bphi = (\bbf_v^\T,\bU_v^\T)^\T$ be the assembled vector of all parameters, where $\bbf_v=(\bbf_1^\T,...,\bbf_q^\T)^\T$ and $\bU_v=(\bU_1^\T,...,\bU_n^\T)^\T$ are the assembled vectors of all $\bbf_j$'s and all $\bU_i$'s, respectively.
The true parameters that satisfy our proposed identifiability conditions~1--2 are denoted as $\bphi^*=\{(\bbf_v^*)^\T,(\bU_v^*)^\T\}^\T$. We let $\mathcal{B}(D)$ as the parameter space where $\max_{i\in[n]}\|\bU_i\|_{\infty} \le D$, $\max_{j\in[q]}\| \bgamma_j\|_{\infty}\le D$, and $\max_{j\in[q]}\|\bbeta_j\|_{\infty} \le D$ for some large $D$ such that the true parameters $\bphi^*$ lie in $\cB(D)$.

\subsection{Proof Strategy and Framework}\label{sec:proof_main_strategy}
In this section, we outline the proof strategy for the theorems in the main text. 
Directly establishing the theoretical properties of the estimator $\hat{\bphi}^*$ is challenging due to the potentially complicated correlations between the latent variables $\bU_i^*$ and the observed covariates $\bX_i$.
To handle this relationship, we employ the following transformed $\Ub^0$ that are orthogonal with $\Xb$, which plays an important role in establishing the theoretical results. In particular, we let $\Ub^0 = \big(\Ub^* -   \bX(\Ab^{\ddagger})^\T\big)\Gb^{\ddagger}$, $\bGamma^{0} =  \bGamma^*(\Gb^{\ddagger})^{-\T}$, and $\Bb^0 = \Bb^* + \bGamma^*\Ab^{\ddagger}$, where $\Ab^\ddagger$ and ${\Gb}^{\ddagger}$ are defined before Assumption \ref{assumption:A consistency}.
We let $\bGamma^0=(\bgamma_1^0,\cdots,\bgamma_q^0)^\T$, $\Ub^0=(\bU_1^0,\cdots,\bU_n^0)^\T$, and $\Bb^0=(\bbeta_1^0,\cdots,\bbeta_q^0)^\T$. 
We also write $\bphi^0=\{(\bbf^0_v)^\T,(\bU_v^0)^\T\}^\T$, where $\bbf_v^0=\{(\bbf_1^0)^\T,\cdots,(\bbf_q^0)^\T\}^\T$ with $\bbf_j^0=\{(\bgamma_j^0)^\T,(\bbeta_j^0)^\T\}^\T$ and $\bU_v^0=\{(\bU_1^0)^\T,\cdots,(\bU_n^0)^\T\}^\T$. Write $w_{ij}^0 = (\bgamma_j^0)^\T\bU_i^0 + (\bbeta_j^0)^\T\bX_i$. It can be verified that $w_{ij}^0 = w_{ij}^*$ for all $i\in[n],j\in[q]$.

These transformed parameters $\bphi^0$ give the same joint likelihood as that of the true parameters $\bphi^*$, which facilitate our theoretical understanding of the joint-likelihood-based estimators. 
In particular, these transformed parameters $\bphi^0$ can be readily shown to satisfy the following identifiability conditions:
\begin{itemize}\item[] Condition $1^{\prime}$: $(\Ub^0)^{\T} \Xb = \bm{0}_{K\times p}$. \item[] Condition $2^{\prime}$: $n^{-1}(\Ub^0)^\T\Ub^0=q^{-1}(\bGamma^0)^\T(\bGamma^0)=$ diagonal with distinct and nonzero elements.\end{itemize} 
Similarly, we can also establish the transformation from  $\bphi^0$ to the true parameters $\bphi^*$.
Specifically, by Proposition~3, we have
$\Bb^*  = \Bb^0 - \bGamma^0 \Ab^0,
        \bGamma^* = \bGamma^0 (\Gb^{0\T})^{-1}$, and $ 
        \Ub^* = (\Ub^0 + \Xb \Ab^{0\T}) \Gb^0$, where $\Gb^0 =  (q^{-1}\bGamma^{0\T} \bGamma^0)^{1/2} \cV^0 (\cU^0)^{-1/4}$ and $\Ab^0=(\zero_K,\Ab^0_c)$ with
 $\Ab^0_c = \argmin_{\Ab\in\RR^{K\times(p-1)}}~q^{-1} \sum_{j=1}^q \|\bbeta_{jc}^0 - \Ab^{\T}\bgamma_j^{0}  \|_{1}$. Here the equivalences for factors and loadings are up to signed column permutation; ${\cV^0}$ and ${\cU}^0$ contain the eigenvalues and eigenvectors of $(nq)^{-1} \big(({\bGamma}^{0})^{\T} {\bGamma}^0 \big)^{1/2}\big(({\Ub}^{0})^{\T} +{\Ab}^{0} \Xb^{\T}\big) \big({\Ub}^{0} + \Xb({\Ab}^{0})^\T \big) $ $\big(({\bGamma}^{0})^{\T} {\bGamma}^0 \big)^{1/2}$, respectively. Note that $\Ab^0 = (\Gb^{\ddagger})^\T\Ab^{\ddagger}$ since that the first column of $\Ab^\ddagger$ is $\zero_K$ and that {$q^{-1} \sum_{j=1}^q\big\|[\bbeta_j^0]_{[2:p]} - \Ab_{[,2:p]}^{\T}\bgamma_j^{0}  \big\|_{1}  = q^{-1} \sum_{j=1}^q\|\bbeta_j^* + \{(\Ab^{\ddagger}_{[,2:p]})^{\T} - \Ab_{[2:p]}^{\T} (\Gb^{\ddagger})^{-1}\} (\bgamma_j^*)^{\T}  \|_{1}$} is minimized when $\Ab_{[,2:p]} = (\Gb^{\ddagger})^\T\Ab^{\ddagger}_{[,2:p]}$, as a result of Condition 1 in the main text. Also, it can be readily verified that $\Gb^0\Gb^\ddagger=\Ib_K$.

\begin{remark}
 We verify an analogous version of Assumption~\ref{assumption: psd covariance} for $\bphi^0$. Define $\bar\Gb^{\ddagger}$ as the probability limit of $\Gb^{\ddagger}$. By Assumption~\ref{assumption: psd covariance}, we know $\lim_{q\to\infty}q^{-1}(\bGamma^0)^\T\bGamma^0=(\bar\Gb^{\ddagger})^{-1}\bSigma^*_{\gamma}(\bar\Gb^{\ddagger})^{-\T}$ is positive definite, which we denote by $\bSigma_{\gamma}^0$, and 
\begin{equation*}
    \lim_{n\to\infty}n^{-1}(\Zb^0)^\T\Zb^0=\begin{pmatrix}
        \bSigma_{u}^*-\bSigma_{ux}^*\bSigma_{x}^{-1}\bSigma_{xu}^*& \zero \\ \zero &\bSigma_x
    \end{pmatrix}
\end{equation*}
is also positive definite, which we denote by $\bSigma^0_z$. Then we also have $1/(\kappa^{\prime})^2\le \lambda_{\min}(\bSigma_z^0)\le \lambda_{\max}(\bSigma_z^0)\le (\kappa^{\prime})^2$ for some $\kappa^{\prime}>0$. We claim that for some $M^{\prime}>0$, $\max_{i\in[n]}\|\bZ_i^0\|_{\infty}\le M^{\prime}$ and $\max_{j\in[q]}\|\bgamma_j^0\|_{\infty}\le M^{\prime}$. 
The boundedness for $\bgamma_{j}^0$ can be easily verified since $\Gb^{\ddagger}$ converges and is of finite dimension. For $\bU_i^0=(\Gb^{\ddagger})^\T(\bU_i^*-\Ab^\ddagger \bX_i)$,  since $\lim_{n\to\infty}\Ab^\ddagger=\bSigma_{ux}^*\bSigma_{x}^{-1}$ with $\|\Ab^\ddagger \bX_i\|_{\infty}\le \|\Ab^\ddagger\|_{\infty}\|\bX_i\|_{\infty}\le M^2$ by Assumption~\ref{assumption: psd covariance}, we can show $\max_{i\in[n]}$ $\|\bU_i^0\|_{\infty} \le M^{\prime}$. Also, we know that $\max_{i\in[n],j\in[q]}\big|w_{ij}^0\big|= \max_{i\in[n],j\in[q]}\big|w_{ij}^*\big|\le M^2+M$. For simplicity we enlarge $M$ and still write $\max_{i\in[n]}\|\bU_i^0\|_{\infty}\le M$, $\max_{j\in[q]}\|\bgamma_j^0\|_{\infty}\le M$ and $\max_{i\in[n],j\in[q]}|w_{ij}^0|\le M$.
\end{remark}

Following the above discussion, we study the asymptotic properties of $\hat{\bphi}^*$ by first examining the properties of the maximum likelihood estimator of ${\bphi}^0$, which technically is more feasible due to the orthogonality of the transformed latent factors $\Ub^0$ and the covariates $\Xb$ (Condition $1^\prime$). Then we use the obtained estimator for  $\bphi^0$ to construct the estimators for the transformations $\Ab^0$ and $\Gb^0$, by which we are able to further establish the asymptotic consistency and normality of $\hat{\bphi}^*$. 
In particular, the maximum likelihood estimator for $\bphi^0$ under the identifiability conditions $1^\prime$--$2^\prime$, denoted as $\hat\bphi^0$, can be obtained as follows
\begin{align}
    	\hat{\bphi}^0  = &\mathop{\arg\min}_{\bphi\in\cB(D)}~-L(\Yb |  \bphi), \label{eq:cmle st 12 prime}
     \\& \text{subject to }
     \Ub^{\T} \Xb = \bm{0}_{K\times p} \nonumber  \\
     &\qquad\text{  and }
     n^{-1}\Ub^\T\Ub=q^{-1}\bGamma^\T\bGamma= \text{diagonal with distinct and nonzero elements.}
     \nonumber
\end{align}
For any maximum likelihood estimator $\hat\bphi=(\hat\Ub,\hat\bGamma,\hat\Bb)$ defined in \eqref{eq:MLE}, let $\hat\Ab = \hat\Ub^\T\Xb(\Xb^\T\Xb)^{-1}$ and $\hat\Gb = (q^{-1}\hat\bGamma^{\T} \hat\bGamma)^{1/2}$ $ \hat\cV \hat\cU^{-1/4}$, where we compute the singular value decomposition of $(nq)^{-1} (\hat{\bGamma}^{\T} \hat{\bGamma} )^{1/2}(\hat\Ub -\Xb\hat\Ab^\T)^\T(\hat\Ub -\Xb\hat\Ab^\T)(\hat{\bGamma}^{\T} \hat{\bGamma} )^{1/2}$ and let $\hat\cU = \diag(\hat\varrho_1, \dots, \hat\varrho_K)$ be a diagonal matrix that contains the eigenvalues and $\hat\cV$ be a matrix that contains its corresponding eigenvectors. Then $\hat\bphi^{\prime}=(\hat\bGamma(\hat\Gb)^{-\T},(\hat\Ub -\Xb\hat\Ab^\T)\hat\Gb,\hat\Bb+\hat\bGamma\hat\Ab)$ satisfy Conditions 1$^{\prime}$ and 2$^{\prime}$. We can select $D$ large enough to include this estimation, and therefore, this is a solution to \eqref{eq:cmle st 12 prime} as it maximizes the joint likelihood function by definition.

With $\hat\bphi^0$ obtained in \eqref{eq:cmle st 12 prime}, we construct estimators for the transformed parameters 
as follows: $\tilde\Bb^*  = \hat\Bb^0 - \hat\bGamma^0 \hat\Ab^0,
        \tilde\bGamma^* = \hat\bGamma^0 (\hat\Gb^{0\T})^{-1}$, and $ 
        \tilde\Ub^* = (\hat\Ub^0 + \Xb \hat\Ab^{0\T}) {\hat\Gb^0}$, where the transformation matrices are given as $\hat\Ab^0=(\zero_K,\hat\Ab_c^0)$ with $\hat\Ab_c^0 = \arg\min_{\Ab}~q^{-1} \sum_{j=1}^q \|\hat{\bbeta}_{jc}^0 -  \Ab^{\T} \hat{\bgamma}_j^{0} \|_{1}$ and $\hat{\Gb}^0 = (q^{-1}(\hat{\bGamma}^0)^{\T} \hat{\bGamma}^0)^{1/2}$ $ \hat{\cV}^0 (\hat{\cU}^0)^{-1/4}$. Here ${\hat\cV^0}$ and ${\hat\cU}^0$ contain the eigenvalues and eigenvectors of $(nq)^{-1} \big(({\hat\bGamma}^{0})^{\T} {\hat\bGamma}^0 \big)^{1/2}\big(({\hat\Ub}^{0})^{\T} +{\hat\Ab}^{0} \Xb^{\T}\big) \big({\hat\Ub}^{0} + \Xb({\hat\Ab}^{0})^\T \big) $ $\big(({\hat\bGamma}^{0})^{\T} {\hat\bGamma}^0 \big)^{1/2}$, respectively. 

By the definition of $\hat\Gb^0$, we know that $\tilde\bphi^* := (\tilde\bGamma^*,\tilde\Ub^*,\tilde\Bb^* )$ satisfy Condition~2. Next we know from the proof of Lemma~\ref{prop_consistency_AG} that $\hat\Ab^0$ is unique w.h.p., which implies that $\tilde\bphi^*$ also satisfy Condition~1 w.h.p.. By Proposition~3, and that $\hat\bphi^0$ is equivalent to $\tilde\bphi^*$ and $\hat\bphi^*$ up to a set of (different) linear transformations, we know that $\tilde\bphi^*$ is asymptotic equivalent to $\hat\bphi^*$ obtained in Section~3.2 of main text.


        
To summarize, our main results for $\hat\bphi^*$ will be established by showing the consistency and asymptotic normality of the estimation $\hat\bphi^0$ for $\bphi^0$ given in \eqref{eq:cmle st 12 prime} together with the consistency of the estimated transformation matrices $\hat\Ab^0$ and $\hat\Gb^0$. 
       This strategy is summarized in the following and illustrated in the subsequent figure.
        \begin{enumerate}
            \item[(I)] Establish the consistency and asymptotic normality for $\hat\bphi^0$ in Lemmas~\ref{prop:average consistency} and \ref{thm:asymptotic normality}.
            \item[(II)] 
            Show the transformation matrices from $\bphi^0$ to $\bphi^*$, $\Ab^0$ and $\Gb^0$, can be consistently estimated by $\hat{\Ab}^0$ and $\hat{\Gb}^0$ in Lemma~\ref{prop_consistency_AG}. 
            The transformed estimators $\tilde\Bb^*$ $\tilde\bGamma^*$ and $\tilde\Ub^*$ are asymptotically equivalent to the target joint maximum likelihood estimators $\hat\Bb^*$, $\hat\bGamma^*$, and $\hat\Ub^*$.
            \item[(III)]  Prove the estimation consistency and asymptotic normality for $\hat\bphi^*$ in Section~\ref{sec:prove_thm2} with the results for $\hat{\bphi}^0$ in (I) and for $\hat{\Ab}^0$ and $\hat{\Gb}^0$ in (II). 
        \end{enumerate}
\begin{center}
\begin{tikzpicture}[scale=2, every node/.style={font=\large}]  
    \node (phihat0) {$\hat{\boldsymbol{\phi}}^0$};
    \node[right=5cm of phihat0] (phihatstar) {$\tilde{\boldsymbol{\phi}}^*$};  
    \node[above=4cm of phihat0] (phi0) {${\boldsymbol{\phi}}^0$};  
        \node[right=2cm of phihatstar](asyhatphistar){$\hat\bphi^*$};
    \node[above=4cm of asyhatphistar] (phistar) {${\boldsymbol{\phi}}^*$};  
    
    \draw[->, line width=1mm] (phihat0) -- (phihatstar) node[midway, xshift = 0.35cm, above] {$\hat{\mathbf{A}}^0, \hat{\mathbf{G}}^0$};  
    \draw[<->, line width=1mm] (phihatstar) -- (asyhatphistar) node[midway, xshift = 0cm, above,font=\small] {asymp.};  
    \draw[->, line width=1mm] (phihat0) -- (phi0) node[midway, left, rotate=90, left=0.5cm, xshift=1cm] {{estimate}};  // Thicker arrows
    \draw[->, line width=1mm] (asyhatphistar) -- (phistar) node[midway, right, rotate=90, left=-0.5cm, xshift=1.5cm] {estimate (goal)};  // Thicker arrows
    \draw[->, line width=1mm] (phi0) -- (phistar) node[midway, xshift = -0.6cm, below] {${\mathbf{A}}^0, {\mathbf{G}}^0$};  // Thicker arrows
\draw[->, bend left=35, looseness=1.2, line width=0.75mm] ([yshift=0.2cm,  xshift=1cm]phihat0.east) to ([yshift=-0.2cm,  xshift=1cm]phi0.east) node[midway, right, yshift = 1cm, xshift=1cm,font=\tiny] {estimate};

\end{tikzpicture}
\end{center}
\begin{remark}
    As discussed previously, Condition 1$^{\prime}$ and 2$^{\prime}$ are easier to work with because for any $(\bGamma,\Ub,\Bb)$, we can find transformation matrices $\Ab$ and $\Gb$ such that $(\bGamma(\Gb^\T)^{-1},(\Ub - \Xb\Ab^\T)\Gb,\Bb+\bGamma\Ab)$ satisfy Condition 1$^{\prime}$ and 2$^{\prime}$.
\end{remark}
\subsection{Proof of Theorem~4.1}\label{sec:prove_thm1}

Based on $\hat\bphi^0$ defined in~\eqref{eq:cmle st 12 prime}, we have constructed the estimators as follows
\begin{align}
        \tilde\Bb^*  = \hat\Bb^0 - \hat\bGamma^0 \hat\Ab^0 ; \quad
        \tilde\bGamma^* = \hat\bGamma^0  (\hat\Gb^{0})^{-\T}; \quad
        \tilde\Ub^* = \{\hat\Ub^0  + \Xb (\hat\Ab^{0})^{\T}\} \hat\Gb^0 ,\label{eq:Uhat to Ustar}
\end{align}
which are asymptotically equivalent to the joint maximum likelihood estimators $\hat\Bb^*$, $\hat\bGamma^*$, and $\hat\Ub^*$. Therefore we have
     \begin{align}
          \hat{\Bb}^* - \Bb^* &= \hat{\Bb}^0 - \Bb^0 - (\hat{\bGamma}^0 \hat{\Ab}^0- \bGamma^0 \Ab^0) \nonumber  \\
          &= \hat{\Bb}^0 - \Bb^0 - (\hat{\bGamma}^0  -\bGamma^0) \hat{\Ab}^0 - \bGamma^0 (\hat{\Ab}^0 -\Ab^0 ). \label{eq:Bstar decompose}
     \end{align}

To continue the derivation of estimation error bound for $\hat{\Bb}^*$, we introduce the following lemmas related to the estimation consistency of $\hat{\bphi}^0$ and transformation matrices $\hat{\Ab}^0$ and $\hat{\Gb}^0$. The detailed proof and discussion for these lemmas are provided in Section~\ref{sec:proveadd_prop}.

\noindent {\bf Lemma~\ref{prop:average consistency}.}~(Average Consistency)~Under Assumptions~\ref{assumption: psd covariance}--\ref{assumption:smoothness} and $p=o(n\wedge q)$, $n^{-1}\|\hat{\Ub}^0 - \Ub^0\|_F^2 = O_p(\zeta_{nq,p}^{-2})$, $q^{-1} \|\hat{\bGamma}^0 - \bGamma^0 \|_F^2 = O_p(\zeta_{nq,p}^{-2})$, and  $q^{-1} \|\hat{\Bb}^0 - \Bb^0 \|_F^2 = O_p(\zeta_{nq,p}^{-2})$.

\noindent {\bf Lemma~\ref{prop:ind consistency}.}~(Individual Consistency)~Under Assumptions~\ref{assumption: psd covariance}--\ref{assumption: Scaling}, we have for any $j\in[q]$
    \begin{equation*}
        \|\hat {\bgamma}_j^0-\bgamma_j^0\|_2=O_p\big(\zeta_{nq,p}^{-1}\big),\;\|\hat{\bbeta}_j^0 -\bbeta_j^0\|_2=O_p\big(\zeta_{nq,p}^{-1}\big),
    \end{equation*}
    and for any $i\in[n]$
    \begin{equation*}
        \|\hat\bU_i^0-\bU_i^0\|_2=O_p({\sqrt{p}}{\zeta_{nq,p}^{-1}}).
    \end{equation*}

\noindent {\bf Lemma~\ref{prop_consistency_AG}.}~(Consistency of transformation matrices)~Under Assumption~\ref{assumption: psd covariance}--\ref{assumption:asymptotic normality}, 
   $\hat\Ab^0$ and $\hat\bG^0$ are consistent estimation of $\Ab^0$ and $\Gb^0:=(\Gb^\ddagger)^{-1}$ such that
    \begin{align*}
    \| \hat{\Ab}^0 - \Ab^0 \|_F =&  O_p\Big(\frac{\sqrt{p}}{\zeta_{nq,p}}\Big),\\\| \hat{\Gb}^0 - \Gb^0\|_F =& O_p \left(  \frac{\sqrt{p}}{\zeta_{nq,p}}\vee\frac{{p}^{3/2}(nq)^{3/\xi}}{\zeta_{nq,p}^{2}}\right).
\end{align*}
If we further assume Assumption~\ref{assumption:A consistency}, we have 
\begin{align*}
    \| \hat{\Ab}^0 - \Ab^0 \|_F &= O_p \Big( \Big(\frac{\sqrt{p}}{\zeta_{nq,p}}\Big)\wedge  \Big(\sqrt{p}\iota_{nq,p}\vee\frac{p^{3/2}}{(nq)^{3/\xi}}\Big) \Big) = : O_p(A_{nq,p});\\
    \| \hat{\Gb}^0 - \Gb^0\|_F &= O_p \Big(p\iota_{nq,p}\vee\frac{{p}^{3/2}(nq)^{3/\xi}}{\zeta_{nq,p}^{2}}\Big) =: O_p(G_{nq,p}).
\end{align*}
In particular, under condition $p^{3/2}\zeta_{nq,p}^{-1} (nq)^{3/\xi}\to0$ and Assumption~\ref{assumption:A consistency} that $p\sqrt{n}\iota_{nq,p}\to 0$, we have $\| \hat{\Ab}^0 - \Ab^0 \|_F,\| \hat{\Gb}^0 - \Gb^0\|_F=o_p\big(  \zeta_{nq,p}^{-1}\big)$.


 Combining~\eqref{eq:Bstar decompose} with consistency results in Lemma~\ref{prop:average consistency} and Lemma~\ref{prop_consistency_AG}, we have
\begin{align*}
    \|  \hat{\Bb}^* - \Bb^* \|_F & \le \|  \hat{\Bb}^0 - \Bb^0 \|_F + \| \hat{\bGamma}^0  -\bGamma^0\|_F ( \|\Ab^0 \|_F +\| \hat{\Ab}^0 -\Ab^0 \|_F) + \| \bGamma^0\|_F \|\hat{\Ab}^0 -\Ab^0\|_F \\
    & =O_p\big( q^{1/2} \zeta_{nq,p}^{-1} + q^{1/2} \zeta_{nq,p}^{-1} ( \sqrt{p} + A_{nq,p})  +q^{1/2} A_{nq,p}\big) \\
     & =O_p\big( q^{1/2}p^{1/2} \zeta_{nq,p}^{-1}\big),
\end{align*}
where the last equality is because $\| \hat{\Ab}^0 - \Ab^0 \|_{F} =O_p(A_{nq,p})= O_p(p^{-1/2}n^{-1})$ under Assumption~\ref{assumption: psd covariance}--\ref{assumption:A consistency}.
Hence we have
    $$q^{-1}  \|  \hat{\Bb}^* - \Bb^* \|_F^2 = O_p\big(p\zeta_{nq,p}^{-2}\big).$$


Under Assumptions~\ref{assumption: psd covariance}--\ref{assumption:A consistency}, and the scaling condition $p^{3/2}(nq)^{\epsilon+3/\xi}(p^{1/2} n^{-1/2} +q^{-1/2})=o(1)$,  we have $\|\hat{\Gb}^0 - \Gb^0\|_F = O_p(G_{nq,p}) = o_p(\zeta_{nq,p}^{-1})$.
 For $\| \hat{\bGamma}^* - \bGamma^*\|_F$, by the definition in \eqref{eq:Uhat to Ustar} and consistency results in Lemmas~\ref{prop:average consistency} and~\ref{prop_consistency_AG}, 
 we have
    \begin{align*}
        q^{-1/2}\| \hat{\bGamma}^* - \bGamma^*\|_F \le & q^{-1/2}\big\| \hat{\bGamma}^0 \big\{ (\hat{\Gb}^0)^{-\T} - (\Gb^0)^{-\T}\big\}\big \|_F + q^{-1/2}\big\| (\hat{\bGamma}^0 - \bGamma^0)  (\Gb^0)^{-\T} \big\|_F \\
        = & O_p\big( \zeta_{nq,p}^{-1}+G_{nq,p}\big)  \\
        = & O_p\big( \zeta_{nq,p}^{-1}\big).
    \end{align*}
For $\|\hat{\Ub}^*-\Ub^*\|$, we first show by Assumption~\ref{assumption: psd covariance} and Lemma~\ref{prop_consistency_AG} that
\begin{align*}
     \big\|\hat\Ab^{0}\bX_i\big\|_2&\le \big\|\Ab^{0}\bX_i\big\|_2+\big\|\Ab^0-\hat\Ab^{0}\big\|_F\big\|\bX_i\big\|_2\\&\le\big\|(\Gb^0)^{-\T}\bU_i^*\big\|_2+\big\|\bU_i^0\big\|_2+\big\|\Ab^0-\hat\Ab^{0}\big\|_F\big\|\bX_i\big\|_2= O_p(1),
\end{align*}
which also implies that $\|\Xb(\hat\Ab^0)^\T\|=O_p(\sqrt n)$.
Then we have
\begin{align*}
    n^{-1/2}\| \hat{\Ub}^* - \Ub^* \|_F \le&
    n^{-1/2} \| (\hat\Ub^0 + \Xb (\hat\Ab^{0})^{\T}) \hat\Gb^0 -  (\Ub^0 + \Xb (\Ab^{0})^{\T}) \Gb^0 \|_F \\
    \le & n^{-1/2}\| \hat\Ub^0 + \Xb (\hat\Ab^{0})^{\T}\|_F \|\hat\Gb^0  - \Gb^0 \|_F \\
    &+n^{-1/2} \| \hat\Ub^0 + \Xb (\hat\Ab^{0})^{\T}-  (\Ub^0 + \Xb (\Ab^{0})^{\T})\|_F \| \Gb^0\|_F \\=&O_p(G_{nq,p}+\zeta_{nq,p}^{-1})\\
        = & O_p\big( \zeta_{nq,p}^{-1}\big).
\end{align*}        
\begin{remark}\label{remark:add_results_phi_star}
 For the individual rate $\hat{\bbeta}_j^* - \bbeta_j^*$, we apply a similar technique and write
    \begin{align*}
        \hat{\bbeta}_j^* - \bbeta_j^* &= \hat{\bbeta}_j^0 - (\hat\Ab^{0})^{\T} \hat{\bgamma}_j^0 - (\bbeta_j^0 - (\Ab^{0})^{\T} \bgamma_j^0) \\
        & = \hat{\bbeta}_j^0 - \bbeta_j^0 -(\hat{\Ab}^0 - \Ab^0)^{\T} \hat{\bgamma}_j^0- (\Ab^{0})^{\T} (\hat{\bgamma}_j^0 - \bgamma_j^0).
    \end{align*}
    Therefore, by Lemmas~\ref{prop:ind consistency}--\ref{prop_consistency_AG}, we show that $
        \|  \hat{\bbeta}_j^* - \bbeta_j^*\|_2 = O_p\big(
{\zeta_{nq,p}^{-1} +A_{nq,p} +p^{1/2}  \zeta_{nq,p}^{-1}} \big)
         =O_p( p^{1/2} \zeta_{nq,p}^{-1})$. 

      Similarly, for the individual rates of $\hat{\bgamma}_j^* - \bgamma_j^*$ and $\hat{\bU}_i^* - \bU_i^*$, we have that $
        \hat{\bgamma}_j^* - \bgamma_j^*$ $ = (\hat{\Gb}^0)^{-1} \hat{\bgamma}_j^0 - (\Gb^0)^{-1} \bgamma_j^0$.
Therefore, by Lemmas~\ref{prop:ind consistency}--\ref{prop_consistency_AG},  \begin{align*}
    \|  \hat{\bgamma}_j^* - \bgamma_j^* \|_2 & \le \|(\hat{\Gb}^0)^{-1} \|_F \| \hat{\bgamma}_j^0 - \bgamma_j^0\|_2 + \| (\hat{\Gb}^0)^{-1}  - (\Gb^0)^{-1}\|_F \|\bgamma_j^0\|_2 \\
    & = O_p\big( \zeta_{nq,p}^{-1}+G_{nq,p}\big)\\
      &  =  O_p\big( \zeta_{nq,p}^{-1}\big).
\end{align*}
Similarly, by $
        \hat{\bU}_i^* - \bU_i^* = (\hat{\Gb}^0)^{\T} (\hat{\bU}_i^0 + \hat{\Ab}^0\bX_i ) - ({\Gb}^0)^{\T} ({\bU}_i^0 + {\Ab}^0\bX_i )$,
we conclude
\begin{align*}
    \| \hat{\bU}_i^* - \bU_i^*\| & \le \| (\hat{\Gb}^0)^{\T} \| \| \hat{\bU}_i^0 + \hat{\Ab}^0\bX_i  -  ({\bU}_i^0 + {\Ab}^0\bX_i )\| + \|\hat{\Gb}^0 - \Gb^{0}\| \|{\bU}_i^0 + {\Ab}^0\bX_i\|\\
     & = O_p\big(\zeta_{nq,p}^{-1}+G_{nq,p}\big)\\
       & =  O_p\big( \zeta_{nq,p}^{-1}\big).
\end{align*}

\end{remark}

\subsection{Proof of Theorem~4.2}\label{sec:prove_thm2}
To prove the asymptotic normality of $\hat{\bbeta}_j^*$, we first write $\hat{\bbeta}_j^* - \bbeta_j^*$ into
 \begin{align*}
       \sqrt{n} (\hat{\bbeta}_j^* - \bbeta_j^*) &=  \sqrt{n}\{ \hat{\bbeta}_j^0 - (\hat\Ab^{0})^{\T} \hat{\bgamma}_j^0 - (\bbeta_j^0 - (\Ab^{0})^{\T} \bgamma_j^0)\} \\
        & =  \sqrt{n} \{ (\hat{\bbeta}_j^0 - \bbeta_j^0) -  (\Ab^{0})^{\T} (\hat{\bgamma}_j^0 - \bgamma_j^0)\} -  \sqrt{n}(\hat{\Ab}^0 - \Ab^0)^{\T} \hat{\bgamma}_j^0.
    \end{align*}
Because $\|\sqrt{n}(\hat{\Ab}^0 - \Ab^0)^{\T} \hat{\bgamma}_j^0\|_{\infty} \le \sqrt{n} \|\hat{\Ab}^0 - \Ab^0\|_{ \infty} \|\hat{\bgamma}_j^0\|_{\infty} \lesssim \sqrt{n}\big(p\iota_{nq,p}+p^{3/2}(nq)^{3/\xi}\zeta_{nq,p}^{-2}\big)\to 0$, we further have
\begin{align*}
     \sqrt{n} (\hat{\bbeta}_j^* - \bbeta_j^*) &= \sqrt{n} \{ (\hat{\bbeta}_j^0 - \bbeta_j^0) -  (\Ab^{0})^{\T} (\hat{\bgamma}_j^0 - \bgamma_j^0)\} + o_p(1).
\end{align*}

Recall the definition $\boldsymbol{f}_j=(\bgamma_j^\T,\bbeta_j^\T)^\T$. The asymptotic distributions for $\hat{\bU}_i^0$ and $\hat{\bbf}_j^0$ are derived in the next lemma, which are crucial in proving the asymptotic normality for $\hat{\bbeta}_j^*$, $\hat{\bgamma}_j^*$, and $\hat{\bU}_i^*$.  To present the asymptotic normality of the estimators $\hat{\bU}_i^0$ and $\hat{\bbf}_j^0$, we define \begin{equation}
     \bPhi_{jz}^0 = \begin{pmatrix}
         (\bar\Gb^{\ddagger})^{\T}&(\bar\Gb^{\ddagger})^{\T}\bar\Ab^{\ddagger}\\\zero&\Ib_p\end{pmatrix}\bPhi_{jz}^*\begin{pmatrix}
         \bar\Gb^{\ddagger}&\zero\\(\bar\Ab^{\ddagger})^\T\bar\Gb^{\ddagger}&\Ib_p\end{pmatrix},\label{eq_definephi_jz}\end{equation} and \begin{equation}
     \bPhi_{i\gamma}^0 =
         (\bar\Gb^{\ddagger})^{-\T}\bPhi_{i\gamma}^*
         (\bar\Gb^{\ddagger})^{-1}.\label{eq_definephi_igamma}\end{equation}
         
\noindent{\bf Lemma~\ref{thm:asymptotic normality}.}~(Asymptotic Normality)
Under Assumptions~\ref{assumption: psd covariance}--\ref{assumption:asymptotic normality}, 
we have the asymptotic distributions for the constrained maximum likelihood estimators $\hat{\bU}_i^0$ as 
\begin{equation*}
    \sqrt{q}(\bPhi_{i\gamma}^0)^{1/2}(\hat{\bU}_i^0-\bU_i^0) \overset{d}{\to} \cN(\zero_K, \Ib_K )\quad \text{ if }   p^{3/2}\sqrt{q}(nq)^{3/\xi}\zeta_{nq,p}^{-2}\to 0\text{, for all }i\in[n]
\end{equation*}
and $\hat{\bbf}_j^0$ as
\begin{equation*}
     \sqrt{n} \ba^{\T} (\bPhi_{jz}^0)^{1/2}(\hat{\bbf}_j^0-\bbf_j^0)\overset{d}{\to} \cN(0,1)  \quad \text{ if }   p\sqrt{n}(nq)^{3/\xi}\zeta_{nq,p}^{-2}\to 0\text{, for all }j\in[q].
\end{equation*}
for any $\ba\in\RR^{K+p}$ with $\|\ba\|_2 = 1$, 
where the asymptotic variance $\ba^\T(\bPhi_{i\gamma}^0)^{-1}\bb$ and $\ba^\T(\bPhi_{jz}^0)^{-1}\bb$
can be consistently estimated by 
    \begin{align*}
\ba^\T\big(\hat\bPhi_{i\gamma}^0\big)^{-1}\bb&=q\ba^\T\big\{\sum_{j=1}^q\hat l_{ij}^{\prime\prime}\hat\bgamma_j^0(\hat\bgamma_j^{0})^{\T}\big\}\big\{\sum_{j=1}^q(\hat l_{ij}^{\prime})^2\hat\bgamma_j^0(\hat\bgamma_j^{0})^{\T}\big\}^{-1}\{\sum_{j=1}^q\hat l_{ij}^{\prime\prime}\hat\bgamma_j^0(\hat\bgamma_j^{0})^{\T}\big\}\bb;\\ 
\ba^\T\big(\hat\bPhi_{jz}^0\big)^{-1}\bb&=n\bb^\T\big\{\sum_{i=1}^n\hat l_{ij}^{\prime\prime}\hat\bZ_i^0(\hat\bZ_i^0)^{\T}\big\}\big\{\sum_{i=1}^n(\hat l_{ij}^{\prime})^2\hat\bZ_i^0(\hat\bZ_i^0)^{\T}\big\}^{-1}\{\sum_{i=1}^n\hat l_{ij}^{\prime\prime}\hat\bZ_i^0(\hat\bZ_i^0)^{\T}\big\}\bb,
\end{align*}
with $\hat\bZ_i^0=((\hat\bU_i^0)^\T,\bX_i^\T)^\T$, $\hat{l}_{ij}^{\prime} = l_{ij}^{\prime} (\hat{\bgamma}_j^{0\intercal}\hat{\bU}_i^0  + \hat{\bbeta}_j^{0\intercal}\bX_i)$ and $\hat l_{ij}^{\prime\prime}= l_{ij}^{\prime\prime} (\hat{\bgamma}_j^{0\intercal}\hat{\bU}_i^0  + \hat{\bbeta}_j^{0\intercal}\bX_i)$ for any $\ba,\bb\in\RR^{K+p}$ with $\|\ba\|=\|\bb\|=1$.

Based on the asymptotic normality results for $\hat{\bbf}_j^0 - \bbf_j^0$ in Lemma~\ref{thm:asymptotic normality}, we have 
\begin{align*}
    \sqrt{n}\ba^{\T} (\bSigma_{\beta, j}^*)^{-1/2} (\hat{\bbeta}_j^* - \bbeta_j^*) \overset d\rightarrow \cN(0,1),
\end{align*}
for any $\ba \in \RR^{p}$ with $\| \ba\| = 1$. Here $\bSigma_\beta^*$ is defined as
\begin{align*}
    \bSigma_{\beta,j}^* =&\, \begin{pmatrix}-(\bar\Ab^0)^\T&\Ib_p\end{pmatrix}\big(\bPhi_{jz}^0\big)^{-1} \begin{pmatrix}-\bar\Ab^0\\\Ib_p\end{pmatrix}\\=&\, \begin{pmatrix}-(\bar\Ab^\ddagger)^\T\bar\Gb^\ddagger&\Ib_p\end{pmatrix}\begin{pmatrix}
         (\bar\Gb^{\ddagger})^{-\T}&-\bar\Ab^{\ddagger}\\\zero&\Ib_p\end{pmatrix}\big(\bPhi_{jz}^*\big)^{-1}\begin{pmatrix}
         (\bar\Gb^{\ddagger})^{-1}&\zero\\-(\bar\Ab^{\ddagger})^\T&\Ib_p\end{pmatrix} \begin{pmatrix}-(\bar\Gb^\ddagger)^\T\bar\Ab^\ddagger\\\Ib_p\end{pmatrix}\\=&\,\bar\Tb_{\beta}\big(\bPhi_{jz}^*\big)^{-1}\bar\Tb_{\beta}^\T.
\end{align*}
Denote  $\hat{l}_{ij}^{\prime} = l_{ij}^{\prime} (\hat{\bgamma}_j^{0\intercal}\hat{\bU}_i^0  + \hat{\bbeta}_j^{0\intercal}\bX_i)$ and $\hat l_{ij}^{\prime\prime}= l_{ij}^{\prime\prime} (\hat{\bgamma}_j^{0\intercal}\hat{\bU}_i^0  + \hat{\bbeta}_j^{0\intercal}\bX_i)$. With the consistent estimator for $\bPhi_{jz}^0$ given in Lemma~\ref{thm:asymptotic normality}, $\bb^\T\bSigma_{\beta,j}^*\ba$ can be consistently estimated by
\begin{align*}
\bb^\T\hat\bSigma_{\beta,j}^*\ba =&\bb^\T\begin{pmatrix}-(\hat\Ab^0)^\T&\Ib_p\end{pmatrix} \Big\{n^{-1}\sum_{i=1}^n\hat l_{ij}^{\prime\prime}\hat{\bZ}_i^0(\hat\bZ_i^{0})^{\T}\Big\}^{-1}\Big\{n^{-1}\sum_{i=1}^n(\hat l_{ij}^{\prime})^2\hat{\bZ}_i^0(\hat\bZ_i^{0})^{\T}\Big\} \\
&\Big\{n^{-1}\sum_{i=1}^n\hat l_{ij}^{\prime\prime}\hat{\bZ}_i^0(\hat\bZ_i^{0})^{\T}\Big\}^{-1}\begin{pmatrix}-\hat\Ab^0\\\Ib_p\end{pmatrix}\ba\\
=&\bb^\T\left[n\Big\{\sum_{i=1}^n\hat l_{ij}^{\prime\prime}\bX_i\bX_i^\T\Big\}^{-1}\Big\{\sum_{i=1}^n(\hat l_{ij}^{\prime})^2\bX_i\bX_i^\T\Big\}\Big\{\sum_{i=1}^n\hat l_{ij}^{\prime\prime}\bX_i\bX_i^\T\Big\}^{-1}\right.\\-&n(\hat\Ab^{0})^{\T}\Big\{\sum_{i=1}^n\hat l_{ij}^{\prime\prime}\hat\bU_i^0(\hat\bU_i^{0})^{\T}\Big\}^{-1}\Big\{\sum_{i=1}^n(\hat l_{ij}^{\prime})^2\hat\bU_i^0\bX_i^\T\Big\}\Big\{\sum_{i=1}^n\hat l_{ij}^{\prime\prime}\bX_i\bX_i^\T\Big\}^{-1}\\-& n\Big\{\sum_{i=1}^n\hat l_{ij}^{\prime\prime}\bX_i\bX_i^\T\Big\}^{-1}\Big\{\sum_{i=1}^n(\hat l_{ij}^{\prime})^2\bX_i(\hat\bU_i^{0})^{\T}\Big\}\Big\{\sum_{i=1}^n\hat l_{ij}^{\prime\prime}\hat\bU_i^0(\hat\bU_i^{0})^{\T}\Big\}^{-1}\hat\Ab^{0}\\ 
 + &n(\hat\Ab^{0})^{\T}\Big\{\sum_{i=1}^n\hat l_{ij}^{\prime\prime}\hat\bU_i^0(\hat\bU_i^{0})^{\T}\Big\}^{-1}\Big\{\sum_{i=1}^n(\hat l_{ij}^{\prime})^2\hat\bU_i^0(\hat\bU_i^{0})^{\T}\Big\}\Big\{\sum_{i=1}^n\hat l_{ij}^{\prime\prime}\hat\bU_i^0(\hat\bU_i^{0})^{\T}\Big\}^{-1}\hat\Ab^{0}\biggr]\ba,
\end{align*}
for any $\ba,\bb\in\RR^{K+p}$ with $\|\ba\|=\|\bb\|=1$.

Furthermore, 
we write
 \begin{align*}
      \sqrt{n} ( \hat{\bgamma}_j^* - \bgamma_j^*) =&  \sqrt{n}(\hat{\Gb}^0)^{-1} \hat{\bgamma}_j^0 -  \sqrt{n}(\Gb^0)^{-1} \bgamma_j^0. 
    \end{align*}
Similarly, under the condition $p\sqrt{n}\iota_{nq,p} = o(1)$ and $p^{3/2}n^{1/2}(nq)^{3/\xi}\zeta_{nq,p}^{-2}=o(1)$, we have $\big\|\sqrt{n}(\hat\Gb^0-\Gb^0)\hat\bgamma_j^0\big\|_{\infty}=o_p(1)$. By the asymptotic property of $\hat{\bbf}_j^0 - \bbf_j^0$ in Lemma~\ref{thm:asymptotic normality}, we show that
\begin{equation*}
    \sqrt{n} (\bSigma_{\gamma,j}^*)^{-1/2} (\hat{\bgamma}_j^* - \bgamma_j^*) \overset d\rightarrow \cN (\bm{0}, \Ib_K),
\end{equation*}
where $\bSigma_{\gamma,j}^* = \bar\Tb_\gamma(\bPhi_{jz}^*)^{-1}\bar\Tb^\T_\gamma$,
and can be consistently estimated by the plug-in estimator:
\begin{align*}
    \hat{\bSigma}_{\gamma, j}^* &= (\hat{\Gb}^{0})^{-1}\Big\{ n\big(\sum_{i=1}^n\hat l_{ij}^{\prime\prime}\hat\bU_i^0(\hat\bU_i^{0})^{\T}\big)^{-1}(\sum_{i=1}^n(\hat l_{ij}^{\prime})^2\hat\bU_i^0(\hat\bU_i^{0})^{\T}\big)(\sum_{i=1}^n\hat l_{ij}^{\prime\prime}\hat\bU_i^0(\hat\bU_i^{0})^{\T}\big)^{-1}\Big\} (\hat{\Gb}^0)^{-\T}.
\end{align*}

Under $p\sqrt{q}\iota_{nq,p}=o(1)$, $q=O(n)$ and $p^{3/2}q^{1/2}(nq)^{3/\xi}\zeta_{nq,p}^{-2}=o(1)$, we have $\|\hat\Ab^0-\Ab^0\|=o_p\big(q^{-1/2}\big)$ and $\|\hat\Gb^0-\Gb^0\|=o_p\big(q^{-1/2}\big)$, which imply
    \begin{align*}
     \sqrt{q}(  \hat{\bU}_i^* - \bU_i^* ) &= \sqrt{q}(\hat{\Gb}^0)^{\T}  (\hat{\bU}_i^0 + \hat{\Ab}^0\bX_i ) - \sqrt{q} ({\Gb}^0)^{\T} ({\bU}_i^0 + {\Ab}^0\bX_i ) \\
        &=\sqrt{q}({\Gb}^0)^{\T} \{\hat{\bU}_i^0 - {\bU}_i^0 + (\hat{\Ab}^0 - \Ab^0) \bX_i\} +\sqrt{q}(\hat\Gb^0-\Gb^0)^\T(\hat\bU_i+\hat\Ab^0\bX_i)\\
        & =\sqrt{q} ({\Gb}^0)^{\T} (\hat{\bU}_i^0 - {\bU}_i^0) + o_p(1).
    \end{align*}
Hence, by Lemma~\ref{thm:asymptotic normality},
\begin{equation*}
    \sqrt{q} (\bSigma_{u,i}^*)^{-1/2} (\hat{\bU}_i^* - \bU_i^*) \overset d\rightarrow \cN (\bm{0}, \Ib_K),
\end{equation*}where $   \bSigma_{u,i}^* =  (\bPhi_{i\gamma}^*)^{-1}$,
and can be estimated by the plug-in estimator:
\begin{align*}
     \hat{\bSigma}_{u,i}^*  =  (\hat{\Gb}^0)^\T\Big\{q\big(\sum_{j=1}^q\hat l_{ij}^{\prime\prime}\hat\bgamma_j^0(\hat\bgamma_j^{0})^{\T}\big)^{-1}\big(\sum_{j=1}^q(\hat l_{ij}^{\prime})^2\hat\bgamma_j^0(\hat\bgamma_j^{0})^{\T}\big)\big(\sum_{j=1}^q\hat l_{ij}^{\prime\prime}\hat\bgamma_j^0(\hat\bgamma_j^{0})^{\T}\big)^{-1}\Big\}\hat\Gb^0.
\end{align*}

\subsection{Consistent Estimation for the Asymptotic Covariance Matrices in Theorem~4.2}\label{sec:estimate_variance}
In this subsection, we present the estimators for the asymptotic covariance matrices in Theorem~4.2 based on the estimator $\hat\bphi^0$.  
For each $j\in[q]$, for any $\ba,\bb\in\RR^{p}$ with $\|\ba\|=\|\bb\|=1$, the scaled asymptotic covariance matrices of $\bb^\T\bSigma_{\beta,j}^*\ba$ is consistently estimated by
       \begin{align}
\bb^\T\hat\bSigma_{\beta,j}^*\ba=&n\bb^\T\Big\{\sum_{i=1}^n\hat l_{ij}^{\prime\prime}\bX_i\bX_i^\T\Big\}^{-1}\Big\{\sum_{i=1}^n(\hat l_{ij}^{\prime})^2\bX_i\bX_i^\T\Big\}\Big\{\sum_{i=1}^n\hat l_{ij}^{\prime\prime}\bX_i\bX_i^\T\Big\}^{-1}\ba\nonumber\\&-n(\hat\Ab^{0}\bb)^{\T}\Big\{\sum_{i=1}^n\hat l_{ij}^{\prime\prime}\hat\bU_i^0(\hat\bU_i^{0})^{\T}\Big\}^{-1}\Big\{\sum_{i=1}^n(\hat l_{ij}^{\prime})^2\hat\bU_i^0\bX_i^\T\Big\}\Big\{\sum_{i=1}^n\hat l_{ij}^{\prime\prime}\bX_i\bX_i^\T\Big\}^{-1}\ba\nonumber\\&-n\bb^\T\Big\{\sum_{i=1}^n\hat l_{ij}^{\prime\prime}\bX_i\bX_i^\T\Big\}^{-1}\Big\{\sum_{i=1}^n(\hat l_{ij}^{\prime})^2\bX_i(\hat\bU_i^{0})^{\T}\Big\}\Big\{\sum_{i=1}^n\hat l_{ij}^{\prime\prime}\hat\bU_i^0(\hat\bU_i^{0})^{\T}\Big\}^{-1}\hat\Ab^{0}\ba\nonumber\\&+n(\hat\Ab^{0}\bb)^{\T}\Big\{\sum_{i=1}^n\hat l_{ij}^{\prime\prime}\hat\bU_i^0(\hat\bU_i^{0})^{\T}\Big\}^{-1}\Big\{\sum_{i=1}^n(\hat l_{ij}^{\prime})^2\hat\bU_i^0(\hat\bU_i^{0})^{\T}\Big\}\Big\{\sum_{i=1}^n\hat l_{ij}^{\prime\prime}\hat\bU_i^0(\hat\bU_i^{0})^{\T}\Big\}^{-1}\hat\Ab^{0}\ba, \label{eq:estimate sigma beta star}
    \end{align}
    and the asymptotic covariance matrices of $\bSigma_{\gamma,j}^*$ can be consistently estimated by
\begin{align}
    \hat{\bSigma}_{\gamma, j}^* &= (\hat{\Gb}^{0})^{-1}\Big\{ n\big(\sum_{i=1}^n\hat l_{ij}^{\prime\prime}\hat\bU_i^0(\hat\bU_i^{0})^{\T}\big)^{-1}(\sum_{i=1}^n(\hat l_{ij}^{\prime})^2\hat\bU_i^0(\hat\bU_i^{0})^{\T}\big)(\sum_{i=1}^n\hat l_{ij}^{\prime\prime}\hat\bU_i^0(\hat\bU_i^{0})^{\T}\big)^{-1}\Big\} (\hat{\Gb}^0)^{-\T}. \label{eq:estimate sigma gamma star}
\end{align}
For each $i\in[n]$, the asymptotic covariance matrices of $\bSigma_{u,i}^*$ can be consistently estimated by
\begin{align}
     \hat{\bSigma}_{u,i}^*  =  (\hat{\Gb}^0)^\T\Big\{q\big(\sum_{j=1}^q\hat l_{ij}^{\prime\prime}\hat\bgamma_j^0(\hat\bgamma_j^{0})^{\T}\big)^{-1}\big(\sum_{j=1}^q(\hat l_{ij}^{\prime})^2\hat\bgamma_j^0(\hat\bgamma_j^{0})^{\T}\big)\big(\sum_{j=1}^q\hat l_{ij}^{\prime\prime}\hat\bgamma_j^0(\hat\bgamma_j^{0})^{\T}\big)^{-1}\Big\}\hat\Gb^0. \label{eq:estimate sigma u star}
\end{align}

Next we prove Corollary~1 that the above estimators are consistent.

\subsection{Proof of Corollary~1}\label{sec:consistent estimator cov}

Recall that we denote $\hat{w}_{ij}^0 = \hat{\bgamma}_j^{0\intercal}\hat{\bU}_i^0  + \hat{\bbeta}_j^{0\intercal}\bX_i$ and $\hat\bZ_i^0=((\hat\bU_i^0)^\T,\bX_i^\T)^\T$. 
Before we prove the estimation consistency of $ \hat{\bSigma}_{\beta, j}^*$, $ \hat{\bSigma}_{\gamma, j}^*$ and $ \hat{\bSigma}_{u, i}^*$, we introduce the following lemma.
The convergence rates of individual estimators $\hat{\bgamma}_j^0$ and $\hat{\bbeta}_j^0$ match with their corresponding average convergence rates in Lemma~\ref{prop:average consistency}. Comparing the individual convergence rates of $\hat{\bU}_i^0$ with the average rates of $\hat{\Ub}^0$,
 the individual rate is higher than the corresponding average rate by a factor of $\sqrt{p}$.

Based on the individual consistency results for $\hat{\bgamma}_j^0$, $\hat{\bbeta}_j^0$, and $\hat{\bU}_i^0$ in Lemma~\ref{prop:ind consistency}, we have $|\hat{w}_{ij}^0 - w_{ij}^0| = O_p(\sqrt{p} \zeta_{nq,p}^{-1})$.
    By the continuity of functions $l_{ij}^{\prime} (\cdot)$ and $l_{ij}^{\prime\prime} (\cdot)$ in Assumption~\ref{assumption:smoothness}, we obtain $l_{ij}^{\prime} (\hat{w}_{ij}^0) \overset{p}{\rightarrow} l_{ij}^{\prime} ({w}_{ij}^0) $ and $l_{ij}^{\prime\prime} (\hat{w}_{ij}^0) \overset{p}{\rightarrow} l_{ij}^{\prime\prime} ({w}_{ij}^0)$. Finally by weak law of large numbers and Assumption~\ref{assumption:asymptotic normality}, $\bb^\T\{-n^{-1}\sum_{i=1}^n\hat l_{ij}^{\prime\prime}\hat\bZ_i^0(\hat\bZ_i^0)^{\T}\}^{-1} \ba\overset{p}{\rightarrow} \bb^\T(\bPhi_{jz}^0)^{-1}\ba$ and $n^{-1}\bb^\T \sum_{i=1}^n(\hat l_{ij}^{\prime})^2\hat\bZ_i^0(\hat\bZ_i^0)^{\T}\ba \overset{p}{\rightarrow} \bb^\T\bPhi_{jz}^0\ba$ for any $\ba,\bb\in\RR^{K+p}$ with $\|\ba\|=\|\bb\|=1$, which further show
\begin{align*}
    \bb^\T\hat\bPhi_{jz}^0\ba&=n\bb^\T\big(\sum_{i=1}^n\hat l_{ij}^{\prime\prime}\hat\bZ_i^0(\hat\bZ_i^0)^{\T}\big)^{-1}\big\{\sum_{i=1}^n(\hat l_{ij}^{\prime})^2\hat\bZ_i^0(\hat\bZ_i^0)^{\T}\big\}(\sum_{i=1}^n\hat l_{ij}^{\prime\prime}\hat\bZ_i^0(\hat\bZ_i^0)^{\T}\big)^{-1}\ba
\end{align*}
is a consistent estimator for $\bb^\T\bPhi_{jz}^0\ba$ for any $\ba,\bb\in\RR^{K+p}$ with $\|\ba\|=\|\bb\|=1$. Since 
\begin{align*}
    \bSigma_{\beta,j}^* = \begin{pmatrix}-(\Ab^0)^\T&\Ib_p\end{pmatrix}(\bPhi_{jz}^0)^{-1} \begin{pmatrix}-\Ab^0\\\Ib_p\end{pmatrix}
\end{align*}
and $\| \hat{\Ab}^0 - \Ab^0 \|_F = O_p(\sqrt{p}\zeta_{nq,p}^{-1})$, we can show
\begin{align*}
    \bb^\T(\hat{\bSigma}_{\beta, j}^*)\ba = &\,\bb^\T\begin{pmatrix}-(\hat\Ab^0)^\T&\Ib_p\end{pmatrix}  \hat\bPhi_{jz}^0\begin{pmatrix}-\hat\Ab^0\\\Ib_p\end{pmatrix}\ba\\= &\,\bb^\T\begin{pmatrix}-(\hat\Ab^0)^\T&\Ib_p\end{pmatrix} \Big\{n^{-1}\sum_{i=1}^n\hat l_{ij}^{\prime\prime}\hat{\bZ}_i^0(\hat\bZ_i^{0})^{\T}\Big\}^{-1}\Big\{n^{-1}\sum_{i=1}^n(\hat l_{ij}^{\prime})^2\hat{\bZ}_i^0(\hat\bZ_i^{0})^{\T}\Big\} \ba\\
&+\bb^\T\Big\{n^{-1}\sum_{i=1}^n\hat l_{ij}^{\prime\prime}\hat{\bZ}_i^0(\hat\bZ_i^{0})^{\T}\Big\}^{-1}\begin{pmatrix}-\hat\Ab^0\\\Ib_p\end{pmatrix}\ba,
\end{align*}
consistently estimate $\bb^\T\bSigma_{\beta,j}^*\ba$ for any $\ba,\bb\in\RR^{K+p}$ with $\|\ba\|=\|\bb\|=1$. Similar arguments can be applied to show the consistency of $ \hat{\bSigma}_{\gamma, j}^*$ and $ \hat{\bSigma}_{u, i}^*$. This completes the proof of Corollary~1.

\section{Proofs of Lemmas in Section~\ref{sec:prove main}}\label{sec:proveadd_prop}

\subsection*{Preliminaries}\label{sec:prelim}

To prove the lemmas used in Section~\ref{sec:prove main},  
we introduce the following equivalent reformulation of \eqref{eq:cmle st 12 prime} to facilitate the theoretical analysis:
\begin{eqnarray}
		\hat{\bphi}^0  &=& \underset{\bphi\in\cB(D)}{\arg\min}~\cL (\Yb | \bphi), \label{eq:MLE 1prime and 2}
	\end{eqnarray}
	where we define $\cL (\Yb | \bphi) = -L(\Yb | \bphi) - P(\bGamma, \Ub)$ as the loss function with
 \begin{align}
  &   L(\Yb|\bphi) = (nq)^{-1}\sum_{i=1}^n \sum_{j=1}^q l_{ij}(   \bgamma_j^{\intercal}\bU_i + \bbeta_j^{\intercal}{\Xb}_i), \nonumber\\
& \begin{array}{ll}
P(\bGamma, \Ub)  =& -{c} \|  \operatorname{diag}(q^{-1}\bGamma^{\intercal} \bGamma - n^{-1}\Ub^{\intercal} \Ub) \|_F^2 /{8} \\
&-{c} \|\operatorname{ndiag}(q^{-1}\bGamma^{\intercal} \bGamma )\|_F^2/2 -{c} \|  \operatorname{ndiag}(n^{-1}\Ub^{\intercal} \Ub) \|_F^2/2 \\
 &- c \| n^{-1}\Ub^\T\Xb \|_{F}^2/2,
\end{array} \nonumber \end{align}
where $0<c<b_L$ with $b_L=\min_{\bphi\in\cB(D)}\big|l^{\prime\prime} (\bgamma_j^\T\bU_i+\bbeta_j^\T\bX_i)\big|$. We know $b_L >0$ by Assumption~\ref{assumption:smoothness}.  Here $\operatorname{diag}\left(q^{-1}\bGamma^{\intercal} \bGamma - n^{-1}\Ub^{\intercal} \Ub\right)$ is the diagonal matrix consisting of the diagonal elements of $q^{-1}\bGamma^{\intercal} \bGamma - n^{-1}\Ub^{\intercal} \Ub$. The $\operatorname{ndiag}\left(q^{-1}\bGamma^{\intercal} \bGamma \right)$ is the upper triangular matrix consisting of the nondiagonal elements of $q^{-1}\bGamma^{\intercal} \bGamma$ and $\operatorname{ndiag}(n^{-1}\Ub^{\intercal} \Ub) $ is defined similarly.

We emphasize that under any choice of $c>0$,
minimizing \eqref{eq:MLE 1prime and 2}  is equivalent to minimizing \eqref{eq:cmle st 12 prime} subject to Condition 1$^{\prime}$ and Condition 2$^{\prime}$. It can be verified that for any $\bphi$, we have the penalty term $P(\bGamma, \Ub) = 0$ if identifiability conditions $1^{\prime}$--$2^{\prime}$ hold and the penalty term $P(\bphi) <0$ otherwise. For any $\Ab\in\RR^{K\times p}$ and invertible $\Gb\in\RR^{K\times K}$, $\bphi(\Gb,\Ab)$ consisting of $(\Ub+\Xb\Ab^\T)\Gb$, $\bGamma\Gb^{-\T}$, and $\Bb-\bGamma\Ab$ gives the same log-likelihood as that of $\bphi$. Among the equivalence class of estimators that maximize the log-likelihood function $L(\Yb | \bphi)$, we choose the solution to our problem to be the one that satisfies the identifiability conditions, which uniquely exists. 
  Therefore, our solution leads to $P(\hat{\bphi}^0)=0$ whereas all other solutions do not satisfy the proposed identifiability conditions as a result of negative penalty terms. Hence we conclude that the estimator from minimizing the objective function $\cL(\Yb | \bphi)$ is equivalent to that obtained by maximizing $L(\Yb | \bphi)$ under identifiability conditions $1^{\prime}$--$2^{\prime}$. We highlight that the choice of any positive $c$ shall yield the same estimation results and that we set $c$ to be smaller than $b_L$ just for convenience in our theoretical analysis. The similar idea of introducing the regularization term into the joint likelihood function and formulating the constrained maximum likelihood estimation has been extensively used in literature~\citep{Aitchison1958mle, silvey1959lagrangian, wang2022maximum,li2023statistical}

The derivations of the theoretical properties of the estimator $\hat\bphi^0$ are inspired by \cite{wang2022maximum}. We would like to emphasize that addressing the new identifiability issues arising from the transformation $\Ab$ and the additional $pq$ parameters for the covariate effects requires substantial efforts beyond existing work. First, the indeterminacy between the covariates $\bX_i$ and the latent factors $\bU_i$ is resolved by imposing additional $Kp$ constraints. These constraints are completely different from those used to address the rotational indeterminacy between factors and loadings in standard factor models without covariates. Specifically, we adopt different approaches for proving the average consistency and the local convexity of the Hessian matrix, compared to \cite{wang2022maximum}. Additionally, our Hessian matrix increases significantly in scale from that of a factor model. The number of parameters related to the covariate effects, $\bbeta_j$, is $pq$, with $p\to\infty$. This substantial increase of parameters requires a different analytical approach to derive tight bounds for the Hessian matrix.

We start with giving the expressions for the derivatives of the objective function in the estimation problem~\eqref{eq:MLE 1prime and 2}. 
For notation convenience, we further define $\Eb_{rl}^{(n)}$ to be the indicator matrix of dimension $n\times n$ with the $(r,l)$th element being $1$ and others being $0$s. 
To better present the block-structure of the Hessian matrix, we define the 
 indices: $K(r)=(r-1)K+r$ and $K(r,l)=(r-1)K+l$, and define index sets: $K_i=\{(i-1)K+1,\cdots,iK\}$ and $P_j=\{(j-1)(K+p)+1,\cdots,j(K+p)\}$.

The score function on $\bphi^0$ is written as a $(qK + qp+ nK )$-dimensional vector:
\begin{equation}
\bS(\bphi^0) = -\frac{\partial}{\partial \bphi} (nq)^{-1}\sum_{i=1}^n \sum_{j=1}^q l_{ij}\big((\bgamma_j^{0})^\T\bU_i^0 +  (\bbeta_j^{0})^\T {\Xb}_i    \big) ,\nonumber  
\end{equation}
which can be written in two parts: $\bS(\bphi)=(\bS_{f}(\bphi^0)^\T,\bS_U(\bphi^0)^\T)^\T$ with each part given as:
\begin{align}
	[\bS_f(\bphi^0)]_{[P_j]} &= -(nq)^{-1}\sum_{i=1}^n  l_{ij}^{\prime}(w_{ij}^0)\bZ_i^0, \nonumber \\
	[\bS_U(\bphi^0)]_{[K_i]} &= -(nq)^{-1}\sum_{j=1}^q  l_{ij}^{\prime}(w_{ij}^0) \bgamma_j ^0\nonumber,
\end{align}
where $w_{ij}^0=(\bgamma_j^0)^\T\bU_i^0+(\bbeta_j^0)^\T\bX_i$. Here it can be verified that the first order derivative of the penalty function on $\bphi^0$ is always zero. 

Next, we compute the second order derivative of the objective function. 
The Hessian matrix is computed as a summation of three parts, $\cH (\bphi)=  \partial_{\bphi\bphi}^2 \cL (\bphi) = \Hb_L(\bphi) + \Hb_{R}(\bphi) + \Hb_P(\bphi)$, with the block form given as 
\begin{equation*}
    \cH(\bphi)=\begin{bmatrix}
        \cH_{ff^{\prime}}(\bphi)&\cH_{fu^{\prime}}(\bphi)\\\cH_{uf^{\prime}}(\bphi)&\cH_{uu^{\prime}}(\bphi)
    \end{bmatrix},
\end{equation*}
where $\cH_{ff^{\prime}}(\bphi)\in \RR^{q(K+p)\times q(K+p)}$ and $\cH_{uu^{\prime}}(\bphi)\in \RR^{nK\times nK}$.
For $\Hb_L$, we have 
\begin{equation*}
	\Hb_L(\bphi) =\begin{bmatrix}
        \Hb_{Lff^\prime}(\bphi)&\Hb_{Lfu^\prime}(\bphi)\\\Hb_{Luf^\prime}(\bphi)&\Hb_{Luu^\prime}(\bphi)
    \end{bmatrix}
\label{eq:HL}
\end{equation*}
where the nonzero parts of the blocks are given as
\begin{align*}
    \big[\Hb_{Lff^\prime}\big]_{[P_j,P_j]}&=-\frac{1}{nq}\sum_{i=1}^nl_{ij}^{\prime\prime}(w_{ij})\bZ_i\bZ_i^\T, \\
    \big[\Hb_{Lfu^\prime}\big]_{[P_j,K_i]}&=-\frac{1}{nq}l_{ij}^{\prime\prime}(w_{ij})\bZ_i\bgamma_j^\T, \\
    \big[\Hb_{Luu^\prime}\big]_{[K_i,K_i]}&=-\frac{1}{nq}\sum_{j=1}^ql_{ij}^{\prime\prime}(w_{ij})\bgamma_j\bgamma_j^\T,
\end{align*}
with $\Hb_{L u f^{\prime}} = \Hb_{L f u^{\prime}}^{\T}$.
The matrix $\Hb_R$ is a complement term due to the chain rule in differentiation, which is written as 
 \begin{equation}
\Hb_R(\bphi)=\left(\begin{array}{lll}
\bm{0} & \Hb_{R f u^{\prime}}(\bphi) \\
\Hb_{R u f^{\prime}}(\bphi) & \bm{0} 
\end{array}\right), \label{eq:JL}
\end{equation}
where $\Hb_{R f u^{\prime}}(\bphi)$ is of dimension $q(K+p) \times nK$ with each block of size $(K+p) \times K$ and the $(i, j)$-th block is given as 
\begin{equation*}
    [\Hb_{R f u^{\prime}}(\bphi)]_{[P_j,K_i]}=-\frac{1}{nq}l_{ij}^{\prime}(w_{ij})\begin{pmatrix}
        \Ib_K\\\zero_{p\times K}
    \end{pmatrix}
\end{equation*}
$\Hb_{R u f^{\prime}}(\bphi)$ is the transpose of $\Hb_{R f u^{\prime}}$. Here $\Hb_L(\bphi)+\Hb_R(\bphi)$ is the Hessian matrix of $-L(\Yb|\bphi)$.
Then the last part $\Hb_P(\bphi)$ is a matrix due to the differentiation of penalty term $P(\bGamma, \Ub)$.
It can be derived that for $r, l \neq h \in [K]$ and $s \in [p]$,
\begin{align}
	\partial_{\bphi} [ ( \frac{\sum_{j=1}^q\gamma_{jr}^2}{q} - \frac{\sum_{i=1}^n U_{ir}^2}{n})^2] & =  4\left( \frac{\sum_{j=1}^q\gamma_{jr}^2}{q} - \frac{\sum_{i=1}^n U_{ir}^2}{n}\right)\Db_{q}^{-1}\bnu_{rr}; \nonumber \\
	\partial_{\bphi\bphi }^2 [ ( \frac{\sum_{j=1}^q\gamma_{jr}^2}{q} - \frac{\sum_{i=1}^n U_{ir}^2}{n})^2] &= 8\Db_{q}^{-1}\bnu_{rr} \bnu_{rr}^{\intercal} \Db_{q}^{-1} \nonumber \\
	  + 4\Biggr( \frac{\sum_{j=1}^q\gamma_{jr}^2}{q} & - \frac{\sum_{i=1}^n U_{ir}^2}{n}\Biggr)\Db_{q}^{-1}\Biggr[\begin{array}{cc}
\Ib_{q} \otimes \Eb_{rr}^{(K+p)}& \bm 0  \\
\bm 0 & \Ib_{n} \otimes \Eb_{rr}^{(K)}
\end{array}\Biggr]; \nonumber \\
	 \label{eq:HP3 1} \\
	 \partial_{\bphi\bphi}^2\big[(\sum_{j=1}^q \gamma_{j h} \gamma_{j l})^2\big]&=2[\bu_{ hl}\bu_{hl}^{\T}+(\sum_{j=1}^q  \gamma_{j r} \gamma_{j l}) \Db_{1,hl}]; \label{eq:HP3 2} \\
	  \partial_{\bphi\bphi}^2\big[(\sum_{i=1}^n U_{i h} U_{i l})^2\big]&=2[\bu_{ lh}\bu_{lh}^{\T}+(\sum_{i=1}^n  U_{i r} U_{il}) \Db_{2,lh}]; \label{eq:HP3 3} \\
	  \partial_{\bphi\bphi}^2\big[ (\sum_{i=1}^nU_{ir}X_{is})^2\big] &= 2 \bb_{rs}\bb_{rs}^{\T},\label{eq:HP3 4}
\end{align}
where for $r,l\neq h\in[K]$, $s\in[p]$, we have the following:
\begin{align*}  \Db_q&=\begin{pmatrix}
        q\Ib_{q(K+p)}&\zero_{q(K+p)\times nK}\\\zero_{nK\times q(K+p)}&n\Ib_{nK}
    \end{pmatrix};\\
     \bnu_{rl}&=\begin{pmatrix}
        \bGamma_{[,l]}\otimes \one_r^{(K+p)}\\-\Ub_{[,r]}\otimes \one_l^{(K)}
    \end{pmatrix};\;\bb_{rs}=\begin{pmatrix}
        \zero_{q(K+p)}\\\Xb_{[,s]}\otimes \one_r^{(K)}
    \end{pmatrix};\;  \\
     \bu_{hl}&=\begin{pmatrix}
        \bGamma_{[,h]}\otimes \one_l^{(K+p)}+\bGamma_{[,l]}\otimes \one_h^{(K+p)}\\\zero_{nK}
    \end{pmatrix}1_{(l<h)}+\begin{pmatrix}
        \zero_{q(K+p)}\\\Ub_{[,h]}\otimes \one_l^{(K)}+\Ub_{[,l]}\otimes \one_h^{(K)}
    \end{pmatrix}1_{(h<l)}\ ;\\
	\Db_{1,rl}&=\left[\begin{array}{cc}
\Ib_q \otimes (\Eb_{rl}^{(K+p)}+\Eb_{lr}^{(K+p)}) & \bm 0 \\
 \bm 0 & \Ib_{nK\times nK}
\end{array}\right];\; \\
\Db_{2,rl}&=\left[\begin{array}{cc}
\zero_{q(K+p)\times q(K+p)} & \bm 0 \\
\bm 0 & \Ib_n \otimes (\Eb_{rl}^{(K)}+\Eb_{lr}^{(K)})
\end{array}\right].\nonumber 
\end{align*}
Here $\otimes $ is the Kronecker product. By the identifiability conditions $1^{\prime}$--$2^{\prime}$, we have
\begin{eqnarray*}
&&	4\left( \frac{\sum_{j=1}^q(\gamma_{jr}^0)^2}{q} - \frac{\sum_{i=1}^n( U_{ir}^0)^2}{n}\right)\Db_{q}^{-1}\begin{bmatrix}
    \Ib_{q} \otimes \Eb_{rr}^{(K+p)}& \bm 0  \\
\bm 0 & \Ib_{n} \otimes \Eb_{rr}^{(K)}
\end{bmatrix} = 0;\\
&&(\sum_{j=1}^q  \gamma_{j h} ^0\gamma_{j l}^0) \Db_{1,hl} = 0; \quad (\sum_{i=1}^n  U_{i h}^0 U_{i l}^0) \Db_{2,lh} = 0.
\end{eqnarray*}
So the third part $\Hb_P(\bphi)$ on parameters $\bphi^0$ can be written as 
\begin{align}
	\Hb_P(\bphi^0) =& c\Big(\sum_{r=1}^K \Db_{q}^{-1}\bnu_{rr}^0 (\bnu_{rr}^0)^{\intercal} \Db_{q}^{-1} +\sum_{r=1}^K \sum_{l=r+1}^K \Db_{q}^{-1}\bu_{ hl}^0(\bu_{hl}^0)^{\T} \Db_{q}^{-1}\nonumber\\
 &+ \sum_{r=1}^K \sum_{l=r+1}^K  \Db_{q}^{-1}\bu_{ lh}^0(\bu_{lh}^0)^{\T}\Db_{q}^{-1}+\sum_{r=1}^K\sum_{s=1}^p\Db_{q}^{-1}\bb_{rs}^0(\bb_{rs}^0)^\T \Db_{q}^{-1}\Big).\label{eq:HP}
\end{align}

Define $\Vb_p\in \RR^{(q(K+p)+nK)\times (K^2+Kp)}$ with column vectors given as, for $r,l\neq h\in[K]$, $s\in[p]$, 
\begin{equation*}
\big[\Vb_p\big]_{[,K(r)]}=\Db_q^{-1/2}\bnu_{rr},\;\big[\Vb_p\big]_{[,(h-1)K+l]}=\Db_q^{-1/2}\bu_{hl},\;\big[\Vb_p\big]_{[,K^2+(r-1)p+s]}=\Db_q^{-1/2}\bb_{rs}.
\end{equation*}
Such  $\Vb_p$ is the low-rank decomposed matrix of the Hessian matrix of the penalty term $P(\bGamma,\Ub)$.
We assemble those decomposed vectors into $\bLambda=(\bLambda_1^\T,\bLambda_2^\T)^\T$
as follows: 
    \begin{equation*}
        \bLambda_1=\sqrt{\frac{c}{q}}[\Vb_p]_{[1:q(K+p),]},\;\bLambda_2=\sqrt{\frac{c}{n}}[\Vb_p]_{[q(K+p)+1:q(K+p)+nK,]},
    \end{equation*}
    Here the entries in the last $q(K+p)\times Kp$ part of $\bLambda_1$ are zeros. We also denote $\bLambda_{1\bullet}\in\RR^{q(K+p)\times K^2} $ as $[\bLambda_1]_{[,1:K^2]}$.
Then the matrix $\Hb_P({\bphi}^0)$ can be expressed as \begin{equation}
\Hb_P({\bphi}^0) = -\partial^2_{\bphi}P(\bphi^0)=\bLambda^0(\bLambda^0)^\T. \label{eq:rank1 decomp}
    \end{equation} 
Here we write $\bLambda^0$ to denote $\bLambda(\bphi^0)$ for short.

Moreover, we further define the following:
\begin{align*}
\bomega_{rs}=\begin{pmatrix}
        -\bGamma_{[,r]}\otimes \one_{K+s}^{(K+p)}\\\Xb_{[,s]}\otimes \one_r^{(K)}
    \end{pmatrix},
\end{align*}
and let  $\Vb_0 \in \RR^{(q(K+p)+nK)\times (K^2+Kp)}$ to be
\begin{align*}
\big[\Vb_0\big]_{[,K(r,l)]}=&\Db_q^{-1/2}\bnu_{rl},\;\big[\Vb_0\big]_{[,K^2+(r-1)p+s]}=\Db_q^{-1/2}\bomega_{rs},
\end{align*}
such that $\max_t\|\Vb_0\|_{[,t]}=O(1)$ and $\max_t\|\Vb_p\|_{[,t]}=O(1)$ inside the space ${\cB}(D)$. It can be verified that the column vectors of $\Vb_0$ form the null space of the Hessian matrix under regularization conditions, and can be approximated by $\Vb_p$, which leads to local convexity of the objective function in some regime. 

\subsection{Proof of Lemma~\ref{prop:average consistency}}
\begin{lemma}[Average Consistency]
    \label{prop:average consistency}
    	Under Assumptions~\ref{assumption: psd covariance}--\ref{assumption:smoothness} and $p=o(n\wedge q)$, $n^{-1}\|\hat{\Ub}^0 - \Ub^0\|_F^2 = O_p(\zeta_{nq,p}^{-2})$, $q^{-1} \|\hat{\bGamma}^0 - \bGamma^0 \|_F^2 = O_p(\zeta_{nq,p}^{-2})$, and  $q^{-1} \|\hat{\Bb}^0 - \Bb^0 \|_F^2 = O_p(\zeta_{nq,p}^{-2})$.
\end{lemma}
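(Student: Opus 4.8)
The plan is to work entirely with the orthogonalized parameters $\bphi^0$ (which satisfy Conditions $1'$ and $2'$) and the penalized reformulation \eqref{eq:MLE 1prime and 2}, establishing average consistency through a restricted-strong-convexity argument in the spirit of \cite{wang2022maximum}, modified to accommodate the diverging covariate block. First I would invoke the optimality of $\hat\bphi^0$ to obtain the basic inequality $\cL(\Yb\mid\hat\bphi^0)\le\cL(\Yb\mid\bphi^0)$ and expand $\cL$ to second order around $\bphi^0$, writing $\bdelta=\hat\bphi^0-\bphi^0$ so that
\begin{equation*}
0\ge \cL(\Yb\mid\hat\bphi^0)-\cL(\Yb\mid\bphi^0)=\bS(\bphi^0)^\T\bdelta+\tfrac12\bdelta^\T\cH(\bar\bphi)\bdelta
\end{equation*}
for some $\bar\bphi$ on the segment joining $\bphi^0$ and $\hat\bphi^0$. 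The score term is controlled by concentration and the quadratic term by a lower bound on the Hessian; combining the two bounds $\|\bdelta\|$.

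For the quadratic term, the key observation is that the likelihood Hessian $\Hb_L+\Hb_R$ is degenerate along the transformation directions spanned (to leading order) by $\Vb_0$, while the penalty Hessian $\Hb_P=\bLambda^0(\bLambda^0)^\T$ supplies curvature exactly there because $\Vb_p$ approximates $\Vb_0$. I would therefore use the strong concavity $-l_{ij}''\ge b_L>0$ on $\cB(D)$ (Assumption~\ref{assumption:smoothness}) to obtain the quadratic lower bound $\tfrac{b_L}{2nq}\|\Delta\Mb\|_F^2$ in the natural-parameter (prediction) error $\Delta\Mb=(\hat w_{ij}^0-w_{ij}^0)_{ij}$, and add $\Hb_P$ to cover the remaining null directions. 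Here Condition~$1'$ is crucial: since both $\Ub^0$ and $\hat\Ub^0$ satisfy $(\cdot)^\T\Xb=\zero$, the projection $\Pb_x=\Xb(\Xb^\T\Xb)^{-1}\Xb^\T$ splits the prediction error orthogonally,
\begin{equation*}
\|\Delta\Mb\|_F^2=\big\|\Xb(\hat\Bb^0-\Bb^0)^\T\big\|_F^2+\big\|\hat\Ub^0(\hat\bGamma^0)^\T-\Ub^0(\bGamma^0)^\T\big\|_F^2,
\end{equation*}
decoupling the covariate block from the factor block.

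I would then bound the score term $\bS(\bphi^0)^\T\bdelta$ using the same decomposition with $\Eb=(l_{ij}'(w_{ij}^0))_{ij}$: the covariate contribution reduces to $\langle\Xb^\T\Eb,(\hat\Bb^0-\Bb^0)^\T\rangle$, controlled via $\|\Xb^\T\Eb\|_F=O_p(\sqrt{npq})$ (and entrywise $O_p(\sqrt{n\log qp})$) from the mean-zero, sub-exponential score (Assumption~\ref{assumption:smoothness}) and the bounded design (Assumption~\ref{assumption: psd covariance}(iii)); the factor contribution reduces to $\langle(\Ib_n-\Pb_x)\Eb,\,\hat\Ub^0(\hat\bGamma^0)^\T-\Ub^0(\bGamma^0)^\T\rangle$, controlled by $\|\Eb\|=O_p(\sqrt n+\sqrt q)$ together with the rank-$2K$ structure of the factor part. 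Matching the quadratic and linear terms, and using $\lambda_{\min}(\Xb^\T\Xb)\gtrsim n$, yields $q^{-1}\|\hat\Bb^0-\Bb^0\|_F^2$ and $(nq)^{-1}\|\hat\Ub^0(\hat\bGamma^0)^\T-\Ub^0(\bGamma^0)^\T\|_F^2$ of order $\zeta_{nq,p}^{-2}$. Finally I would translate the factor prediction error into $n^{-1}\|\hat\Ub^0-\Ub^0\|_F^2$ and $q^{-1}\|\hat\bGamma^0-\bGamma^0\|_F^2$ by an eigenvector-perturbation (Davis--Kahan type) argument, using the distinct-eigenvalue normalization in Condition~$2'$ to furnish an eigen-gap that makes the normalized factorization stable.

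The main obstacle I expect is this translation step combined with the diverging dimension $p$. Unlike the covariate-free setting of \cite{wang2022maximum}, the penalty Hessian now carries an additional $Kp$ columns (the $\bb_{rs}$ and $\bomega_{rs}$ blocks of $\Vb_p$ and $\Vb_0$), so uniformly lower-bounding the smallest relevant eigenvalue of the combined Hessian $\Hb_L+\Hb_P$ over the enlarged parameter space, while showing that the $\Hb_R$ remainder and the Hessian variation over $\cB(D)$ (controlled through $|l_{ij}'''|\le b_U$) are negligible, requires the scaling $p=o(n\wedge q)$ together with careful entrywise control of the $pq$ covariate coordinates. The orthogonality Condition~$1'$ is precisely the device that prevents the covariate errors from contaminating the factor directions, and is what allows the two blocks to be analyzed separately.
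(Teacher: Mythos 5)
Your proposal is correct, and its overall architecture coincides with the paper's: a basic inequality from optimality, strong concavity of the likelihood in the linear predictor $w_{ij}$ (using $-l_{ij}''\ge b_L$ on the compact range), entrywise sub-exponential concentration for the covariate block of the score together with a spectral-norm (Latala-type) bound plus low-rank trace duality for the factor block, and finally a Weyl/Davis--Kahan argument under the distinct-eigenvalue normalization to pass from $\|\hat\Ub^0(\hat\bGamma^0)^\T-\Ub^0(\bGamma^0)^\T\|_F$ to the separate rates for $\hat\Ub^0$ and $\hat\bGamma^0$. Where you genuinely diverge is the decoupling step. The paper reparameterizes to $\bpsi=(\bTheta,\Bb)$ with $\bTheta=\Ub\bGamma^\T$ (via an explicit bijection between minimizers), carries Condition $1'$ as a penalty $\check P(\bTheta)$, and proves a global minimum-eigenvalue bound $\check\gamma>0$ for the scaled Hessian in that parameterization (Lemma~\ref{lemma:Hcheck}), which requires a nontrivial contradiction argument to handle the $\theta$--$\beta$ cross terms. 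You instead exploit that both $\hat\Ub^0$ and $\Ub^0$ satisfy $(\cdot)^\T\Xb=\zero$ exactly, so the prediction error splits as a genuine Pythagorean identity, $\|\Delta\Mb\|_F^2=\|\Xb(\hat\Bb^0-\Bb^0)^\T\|_F^2+\|\hat\Ub^0(\hat\bGamma^0)^\T-\Ub^0(\bGamma^0)^\T\|_F^2$, and the score inner product splits the same way through $\Pb_x$; combined with $\lambda_{\min}(\Xb^\T\Xb)\gtrsim n$ this yields the $\Bb$-rate and the $\bTheta$-rate with no joint-Hessian eigenvalue analysis at all. This is more elementary and self-contained; what the paper's heavier route buys is a reusable positive-definiteness statement for the penalized $(\bTheta,\Bb)$ Hessian, in the same spirit as the machinery it needs again for the later local-convexity and normality lemmas.

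One caution: your opening display, which expands $\cL$ to second order in the full parameter vector $\bphi$ and proposes to lower bound $\bdelta^\T\cH(\bar\bphi)\bdelta$ directly, would not work as stated, since the parameter-space Hessian is positive definite only in a shrinking neighborhood of $\bphi^0$ (this is exactly the paper's local-convexity lemma, which cannot be invoked before consistency is known — the argument would be circular). Your second paragraph correctly abandons this in favor of scalar Taylor expansions in each $w_{ij}$, which is the step that actually carries the proof; the written version should drop the parameter-space expansion entirely rather than present both.
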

These average consistency results may not be sharp enough to guarantee that $\hat{\bphi}^0$ lies in the interior of the parameter space, as the number of parameters increases to infinity. 
However, in the following, we will show the strong convexity of the objective function inside a small region near the true parameter $\bphi^0$. The average consistency restricts our estimation to this region, allowing us to apply the mean value theorem to the score function $\partial_{\bphi}\cL$.
\begin{proof}
    Recall that $w_{ij} = \bgamma_j^{\T} \bU_i + \bbeta_j^{\T} {\Xb}_i$. Here we further define $\theta_{ij} = \bgamma_j^{\T} \bU_i$, in which way we have $w_{ij} = \theta_{ij} + \bbeta_j^{\T} \bX_i$. Define $\bTheta = \Ub \bGamma^{\T}$ and let $\btheta_v = (\theta_{11}, \dots, \theta_{1q}, \theta_{21}, \dots, \theta_{2q}, $ $ \dots, \theta_{n1}, \dots, $ $ \theta_{nq})$. For the new parameter system, we define $\bpsi = (\btheta_v ,\bB_v)^\T$ and a bijective mapping between the two parameter spaces ${\Phi}_0=\{{\bphi}|P({\bphi})=0\}$ and ${\Psi}_0=\{{\bpsi}|\check P({\bpsi})=0,\mbox{rank}(\bTheta)\le K\}$ as \begin{equation}
    \Pi:{\bphi}\mapsto \left(\big(\mbox{vec}(\bGamma\Ub^\T)\big)^\T,\bB_v^\T\right)^\T,\label{bijective}
\end{equation} with the corresponding penalty function defined as 
\begin{equation}
    \check{P} (\bTheta) = -\frac{c}{2q}\sum_{j=1}^q\big\|\frac{1}{n}\sum_{i=1}^n\theta_{ij}{\Xb}_i\big\|^2. \nonumber
\end{equation}
We notice $\check P(\bTheta^0)=0$ as $\sum_{i=1}^n\bU_i^0\bX_i^\T=\zero_{K\times p}$. Since $\text{rank}(\bTheta)$ in ${\bpsi}_0$ is no larger than $K$, we can give a unique rank-K decomposition for $\bTheta=\bGamma\Ub^\T$ where $\Ub\in\RR^{n\times K}$, $\bGamma\in\RR^{q\times K}$ and $q^{-1}\bGamma^\T \bGamma=n^{-1}\Ub^\T\Ub$ are diagonal matrices, the uniqueness (up to a signed permutation) of which can be implied by the singular value decomposition. This induces the inversion of the mapping $\bPi$ as $\bPi^{-1}(\bpsi)= \bPi^{-1}(\bTheta,\Bb)=(\bGamma,\Ub,\Bb)$. Recall that we have argued that $\hat\bphi^0$ can be expressed as $\arg\min_{\bphi\in \cB(D)\cap\Phi_0} -   L(\bphi)$, we define
\begin{equation}
\check{\bpsi}=\mathop{\arg\min}_{{\bpsi\in\check\cB(D)\cap \Psi_0}}-\check{{L}}({\bpsi}),\label{psimle}
\end{equation}
where $\check{{L}}({\bpsi})= -(nq)^{-1}\sum_{i=1}^n\sum_{j=1}^ql_{ij}(\theta_{ij}+\bbeta_j^\T\bX_i)$ and $\cB(D)=\{\bpsi:\|\bpsi\|_{\infty}\le D\}$

For the objective function $\check\cL(\Yb|\bpsi)$, we claim that ${\bphi}\in{\Phi}_0\cap\cB(D)$ is a minimizer of $\cL(\Yb|{\bphi})$ if and only if $\Pi({\bphi})\in{\Psi}_0\cap\check\cB(D)$ is a minimizer of $\check\cL(\Yb|{\bpsi})$. The `if' part is trivial. For the `only if' part, suppose $\bphi_1$ is the minimizer of $-L(\Yb|\bphi)$ and $\check L(\Yb|\Pi({\bphi}_1))<\check L(\Yb|{\bpsi}_2)$ for some ${\bpsi}_2 \in{\bPsi}_0\cap\Omega_{\bpsi}$. Then since ${\bpsi}_2\in{\bPsi}_0$, we can get a rank-K decomposition for $\bTheta_2$ as $\bTheta_2=\mbox{vec} (\bGamma_2 \bU_2^\T)$. Then $L(\Yb|\bU_2,\bGamma_2,\bB_2)>L(\Yb|{\bpsi}_1)$, which is a contradiction. This implies
\begin{equation*}\Pi\big(\mathop{\arg\min}_{{\bphi}\in{\Phi}_0\cap\cB(D)} -L(\Yb|{\bphi})\big) = \mathop{\arg\min}_{{\bpsi}\in{\Psi}_0\cap \check\cB(D)}-\check L(\Yb|{\bpsi}),\end{equation*}
which implies that $\check\bpsi=\Pi(\hat\bphi^0)$. In the following, we use $\hat\bpsi^0$ to denote the solution to \eqref{psimle} for clarity. We will denote ${\bpsi}^0=\Pi({\bphi}^0)$ as the true parameter. Let $\hat\bTheta^0=\hat\bGamma^0(\hat\Ub^0)^\T$ and $\bTheta^0=\bGamma^0(\Ub^0)^\T$.

The derivatives of $ \check{\mathcal{L}}(\bpsi)=-\check L(\bpsi)-\check P(\bpsi)$ with respect to ${\bpsi}$ will be denoted as $ \check {\bS}({\bpsi})={\partial_{\psi}  \check{\mathcal{L}}}(\Yb|{\bpsi})$, $ \check {\bS}_{\theta}({\bpsi})={\partial_{\theta}  \check{\mathcal{L}}}(\Yb|{\bpsi})$, and $ \check S_{\beta}({\bpsi})={\partial_{\beta}  \check{\mathcal{L}}}(\Yb|{\bpsi})$. The Hessian matrix will be denoted as $\check\cH(\bpsi)=\partial^2_{\psi}\check\cL(\bpsi)=\check\Hb_{L}(\bpsi)+\check\Hb_P(\bpsi)$ with $\check\Hb_{L}=\partial_{\psi\psi}^2\big[-\sum_{i=1}^n\sum_{j=1}^ql_{ij}(\theta_{ij}+\bbeta_j^\T\bX_i)\big]$ and $\check\Hb_P=\partial_{\psi\psi}^2\check P(\bTheta)$. 
Expand the objective function $\check\cL(\bpsi)$ at $\hat\bpsi^0$, it follows that 
\begin{align}
\check \cL(\Yb |\hat{\bpsi}^0) 
	&= \check \cL(\Yb|\bpsi^0) + \check\bS(\bpsi^0)^\T(\hat\bpsi^0-\bpsi^0) \nonumber\\&+ \frac{1}{2}\big[\check\Db_{q}^{-1/2}(\hat\bpsi^0-\bpsi^0)\big]^\T\big[\check\Db_{q}^{1/2}\check\cH(\tilde\bpsi)\check\Db_{q}^{1/2}\big]\big[\check\Db_{q}^{-1/2}(\hat\bpsi^0-\bpsi^0)\big],\label{eq:expandto_psi0}
\end{align}
where $\tilde\bphi$ is some points between $\hat\bpsi^0$ and $\bpsi^0$, and $\check\Db_q$ is a scaling matrix defined as
\begin{equation*}
    \check\Db_q=\begin{pmatrix}
        nq\Ib_{nq}\\&q\Ib_{q}
    \end{pmatrix}.
\end{equation*}

Because $\check P(\hat\bTheta^0) \le 0$ and $\check P (\bTheta^0) = 0$, we have $\check L(\Yb| \hat\bpsi^0) + \check P(\hat\Theta^0) \ge \check L(\Yb | \bpsi^0) + \check P (\bTheta^0)$ $  =  \check L(\Yb | \bpsi^0)$. Therefore $\check L(\Yb| \hat\bpsi^0) \ge \check L(\Yb | \bpsi^0)$, which together with \eqref{eq:expandto_psi0}, Lemmas~\ref{lemma:estimation for 1st derivative}, \ref{lemma:Hcheck} gives 
\begin{eqnarray*}
    0\lesssim \delta_{nq}^{-1}\sqrt{\frac{\log q}{nq}}\|\hat\bTheta^0-\bTheta^0\|_F+\sqrt{\frac{\log pq}{n}}\sqrt{\frac{p}{q}}\|\hat\Bb^0-\Bb^0\|_F \\
    -\frac{\check\gamma}{2}\Big[\frac{1}{nq}\|\hat\bTheta^0-\bTheta^0\|_F^2+{\frac{1}{q}}\|\hat\Bb^0-\Bb^0\|_F^2\Big],
\end{eqnarray*}
which implies with $p<q$ that
\begin{equation*}
    \frac{1}{nq}\|\hat\bTheta^0-\bTheta^0\|_F^2\le O_p\Big(\frac{\log n}{q}+\frac{p\log q}{n}\Big),\quad {\frac{1}{q}}\|\hat\Bb^0-\Bb^0\|_F^2\le O_p\Big(\frac{\log n}{q}+\frac{p\log q}{n}\Big).
\end{equation*}
Here $\check\gamma$ is some positive constant specified in Lemma~\ref{lemma:Hcheck} that does not depend on the dimension.

		Let $\rho_1^0, \dots, \rho_K^0$ be the singular values of $n^{-1/2}q^{-1/2}  \Ub^0 (\bGamma^{0})^\T$ and  $\bupsilon_1^0, \dots, \bupsilon_K^0$ be the corresponding left-singular vectors. Let $\hat{\rho}_1, \dots, \hat{\rho}_K$ be the singular values of $n^{-1/2}q^{-1/2} \hat{\Ub}^0  \hat{\bGamma}^{0\T}$ and $\hat{\bupsilon}_1, \dots, \hat{\bupsilon}_K$ be the corresponding left-singular vectors. From a variant version of Davis-Kahan Theorem \citep{yu2015useful}, we have 
		\begin{eqnarray}
			\| \hat{\bupsilon}_k - \bupsilon_k^0\|_2 &\le & \sqrt{2}\|n^{-1/2}q^{-1/2} \hat{\Ub}^0  (\hat{\bGamma}^{0})^{\T}-  n^{-1/2}q^{-1/2}  \Ub^0 (\bGamma^{0})^\T\|_F /\eta, \nonumber	\\
			&\le & \sqrt{2} n^{-1/2}q^{-1/2} \| \hat{\Ub}^0  (\hat{\bGamma}^{0})^{\T} - \Ub^0 (\bGamma^{0})^\T \|_F/\eta	\label{eq:ekhat ek}  \end{eqnarray}
			where $\eta = \min \{|\hat{\rho}_{k-1} - {\rho}_k^0| \wedge |\hat{\rho}_{k+1} - {\rho}_k^0|: k = 1,\dots, K \}$. By Weyl's inequality, for all $k$, $ |\hat{\rho}_k - {\rho}_k^0|\le (nq)^{-1/2}\| \hat{\Ub}^0  (\hat{\bGamma}^{0})^{\T} - \Ub^0 (\bGamma^{0})^\T \|_F = (nq)^{-1/2}\|\hat\bTheta-\bTheta^0\|_F=O_p(\zeta_{nq,p}^{-1})$. Under the Assumption~\ref{assumption: psd covariance} and $p<\delta_{nq}$, $\eta$ is bounded and bounded away from zero in probability and it follows from~\eqref{eq:ekhat ek} that $\| \hat{\bupsilon}_k - \bupsilon_k^0\|_2 = O_p(\zeta_{nq,p}^{-1})$. Under penalty function $P(\bGamma, \Ub)$, we have the $k$-th factor to be $\sqrt{n\hat\rho_k} \hat\bupsilon_k$. Thus we have 
			\begin{eqnarray}
					\|\hat{\Ub}^0 - \Ub^0  \|_F = \|\hat{\Ub}_v^0 - \Ub_v^0   \|_2  \le \sqrt{n}|\sqrt{\hat{\rho}_k} - \sqrt{\rho_k^0} | \|\hat{\bupsilon}_k \|_2 + \sqrt{n\rho_k} \| \hat{\bupsilon}_k - \bupsilon_k^0\|_2 = O_p\big(\sqrt{q}\zeta_{nq,p}^{-1}\big). \nonumber
			\end{eqnarray}
			Similarly we also have $$\|\hat{\bGamma}^0 - \bGamma^0  \|_F =  O_p\big(\sqrt{q}\zeta_{nq,p}^{-1}\big).$$

\end{proof}
\subsection{Proof of Lemma~\ref{prop:ind consistency}}
\begin{lemma}[Individual Consistency]
\label{prop:ind consistency}
    Under Assumptions~\ref{assumption: psd covariance}--\ref{assumption: Scaling}, we have for any $j\in[q]$
    \begin{equation*}
        \|\hat {\bgamma}_j^0-\bgamma_j^0\|_2=O_p\big(\zeta_{nq,p}^{-1}\big),\;\|\hat{\bbeta}_j^0 -\bbeta_j^0\|_2=O_p\big(\zeta_{nq,p}^{-1}\big),
    \end{equation*}
    and for any $i\in[n]$
    \begin{equation*}
        \|\hat\bU_i^0-\bU_i^0\|_2=O_p({\sqrt{p}}{\zeta_{nq,p}^{-1}}).
    \end{equation*}
\end{lemma}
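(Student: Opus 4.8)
The plan is to read off the first-order optimality (score) equations at $\hat\bphi^0$ and linearize them item-by-item and subject-by-subject through a Bahadur-type expansion, feeding in the average rates already supplied by Lemma~\ref{prop:average consistency}. The starting point is that $\hat\bphi^0$ satisfies the identifiability conditions $1'$--$2'$ exactly, so $P(\hat\bphi^0)=0$; since $P\le 0$ with equality exactly on the constraint set, $\hat\bphi^0$ maximizes the penalty and hence $\partial_{\bphi}P(\hat\bphi^0)=\zero$. Combined with the average consistency, the local strong convexity of $\cL$ near $\bphi^0$ established through the Hessian decomposition in the Preliminaries, and the scaling condition (Assumption~\ref{assumption: Scaling}), this guarantees that $\hat\bphi^0$ is interior and the score vanishes w.h.p. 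Because the penalty gradient vanishes and the penalty does not involve $\Bb$, the resulting equations decouple into, for each $j\in[q]$ and $i\in[n]$, $\sum_i l_{ij}'(\hat w_{ij}^0)\hat\bZ_i^0=\zero$ and $\sum_j l_{ij}'(\hat w_{ij}^0)\hat\bgamma_j^0=\zero$, where $\hat\bZ_i^0=((\hat\bU_i^0)^\T,\bX_i^\T)^\T$ and $\hat w_{ij}^0=(\hat\bbf_j^0)^\T\hat\bZ_i^0$.

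For $\hat\bbf_j^0=((\hat\bgamma_j^0)^\T,(\hat\bbeta_j^0)^\T)^\T$ I would apply the mean value theorem to $l_{ij}'$ about $w_{ij}^0$, using $\hat w_{ij}^0-w_{ij}^0=(\hat\bbf_j^0-\bbf_j^0)^\T\hat\bZ_i^0+(\bgamma_j^0)^\T(\hat\bU_i^0-\bU_i^0)$, to obtain
\[
\Big(-\tfrac1n\sum_i l_{ij}''(\bar w_{ij})\hat\bZ_i^0(\hat\bZ_i^0)^\T\Big)(\hat\bbf_j^0-\bbf_j^0)=\tfrac1n\sum_i l_{ij}'(w_{ij}^0)\hat\bZ_i^0+\tfrac1n\sum_i l_{ij}''(\bar w_{ij})(\bgamma_j^0)^\T(\hat\bU_i^0-\bU_i^0)\hat\bZ_i^0.
\]
The coefficient matrix is positive definite with bounded inverse by Assumption~\ref{assumption:smoothness} ($-l''\ge b_L$) and the lower eigenvalue bound on $n^{-1}\sum_i\hat\bZ_i^0(\hat\bZ_i^0)^\T$ inherited from $\bSigma_z^0\succ0$ (verified for $\bphi^0$ in the Remark). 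The leading stochastic term $n^{-1}\sum_i l_{ij}'(w_{ij}^0)\bZ_i^0$ is mean-zero and, by a sub-exponential Bernstein bound with a union bound over the $q(K+p)$ coordinates, is $O_p(\sqrt{p\log qp/n})$; replacing $\hat\bZ_i^0$ by $\bZ_i^0$ and the $\hat\bU^0-\bU^0$ cross term are each $O_p(\zeta_{nq,p}^{-1})$ by Cauchy--Schwarz against $n^{-1}\|\hat\Ub^0-\Ub^0\|_F^2=O_p(\zeta_{nq,p}^{-2})$. The key point is that the $\bX_i$-coordinates must be controlled not by $\|\bX_i\|_2=O(\sqrt p)$ but through $\lambda_{\max}(n^{-1}\sum_i\bX_i\bX_i^\T)=O(1)$, which keeps the rate at $\zeta_{nq,p}^{-1}$ without an extra $\sqrt p$, giving $\|\hat\bgamma_j^0-\bgamma_j^0\|_2,\|\hat\bbeta_j^0-\bbeta_j^0\|_2=O_p(\zeta_{nq,p}^{-1})$.

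The argument for $\hat\bU_i^0$ is the mirror image: linearizing $\sum_j l_{ij}'(\hat w_{ij}^0)\hat\bgamma_j^0=\zero$ yields
\[
\Big(-\tfrac1q\sum_j l_{ij}''(\bar w_{ij})\hat\bgamma_j^0(\hat\bgamma_j^0)^\T\Big)(\hat\bU_i^0-\bU_i^0)=\tfrac1q\sum_j l_{ij}'(w_{ij}^0)\hat\bgamma_j^0+\tfrac1q\sum_j l_{ij}''(\bar w_{ij})\big((\hat\bgamma_j^0-\bgamma_j^0)^\T\bU_i^0+(\hat\bbeta_j^0-\bbeta_j^0)^\T\bX_i\big)\hat\bgamma_j^0,
\]
whose coefficient matrix is again positive definite with bounded inverse via $-l''\ge b_L$ and $q^{-1}\sum_j\hat\bgamma_j^0(\hat\bgamma_j^0)^\T\to\bSigma_\gamma^0\succ0$. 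Here the gradient term is $O_p(\sqrt{\log n/q})$ (fixed dimension $K$), but the cross term now sums over $j$ with $\bX_i$ \emph{fixed}: since $|(\hat\bbeta_j^0-\bbeta_j^0)^\T\bX_i|\le\|\hat\bbeta_j^0-\bbeta_j^0\|_2\,\|\bX_i\|_2$ and $\|\bX_i\|_2=O(\sqrt p)$ can no longer be averaged away, Cauchy--Schwarz against $q^{-1}\|\hat\Bb^0-\Bb^0\|_F^2=O_p(\zeta_{nq,p}^{-2})$ produces a genuine extra factor, giving $\|\hat\bU_i^0-\bU_i^0\|_2=O_p(\sqrt p\,\zeta_{nq,p}^{-1})$. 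This asymmetry---averaging over $i$ tames $\bX_i$ in the $\bbf_j$ equation, while summing over $j$ at a fixed $\bX_i$ does not in the $\bU_i$ equation---is precisely what separates the two rates. I expect the main obstacle to be the two bookkeeping steps that make the clean linearization legitimate: first, rigorously certifying the interior/score-zero property and the vanishing of the penalty gradient at $\hat\bphi^0$ so that the decoupled equations hold w.h.p.; and second, showing the coefficient matrices are uniformly well-conditioned over all $j$ (resp.\ all $i$) despite the estimated $\hat\bZ_i^0$, $\hat\bgamma_j^0$ and the intermediate point $\bar w_{ij}$ staying in the compact region where Assumption~\ref{assumption:smoothness} applies.
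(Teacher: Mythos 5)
Your proposal is correct and follows essentially the same route as the paper's proof: invoke the first-order condition $\bS(\hat\bphi^0)=\zero$ w.h.p.\ (the paper's first-order-condition lemma), linearize each block of the score by the mean value theorem, lower-bound the coefficient matrices via $-l_{ij}''\ge b_L$ together with the limiting covariances $\bSigma_z^0$ and $\bSigma_\gamma^0$, and bound the remaining terms by the first-order concentration lemma plus Cauchy--Schwarz against the average rates of the average-consistency lemma --- including your diagnosis that the fixed $\|\bX_i\|_2=O(\sqrt p)$ in the $\bU_i$-equation is precisely what produces the extra $\sqrt p$ in that rate. The only local difference is that where you control the $\bX$-block of the cross term in the $\bbf_j$-equation via $\lambda_{\max}(n^{-1}\Xb^{\T}\Xb)=O(1)$ and Cauchy--Schwarz, the paper instead exploits the orthogonality $\sum_{i=1}^n\hat\bU_i^0\bX_i^{\T}=\sum_{i=1}^n\bU_i^0\bX_i^{\T}=\zero$ coming from Condition $1'$; both arguments yield the same $O_p(\zeta_{nq,p}^{-1})$ bound.
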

 \begin{proof}
        By Lemma~\ref{lemma: first order condition} we know that $\bS_{f}(\hat\bphi^0)=0$ {\em w.h.p.} We can use the integral form of mean value theorem to expand $\big[\bS_f(\hat\bphi^0)]_{[P_j]}$ 
        as follows:
\begin{align}
0=&[\bS_{f_j}(\hat\bphi^0)]_{[P_j]}\nonumber\\=&-(nq)^{-1}\sum_{i=1}^n\hat\bZ_i^0 l_{ij}^\prime\big[(\bgamma_j^0)^\T\hat \bU_i^0+({\bbeta}_j^0)^\T\bX_i\big]\label{eq:expansf_1}\\&-(nq)^{-1}{\sum_{i=1}^n\int_{[0,1]}l^{\prime\prime}_{ij}\big[(\bbf_j^0)^\T\hat \bZ_i^0+s(\hat{\bbf}_j^0-{\bbf}_j^0)^\T\hat \bZ_i^0\big]\hat \bZ_i^0(\hat \bZ_i^0)^\T(\hat{\bbf}_j^0-\bbf_j^0)ds}\label{eq:expansf_2}
\end{align}
 {For \eqref{eq:expansf_2}, according to Assumption~\ref{assumption:smoothness}(iii), we have
\begin{equation*}
    \|\eqref{eq:expansf_2}\|\ge \lambda_{\min}\big[b_Lq^{-1}(n^{-1}\sum_{i=1}^n\hat{\bZ}_i^0(\hat\bZ_i^0)^\T)\big]\|\hat{\bbf}_j^0-\bbf_j^0\|\gtrsim q^{-1}\|\hat{\bbf}_j^0-\bbf_j^0\|.
\end{equation*}
}

Before we discuss~\eqref{eq:expansf_1}, for notational simplicity, here we denote ${b}^{\prime}(\hat{\bU}_i^0)=b^\prime\big\{(\bgamma_j^0)^\T\hat \bU_i^0$  $+{(\bbeta_j^0)}^\T\bX_i\big\}$ and ${b}^{\prime}({\bU}_i^0)=b^{\prime}\big\{(\bgamma_j^0)^\T\bU_i^0+({\bbeta}_j^0)^\T\bX_i\big\}$. We next decompose~\eqref{eq:expansf_1} into
\begin{align}
      \|\eqref{eq:expansf_1}\|=&\;\Big\|(nq)^{-1} \sum_{i=1}^n\hat\bZ_i^0\big\{Y_{ij}-b^\prime(\hat \bU_i^0)\big\}\Big\|\nonumber \\=&\;
    \Big\|(nq)^{-1}\sum_{i=1}^n\Big[ \bZ_i^0\{Y_{ij}-{b}^{\prime}({\bU}_i^0)\}+(Y_{ij}-{b}^{\prime}({\bU}_i^0))(\hat\bZ_i^0-\bZ_i^0)\nonumber\\&\;+\bZ_i^0\big\{{b}^{\prime}({\bU}_i^0)-{b}^{\prime}(\hat{\bU}_i^0)\big\} +\big[{b}^{\prime}({\bU}_i^0)-{b}^{\prime}(\hat{\bU}_i^0)\big](\hat\bZ_i^0-\bZ_i^0)\Big\}\Big\|\nonumber\\
\le&\;
    q^{-1}\Big\{\big\|n^{-1}\sum_{i=1}^n \bZ_i^0l_{ij}^{\prime}(w_{ij}^0)\big\|+\big\|n^{-1}\sum_{i=1}^nl_{ij}^{\prime}(w_{ij}^0)(\hat\bU_i^0-\bU_i^0)\big\|\nonumber\\&+\big\|n^{-1}\sum_{i=1}^nl_{ij}^{\prime\prime}(\tilde w_{ij})\bZ_i^0\big(\bU_i^0-\hat\bU_i^0\big)^\T\bgamma_j^0\big\|\nonumber\\
    &+\big\|n^{-1}\sum_{i=1}^nl_{ij}^{\prime\prime}(\tilde w_{ij})\big(\bU_i^0-\hat\bU_i^0\big)\big(\bU_i^0-\hat\bU_i^0\big)^\T\bgamma_j^0\big\|\Big\},\label{eq:B2 connect}
\end{align}
where $\tilde{w}_{ij}$ lies in the segment between $w_{ij}^0$ and $(\bgamma_j^0)^\T\hat \bU_i+{(\bbeta_j^0)}^\T\bX_i$. Next, we introduce the following lemma proving the concentration of the gradient to facilicate the subsequent analysis.

\noindent {\bf Lemma~\ref{lemma:first-order derivative}}~(First-order concentration)~Under Assumptions~\ref{assumption: psd covariance}--\ref{assumption:smoothness}, we have estimates for the first order derivatives on $\bphi^0$ as:
 \begin{align*}
       \| n^{1/2} \bS_U  (\bphi^0)\|_2 &= O_p \Big(\sqrt{\frac{\log n}{q}} \Big);\\
        \|q^{1/2}\bS_f(\bphi_v^0)\|_{2} &= O_p\Big( \sqrt{\frac{p\log qp}{n}}  \Big).
    \end{align*}
Scaled by matrix $\Db_p$ we can write
    \begin{equation}
    \|\Db_q^{1/2} \bS(\bphi^0)\|_2 = O_p\big(\zeta_{nq,p}^{-1}\big), \nonumber
\end{equation}
or $\|\Db_q^{1/2}\bS(\bphi^0)\|_{\zeta}\le O_p(p^{1/2}(nq)^{\epsilon-1/2})$ when $\zeta>1/\epsilon$.

For the first term we have $\|{n^{-1}\sum_{i=1}^n\bZ_i^0l_{ij}^{\prime}(w_{ij}^0)}\|=\sqrt{pn^{-1}\log qp}$ from Lemma~\ref{lemma:first-order derivative}. For the second term, by Assumption~\ref{assumption:smoothness}, we have $\sqrt{\sum_{i=1}^n [l_{ij}^{\prime}(w_{ij}^0)]^2}= O_p(n^{1/2})$ and  therefore $$\big\|n^{-1}\sum_{i=1}^nl_{ij}^{\prime}(w_{ij}^0)(\hat\bU_i^0-\bU_i^0)\big\|\le O_p\big(n^{-1/2}\|\hat\Ub^0-\Ub^0\big\|\big)=O_p\Big(\sqrt{\frac{p\log qp}{n} + \frac{\log n}{q}}\Big). $$
For the third term we note that $\sum_{i=1}^n\bU_i^0\bX_i^\T=\sum_{i=1}^n\hat\bU_i^0\bX_i^\T=\zero_{K\times p}$. Then together with Assumption~\ref{assumption: psd covariance} and Lemma~\ref{prop:average consistency},
\begin{align*}
    \big\|n^{-1}\sum_{i=1}^nl_{ij}^{\prime\prime}(\tilde w_{ij})\bZ_i^0\big(\bU_i^0-\hat\bU_i^0\big)^\T\bgamma_j^0\big\|&\lesssim  \big\|n^{-1}\sum_{i=1}^n\bZ_i^0\big(\bU_i^0-\hat\bU_i^0\big)^\T\big\|\\
    &\le n^{-1/2}\big\|\hat\Ub^0-\Ub^0\big\|n^{-1/2}\|\Ub^0\|\\&\le O_p\Big(\sqrt{\frac{p\log qp}{n} + \frac{\log n}{q}}\Big).
\end{align*}
For the fourth term, by Lemma~\ref{prop:average consistency}, it can be bounded at a rate of $O_p(pn^{-1}\log (pq)+q^{-1}\log n)$. Then we conclude that 
\begin{align*}
    \eqref{eq:B2 connect}
    \le O_p\left(\frac{1}{ q}\sqrt{\frac{p\log qp}{n} + \frac{\log n}{q}}\right).
\end{align*}
 Therefore, \eqref{eq:expansf_1}$+$\eqref{eq:expansf_2} $=0$ implies that for any $j \in [q]$
\begin{equation}
    \|\hat {\bbf}_j^0-\bbf_j^0\|=O_p\left(\sqrt{\frac{p\log qp}{n} + \frac{\log n}{q}} \right),
\end{equation}
which gives that for any $j \in [q]$
\begin{align*}
    \|\hat {\bgamma}_j^0-\bgamma_j^0\|=O_p\left(\sqrt{\frac{p\log qp}{n} + \frac{\log n}{q}} \right),\;
    \|\hat {\bbeta}_j^0-\bbeta_j^0\|=O_p\left(\sqrt{\frac{p\log qp}{n} + \frac{\log n}{q}} \right).
\end{align*}

Similarly, we can use the expansion of $[\bS_u(\hat\bphi^0)]_{[K_i]}$ 
and show that 
\begin{align}
0=&[\bS_u(\hat\bphi^0)]_{[K_i]}\nonumber \\=&- (nq)^{-1}\sum_{j=1}^q\hat\bgamma_j^0 l_{ij}^\prime\big[(\hat\bgamma_j^0)^\T \bU_i^0+(\hat{\bbeta}_j^0)^\T\bX_i\big]\label{B87}\\
&-(nq)^{-1} {\sum_{j=1}^q\int_{[0,1]}l^{\prime\prime}_{ij}\big[(\hat\bgamma_j^0)^\T\hat \bU_i^0+s(\hat{\bgamma}_j^0)^\T(\hat \bU_i^0-\bU_i^0)+(\hat {\bbeta}_j^0)^\T\bX_i\big]\hat \bgamma_j^0(\hat \bgamma_j^0)^\T(\hat{\bU}_i^0-\bU_i^0)ds}\label{B88}.
\end{align}
For the first term, we use similar arguments as in bounding~$\|\eqref{eq:expansf_2}\|$:
\begin{equation*}
    \|\eqref{B87}\|\gtrsim n^{-1}\|\hat\bU_i-\bU_i^0\|.
\end{equation*} For \eqref{B88}, denoting  ${b}^{\prime}(\hat{\bbf}_j^0)=b^\prime\big\{(\hat{\bbf}_j^0)^\T \bZ_i^0\big\}$ and ${b}^{\prime}({\bbf}_j^0)=b^{\prime}\big\{(\bbf_j^0)^\T\bZ_i^0\big\}$, we have 
\begin{align*}
    \|\eqref{B88}\|=&\; \Big\|(nq)^{-1}\sum_{j=1}^q\hat{\bgamma}_j \big\{Y_{ij}-{b}^{\prime}(\hat{\bbf}_j^0)\big\}\Big\|\\=&\; \Big\|(nq)^{-1}\sum_{j=1}^q\Big[ \bgamma_j^0\{Y_{ij}-{b}^{\prime}({\bbf}_j^0)\}+\{Y_{ij}-{b}^{\prime}(\bbf_j^0)\}(\hat\bgamma_j^0-\bgamma_j^0)\\&+\bgamma_j^0\big\{{b}^{\prime}(\bbf_j^0)-{b}^{\prime}(\hat\bbf_j^0)\big\} +\big\{{b}^{\prime}(\bbf_j^0)-{b}^{\prime}(\hat\bbf_j^0)\big\}(\hat\bgamma_j^0-\bgamma_j^0)\Big\}\Big\|
    \end{align*}
    Because we have $\|{q^{-1}\sum_{j=1}^q\bgamma_j^0l_{ij}^{\prime}(w_{ij}^0)}\|=\sqrt{q^{-1}\log n}$ from Lemma~\ref{lemma:first-order derivative}, similar to the approach of estimating \eqref{eq:expansf_2}, we have
    \begin{align*}
   &\|\eqref{B88}\|\\
   \leq   &\frac{1}{n}O_p\Big\{\sqrt{\frac{\log n}{q}}+\sqrt{\frac{\log n}{q}}\|\hat\bgamma-\bgamma^0\|+q^{-1}b_U\|\bZ_i^0\|\|\hat\bbf^0-\bbf^0\|\|\bgamma^0\|+q^{-1}b_U\|\hat\bbf^0-\bbf^0\|^2\Big\}\\=&O_p\left\{\frac{1}{n}\sqrt{\frac{p^2\log(pq)}{n}+\frac{p\log n}{q}}\right\}.
\end{align*}
By \eqref{B87} + \eqref{B88} = 0, we show that for any $i \in [n]$
\begin{equation*}
    \|\hat\bU_i^0-\bU_i^0\|=O_p\left(\sqrt{\frac{p^2\log(pq)}{n}+\frac{p\log n}{q}}\right).
\end{equation*}

\end{proof}

\subsection{Proof of Lemma~\ref{prop_consistency_AG}}

\begin{lemma}
    [Consistency of transformation matrices]\label{prop_consistency_AG}
Under Assumption~\ref{assumption: psd covariance}--\ref{assumption:asymptotic normality}, 
   $\hat\Ab^0$ and $\hat\bG^0$ are consistent estimation of $\Ab^0$ and $\Gb^0:=(\Gb^\ddagger)^{-1}$ such that
    \begin{align*}
    \| \hat{\Ab}^0 - \Ab^0 \|_F =&  O_p\Big(\frac{\sqrt{p}}{\zeta_{nq,p}}\Big),\\\| \hat{\Gb}^0 - \Gb^0\|_F =& O_p \left(  \frac{\sqrt{p}}{\zeta_{nq,p}}\vee\frac{{p}^{3/2}(nq)^{3/\xi}}{\zeta_{nq,p}^{2}}\right).
\end{align*}
Further under Assumption~\ref{assumption:A consistency}, we have 
\begin{align*}
    \| \hat{\Ab}^0 - \Ab^0 \|_F =&  O_p\left(\Big(\frac{\sqrt{p}}{\zeta_{nq,p}}\Big)\wedge  \Big(\sqrt{p}\iota_{nq,p}\vee\frac{p^{3/2}(nq)^{3/\xi}}{\zeta_{nq,p}^{2}}\Big)\right),\\\| \hat{\Gb}^0 - \Gb^0\|_F =& O_p \left(  p\iota_{nq,p}\vee\frac{{p}^{3/2}(nq)^{3/\xi}}{\zeta_{nq,p}^{2}}\right).
\end{align*}
In particular, under condition $p^{3/2}\zeta_{nq,p}^{-1} (nq)^{3/\xi}\to0$ and Assumption~\ref{assumption:A consistency} that $p\sqrt{n}\iota_{nq,p}\to 0$, we have $\| \hat{\Ab}^0 - \Ab^0 \|_F,\| \hat{\Gb}^0 - \Gb^0\|_F=o_p\big(  \zeta_{nq,p}^{-1}\big)$.
\end{lemma}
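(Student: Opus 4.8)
The plan is to bound the two transformation matrices separately, since $\hat{\Ab}^0$ is the minimizer of a convex (median-regression) criterion while $\hat{\Gb}^0$ is a smooth function of $(\hat{\bGamma}^0,\hat{\Ub}^0,\hat{\Ab}^0)$ defined through an eigendecomposition. The inputs to both arguments are the average-consistency rates of Lemma~\ref{prop:average consistency}, i.e.\ $q^{-1/2}\|\hat{\bGamma}^0-\bGamma^0\|_F$ and $n^{-1/2}\|\hat{\Ub}^0-\Ub^0\|_F$ are $O_p(\zeta_{nq,p}^{-1})$, together with the individual rates $\|\hat{\bgamma}_j^0-\bgamma_j^0\|_2,\|\hat{\bbeta}_j^0-\bbeta_j^0\|_2=O_p(\zeta_{nq,p}^{-1})$ of Lemma~\ref{prop:ind consistency}.

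First I would treat $\hat{\Ab}^0$. Since the $\ell_1$ criterion separates over the covariates, I analyze each column $[\hat{\Ab}^0]_{[,s]}=\argmin_{\ba}q^{-1}\sum_{j=1}^q|\hat{\beta}_{js}^0-\ba^\T\hat{\bgamma}_j^0|$ for $s\in[p_*]$; this is an LAD regression with estimation error in both the response $\hat{\beta}_{js}^0$ and the design $\hat{\bgamma}_j^0$. For the crude rate I would combine a uniform deviation bound, $\sup_{\ba}|q^{-1}\sum_j(|\hat{\beta}_{js}^0-\ba^\T\hat{\bgamma}_j^0|-|\beta_{js}^0-\ba^\T\bgamma_j^0|)|\le q^{-1}\sum_j(\|\hat{\bbeta}_{jc}^0-\bbeta_{jc}^0\|_1+C\|\hat{\bgamma}_j^0-\bgamma_j^0\|_1)=O_p(\sqrt{p}\,\zeta_{nq,p}^{-1})$ by Cauchy--Schwarz and Lemma~\ref{prop:average consistency}, with the population sharp-minimum property $q^{-1}\sum_j(|\beta_{js}^0-\ba^\T\bgamma_j^0|-|\beta_{js}^0-(\ba_s^0)^\T\bgamma_j^0|)\gtrsim\|\ba-\ba_s^0\|$, which is exactly the strict minimal-$\ell_1$ identifiability of Condition~1(ii) (characterized in Proposition~1 and quantified by $\min_{\ba\neq\zero}|q^{-1}\chi_s(\ba)|>c$ in Assumption~\ref{assumption:A consistency}). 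The standard convexity argument then gives $\|\hat{\Ab}^0-\Ab^0\|_F=O_p(\sqrt{p}\,\zeta_{nq,p}^{-1})$. For the sharpened rate I would regard $[\hat{\Ab}^0]_{[,s]}$ as an $M$-estimator and invoke the Bahadur representation of \cite{he1996bahadur}: substituting the individual Bahadur expansions of $\hat{\bgamma}_j^0$ and $\hat{\bbeta}_j^0$ underlying Lemma~\ref{thm:asymptotic normality} into the estimating equation $q^{-1}\sum_j\hat{\bgamma}_j^0\sign(\hat{\beta}_{js}^0-\ba^\T\hat{\bgamma}_j^0)\approx\zero$ reproduces the perturbed subgradient $q^{-1}\sum_j\psi_{js}(\ba,\bdelta_{js})$ of \eqref{eq:phi js def}, whose mean $\hat{\chi}_s$ vanishes at $\balpha_{s0}$ with $\|\balpha_{s0}\|\le\iota_{nq,p}$. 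The nonsingular-derivative and Lipschitz conditions on $\hat{\chi}_s$ in Assumption~\ref{assumption:A consistency} linearize the estimating equation and yield $O_p(\sqrt{p}\,\iota_{nq,p})$ up to a remainder of order $p^{3/2}(nq)^{3/\xi}\zeta_{nq,p}^{-2}$ arising from the magnitude of $\bdelta_{js}$ and the uniform tail bound $\max_{i,j}|l_{ij}'(w_{ij}^*)|=O_p((nq)^{1/\xi})$; taking the minimum with the crude rate gives the stated $\hat{\Ab}^0$ rate.

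Next I would handle $\hat{\Gb}^0$. With $\hat{G}=q^{-1}(\hat{\bGamma}^0)^\T\hat{\bGamma}^0$, $\hat{W}=\hat{\Ub}^0+\Xb(\hat{\Ab}^0)^\T$, and $W=\Ub^0+\Xb(\Ab^0)^\T$, one has $\hat{\Gb}^0=\hat{G}^{1/2}\hat{\cV}^0(\hat{\cU}^0)^{-1/4}$ with $\hat{\cV}^0,\hat{\cU}^0$ the eigenvectors/eigenvalues of $\hat{M}=\hat{G}^{1/2}(n^{-1}\hat{W}^\T\hat{W})\hat{G}^{1/2}$, while $G,M$ denote the population analogues. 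Since the limiting eigenvalues of $M$ are distinct and bounded away from zero by Assumption~\ref{assumption: psd covariance}(iv), $\hat{\Gb}^0$ is a Lipschitz (analytic) function of $(\hat{G},\hat{M})$ near the truth, and I would control $\|\hat{\Gb}^0-\Gb^0\|_F$ through $\|\hat{G}-G\|=O_p(\zeta_{nq,p}^{-1})$ and $\|\hat{M}-M\|$. The key simplification is that Condition~$1'$ forces $(\hat{\Ub}^0)^\T\Xb=(\Ub^0)^\T\Xb=\zero$, so the cross terms cancel and
\begin{equation*}
n^{-1}\hat{W}^\T\hat{W}-n^{-1}W^\T W = n^{-1}\big[(\hat{\Ub}^0)^\T\hat{\Ub}^0-(\Ub^0)^\T\Ub^0\big] + \hat{\Ab}^0(n^{-1}\Xb^\T\Xb)(\hat{\Ab}^0)^\T-\Ab^0(n^{-1}\Xb^\T\Xb)(\Ab^0)^\T.
\end{equation*}
The first bracket is $O_p(\zeta_{nq,p}^{-1})$, and for the second I would use that $\|\Ab^0\|_F=O(1)$ --- a consequence of the bounded row-sum condition $\|\bSigma_{ux}^*\bSigma_{x}^{-1}\|_\infty\le M$ in Assumption~\ref{assumption: psd covariance}(iv), which forces $\|\Ab^\ddagger\|_F=O(1)$ --- so the quadratic difference is $O_p(\|\Ab^0\|\,\|\hat{\Ab}^0-\Ab^0\|)=O_p(\sqrt{p}\,\zeta_{nq,p}^{-1})$ rather than $O_p(p\,\zeta_{nq,p}^{-1})$. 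Applying Weyl's inequality and the Davis--Kahan theorem (as in Lemma~\ref{prop:average consistency}) transfers $\|\hat{M}-M\|=O_p(\sqrt{p}\,\zeta_{nq,p}^{-1})$ to $\hat{\cU}^0$ and $\hat{\cV}^0$, while the quadratic remainder of the matrix-function expansion together with the uniform tail factors produces the second term $p^{3/2}(nq)^{3/\xi}\zeta_{nq,p}^{-2}$; feeding in the crude $\hat{\Ab}^0$ rate gives the rate under Assumptions~\ref{assumption: psd covariance}--\ref{assumption:asymptotic normality}, and the sharpened $\hat{\Ab}^0$ rate from Assumption~\ref{assumption:A consistency} upgrades the leading term to $p\,\iota_{nq,p}$. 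The final claims $\|\hat{\Ab}^0-\Ab^0\|_F,\|\hat{\Gb}^0-\Gb^0\|_F=o_p(\zeta_{nq,p}^{-1})$ then follow by inserting the scalings $p^{3/2}\zeta_{nq,p}^{-1}(nq)^{3/\xi}\to0$ and $p\sqrt{n}\,\iota_{nq,p}\to0$.

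The main obstacle is the sharpened rate for $\hat{\Ab}^0$. Because both the covariate $\hat{\bgamma}_j^0$ and the response $\hat{\beta}_{js}^0$ are estimated, this is an errors-in-variables median regression, for which the usual smooth $M$-estimation expansion is unavailable; the non-smoothness of $\sign(\cdot)$ must be handled by an anti-concentration/Bahadur argument, and the two sources of estimation error have to be repackaged into the single asymptotically normal perturbation $\bdelta_{js}$ so that the regularity in Assumption~\ref{assumption:A consistency} applies. Controlling the resulting remainder uniformly over the diverging numbers of items and covariates --- the step that introduces the cubic tail factor $(nq)^{3/\xi}$ and the powers of $p$ --- is the delicate part, and it is this interplay between weakly dependent measurement errors and a diverging dimension that separates the present analysis from classical LAD theory.
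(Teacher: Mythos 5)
Your treatment of $\hat\Ab^0$ follows essentially the same route as the paper: the crude rate via a uniform deviation bound plus the strict-minimum/convexity argument, and the refined rate via the \cite{he1996bahadur} Bahadur representation after repackaging the estimation errors of $\hat\bgamma_j^0$ and $\hat\beta_{js}^0$ into the perturbation $\bdelta_{js}$ so that Assumption~\ref{assumption:A consistency} applies. One quantitative slip there: bounding the response error by $\|\hat\bbeta_{jc}^0-\bbeta_{jc}^0\|_1$ makes your per-column deviation $O_p(\sqrt{p}\,\zeta_{nq,p}^{-1})$, and aggregating $p$ columns then yields Frobenius rate $O_p(p\,\zeta_{nq,p}^{-1})$, not the claimed $O_p(\sqrt{p}\,\zeta_{nq,p}^{-1})$. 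The fix is immediate from inputs you already cite: for column $s$ only the $s$-th coordinate enters, so use $|\hat\beta_{js}^0-\beta_{js}^0|\le\|\hat\bbeta_j^0-\bbeta_j^0\|_2=O_p(\zeta_{nq,p}^{-1})$ from Lemma~\ref{prop:ind consistency}, giving per-column error $O_p(\zeta_{nq,p}^{-1})$ as in the paper.

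The genuine gap is in the $\hat\Gb^0$ part. You bound the Gram-matrix differences $\|q^{-1}(\hat\bGamma^0)^{\T}\hat\bGamma^0-q^{-1}(\bGamma^0)^{\T}\bGamma^0\|$ and $n^{-1}\|(\hat\Ub^0)^{\T}\hat\Ub^0-(\Ub^0)^{\T}\Ub^0\|$ by the naive Cauchy--Schwarz rate $O_p(\zeta_{nq,p}^{-1})$ from average consistency. This suffices for the first claim (Assumptions~\ref{assumption: psd covariance}--\ref{assumption:asymptotic normality}), where the $\hat\Ab^0$ contribution $O_p(\sqrt{p}\,\zeta_{nq,p}^{-1})$ dominates, but it cannot deliver the sharpened rate $O_p\big(p\iota_{nq,p}\vee p^{3/2}(nq)^{3/\xi}\zeta_{nq,p}^{-2}\big)$ under Assumption~\ref{assumption:A consistency}, nor the final conclusion $\|\hat\Gb^0-\Gb^0\|_F=o_p(\zeta_{nq,p}^{-1})$: under the stated scaling $p^{3/2}\zeta_{nq,p}^{-1}(nq)^{3/\xi}\to0$ one has $p^{3/2}(nq)^{3/\xi}\zeta_{nq,p}^{-2}=o(\zeta_{nq,p}^{-1})$ and $p\iota_{nq,p}=o(n^{-1/2})=o(\zeta_{nq,p}^{-1})$, so a leftover $O_p(\zeta_{nq,p}^{-1})$ term from the Gram matrices would strictly dominate the claimed bound and is certainly not $o_p(\zeta_{nq,p}^{-1})$. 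The paper avoids this by \emph{not} using Cauchy--Schwarz on the cross terms: it substitutes the Bahadur-type expansions \eqref{eq:normality_f1} and \eqref{eq:normality:u1}, controlled by Lemma~\ref{lemma:asym_estimate}, into $q^{-1}\sum_{j}(\hat\bgamma_j^0-\bgamma_j^0)(\bgamma_j^0)^{\T}$ and $n^{-1}\sum_{i}(\hat\bU_i^0-\bU_i^0)(\bU_i^0)^{\T}$. The leading terms $[\Hb_{Lff^{\prime}}^{-1}\bS_f]_{[P_j]}$ are mean zero and independent across $j$ (respectively across $i$), so these sums enjoy an extra $\sqrt{q}$ (respectively $\sqrt{n}$) averaging gain, yielding the much smaller bound $O_p\big(\sqrt{p/(nq)}\,\epsilon_{nq}\vee p(nq)^{3/\xi}\zeta_{nq,p}^{-2}\big)$ for the Gram-matrix differences; only then does feeding in the refined $\hat\Ab^0$ rate and the eigendecomposition perturbation (Lemma~\ref{lemma_for_convergenceG}) give the stated sharpened rate and the $o_p(\zeta_{nq,p}^{-1})$ conclusion. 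Without this mean-zero averaging step, your argument caps the $\hat\Gb^0$ rate at $O_p(\zeta_{nq,p}^{-1})$ and the last two displays of the lemma do not follow.
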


\begin{remark}Establishing the consistency of $\hat{\Ab}^0$ is closely related to the median regression problem with measurement errors.
$\hat\Ab^0$'s consistent rate  $O_p(\sqrt{p}\zeta_{nq,p}^{-1})$ can be derived using the individual rate, which is to be shown in Lemma~\ref{prop:ind consistency}. This result, however, is not enough for establishing asymptotic properties of $\hat\bphi^*$.
To derive a stringent convergence rate for $\hat{\Ab}^0$, we extend the general results of Bahadur representations for $M$-estimators in the median regression framework~\citep{he1996bahadur}. Compared to the results in~\cite{he2000quantile} with measurement errors being independent and following spherically symmetric distributions, our measurement errors, that is, $\hat{\bbeta}_j^0 - \bbeta_j^0$ and $\hat{\bgamma}_j^0 - \bgamma_j^0$, are asymptotically Gaussian with weak dependence. It leaves a non-trivial problem to show the similar consistency results of $\hat{\Ab}^0$ given the weakly dependent errors. The estimator $\hat{\Gb}^0$ is also consistent and the convergence rate is dependent on the asymptotic results of $\hat{\bphi}^0$ and also the estimation consistency of $\hat{\Ab}^0$.
\end{remark}

We start with proving the consistency of estimator $\hat{\Ab}^0$. We first define for $s=2,\cdots,p$ that $L_{s}^0 (\ba) = q^{-1} \| {\Bb}_{[,s]}^0 + {\bGamma}^0 (\ba -\ba_s^0)\|_1 = q^{-1} \sum_{j=1}^q | \beta_{js}^0 + (\bgamma_j^0)^{\T} (\ba -\ba_s^0)| $ and let $\ba_s^0 = \argmin_{\ba} L_s^0(\ba)$. Similarly, we let the loss function $\hat{L}_{s} (\ba)$ to be $ \hat{L}_s(\ba) = q^{-1} \| \hat{\Bb}_{[,s]}^0 + \hat{\bGamma}^0 (\ba -\ba_s^0) \|_1 = q^{-1} \sum_{j=1}^q |\hat{\beta}_{js}^0 + (\hat{\bgamma}_j^0)^{\T} (\ba -\ba_s^0)|$. We denote $\hat{\ba}_s^0 -\ba_s^0= \argmin_{\ba} \hat{L}_s(\ba)$.



\vspace{0.2in}

\noindent {\bf Preliminary Convergence of $\hat{\ba}_s^0$.}
 From the convergence rates for $\hat{\bphi}^0$, we have $| \hat{\bbeta}_{js}^0 - \bbeta_{js}^0|  \le \| \hat{\bbeta}_j^0 - \bbeta_j^0 \| = O_p(  \zeta_{nq,p}^{-1})$ and $\| \hat{\bgamma}_j^0 - \bgamma_j^0\| = O_p(  \zeta_{nq,p}^{-1})$, then for any $\ba$, we have
 \begin{equation}
     |\hat{L}_{s} (\ba) -  L_{s}^0 (\ba)| \le  q^{-1} \sum_{j=1}^q \big[| \hat{\bbeta}_{js}^0 - \bbeta_{js}^0|  + \| \hat{\bgamma}_j^0 - \bgamma_j^0\| \|(\ba -\ba_s^0)\|\big] = O_p\big( (1+\|(\ba -\ba_s^0)\|)\zeta_{nq,p}^{-1}\big). \label{eq:loss function diff}
 \end{equation}
 
 As $\zero = \arg\min_{\ba} L_{s}^0 (\ba)$ is unique, for any $\bv$, $\|\nabla_{\bv}L_s^0(\zero)\| > c$ for some $c > 0$. 
 Also since $L_s^0$ is convex, we have $\|\nabla_{\bv}L_s^0(\zero + t\bv )\| > c^{\prime}$. For any $\ba_s \neq \zero$, we have for certain $t$ that
 \begin{align}
     L_s^0(\ba_s) & = L_s^0(\zero) + \|\nabla_{\bv}L_s^0( t\bv )\|\| \ba_s\| \nonumber \\
     & \geq L_s^0(\zero) + c \| \ba_s\|. \label{eq:Ls0 expand}
 \end{align}
By definition we have $\hat{L}_s (\hat{\ba}_s^0-\ba_s^0) \leq \hat{L}_s (\zero)$. Then by taking $\ba_s=\hat \ba_s^0-\ba_s^0$ in~\eqref{eq:Ls0 expand} we have
\begin{align*}
    \hat{L}_s (\zero) - L_s^0(\zero) &\geq \hat{L}_s ({\hat\ba}_s^0-\ba_s^0) - L_s^0(\zero) \\&\geq c \| \hat{\ba}_s - \ba_s^0\|- |  \hat{L}_s (\hat{\ba}_s^0-\ba_s^0) - L_s^0(\hat{\ba}_s-\ba_s^0)|.
\end{align*}
As a result, we have by \eqref{eq:loss function diff} that
\begin{align*}
    c \| \hat{\ba}_s^0 - \ba_s^0\| &\le |  \hat{L}_s (\hat{\ba}_s^0-\ba_s^0) - L_s^0(\hat{\ba}_s^0-\ba_s^0)| +  \hat{L}_s (\zero) - L_s^0(\zero) \\&\lesssim\big(1+\|\hat\ba_s^0-\ba_s^0\|\big) \zeta_{nq,p}^{-1}.
\end{align*}
Therefore we have $\|\hat\ba_s^0-\ba_s^0\|=O_p(\zeta_{nq,p}^{-1})$ and by definition we know $\hat\Ab^0=(\hat\ba_1^0,\cdots,\hat\ba_{p}^0)$, from which we conclude that 
 \begin{equation}
    \|\hat{\Ab}^0 - \Ab^0\|_F = O_p(p^{1/2}\zeta_{nq,p}^{-1}).  \nonumber
 \end{equation}

\vspace{0.2in}

 \noindent {\bf Refined convergence rates.}
 We next obtain a more stringent rate for $\hat{\Ab}^0$. We write $\hat L_s(\ba)$ as
\begin{align*}
     \hat L_s(\ba)=&q^{-1}\sum_{j=1}^q\big|\hat\beta_{js}^0+{(\hat\bgamma_j^0)}^{\T}(\ba-\ba_s^0)\big|\\=& q^{-1}\sum_{j=1}^q\big|\hat\beta_{js}^0-\beta_{js}^0+ \beta_{js}^0+({\bgamma_j^0}+\hat\bgamma_j^0-\bgamma_j^0)^\T(\ba-\ba_s^0)\big|\\=&\,q^{-1}\sum_{j=1}^q\big|\beta_{js}^*+\hat\beta_{js}^0- \beta_{js}^0 -(\ba_s^0)^\T(\hat\bgamma_j^0 - {\bgamma_j^0})+\ba^\T\Gb^0\bgamma_j^*+\ba^\T(\hat\bgamma_j^0 - {\bgamma_j^0})\big|
\end{align*}

To continue the proof, we introduce the following technical results for further derivation of the derivative of $\hat L_s(\ba)$.

\noindent {\bf Lemma~\ref{lemma:asym_estimate}.}~Under Assumption~\ref{assumption: psd covariance}--\ref{assumption:asymptotic normality}, we have 

   \noindent(\romannumeral1) \begin{align} \big\|[\cH ^{-1}(\bphi^0)\bS(\bphi^0)]_{[P_j]}&-[ \Hb_{Lff^\prime}^{-1}(\bphi^0)\bS_f(\bphi^0)]_{[P_j]}\big\|=O_p\Big(\frac{p(nq)^{3/\xi}}{\sqrt{nq}}\epsilon_{nq}\Big);\nonumber \\\big\|[\cH ^{-1}(\bphi^0)\Rb]_{[P_j]}\big\|&=O_p\Big(\frac{p(nq)^{3/\xi}}{\zeta_{nq,p}^2}\Big).\nonumber \end{align}
\noindent(\romannumeral2) \begin{align}\big\|[\cH ^{-1}(\bphi^0)\bS(\bphi^0)]_{[q(K+p)+K_i]}&-[\Hb_{Luu^\prime}^{-1}(\bphi^0)\bS_u(\bphi^0)]_{[K_i]}\big\|=O_p\Big(\frac{p^{3/2}(nq)^{3/\xi}}{\sqrt{nq}}\epsilon_{nq}\Big);\nonumber \\\big\|[\cH ^{-1}(\bphi^0)\Rb]_{[q(K+p)+K_i]}\big\|&=O_p\Big(\frac{p^{3/2}(nq)^{3/\xi}}{\zeta_{nq,p}^2}\Big) .\nonumber \end{align}

By \eqref{eq:normality_f1} in Lemma~\ref{thm:asymptotic normality} and Lemma~\ref{lemma:asym_estimate},
we write the first order derivative of $\hat L_s(\ba)$ as
\begin{equation*}\partial_{\ba}\hat L_s(\ba)=q^{-1}\sum_{j=1}^q\psi_{js}
\big(\ba, \bdelta_{js}\big)+O_p\big(p(nq)^{3/\xi}\zeta_{nq,p}^{-2}\big),
\end{equation*}
where $\psi_{js} (\ba, \bdelta_{js})$ is defined in~\eqref{eq:phi js def}. We next introduce the following result extended from Corollary 2.2 in 
\cite{he1996bahadur}.
    \begin{theorem}[\citeauthor{he1996bahadur}, \citeyear{he1996bahadur}]
Suppose there exists $\balpha_{s0}$ such that $\chi_{s}(\balpha_{s0})=\sum_{j=1}^q $ $\EE \psi_{js}(\bdelta_{js},\balpha_{s0}) $ $= 0$ with $\|\balpha_{s0}\|\le O_p(v_q)$ for some sequence $v_q=o_p(q^{-1/2})$ as $q\to \infty$. In a neighbourhood of $\balpha_{s0}$, $\chi_{s}(\ba)$ has a nonsingular derivative $D_q$ such that $\{D_q(\balpha_{s0})\}^{-1}=O(q^{-1})$ and $|D_q(\ba)-D_q(\balpha_{s0})|\le kq|\ba-\balpha_{s0}|$. Moreover, assume $q^{-1}\sum_{j=1}^q\psi_{js}(\bdelta_{js},\balpha_{s0})=O_p(v_q)$. 
Let $\hat\balpha_s=\arg\min_{\ba}\sum_{j=1}^q\psi_{js}(\bdelta_{js},\ba)$, then we have $\|\hat{\balpha}_{s}\|=O_p(v_q)$.\label{lemma:A rate}
    \end{theorem}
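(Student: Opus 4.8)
The plan is to reproduce the $M$-estimation argument of \cite{he1996bahadur}, adapting it to the present vector-valued and weakly dependent estimating function. Within the framing of the theorem, $\hat\balpha_s$ is a zero of the estimating equation $\Psi_q(\ba) := \sum_{j=1}^q \psi_{js}(\bdelta_{js},\ba)$, this being the subgradient condition for the convex $\ell_1$ objective $\hat L_s$ whose derivative equals $q^{-1}\Psi_q(\ba)$ up to the negligible remainder already extracted in the display preceding the theorem. First I would decompose $\Psi_q(\ba) = \chi_s(\ba) + R_q(\ba)$ into the deterministic part $\chi_s(\ba) = \sum_j \EE\psi_{js}(\bdelta_{js},\ba)$ and the centered empirical process $R_q(\ba) = \Psi_q(\ba) - \chi_s(\ba)$. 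Evaluating at $\hat\balpha_s$ and using $\Psi_q(\hat\balpha_s)=0$ gives the identity $\chi_s(\hat\balpha_s) = -R_q(\hat\balpha_s)$, which is the relation I would exploit to transfer the stochastic fluctuation of $R_q$ into a bound on $\hat\balpha_s - \balpha_{s0}$.

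Next I would expand the deterministic part. Since $\chi_s(\balpha_{s0})=0$ and $\chi_s$ has nonsingular derivative $D_q$ near $\balpha_{s0}$ with $\{D_q(\balpha_{s0})\}^{-1}=O(q^{-1})$, a first-order Taylor expansion yields $\chi_s(\hat\balpha_s) = D_q(\balpha_{s0})(\hat\balpha_s - \balpha_{s0}) + r_q$, where the Lipschitz hypothesis $|D_q(\ba) - D_q(\balpha_{s0})| \le kq|\ba - \balpha_{s0}|$ controls the quadratic remainder as $\|r_q\| \lesssim kq\|\hat\balpha_s - \balpha_{s0}\|^2$. Inverting $D_q(\balpha_{s0})$ and combining with the identity above gives $\|\hat\balpha_s - \balpha_{s0}\| \lesssim q^{-1}\|R_q(\hat\balpha_s)\| + k\|\hat\balpha_s - \balpha_{s0}\|^2$, reducing everything to a bound on $q^{-1}\|R_q(\hat\balpha_s)\|$.

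The crux is controlling this stochastic term, which I would split as $q^{-1}\|R_q(\balpha_{s0})\| + q^{-1}\|R_q(\hat\balpha_s) - R_q(\balpha_{s0})\|$. The first piece equals $\|q^{-1}\sum_j \psi_{js}(\bdelta_{js},\balpha_{s0}) - q^{-1}\chi_s(\balpha_{s0})\| = O_p(v_q)$ directly from the moment hypothesis together with $\chi_s(\balpha_{s0})=0$. The second piece is the oscillation of the empirical process over a shrinking neighborhood, and bounding it requires a modulus-of-continuity estimate for the non-smooth sign/indicator summands. \emph{This is where I expect the main obstacle, and where the extension of \cite{he1996bahadur} genuinely bites}: the classical bound presumes independent summands, whereas here the perturbations $\bdelta_{js}$ inherit weak dependence across $j$ from the joint MLE and each $\psi_{js}$ is discontinuous. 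I would establish stochastic equicontinuity of the form $\sup_{\|\ba - \balpha_{s0}\|\le d} \|R_q(\ba) - R_q(\balpha_{s0})\| = O_p(\sqrt{qd}\,)$ up to logarithmic factors, via a chaining/bracketing argument in which the increment variance is shown to be $O(d)$ using the bounded density of the crossing events; the weak-dependence correction would be handled by controlling the pairwise covariances of the indicator increments, leveraging the asymptotic normality and conditional independence of the $Y_{ij}$ noted after \eqref{eq:phi js def}.

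Finally I would close by a rate-improvement iteration. Inserting the oscillation bound gives $q^{-1}\|R_q(\hat\balpha_s) - R_q(\balpha_{s0})\| = O_p(\sqrt{\|\hat\balpha_s - \balpha_{s0}\|/q}\,)$, so that $\|\hat\balpha_s - \balpha_{s0}\| \lesssim v_q + \sqrt{\|\hat\balpha_s - \balpha_{s0}\|/q} + k\|\hat\balpha_s - \balpha_{s0}\|^2$. Starting from the preliminary bound $\|\hat\balpha_s - \balpha_{s0}\| = O_p(\zeta_{nq,p}^{-1})$ obtained earlier, I would substitute it into the right-hand side and iterate; because $v_q = o_p(q^{-1/2})$ makes both the square-root oscillation term and the quadratic remainder lower order than $v_q$ at the fixed point, the iteration converges to $\|\hat\balpha_s - \balpha_{s0}\| = O_p(v_q)$. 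The desired conclusion $\|\hat\balpha_s\| = O_p(v_q)$ then follows from the triangle inequality together with the assumption $\|\balpha_{s0}\| = O_p(v_q)$.
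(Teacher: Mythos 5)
The first thing to note is that the paper does not prove this statement at all: it is imported as a known result, ``extended from Corollary 2.2 in \cite{he1996bahadur}'', and then applied as a black box (Assumption 5 of the paper is designed precisely to feed its hypotheses, with $v_q = \iota_{nq,p}$). So there is no internal proof to compare yours against; what you have written is an attempted reconstruction of the He--Shao result itself. Your skeleton --- decompose $\Psi_q(\ba)=\chi_s(\ba)+R_q(\ba)$, invert $D_q(\balpha_{s0})$ using the Lipschitz control on the derivative, bound $q^{-1}\|R_q(\balpha_{s0})\|=O_p(v_q)$ from the hypothesis at $\balpha_{s0}$, then handle the oscillation and iterate --- is the standard M-estimation architecture and is faithful to how results of this type are proved.

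However, two gaps keep the proposal from being a proof. First, the stochastic equicontinuity bound $\sup_{\|\ba-\balpha_{s0}\|\le d}\|R_q(\ba)-R_q(\balpha_{s0})\|=O_p(\sqrt{qd}\,)$ is the entire technical content of the result and you only assert it; more seriously, it is not derivable from the hypotheses of the theorem as stated, which contain no moment or oscillation condition on the increments of $\psi_{js}(\bdelta_{js},\cdot)$ and no explicit dependence structure for the $\bdelta_{js}$. That control is exactly what the He--Shao conditions (and, in this paper's application, Assumption 5 together with the asymptotic normality and weak dependence of the $\bdelta_{js}$ and the boundedness of the summands) are meant to supply; without importing or verifying those conditions, your chaining step has nothing to chain against. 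Second, the closing iteration does not close. With your oscillation bound, the self-consistency relation is $d\lesssim v_q+\sqrt{d/q}+kd^2$, whose fixed point is of order $v_q\vee q^{-1}$: at $d=v_q$ the term $\sqrt{v_q/q}$ is $o(v_q)$ if and only if $v_q\gg q^{-1}$. The hypothesis $v_q=o_p(q^{-1/2})$ is an \emph{upper} bound on $v_q$ and provides no such lower bound, so your claim that this hypothesis makes the oscillation term lower order is backwards. As written, the argument yields only $\|\hat\balpha_s-\balpha_{s0}\|=O_p(v_q\vee q^{-1})$, which coincides with the stated conclusion only under the additional, unstated assumption $v_q\gtrsim q^{-1}$ (harmless in the paper's application, where the $q^{-1}$ term would be absorbed by the larger approximation error $O_p(p(nq)^{3/\xi}\zeta_{nq,p}^{-2})$, but a real gap for the theorem as a standalone claim).
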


By Assumption~\ref{assumption:A consistency}, the conditions in Theorem~\ref{lemma:A rate} are satisfied with rate $\iota_{nq,p}$, and therefore the minimizer of $
q^{-1}\sum_{j=1}^q\psi_{js}(\bdelta_{js},\ba)$ denoted as $\hat{\ba}_{\psi}$ satisfies $\|\hat{\ba}_{\psi}\|= O_p(\iota_{nq,p})$. Then since $\hat\ba_s^0-\ba_{s}^0$ is the minimizer of $\hat L_s(\ba)$, it also minimizes $\big|\partial_{\ba}\hat L_s(\ba)\big|$. Then by a similar argument as in proving the preliminary convergence rates, we have $\| \hat{\ba}_s^0-\ba_s^0 - \hat{\ba}_{\psi}\| \lesssim | \hat L_s(\ba) - q^{-1}\sum_{j=1}^q\psi_{js}(\bdelta_{js},\ba)| =  O_p(p(nq)^{3/\xi} \zeta_{nq,p}^{-2})$.
Therefore, $\|\hat\ba_s^0-\ba_s^0\|= O_p(\iota_{nq,p}\vee p(nq)^{3/\xi} \zeta_{nq,p}^{-2})$, which implies
\begin{align}
    \| \hat{\Ab}^0 - \Ab^0 \|_F = O_p\Big(p^{1/2}\iota_{nq,p}\vee p^{3/2}(nq)^{3/\xi} \zeta_{nq,p}^{-2}\Big).\label{eq_consist_A_final}
\end{align}

Next we show the consistency rate of $\hat\Gb^0$. 
We start with bounding $\| (q^{-1} (\hat{\bGamma}^0)^{\T} \hat{\bGamma}^0)^{1/2} -(q^{-1} (\bGamma^0)^{\T} \bGamma^0)^{1/2} \|_F$.

Write 
\begin{align}
    &q^{-1}(\hat{\bGamma}^0)^{\T} \hat{\bGamma}^0 - q^{-1}(\bGamma^0)^{\T} \bGamma^0\nonumber \\
     =& q^{-1}\sum_{j=1}^q \hat{\bgamma}_j^{0} (\hat{\bgamma}_j^0)^{\T} - \bgamma_j^0 (\bgamma_j^0)^{\T} \nonumber\\
     = & q^{-1}\sum_{j=1}^q\Big[ (\hat{\bgamma}_j^0 - \bgamma_j^0)(\hat{\bgamma}_j^0 - \bgamma_j^0)^{\T} + {\bgamma}_j^0 (\hat{\bgamma}_j^0 - \bgamma_j^0)^{\T} + (\hat{\bgamma}_j^0 - \bgamma_j^0) ({\bgamma}_j^0)^{0\T}\Big].\label{eq_bound_trans_gamma}
\end{align}
From Lemma~\ref{prop:average consistency}, the first term on the right side of~\eqref{eq_bound_trans_gamma} can be bounded by $ O_p ( \zeta_{nq,p}^{-2})$. Moreover, combining \eqref{eq:normality_f1}, (i) of Lemma~\ref{lemma:asym_estimate}, we can show that
\begin{align*}
    &\Big\| \sum_{j=1}^q(\hat{\bgamma}_j^0 - \bgamma_j^0) (\bgamma_j^0)^{\T}\Big\|\\
    = &\Big\| \sum_{j=1}^q \Big[\sum_{i=1}^nl^{\prime\prime}_{ij}(w_{ij}^0)\bZ_i^0(\bZ_i^0)^\T \Big]^{-1}\Big(\sum_{i=1}^nl^{\prime}_{ij}(w_{ij}^0)\bZ_i^0\Big)(\bgamma_j^0)^{\T} \Big\|+O_p\Big(\frac{pq(nq)^{3/\xi}}{\zeta_{nq, p}^2}\Big)\\  = &O_p\Big( \sqrt{\frac{pq}{n}}\epsilon_{nq}\vee \frac{pq(nq)^{3/\xi}}{\zeta_{nq, p}^2}\Big).
\end{align*}
Plugging this estimate to the last two terms in \eqref{eq_bound_trans_gamma} we have \begin{equation}\|q^{-1}(\hat{\bGamma}^0)^{\T} \hat{\bGamma}^0 - q^{-1}(\bGamma^0)^{\T} \bGamma^0 \|_F =O_p\big( \sqrt{p/(nq)}\epsilon_{nq}\vee p(nq)^{3/\xi}\zeta_{nq,p}^{-2}\big).\label{eq_for_convergenceG1}\end{equation}

Similarly from \eqref{eq:normality:u1}, 
(ii) of Lemma~\ref{lemma:asym_estimate}, we have 
\begin{equation*}
    \big\|\sum_{i=1}^q(\hat\bU_i^0-\bU_i^0){\bU_i^0}^\T\big\|=O_p\big(\sqrt{p/(nq)}\epsilon_{nq}\vee p^{3/2}(nq)^{3/\xi}n\zeta_{nq,p}^{-2}\big).
\end{equation*}
By Lemma~\ref{prop:average consistency} we know $\|\hat\Ub^0-\Ub^0\|=O_p(n^{1/2}\zeta_{nq,p}^{-1})$. Then by $(\hat\Ub^0)^\T\Xb=(\Ub^0)^\T\Xb=\zero_{K\times p}$, we have 
  \begin{align*}
          (\hat{\Ub}^0 +& \Xb (\hat\Ab^{0})^{\T})^{\T} (\hat{\Ub}^0 + \Xb (\hat\Ab^{0})^{\T}) 
      -  (\Ub^0 + \Xb (\Ab^{0})^{\T})^{\T} (\Ub^0 + \Xb (\Ab^{0})^{\T})\\ =&
      \big[(\hat{\Ub}^0)^\T\hat{\Ub}^0-(\Ub^0)^\T{\Ub}^0\big]+\big[(\hat\Ab^0)^\T(\Xb^\T\Xb)\hat\Ab^0-(\Ab^0)^\T(\Xb^\T\Xb)\Ab^0\big]\\
      =&\sum_{i=1}^n\big[(\hat\bU_i^0-\bU_i^0)(\hat\bU_i^0-\bU_i^0)^\T+\bU_i^0(\hat\bU_i^0-\bU_i^0)^\T+(\hat\bU_i^0-\bU_i^0)(\bU_i^0)^\T\big]\\&+(\hat\Ab^0-\Ab^0)^\T(\Xb^\T\Xb)\hat\Ab^0-(\Ab^0)^\T(\Xb^\T\Xb)(\hat\Ab^0-\Ab^0),
 \end{align*}
together with the convergence rate of $\hat\Ab^0$, we have
\begin{align}
    \Big\|n^{-1}(\hat{\Ub}^0 + \Xb (\hat\Ab^{0})^{\T})^{\T} (\hat{\Ub}^0 + \Xb (\hat\Ab^{0})^{\T}) 
      -  n^{-1}(\Ub^0 + \Xb (\Ab^{0})^{\T})^{\T} (\Ub^0 + \Xb (\Ab^{0})^{\T})\Big\|\nonumber\\=O_p \Big(\frac{\sqrt{p}}{\zeta_{nq,p}}\vee \frac{p^{3/2}(nq)^{3/\xi}}{\zeta_{nq,p}^2}\Big)\label{eq_for_convergenceG2}
\end{align}
Finally combine \eqref{eq_for_convergenceG1}, \eqref{eq_for_convergenceG2} and by Lemma \ref{lemma_for_convergenceG} we have
\begin{equation}\| \hat{\Gb}^0 - \Gb^0\|_F = O_p \left(  \frac{\sqrt{p}}{\zeta_{nq,p}}\vee \frac{p^{3/2}(nq)^{3/\xi}}{\zeta_{nq,p}^2}\right).\label{eq_final_convergence_G2}
\end{equation}
The consistency rate under Assumption~\ref{assumption:A consistency} can be similarly derived for $\hat\Gb^0$ based on \eqref{eq_consist_A_final}.

\subsection{Proof of Lemma~\ref{thm:asymptotic normality}}
\begin{lemma}[Asymptotic Normality]
\label{thm:asymptotic normality}
Under Assumptions~\ref{assumption: psd covariance}--\ref{assumption:asymptotic normality}, 
we have the asymptotic distributions for the constrained maximum likelihood estimators $\hat{\bU}_i^0$ as 
\begin{equation*}
    \sqrt{q}(\bPhi_{i\gamma}^0)^{1/2}(\hat{\bU}_i^0-\bU_i^0) \overset{d}{\to} \cN(\zero_K, \Ib_K )\quad \text{ if }   p^{3/2}\sqrt{q}(nq)^{3/\xi}\zeta_{nq,p}^{-2}\to 0\text{, for all }i\in[n]
\end{equation*}
and $\hat{\bbf}_j^0$ as
\begin{equation*}
     \sqrt{n} \ba^{\T} (\bPhi_{jz}^0)^{1/2}(\hat{\bbf}_j^0-\bbf_j^0)\overset{d}{\to} \cN(0,1)  \quad \text{ if }   p\sqrt{n}(nq)^{3/\xi}\zeta_{nq,p}^{-2}\to 0\text{, for all }j\in[q].
\end{equation*}
for any $\ba\in\RR^{K+p}$ with $\|\ba\|_2 = 1$, 
where asymptotic variance matrices $\ba^\T(\bPhi_{i\gamma}^0)^{-1}\bb$ and $\ba^\T(\bPhi_{jz}^0)^{-1}\bb$
can be consistently estimated by 
    \begin{align*}
\ba^\T\big(\hat\bPhi_{i\gamma}^0\big)^{-1}\bb&=q\ba^\T\big\{\sum_{j=1}^q\hat l_{ij}^{\prime\prime}\hat\bgamma_j^0(\hat\bgamma_j^{0})^{\T}\big\}\big\{\sum_{j=1}^q(\hat l_{ij}^{\prime})^2\hat\bgamma_j^0(\hat\bgamma_j^{0})^{\T}\big\}^{-1}\{\sum_{j=1}^q\hat l_{ij}^{\prime\prime}\hat\bgamma_j^0(\hat\bgamma_j^{0})^{\T}\big\}\bb;\\ 
\ba^\T\big(\hat\bPhi_{jz}^0\big)^{-1}\bb&=n\bb^\T\big\{\sum_{i=1}^n\hat l_{ij}^{\prime\prime}\hat\bZ_i^0(\hat\bZ_i^0)^{\T}\big\}\big\{\sum_{i=1}^n(\hat l_{ij}^{\prime})^2\hat\bZ_i^0(\hat\bZ_i^0)^{\T}\big\}^{-1}\{\sum_{i=1}^n\hat l_{ij}^{\prime\prime}\hat\bZ_i^0(\hat\bZ_i^0)^{\T}\big\}\bb,
\end{align*}
with $\hat\bZ_i^0=((\hat\bU_i^0)^\T,\bX_i^\T)^\T$, $\hat{l}_{ij}^{\prime} = l_{ij}^{\prime} (\hat{\bgamma}_j^{0\intercal}\hat{\bU}_i^0  + \hat{\bbeta}_j^{0\intercal}\bX_i)$ and $\hat l_{ij}^{\prime\prime}= l_{ij}^{\prime\prime} (\hat{\bgamma}_j^{0\intercal}\hat{\bU}_i^0  + \hat{\bbeta}_j^{0\intercal}\bX_i)$ for any $\ba,\bb\in\RR^{K+p}$ with $\|\ba\|=\|\bb\|=1$.
\end{lemma}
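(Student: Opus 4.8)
The plan is to obtain a Bahadur-type linear representation for each of $\hat\bbf_j^0-\bbf_j^0$ and $\hat\bU_i^0-\bU_i^0$ and then apply a Lindeberg--Feller central limit theorem to the leading score sums. First I would invoke Lemma~\ref{lemma: first order condition} (and its $\bS_u$ analogue), which guarantees that the score vanishes at the constrained MLE \emph{w.h.p.}, so $\bS(\hat\bphi^0)=\zero$ and $\hat\bphi^0$ is an interior stationary point. Expanding the score about the orthogonalized truth $\bphi^0$ via the integral mean value theorem gives $\zero=\bS(\bphi^0)+\bar\cH(\hat\bphi^0-\bphi^0)$ with $\bar\cH=\int_0^1\cH(\bphi^0+t(\hat\bphi^0-\bphi^0))\,dt$, and the third-derivative bound $|l_{ij}^{\prime\prime\prime}|\le b_U$ from Assumption~\ref{assumption:smoothness} together with the individual rates of Lemma~\ref{prop:ind consistency} let me replace $\bar\cH$ by $\cH(\bphi^0)=\Hb_L(\bphi^0)+\Hb_R(\bphi^0)+\Hb_P(\bphi^0)$ up to a second-order remainder $\Rb$. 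Thus $\hat\bphi^0-\bphi^0=-\cH(\bphi^0)^{-1}\bS(\bphi^0)-\cH(\bphi^0)^{-1}\Rb$, where the penalty block $\Hb_P(\bphi^0)=\bLambda^0(\bLambda^0)^\T$ is exactly what renders $\cH(\bphi^0)$ invertible on the directions left singular by the identifiability indeterminacy.

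The crucial step is to show that the $[P_j]$ and $[q(K+p)+K_i]$ coordinates of $\cH(\bphi^0)^{-1}\bS(\bphi^0)$ decouple into block-diagonal expressions, which is precisely the content of Lemma~\ref{lemma:asym_estimate}: it replaces $[\cH^{-1}(\bphi^0)\bS(\bphi^0)]_{[P_j]}$ by $[\Hb_{Lff^\prime}^{-1}(\bphi^0)\bS_f(\bphi^0)]_{[P_j]}$ and $[\cH^{-1}(\bphi^0)\bS(\bphi^0)]_{[q(K+p)+K_i]}$ by $[\Hb_{Luu^\prime}^{-1}(\bphi^0)\bS_u(\bphi^0)]_{[K_i]}$, and bounds the remainder contributions $\|[\cH^{-1}(\bphi^0)\Rb]_{[P_j]}\|=O_p(p(nq)^{3/\xi}\zeta_{nq,p}^{-2})$ and $\|[\cH^{-1}(\bphi^0)\Rb]_{[q(K+p)+K_i]}\|=O_p(p^{3/2}(nq)^{3/\xi}\zeta_{nq,p}^{-2})$. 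Since $[\Hb_{Lff^\prime}]_{[P_j,P_j]}=-(nq)^{-1}\sum_il_{ij}^{\prime\prime}(w_{ij}^0)\bZ_i^0(\bZ_i^0)^\T$ and $[\bS_f(\bphi^0)]_{[P_j]}=-(nq)^{-1}\sum_il_{ij}^{\prime}(w_{ij}^0)\bZ_i^0$, this yields
\begin{equation}
\sqrt{n}\,(\hat\bbf_j^0-\bbf_j^0)=\Big(-\frac{1}{n}\sum_{i=1}^n l_{ij}^{\prime\prime}(w_{ij}^0)\bZ_i^0(\bZ_i^0)^\T\Big)^{-1}\frac{1}{\sqrt n}\sum_{i=1}^n l_{ij}^{\prime}(w_{ij}^0)\bZ_i^0+o_p(1),\label{eq:normality_f1}
\end{equation}
and, with $[\Hb_{Luu^\prime}]_{[K_i,K_i]}=-(nq)^{-1}\sum_jl_{ij}^{\prime\prime}(w_{ij}^0)\bgamma_j^0(\bgamma_j^0)^\T$,
\begin{equation}
\sqrt{q}\,(\hat\bU_i^0-\bU_i^0)=\Big(-\frac{1}{q}\sum_{j=1}^q l_{ij}^{\prime\prime}(w_{ij}^0)\bgamma_j^0(\bgamma_j^0)^\T\Big)^{-1}\frac{1}{\sqrt q}\sum_{j=1}^q l_{ij}^{\prime}(w_{ij}^0)\bgamma_j^0+o_p(1).\label{eq:normality:u1}
\end{equation}
The stated scaling conditions $p\sqrt n(nq)^{3/\xi}\zeta_{nq,p}^{-2}\to0$ and $p^{3/2}\sqrt q(nq)^{3/\xi}\zeta_{nq,p}^{-2}\to0$ are precisely those making the $\Rb$-remainders $o_p(1)$ after multiplication by $\sqrt n$ and $\sqrt q$.

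It remains to apply the CLT to \eqref{eq:normality_f1} and \eqref{eq:normality:u1}. Because $w_{ij}^0=w_{ij}^*$, $\EE\{l_{ij}^{\prime}(w_{ij}^*)\}=0$, and $l_{ij}^{\prime}(w_{ij}^*)$ is sub-exponential with bounded $\bZ_i^0$ (Assumptions~\ref{assumption: psd covariance}--\ref{assumption:smoothness}), the summands $l_{ij}^{\prime}(w_{ij}^0)\bZ_i^0$ are, for fixed $j$, independent and mean-zero across $i$ by the conditional independence of the responses; the Lindeberg condition follows from the uniform sub-exponential bound, so $\ba^\T\frac{1}{\sqrt n}\sum_i l_{ij}^{\prime}(w_{ij}^0)\bZ_i^0\overset{d}{\to}\cN(0,\ba^\T\bPhi_{jz}^0\ba)$ for any fixed $\ba$, and the same argument over $j$ handles \eqref{eq:normality:u1}. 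For the denominators I would use Assumption~\ref{assumption:asymptotic normality} together with the linear relation between $\bZ_i^0$ and $\bZ_i^*$ induced by $\bU_i^0=(\Gb^\ddagger)^\T(\bU_i^*-\Ab^\ddagger\bX_i)$ and $\bgamma_j^0=(\Gb^\ddagger)^{-1}\bgamma_j^*$, giving the congruence limits $\bPhi_{jz}^0$ and $\bPhi_{i\gamma}^0$ of \eqref{eq_definephi_jz}--\eqref{eq_definephi_igamma}; the second Bartlett identity $\EE\{(l_{ij}^{\prime})^2\}=-\EE\{l_{ij}^{\prime\prime}\}$ collapses the score covariance onto the same limits, so Slutsky's theorem delivers $\sqrt n(\bPhi_{jz}^0)^{1/2}(\hat\bbf_j^0-\bbf_j^0)\overset{d}{\to}\cN(0,1)$ and $\sqrt q(\bPhi_{i\gamma}^0)^{1/2}(\hat\bU_i^0-\bU_i^0)\overset{d}{\to}\cN(\zero,\Ib_K)$. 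Consistency of the plug-in sandwich estimators $\hat\bPhi_{jz}^0,\hat\bPhi_{i\gamma}^0$ then follows from the continuity of $l_{ij}^{\prime},l_{ij}^{\prime\prime}$, the bound $|\hat w_{ij}^0-w_{ij}^0|=O_p(\sqrt p\,\zeta_{nq,p}^{-1})$ from Lemma~\ref{prop:ind consistency}, and the law of large numbers, exactly as in the proof of Corollary~1.

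I expect the main obstacle to be the decoupling step, i.e. controlling $\cH(\bphi^0)^{-1}$ so that the coupling through $\Hb_{Lfu^\prime}$, the complement term $\Hb_R$, and the penalty $\Hb_P$ all contribute only higher-order terms. This is delicate because $p\to\infty$ enlarges each $\bbf_j$-block and inflates the scale of the Hessian relative to a covariate-free factor model, so bounding the inverse uniformly while tracking the $(nq)^{3/\xi}$ moment factors is where the effort concentrates; it is encapsulated in Lemma~\ref{lemma:asym_estimate}, which rests on the first-order concentration of Lemma~\ref{lemma:first-order derivative} and the local convexity established for the penalized objective \eqref{eq:MLE 1prime and 2}. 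By comparison, verifying the Lindeberg condition in the diverging dimension and matching the congruence transformations to the exact forms of $\bPhi_{jz}^0$ and $\bPhi_{i\gamma}^0$ are routine.
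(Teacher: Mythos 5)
Your proposal is correct and follows essentially the same route as the paper's proof: first-order condition at the constrained MLE, a Taylor/mean-value expansion of the score inverted through $\cH(\bphi^0)$, the decoupling of the $[P_j]$ and $[K_i]$ blocks via Lemma~\ref{lemma:asym_estimate} under exactly the stated scaling conditions, a Lindeberg--Feller CLT for the block-diagonal leading terms combined with the congruence relation between $\bZ_i^0$ and $\bZ_i^*$ and the Bartlett identity to identify $\bPhi_{jz}^0$ and $\bPhi_{i\gamma}^0$, and plug-in consistency from the individual rates and the law of large numbers. The only cosmetic difference is that you phrase the expansion through the integral mean value theorem and then swap $\bar\cH$ for $\cH(\bphi^0)$, whereas the paper writes the second-order expansion with remainder $\tfrac12\Rb$ directly; these are equivalent and rely on the same remainder bounds.
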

\begin{remark}\label{remkr_sandwitch}
    {Here we introduce sandwich estimators for the covariance matrices $\ba^\T\bPhi_{i\gamma}^0\bb$ and $\ba^\T\bPhi_{jz}^0\bb$. For $\bPhi_{i\gamma}^0$, its estimator $\hat\bPhi_{i\gamma}^0$ can be considered as an approximation to
    \begin{align*}
        &q^{-1}\sum_{j=1}^q\EE [l_{ij}^{\prime\prime}(w_{ij}^0)]\bgamma_j^0(\bgamma_j^0)^\T \\&= \big\{q^{-1}\sum_{j=1}^q\EE [l_{ij}^{\prime\prime}(w_{ij}^0)]\bgamma_j^0(\bgamma_j^0)^\T\big\}\big\{q^{-1}\sum_{j=1}^q\EE [l_{ij}^{\prime}(w_{ij}^0)^2]\bgamma_j^0(\bgamma_j^0)^\T\big\}^{-1}\big\{q^{-1}\sum_{j=1}^q\EE [l_{ij}^{\prime\prime}(w_{ij}^0)]\bgamma_j^0(\bgamma_j^0)^\T\big\},
    \end{align*}as $\EE [l_{ij}^{\prime\prime}(w_{ij})] = \EE [l_{ij}^{\prime}(w_{ij}^0)^2]$. Though $\sum_{j=1}^q\hat l_{ij}^{\prime\prime}\hat\bgamma_j^0(\hat\bgamma_j^{0})^{\T}$ and $\sum_{j=1}^q(\hat l_{ij}^{\prime})^2\hat\bgamma_j^0(\hat\bgamma_j^{0})^{\T}$ can also be shown to consistently estimate $\bPhi^0_{i\gamma}$, we adopt the sandwich estimator because it tends to perform better in practice. A similar rationale applies to estimating $\bPhi^0_{jz}$.} Further, we write the asymptotic covariance matrix $(\bPhi_{jz}^0)^{-1}$ in a $2\times 2$ block defined as follows:
\begin{align*}
    \bPhi_{\gamma,j}^0=\big[(\bPhi_{jz}^0)^{-1}\big]_{[1:K,1:K]},\;\bPhi_{\beta,j}^0=\big[(\bPhi_{jz}^0)^{-1}\big]_{[(K+1):(K+p),(K+1):(K+p)]},\\\;\bPhi_{\gamma\beta, j}^0=\big[(\bPhi_{jz}^0)^{-1}\big]_{[1:K,(K+1):(K+p)]},\;\bPhi_{\beta\gamma, j}^0=\big[(\bPhi_{jz}^0)^{-1}\big]_{[(K+1):(K+p),1:K]}.
\end{align*}
Then $\bPhi_{\gamma,j}^0$ can be consistently estimated by the plug-in estimator as
\begin{align*}
      \hat\bPhi_{\gamma,j}^0=n\big(\sum_{i=1}^n\hat l_{ij}^{\prime\prime}\hat\bU_i^0(\hat\bU_i^{0})^{\T}\big)^{-1}\big\{\sum_{i=1}^n(\hat l_{ij}^{\prime})^2\hat\bU_i^0(\hat\bU_i^{0})^{\T}\big\}(\sum_{i=1}^n\hat l_{ij}^{\prime\prime}\hat\bU_i^0(\hat\bU_i^{0})^{\T}\big)^{-1},
      \end{align*}
      and for any $\ba,\bb\in\RR^{p}$ with $\|\ba\|=\|\bb\|=1$, $\bb^\T\bPhi_{\beta,j}^0\ba$ and $\bPhi_{\gamma\beta,j}^0\ba$ can be consistently estimated by the plug-in estimator as
      \begin{align*}
      \bb^\T\hat\bPhi_{\beta,j}^0\ba&=n\bb^\T\big(\sum_{i=1}^n\hat l_{ij}^{\prime\prime}\bX_i\bX_i^\T\big)^{-1}\big\{\sum_{i=1}^n(\hat l_{ij}^{\prime})^2\bX_i\bX_i^\T\big\}(\sum_{i=1}^n\hat l_{ij}^{\prime\prime}\bX_i\bX_i^\T\big)^{-1}\ba;\\
      \hat\bPhi_{\gamma\beta,j}^0\ba&=n\big(\sum_{i=1}^n\hat l_{ij}^{\prime\prime}\hat\bU_i^0(\hat\bU_i^{0})^{\T}\big)^{-1}\big\{\sum_{i=1}^n(\hat l_{ij}^{\prime})^2\hat\bU_i^0\bX_i^{\T}\big\}(\sum_{i=1}^n\hat l_{ij}^{\prime\prime}\bX_i\bX_i^{\T}\big)^{-1}\ba,
\end{align*}
with $\hat\bPhi_{\beta\gamma,j}^0=(\hat\bPhi_{\gamma\beta,j}^0)^\T$.

\end{remark}
\begin{remark}
    Lemma~\ref{thm:asymptotic normality} provides the asymptotic distributions for all individual estimators under Conditions $1^{\prime}$--$2^{\prime}$. For the asymptotic distributions of $\hat{\bbeta}_j^0$ and $\hat{\bgamma}_j^0$, the scaling conditions imply $p = o(n^{1/4} \wedge q/\sqrt{n})$ up to small order term. For the asymptotic normality of $\hat{\bU}_i^0$, the scaling condition implies $p = o\{q^{1/3} \wedge (n^2/q)^{1/5}\}$ up to small order term. These asymptotic results lay the foundation for deriving the asymptotic distributions for $\hat{\bphi}^*$.
\end{remark}
\begin{proof}
To establish the asymptotic distribution for the estimators, we expand the first order condition $S(\hat{\bphi}^0)$ to high orders as follows
\begin{equation}
    0=\bS(\bphi^0)+\mathcal{H}(\bphi^0)(\hat{\bphi}^0-{\bphi}^0)+\frac12\Rb,\label{eq:nomality_1}
\end{equation} where
\begin{align*}
    \Rb&=(\Rb_f^\T ,\Rb_ U^\T  )^\T ;\\
[\Rb_{f}]_{[(j-1)(K+p)+r]}&=(\hat{\bphi}^0-{\bphi}^0)^\T  \partial_{{\bphi}{\bphi} \gamma_{jr}}\cL({\bphi}^\flat)(\hat{\bphi}^0-{\bphi}^0);\\
[\Rb_{f}]_{[(j-1)(K+p)+K+s]}&=(\hat{\bphi}^0-{\bphi}^0)^\T  \partial_{{\bphi}{\bphi} \beta_{js}}\cL({\bphi}^\flat)(\hat{\bphi}^0-{\bphi}^0);\\
[\Rb_{ U}]_{[(i-1)K+r]}&=(\hat{\bphi}^0-{\bphi}^0)^\T  \partial_{{\bphi}{\bphi} U_{ir}}\cL({\bphi}^\flat)(\hat{\bphi}^0-{\bphi}^0),
\end{align*}
for $i\in[n]$, $j\in[q]$, $r\in[K]$, $s\in[p]$ and ${\bphi}^\flat$ in the segment between $\hat{\bphi}^0$ and ${\bphi}^0$.
Invert the matrix $\mathcal{H}(\bphi^0)$ we have
\begin{align}
\hat\bbf_j^0-\bbf_j^0&=-[\mathcal H^{-1}(\bphi^0)\bS(\bphi^0)]_{[P_j]}-\frac12[\mathcal H^{-1}(\bphi^0)\Rb]_{[P_j]}\label{eq:normality_f1};\\
    \hat{\bU}_i^0-\Ub^0_i&=-[\mathcal H^{-1}(\bphi^0)\bS(\bphi^0)]_{[q(K+p)+K_i]}-\frac12[\mathcal H^{-1}(\bphi^0)\Rb]_{[q(K+p)+K_i]}\label{eq:normality:u1}.
\end{align}

Implied by Lemma~\ref{lemma:asym_estimate}, we know that the first term in \eqref{eq:normality_f1} can be approximated by $[ \Hb_{Lff^\prime}^{-1}(\bphi^0)\bS_f(\bphi^0)]_{[P_j]}$, which is asymptotic normal as will be shown in the following. The approximation error and the second term in \eqref{eq:normality_f1} are considered as negligible terms compared to $[ \Hb_{Lff^\prime}^{-1}(\bphi^0)\bS_f(\bphi^0)]_{[P_j]}$ when $p\sqrt{n}(nq)^{3/\xi}\zeta_{nq,p}^{-2}\to 0$. For $\hat{\bU}_i^0-\Ub^0_i$ similarly by Lemma~\ref{lemma:asym_estimate} we know that it can be approximated by $[ \Hb_{Luu^\prime}^{-1}(\bphi^0)\bS_u(\bphi^0)]_{[K_i]}$ with a negligible term when $p^{3/2}\sqrt{q}(nq)^{3/\xi}\zeta_{nq,p}^{-2}\to 0$.

To show the asymptotic normality of the leading term $[ \Hb_{Lff^\prime}^{-1}(\bphi^0)\bS_f(\bphi^0)]_{[P_j]}$ in \eqref{eq:normality_f1}, we verify the Lindeberg-Feller condition \citep{ash2000probability}. For any $\ba\in\RR^{p+K}$ and $\|\ba\|=1$, define the triangular array $\{\bV_{ni,j}(\ba)\}_{i,n}$ for each $j$:
\begin{equation*}
    \bV_{ni,j}(\ba) = \ba^\T\big\{n^{-1}\sum_{t=1}^n
    \EE [l_{tj}^{\prime\prime}(w_{tj}^0)]\bZ_t^0(\bZ_t^0)^\T\big\}^{-1}l_{ij}^{\prime}(w_{ij}^0)\bZ_i^0.
\end{equation*}
Note $\EE\bV_{ni,j}(\ba)=\zero$ as $\EE l_{ij}^{\prime}(w_{ij}^0)=0$. Then by independence we know that \begin{align*}s_n^2 :=&\, n^{-1}\sum_{i=1}^n\text{Var}\big(\bV_{ni,j}(\ba)\big)\\=&\, n^{-1}\sum_{i=1}^n \ba^\T\big\{n^{-1}\sum_{t=1}^n
    \EE [l_{tj}^{\prime\prime}(w_{tj}^0)]\bZ_t^0(\bZ_t^0)^\T\big\}^{-1}\big\{\EE[l_{ij}^{\prime}(w_{ij}^0)^2]\bZ_i^0(\bZ_i^0)^\T\big\}\\&\,\big\{n^{-1}\sum_{t=1}^n
    \EE [l_{tj}^{\prime\prime}(w_{tj}^0)]\bZ_t^0(\bZ_t^0)^\T\big\}^{-1}\ba\\=&\, \ba^\T\big\{n^{-1}\sum_{t=1}^n
    \EE [l_{tj}^{\prime\prime}(w_{tj}^0)]\bZ_t^0(\bZ_t^0)^\T\big\}^{-1}\ba.\end{align*}
    Note that 
    \begin{equation*}
        \bZ_i^0 =\begin{pmatrix}
            (\Gb^{\ddagger})^\T&-(\Gb^{\ddagger})^\T\Ab^\ddagger\\\zero&\Ib_p
        \end{pmatrix}\begin{pmatrix}
            \bU_i^*\\\bX_i
        \end{pmatrix}
    \end{equation*}
    Then by Assumption~\ref{assumption:asymptotic normality} and definition of $\bPhi^0_{jz}$ in \eqref{eq_definephi_jz}, we have
    \begin{align*}s_n^2 :=&\, \ba^\T\begin{pmatrix}
            (\Gb^{\ddagger})^{-\T}&-\Ab^\ddagger\\\zero&\Ib_p
        \end{pmatrix}\big\{n^{-1}\sum_{t=1}^n
    \EE [l_{tj}^{\prime\prime}(w_{tj}^0)]\bZ_t^*(\bZ_t^*)^\T\big\}^{-1}\begin{pmatrix}
            (\Gb^{\ddagger})^{-1}&\zero\\-(\Ab^\ddagger)^\T&\Ib_p
        \end{pmatrix}\ba\\&\overset{p}{\to}\,\ba^\T\big\{\bPhi^0_{jz}\big\}^{-1}\ba.\end{align*}
        Next we compute
        \begin{align*}
            &\,\frac{1}{n}\sum_{i=1}^n\EE\big[\bV_{ni,j}(\ba)^21(|\bV_{ni,j}(\ba)|>\epsilon \sqrt n)\big]\\\le &\,\frac{1}{n}\sum_{i=1}^n\Big\{\EE[\bV_{ni,j}(\ba)^4]\PP\big(|\bV_{ni,j}(\ba)|>\epsilon \sqrt n\big)\Big\}^{1/2}\\\le &\, \max_iM^{2/\xi}\PP\left(|l_{ij}^{\prime}(w_{ij}^0)|>\epsilon \sqrt n\lambda_{\max}\big\{-n^{-1}\sum_{i=1}^n\EE[l_{ij}^{\prime\prime}(w_{ij}^0)]\bZ_i^0(\bZ_i^0)^\T\big\}^{-1}\right)^{1/2}\\\le &\,\max_iM^{2/\xi}\PP\left(|l_{ij}^{\prime}(w_{ij}^0)|>b_U^{-1}\epsilon \lambda_{\max}\big\{n^{-1}\sum_{i=1}^n\bZ_i^0(\bZ_i^0)^\T\big\}^{-1}\sqrt n\right)^{1/2}\\\overset{p}{\to}&\,0,
        \end{align*}
        as $\lambda_{\max}\big\{n^{-1}\sum_{i=1}^n\bZ_i^0(\bZ_i^0)^\T\big\}$ can be bounded from above and $l_{ij}^{\prime}(w_{ij}^0)$ is sub-exponential with $\|l_{ij}^{\prime}(w_{ij})\|_{\varphi_1}\le M$. Then by Lindeberg-Feller central limit theorem we have
        \begin{equation}
            \sqrt{n}\ba^\T\big\{n^{-1}\sum_{t=1}^n
    \EE [l_{tj}^{\prime\prime}(w_{tj}^0)]\bZ_t^0(\bZ_t^0)^\T\big\}^{-1}n^{-1}\sum_{i=1}^nl_{ij}^{\prime}(w_{ij}^0)\bZ_i^0 \overset{d}{\to} \cN\big(0,\ba^\T\big\{\bPhi_{jz}^0\big\}^{-1}\ba\big). \label{eq:lf clt}
        \end{equation}
        Finally by weak law of large numbers (WLLN) we know that $n^{-1}\sum_{t=1}^n
    l_{tj}^{\prime\prime}(w_{tj}^0)\bZ_t^0(\bZ_t^0)^\T\overset{p}{\to}n^{-1}\sum_{t=1}^n
    \EE [l_{tj}^{\prime\prime}(w_{tj}^0)]\bZ_t^0(\bZ_t^0)^\T$. We then conclude with
 $\Hb_{Lff^{\prime}}(\bphi^0)$ is block-diagonal that for any $\ba\in\RR^{K+p}$ with $\|\ba\|=1$,
\begin{equation*}
  \sqrt{n}  \ba^\T(\bPhi_{jz}^0)^{1/2}[ \Hb_{Lff^\prime}^{-1}(\bphi^0)\bS_f(\bphi^0)]_{[P_j]} \overset d\rightarrow \cN (0,1).
\end{equation*}
Expanding~\eqref{eq:normality_f1}, we have 
\begin{align*}
    \sqrt{n}(\hat\bbf_j^0-\bbf_j^0) =& - \sqrt{n}  [ \Hb_{Lff^\prime}^{-1}(\bphi^0)\bS_f(\bphi^0)]_{[P_j]}  \\
    & - \sqrt{n} \{ [\mathcal H^{-1}(\bphi^0)\bS(\bphi^0)]_{[P_j]}-  [ \Hb_{Lff^\prime}^{-1}(\bphi^0)\bS_f(\bphi^0)]_{[P_j]} \} - \frac{\sqrt{n}}{2} [\mathcal H^{-1}(\bphi^0)\Rb]_{[P_j]}.
\end{align*}
Under the scaling condition that $\sqrt{n} {p(nq)^{3/\xi}}/{\zeta_{nq, p}^2} \rightarrow 0 $, the small-order terms in Lemma~\ref{lemma:asym_estimate} (i) can be omitted. Hence we have the following asymptotic distribution
\begin{equation}
    \sqrt{n}\ba^\T(\bPhi_{jz}^0)^{1/2}(\hat\bbf_j^0-\bbf_j^0)  \rightarrow \cN (0,1), \quad \text{ if } p\sqrt{n}(nq)^{3/\xi}\zeta_{nq,p}^{-2}\to 0.\label{eq_aymp_dist_f}
\end{equation}
As a result, for $\ba \in \RR^{K+p}$ with $\|\ba\|=1$
\begin{align*}
    \sqrt{n} \ba^{\T} (\bPhi_{jz}^0)^{1/2} (\hat{\bbeta}_j^0 - \bbeta_j^0) \overset{d}\rightarrow \cN (0, 1). 
\end{align*}

Similarly for $\hat{\bU}_i^0 - \bU_i^0$, we can expand~\eqref{eq:normality:u1} as 
\begin{align*}
   \sqrt{q} (  \hat{\bU}_i^0-\bU_i^0) = & -\sqrt{q} [\Hb_{Luu^\prime}^{-1}(\bphi^0)\bS_u(\bphi^0)]_{[K_i]}  - \sqrt{q}\big \{ [\cH ^{-1}(\bphi^0)\bS(\bphi^0)]_{[q(K+p)+K_i]} \\
   & - [\Hb_{Luu^\prime}^{-1}(\bphi^0)\bS_u(\bphi^0)]_{[K_i]}\big\} - \frac{\sqrt{q}}{2}[\mathcal H^{-1}(\bphi^0)\Rb]_{[q(K+p)+K_i]}.
\end{align*}
A similar procedure to verify the Lindeberg-Feller condition yields
\begin{equation*}
     \sqrt{q} [\Hb_{Luu^\prime}^{-1}(\bphi^0)\bS_u(\bphi^0)]_{[K_i]}  \overset d\rightarrow  \cN(\bm{0}, \bPsi_{i\gamma}^0).
\end{equation*}
Under the scaling condition that $\sqrt{q} {p^{3/2}(nq)^{3/\xi}}/{\zeta_{nq, p}^2} \rightarrow 0 $, the small-order terms in Lemma~\ref{lemma:asym_estimate} (2) are negligible, so we have
\begin{equation*}
     \sqrt{q} (\bPhi_{i\gamma}^0)^{1/2} (  \hat{\bU}_i^0-\bU_i^0)  \overset d\rightarrow  \cN(\bm{0}, \Ib_K),\quad  \text{ if } p^{3/2}\sqrt{q}(nq)^{3/\xi}\zeta_{nq,p}^{-2}\to 0.\label{eq_aymp_dist_for_u}
\end{equation*}
where $\bPhi_{i\gamma}^0 = (\bPsi_{i\gamma}^0)^{-1} \bOmega_{i\gamma}^0(\bPsi_{i\gamma}^0)^{-1}$.
The asymptotic variances $\bPhi_{i\gamma}^0$ and $\bb^\T\bPhi_{jz}^0\ba$ can be consistently estimated by
 \begin{align*}
\hat\bPhi_{i\gamma}^0&=q\big(\sum_{j=1}^q\hat l_{ij}^{\prime\prime}\hat\bgamma_j^0(\hat\bgamma_j^{0})^{\T}\big)\big\{\sum_{j=1}^q(\hat l_{ij}^{\prime})^2\hat\bgamma_j^0(\hat\bgamma_j^{0})^{\T}\big\}^{-1}(\sum_{j=1}^q\hat l_{ij}^{\prime\prime}\hat\bgamma_j^0(\hat\bgamma_j^{0})^{\T}\big);\\ 
\bb^\T\hat\bPhi_{jz}^0\ba&=n\bb^\T\big(\sum_{i=1}^n\hat l_{ij}^{\prime\prime}\hat\bZ_i^0(\hat\bZ_i^0)^{\T}\big)\big\{\sum_{i=1}^n(\hat l_{ij}^{\prime})^2\hat\bZ_i^0(\hat\bZ_i^0)^{\T}\big\}^{-1}(\sum_{i=1}^n\hat l_{ij}^{\prime\prime}\hat\bZ_i^0(\hat\bZ_i^0)^{\T}\big)\ba,
\end{align*}
for any $\ba,\bb\in\RR^{K+p}$ with $\|\ba\|=\|\bb\|=1$,
and the consistency of $\hat\bPhi_{i\gamma}$ and $\hat\bPhi_{jz}^0$ can be shown by Assumption~\ref{assumption:smoothness}, Lemma~\ref{prop:ind consistency} and WLLN.

\end{proof}

\section{Proofs of Other Technical Lemmas}\label{sec:prove_other}
\subsection{Proof of Lemma~\ref{lemma_for_convergenceG}}
\begin{lemma}
    Suppose $\Ab,\Bb\in\mathbb{R}^{K\times K}$ are positive-definite matrices and the eigen-gap of $\Ab\Bb$ is positive. If symmetric matices $\hat\Ab$ and $\hat\Bb $ are consistent estimates of $\Ab$ and $\Bb $ with rate $\nu\to0$.
    \begin{equation*}
        \|\hat\Ab-\Ab\|=O_p(\nu),\quad\|\hat\Bb -\Bb \|=O_p(\nu).
    \end{equation*}
     Suppose $\Gb$ satisfies $
        \Gb^\T\Ab \Gb= \Gb^{-1}\Bb \Gb^{-\T} =$ {diagonal} and correspondingly $\hat \Gb$ satisfies $
         \hat \Gb^\T\hat\Ab\hat \Gb=\hat \Gb^{-1}\hat\Bb \hat \Gb^{-\T}=$  diagonal.
    Then
    \begin{equation*}
        \|\hat \Gb- \Gb\|=O_p(\nu).
    \end{equation*}\label{lemma_for_convergenceG}
\end{lemma}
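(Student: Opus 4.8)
The plan is to reduce the balanced joint diagonalization to an ordinary symmetric eigenproblem and then propagate the $O_p(\nu)$ error through standard matrix perturbation bounds. Since $\Ab$ is positive definite, let $\Ab^{1/2}$ denote its symmetric positive-definite square root and set $\Mb = \Ab^{1/2}\Bb\Ab^{1/2}$, a symmetric positive-definite matrix. Because $\Ab\Bb = \Ab^{1/2}\Mb\Ab^{-1/2}$ is similar to $\Mb$, the eigenvalues of $\Mb$ coincide with those of $\Ab\Bb$ and are therefore distinct with a positive gap by assumption; write the spectral decomposition $\Mb = \Qb\bLambda\Qb^\T$ with $\Qb$ orthogonal and $\bLambda = \diag(\lambda_1,\dots,\lambda_K)$. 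A direct computation shows that $\Gb = \Ab^{-1/2}\Qb\bLambda^{1/4}$ satisfies $\Gb^\T\Ab\Gb = \Gb^{-1}\Bb\Gb^{-\T} = \bLambda^{1/2}$, so it is a balanced simultaneous diagonalizer with diagonal $\Db = \bLambda^{1/2}$. I would first record that, given distinct $\lambda_k$, such a $\Gb$ is unique up to a signed column permutation: rescaling $\Gb \mapsto \Gb\Cb$ by a diagonal $\Cb$ forces $\Cb\bLambda^{1/2}\Cb = \Cb^{-1}\bLambda^{1/2}\Cb^{-1}$, i.e. $c_k^4 = 1$, hence $c_k = \pm1$. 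Consequently $\hat\Gb$ admits the analogous representation $\hat\Gb = \hat\Ab^{-1/2}\hat\Qb\hat\bLambda^{1/4}$ with $\hat\Mb = \hat\Ab^{1/2}\hat\Bb\hat\Ab^{1/2} = \hat\Qb\hat\bLambda\hat\Qb^\T$, after fixing the sign/ordering convention (matching columns of $\hat\Qb$ to those of $\Qb$), as is standard in the factor-analysis literature.

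With both matrices written in this closed form, I would propagate the error factor by factor. First, since $\Ab$ is fixed and positive definite and $\hat\Ab \to \Ab$, the eigenvalues of $\hat\Ab$ lie w.h.p. in a compact interval bounded away from $0$ and $\infty$; on this set the maps $\Mb \mapsto \Mb^{1/2}$ and $\Mb \mapsto \Mb^{-1/2}$ are Lipschitz, giving $\|\hat\Ab^{1/2} - \Ab^{1/2}\| = O_p(\nu)$ and $\|\hat\Ab^{-1/2} - \Ab^{-1/2}\| = O_p(\nu)$. The product rule for bounded consistent estimates then yields $\|\hat\Mb - \Mb\| = O_p(\nu)$. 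Weyl's inequality gives $\max_k|\hat\lambda_k - \lambda_k| = O_p(\nu)$, which also ensures the $\hat\lambda_k$ are distinct w.h.p. so the matching is well defined, and hence $\|\hat\bLambda^{1/4} - \bLambda^{1/4}\| = O_p(\nu)$. The positive eigen-gap lets me invoke the variant of the Davis--Kahan theorem already used in the proof of Lemma~\ref{prop:average consistency} \citep{yu2015useful} to conclude $\|\hat\Qb - \Qb\| = O_p(\nu)$, up to the sign convention fixed above. Combining the three factor bounds through the product rule gives $\|\hat\Gb - \Gb\| = O_p(\nu)$.

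The main obstacle is the eigenvector perturbation step and the bookkeeping it entails: Davis--Kahan controls $\hat\Qb$ only up to orthogonal rotation within eigenspaces, so I must use the positivity of the eigen-gap (distinctness of the $\lambda_k$) to reduce this to a per-column sign ambiguity, and then align signs with $\Gb$ exactly as the paper aligns loading-matrix column signs. A secondary technical point is to verify that the closed-form representative $\hat\Ab^{-1/2}\hat\Qb\hat\bLambda^{1/4}$ indeed equals the $\hat\Gb$ characterized only implicitly by the diagonalization conditions; this follows from the uniqueness-up-to-signed-permutation argument of the first paragraph once the eigenvalues are distinct, which holds w.h.p. by the Weyl bound.
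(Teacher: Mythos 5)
Your proof is correct and follows essentially the same route as the paper's: reduce to the symmetric eigenproblem for $\Ab^{1/2}\Bb\Ab^{1/2}$, write $\Gb$ in closed form, note uniqueness up to signed permutation from the distinct eigenvalues, and propagate the $O_p(\nu)$ error through Lipschitz continuity of the matrix (inverse) square root, Weyl's inequality, and the Yu--Wang--Samworth variant of Davis--Kahan, fixing signs/ordering of eigenvector columns exactly as the paper does. Incidentally, your exponent $\bLambda^{1/4}$ is the correct one for the balanced condition (both sides then equal $\bLambda^{1/2}$); the paper's displayed $\Gb=\Ab^{-1/2}\cU_2\bUpsilon_2^{-1/4}$ has a sign typo in the exponent, since with $-1/4$ one gets $\Gb^{\T}\Ab\Gb=\bUpsilon_2^{-1/2}$ but $\Gb^{-1}\Bb\Gb^{-\T}=\bUpsilon_2^{3/2}$.
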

\begin{proof}
Since $\hat\Ab$ and $\hat \Bb $ are of finite dimension, we have $\|\hat\Ab-\Ab\|_{\max}=O_p(\nu)$, $\|\hat\Bb -\Bb \|_{\max}=O_p(\nu)$, with \begin{align*}\|\hat\Ab\hat\Bb -\Ab\Bb \|&\le \|\hat\Ab-\Ab\|\|\hat\Bb-\Bb\|+\|\Ab\|\|\hat\Bb-\Bb\|+\|\hat\Ab-\Ab\|\|\Bb\|\\&=O_p(\nu).\end{align*} Let $\gamma$ be the eigen-gap of $\Ab\Bb $. By Weyl's theorem, we know that when $\nu$ is small enough, the eigen-gap of $\hat\Ab\hat\Bb $ is larger than $\gamma/2$. Similarly if we denote $\gamma_L=\min\big\{\lambda_{\min}(\Ab),\lambda_{\min}(\Bb )\big\}>0$ and $\gamma_U=\max\big\{\lambda_{\max}(\Ab),\lambda_{\max}(\Bb )\big\}>0$, then we have $$\min\big\{\lambda_{\min}(\hat\Ab),\lambda_{\min}(\hat\Bb )\big\}>\gamma_L/2\text{ and }\max\big\{\lambda_{\max}(\hat\Ab),\lambda_{\max}(\hat\Bb )\big\}<2\gamma_U$$ when $\nu$ exceeds some threshold. Next we discuss only when $\nu$ exceeds these two thresholds.

    Next let the singular value decomposition (SVD) of $\Ab$ be $\cU_1\bUpsilon_1\cU_1^\T$, and the singular value decomposition of $\Ab^{1/2}\Bb \Ab^{1/2}$ be $\cU_2\bUpsilon_2\cU_2^\T$ with $\Ab^{1/2}$ defined as $\cU_1\bUpsilon_1^{1/2}\cU_1^\T$. Then $\Gb=\Ab^{-1/2}\cU_2\bUpsilon_2^{-1/4}$ is the unique solution up to a permutation of $\bUpsilon_2$. The uniqueness is a result of the eigenvalues of $\Ab\Bb $ being different, which implies that there exists a unique order for the diagonal entries of $\bUpsilon_2$, and thus $\cU_2$ is uniquely determined.
    Similarly we write the SVD of $\hat \Ab$ and $\hat \Ab^{1/2}\hat \Bb \hat \Ab^{1/2}$ as $\hat \cU_1\hat \bUpsilon_1\hat \cU_1^\T$ and $\hat \cU_2\hat \bUpsilon_2\hat \cU_2^\T$. Therefore $\hat\Gb$ can be uniquely expressed as $\hat \Ab^{-1/2}\hat \cU_2\bUpsilon^{-1/4}$. The existence of $\hat\Ab^{-1/2}$ is implied by $\lambda_{\min}(\hat\Ab)>\gamma_L/2$, $\lambda_{\max}(\hat\Ab)<\gamma_U/2$ and uniqueness is implied by the fact that the eigen-gap of $\hat\Ab\hat\Bb $ is larger than $\gamma/2$.

    Implied by $\|\hat\Ab-\Ab\|=O_p(\nu)$ and that $\lambda_{\max}(\hat\Ab),\lambda_{\max}(\Ab), \lambda_{\min}(\hat\Ab),\lambda_{\max}(\Ab)$ are bounded in $(2\gamma_U,\gamma_L/2)$, we have
    \begin{align}
        \big\|\hat\Ab^{-1/2}-\Ab^{-1/2}\big\|&\le \big\|\big(\hat\Ab^{-1/2}+\Ab^{-1/2}\big)\big\|\big\|\hat\Ab^{-1}-\Ab^{-1}\big\|\nonumber\\&\le \big\|\big(\hat\Ab^{-1/2}+\Ab^{-1/2}\big)\big\|\big\|\hat\Ab^{-1}\big\|\big\|\Ab^{-1}\big\|\big\|\hat\Ab-\Ab\big\|\nonumber\\&=O_p(\nu).\label{lemma_eq_dkthm}
    \end{align}
     Next by variant of Davis-Kahan theorem \citep{yu2015useful},
    \begin{equation*}
        \big\|\hat \cU_2-\cU_2\big\|=O_p(\nu)\mbox{, and }\big|[\hat\bUpsilon_2]_{[r,r]}-[\bUpsilon_2]_{[r,r]}\big|=O_p(\nu), \forall r\in[K].
    \end{equation*}
    as we have fixed the order of the distinctive diagonal entries in $\hat\Lambda_2$ and $\Lambda_2$. Combine these all together we have 
    \begin{equation*}
        \|\hat \Gb- \Gb\| = O_p(\nu).
    \end{equation*}
\end{proof}

\subsection{Proof of Lemma~\ref{lemma:first-order derivative}}
\begin{lemma}[First-order concentration]
\label{lemma:first-order derivative}

Under Assumptions~\ref{assumption: psd covariance}--\ref{assumption:smoothness}, we have estimates for the first order derivatives on $\bphi^0$ as:
 \begin{align*}
       \| n^{1/2} \bS_U  (\bphi^0)\|_2 &= O_p \Big(\sqrt{\frac{\log n}{q}} \Big);\\
        \|q^{1/2}\bS_f(\bphi_v^0)\|_{2} &= O_p\Big( \sqrt{\frac{p\log qp}{n}}  \Big).
    \end{align*}
Scaled by matrix $\Db_p$ we can write
    \begin{equation}
    \|\Db_q^{1/2} \bS(\bphi^0)\|_2 = O_p\big(\zeta_{nq,p}^{-1}\big), \nonumber
\end{equation}
or $\|\Db_q^{1/2}\bS(\bphi^0)\|_{\zeta}\le O_p(p^{1/2}(nq)^{\epsilon-1/2})$ when $\zeta>1/\epsilon$.
\end{lemma}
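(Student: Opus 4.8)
The plan is to exploit the fact that, because $\bphi^0$ satisfies the identifiability conditions $1^{\prime}$--$2^{\prime}$, the gradient of the penalty vanishes at $\bphi^0$, so $\bS(\bphi^0)$ reduces to the normalized negative log-likelihood gradient, whose two blocks $[\bS_f(\bphi^0)]_{[P_j]} = -(nq)^{-1}\sum_{i=1}^n l_{ij}^\prime(w_{ij}^0)\bZ_i^0$ and $[\bS_U(\bphi^0)]_{[K_i]} = -(nq)^{-1}\sum_{j=1}^q l_{ij}^\prime(w_{ij}^0)\bgamma_j^0$ are each an average of \emph{centered, conditionally independent} summands. Centering is valid because $w_{ij}^0 = w_{ij}^*$ and $\EE l_{ij}^\prime(w_{ij}^*)=0$ by Assumption~\ref{assumption:smoothness}; conditioning on $\Xb$ makes $\bZ_i^0$ and $\bgamma_j^0$ deterministic and uniformly bounded, so the only randomness is the sub-exponential factor $l_{ij}^\prime$. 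I would treat the two blocks separately and recombine through the scaling matrix $\Db_q^{1/2}$.

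For the $\bU$-block, fix $i$: the summands $l_{ij}^\prime(w_{ij}^0)\bgamma_j^0$, $j\in[q]$, are independent across $j$ (conditional independence of a subject's responses), mean zero, with $\|\bgamma_j^0\|\le M$ and $\|l_{ij}^\prime(w_{ij}^0)\|_{\varphi_1}\le M$. A vector Bernstein inequality gives $\|\sum_{j=1}^q l_{ij}^\prime(w_{ij}^0)\bgamma_j^0\|_2 = O(\sqrt{q\log n})$ with probability at least $1-n^{-c}$ (valid in the mild regime $\log n\lesssim q$); a union bound over $i\in[n]$ upgrades this to $\max_i \|\sum_{j} l_{ij}^\prime\bgamma_j^0\|_2 = O_p(\sqrt{q\log n})$. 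Since the ambient dimension $K$ is fixed, aggregating yields $\|n^{1/2}\bS_U(\bphi^0)\|_2^2 = (nq^2)^{-1}\sum_{i=1}^n\|\sum_j l_{ij}^\prime\bgamma_j^0\|_2^2 \le (q^2)^{-1}\max_i\|\cdot\|_2^2 = O_p(\log n/q)$, which is the first claim.

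For the $\bbf$-block, fix $j$: the summands $l_{ij}^\prime(w_{ij}^0)\bZ_i^0$, $i\in[n]$, are independent across subjects and centered, but now live in $\RR^{K+p}$ with $\|\bZ_i^0\|_\infty\le M$. I would work entrywise: each coordinate $\sum_{i=1}^n l_{ij}^\prime(w_{ij}^0)[\bZ_i^0]_m$ is a centered sum of $n$ independent sub-exponential scalars, so Bernstein plus a union bound over the $q(K+p)\asymp qp$ pairs $(j,m)$ gives $\max_{j,m}|\sum_i l_{ij}^\prime [\bZ_i^0]_m| = O_p(\sqrt{n\log qp})$. Aggregating the $K+p$ coordinates (a factor of order $p$) and the $q$ values of $j$ then gives $\|q^{1/2}\bS_f(\bphi^0)\|_2^2 = (n^2 q)^{-1}\sum_{j=1}^q\|\sum_i l_{ij}^\prime\bZ_i^0\|_2^2 \le (n^2q)^{-1}\, q(K+p)\,O_p(n\log qp) = O_p(p\log qp/n)$. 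Combining both blocks, $\|\Db_q^{1/2}\bS(\bphi^0)\|_2^2 = \|q^{1/2}\bS_f\|_2^2 + \|n^{1/2}\bS_U\|_2^2 = O_p(p\log qp/n + \log n/q) = O_p(\zeta_{nq,p}^{-2})$.

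Finally, the polynomial-slack variant with $\epsilon_{nq}=(nq)^\epsilon$ would follow by replacing the sub-exponential Bernstein step with a $\zeta$-th moment (Rosenthal-type) bound combined with Markov's inequality: using only the finite-moment part of Assumption~\ref{assumption:smoothness}, a union bound over $qp$ entries inflates the tail by $(qp)^{1/\zeta}$ rather than $\sqrt{\log qp}$, and choosing $\zeta>1/\epsilon$ makes this inflation dominated by $(nq)^\epsilon$, yielding the stated $p^{1/2}(nq)^{\epsilon-1/2}$ rate. I expect the main obstacle to be the \emph{uniform} control: the two different union bounds---over $n$ indices for the $\bU$-block versus over $qp$ entries for the $\bbf$-block---are precisely what produce the asymmetric $\log n$ versus $\log qp$ factors, and one must carefully track that the dimension factor $p$ enters only through $\bZ_i^0\in\RR^{K+p}$ in $\bS_f$ while the $\bU$-block stays dimension-$K$. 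Keeping this bookkeeping aligned with the centering and conditional-independence structure is the delicate part, whereas the concentration inequalities themselves are standard.
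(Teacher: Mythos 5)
Your proposal is correct and follows essentially the same route as the paper: use that the penalty gradient vanishes at $\bphi^0$ and $\EE\, l_{ij}^{\prime}(w_{ij}^0)=0$ (since $w_{ij}^0=w_{ij}^*$), apply Bernstein-type concentration to the sub-exponential summands, union-bound over the $n$ indices for the $\bU$-block versus the $q(K+p)$ entries for the $\bbf$-block, and convert to $\ell_2$ by multiplying by the square root of the number of coordinates — precisely the asymmetric $\log n$ versus $\log qp$ bookkeeping in the paper's proof. One caveat on your $\zeta$-norm step: since $\zeta>1/\epsilon$ may far exceed $\xi$, the "finite-moment part" of Assumption~2 alone does not supply the $\zeta$-th moments a Rosenthal bound requires; you should instead obtain those moments from the sub-exponential bound $\|l_{ij}^{\prime}(w_{ij}^*)\|_{\varphi_1}\le M$ (which yields all polynomial moments), or more simply deduce the $\zeta$-norm bound from your entrywise maxima via $\|x\|_{\zeta}\le d^{1/\zeta}\|x\|_{\infty}$ with $d\asymp qp+n$ and $\zeta>1/\epsilon$.
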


\begin{proof}
Assumption~\ref{assumption:smoothness} in the main text imposes regularity conditions for parameters $\bphi^0$ that $\bU_i^0$ and $\bgamma_j^0$ are bounded for $i \in [n]$ and $j \in [q]$. Based on the assumptions, we have 
$\|l_{ij}^{\prime} U_{ir}^0\|_{\varphi_1} \le 2M^2$, for $j=1,\dots, q$ and $r = 1, \dots ,K$.


 The variable $l_{ij}^{\prime} U_{ir}^0$ is sub-exponential with $\|l_{ij}^{\prime} U_{ir}^0\|_{\varphi_1} \le 2M^2$ and $\EE(l_{ij}^{\prime} U_{ir}^0) = 0$. By the definition of sub-exponential variable, there exists a universal constant $c >0$ such that, for any $0 < \nu < (2cM^2)^{-1}$, the following inequality holds:
\begin{equation}
	\EE[\exp\{\nu l_{ij}^{\prime}(w_{ij}) U_{ir}^0\}] \le\exp\{2\nu^2c^2M^2\}, \nonumber
\end{equation}
and further we have 
\begin{align}
	 P\Big\{(nq)^{-1}\sum_{i=1}^n |l_{ij}^{\prime}(w_{ij}) U_{ir}^0| \ge t \Big\} 
 & = P\left[ \exp\left\{\nu\sum_{i=1}^n |l_{ij}^{\prime}(w_{ij}) U_{ir}^0|\right\} \ge \exp(nq\nu t)\right] \nonumber \\
	  \le & \EE \left[ \exp\left\{\nu\sum_{i=1}^n |l_{ij}^{\prime}(w_{ij}) U_{ir}^0|\right\}\right] \exp(-nq\nu t) \nonumber \\
	  = & \exp(-nq\nu t)  \prod_{i=1}^n\EE[\exp\{\nu l_{ij}^{\prime}(w_{ij}) U_{ir}^0\}] \nonumber \\
	  \le & \exp \{2nc^2M^2 \nu^2 - nqt\nu\}. \label{lemma5: bernstein exp}
\end{align}
Next we will minimize~\eqref{lemma5: bernstein exp} by minimizing the quadratic expression of $\nu$ in the exponent. Differentiating the quadratic term with respect to $\nu$ and set it to 0 gives the optimizer
\begin{equation}
	\nu^{*} = \frac{qt}{4c^2M^2}. \nonumber
\end{equation}
Under $\nu = \nu^*$, \eqref{lemma5: bernstein exp} $\le \exp \{-nq^2t^2/(8c^2M^2)\}$. If $\nu^*$ falls outside of $(0,(2cM^2)^{-1})$, then \eqref{lemma5: bernstein exp} is minimized at $\nu = (2cM^2)^{-1}$. So \eqref{lemma5: bernstein exp} $\le \exp \{(2nc -nqt)/(2cM^2) \}$. In conclusion,
\begin{align}
	P\Big\{|(nq)^{-1}\sum_{i=1}^n l_{ij}^{\prime}(w_{ij}) U_{ir}^0| \ge t \Big\} \le  2\exp \Big\{-\min \Big(\frac{nq^2t^2}{8c^2M^2}, \frac{nqt}{2cM^2} \Big) \Big\}. \nonumber
\end{align}
 By the union bound inequality, we have
\begin{equation*}
	P\Big\{\max_j \| (nq)^{-1} \sum_{i=1}^n l_{ij}^{\prime}(w_{ij}) \bU_i^0\|_{\infty}\ge t \Big\} \le  2qK\exp \Big\{-\min \Big(\frac{nq^2t^2}{8c^2M^2}, \frac{nqt}{2cM^2} \Big) \Big\}. \nonumber
\end{equation*}
Take $t= cM (\log q )^{1/2}n^{-1/2}q^{-1}$, and we have
\begin{equation}
    \| \bS_{\gamma} (\bphi^0)\|_{\infty} =  \max_j \| (nq)^{-1} \sum_{i=1}^n l_{ij}^{\prime}(w_{ij}) \bU_i^0\|_{\infty} = O_p \Big(\sqrt{\frac{\log q}{nq^2}} \Big), \nonumber
\end{equation}
which further leads to
\begin{equation}
   \|q^{1/2} \bS_{\gamma} (\bphi^0)\|_{2}  = O_p \Big(\sqrt{\frac{\log q}{n}} \Big). \label{eq:S_gamma}
\end{equation}

Using similar techniques, we have
\begin{equation}
      \| \bS_{U} (\bphi^0)\|_{\infty} = O_p \Big(\sqrt{\frac{\log n}{n^2q}} \Big)\text{, and }
    \| n^{1/2} \bS_U  (\bphi^0)\|_2 = O_p \Big(\sqrt{\frac{\log n}{q}} \Big). \label{eq:S_U}
\end{equation}
For $\|q^{1/2} \bS_{\beta}(\bphi^0)\|_2$, because  \begin{equation*}
	P\Big\{\max_j \| (nq)^{-1} \sum_{i=1}^n l_{ij}^{\prime}(w_{ij}) \bX_i\|_{\infty}\ge t \Big\} \le  2qp\exp \Big\{-\min \Big(\frac{nq^2t^2}{c^2M^2}, \frac{nqt}{2cM^2} \Big) \Big\}. 
\end{equation*}
Take $t= cM(\log qp)^{1/2}n^{-1/2}q^{-1}$, and we have
\begin{equation}
    \|\bS_{\beta}(\bphi^0)\|_{\infty} = \max_j \| (nq)^{-1} \sum_{i=1}^n l^{\prime}_{ij}(w_{ij}) \bX_i\|_{\infty} = O_p \left( \sqrt{\frac{\log qp}{nq^2}}  \right), \nonumber
\end{equation}
and similarly we obtain
\begin{equation}
     \|q^{1/2} \bS_{\beta}(\bphi^0)\|_{2} = O_p\left( \sqrt{\frac{p\log qp}{n}}  \right).\label{eq:S_beta}
\end{equation}

\end{proof}

\subsection{Proof of Lemma~\ref{lemma:estimation for 1st derivative}}
\begin{lemma}
\label{lemma:estimation for 1st derivative}
Under Assumption~\ref{assumption:smoothness}, the first order derivatives of $\check\cL$ on $\bpsi^0$ has following estimates:
\begin{align*}
    \Big| \check \bS_{\theta}({\bpsi}^0)^\T(\hat\btheta-\btheta^0)\Big|\lesssim { \delta_{nq}^{-1/2}} \sqrt{\frac{\log n}{nq}}\|\check\bTheta-\bTheta^0\|_F;\\
\big|\check \bS_b(\bpsi^0)^\T(\check\bbeta-\bbeta^0)\Big|\lesssim\sqrt{\frac{p\log pq}{n}}\frac{1}{\sqrt{q}}\|\check\bbeta-\bbeta^0\|.
\end{align*}
\end{lemma}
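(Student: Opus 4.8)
The plan is to compute the two blocks of the score $\check\bS(\bpsi^0)=\partial_{\bpsi}\check\cL(\bpsi^0)$ and bound each contraction against the corresponding error direction. First I would record that the penalty gradient vanishes at the truth: since $\theta^0_{ij}=(\bgamma^0_j)^\T\bU^0_i$, for each $j$ one has $\sum_{i=1}^n\theta^0_{ij}\bX_i=(\Xb^\T\Ub^0)\bgamma^0_j=\zero$ by Condition~$1^\prime$, hence $\partial_{\btheta}\check P(\bTheta^0)=\zero$. Therefore $[\check\bS_\theta(\bpsi^0)]_{ij}=(nq)^{-1}l'_{ij}(w^0_{ij})$, and writing $\Mb_\ell=\big(l'_{ij}(w^0_{ij})\big)_{i\in[n],\,j\in[q]}\in\RR^{n\times q}$ and identifying $\hat\btheta$ with $\mathrm{vec}(\check\bTheta)$ gives
\[
\check\bS_\theta(\bpsi^0)^\T(\hat\btheta-\btheta^0)=\frac{1}{nq}\,\tr\!\big\{\Mb_\ell^\T(\check\bTheta-\bTheta^0)\big\}.
\]

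The key structural fact is that $\check\bTheta=\hat\bGamma^0(\hat\Ub^0)^\T$ and $\bTheta^0=\bGamma^0(\Ub^0)^\T$ each have rank at most $K$, so $\check\bTheta-\bTheta^0$ has rank at most $2K$. The matrix H\"older inequality together with $\|\Delta\|_*\le\sqrt{\operatorname{rank}(\Delta)}\,\|\Delta\|_F$ then yields
\[
\big|\tr\{\Mb_\ell^\T(\check\bTheta-\bTheta^0)\}\big|\le\|\Mb_\ell\|\,\|\check\bTheta-\bTheta^0\|_*\le\sqrt{2K}\,\|\Mb_\ell\|\,\|\check\bTheta-\bTheta^0\|_F,
\]
where $\|\Mb_\ell\|$ is the spectral norm. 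Thus the problem reduces to showing $\|\Mb_\ell\|=O_p(\sqrt{n\vee q})$ up to a logarithmic factor, since dividing by $nq$ then gives a quantity dominated by $\delta_{nq}^{-1/2}\sqrt{(\log n)/(nq)}$ (indeed by the even smaller $\delta_{nq}^{-1}\sqrt{(\log n)/(nq)}$, recall $\delta_{nq}=n^{1/2}\wedge q^{1/2}$), which is exactly the claimed rate.

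The main obstacle is the spectral-norm concentration of $\Mb_\ell$. Its entries are independent across $(i,j)$ conditional on $\Xb$, mean zero by Assumption~\ref{assumption:smoothness} ($\EE l'_{ij}(w^0_{ij})=0$), and sub-exponential with $\|l'_{ij}(w^0_{ij})\|_{\varphi_1}\le M$; because they are only sub-exponential and not sub-Gaussian, a bare $\varepsilon$-net Bernstein bound on the bilinear forms does not immediately deliver the $\sqrt{n\vee q}$ rate. I would therefore truncate, writing $\Mb_\ell=\Mb_\ell^{\le}+\Mb_\ell^{>}$ at a threshold of order $\log(nq)$: the Frobenius norm of the tail part $\Mb_\ell^{>}$ and its mean are controlled by the sub-exponential moment bound, while $\|\Mb_\ell^{\le}\|$ is handled by a $\tfrac14$-net over $S^{n-1}\times S^{q-1}$ (covering number $\le e^{c(n+q)}$) combined with Bernstein's inequality for the bounded forms $\bu^\T\Mb_\ell^{\le}\bv$. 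Matching the deviation level against the union bound over the net produces $\|\Mb_\ell\|=O_p\big(\sqrt{n\vee q}\,\cdot\mathrm{polylog}(nq)\big)$, which suffices.

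The second bound is more direct and rests on Lemma~\ref{lemma:first-order derivative}. Since $\check P$ does not involve $\bbeta$, the $j$-th block of $\check\bS_\beta(\bpsi^0)$ is $(nq)^{-1}\sum_{i=1}^n l'_{ij}(w^0_{ij})\bX_i$, so by the Cauchy--Schwarz inequality across $j$,
\[
\big|\check\bS_\beta(\bpsi^0)^\T(\check\bbeta-\bbeta^0)\big|\le\Big(\sum_{j=1}^q\big\|(nq)^{-1}\textstyle\sum_{i}l'_{ij}(w^0_{ij})\bX_i\big\|_F^2\Big)^{1/2}\|\check\bbeta-\bbeta^0\|.
\]
The first factor is precisely $\|\bS_\beta(\bphi^0)\|_2$, the $\bbeta$-block of the score analyzed in Lemma~\ref{lemma:first-order derivative}, shown there to be $O_p\big(q^{-1/2}\sqrt{p\log(pq)/n}\big)=O_p\big(\sqrt{p\log(pq)/(nq)}\big)$ via a sub-exponential Bernstein inequality and a union bound over the $pq$ coordinates. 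This is exactly the factor $\sqrt{p\log(pq)/n}\,\cdot q^{-1/2}$ in the second claim, so the lemma follows once the spectral-norm estimate of the first part is established.
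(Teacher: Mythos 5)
Your overall architecture matches the paper's proof almost step for step: the penalty gradient vanishes at $\bpsi^0$ (the paper uses this implicitly when writing $[\check\bS_\theta(\bpsi^0)]$ as the pure likelihood term), the $\btheta$-contraction is rewritten as a trace against the rank-$\le 2K$ matrix $\check\bTheta-\bTheta^0$ and bounded via $|\tr(\Ab\Bb)|\le\|\Ab\|\,\|\Bb\\|_*\lesssim\sqrt{K}\,\|\Ab\|\,\|\Bb\|_F$, and the $\bbeta$-part is exactly the paper's argument (Cauchy--Schwarz across the $j$-blocks plus the score bound $\|q^{1/2}\bS_\beta(\bphi^0)\|_2=O_p(\sqrt{p\log(qp)/n})$ from Lemma~5). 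The one genuinely different ingredient is the spectral-norm control of the $n\times q$ matrix of scores $l'_{ij}(w_{ij}^0)$: the paper simply cites Lata\l{}a's (2005) concentration inequality, which requires only independence and finite fourth moments of the entries --- precisely what Assumption~2 supplies via $\EE|l'_{ij}(w_{ij}^*)|^{\xi}\le M$ with $\xi\ge 4$ --- and yields $\|\Lb_\theta\|=O_p(\sqrt{(n+q)\log n})$ with no truncation. Your self-contained truncation-plus-net argument instead leans on the sub-exponential bound $\|l'_{ij}\|_{\varphi_1}\le M$; it buys independence from an external matrix-concentration citation at the cost of extra $\log(nq)$ factors, which are then absorbed only thanks to the $\delta_{nq}^{1/2}$ slack in the stated rate (note the paper's Lata\l{}a bound actually proves the stronger rate $\delta_{nq}^{-1}\sqrt{\log n/(nq)}\,\|\cdot\|_F$ with no such absorption needed).

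One concrete repair is needed in your net step: Bernstein's inequality applied to the bilinear form $\bu^\T\Mb_\ell^{\le}\bv=\sum_{ij}M^{\le}_{ij}u_iv_j$ does \emph{not} deliver the deviation level you need. The worst-case single-term bound in Bernstein is $K\max_{ij}|u_iv_j|$, which can be of order $K$ (take $\bu,\bv$ near coordinate vectors), so the linear-in-$t$ tail regime $\exp(-ct/K)$ dominates, and beating a net of cardinality $e^{c(n+q)}$ forces $t\gtrsim K(n+q)$ --- far too large. Since after truncation the entries are bounded by $K\asymp\log(nq)$, you should instead invoke Hoeffding's inequality (equivalently, the standard result for matrices with independent sub-Gaussian entries): the variance proxy is $K^2\sum_{ij}u_i^2v_j^2=K^2$, giving the tail $\exp(-ct^2/K^2)$, so $t\asymp K\sqrt{n+q}$ beats the net and yields $\|\Mb_\ell^{\le}\|=O_p(\log(nq)\sqrt{n\vee q})$ as you claimed. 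With that substitution, and the tail-part control you describe, the argument goes through.
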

Define $\Lb_\theta $ as $q \times n$ matrix with $\partial_{\theta_{ij}}  l_{ij}(\theta_{ij}+(\bbeta_j)^\T\bX_i)$ in the $j$-th row and $i$-th column. By Assumption~\ref{assumption:smoothness} we know that entries of $\Lb(\bpsi^0)$ are independent and have a finite fourth moment, therefore by the concentration inequality of \cite{latala2005some}, we have 
\begin{equation}\|\Lb_{\theta}(\bpsi^0)\|  = O_p\big(\sqrt {(n + q)\log n}\big)\label{eq_ltheta_est}\end{equation}

Next for the first order derivatives of $\check\cL$ on $\bpsi^0$ with respect to $\btheta$:
    \begin{equation}
 [\check\bS_{\theta}(\bpsi^0)]_{[(j-1)q+i]}=\frac{\partial L(\Yb|{\bpsi})}{\partial \theta_{ij} }  = -(nq)^{-1} l_{ij}^{\prime}(\theta_{ij}+\bbeta_j^\T\bX_i), \label{eq:order1 pi_ij} 
\end{equation}
by $Tr(AB)\le\|A\|\|B\|_*\le rank(B)\|A\|\|B\|_F$ we have
\begin{equation}\Big| \check \bS_\theta({\bpsi}^0)^\T(\hat\btheta_v-\btheta_v^0)\Big|= Tr\big[ \Lb_{\theta}({\bpsi}^0)(\hat\bTheta-\bTheta^0)\big]\le 2K(nq)^{-1}\| \Lb_{\theta}\| \|\hat\bTheta-\bTheta^0\|_F.\label{B7}\end{equation}
And for the first order derivatives of $\check\cL$ on $\bpsi^0$ with respect to $\bbeta$
\begin{equation*}
    [\check\bS_{\beta}(\bpsi^0)]_{[(j-1)p+1:jp]}= -(nq)^{-1} \sum_{i=1}^n  l_{ij}^{\prime}(\theta_{ij}^0+(\bbeta_j^0)^\T\bX_i) \bX_i \label{eq:order1 beta_j}
\end{equation*}
Then the estimation results for the first order derivative can be derived directly from \eqref{eq_ltheta_est}, \eqref{B7} and Lemma \ref{lemma:first-order derivative}.

\subsection{Proof of Lemma~\ref{lemma:Hcheck}}
\begin{lemma}
\label{lemma:Hcheck}
Under Assumptions~\ref{assumption: psd covariance}--\ref{assumption:smoothness}, there exist some $ \check\gamma>0$ such that $\lambda_{\min}( \check \Db_{q}^{1/2} \check{\mathcal{H}}({\bpsi}) \check \Db_{q}^{1/2})$ $> \check\gamma$ for any $\psi\in\check\cB(D)$ where
\begin{equation}
     \check{\mathcal{H}}({\bpsi})=\frac{\partial^2[ \check{\mathcal{L}}({\bpsi})]}{\partial^2{\bpsi}},
\end{equation}
and $\check\Db_q$ is defined as 
\begin{equation*}
    \check\Db_q=\begin{pmatrix}
        nq\Ib_{nq}\\&q\Ib_{qp}
    \end{pmatrix}.
\end{equation*}
\end{lemma}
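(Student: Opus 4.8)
The plan is to exploit the block-diagonal structure of $\check\cH(\bpsi)$ across the $q$ items and reduce the claim to a single uniform quadratic-form inequality. Writing $\check\cH(\bpsi)=\check\Hb_L(\bpsi)+\check\Hb_P(\bpsi)$, I would first record that both summands are positive semidefinite: $\check\Hb_L$ is the Hessian of the (normalized) negative log-likelihood $-(nq)^{-1}\sum_{i,j}l_{ij}(\theta_{ij}+\bbeta_j^\T\bX_i)$, which is convex because $-l_{ij}''(w_{ij})\ge b_L>0$ on the compact range of $w_{ij}$ guaranteed by Assumption~\ref{assumption:smoothness} for every $\bpsi\in\check\cB(D)$; and $\check\Hb_P$ is the Hessian of the convex quadratic penalty $\tfrac{c}{2q}\sum_j\|n^{-1}\sum_i\theta_{ij}\bX_i\|^2$. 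Since each coordinate $\theta_{\cdot j}$ and each $\bbeta_j$ enters only through item $j$, both Hessians decouple into independent blocks indexed by $j$.

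Next I would pass to the scaled coordinates. For a unit vector $\mathbf{w}=(\mathbf{w}_\theta,\mathbf{w}_\beta)$, setting $a_{ij}=\sqrt n\,[\mathbf{w}_\theta]_{ij}$ and $\mathbf{b}_j=[\mathbf{w}_\beta]_{\cdot j}$, a direct computation (the cross scaling $\sqrt{nq}\cdot\sqrt q$ being exactly what makes every block $O(1)$ and the off-diagonal $\theta$--$\beta$ coupling vanish) gives
\begin{equation*}
\mathbf{w}^\T\check\Db_q^{1/2}\check\cH(\bpsi)\check\Db_q^{1/2}\mathbf{w}=\sum_{j=1}^q\Big[\frac1n\sum_{i=1}^n\big(-l_{ij}''(w_{ij})\big)\big(a_{ij}+\bX_i^\T\mathbf{b}_j\big)^2+\frac{c}{n^2}\big\|\textstyle\sum_{i=1}^n a_{ij}\bX_i\big\|^2\Big],
\end{equation*}
subject to $\sum_j\big(n^{-1}\sum_i a_{ij}^2+\|\mathbf{b}_j\|^2\big)=\|\mathbf{w}\|^2=1$. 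Because this form is a sum of independent contributions over $j$, it then suffices to exhibit a constant $\check\gamma>0$, independent of $j$ and of $\bpsi\in\check\cB(D)$, such that for every $j$ and all $(a_{\cdot j},\mathbf{b}_j)$,
\begin{equation*}
\frac1n\sum_i(-l_{ij}'')\big(a_{ij}+\bX_i^\T\mathbf{b}_j\big)^2+\frac{c}{n^2}\big\|\textstyle\sum_i a_{ij}\bX_i\big\|^2\ \ge\ \check\gamma\Big(\frac1n\sum_i a_{ij}^2+\|\mathbf{b}_j\|^2\Big).
\end{equation*}

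For a fixed $j$ I would substitute the residual $r_i=a_i+\bX_i^\T\mathbf{b}$, so that $\sum_i a_i\bX_i=\Xb^\T\mathbf{r}-\Xb^\T\Xb\mathbf{b}$. The likelihood term is bounded below by $b_L n^{-1}\|\mathbf{r}\|^2$. For the penalty term, Young's inequality gives $\|\Xb^\T\mathbf{r}-\Xb^\T\Xb\mathbf{b}\|^2\ge\tfrac12\|\Xb^\T\Xb\mathbf{b}\|^2-\|\Xb^\T\mathbf{r}\|^2$, and Assumption~\ref{assumption: psd covariance}(iii) (eigenvalues of $n^{-1}\Xb^\T\Xb$ in $[1/\kappa^2,\kappa^2]$ for $n$ large) yields $\|\Xb^\T\Xb\mathbf{b}\|^2\ge n^2\kappa^{-4}\|\mathbf{b}\|^2$ and $\|\Xb^\T\mathbf{r}\|^2\le n\kappa^2\|\mathbf{r}\|^2$. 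Combining, the left-hand side is at least $(b_L-c\kappa^2)n^{-1}\|\mathbf{r}\|^2+\tfrac{c}{2\kappa^4}\|\mathbf{b}\|^2$; choosing $c$ small enough (consistent with $0<c<b_L$ from the definition of the penalty, e.g.\ $c<b_L/(2\kappa^2)$) makes both coefficients strictly positive. Finally, since $n^{-1}\|a_{\cdot j}\|^2=n^{-1}\|\mathbf{r}-\Xb\mathbf{b}\|^2\le 2n^{-1}\|\mathbf{r}\|^2+2\kappa^2\|\mathbf{b}\|^2$, this lower bound dominates a positive multiple of $n^{-1}\|a_{\cdot j}\|^2+\|\mathbf{b}_j\|^2$, which delivers $\check\gamma$ after summing over $j$.

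The main obstacle is precisely the degeneracy of $\check\Hb_L$: its null space is the $p$-dimensional family $\{a_i=-\bX_i^\T\mathbf{b}\}$, which is exactly the $\Ab$-transformation indeterminacy (shifting the latent part by $-\Xb\Ab^\T$ while compensating in $\Bb$), so no lower bound can come from the likelihood alone. The crux is verifying that the orthogonality penalty $\check\Hb_P$ is strictly positive along these degenerate directions and balancing the indefinite cross terms between $\mathbf{r}$ and $\mathbf{b}$ through the choice of $c$; this is where the non-degeneracy of $\Xb$ in Assumption~\ref{assumption: psd covariance}(iii) and the normalization built into $\check\Db_q$ are essential. The uniformity over $\check\cB(D)$ is then immediate, since the only parameter dependence enters through the bound $-l_{ij}''\ge b_L$, which is valid on the entire compact region.
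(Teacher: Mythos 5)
Your proof is correct, and it is not the paper's argument — the two take genuinely different routes to the same inequality. Your exact reduction of the scaled quadratic form to a per-item sum is right (including the normalization $\sum_j\big(n^{-1}\sum_i a_{ij}^2+\|\bb_j\|^2\big)=1$), and your per-item step — substitute $\br=\ba+\Xb\bb$, bound the likelihood term below by $b_L n^{-1}\|\br\|^2$, then use Young's inequality with $\lambda_{\min}(n^{-1}\Xb^\T\Xb)\ge \kappa^{-2}$ so the penalty contributes $\tfrac{c}{2\kappa^4}\|\bb\|^2$ minus a loss $c\kappa^2 n^{-1}\|\br\|^2$ absorbed by taking $c<b_L/(2\kappa^2)$ — yields an explicit $\check\gamma$. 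The paper argues globally instead: it identifies the $pq$-dimensional null space of $\check\Db_q^{1/2}\check\Hb_L\check\Db_q^{1/2}$, spanned by the vectors $\check\bnu_{js}=\check\bnu^{(1)}_{js}-\check\bnu^{(2)}_{js}$ encoding the $\Ab$-indeterminacy; observes that the scaled penalty Hessian is exactly $c\sum_{j,s}\check\bnu^{(1)}_{js}(\check\bnu^{(1)}_{js})^\T$; decomposes a unit vector as $\alpha\bv+\beta\bu$ with $\bv$ in the null space and $\bu$ orthogonal to it; and obtains the uniform bound by a contradiction argument whose engine is the same non-degeneracy of $\Xb$, via $\sum_{j,s}\big(\lambda_{js}(\Xb)\big)^2\ge \lambda_{\min}(n^{-1}\Xb^\T\Xb)^2\sum_{j,s}\lambda_{js}^2$. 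The trade-offs: your route is elementary and constructive (explicit constant), but it establishes the lemma only for $c<b_L/\kappa^2$ rather than for every $c\in(0,b_L)$ as the paper's setup allows; this is harmless, since $c$ is an internal device — the paper itself remarks that any $c>0$ yields the same constrained estimator — so one may fix $c$ small once and for all, but it should be stated explicitly, as you do. The paper's route produces a non-constructive constant yet covers the full range $c\in(0,b_L)$ and reuses verbatim the null-space-plus-contradiction machinery it needs anyway for the substantially harder full-parameterization analogue, Lemma~\ref{prop:min eigenvalue}.
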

    The Hessian matrix will be given by $\check\cH(\bpsi)=\check\Hb_L(\bpsi)+\check\Hb_P(\bpsi)$ with
\begin{equation*}
    \check{\Hb}_L=\begin{bmatrix}
        \check\Hb_{L\theta\theta^{\prime}}&\check\Hb_{L\theta \beta^{\prime}}\\
        \check\Hb_{L\beta\theta^{\prime}}&\check\Hb_{L\beta \beta^{\prime}}
    \end{bmatrix},
\end{equation*}
and 
\begin{equation*}
    \check\Hb_{P}=c\begin{bmatrix}
        \frac{1}{n^2q}\sum_{s=1}^p\Ib_q\otimes(\Xb_{[,s]}\Xb_{[,s]}^\T)\\&\zero_{pq\times qp},
    \end{bmatrix}
\end{equation*}
where the non-zero parts of blocks of $\check\Hb_L$ are given as 
\begin{align*}
    \big[\check\Hb_{L\theta\theta^{\prime}}(\bpsi)\big]&_{[(j-1)q+i,(j-1)q+i]}=-(nq)^{-1}l_{ij}^{\prime\prime}(\theta_{ij}+\bbeta_j^\T\bX_i);\\
    \big[\check\Hb_{L\theta\beta^{\prime}}(\bpsi)\big]&_{[(j-1)q+i,(j-1)p+1:jp]}=-(nq)^{-1}l_{ij}^{\prime\prime}(\theta_{ij}+\bbeta_j^\T\bX_i)\bX_i;\\
    \big[\check\Hb_{L\beta\beta^{\prime}}(\bpsi)\big]&_{[(j-1)p+1:jp,(j-1)p+1:jp]}=-(nq)^{-1}\sum_{i=1}^n l_{ij}^{\prime\prime}(\theta_{ij}+\bbeta_j^\T\bX_i)\bX_i\bX_i^\T.
\end{align*}
First note that
\begin{align*}
    \check\cH_j:=&\begin{bmatrix}
        nq\big[\check\Hb_{L\theta\theta^{\prime}}(\bpsi)\big]_{[((j-1)q+1):jq,((j-1)q+1):jq]}&n^{1/2}q\big[\check\Hb_{L\theta\beta^{\prime}}(\bpsi)\big]_{[((j-1)q+1):jq,(j-1)p+1:jp]}\\n^{1/2}q\big[\check\Hb_{L\theta\beta^{\prime}}(\bpsi)\big]_{[(j-1)p+1:jp,((j-1)q+1):jq]}&q\big[\check\Hb_{L\beta\beta^{\prime}}(\bpsi)\big]_{[(j-1)p+1:jp,(j-1)p+1:jp]}
    \end{bmatrix}\\&+c\begin{bmatrix}
        n^{-1}\Xb\Xb^\T&-n^{-1/2}\Xb\\-n^{-1/2}\Xb^\T&\Ib_p
    \end{bmatrix}\\=&\sum_{i=1}^n(-l_{ij}^{\prime\prime}(w_{ij})-c)\begin{pmatrix}
        \one_i^{(n)}\\n^{-1/2}\bX_i
    \end{pmatrix}\begin{pmatrix}
        \one_i^{(n)}\\n^{-1/2}\bX_i
    \end{pmatrix}^\T+c\begin{bmatrix}
        \Ib_n+n^{-1}\Xb\Xb^\T&\\&n^{-1}\Xb^\T\Xb+\Ib_p
    \end{bmatrix}.
\end{align*}
We let $c<b_l$ and the first term is positive definite. For the second term, since $\|n^{-1}\Xb\Xb^\T\|_{\max}$ $=O(p/n)$, there exist some $\gamma^{\prime}<1$ such that $\lambda_{\min}\big(\Ib_n+n^{-1}\Xb\Xb^\T\big)>\gamma^{\prime}$. For the lower-right part, by Assumption~\ref{assumption: psd covariance}, we also have $\lambda_{\min}\big(\Ib_p+n^{-1}\Xb^\T\Xb\big)>\gamma^{\prime}$.
 Furthermore, the zero eigenvalues for $ \check \Db_{q}^{1/2}\check{\Hb}_L\check \Db_{q}^{1/2}$ has multiplicity $pq$ with corresponding eigenvectors:
\begin{equation*}
     \check{\bnu}_{js}= \sqrt{q}\check{\bD}_{q}^{-1/2}\left\{\bm{1}_j^{(q)}\otimes\begin{pmatrix} \Xb_{[,s]}\\- \bm{1}_s^{(p)}\end{pmatrix}\right\},\;j\in[q],s\in[p].
\end{equation*}
Note that the eigenvectors do not depend on the parameters. Here $\bm{1}_j^{(q)}$ is a $q$-dimensional indicator vector, $\bm{1}_s^{(p)}$ is a $p$-dimensional indicator vector. Note
\begin{equation}
     \check{\bD}_{q}^{1/2}\check\Hb_P \check{\bD}_{q}^{1/2}= c\begin{bmatrix}\frac{1}{n}\sum_{s=1}^pI_q\otimes(\Xb_{[,s]}\Xb_{[,s]}^\T)\\&\bm0_{pq\times pq}\end{bmatrix}= c\sum_{j=1}^q\sum_{s=1}^p \check{\bnu}_{js}^{(1)} (\check{\bnu}_{js}^{(1)})^\T,\label{BB9}
\end{equation}
where $ \check{\bnu}_{js}^{(1)}=n^{-1/2}((\one_j^{(q)})^\T\otimes \bm \Xb_{[s,]}^\T,\bm0_{pq}^\T)^\T$. Further define $\check{\bnu}_{js}^{(2)}=\big(\zero_{nq}^\T,(\one_{(j-1)p+s}^{(qp)})^\T\big)$.
Then $ \check{\bnu}_{js}= \check{\bnu}_{js}^{(1)}- \check{\bnu}_{js}^{(2)}$.

Suppose $ \check{\mathcal{V}}=\mbox{span}\big\{ \check\Db_{q}^{1/2}\check{\bnu}_{js}:j=1,\cdots,q,s=1,\cdots,p\big\}$ is the null space of $ \check\Db_{q}^{1/2}\check{H}_L\check\Db_{q}^{1/2}$. Here $ \check{\mathcal{V}}$ is a constant space that does not depend on the parameters (but depends on the covariates $\bX_i$). For any $\bw\in\mathbb{R}^{nq+qp}$ $\|\bw\|=1$, let $\bv\in \check{\mathcal{V}}$ and $\bu\in \check{\mathcal{V}}^\perp$ such that $\bw=\alpha \bv+\beta\bu$ and $\|\bv\|=\|\bu\|=1$. So we have $\alpha^2+\beta^2=1$. Let $\bv=\sum_{j=1}^q\sum_{s=1}^p\lambda_{js} \check{\bnu}_{js}$. The coefficients $\{\lambda_{js}\}_{j,s}$ are unique as $\{ \check{\bnu}_{js}\}$ are linear independent. We also have $\bv^\T\check\bnu_{js}=\lambda_{js}+n^{-1}\Xb_{[,s]}^\T\big(\sum_{r=1}^p\lambda_{jr}\Xb_{[,r]}\big):=\lambda_{js}+\lambda_{js}(\Xb)$ and $\bv^\T\check\bnu_{js}^{(2)}=-\lambda_{js}$. Here $\lambda_{js}(\Xb)$ is the $s$th column of $(\lambda_{j1},\cdots,\lambda_{jp})^\T n^{-1}\Xb^\T\Xb$. Then \begin{equation}
\sum_{j=1}^q\sum_{s=1}^p\big(\lambda_{js}(\Xb)\big)^2\ge \lambda_{\min}(n^{-1}\Xb^\T\Xb)^2\sum_{j=1}^q\sum_{s=1}^p\lambda_{js}^2=O(1).\label{eq_contr_l2_0}\end{equation}
Then
\begin{align*}
    \bw^\T  \check \Db_{q}^{1/2} \check{\mathcal{H}}({\bpsi}) \check \Db_{q}^{1/2}\bw 
   =&\,(\alpha \bv+\beta\bu)^\T \Big[ \check{\bD}_{q}^{1/2} \check{H}_L \check{\bD}_{q}^{1/2}+ \check{\bD}_{q}^{1/2}\frac{\partial^2 \check P({\bpsi})}{\partial^2{\bpsi}} \check{\bD}_{q}^{1/2}\Big](\alpha \bv+\beta\bu)\\
    =&\,\beta^2\bu^\T  \check{\Hb}_L\bu+ c(\alpha \bv+\beta\bu)^\T\Big(\sum_{j=1}^q\sum_{s=1}^s \check{\bnu}_{js}^{(1)} \big(\check{\bnu}_{js}^{(1)}\big)^\T\Big)(\alpha \bv+\beta\bu)\\
    \ge&\,\beta^2 {\check{\gamma}}^{\prime}+c\sum_{j=1}^q\sum_{s=1}^p\Big(\beta\bu^\T
    \check\bnu_{js}^{(2)}+\alpha\lambda_{js}(\Xb)\Big)^2.
\end{align*}
It suffices to prove that there exist some $ \check\gamma>0$ that does not depend on $n,q$, such that 
\begin{equation}
    \beta^2 \check{\gamma}^{\prime}+c\sum_{j=1}^q\sum_{s=1}^p\Big(\beta\bu^\T
    \check\bnu_{js}^{(2)}+\alpha\lambda_{js}(\Xb)\Big)^2\ge  \check\gamma.\label{BB11}
\end{equation}
Suppose on the contrary that for any $\epsilon_0>0$ we can select a set of parameters such that the above lower bound for $\bw^\T  \check \Db_{q}^{1/2} \check{\mathcal{H}}({\bpsi}) \check \Db_{q}^{1/2}\bw $ is smaller than $\epsilon_0$.
Since all there terms are positive, we have 
\begin{align}
    \beta<\sqrt{\frac{\epsilon_0}{ \check \gamma}},&\quad\alpha>\sqrt{1-\frac{\epsilon_0}{ \check \gamma}};\nonumber\\
    \sum_{j=1}^q\sum_{s=1}^p\Big(\frac{\beta}{\alpha}\bu^\T  \check{\bnu}_{js}^{(2)}&+\lambda_{js}(\Xb)\Big)^2<\frac{\epsilon_0}{c\alpha^2}<\frac{\check\gamma \epsilon_0}{c(\check\gamma-\epsilon_0)}\label{eq_contrad1}.
\end{align}
In this case, by noting that $\{ \check{\bnu}_{js}^{(2)}\}_j$ is a set of orthogonal indicators, we know
\begin{equation}
    \frac{\beta^2}{\alpha^2}\sum_{j=1}^q\sum_{s=1}^p\big(\bu \check {\bnu}_{js}^{(2)}\big)^2< \frac{\epsilon_0 }
{ \check\gamma^{\prime}-\epsilon_0}.\label{eq_contrad2}\end{equation}
If we select $\epsilon_0$ small enough, \eqref{eq_contrad1} and \eqref{eq_contrad2} will contradict \eqref{eq_contr_l2_0}. This contradiction implies \eqref{BB11}.

\subsection{Proof of Lemma \ref{prop:min eigenvalue}}
\begin{lemma}[Local Convexity of $\cL$]
\label{prop:min eigenvalue}
    Under Assumptions~\ref{assumption: psd covariance}--\ref{assumption: Scaling}, 
    there exist some $\epsilon>0$ and $m$  such that\begin{equation}
    \Pr\Big\{\min_{{\bphi}\in\cB(D),\sqrt{p}\|\Db_{q}^{-1/2}({\bphi}-{\bphi}^0)\|\le m}\lambda_{\min}\big(\Db_{q}^{1/2}\mathcal{H}({\bphi})\Db_{q}^{1/2}\big)\geq\epsilon\Big\}\to1\label{eq_local_convex},
\end{equation}
where $\mathcal{H}({\bphi})= \partial_{\bphi\bphi}^2\cL(\Yb|{\bphi}) = - \partial_{\bphi\bphi}^2 L(\Yb|{\bphi})  - \partial_{\bphi\bphi}^2 P({\bphi})$.
\end{lemma}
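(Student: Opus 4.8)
The plan is to decompose the scaled Hessian $\Db_q^{1/2}\cH(\bphi)\Db_q^{1/2}=\Db_q^{1/2}\big(\Hb_L(\bphi)+\Hb_R(\bphi)+\Hb_P(\bphi)\big)\Db_q^{1/2}$ and to show that the likelihood curvature $\Hb_L$ is strongly convex away from a low-dimensional ``flat'' subspace, while the penalty curvature $\Hb_P$ supplies the missing curvature precisely along that subspace, the remainder $\Hb_R$ being negligible. First I would establish the bound at the center $\bphi^0$. By Assumption~\ref{assumption:smoothness} we have $-l_{ij}^{\prime\prime}(w_{ij}^0)\ge b_L>0$, so $\Db_q^{1/2}\Hb_L(\bphi^0)\Db_q^{1/2}$ is positive semidefinite; its null space is spanned, up to negligible error, by the columns of $\Vb_0$, which encode exactly the indeterminacy directions associated with the transformation matrices $\Ab$ and $\Gb$. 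Following the block computation in the Preliminaries, under Conditions $1^{\prime}$--$2^{\prime}$ the penalty Hessian admits the rank-$(K^2+Kp)$ representation $\Hb_P(\bphi^0)=\bLambda^0(\bLambda^0)^{\T}$ of \eqref{eq:rank1 decomp}, whose range is built from the columns of $\Vb_p$.

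The core algebraic step is to show that $\Vb_p$ approximates the true null space $\Vb_0$ of $\Hb_L(\bphi^0)$: column by column, the pair $\bnu_{rr},\bnu_{rl}$ and the pair $\bb_{rs},\bomega_{rs}$ agree up to the off-diagonal moments $q^{-1}\sum_j\gamma_{jh}^0\gamma_{jl}^0$, $n^{-1}\sum_iU_{ih}^0U_{il}^0$, and $n^{-1}(\Ub^0)^{\T}\Xb$, all of which vanish under Conditions $1^{\prime}$--$2^{\prime}$. Given this alignment, I would run the same decomposition argument as in the proof of Lemma~\ref{lemma:Hcheck}: writing a unit vector as $\bw=\alpha\bv+\beta\bu$ with $\bv$ in the span of the scaled null directions and $\bu$ orthogonal to it, the component $\beta^2\bu^{\T}\Db_q^{1/2}\Hb_L\Db_q^{1/2}\bu$ is bounded below by a dimension-free constant times $\beta^2$ (strong convexity of the likelihood off the flat subspace, using $b_L$ and Assumption~\ref{assumption: psd covariance}), while $\alpha^2$ is controlled from below through the penalty quadratic form $\bw^{\T}\Db_q^{1/2}\Hb_P(\bphi^0)\Db_q^{1/2}\bw=\|(\bLambda^0)^{\T}\Db_q^{1/2}\bw\|^2$. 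A contradiction argument identical in spirit to \eqref{BB11} then yields a uniform lower bound $\epsilon_0>0$ at $\bphi^0$, after absorbing the remainder $\Db_q^{1/2}\Hb_R(\bphi^0)\Db_q^{1/2}$, whose operator norm is $o_p(1)$ because its entries are proportional to the mean-zero scores $l_{ij}^{\prime}(w_{ij}^0)$ that concentrate by Lemma~\ref{lemma:first-order derivative}.

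It then remains to propagate the bound from $\bphi^0$ to every $\bphi$ in the neighborhood $\{\bphi\in\cB(D):\sqrt{p}\,\|\Db_q^{-1/2}(\bphi-\bphi^0)\|\le m\}$. The plan is to bound $\|\Db_q^{1/2}(\cH(\bphi)-\cH(\bphi^0))\Db_q^{1/2}\|$ uniformly: Lipschitz continuity of the blocks of $\Hb_L$ and $\Hb_R$ in $\bphi$ follows from the third-derivative bound $|l_{ij}^{\prime\prime\prime}|\le b_U$ of Assumption~\ref{assumption:smoothness}, and the change in $\Hb_P$ is likewise controlled since its defining quadratics are polynomial in the parameters on the compact set $\cB(D)$. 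Because the neighborhood radius carries the factor $\sqrt{p}$, the increment $w_{ij}-w_{ij}^0$ and hence the perturbation of the curvature are small under the scaling controlled by Assumption~\ref{assumption: Scaling}, so choosing $m$ small enough makes the perturbation strictly below $\epsilon_0$.

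The main obstacle will be this last, uniform-over-the-neighborhood step in the regime of diverging $p$: the Hessian now contains the $pq$ covariate-effect coordinates, so both the null space of $\Hb_L$ and the penalty range $\bLambda^0$ acquire the extra $Kp$ directions $\{\bb_{rs}\}$, and the operator-norm control of $\cH(\bphi)-\cH(\bphi^0)$ must hold simultaneously for all $\bphi$ in a set whose metric entropy grows with $p$. I expect to handle this through a covering argument over the neighborhood combined with the sub-exponential moment bounds of Assumption~\ref{assumption:smoothness}, where the interplay between the net cardinality, the factor $\sqrt{p}$ in the radius, and the scaling condition $\tfrac{p}{\sqrt{n\wedge(pq)}}(nq)^{\epsilon+3/\xi}\to0$ is exactly what keeps the perturbation negligible and delivers \eqref{eq_local_convex}.
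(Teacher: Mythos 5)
Your plan has the right overall architecture (curvature of $\Hb_L$ off its null space, penalty curvature on the null space, negligible $\Hb_R$, then perturb to the neighborhood), and this matches the paper's strategy. But the step you call the "core algebraic step" contains a genuine error: it is \emph{not} true that the columns of $\Vb_p$ agree with those of $\Vb_0$ up to terms that vanish under Conditions $1^{\prime}$--$2^{\prime}$. The covariate columns differ by $\bb_{rs}-\bomega_{rs}=\big((\bGamma_{[,r]}\otimes\one_{K+s}^{(K+p)})^{\T},\zero^{\T}\big)^{\T}$, whose scaled norm is of order one, and the symmetrized vectors $\bu_{hl}$ are structurally different from the asymmetric null directions $\bnu_{hl}$ (one mixes a $\bGamma$-block with a $-\Ub$-block, the other is supported on a single block). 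Consequently $\Hb_P(\bphi^0)$ in \eqref{eq:rank1 decomp} is \emph{not} approximately $c\Vb_0\Vb_0^{\T}$, and your conclusion that the penalty "supplies the missing curvature precisely along that subspace" does not follow from the alignment you assert. The paper's actual mechanism is different: it projects each column of $\Vb_p$ onto $\mathrm{span}(\Vb_0)$, shows the coordinate matrix $\mathcal{N}(\bphi^0)$ of these projections is \emph{invertible}, writes $\Hb_P(\bphi^0)=c(\Vb_0\mathcal{N}+\mathcal{R})(\Vb_0\mathcal{N}+\mathcal{R})^{\T}$ with residuals $\mathcal{R}$ orthogonal to $\mathrm{span}(\Vb_0)$, and only then runs the contradiction argument. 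The invertibility of $\mathcal{N}$ is exactly where the distinct-eigenvalue requirement of Assumption~\ref{assumption: psd covariance}(iv) enters: the projections of $\bu_{hl}$ and $\bu_{lh}$ span the two directions $\one_{K(h,l)},\one_{K(l,h)}$ only because the $2\times 2$ weight matrix with rows $(\tau_l^2,\tau_h^2)$ and $(\tau_h^2,\tau_l^2)$ is nonsingular, i.e.\ $\tau_h\neq\tau_l$. Your argument never uses distinctness, which is a structural red flag: if two $\tau_r$ coincide, the rotational indeterminacy in that $2$-plane is unpenalized and $\lambda_{\min}$ is genuinely zero along it, so no proof ignoring this condition can be correct.

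A secondary, less serious misjudgment concerns the neighborhood step, which you flag as the main obstacle and propose to attack with a covering/entropy argument plus concentration. In the paper this step is purely deterministic: randomness enters only through $\|\Db_q^{1/2}\Hb_R(\bphi^0)\Db_q^{1/2}\|=O_p(\delta_{nq}^{-1})$ (which, incidentally, needs a random-matrix operator-norm bound in the spirit of the estimate for $\Lb_\theta$ in the proof of Lemma~\ref{lemma:estimation for 1st derivative}, not just the score concentration of Lemma~\ref{lemma:first-order derivative}). The uniformity over $\{\sqrt{p}\|\Db_q^{-1/2}(\bphi-\bphi^0)\|\le m\}$ then follows from elementary Cauchy--Schwarz bounds of the type \eqref{BB17}--\eqref{BB20}, a mean-value bound on $b^{\prime}(w_{ij})-b^{\prime}(w_{ij}^0)$ using $|l^{\prime\prime\prime}_{ij}|\le b_U$, and the observation that the basis perturbation satisfies $\|\Vb_0(\bphi)-\Vb_0(\bphi^0)\|\lesssim\sqrt{K^2+Kp}\,\|\Db_q^{-1/2}(\bphi-\bphi^0)\|\lesssim m$ — the $\sqrt{p}$ in the neighborhood radius exactly cancels the $\sqrt{p}$ growth in the number of null directions. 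No chaining over the parameter set is required, so the entropy-versus-scaling interplay you anticipate is misplaced effort; the genuine difficulty sits at $\bphi^0$, in the projection/invertibility structure described above.
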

 \begin{proof}
     The strategy for this proof is that we will first show the positive definiteness of the scaled Hessian matrix at $\bphi^0$, i.e. we first show $\lambda_{\min}\big(\Db_{q}^{1/2}\mathcal{H}({\bphi}^0)\Db_{q}^{1/2}\big)\ge \gamma^0$ for some positive $\gamma^0$. Then we prove the local convexity in the region $\cB(D)\cap \{\bphi:\sqrt{p}\|\Db_p^{-1/2}(\bphi-\bphi^0)\|\le m\}$.

\noindent {\bf Positive definiteness at $\bphi^0$.} First, it can be verified that all eigenvalues of $\Db_{q}^{1/2}{\Hb}_L\Db_{q}^{1/2}$ are all nonnegative. We also observe that
\begin{align}
   & \Db_{q}^{1/2}{\Hb}_L\Db_{q}^{1/2}+c\Vb_0\Vb_0^\T \nonumber\\
    =&-\Db_q^{-1/2}\sum_{i=1}^n\sum_{j=1}^q\{l_{ij}^{\prime\prime}(w_{ij}^0)+c\}\begin{pmatrix}
        \one_j^{(q)}\otimes \bZ_i^0\\\one_i^{(n)}\otimes \bgamma_j^0
    \end{pmatrix}\begin{pmatrix}
        \one_j^{(q)}\otimes \bZ_i^0\\\one_i^{(n)}\otimes \bgamma_j^0
    \end{pmatrix}^\T\Db_q^{-1/2}\nonumber\\
    &+c\begin{bmatrix}
        \Ib_q\otimes {c}{n}^{-1}\sum_{i=1}^n\bZ_i^0(\bZ_i^0)^\T\\&\Ib_n\otimes {c}{q}^{-1}\sum_{j=1}^q\bgamma_j^0(\bgamma_j^0)^\T
    \end{bmatrix}\nonumber
\\&+\begin{bmatrix}
    [V_0V_0^\T]_{[1:q(K+p),1:q(K+p)]} \\&[V_0V_0^\T]_{[q(K+p)+1:q(K+p)+nK,q(K+p)+1:q(K+p)+nK]}
\end{bmatrix}.\label{eq:semi definite decomposition}
\end{align} By selecting $0<c<b_L$, the first and third terms are positive semi-definite. In addition, as $n,q$ goes to infinity, and by the transition of true parameters $\bphi^*$ to the working parameters $\bphi^0$ satisfying identifiability conditions $1^{\prime}$--$2^{\prime}$, we have ${q}^{-1}\sum_{j=1}^q\bgamma_j^0(\bgamma_j^0)^\T$ and ${n}^{-1}\sum_{i=1}^n\bU_i^0(\bU_i^0)^\T$ converge to the diagonal matrix of eigenvalues of 
$\{ (\bSigma_u^0)^{1/2} \bSigma_{\gamma}^0 (\bSigma_u^0)^{1/2}\}^{1/2}$. Besides, we also have
$n^{-1}\sum_{i=1}^n\bX_i\bX_i^\T$ converges to  $\bSigma_x$ with $\lambda_{\min} (\bSigma_x) \ge \kappa^2$ and $n^{-1}\sum_{i=1}^n\bU_i^0\bX_i^{\T}=\bm{0}$. Hence, we know that 
\begin{equation}
    \lambda_{\min}\Big(\Db_{q}^{1/2}{\Hb}_L\Db_{q}^{1/2}+c\Vb_0\Vb_0^\T\Big)\ge \gamma,\label{eq_lowboundh_l}
\end{equation}
for some $0<\gamma< \min\{c,b_L\}$.
\eqref{eq:semi definite decomposition} also implies that there are a total of $K^2+Kp$ zero eigenvalues for $\Db_{q}^{1/2}{\Hb}_L\Db_{q}^{1/2}$ with corresponding eigenvectors:
\begin{align*}\Db_{q}^{-1/2}&\bnu_{rl},1\le r, l\le K;\\
\Db_{q}^{-1/2}&\bomega_{rs},1\le r\le K,1\le s\le p.
\end{align*}
Therefore, the column vectors of $\Vb_0$ form a non-degenerate basis of the null space of $\Db_{q}^{1/2}{\Hb}_L\Db_{q}^{1/2}$ when $P(\bphi)=0$. 
Note that when ${\bphi}={\bphi}^0$ the $l_2$-norm of columns of $\Vb_0$ is uniformly bounded. Denote by $L=\max_j\|[\Vb_0]_{[,j]}\|<\infty$. The inner products of these vectors are given as follows:
\begin{align*}
\langle\Db_q^{-1/2}\bnu_{r_1l_1},\Db_q^{-1/2}\bnu_{r_2l_2}\rangle &=\frac{1}{q}\sum_{j=1}^q\gamma_{jl_1}\gamma_{jl_2}1_{(r_1=r_2)}+\frac{1}{n}\sum_{i=1}^nU_{ir_1}U_{ir_2}1_{(l_1=l_2)}; \\
\langle\Db_q^{-1/2}\bnu_{rl},\Db_q^{-1/2}\bomega_{ks}\rangle &=\frac{1}{n}\sum_{i=1}^nU_{ir}X_{is}=0; \\
\langle\Db_q^{-1/2}\bomega_{r_1s_1},\Db_q^{-1/2}\bomega_{r_2s_2}\rangle &=\frac{1}{q}\sum_{j=1}^q\gamma_{jr_1}\gamma_{jr_2}1_{(s_1=s_2)}+\frac{1}{n}\sum_{i=1}^nX_{is_1}X_{is_2}1_{(r_1=r_2)}.
\end{align*}
Here the first $K^2$ column vectors are orthogonal to each other and to the last $Kp$ column vectors. But the last $Kp$ column vectors are not orthogonal to each other. However, since rank$(\Xb)=p$, these vectors are still linear independent and form a basis for the null space.

To further bound the eigenvalues of $\Db_{q}^{1/2} \mathcal{H}({\bphi}^0) \Db_{q}^{1/2}$, we define function $\mathscr{N}:\mathbb{R}^{q(K+p)+nK}$ $\to\mathbb{R}^{K^2+Kp}$ that maps any $\bm x\in\mathbb{R}^{nK+qK+qp}$ to its projection on the null space of $\Db_{q}^{1/2}\Hb_L(\bphi^0)$ $\Db_{q}^{1/2}$ under basis $\Vb_0(\bphi^0)$. Since this matrix is full-column rank, i.e., the column vectors are linearly independent as shown above, the projection is unique. Specifically, $\bx-\Vb_0\mathscr{N}(\bx)\perp null\big(\Db_{q}^{1/2}\Hb_L \Db_{q}^{1/2}\big)$. We write $\mathscr{N}(\bx)=(\mathscr{N}_1(\bx),\cdots,\mathscr{N}_{K^2+Kp}(\bx))^\T $ where $\mathscr{N}_i$ is the $i$th coordinates corresponding to vector $\Vb_{0[,i]}$. Suppose $\tau_r= ({n^{-1}
\sum_{i=1}^n(U_{ir}^0)^2})^{1/2}=({q^{-1}\sum_{j=1}^q(\gamma_{jr}^0)^2})^{1/2}$. 
We first observe that at $\bphi^0$, we have for $r_2\neq l_2$
\begin{align*}
    \langle\Db_q^{-1/2}\bu_{r_2l_2},\Db_q^{-1/2}\bnu_{r_1l_1}\rangle=&~\frac{1}{q}(\sum_{j=1}^q\gamma_{jl_1}^0\gamma_{jl_2}^01_{(r_1=r_2)}+\sum_{j=1}^q\gamma_{jl_1}^0\gamma_{jr_2}^01_{(r_1=l_2)})1_{(r_2> l_2)}\\&-\frac{1}{n}(\sum_{i=1}^nU_{ir_1}^0U_{ir_2}^01_{(l_1=l_2)}+\sum_{i=1}^nU_{ir_1}^0U_{il_2}^01_{(l_1=r_2)})1_{(r_2< l_2)}\\=&~\tau_{r_1}^21_{(r_1=r_2>l_2=l_1)}+\tau_{l_1}^21_{(l_1=r_2> l_2= r_1)}\\&-\tau_{r_1}^21_{(r_1=r_2<l_2=l_1)}-\tau_{l_1}^21_{(l_1=r_2< l_2= r_1)}; \\
\langle\Db_q^{-1/2}\bu_{rl},\Db_q^{-1/2}\bomega_{ks}\rangle=&~\frac{1}{n}(\sum_{i=1}^nU_{il}^0X_{is}1_{(r=k)}+\sum_{i=1}^nU_{ir}^0X_{is}1_{(l=k)})=0;\\
\langle\Db_q^{-1/2}\bb_{ks},\Db_q^{-1/2}\bnu_{rl}\rangle=&~\frac{1}{n}(\sum_{i=1}^nX_{is}U_{il}^01_{(r=k)}+\sum_{i=1}^nX_{is}U_{ir}^01_{(l=k)})=0; \\
\langle\Db_q^{-1/2}\bb_{k_1s_1},\Db_q^{-1/2}\bomega_{k_2s_2}\rangle=&~\frac{1}{n}\sum_{i=1}^nX_{is_1}X_{is_2}1_{(r_1=r_2)}.
\end{align*}
Therefore, we get the projection of first $K^2$ column vectors of $\Vb_p$ given by
\begin{align*}
    \mathscr{N}&\big([\Vb_p]_{[K(r)]}\big)=\one^{(K^2+Kp)}_{[K(r)]}\\
    \mathscr{N}&\big([\Vb_p]_{[K(r,l)]}\big)=\frac{\tau_l^2}{\tau_l^2+\tau_h^2}\one^{(K^2+Kp)}_{[K(r,l)]}+\frac{\tau_h^2}{\tau_h^2+\tau_l^2}\one^{(K^2+Kp)}_{[(l-1)K+r]}.
\end{align*}
For the projection of $\Db_q^{-1/2}\bb_{ks}$, the first $K^2$ components of $\mathscr{N}(\Db_q^{-1/2}\bb_{ks})$ are zero. Also note that $[\Vb_p]_{[K^2+(k-1)p+s]}-\Db_q^{-1/2}\omega_{ks}$ is orthogonal to all columns vectors of $\Vb_0$. Then the projection on span($[\Vb_0]_{[K^2+1:K^2+Kp]})$ is $\Db_q^{-1/2}\bomega_{ks}$ and therefore 
\begin{equation*}
    \mathscr{N}\big([\Vb_p]_{[K^2+(k-1)p+s]}\big)=\one^{(K^2+Kp)}_{[K^2+(k-1)p+s]}
\end{equation*}
 Define $\mathcal{N}$ the assembled coordinates, i.e., assemble the coordinates under $\Vb_0$ as the column vectors of $\mathcal{N}$. The matrix $\mathcal{N}({\bphi}^0)$ is an invertible matrix when set on $\bphi^0$ as $\tau_h\neq \tau_l$ for any $h\neq l$ when $n,q$ are large enough where $\tau_r$ converges to the $r$th diagonal element in $\{ (\bSigma_u^0)^{1/2} \bSigma_{\gamma}^0 (\bSigma_u^0)^{1/2}\}^{1/2}$. By Assumption~\ref{assumption: psd covariance}, the diagonal values in this matrix are distinct. 
 We also define the following residual vectors:
\begin{align*}
    \bnu_{hl}^{(R)}=&\Db_{q}^{-1/2}\bnu_{hl}-\Vb_0\mathscr{N}\big(\Db_{q}^{-1/2}\bnu_{hl}\big);\\\bm \bb_{ks}^{(R)}=&\Db_{q}^{-1/2}\bb_{ks}-\Vb_0\mathscr{N}\big(\Db_{q}^{-1/2}\bb_{ks}\big)=\Db_{q}^{-1/2}\begin{pmatrix}
        \bGamma_{[,k]}\otimes\one_{K+s}^{(K+p)}\\\zero_{nK}
    \end{pmatrix},
\end{align*} and $\mathcal{R}=\big[\mathcal{R}_1,\mathcal{R}_2\big]$,
where 
\begin{align*}
    \mathcal{R}_1=&\big[ \zero_{K^2+Kp},\bnu_{12}^{(R)},\cdots,\bnu_{1D}^{(R)},\bnu_{21}^{(R)},\zero_{K^2+Kp},\bnu_{23}^{(R)},\cdots,\bnu_{KK-1}^{(R)},\zero_{K^2+Kp}\big];\\\mathcal{R}_2=&\big[\bb_{1}^{(R)},\cdots,\bb_{p}^{(R)},\bb_{21}^{(R)},\cdots,\bb_{qp}^{(R)}\big].
\end{align*}
Here all column vectors of $\cR_1$ and $\cR_2$ are orthogonal to span$(\Vb_0)$. Next on ${\bphi}^0$ we have 
\begin{align*}\Db_{q}^{1/2}\Hb_P({\bphi}^0)\Db_{q}^{1/2}=&c\Db_{q}^{-1/2}\big(\sum_{r=1}^K\bnu_{rr}^0(\bnu_{rr}^0)^\T+\sum_{h\neq l}\bu_{hl}^0(\bu_{hl}^0)^\T+\sum_{r=1}^K\sum_{s=1}^p\bb_{rs}^0(\bb_{rs}^0)^\T\big)\Db_{q}^{-1/2}\\=&c(\Vb_0({\bphi}^0)\mathcal{N}({\bphi}^0)+\mathcal{R}({\bphi}^0)) (\Vb_0({\bphi}^0)\mathcal{N}({\bphi}^0)+\mathcal{R}({\bphi}^0))^\T\\=&c\Vb_0({\bphi}^0)\mathcal{N} ({\bphi}^0) \mathcal{N}({\bphi}^0)^\T \Vb_0({\bphi}^0)^\T + c\begin{bmatrix}\mathcal{R}_1({\bphi}^0)&\mathcal{R}_2({\bphi}^0)\end{bmatrix}\mathcal{N}({\bphi}^0)^\T \Vb_0({\bphi}^0)^\T\\&+c\Vb_0({\bphi}^0)\mathcal{N}({\bphi}^0)\begin{bmatrix}\mathcal{R}_1({\bphi}^0)^\T\\\mathcal{R}_2({\bphi}^0)^\T\end{bmatrix}+c\mathcal{R}_2({\bphi}^0)\mathcal{R}_1({\bphi}^0)^\T+c\mathcal{R}_2({\bphi}^0)\mathcal{R}_2({\bphi}^0)^\T.
\end{align*}
Define by $\mathcal{V}$ the space spanned by the column vectors of $\Vb_{0}$, namely, the null space of $\Db_{q}^{1/2}\Hb_L\Db_{q}^{1/2}$. Then for any $\bz$, $\|\bz\|=1$, let $\bv\in\mathcal{V}$ and $\bu\in \mathcal{V}^\perp$ such that $\bz=\alpha\bv+\beta\bu$ and $\|\bv\|=\|\bu\|=1$. Still, we have $\alpha^2+\beta^2=1$. Denote $\bn_0=\mathscr{N}(\bz)=\mathscr{N}(\bv)$ and ${\bn}_v=\mathcal{N}({\bphi}^0)^\T \Vb_{0}({\bphi}^0)^\T \Vb_{0}({\bphi}^0)\bn_0$. By definition, $\bv=\Vb_{0}({\bphi}^0)\bn_0$. By \eqref{eq_lowboundh_l}
\begin{align}
    \bz^\T \Db_{q}^{1/2}&\{\Hb_L(\bphi^0)+\Hb_P(\bphi^0)\}\Db_{q}^{1/2}\bz\nonumber\\&=\beta^2\bu^\T  \Db_{q}^{1/2}\Hb_L({\bphi}^0)\Db_{q}^{1/2}\bu +c\alpha^2\bv^\T \Vb_{0}(\bphi^0)\mathcal{N}(\bphi^0)  \mathcal{N}(\bphi^0)^\T \Vb_{0}(\bphi^0)^\T\bv\nonumber\\&+2c\alpha\beta\bu^\T\begin{bmatrix}\mathcal{R}_1(\bphi^0),\mathcal{R}_2(\bphi^0)\end{bmatrix}\mathcal{N}({\bphi}^0)^\T \Vb_{0}({\bphi}^0)^\T \bv\nonumber\\
    &+b\beta^2\bu^\T [\mathcal{R}_1(\bphi^0)\mathcal{R}_1(\bphi^0)^\T+\mathcal{R}_2(\bphi^0)\mathcal{R}_2(\bphi^0)^\T]\bu\nonumber\\\le& \beta^2\gamma+c\alpha^2\Big[\sum_{r=1}^{K}[{\bn}_v(\bphi^0)]_{K(r)}^2\Big]+c\sum_{1\le h\neq l\le K}\Big[(\bnu_{hl}^{(R)}(\bphi^0))^\T\bz+\alpha[{\bn}_v(\bphi^0)]_{[K(h,l)]}\Big]^2\nonumber\\&+c\sum_{r=1}^K\sum_{s=1}^p\Big[(\bb_{rs}^{(R)}(\bphi^0))^\T \bz+\alpha[{\bn}_v(\bphi^0)]_{[K^2+(r-1)p+s]}\Big]^2.\label{BB15}
\end{align}
Next we prove \eqref{BB15} has positive upper-bound independent of $n$ and $q$ by contradiction. Suppose for any $\epsilon>0$ we have $\eqref{BB15}<\epsilon$. Let $L_m=\|\mathcal{N}(\bphi^0)^\T \Vb_{0}(\bphi^0)^\T\|$. It is not hard to see $L_m^2=\sum_{r=1}^{K^2+Kp}[{\bn}_v(\bphi^0)]_{r}^2<\infty$. Then by $\eqref{BB15}<\epsilon$ we know that each term in \eqref{BB15} will be smaller than $\epsilon$, which implies
\begin{align}
&\beta<\sqrt{\frac{\epsilon}{\gamma}}\label{eq_contr_1}\\
    &\sum_{r=1}^K[\bn_v(\bphi^0)]_{K(r)}^2<\frac{\epsilon}{c\alpha^2};\label{eq_contr_2}\\&
    -\big|(\bnu_{hl}^{(R)}(\bphi^0))^\T\bz\big|-\sqrt{\frac{\epsilon}{c}}<\alpha[{\bn}_v(\bphi^0)]_{[K(h,l)]}< -\big|(\bnu_{hl}^{(R)}(\bphi^0))^\T\bz\big|+\sqrt{\frac{\epsilon}{c}};\label{eq_contr_3}\\&-\big|(\bb_{rs}^{(R)}(\bphi^0))^\T \bz\big|-\sqrt{ \frac{\epsilon}{c}}<\alpha[{\bn}_v(\bphi^0)]_{[K^2+(r-1)p+s]}<-\big|(\bb_{rs}^{(R)}(\bphi^0))^\T \bz\big|+\sqrt{ \frac{\epsilon}{c}}.\label{eq_contr_4}
\end{align}
First we observe \eqref{eq_contr_1} implies $\alpha>\sqrt{1-{\epsilon}/{\gamma}}$.
Takes \eqref{eq_contr_1} to \eqref{eq_contr_2} and $L_m^2=\sum_{r=1}^{K^2+Kp}$ $[{\bn}_v(\bphi^0)]_{r}^2$ we have
\begin{equation*}
    \sum_{r=1}^{K^2+Kp}[{\bn}_v(\bphi^0)]_{r}^2-\sum_{r=1}^K[\bn_v(\bphi^0)]_{K(r)}^2> L_m^2-\frac{\epsilon\gamma}{c(\gamma-\epsilon)}.
\end{equation*}
Then plug in \eqref{eq_contr_3} and \eqref{eq_contr_4} we have 
\begin{align*}
    &L_m^2-\frac{\epsilon\gamma}{c(\gamma-\epsilon)}\\
    <&(1-\frac{\epsilon}{c})\Big\{\sum_{h\neq l}\big[(\bnu_{hl}^{(R)}(\bphi^0))^\T \bz\big]^2+\sum_{r=1}^K\sum_{s=1}^p\big[(\bb_{rs}^{(R)}(\bphi^0))^\T \bz\big]^2\Big\}\\<&(1-\frac{\epsilon}{c})\beta^2\Big\{\sum_{h\neq l}\big[(\bnu_{hl}^{(R)}(\bphi^0))^\T \bu\big]^2+\sum_{r=1}^K\sum_{s=1}^p\big[(\bb_{rs}^{(R)}(\bphi^0))^\T \bu\big]^2\Big\}\to 0\mbox{ as }\beta<\sqrt{\epsilon/\gamma}
\end{align*}
which is a contradiction.
Then by boundedness of $\bgamma_j^0$, $\bU_i^0$ and Lemma \ref{lemma:estimation for 1st derivative}, we have $\|\Hb_{Ruf^{\prime}}(\bphi^0)\|$ $=\|\Hb_{Rfu ^{\prime}}(\bphi^0)\|=O_p\big(\sqrt{n}+\sqrt{q}\big)$, which implies
\begin{equation*}
    \bz^\T \Db_{q}^{1/2}\Hb_R({\bphi}^0)\Db_{q}^{1/2}\bz=O_p\big(\delta_{nq}^{-1}\big).
\end{equation*}
To sum up, we get $\lambda_{\min}\big[\Db_{q}^{1/2}\mathcal{H}({\bphi}^0)\Db_{q}^{1/2}\big]\le c$ for some $c>0$.

\noindent {\bf Local convexity in a neighborhood of $\bphi^0$.}
For arbitrary ${\bphi}\in\cB(D)\cap\{{\bphi}:\sqrt{p}\|\Db_{q}^{-1/2}({\bphi}-{\bphi}^0) \| \le m\}$, we prove for any $\bz$, $\|\bz\|=1$, \begin{equation*}\bz^\T \Db_{q}^{1/2}\mathcal{H}({\bphi})\Db_{q}^{1/2}\bz=\bz^\T \Db_{q}^{1/2}\Big[\Hb_L({\bphi})+\Hb_R(\bphi)+\Hb_P({\bphi})\Big]\Db_{q}^{1/2}\bz\end{equation*} has positive upper bound. 
Note here we consider a neighborhood region of $\{{\bphi}:\sqrt{p}\|\Db_{q}^{-1/2}({\bphi}-{\bphi}^0)\|\le m\}$ where it has been proved in Lemma~\ref{prop:average consistency} that the constrained MLEs will fall into this region with any small $m$ when $p\lesssim \delta_{nq}$.

For ${\bphi}\in\cB(D)$, we omit $\bphi$ in the notations including $\Vb_0(\bphi)$, $\cN(\bphi)$, $\cR(\bphi)$ and other vectors dependent on $\bphi$ for simplicity. We first focus on the expression related to $\Hb_P$, that is, 
\begin{align}
    \bz^\T &\Db_{q}^{1/2}\Hb_P\Db_{q}^{1/2}\bz=(\Vb_{0}\mathcal{N}+\mathcal{R}) (\Vb_{0}\mathcal{N}+\mathcal{R})^\T\label{BB16_1}\\&+{2}c\bz^\T \sum_{r=1}^K(n^{-1}\sum_{i=1}^nU_{ir}^2-q^{-1}\sum_{j=1}^q\gamma_{jr}^2)\begin{bmatrix}
        \Ib_{n+q}\otimes \Eb_{rr}^{(K+p)}\\&\Ib_n\otimes\Eb_{rr}^{(K)}
    \end{bmatrix}\bz\nonumber\\&+c\bz^\T \Db_{q}^{-1/2}\sum_{r<l}\Big[
        (\sum_{i=1}^nU_{ih}U_{il})\Db_{1,rl}+(\sum_{j=1}^q \gamma_{jh} \gamma_{jl})\Db_{2,rl}\Big]\Db_{q}^{-1/2}\bz.\label{BB16}
\end{align}
Notice that $P(\bphi) = 0$ does not necessarily hold for the general ${\bphi}\in\cB(D)$, which results in extra terms in~\eqref{BB16}. Because $\sqrt{p}\|\Db_{q}^{-1/2}({\bphi}-{\bphi}^0)\|\le m$ and by Cauchy-Schwarz Inequality, we have
\begin{align}
    \Big|\frac{1}{n}\sum_{i=1}^n\big[U_{ir}^2-(U_{ir}^0)^2\big]\Big|&\le 2n^{-1/2}\|\hat\Ub^0-\Ub^0\|n^{-1/2}\|\Ub^0\|+n^{-1}\|\hat\Ub^0-\Ub^0\|\nonumber\\&\le2mn^{-1/2}\|\Ub^0\|/p+m^2/p^2;\label{BB17}
\end{align}
Similarly, we have 
\begin{align}
    \big|\frac{1}{q}\sum_{j=1}^q \gamma_{jr}^2-\frac{1}{q}\sum_{j=1}^q (\gamma_{jr})^2\big|\le 2mq^{-1/2}\|\bGamma^0\|/p+m^2/p^2;\\
    \big|\frac{1}{n}\sum_{h<l}^nU_{ih}U_{il}\big|\le 2mn^{-1/2}\|\Ub^0\|/p+m^2/p^2;\\
    \big|\frac{1}{q}\sum_{h<l}^q \gamma_{jh}\gamma_{jl}\big|\le 2mq^{-1/2}\|\bGamma^0\|/p+m^2/p^2.\label{BB20}
\end{align}
From \eqref{BB17}-\eqref{BB20}, the two terms in \eqref{BB16} can be bounded as follows
\begin{align}
    | \eqref{BB16}| \le cK(4K+3) (mn^{-1/2}\|\Ub^0\|/p+ mq^{-1/2}\|\bGamma^0\|/p +m^2/p^2)\lesssim \frac{m}{p}. \label{eq:prop 2 Hp term2}
\end{align}
We next compute $\bz^\T \Db_{q}^{-1/2}\Hb_R(\bphi)\Db_{q}^{-1/2}\bz$. Since
\begin{align*}
   & \bz^\T \Db_{q}^{-1/2}\Hb_R({\bphi})\Db_{q}^{-1/2}\bz  \\
    &=\frac{2}{\sqrt{nq}}\sum_{i=1}^n\sum_{j=1}^q [y_{ij}-b^\prime(w_{ij}^0)+b^\prime(w_{ij}^0)-b^\prime(w_{ij})]\sum_{r=1}^K[\bz]_{[(i-1)K+r]}[\bz]_{[(j-1)K+r]} \\
    &= \bz^\T \Db_{q}^{1/2}\Hb_R({\bphi^0})\Db_{q}^{1/2}\bz + \frac{2}{\sqrt{nq}}\sum_{i=1}^n\sum_{j=1}^q (b^\prime(w_{ij}^0)-b^\prime(w_{ij})) \sum_{r=1}^K[\bz]_{[(i-1)K+r]}[\bz]_{[(j-1)K+r]} \\
    & \le  \bz^\T \Db_{q}^{1/2}\Hb_R({\bphi^0})\Db_{q}^{1/2}\bz \\
    &+\frac{2}{\sqrt{nq}} \big[ \sum_{i=1}^n\sum_{j=1}^q \{b^\prime(w_{ij}^0)-b^\prime(w_{ij})\}^2\big]^{1/2} \sum_{r=1}^K[\bz]_{[(i-1)K+r]}[\bz]_{[(j-1)K+r]} ,
\end{align*}
and by Assumption~\ref{assumption:smoothness} and the mean value theorem, we have
\begin{align*}
  & \sum_{i=1}^n\sum_{j=1}^q\big\{b^{\prime}(w_{ij}^0)-b^{\prime}(w_{ij})\big\}^2 \\ & \le  \sum_{i=1}^n\sum_{j=1}^q \{ b^{\prime\prime} (\tilde{w}_{ij})(w_{ij}^0 - w_{ij})\}^2  \\
   &\le 2b_U\sum_{i=1}^n\sum_{j=1}^q(\bgamma_j^\T\bU_i-(\bgamma_j^0)^\T\bU_i^0)^2+2b_U\sum_{j=1}^q(\bbeta_j-\bbeta_j^0)\big(\sum_{i=1}^n\bX_i\bX_i^\T\big)(\bbeta_j-\bbeta_j^0)\\&\lesssim q\|\bU-\bU^0\|_F^2+n\|\bGamma-\bGamma^0\|_F^2+n\lambda_{\max}\big(\frac{1}{n}\sum_{i=1}^n\bX_i\bX_i^\T\big)\|\bB-\Bb^0\|_F^2\\&
   \lesssim nq/p ,
\end{align*}
where the last equality is from ${\bphi}\in\cB(D)\cap\{{\bphi}:\sqrt{p}\|\Db_{q}^{-1/2}({\bphi}-{\bphi}^0)\|\le m\}$. Therefore combining above results with the estimates for $\Hb_R$ on $\bphi^0$, we have
\begin{equation}
      \bz^\T \Db_{q}^{-1/2}\Hb_R({\bphi})\Db_{q}^{-1/2}\bz \lesssim  \big\|\Db_{q}^{1/2}\Hb_R({\bphi^0})\Db_{q}^{1/2}\big\|+  \frac{1}{\sqrt p} \lesssim \frac{1}{{\delta_{nq}}}. \label{eq:prop 2 HR}
\end{equation}
Lastly, we write the remaining terms in $\bz^\T \Db_{q}^{1/2}\mathcal{H}({\bphi})\Db_{q}^{1/2}\bz$ as
\begin{align*}
    &\bz^\T \Db_{q}^{1/2}\Hb_L(\bphi)\Db_{q}^{1/2}\bz+\bz^\T(\Vb_{0}\mathcal{N}+\mathcal{R}) (\Vb_{0}\mathcal{N}+\mathcal{R})^\T\bz\\&\le \beta^2\gamma+c\alpha^2\Big[\sum_{i=1}^{K}[ {\bn}(\bphi^0)_v]_{K(i)}^2\Big]+c\alpha^2\Big[\sum_{i=1}^K([ {\bn}_v]_{[K(i)]}^2-[ {\bn}(\bphi^0)_v]_{[K(i)]}^2)\Big]\\&+c\sum_{h\neq l}\big[(\bnu_{hl}^{(R)}(\bphi^0))^\T\bz-\alpha[ {\bn}(\bphi^0)_v]_{[K(h,l)]}\big]^2 \\
    &+c\sum_{r=1}^K\sum_{s=1}^p\big[(\bb_{rs}^{(R)}(\bphi^0))^\T \bz-\alpha[ {\bn}(\bphi^0)_v]_{[K^2+(r-1)p+s]}\big]^2\\&+c\sum_{h\neq l}\Big\{\big[(\bnu_{hl}^{(R)})^\T\bz-(\bnu_{lh}^{(R)}(\bphi^0))^\T\bz\big]^2+\big[\alpha[ {\bn}_v]_{[K(h,l)]}-\alpha[ {\bn}(\bphi^0)_v]_{[K(h,l)]}\big]^2\Big\}\\&+c\sum_{r=1}^K\sum_{s=1}^p\Big\{\big[(\bb_{rs}^{(R)})^\T \bz-(\bb_{rs}^{(R)}(\bphi^0))^\T \bz\big]^2+\big[\alpha[ {\bn}_v]_{[K^2+(r-1)p+s]}-\alpha[ {\bn}(\bphi^0)_v]_{[K^2+(r-1)p+s]}\big]^2\Big\}.
\end{align*}
Here $\bnu_{pq}^{(R)}(\bphi^0)$, $\bb_r^{(R)}(\bphi^0)$ are previously defined vectors on parameter ${\bphi}^0$ and $ {\bn}_v(\bphi^0)=\mathcal{N}(\bphi^0)^\T\Vb_{0}(\bphi^0)^\T\Vb_{0}\bn_0$. Then by our definition, $ {\bn}_v(\bphi^0)- {\bn}_v=\big[\Vb_{0}\mathcal{N}-\Vb_{0}(\bphi^0)\mathcal{N}(\bphi^0)\big]^\T\Vb_{0}\bn_0$ and 
\begin{align*}
    &[\Vb_{0}\mathcal{N}]_{[,K(r)]}=[\Vb_{0}]_{[,K(r)]},1\le r\le K;\\
    &[\Vb_{0}\mathcal{N}]_{[,D+K(h,l)]}=\frac{\tau_q^2}{\tau_q^2+\tau_p^2}[\Vb_0]_{[,K(h,l)]}+\frac{\tau_p^2}{\tau_q^2+\tau_p^2}[\Vb_0]_{[,K(l,h)]},1\le l\neq h\le K;\\
    &[\Vb_{0}\mathcal{N}]_{[,K^2+s]}=[\Vb_{0}]_{[,K^2+s]},1\le s\le Kp,
\end{align*}
we conclude that inside this restricted region, $\|\Vb_0-\Vb_0(\bphi^0)\|
\le\sqrt{K^2+Kp}\|\Db_q^{-1/2}(\bphi-\bphi^0)\|$. Therefore we have 
\begin{equation*}
    \|{\bn}_v(\bphi^0)-{\bn}_v\|\le \|\Vb_{0}\mathcal{N}-\Vb_{0}(\bphi^0)\mathcal{N}(\bphi^0)\|\|\bv\|\lesssim \sqrt{p}\|\Db_{q}^{-1/2}({\bphi}-{\bphi}^0)\|\lesssim m.
\end{equation*}
Also, by 
\begin{equation*}
    \bnu_{hl}^{(R)}(\bphi^0)-\bnu_{hl}(\bphi^0)=\Db_{q}^{-1/2}(\bnu_{hl}(\bphi^0)-\bnu_{hl})-\big\{[\Vb_{0}(\bphi^0)\mathcal{N}(\bphi^0)]_{[,K(h,l)]}-[\Vb_{0}\mathcal{N}]_{[,K(h,l)]}\},
\end{equation*}
and $\bb_{rs}^{(G,R)}-\bb_r^{(R)}=\zero_{qp}$, we have 
\begin{align*}
    \sum_{h\neq l}\Big\{\big[(\bnu_{hl}^{(R)})^\T\bz-(\bu_{hl}^{(R)}(\bphi^0))^\T\bz\big]^2-&\big[\alpha[\check{\bn}_v]_{[K(h,l)]}-\alpha[\check{\bn}(\bphi^0)_v]_{[K(h,l)]}\big]^2\Big\}\\&\lesssim \|\Db_{q}^{-1/2}({\bphi}-{\bphi}(\bphi^0))\|\\
    &\lesssim  m;
    \end{align*}
    \begin{align*}
    \sum_{r=1}^K\sum_{s=1}^p\Big\{\big[(\bb_{rs}^{(R)})^\T \bz-(\bb_r^{(R)}(\bphi^0))^\T \bz\big]^2-&\big[\alpha[\check{\bn}_v]_{[K^2+(r-1)p+s]}-\alpha[\check{\bn}(\bphi^0)_v]_{[K^2+(r-1)+s]}\big]^2\Big\}\\&\lesssim \sqrt{p}\|\Db_{q}^{-1/2}({\bphi}-{\bphi}(\bphi^0))\|\\&\le m.
\end{align*}
Therefore, $\bz^\T \Db_{q}^{1/2}\Hb_L\Db_{q}^{1/2}\bz+\bz^\T(\Vb_{0}\mathcal{N}+\mathcal{R}) (\Vb_{0}\mathcal{N}+\mathcal{R})^\T\bz$ has strict negative upper-bound for small enough $m$.

Combining the above results with~\eqref{eq:prop 2 Hp term2} and~\eqref{eq:prop 2 HR}, we complete the proof for Lemma~\ref{prop:min eigenvalue}.
 \end{proof}

\subsection{Proof of Lemma~\ref{lemma: l1 estimates}}
\begin{lemma}\label{lemma: l1 estimates}
    Under Assumption~\ref{assumption: psd covariance}--\ref{assumption:asymptotic normality}, we have estimates for the Hessian matrix on $\bphi^0$ as follows:
    \begin{enumerate}[(i)]
        \item $\|\Hb_{Lff^{\prime}}^{-1}\|=O_p(q)$, $\|\cH_{ff^{\prime}}^{-1}\|=O_p(q)$, and $\|\cH_{ff^{\prime}}^{-1}\|_1=O_p(\sqrt{p}q)$.
        \item $\|\Hb_{Luu^{\prime}}^{-1}\|=O_p(n)$, $\|\cH_{uu^{\prime}}^{-1}\|=O_p(n)$, and $\|\cH_{uu^{\prime}}^{-1}\|_1=O_p(\sqrt{p}n)$.
        \item $\max_i\big\|[\cH_{uf^{\prime}}]_{[K_i,]}\big\|=O_p\big((p^{1/2}+(nq)^{1/\xi})n^{-1}q^{-1/2}\big)$, $\|\cH_{uf^{\prime}}\|_1=O_p\big((nq)^{1/\xi}q^{-1}\big)$, and $\|\cH_{uf^{\prime}}\|=O_p\big((nq)^{-1/2+1/\xi}\big)$. 
        \item $\max_j\big\|[\cH_{fu^{\prime}}]_{[P_j,]}\big\|=O_p\big((nq)^{1/\xi}n^{-1}q^{-1/2}\big)$, $\|\cH_{fu^{\prime}}\|_1=O_p\big((p+(nq))^{1/\xi}n^{-1}\big)$, and $\|\cH_{fu^{\prime}}\|=O_p\big((nq)^{-1/2+1/\xi}\big)$. 
        \item $\max_j\big\|[\cH_{ff^{\prime}}^{-1}\cH_{fu^{\prime}}]_{[P_j,]}\big\|=O_p\big((nq)^{1/\xi}n^{-1/2}\big)$.
        \item $\max_i\big\|[\cH_{uu^{\prime}}^{-1}\cH_{uf^{\prime}}]_{[K_i,]}\big\|=O_p\big((p^{1/2}+(nq)^{1/\xi})q^{-1/2}\big)$.
        \item $\|[\bLambda_{1}]_{[P_j,\cdot]}\|=O(q^{-1})$ and $\|\bLambda_{1}\|=O(q^{-1/2})$.
        \item $ \|[\bLambda_{2}]_{[K_i,\cdot]}\|=O(p^{1/2}n^{-1})$ and $\|\bLambda_{2}\|=O(n^{-1/2})$.
    \end{enumerate}
    Here we omitted the dependence on $\bphi^0$, also in the following proof.
\end{lemma}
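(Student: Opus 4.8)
The plan is to analyze the Hessian at $\bphi^0$ block by block via the decomposition $\cH(\bphi^0)=\Hb_L(\bphi^0)+\Hb_R(\bphi^0)+\Hb_P(\bphi^0)$ from the Preliminaries, where $\Hb_{Lff^{\prime}}$ and $\Hb_{Luu^{\prime}}$ are block-diagonal (over $j$ and $i$), $\Hb_R$ carries only the off-diagonal blocks $\Hb_{Rfu^{\prime}},\Hb_{Ruf^{\prime}}$, and $\Hb_P(\bphi^0)=\bLambda^0(\bLambda^0)^{\T}$ has rank $K^2+Kp$; consequently $\cH_{ff^{\prime}}=\Hb_{Lff^{\prime}}+\bLambda_1\bLambda_1^{\T}$, $\cH_{uu^{\prime}}=\Hb_{Luu^{\prime}}+\bLambda_2\bLambda_2^{\T}$, and $\cH_{uf^{\prime}}=\Hb_{Luf^{\prime}}+\Hb_{Ruf^{\prime}}+\bLambda_2\bLambda_1^{\T}$. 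I would first dispatch (vii)--(viii) directly from the closed forms: every column of $\bLambda_1$ (resp.\ $\bLambda_2$) is $\sqrt{c}\,q^{-1}$ (resp.\ $\sqrt{c}\,n^{-1}$) times a vector built from a column of $\bGamma^0$, $\Ub^0$ or $\Xb$, so with $\|\bGamma^0_{[,r]}\|=O(\sqrt q)$, $\|\Ub^0_{[,r]}\|=O(\sqrt n)$, $\max_i\|\bX_i\|_\infty\le M$ and the column count $K^2+Kp=O(p)$, the per-block and operator-norm estimates follow by inspection.

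For (i)--(ii) I would first show the block-diagonal pieces are uniformly positive definite: each diagonal block of $\Hb_{Lff^{\prime}}$ equals $q^{-1}\big(n^{-1}\sum_i(-l_{ij}^{\prime\prime})\bZ_i^0(\bZ_i^0)^{\T}\big)\succeq b_L\,q^{-1}\big(n^{-1}\sum_i\bZ_i^0(\bZ_i^0)^{\T}\big)$, whose smallest eigenvalue is bounded away from $0$ w.h.p.\ since $n^{-1}\sum_i\bZ_i^0(\bZ_i^0)^{\T}\to\bSigma_z^0\succ0$ (Assumption~\ref{assumption: psd covariance} and the Remark after the proof strategy); hence $\|\Hb_{Lff^{\prime}}^{-1}\|=\max_j\|[\Hb_{Lff^{\prime}}]_{[P_j,P_j]}^{-1}\|=O_p(q)$, and analogously $\|\Hb_{Luu^{\prime}}^{-1}\|=O_p(n)$ from $q^{-1}\sum_j\bgamma_j^0(\bgamma_j^0)^{\T}\to\bSigma_\gamma^0\succ0$. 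Since $\bLambda_1\bLambda_1^{\T}\succeq0$, the PSD ordering $\cH_{ff^{\prime}}\succeq\Hb_{Lff^{\prime}}$ gives $\|\cH_{ff^{\prime}}^{-1}\|=O_p(q)$, and likewise $\|\cH_{uu^{\prime}}^{-1}\|=O_p(n)$. For the $\|\cdot\|_1$ bounds the block-diagonal inverse already contributes $\max_j\sqrt{K+p}\,\|[\Hb_{Lff^{\prime}}]_{[P_j,P_j]}^{-1}\|=O_p(\sqrt p\,q)$, and for the penalty correction I would use Woodbury, $\cH_{ff^{\prime}}^{-1}=\Hb_{Lff^{\prime}}^{-1}-\Hb_{Lff^{\prime}}^{-1}\bLambda_1(\Ib+\bLambda_1^{\T}\Hb_{Lff^{\prime}}^{-1}\bLambda_1)^{-1}\bLambda_1^{\T}\Hb_{Lff^{\prime}}^{-1}$, writing each correction column as $\Hb_{Lff^{\prime}}^{-1}\bLambda_1\bv_t$ with $\|\bv_t\|=O_p(\sqrt p)$.

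The crucial step is the $\|\cdot\|_1$ aggregation: bounding a column's $\ell_1$-norm by $\sum_j\|[\Hb_{Lff^{\prime}}^{-1}]_{[P_j,P_j]}[\bLambda_1]_{[P_j,\cdot]}\bv_t\|_1\le\sqrt{K+p}\,\|\Hb_{Lff^{\prime}}^{-1}\|\sum_j\|[\bLambda_1]_{[P_j,\cdot]}\bv_t\|$ and then applying Cauchy--Schwarz $\sum_j\|[\bLambda_1]_{[P_j,\cdot]}\bv_t\|\le\sqrt q\,\|\bLambda_1\bv_t\|\le\sqrt q\,\|\bLambda_1\|\|\bv_t\|$, which converts the naive per-block product (losing an extra factor of $p$) into the sharp $O_p(\sqrt p\,q)$; the same device through $\bLambda_2$ yields $\|\cH_{uu^{\prime}}^{-1}\|_1=O_p(\sqrt p\,n)$. \emph{I expect this Cauchy--Schwarz/operator-norm aggregation to be the main obstacle}, since it is precisely where bounding blocks individually fails and where one must replace $\sum_i\|[\bLambda_2]_{[K_i,\cdot]}\|$ by $\sqrt n\,\|\bLambda_2\|_F$.

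For the off-diagonal estimates (iii)--(iv) I would bound the three constituents separately, tracking where $p^{1/2}$ can appear: in a block-row of $\cH_{uf^{\prime}}$ at a fixed $u$-index $i$ the multiplier $\bZ_i^0$ (norm $O(\sqrt p)$) is held fixed while the sum over $j$ is controlled by $\|\bGamma^0\|=O(\sqrt q)$, producing the $p^{1/2}$ in (iii); in a block-row of $\cH_{fu^{\prime}}$ at a fixed $f$-index $j$ the multiplier is $\bgamma_j^0$ (norm $O(1)$) and the sum over $i$ is controlled by $\|\Zb^0\|=O(\sqrt n)$, bounded because $\lambda_{\max}(\bSigma_z^0)=O(1)$, so no $p^{1/2}$ enters (iv). The $\Hb_R$-contributions use $\max_{i,j}|l_{ij}^{\prime}|=O_p((nq)^{1/\xi})$ (finite $\xi$-th moment plus a union bound over $nq$ terms) for maxima over rows and the law-of-large-numbers control of $\sum_i(l_{ij}^{\prime})^2$, $\sum_j(l_{ij}^{\prime})^2$ for the aggregate, while $\bLambda_2\bLambda_1^{\T}$ is handled by (vii)--(viii). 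The spectral norms follow from the bilinear bound $|\bx^{\T}\Hb_{Lfu^{\prime}}\by|\le\frac{b_U}{nq}\,O(\sqrt{nq})\,\|\bx\|\|\by\|$ (Cauchy--Schwarz with $\sum_{ij}(\bx_j^{\T}\bZ_i^0)^2\le\|\textstyle\sum_i\bZ_i^0(\bZ_i^0)^{\T}\|\,\|\bx\|^2$ and symmetrically), from $\|\Hb_{Rfu^{\prime}}\|\le(nq)^{-1}\|\Lb\|_F=O_p((nq)^{-1/2})$ with $\Lb=(l_{ij}^{\prime})$, and from $\|\bLambda_1\bLambda_2^{\T}\|\le\|\bLambda_1\|\|\bLambda_2\|=O((nq)^{-1/2})$. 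Finally (v)--(vi) follow by composition: $[\cH_{ff^{\prime}}^{-1}\cH_{fu^{\prime}}]_{[P_j,]}$ is, to leading order, $[\Hb_{Lff^{\prime}}]_{[P_j,P_j]}^{-1}[\cH_{fu^{\prime}}]_{[P_j,]}$ of norm $O_p(q)\cdot\max_j\|[\cH_{fu^{\prime}}]_{[P_j,]}\|$, the penalty correction contributing no larger order, and symmetrically for (vi) using $\|\Hb_{Luu^{\prime}}^{-1}\|=O_p(n)$ together with (iii).
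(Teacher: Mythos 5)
Your route is essentially the paper's: (vii)--(viii) by direct inspection of $\bLambda_1,\bLambda_2$; (i)--(ii) via block-diagonality of $\Hb_{Lff'},\Hb_{Luu'}$, uniform positive definiteness of $n^{-1}\sum_i\bZ_i^0(\bZ_i^0)^{\T}$ and $q^{-1}\sum_j\bgamma_j^0(\bgamma_j^0)^{\T}$, and Woodbury for the penalty correction; (iii)--(iv) by splitting $\cH_{uf'}$ into $\Hb_{Luf'}+\Hb_{Ruf'}+\bLambda_2\bLambda_1^{\T}$ with Cauchy--Schwarz bilinear bounds and $\max_{i,j}|l_{ij}'|=O_p((nq)^{1/\xi})$; (v)--(vi) by composing block bounds. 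Two of your choices are mild (valid) simplifications over the paper: the Loewner-ordering argument $\cH_{ff'}\succeq\Hb_{Lff'}\succ 0$ for $\|\cH_{ff'}^{-1}\|=O_p(q)$ replaces the paper's Woodbury estimate, and your per-block $\ell_1\!\to\!\ell_2$ conversion followed by Cauchy--Schwarz over blocks is equivalent to the paper's single global conversion (both yield the factor $\sqrt{q(K+p)}\asymp\sqrt{pq}$).

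There is, however, an arithmetic inconsistency precisely in the step you flag as crucial. For the $ff'$ block you write each correction column as $\Hb_{Lff'}^{-1}\bLambda_1\bv_t$ with $\|\bv_t\|=O_p(\sqrt p)$; plugging that into your own chain gives $\sqrt{K+p}\,\|\Hb_{Lff'}^{-1}\|\cdot\sqrt q\,\|\bLambda_1\|\cdot\|\bv_t\| \asymp \sqrt p\cdot q\cdot\sqrt q\cdot q^{-1/2}\cdot\sqrt p = O_p(pq)$, not the claimed $O_p(\sqrt p\,q)$. The bound $\|\bv_t\|=O_p(\sqrt p)$ is wrong for this block: since $\bLambda_1^{\T}\Hb_{Lff'}^{-1}\bLambda_1\succeq 0$ one has $\|(\Ib_{K^2}+\bLambda_1^{\T}\Hb_{Lff'}^{-1}\bLambda_1)^{-1}\|\le 1$, and because $\Hb_{Lff'}^{-1}$ is block-diagonal while the standard basis vector $\eb_t$ lives in a single block $P_j$, $\|\bv_t\|\le\|[\bLambda_1]_{[P_j,]}\|\,\|[\Hb_{Lff'}^{-1}]_{[P_j,P_j]}\|=O(q^{-1})\cdot O_p(q)=O_p(1)$. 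This per-block cancellation is exactly the ingredient the paper uses. The $O_p(\sqrt p)$ bound you wrote down belongs instead to the $uu'$ block, where $\|[\bLambda_2]_{[K_i,]}\|\,\|[\Hb_{Luu'}^{-1}]_{[K_i,K_i]}\|=O(p^{1/2}n^{-1})\cdot O_p(n)=O_p(\sqrt p)$ but the block-dimension factor is only $\sqrt K=O(1)$. In each block exactly one factor of $\sqrt p$ survives (from the block dimension $K+p$ in the $ff'$ case, from $[\bLambda_2]_{[K_i,]}$ in the $uu'$ case); your chain for $ff'$ counts it twice. Once this is corrected, your argument delivers all the stated bounds.
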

\begin{proof}
By the definition of $\bLambda_1$ in section \ref{sec:prelim}, it is obvious that
\begin{equation}\|[\bLambda_{1}]_{[P_j,\cdot]}\|=O(q^{-1}),\quad\|\bLambda_{1}^\T \|_1=O(q^{-1}),\quad\|\bLambda_{1}\|_1=O(1),\quad\|\bLambda_{1}\|=O(q^{-1/2});\label{eq_estimate_lambda1}
\end{equation}
which implies (vii).
For $\bLambda_2$, we verify that $\|[\bLambda_2]_{[K_i,1:K^2]}\|=O(n^{-1})$ and $\|[\bLambda_2]_{[K_i,1:K^2+1:K^2+Kp]}\|$ $=O(p^{1/2}n^{-1})$. Also note that \begin{equation*}\|[\bLambda_2]_{[,K^2+1:K^2+Kp]}\|=\lambda_{\max}\big(n^{-1}\bX^\T\bX\otimes I_K\big)=O(1).\end{equation*}Then the bounds for $\bLambda_2 $ are given as
\begin{equation}
    \|[\bLambda_{2}]_{[K_i,\cdot]}\|=O_p(p^{1/2}n^{-1}),\quad\|\bLambda_{2}^\T \|_1=O_p(pn^{-1}),\quad\|\bLambda_{2}\|_1=O_p(1),\quad\|\bLambda_{2}\|=O_p(n^{-1/2}),\label{eq_estimate_lambda2}
    \end{equation}
    which implies  and (viii).

    Next for (i),
    because $\Hb_{L f f^{\prime}}$ is block diagonal, by Assumptions~\ref{assumption: psd covariance}--\ref{assumption:smoothness}, we have 
\begin{equation}
        \|\Hb_{Lff^\prime}^{-1}\|= \max_j\Big\|\big[-(nq)^{-1}\sum_{i=1}^nl_{ij}^{\prime\prime}\bZ_i\bZ_i^\T\big]^{-1}\Big\|
        \le \frac{q}{b_L}\max_j\big\|n^{-1}\sum_{i=1}^n\bZ_i\bZ_i^\T\big\|^{-1}=
        O_p(q),\label{eq:lemma11eq1}
    \end{equation}
and also for the $l_1$-norm, we have
    \begin{equation}
\big\|\Hb_{Lff^{\prime}}^{-1}\big\|_1=\big\|\big((nq)^{-1}\sum_{i=1}^nl_{ij}^{\prime\prime}\bZ_i\bZ_i^\T\big)^{-1}\big\|_1\le \sqrt{q}\big\|\big((nq)^{-1}\sum_{i=1}^nl_{ij}^{\prime\prime}\bZ_i\bZ_i^\T\big)^{-1}\big\|=O_p(\sqrt{p}q).\label{eq:lemma11eq2}
    \end{equation}
 Since $\bLambda_{1}^\T \Hb_{Lff^\prime}^{-1}\bLambda_{1}$ is positive semi-definite and $\|\bLambda_{1}^\T \Hb_{Lff^\prime}^{-1}\bLambda_{1}\|\le\|\bLambda_{1}\|^2\|\Hb_{Lff^\prime}^{-1}\|\le O(1)$,  the $l_1$-norm for $(\Ib_{K^2}+\bLambda_{1}^\T \Hb_{Lff^\prime}^{-1}\bLambda_{1})^{-1}$ can be bounded by
    \begin{equation*}
        \big\|(\Ib_{K^2}+\bLambda_{1}^\T \Hb_{Lff^\prime}^{-1}\bLambda_{1})^{-1}\big\|_1\le K\big\|(I_{K(K+1)/2}+\bLambda_{1}^\T \Hb_{Lff^\prime}^{-1}\bLambda_{1})^{-1}\big\|\le O_p(1).
    \end{equation*} Next for $\cH_{ff^\prime}= \Hb_{Lff^\prime}+\bLambda_{1}\bLambda_{1}^\T$, we use Woodbury identity to give
    \begin{align}
        \Big\|\cH_{ff^\prime}^{-1}\Big\|=&\Big\| \Hb_{Lff^\prime}^{-1}- \Hb_{Lff^\prime}^{-1}\bLambda_{1}\big(\Ib_{K^2}+\bLambda_{1}^\T \Hb_{Lff^\prime}^{-1}\bLambda_{1}\big)^{-1}\bLambda_{1}^\T  \Hb_{Lff^\prime}^{-1}\Big\|\nonumber\\=&\big\| \Hb_{Lff^\prime}^{-1}\big\|+\big\| \Hb_{Lff^\prime}^{-1}\big\|\big\|\bLambda_{1}\big\|\big\|\big(\Ib_{K^2}+\bLambda_{1}^\T \Hb_{Lff^\prime}^{-1}\bLambda_{1}\big)^{-1}\big\|\big\|\bLambda_{1}^\T \big\|\big\| \Hb_{Lff^\prime}^{-1}\big\|\nonumber\\\le& O_p(q),\label{eq:lemma11eq3}
    \end{align}
    For the $l_1$ estimate of $\cH_{ff^{\prime}}^{-1}$, by \eqref{eq_estimate_lambda1},
    \begin{align*}
        \Big\|\Hb_{Lff^\prime}^{-1}&\bLambda_{1}\big(\Ib_{K^2}+\bLambda_{1}^\T \Hb_{Lff^\prime}^{-1}\bLambda_{1}\big)^{-1}\bLambda_{1}^\T  \Hb_{Lff^\prime}^{-1}\Big\|_{1}\\&=\max_j\Big\|\Hb_{Lff^\prime}^{-1}\bLambda_1\big(\Ib_{K^2}+\bLambda_{1}^\T \Hb_{Lff^\prime}^{-1}\bLambda_{1}\big)^{-1}[\bLambda_{1}]_{[P_j,]}^\T  [\Hb_{Lff^\prime}^{-1}]_{[P_j,P_j]}\Big\|_{1}\\
        &\le \sqrt{pq}\max_j\big\|\Hb_{Lff^\prime}^{-1}\big\|\big\|\bLambda_{1}\big\|\big\|\big(\Ib_{K^2}+\bLambda_{1}^\T \Hb_{Lff^\prime}^{-1}\bLambda_{1}\big)^{-1}\big\|\big\|[\bLambda_{1}]_{[P_j,]}^\T \big\|\big\| [\Hb_{Lff^\prime}^{-1}]_{[P_j,P_j]}\big\|\\&=O_p(\sqrt{p}q).
    \end{align*}
    Together with $\|\Hb_{Lff^{\prime}}^{-1}\|_1=O_p(\sqrt{p}q)$ we have $
        \big\|\cH_{Lff^{\prime}}^{-1}\big\|_1=O_p(\sqrt{p}q)$.
    
    For (ii), similarly by (ii) of Assumption~\ref{assumption: psd covariance} 
and Assumption~\ref{assumption:smoothness}, we have the bound for the $l_2$- and $l_1$-norm of $\Hb_{Luu^{\prime}}$ given as
    \begin{equation*}
        \big\|\Hb_{Luu^\prime}^{-1}\big\|\le\max_i\big\|(nq)^{-1}\big(\sum_{j=1}^ql_{ij}^{\prime\prime}(w_{ij})\bgamma_j\bgamma_j^{\T}\big)^{-1}\big\|\le\frac{\sqrt{K}nq}{b_L}\Big\|\big(\sum_{j=1}^q\bgamma_j\bgamma_j^{\T}\big)^{-1}\Big\|=O_p(n),
    \end{equation*}
    and 
    \begin{equation*}
        \big\|\Hb_{Luu^\prime}^{-1}\big\|_1\le\sqrt{K}\max_i\big\|(nq)^{-1}\big(\sum_{j=1}^ql_{ij}^{\prime\prime}\bgamma_j\bgamma_j^{\T}\big)^{-1}\big\|\le\frac{\sqrt{K}nq}{b_L}\Big\|\big(\sum_{j=1}^q\bgamma_j\bgamma_j^{\T}\big)^{-1}\Big\|=O_p(n).
    \end{equation*}
    We get $\cH_{uu^\prime}^{-1}=\Hb_{Luu^\prime}^{-1}-\Hb_{Luu^\prime}^{-1}\bLambda_2(\Ib_{K^2+Kp}+\bLambda_2^\T \Hb_{Luu^\prime}^{-1}\bLambda_2)^{-1}\bLambda_2^\T \Hb_{Luu^\prime}^{-1}$ by applying Woodbury identity again.
        By \eqref{eq_estimate_lambda2}, we know $\|\bLambda_{2}^\T \Hb_{Luu^\prime}^{-1}\bLambda_{2}\|\le \|\Hb_{Luu^\prime}^{-1}\|\|\blambda_2\|^2\le O(1)$. Therefore we have estimate $
        \big\|(\Ib_{K^2+Kp}+\bLambda_{2}^\T \Hb_{Luu^\prime}^{-1}\bLambda_{2})^{-1}\big\|=O_p(1)$, 
    which implies 
    \begin{align*}
        &\Big\|\Hb_{Luu^\prime}^{-1}\bLambda_{2}\big(\Ib_{K^2+Kp}+\bLambda_{2}^\T \Hb_{Luu^\prime}^{-1}\bLambda_{2}\big)^{-1}\bLambda_{2}^\T  \Hb_{Luu^\prime}^{-1}\Big\|\\&\le 
        \big\|\Hb_{Luu^\prime}^{-1}\big\|\big\|\bLambda_{2}\big\|\big\|\big(-\Ib_{K^2+Kp}+\bLambda_{2}^\T \Hb_{Luu^\prime}^{-1}\bLambda_{2}\big)^{-1}\big\|\big\|\bLambda_{2}^\T\big\|\big\|  \Hb_{Luu^\prime}^{-1}\big\|\\&=O_p(n),
    \end{align*}
    and
    \begin{align*}
        &\Big\|\Hb_{Luu^\prime}^{-1}\bLambda_{2}\big(\Ib_{K^2+Kp}+\bLambda_{2}^\T \Hb_{Luu^\prime}^{-1}\bLambda_{2}\big)^{-1}\bLambda_{2}^\T  \Hb_{Luu^\prime}^{-1}\Big\|_1\\&\le 
        \max_i\Big\|\Hb_{Luu^\prime}^{-1}\bLambda_{2}\big(\Ib_{K^2+Kp}+\bLambda_{2}^\T \Hb_{Luu^\prime}^{-1}\bLambda_{2}\big)^{-1}[\bLambda_{2}]_{[K_i,]}^\T  [\Hb_{Luu^\prime}^{-1}]_{[K_i,K_i]}\Big\|_1
        \\&\le 
        \sqrt{n}\max_i\big\|\Hb_{Luu^\prime}^{-1}\big\|\big\|\bLambda_{2}\big\|\big\|\big(\Ib_{K^2+Kp}+\bLambda_{2}^\T \Hb_{Luu^\prime}^{-1}\bLambda_{2}\big)^{-1}\big\|\big\|[\bLambda_{2}]_{[K_i,]}^\T \big\|\big\| [\Hb_{Luu^\prime}^{-1}]_{[K_i,K_i]}\big\|\\&=O_p(\sqrt{p}n).
    \end{align*}
For $\Hb_{Luu^{\prime}}$, similarly with (i), by Assumption~\ref{assumption: psd covariance} we have
    \begin{equation*}
    \big\|\Hb_{Luu^\prime}^{-1}\big\|_1\le\sqrt{K}\max_i\big\|(nq)^{-1}(\sum_{j=1}^ql_{ij}^{\prime\prime}\bgamma_j\bgamma_j^{\T})^{-1}\big\|\le\frac{\sqrt{K}nq}{b_L}\Big\|(\sum_{j=1}^q\bgamma_j\bgamma_j^{\T})^{-1}\Big\|=O_p(n).\label{B30}
    \end{equation*}
    Therefore the $l_2$-norm of $\cH_{uu^{\prime}}^{-1}$ can be bounded by
    \begin{equation*}
        \Big\|\cH_{uu^\prime}^{-1}\Big\|\le\Big\| \Hb_{Luu^\prime}^{-1}\Big\|+\Big\| \Hb_{Luu^\prime}^{-1}\bLambda_{2}\big(\Ib_{K^2}+\bLambda_{2}^\T H_{Luu^\prime}^{-1}\bLambda_{2}\big)^{-1}\bLambda_{2}^\T  \Hb_{Luu^\prime}^{-1}\Big\|=O_p({n}),
    \end{equation*}
    and its $l_1$-norm can be bounded by
    \begin{equation*}
        \Big\|\cH_{uu^\prime}^{-1}\Big\|_1\le\Big\| \Hb_{Luu^\prime}^{-1}\Big\|_1+\Big\| \Hb_{Luu^\prime}^{-1}\bLambda_{2}\big(\Ib_{K^2}+\bLambda_{2}^\T H_{Luu^\prime}^{-1}\bLambda_{2}\big)^{-1}\bLambda_{2}^\T  \Hb_{Luu^\prime}^{-1}\Big\|_1=O_p(\sqrt{p}{n}).
    \end{equation*}
Next we prove (iii) and (iv). For the off-diagonal blocks, note 
\begin{equation*}
    [\cH_{fu^{\prime}}]_{[P_j,K_i]}=-(nq)^{-1}l_{ij}^{\prime\prime}\bgamma_j\bZ_i^\T-(nq)^{-1}l_{ij}^{\prime}\begin{pmatrix}\Ib_k\\\zero_{(K+p)\times K}\end{pmatrix}+[\bLambda_1]_{[P_j,]}[\bLambda_2]_{[,K_i]},
\end{equation*}
For the first term, we have from Assumption~\ref{assumption:smoothness} that
\begin{equation*}
    \max_j\big\|[\Hb_{Lfu^{\prime}}]_{[P_j,]}\big\|\le \frac{b_U}{nq}\Big[\lambda_{\max}\big(\sum_{i=1}^n\bZ_i\bZ_i^\T\big)\Big]^{1/2}\max_j\|\bgamma_j\|^2=O_p(n^{-1/2}q^{-1}),
\end{equation*}
which also implies $\|\Hb_{Lfu{\prime}}\|=O_p((nq)^{-1/2})$. For the second term, by Assumption~\ref{assumption:smoothness} we have $\max_i|l_{ij}^{\prime}(w_{ij})|=O_p(n^{1/\xi})$, $\max_j|l_{ij}^{\prime}(w_{ij})|=O_p(q^{1/\xi})$, $\max_{i,j}|l_{ij}^{\prime}(w_{ij})|=O_p((nq)^{1/\xi})$. Together with \eqref{eq_estimate_lambda1} and \eqref{eq_estimate_lambda2},
 { we have $\|[\cH_{fu^{\prime}}]_{[P_j,]}\|_1=O_p((p+(nq)^{1/\xi})n^{-1}q^{-1})$, \\ $\max_j\|[\cH_{fu^{\prime}}]_{[P_j,]}\|=O_p((nq)^{1/\xi}n^{-1/2}q^{-1})$, $\|[\cH_{uf^{\prime}}]_{[K_i,]}\|_1=O_p((nq)^{1/\xi}n^{-1}q^{-1})$, \\$\max_i\|[\cH_{uf^{\prime}}]_{[K_i,]}\|=O_p(\sqrt{p}+(nq)^{1/\xi})n^{-1}q^{-1/2})$} and $\|\cH_{uf^{\prime}}\|=\|\cH_{fu^{\prime}}\|=O_p\big((nq)^{-1/2+1/\xi}\big)$.
Finally (v) and (vi) are obtained by
\begin{align*}
    \max_{j}\big\|[\cH_{ff^{\prime}}^{-1}\cH_{fu^{\prime}}]_{[P_{j},]}\big\|\le&\max_j\big\|[\Hb_{Lff^{\prime}}^{-1}\cH_{fu^{\prime}}]_{[P_{j},]}\big\|\\&+\max_j\big\|[\Hb_{Lff^{\prime}}^{-1}\bLambda_1\big(\Ib_{K^2}+\bLambda_1^\T\Hb_{Lff^{\prime}}\bLambda_1\big)^{-1}\bLambda_1^\T\Hb_{Lff^{\prime}}^{-1}\cH_{fu^{\prime}}]_{[P_{j},]}\big\|\\
    \le &\max_j\big\|[\Hb_{Lff^{\prime}}^{-1}]_{[P_j,P_j]}[\cH_{fu^{\prime}}]_{[P_{j},]}\big\|\\&+\max_j  \big\|[\Hb_{Lff^{\prime}}^{-1}]_{[P_j,P_j]}[\bLambda_1]_{[P_j,]}\big(\Ib_{K^2}+\bLambda_1^\T\Hb_{Lff^{\prime}}\bLambda_1\big)^{-1}\bLambda_1^\T\Hb_{Lff^{\prime}}^{-1}\cH_{fu^{\prime}}\big\|\\
    =&O_p\big((nq)^{1/ \xi}n^{-1/2}\big),
\end{align*}
from (i), (iv) and (vi) and 
\begin{align*}
    &\,\max_{i}\big\|[\cH_{uu^{\prime}}^{-1}\cH_{uf^{\prime}}]_{[K_{i},]}\big\|\\\le&\max_j\big\|[\Hb_{Luu^{\prime}}^{-1}\cH_{uf^{\prime}}]_{[K_{i},]}\big\|\\&+\max_j\big\|[\Hb_{Luu^{\prime}}^{-1}\bLambda_2\big(\Ib_{K^2+Kp}+\bLambda_2^\T\Hb_{Luu^{\prime}}\bLambda_2\big)^{-1}\bLambda_2^\T\Hb_{Luu^{\prime}}^{-1}\cH_{uf^{\prime}}]_{[K_{i},]}\big\|\\
    \le &\max_j\big\|[\Hb_{Luu^{\prime}}^{-1}]_{[K_i,K_i]}[\cH_{uf^{\prime}}]_{[K_{i},]}\big\|\\&+\max_j  \big\|[\Hb_{Luu^{\prime}}^{-1}]_{[K_i,K_i]}[\bLambda_2]_{[K_i,]}\big(\Ib_{K^2+Kp}+\bLambda_2^\T\Hb_{Luu^{\prime}}\bLambda_2\big)^{-1}\bLambda_2^\T\Hb_{Luu^{\prime}}^{-1}\cH_{uf^{\prime}}\big\|\\
    =&O_p\big((\sqrt{p}+(nq)^{1/ \xi})q^{-1/2}\big),
\end{align*}
from (ii), (iii), and (vii), respectively.
\end{proof}

\subsection{Proof of Lemma~\ref{lemma: colsum norm cH tilde}}
\begin{lemma}
\label{lemma: colsum norm cH tilde}
    Recall that we have defined that $\tilde\cM(\bphi)=[\int_0^1\cH\big(\bphi^0+s({\bphi}-\bphi^0)\big)ds]^{-1}$. Use the notation in the proof of Lemma~\ref{lemma: first order condition}, denote $\tilde\cM=\tilde\cM(\tilde\bphi)$ and express it in the following block form:
\begin{equation*}\tilde{\cM}=\begin{bmatrix}\tilde{\cM}_{ff^\prime}&\tilde{\cM}_{f u^\prime}\\\tilde{\cM}_{uf^\prime}&\tilde{\cM}_{uu^\prime}\end{bmatrix}.\end{equation*}
Under Assumptions~\ref{assumption: psd covariance}--\ref{assumption:asymptotic normality},
    there exist some $\epsilon>0$ and $m$ such that\begin{equation}
    \Pr\Big\{\min_{{\bphi}\in\cB(D),\|\Db_{q}^{-1/2}({\bphi}-{\bphi}^0)\|\le m}\lambda_{\min}\big(\Db_{q}^{1/2}\tilde{\cM}({\bphi})\Db_{q}^{1/2}\big)\geq\epsilon\Big\}\to1\label{eq_sub_local_convex}.
\end{equation}
    and each blocks of $l_{\infty}$-norm of $\tilde{\cM}$ can be estimated as follows:
  \begin{equation}
      O_p\begin{pmatrix}
      \sqrt pq(nq)^{2/\xi}&\sqrt pn(nq)^{3/\xi}\\ pq(nq)^{3/\xi}&\sqrt pn(nq)^{2/\xi}
    \end{pmatrix}. \label{eq_hessian_infinitybound}
  \end{equation}
\end{lemma}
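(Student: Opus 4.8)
The plan is to establish the two claims in Lemma~\ref{lemma: colsum norm cH tilde} separately, beginning with the local convexity bound~\eqref{eq_sub_local_convex} and then deriving the block-wise $\ell_\infty$ estimates~\eqref{eq_hessian_infinitybound}. For the convexity statement, the essential observation is that $\tilde\cM(\bphi)^{-1} = \int_0^1 \cH\big(\bphi^0 + s(\bphi-\bphi^0)\big)\,ds$ is an average of Hessian matrices along the segment from $\bphi^0$ to $\bphi$. Since Lemma~\ref{prop:min eigenvalue} already guarantees that $\lambda_{\min}\big(\Db_q^{1/2}\cH(\bphi)\Db_q^{1/2}\big) \ge \epsilon$ uniformly over the restricted region $\{\bphi \in \cB(D): \sqrt p\,\|\Db_q^{-1/2}(\bphi-\bphi^0)\| \le m\}$, and since the whole segment $\bphi^0 + s(\bphi-\bphi^0)$ stays inside this convex region whenever its endpoint does, I would integrate the quadratic form: for any unit vector $\bz$,
\begin{equation*}
\bz^\T \Db_q^{1/2}\tilde\cM^{-1}\Db_q^{1/2}\bz = \int_0^1 \bz^\T \Db_q^{1/2}\cH\big(\bphi^0+s(\bphi-\bphi^0)\big)\Db_q^{1/2}\bz\,ds \ge \epsilon.
\end{equation*}
This shows $\lambda_{\min}\big(\Db_q^{1/2}\tilde\cM^{-1}\Db_q^{1/2}\big)\ge\epsilon$, and inverting yields $\lambda_{\max}\big(\Db_q^{-1/2}\tilde\cM\,\Db_q^{-1/2}\big)\le\epsilon^{-1}$, i.e. the scaled averaged inverse Hessian is spectrally bounded on the same event whose probability tends to one. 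The only subtlety is that Lemma~\ref{prop:min eigenvalue} is stated with radius constraint $\sqrt p\,\|\Db_q^{-1/2}(\bphi-\bphi^0)\|\le m$ rather than $\|\Db_q^{-1/2}(\bphi-\bphi^0)\|\le m$; I would reconcile this by using the (stronger) radius in the hypothesis of~\eqref{eq_sub_local_convex} or noting that convexity of the region is preserved under either scaling, so the segment argument goes through verbatim.

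For the block-wise $\ell_\infty$ bounds in~\eqref{eq_hessian_infinitybound}, the strategy is to express each block of $\tilde\cM$ through the Schur complement / block-inverse formula and then control the maximum absolute row sums using the estimates already collected in Lemma~\ref{lemma: l1 estimates}. Writing $\tilde\cM^{-1} = \begin{bmatrix}\cH_{ff'}&\cH_{fu'}\\\cH_{uf'}&\cH_{uu'}\end{bmatrix}$ (all evaluated at the averaged argument), the standard block-inversion identities give $\tilde\cM_{ff'} = (\cH_{ff'}-\cH_{fu'}\cH_{uu'}^{-1}\cH_{uf'})^{-1}$, $\tilde\cM_{uu'}=(\cH_{uu'}-\cH_{uf'}\cH_{ff'}^{-1}\cH_{fu'})^{-1}$, and the off-diagonal blocks $\tilde\cM_{fu'}=-\tilde\cM_{ff'}\cH_{fu'}\cH_{uu'}^{-1}$, $\tilde\cM_{uf'}=-\tilde\cM_{uu'}\cH_{uf'}\cH_{ff'}^{-1}$. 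I would bound the $\ell_{1,\infty}$ norm (maximum absolute row sum) of each block by submultiplicativity, $\|AB\|_{1,\infty}\le\|A\|_{1,\infty}\|B\|_{1,\infty}$, feeding in the $\ell_1$-norm bounds $\|\cH_{ff'}^{-1}\|_1=O_p(\sqrt p\,q)$, $\|\cH_{uu'}^{-1}\|_1=O_p(\sqrt p\,n)$, $\|\cH_{uf'}\|_1=O_p((nq)^{1/\xi}q^{-1})$, $\|\cH_{fu'}\|_1=O_p((p+(nq)^{1/\xi})n^{-1})$, together with the spectral bounds $\|\cH_{uf'}\|=\|\cH_{fu'}\|=O_p((nq)^{-1/2+1/\xi})$ from Lemma~\ref{lemma: l1 estimates}. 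The diagonal-block inverses $\tilde\cM_{ff'}$ and $\tilde\cM_{uu'}$ are handled by first showing that the Schur-complement corrections $\cH_{fu'}\cH_{uu'}^{-1}\cH_{uf'}$ are lower-order perturbations of $\cH_{ff'}$ (and symmetrically), so that their inverses inherit the same $\ell_1$-norm order $O_p(\sqrt p\,q)$ and $O_p(\sqrt p\,n)$ as $\cH_{ff'}^{-1}$ and $\cH_{uu'}^{-1}$; here the spectral smallness of the cross blocks is what makes the Neumann/Woodbury expansion converge.

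Assembling these pieces gives the four stated orders: the $(f,f')$ block of order $\sqrt p\,q\,(nq)^{2/\xi}$, the $(u,u')$ block of order $\sqrt p\,n\,(nq)^{2/\xi}$, and the off-diagonal blocks of orders $\sqrt p\,n\,(nq)^{3/\xi}$ and $p q\,(nq)^{3/\xi}$ respectively, where the extra factors of $(nq)^{1/\xi}$ accumulate from each appearance of a cross block $\cH_{fu'}$ or $\cH_{uf'}$ and the extra $\sqrt p$ factors track the dependence of the $\ell_1$ norms on the diverging covariate dimension. The main obstacle I anticipate is the careful bookkeeping of the $\ell_{1,\infty}$ versus $\ell_1$ versus spectral norms across the products: the bound $\|AB\|_{1,\infty}\le\|A\|_{1,\infty}\|B\|_{1,\infty}$ is not always tight, and to recover the advertised orders one must sometimes bound a product by combining a row-sum estimate of one factor with a spectral estimate of another (as in parts (v)--(vi) of Lemma~\ref{lemma: l1 estimates}), choosing the norm split that yields the smallest power of $(nq)^{1/\xi}$ and $p$. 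Keeping the Schur-complement perturbation genuinely lower-order—so that the leading behavior of $\tilde\cM_{ff'}$ and $\tilde\cM_{uu'}$ is governed by $\cH_{ff'}^{-1}$ and $\cH_{uu'}^{-1}$ alone—will require verifying that $\|\cH_{ff'}^{-1}\cH_{fu'}\cH_{uu'}^{-1}\cH_{uf'}\|$ is $o_p(1)$ under the scaling assumptions, which is where the interplay between Assumption~\ref{assumption: Scaling} and the moment exponent $\xi$ enters most delicately.
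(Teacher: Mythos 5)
Your treatment of the first claim is correct and is essentially the paper's own one-line argument: the averaged matrix $\int_0^1\cH(\bphi^0+s(\bphi-\bphi^0))\,ds$ inherits the uniform eigenvalue bound of Lemma~\ref{prop:min eigenvalue} because the whole segment stays inside the (convex) restricted region, and the $\sqrt p$-versus-no-$\sqrt p$ radius mismatch you flag is a sloppiness the paper shares. The block-norm claim, however, has a genuine gap. You propose to feed the estimates of Lemma~\ref{lemma: l1 estimates} into the block-inversion identities ``evaluated at the averaged argument,'' but that lemma is proved only at $\bphi^0$. The matrix you must invert is $\cM(\tilde\bphi)=\int_0^1\cH(\bphi^0+s(\tilde\bphi-\bphi^0))\,ds$, whose blocks involve the penalty Hessian at points where Conditions $1'$--$2'$ fail (so the extra terms in \eqref{eq:HP3 1}--\eqref{eq:HP3 4} no longer vanish), and whose rank-one penalty part is an integral $\int_0^1\bLambda(s)\bLambda(s)^\T ds$ rather than a single outer product, so Woodbury does not apply as stated. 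Roughly half of the paper's proof is devoted to exactly this extension: it uses the linearity of $\bLambda(s)$ in $s$ to write $\int_0^1\bLambda(s)\bLambda(s)^\T ds=\tilde\bLambda\tilde\bLambda^\T$ with $\tilde\bLambda=\big(\bLambda(1/2),(\bLambda(1)-\bLambda(0))/(2\sqrt3)\big)$, re-establishes $\rho_{\min}\big(\int_0^1\Hb_{Lff'}(s)\,ds\big)\gtrsim q^{-1}$ and its $uu'$ analogue uniformly along the segment, and bounds the averaged Schur complements $\cM_{-f},\cM_{-u}$ through Lemma~\ref{prop:min eigenvalue}. Without this step your argument simply does not address $\tilde\cM$.

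The second problem is that your primary device, submultiplicativity of the maximum-row-sum norm, provably overshoots the stated rates, so it cannot be dismissed as bookkeeping. Take the correction term in the $(f,f')$ block, $\cH_{ff'}^{-1}\cH_{fu'}\cH_{-u}\cH_{uf'}\cH_{ff'}^{-1}$: inserting the available $\ell_1$ bounds, $\|\cH_{ff'}^{-1}\|_1=O_p(\sqrt p\,q)$, $\|\cH_{fu'}\|_1=O_p\big((p+(nq)^{1/\xi})n^{-1}\big)$, $\|\cH_{-u}\|_1\le\sqrt{nK}\,\|\cH_{-u}\|=O_p(n^{3/2})$, $\|\cH_{uf'}\|_1=O_p\big((nq)^{1/\xi}q^{-1}\big)$, gives $O_p\big(p\,(p+(nq)^{1/\xi})(nq)^{1/\xi}\sqrt n\,q\big)$, which exceeds the target $\sqrt p\,q(nq)^{2/\xi}$ by at least a factor of order $\sqrt{pn}$. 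The argument that actually works---and that the paper uses for every block---bounds the $\ell_\infty$ norm row-block-wise, $\|M\|_\infty\le\sqrt{\dim}\,\max_j\|[M]_{[P_j,]}\|_2$, and then pairs the row-block $\ell_2$ estimates of Lemma~\ref{lemma: l1 estimates}(v)--(vi), e.g.\ $\max_j\|[\cH_{ff'}^{-1}\cH_{fu'}]_{[P_j,]}\|=O_p\big((nq)^{1/\xi}n^{-1/2}\big)$, with spectral bounds on the remaining factors, giving $(pq)^{1/2}\cdot(nq)^{1/\xi}n^{-1/2}\cdot n\cdot(nq)^{-1/2+1/\xi}\cdot q=\sqrt p\,q(nq)^{2/\xi}$ as required. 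You mention this mixed-norm split only as a possible fallback; since the naive route fails by polynomial factors in $n$ and $p$, that split is the proof, and your proposal does not carry it out.
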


  \begin{proof}
      The first part \eqref{eq_local_convex} is proved by Lemma \ref{prop:min eigenvalue} and
      \begin{equation*}
          \lambda_{\min}\big(\Db_{q}^{1/2}\tilde{\cM}({\bphi})\Db_{q}^{1/2}\big)\ge \min_s\lambda_{\min}\Big[\big(\Db_{q}^{1/2}{\cH}({\bphi^0}+s(\bphi-\bphi^0))\Db_{q}^{1/2}\big)^{-1}\Big].
      \end{equation*}
      Next we will focus on the second part. We first show the estimates are valid using the bound established in Lemma~\ref{lemma: l1 estimates}. For now, we focus our discussion on $\bphi^0$ and its dependence will be omitted. Again use the notation in the proof of Lemma~\ref{lemma:asym_estimate} 
      \begin{equation*}
          \cH(\bphi^0)^{-1}=\begin{pmatrix}
              \tilde\cH_{ff^{\prime}}&\tilde\cH_{fu^{\prime}}\\
              \tilde\cH_{uf^{\prime}}&\tilde\cH_{uu^{\prime}}
          \end{pmatrix}
      \end{equation*}
      Also recall we have defined the Schur complement $\cH_{-f}=(\cH_{ff^{\prime}}-\cH_{fu^{\prime}}\cH_{uu^{\prime}}^{-1}\cH_{uf^{\prime}})^{-1}$ and $\cH_{-u}=(\cH_{uu^{\prime}}-\cH_{uf^{\prime}}\cH_{ff^{\prime}}^{-1}\cH_{fu^{\prime}})^{-1}$ with $\|\cH_{-f}\|=O_p(q)$ and $\|\cH_{-u}\|=O_p(n)$.

      For the upper-left block $\tilde\cH_{ff^{\prime}}$,
      we have $\tilde\cH_{ff^{\prime}}=\cH_{ff^{\prime}}^{-1}+\cH_{ff^{\prime}}^{-1}\cH_{fu^{\prime}}\cH_{-u}\cH_{uf^{\prime}}\cH_{ff^{\prime}}^{-1}$ by Woodbury formula. By (i) of Lemma~\ref{lemma: l1 estimates} we know $\big\|\cH_{ff^{\prime}}^{-1}\big\|_{\infty}=\big\|\cH_{ff^{\prime}}^{-1}\big\|_1=O_p(p^{1/2}q)$ and by (i), (v) of Lemma~\ref{lemma: l1 estimates} we have
\begin{align*}
        \big\|\cH_{ff^{\prime}}^{-1}\cH_{fu^{\prime}}\cH_{-u}\cH_{uf^{\prime}}\cH_{ff^{\prime}}^{-1}\big\|_{\infty}\le& \max_{j}\big\|[\cH_{ff^{\prime}}^{-1}\cH_{fu}]_{[P_{j},]}\cH_{-u}\cH_{uf^{\prime}}\cH_{ff^{\prime}}^{-1}\big\|_{\infty}\\
        \le &(pq)^{1/2}\max_j\big\|[\cH_{ff^{\prime}}^{-1}\cH_{fu^{\prime}}]_{[P_{j},]}\big\|_{2}\|\cH_{-u}\|\|\cH_{uf^{\prime}}\|\|\cH_{ff^{\prime}}^{-1}\|\\=&O_p\big(\sqrt{p}(nq)^{2/\xi}q\big).
    \end{align*}
    So we conclude that $\big\|\tilde\cH_{ff^{\prime}}\big\|_{\infty}\le O_p\big(\sqrt{p}q(nq)^{2/\xi}\big)$.
    
    For the lower-right block $\tilde{\cH}_{uu^{\prime}}$, we have $\tilde\cH_{uu^{\prime}}=\cH_{uu^{\prime}}^{-1}+\cH_{uu^{\prime}}^{-1}\cH_{uf^{\prime}}\cH_{-f}\cH_{fu^{\prime}}\cH_{uu^{\prime}}^{-1}$. By (ii) of Lemma~\ref{lemma: l1 estimates} we know $\big\|\cH_{uu^{\prime}}^{-1}\big\|_{\infty}=\big\|\cH_{uu^{\prime}}^{-1}\big\|_1=O_p(\sqrt{p}n)$ and by (ii), (vi) of Lemma~\ref{lemma: l1 estimates} we have
    \begin{align*}
        \big\|\cH_{uu^{\prime}}^{-1}\cH_{uf^{\prime}}\cH_{-f}\cH_{fu^{\prime}}\cH_{uu^{\prime}}^{-1}\big\|_{\infty}&= \max_i\big\|[\cH_{uu^{\prime}}^{-1}\cH_{uf^{\prime}}]_{[K_i,]}\cH_{-f}\cH_{fu^{\prime}}\cH_{uu^{\prime}}^{-1}\big\|_{\infty}
        \\
        &\le \sqrt{n}\max_i\big\|[\cH_{uu^{\prime}}^{-1}\cH_{uf^{\prime}}]_{[K_i,]}\big\|\big\|\cH_{-f}\big\|\big\|\cH_{fu^{\prime}}\big\|\big\|\cH_{uu^{\prime}}^{-1}\big\|\\
         &{\le} O_p\big((\sqrt{p}+(nq)^{1/\xi})(nq)^{1/\xi}n\big)
    \end{align*}
     So we conclude that $\big\|\tilde\cH_{uu^{\prime}}\big\|_{\infty}\le O_p\big(\sqrt{p}n(nq)^{2/\xi}\big)$.

     For the off-diagonal term $\tilde\cH_{fu^{\prime}}$, again by the formula of Sherman–Morrison–Woodbury formula, 
     \begin{equation*}
     \tilde\cH_{fu^{\prime}}=\tilde\cH_{ff^{\prime}}\cH_{fu^{\prime}}\cH_{uu^{\prime}}^{-1}=\cH_{ff^{\prime}}^{-1}\cH_{fu^{\prime}}\cH_{uu^{\prime}}^{-1}+\cH_{ff^{\prime}}^{-1}\cH_{fu^{\prime}}\cH_{-u}\cH_{uf^{\prime}}\cH_{ff^{\prime}}^{-1}\cH_{fu^{\prime}}\cH_{uu^{\prime}}^{-1}.
     \end{equation*}
     By (i) (iii) and (v) of Lemma~\ref{lemma: l1 estimates} we have for the first term that
     \begin{align*}\big\|\cH_{ff^{\prime}}^{-1}\cH_{fu^{\prime}}\cH_{uu^{\prime}}^{-1}\big\|_{\infty}=&\max_j\big\|[\cH_{ff^{\prime}}^{-1}\cH_{fu^{\prime}}\cH_{uu^{\prime}}^{-1}]_{[P_j,]}\\\le& \sqrt {n}\max_j\big\|[\cH_{ff^{\prime}}^{-1}\cH_{fu^{\prime}}]_{[P_j,]}\big\|\big\|\cH_{uu^{\prime}}^{-1}\big\|\\=&O_p\big(\sqrt{p}(nq)^{1/\xi}n\big),\end{align*} and for the second term 
\begin{align*}
    \big\|\cH_{ff^{\prime}}^{-1}&\cH_{fu^{\prime}}\cH_{-u}\cH_{uf^{\prime}}\cH_{ff^{\prime}}^{-1}\cH_{fu^{\prime}}\cH_{uu^{\prime}}^{-1}\big\|_{\infty}\\&\le \max_{j}\big\|[\cH_{ff^{\prime}}^{-1}\cH_{fu^{\prime}}]_{[P_j,]}\cH_{-u}\cH_{uf^{\prime}}\cH_{ff^{\prime}}^{-1}\cH_{fu^{\prime}}\cH_{uu^{\prime}}^{-1}\big\|_{\infty}\\&\le \sqrt{n}\max_{j}\big\|[\cH_{ff^{\prime}}^{-1}\cH_{fu^{\prime}}]_{[P_j,]}\cH_{-u}\cH_{uf^{\prime}}\cH_{ff^{\prime}}^{-1}\cH_{fu^{\prime}}\cH_{uu^{\prime}}^{-1}\big\|\\&\le \sqrt{n}\max_j\big\|[\cH_{ff^{\prime}}^{-1}\cH_{fu^{\prime}}]_{[P_j,]}\big\|\big\|\cH_{-u}\big\|\big\|\cH_{uf^{\prime}}\big\|\big\|\cH_{ff^{\prime}}^{-1}\big\|\big\|\cH_{fu^{\prime}}\big\|\big\|\cH_{uu^{\prime}}^{-1}\big\|\\
    &\le O_p\Big({ \sqrt p(nq)^{3/\xi}}n\Big).
\end{align*}
Then  we obtain $\big\|\tilde\cH_{fu^{\prime}}\big\|=O_p( \sqrt p(nq)^{3/\xi}\sqrt{n})$
For the off-diagonal term $\tilde\cH_{uf^{\prime}}$, we expand it similarly by the formula of Sherman–Morrison–Woodbury formula as
     \begin{equation*}
     \tilde\cH_{uf^{\prime}}=\tilde\cH_{uu^{\prime}}\cH_{uf^{\prime}}\cH_{ff^{\prime}}^{-1}=\cH_{uu^{\prime}}^{-1}\cH_{uf^{\prime}}\cH_{ff^{\prime}}^{-1}+\cH_{uu^{\prime}}^{-1}\cH_{uf^{\prime}}\cH_{-f}\cH_{fu^{\prime}}\cH_{uu^{\prime}}^{-1}\cH_{uf^{\prime}}\cH_{ff^{\prime}}^{-1}.
     \end{equation*}
     By (ii) (vi) and (vi) of Lemma~\ref{lemma: l1 estimates} the first term can be estimated by
     \begin{align*}\big\|\cH_{uu^{\prime}}^{-1}\cH_{uf^{\prime}} \cH_{ff^{\prime}}^{-1}\big\|_{\infty}&\le \big\|\cH_{uu^{\prime}}^{-1}\cH_{uf^{\prime}}\cH_{ff^{\prime}}^{-1}\big\|_{\infty}\\&\le \max_i\big\|[\cH_{uu^{\prime}}^{-1}\cH_{uf^{\prime}}]_{[K_i,]}\cH_{ff^{\prime}}^{-1}\big\|_{\infty}\\&\le \sqrt{pq}\max_i\big\|[\cH_{uu^{\prime}}^{-1}\cH_{uf^{\prime}}]_{[K_i,]}\big\|\big\|\cH_{ff^{\prime}}^{-1}\big\|\\&=O_p\big((\sqrt{p}+(nq)^{1/\xi})q\big),\end{align*} and the second term can be similarly calculated as
\begin{align*}
        \big\|\cH_{uu^{\prime}}^{-1}&\cH_{uf^{\prime}}\cH_{-f}\cH_{fu^{\prime}}\cH_{uu^{\prime}}^{-1}\cH_{uf^{\prime}} \cH_{ff^{\prime}}^{-1}\big\|_{\infty}\\&\le \max_{i}\big\|[\cH_{uu^{\prime}}^{-1}\cH_{uf^{\prime}}]_{[K_i,]}\cH_{-f}\cH_{fu^{\prime}}\cH_{uu^{\prime}}^{-1}\cH_{uf^{\prime}} \cH_{ff^{\prime}}^{-1}\big\|_{\infty}\\&\le \sqrt{pq}\max_{i}\big\|[\cH_{uu^{\prime}}^{-1}\cH_{uf^{\prime}}]_{[K_i,]}\cH_{-f}\cH_{fu^{\prime}}\cH_{uu^{\prime}}^{-1}\cH_{uf^{\prime}}\cH_{ff^{\prime}}^{-1} \big\|\\&\le \sqrt{pq}\max_i\big\|[\cH_{uu^{\prime}}^{-1}\cH_{uf^{\prime}}]_{[K_i,]}\big\|\big\|\cH_{-f}\big\|\big\| \cH_{fu^{\prime}}\big\|\big\|\cH_{uu^{\prime}}^{-1}\big\|\big\|\cH_{uf^{\prime}}\big\|\big\|\cH_{ff^{\prime}}^{-1}\big\|\\
    &\le O_p\Big({\sqrt{p}(\sqrt{p}+(nq)^{1/\xi})(nq)^{2/\xi}q}\Big).
\end{align*}
     Therefore we conclude that the infinity bound for $\tilde\cH_{uf^{\prime}}$ is $O_p(p(nq)^{3/\xi}q)$.
     
     Till here we have proved the estimates in infinity norm for the blocks of Hessian matrix on $\bphi^0$. It suffices to prove that the estimates in Lemma~\ref{lemma: l1 estimates} are still valid when $\cH(\bphi^0)$ is replaced with $\cM(\tilde\bphi)$. Still, from now we omit the dependence on $\tilde\bphi$ unless otherwise specified. Specifically, we merely need to verify the following hold:
     \begin{enumerate}
        \item $\|\Mb_{Lff^{\prime}}^{-1}\|=O_p(q)$, $\|\cM_{ff^{\prime}}^{-1}\|=O_p(q)$, and $\|\cM_{ff^{\prime}}^{-1}\|_1=O_p(\sqrt{p}q)$.
        \item $\|\Mb_{Luu^{\prime}}^{-1}\|=O_p(n)$, $\|\cM_{uu^{\prime}}^{-1}\|=O_p(n)$, and $\|\cH_{uu^{\prime}}^{-1}\|_1=O_p(\sqrt{p}n)$.
        \item $\max_i\big\|[\cM_{uf^{\prime}}]_{[K_i,]}\big\|=O_p\big((p^{1/2}+(nq)^{1/\xi})n^{-1}q^{-1/2}\big)$, $\|\cM_{uf^{\prime}}\|_1=O_p\big((nq)^{1/\xi}q^{-1}\big)$, and $\|\cM_{uf^{\prime}}\|=O_p\big((nq)^{-1/2+1/\xi}\big)$. 
        \item $\max_j\big\|[\cM_{fu^{\prime}}]_{[P_j,]}\big\|=O_p\big((nq)^{1/\xi}n^{-1}q^{-1/2}\big)$, $\|\cM_{fu^{\prime}}\|_1=O_p\big((p+(nq))^{1/\xi}n^{-1}\big)$, and $\|\cM_{fu^{\prime}}\|=O_p\big((nq)^{-1/2+1/\xi}\big)$. 
        \item $\|[\tilde\bLambda_{1}]_{[P_j,\cdot]}\|=O_p(q^{-1})$ and $\|\tilde\bLambda_{1}\|=O_p(q^{-1/2})$.
        \item $ \|[\tilde\bLambda_{2}]_{[K_i,\cdot]}\|=O_p(p^{1/2}n^{-1})$ and $\|\tilde\bLambda_{2}\|=O_p(n^{-1/2})$.
    \end{enumerate}
    First, it can be verified that inside this region, the residuals terms in the Hessian of penalty functions, i.e., the last two terms in \eqref{eq:HP}, can be incorporated into these estimates. Here $\tilde\bLambda_{1}$ and $\tilde\bLambda_{2}$ are defined in the following.
      Define $\Hb_{Lff^{\prime}}(s)$, $\Hb_{Lfu^{\prime}}(s)$, $\Hb_{Luf^{\prime}}(s)$, $\Hb_{Luu^{\prime}}(s)$, $\Hb_{Rfu^{\prime}}(s)$, $\Hb_{Ruf^{\prime}}(s)$ and $\bLambda(s)$ as $\Hb_{Lff^{\prime}}(\bphi)$, $\Hb_{Lfu^{\prime}}(\bphi)$, $\Hb_{Luf^{\prime}}(\bphi)$, $\Hb_{Luu^{\prime}}(\bphi)$, $\Hb_{Rfu^{\prime}}(\bphi)$, $\Hb_{Ruf^{\prime}}(\bphi)$ and $\bLambda(\bphi)$ on $\bphi^0+s(\tilde\bphi-\bphi^0)$. Here $\tilde\bphi$ is defined in Lemma~\ref{lemma: first order condition}. Then
      \begin{align*}
          \cM&=\begin{pmatrix}
              \cM_{ff^{\prime}}&\cM_{fu^{\prime}}\\\cM_{uf^{\prime}}&\cM_{uu^{\prime}}
          \end{pmatrix}\\&=\int_0^1\begin{pmatrix}
              \Hb_{Lff^{\prime}}(s)-\bLambda_1(s)\bLambda_1^\T(s)&\Hb_{Lfu^{\prime}}(s)+\Hb_{Rfu^{\prime}}(s)-\bLambda_1(s)\bLambda_2^\T(s)\\\Hb_{Luf^{\prime}}(s)+\Hb_{Ruf^{\prime}}(s)-\bLambda_2(s)\bLambda_1^\T(s)&\Hb_{Luu^{\prime}}(s)-\bLambda_2(s)\bLambda_2^\T(s)
          \end{pmatrix}ds.
      \end{align*}
      For the upper-left block $\cM_{ff^{\prime}}$, since $\bLambda_1(s)$ is linear with respect to $s$, we get
           $\cM_{ff^{\prime}}=\Mb_{Lff^{\prime}}-\tilde\bLambda_1\tilde\bLambda^\T$,
       where $\Mb_{Lff^{\prime}}=\int_0^1\Hb_{Lff^{\prime}}(s)ds$ and $\tilde\bLambda_1=\big(\bLambda_1(1/2),(\bLambda_1(1)-\bLambda_1(0))/(2\sqrt{3})\big)$. Then agian by Woodbury identity
       \begin{equation*}
           \cM_{ff^{\prime}}^{-1}=\Mb_{Lff^{\prime}}^{-1}-\Mb_{Lff^{\prime}}^{-1}\tilde\bLambda_1\big[\Ib_{2K^2}+\tilde\bLambda_1^\T\Mb_{Lff^{\prime}}^{-1}\tilde\bLambda_1\big]^{-1}\tilde\bLambda_1^\T\Mb_{Lff^{\prime}}^{-1}
       \end{equation*}
       Similar to (vii) in Lemma \ref{lemma: l1 estimates}, we have
       \begin{equation*}\|[\tilde\bLambda_{1}]_{[P_j,\cdot]}\|=O_p(q^{-1}),\quad\|\tilde\bLambda_{1}^\T \|_1=O_p(q^{-1}),\quad\|\tilde\bLambda_{1}\|_1=O_p(1),\quad\|\tilde\bLambda_{1}\|=O_p(q^{-1/2}).
\end{equation*}
This implies (v'). Next for $\Mb_{Lff^{\prime}}=\int_0^1\Hb_{Lff^{\prime}}(s)ds$, it is block-diagonal and when $\tilde\bphi\in\cB(D)$, $\sqrt{p}\big\|\Db_q^{-1/2}(\tilde\bphi-\bphi^0)\big\|\le c$ for some small enough $c$, as $\tilde\bphi$ lies in the line segment between $\hat\bphi$ and $\bphi^0$. Under Assumptions~\ref{assumption: psd covariance}--\ref{assumption:smoothness}, with $p=o(\sqrt{\delta_{nq}})$, 
\begin{align}
    \rho_{\min}(\Mb_{Lff^{\prime}})=&\min_{j}\rho_{\min}[\Mb_{Lff^{\prime}}]_{[P_j,P_j]}\nonumber\\\ge& \min_{j,0\le s\le 1}\rho_{\min}[\Hb_{Lff^{\prime}}]_{[P_j,P_j]}\nonumber\\
    \ge &b_L\rho_{\min}(\big(nq)^{-1}\sum_{i=1}^n\bZ_i^0(\bZ_i^0)^\T\big)\nonumber\\&-(nq)^{-1}\max_{j,s}\Big\|\sum_{i=1}^n\big[l_{ij}^{\prime\prime}(w_{ij}^0)\bZ_i^0(\bZ_i^0)^\T-l_{ij}^{\prime\prime}(w_{ij}(s))\bZ_i(s)\bZ_i(s)^\T\Big\|\nonumber\\\gtrsim &\frac{1}{q},\label{eq_minrhominff}
\end{align}
where $c$ is selected to be small enough.
Then we conclude that as $\rho_{\min}(\Mb_{Lff^{\prime}})\ge O_p(q^{-1}) $.
Similar to Lemma \ref{lemma: l1 estimates}, we obtain $\big\|\Mb_{Lff^{\prime}}^{-1}\big\|=O_p(q)$ and $\big\|\cM_{ff^{\prime}}^{-1}\big\|_1=O_p(\sqrt pq)$, which is (i'). For $\cM_{uu^{\prime}}$, we first compute by $\tilde\bLambda_2=\big(\bLambda_2(1/2),(\bLambda_2(1)-\bLambda_2(0))/(2\sqrt{3})\big)$ followed by a similar argument in proving (vii) of Lemma \ref{lemma: l1 estimates} that
       \begin{equation*}\|[\tilde\bLambda_{2}]_{[K_i,\cdot]}\|=O_p(\sqrt{p}n^{-1}),\quad\|\tilde\bLambda_{2}^\T \|_1=O_p(pn^{-1}),\quad\|\tilde\bLambda_{2}\|_1=O_p(1),\quad\|\tilde\bLambda_{2}\|=O_p(n^{-1/2}).
\end{equation*}
This implies (vi'). For $\Mb_{Luu^{\prime}}=\int_0^1\Hb_{Luu^{\prime}}(s)ds$, it is block-diagonal and when $\tilde\bphi\in\cB(D)$, $\sqrt{p}\big\|\Db_q^{-1/2}(\tilde\bphi-\bphi^0)\big\|\le c$, we have by a similar approach of computing \eqref{eq_minrhominff} that $\rho_{\min}(\Mb_{Luu^{\prime}})$ $\ge O_p(n^{-1})$.
Similar to Lemma \ref{lemma: l1 estimates}, we obtain $\big\|\Mb_{Luu^{\prime}}^{-1}\big\|_1=O_p(n)$ and $\big\|\cM_{uu^{\prime}}^{-1}\big\|_1=O_p(n)$, which implies (ii') For the off-diagonal blocks,

Define $\cM_{-f}=(\cM_{ff^{\prime}}-\cM_{fu^{\prime}}\cM_{uu^{\prime}}^{-1}\cM_{uf^{\prime}})^{-1}$ and $\cM_{-u}=(\cM_{uu^{\prime}}-\cM_{uf^{\prime}}\cM_{ff^{\prime}}^{-1}\cM_{fu^{\prime}})^{-1}$. By Lemma \ref{prop:min eigenvalue} and $\rho_{\min}(\Db_q^{1/2}\cM\Db_q^{1/2})\ge \min_s\rho_{\min}(\Db_q^{1/2}\cH(s)\Db_q^{1/2})$ we have $\|\cM_{-f}\|=O_p(q)$ and $\|\cM_{-u}\|=O_p(n)$.
By similar argument in Lemma~\ref{lemma: l1 estimates}, we know $\|\cM_{fu^{\prime}}\|=\|\cM_{uf^{\prime}}\|=O_p((nq)^{-1/2})$. For the off-diagonal blocks, since inside $\cB(D)\cap\sqrt{p}\big\|\Db_q^{-1/2}(\tilde\bphi-\bphi^0)\big\|\le m$, $\max_{i,j}\big|l_{ij}^{\prime}(w_{ij}(s))-l_{ij}^{\prime}(w_{ij}^0)\big|$ is bounded by Assumption~\ref{assumption:smoothness}. The bounds for $\Hb_{Lfu^{\prime}}$ and $[\blambda_1]_{[P_j,]}[\bLambda_{[,K_i]}$ can be derived similarly as in Lemma~\ref{lemma: l1 estimates}. Therefore we get (iii') and similarly, we can derive (iv').
  \end{proof}

\subsection{Proof of Lemmas \ref{prop:infinty bound} and \ref{lemma: first order condition}}
\begin{lemma}[Infinity Bound]
	\label{prop:infinty bound}
	Under Assumptions~\ref{assumption: psd covariance}--\ref{assumption: Scaling}, we have for any small $\epsilon$, that
	\begin{align*}
			\big\|\hat\bU_v-\bU^0_v\big\|_{\infty}&\,=O_p\Big(\frac{p}{\sqrt{n\wedge(pq)}}(nq)^{3/\xi+\epsilon}\Big);\\\big\|\hat\bbf_v-\bbf^0_v\big\|_{\infty}&\,=O_p\Big(\frac{\sqrt p}{\sqrt{n\wedge q}}(nq)^{3/\xi+\epsilon}\Big).	\end{align*}
	
\end{lemma}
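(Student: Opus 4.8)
The plan is to linearize the stationarity condition and invert the integrated Hessian blockwise. By Lemma~\ref{lemma: first order condition} the unconstrained score vanishes at the estimator, $\bS(\hat\bphi^0)=\zero$ with high probability, so the integral form of the mean value theorem gives
\begin{equation*}
\zero=\bS(\hat\bphi^0)=\bS(\bphi^0)+\Big[\int_0^1\cH\big(\bphi^0+s(\hat\bphi^0-\bphi^0)\big)\,ds\Big](\hat\bphi^0-\bphi^0),
\end{equation*}
whence $\hat\bphi^0-\bphi^0=-\tilde\cM\,\bS(\bphi^0)$ with $\tilde\cM=[\int_0^1\cH(\bphi^0+s(\hat\bphi^0-\bphi^0))\,ds]^{-1}$, which is exactly the matrix analyzed in Lemma~\ref{lemma: colsum norm cH tilde}. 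Writing $\bS(\bphi^0)=(\bS_f^\T,\bS_u^\T)^\T$ and splitting $\tilde\cM$ into its four blocks, I read off
\begin{align*}
\hat\bbf_v-\bbf_v^0&=-\tilde\cM_{ff^\prime}\bS_f-\tilde\cM_{fu^\prime}\bS_u,\\
\hat\bU_v-\bU_v^0&=-\tilde\cM_{uf^\prime}\bS_f-\tilde\cM_{uu^\prime}\bS_u,
\end{align*}
and bound each term using $\|\Mb\bv\|_\infty\le\|\Mb\|_\infty\|\bv\|_\infty$ (maximum absolute row sum times the vector $\ell_\infty$ norm).

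The two ingredients are then purely quantitative. From Lemma~\ref{lemma:first-order derivative} the coordinatewise score sizes are $\|\bS_f\|_\infty=O_p(\sqrt{\log(qp)}/(q\sqrt n))$ and $\|\bS_u\|_\infty=O_p(\sqrt{\log n}/(n\sqrt q))$, while Lemma~\ref{lemma: colsum norm cH tilde} supplies the block row-sum bounds $\|\tilde\cM_{ff^\prime}\|_\infty=O_p(\sqrt p\,q(nq)^{2/\xi})$, $\|\tilde\cM_{fu^\prime}\|_\infty=O_p(\sqrt p\,n(nq)^{3/\xi})$, $\|\tilde\cM_{uf^\prime}\|_\infty=O_p(p\,q(nq)^{3/\xi})$, and $\|\tilde\cM_{uu^\prime}\|_\infty=O_p(\sqrt p\,n(nq)^{2/\xi})$. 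Multiplying, the bound for $\hat\bbf_v-\bbf_v^0$ is $O_p\big(\sqrt p(nq)^{3/\xi}(\sqrt{\log qp}/\sqrt n+\sqrt{\log n}/\sqrt q)\big)$, and since $\sqrt{\log qp}/\sqrt n+\sqrt{\log n}/\sqrt q\lesssim(nq)^\epsilon/\sqrt{n\wedge q}$ this gives the stated $\frac{\sqrt p}{\sqrt{n\wedge q}}(nq)^{3/\xi+\epsilon}$. For the $\hat\bU_v$ bound the same multiplication produces a leading $\frac{p}{\sqrt n}(nq)^{3/\xi}\sqrt{\log qp}$ from the $\tilde\cM_{uf^\prime}\bS_f$ term and $\frac{\sqrt p}{\sqrt q}(nq)^{2/\xi}\sqrt{\log n}$ from the $\tilde\cM_{uu^\prime}\bS_u$ term; since $\max(p/\sqrt n,\ \sqrt p/\sqrt q)=p/\sqrt{n\wedge(pq)}$, absorbing the logarithms into $(nq)^\epsilon$ yields $\frac{p}{\sqrt{n\wedge(pq)}}(nq)^{3/\xi+\epsilon}$, as claimed.

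The delicate part is justifying that the linearization is legitimate, which is precisely why Lemmas~\ref{prop:infinty bound} and~\ref{lemma: first order condition} are proved together rather than in sequence. Two things must be secured before the displays above can be written: that $\hat\bphi^0$ lies in the local region $\cB(D)\cap\{\bphi:\sqrt p\|\Db_q^{-1/2}(\bphi-\bphi^0)\|\le m\}$ on which $\tilde\cM$ is controlled, and that the score genuinely vanishes at $\hat\bphi^0$, i.e. the optimum is interior rather than on the boundary of $\cB(D)$. The first is handled by Lemma~\ref{prop:average consistency}, which gives $\sqrt p\|\Db_q^{-1/2}(\hat\bphi^0-\bphi^0)\|=O_p(\sqrt p\,\zeta_{nq,p}^{-1})=o_p(1)$ under Assumption~\ref{assumption: Scaling}, placing $\hat\bphi^0$ inside the region where Lemma~\ref{prop:min eigenvalue} yields strong convexity and Lemma~\ref{lemma: colsum norm cH tilde} applies. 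The second, interiority, is the genuine obstacle, and it is self-referential: showing that no coordinate of $\hat\bphi^0$ reaches $\pm D$ is essentially what the $\ell_\infty$ bound delivers. I would break the circularity by running the argument on the minimizer of $\cL$ restricted to the strongly convex $m$-ball, deriving the $\ell_\infty$ bound for that (uniquely existing) restricted minimizer, observing that the bound is $o_p(1)$ so every coordinate stays strictly inside $[-D,D]$ and strictly inside the $m$-ball, and concluding that the restricted minimizer is an interior stationary point that therefore coincides with the global MLE $\hat\bphi^0$ with the vanishing gradient asserted in Lemma~\ref{lemma: first order condition}.
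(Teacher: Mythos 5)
Your rate computation is sound and matches the paper's: both proofs run the mean--value--theorem linearization of the score through the integrated Hessian $\tilde\cM$, pair the block row-sum bounds of Lemma~\ref{lemma: colsum norm cH tilde} with the score bounds of Lemma~\ref{lemma:first-order derivative}, and the block-by-block multiplication you perform reproduces exactly the two stated rates (your pairing of $\|\tilde\cM_{\cdot\cdot}\|_\infty$ with the coordinatewise score bounds is, if anything, cleaner than the paper's detour through $\|\Db_q^{1/2}\bS(\bphi^0)\|_\zeta$). You also correctly diagnose that the real difficulty is the circular dependence between the $\ell_\infty$ bound, interiority in $\cB(D)$, and the first-order condition, and that Lemmas~\ref{prop:infinty bound} and~\ref{lemma: first order condition} must be proved jointly.

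However, your proposed resolution of the circularity has a genuine gap. You suggest running the linearization on the minimizer of $\cL$ restricted to $\cB(D)\cap\{\bphi:\sqrt p\|\Db_q^{-1/2}(\bphi-\bphi^0)\|\le m\}$ and ``deriving the $\ell_\infty$ bound for that restricted minimizer.'' But there is no mechanism to start that derivation: the linearization $\bphi^\dagger-\bphi^0=-\tilde\cM\,\bS(\bphi^0)$ requires $\bS(\bphi^\dagger)=\zero$, and for a constrained minimizer of $\cL$ the score vanishes only at KKT points with no active constraints. The $m$-ball constraint is inactive w.h.p.\ (by Lemma~\ref{prop:average consistency}), but the $\cB(D)$ constraint is not ruled out: the scaled $\ell_2$ ball of radius $m/\sqrt p$ contains points whose individual coordinates deviate from $\bphi^0$ by as much as $m\sqrt{q/p}$ or $m\sqrt{n/p}$, so the $m$-ball is nowhere near contained in the interior of $\cB(D)$, and the restricted minimizer can a priori sit on a face of the box $\|\bphi\|_{\max}=D$ or of $\max_{i,j}|w_{ij}|\le D$, where $\bS(\bphi^\dagger)\neq\zero$ and is uncontrolled. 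So your fix presupposes exactly the interiority it is meant to establish. The paper breaks the circle with a different auxiliary object: $\tilde\bphi$ is defined as the minimizer of $\|\Db_q^{1/2}\bS(\bphi)\|_\zeta$ over the local region, so that the inequality $\|\Db_q^{1/2}\bS(\tilde\bphi)\|_\zeta\le\|\Db_q^{1/2}\bS(\bphi^0)\|_\zeta$ holds \emph{by construction} (since $\bphi^0$ is feasible), with no stationarity needed. The linearization $\tilde\bphi-\bphi^0=\tilde\cM(\bS(\tilde\bphi)-\bS(\bphi^0))$ then yields the $\ell_2$ and $\ell_\infty$ bounds, which show $\tilde\bphi$ is interior; the first-order condition of the $\zeta$-norm minimization together with the positive definiteness of $\cH$ (Lemma~\ref{prop:min eigenvalue}) then forces $\bS(\tilde\bphi)=\zero$, and finally strong convexity plus the average consistency of $\hat\bphi^0$ identifies $\tilde\bphi=\hat\bphi^0$. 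Replacing the likelihood by the score norm as the object being minimized is the missing idea; without it (or a substitute such as a Brouwer fixed-point argument for the map $\bphi\mapsto\bphi^0-\tilde\cM(\bphi)\bS(\bphi^0)$ on a small $\ell_\infty$ ball), your argument cannot get off the ground.
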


 Under the scaling condition in Assumption~\ref{assumption: Scaling} we know that $\|\hat\bphi^0-\bphi^0\|_{\infty}$ converges to 0 with high probability, and therefore each parameter of the estimation $\hat\bphi^0$ converges to the that of $\bphi^0$, which implies that $\hat\bphi^0$ is an interior point of $\cB(D)$ with high probability. This immediately implies the following:

\begin{lemma}[First Order Condition]
\label{lemma: first order condition}
     Under Assumptions~\ref{assumption: psd covariance}--\ref{assumption: Scaling}, $\partial_{\bphi}\cL(\Yb|{ \hat{\bphi}^0}) = \zero$ {\em w.h.p.}
\end{lemma}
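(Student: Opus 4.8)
The two lemmas are naturally proved together, and the First Order Condition (Lemma~\ref{lemma: first order condition}) follows almost immediately once the Infinity Bound (Lemma~\ref{prop:infinty bound}) is in hand, so the bulk of the work is the sup-norm bound. The plan is to pass from the averaged ($\ell_2$) rates already established in Lemma~\ref{prop:average consistency} to coordinatewise rates through an integrated mean-value expansion of the score, controlling the inverse of the integrated Hessian in the entrywise (row-sum) operator norm rather than in the spectral norm.

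First I would invoke Lemma~\ref{prop:average consistency}, which gives $\|\Db_{q}^{-1/2}(\hat\bphi^0-\bphi^0)\|=O_p(\zeta_{nq,p}^{-1})$ and hence, under $p=o(\delta_{nq})$, places $\hat\bphi^0$ (w.h.p.) inside the neighborhood on which $\cL$ is strongly convex by Lemma~\ref{prop:min eigenvalue}. On the event that $\hat\bphi^0$ is interior to $\cB(D)$, the stationarity $\bS(\hat\bphi^0)=\zero$ holds, and a first-order Taylor expansion with integral remainder gives $\hat\bphi^0-\bphi^0=-\tilde\cM\,\bS(\bphi^0)$, where $\tilde\cM=[\int_0^1\cH(\bphi^0+s(\hat\bphi^0-\bphi^0))\,ds]^{-1}$. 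Splitting into the factor and loading/coefficient blocks and bounding each matrix--vector product by (max absolute row sum)$\times$(sup-norm), I would combine the block $\ell_\infty$ estimates of $\tilde\cM$ from Lemma~\ref{lemma: colsum norm cH tilde} with the entrywise score bounds from Lemma~\ref{lemma:first-order derivative}. Concretely, $\|\hat\bbf_v-\bbf_v^0\|_\infty\le\|\tilde\cM_{ff'}\|_\infty\|\bS_f(\bphi^0)\|_\infty+\|\tilde\cM_{fu'}\|_\infty\|\bS_u(\bphi^0)\|_\infty$ evaluates to $\sqrt p\,(nq)^{2/\xi}\sqrt{\log(qp)/n}+\sqrt p\,(nq)^{3/\xi}\sqrt{\log n/q}$, i.e. $O_p(\sqrt p\,(nq)^{3/\xi+\epsilon}/\sqrt{n\wedge q})$, and the analogous split for $\hat\bU_v-\bU_v^0$ produces $p\,(nq)^{3/\xi}\sqrt{\log(qp)/n}+\sqrt p\,(nq)^{2/\xi}\sqrt{\log n/q}$, which gives the claimed $O_p(p\,(nq)^{3/\xi+\epsilon}/\sqrt{n\wedge(pq)})$ once the two terms $p/\sqrt n$ and $\sqrt p/\sqrt q\asymp p/\sqrt{pq}$ are combined; the $\epsilon$ absorbs the logarithmic factors and the passage from the sub-exponential tail to the moment form of the score bound.

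With the Infinity Bound established, the First Order Condition is immediate: Assumption~\ref{assumption: Scaling} forces both stated rates to $0$, so $\|\hat\bphi^0-\bphi^0\|_\infty\to0$ w.h.p.; since $\bphi^0$ lies strictly inside $\cB(D)$ (its entries, and the linear predictors $w_{ij}^0$, are uniformly bounded below $D$), it follows that $\hat\bphi^0$ lies in the interior of $\cB(D)$ w.h.p., and a smooth objective attaining its minimum at an interior point must have vanishing gradient, whence $\partial_\bphi\cL(\Yb\mid\hat\bphi^0)=\zero$ w.h.p.

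\textbf{Main obstacle.} The delicate point is the apparent circularity: the expansion $\hat\bphi^0-\bphi^0=-\tilde\cM\,\bS(\bphi^0)$ presupposes stationarity (hence interiority), yet interiority is precisely what the Infinity Bound is meant to certify. I would break this with a bootstrap/continuity argument run inside the strongly convex neighborhood of Lemma~\ref{prop:min eigenvalue}: the unique minimizer of $\cL$ over that neighborhood satisfies the expansion, the resulting sup-norm bound keeps it strictly interior to $\cB(D)$, and uniqueness then identifies it with the global constrained solution $\hat\bphi^0$. The second, heavier difficulty is controlling $\tilde\cM$ in the entrywise norm: unlike the spectral bound used for the $\ell_2$ rates, the off-diagonal blocks couple the $nK$ factor coordinates with the $q(K+p)$ loading/coefficient coordinates and grow with $p$, so the Schur-complement and Woodbury row-sum estimates assembled in Lemmas~\ref{lemma: l1 estimates} and~\ref{lemma: colsum norm cH tilde} are what make the diverging-$p$ regime tractable.
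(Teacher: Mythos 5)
Your overall skeleton (mean-value expansion of the score, block $\ell_\infty$ control of the integrated inverse Hessian via Lemmas~\ref{lemma:first-order derivative} and~\ref{lemma: colsum norm cH tilde}, interiority $\Rightarrow$ stationarity, then identification with $\hat\bphi^0$ by strict convexity plus average consistency) matches the paper, and your rate bookkeeping for both blocks is correct. However, there is a genuine gap at exactly the point you flag as the ``main obstacle,'' and your proposed fix does not close it. You assert that ``the unique minimizer of $\cL$ over that neighborhood satisfies the expansion.'' This is unjustified: the neighborhood is $\cB(D)\cap\{\bphi:\sqrt p\|\Db_q^{-1/2}(\bphi-\bphi^0)\|\le m\}$, and the unique (by strong convexity) minimizer over this set may lie on the boundary of $\cB(D)$, where the gradient need not vanish and only KKT-type conditions hold. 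Average consistency rules out the boundary of the $m$-ball, but it controls only the aggregated $\ell_2$ error and says nothing about individual coordinates hitting the sup-norm constraint $D$ --- which is precisely why the Infinity Bound is needed in the first place. Nor can you enlarge the region to the $m$-ball alone: the positive definiteness in Lemma~\ref{prop:min eigenvalue} (hence uniqueness and the Hessian bounds feeding Lemma~\ref{lemma: colsum norm cH tilde}) is only available on the intersection with $\cB(D)$. So your bootstrap, as stated, assumes stationarity to derive the sup-norm bound that is needed to justify stationarity --- the same circle you set out to break.

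The paper breaks the circle with a device absent from your proposal: it introduces the auxiliary estimator $\tilde\bphi=\argmin\|\Db_q^{1/2}\bS(\bphi)\|_\zeta$ over the restricted region, i.e., it minimizes a norm of the score rather than the likelihood. For this object the mean-value identity $\tilde\bphi-\bphi^0=\tilde\cM(\tilde\bphi)\{\bS(\tilde\bphi)-\bS(\bphi^0)\}$ holds with no stationarity assumption, and --- the key point --- the minimizing property with $\bphi^0$ feasible yields $\|\Db_q^{1/2}(\bS(\tilde\bphi)-\bS(\bphi^0))\|_\zeta\le 2\|\Db_q^{1/2}\bS(\bphi^0)\|_\zeta$, so it is the score \emph{difference}, not the score at the truth under presumed stationarity, that gets fed into the $\ell_\infty$ bounds. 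This delivers the sup-norm rates for $\tilde\bphi$, hence interiority; interiority of a minimizer of $\|\Db_q^{1/2}\bS(\cdot)\|_\zeta^\zeta$ gives $\cH(\tilde\bphi)\Db_q^{1/2}[\Db_q^{1/2}\bS(\tilde\bphi)]^{\zeta-1}=0$, and invertibility of the Hessian (Lemma~\ref{prop:min eigenvalue}) then forces $\bS(\tilde\bphi)=0$ w.h.p.; finally strict convexity plus average consistency identify $\tilde\bphi=\hat\bphi^0$ w.h.p., giving both lemmas at once. To repair your write-up you would need either this score-norm trick or a genuinely non-circular substitute (e.g., a Brouwer/Newton--Kantorovich fixed-point argument producing an interior zero of the score, or a boundary comparison of $\cL$ on a sup-norm ball), none of which your ``bootstrap/continuity'' sketch actually supplies.
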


The above first order condition plays a foundational role in establishing the asymptotic normality of our estimation. This result is nontrivial in the sense that the dimension of the model parameters goes to infinity along with $n,q,p$. Also, the dimensions of factors, loadings, and the regression coefficients  differ, not only in magnitude but also in their order.
Based on the first order condition, we are able to derive the individual consistency rate for $\hat\bphi^0$:

Suppose $\tilde{{\bphi}}$ is the solution to 
\begin{equation}
    \mathop{\arg\min}_{{\bphi}\in\cB(D),\sqrt{p}\|\Db_{q}^{-1/2}({\bphi}-{\bphi}^0)\|\le m}\|\Db_{q}^{1/2}\bS({\bphi})\|_\zeta.\label{eq:firstorderopti}
\end{equation}
The idea is that we first show that $\tilde\bphi$ is the interior point of the space $\cB(D)\cap\big\{\bphi:\sqrt{p}\big\|\Db_q^{-1/2}(\bphi-\bphi^0)\big\|\le m\big\}$ {\em w.h.p.} To show this we need to bound $\|\hat\bbf_v^0-\bbf_v^G\|_{\infty}$ and $\|\hat\bU_v^0-\bU_v^G\|_{\infty}$. Then we conclude that $\tilde\bphi$ is the maximizer of the objective function $\cL(\bphi)$ inside this space with $\partial_{\bphi}\|\Db_{q}^{1/2}\bS({\bphi})\|_\zeta=0$ {\em w.h.p.}, which implies that $\bS(\tilde\bphi)=0$, {\em w.h.p.} Recall that $\hat\bphi^0$ is a global maximizer of the objective function in $\cB(D)$, we conclude that $\tilde\bphi=\hat\bphi^0$ and thus we proved 
Lemma \ref{prop:infinty bound} and \ref{lemma: first order condition}
\begin{proof}

Expanding $\bS(\tilde\bphi)$ at $\bphi^0$ using mean value theorem, we have 
\begin{equation}
    \tilde\bphi-\bphi^0=\tilde\cM(\tilde\bphi)\big(\bS(\tilde\bphi)-\bS(\bphi^0)\big), \label{eq:lemma2 mvt expansion}
\end{equation}
where $\tilde \cM$ is defined as 
\begin{equation*}\tilde\cM(\bphi)=\Big[\int_0^1\cH\big(\bphi^0+s(\bphi-\bphi^0)\big)ds\Big]^{-1},
\end{equation*}
and $\cM(\bphi)$ defined as $\int_0^1\cH\big(\bphi^0+s(\bphi-\bphi^0)\big)ds$.

We proceed to show $\tilde{\bphi}$ lies in the interior of $\{\bphi:\sqrt{p}\big\|\Db_q^{-1/2}(\bphi-\bphi^0)\big\|\le m \}$. By reorganizing the expression~\eqref{eq:lemma2 mvt expansion}, we have $\Db_{q }^{-1/2}(\tilde{\bphi} -{\bphi}^0)=\big(\Db_{q}^{1/2}\tilde{\cM}(\tilde{\bphi})\Db_{q}^{1/2}\big)^{-1}\big[\Db_{q}^{1/2}\big(\bS(\tilde{\bphi} )-\bS({\bphi}^0)\big)\big]$. Further, according to Lemma~\ref{lemma:first-order derivative}, Lemma~\ref{lemma: colsum norm cH tilde} and Assumption~\ref{assumption: Scaling}, we have\begin{align}
    \big\|\Db_{q}^{-1/2}(\tilde{\bphi} -{\bphi}^0)\big\|&\le \big\|\Db_{q}^{1/2}\big(\bS(\tilde{\bphi} )-\bS({\bphi}^0)\big)\big\|\nonumber\\
    &\le (n+pq)^{1/2-1/\zeta}\big\|\Db_{q}^{1/2}\big(\bS(\tilde{\bphi} )-\bS({\bphi}^0)\big)\big\|_\zeta\nonumber\\
    &\le 2(n+pq)^{1/2-1/\zeta}\big\|\Db_{q}^{1/2}\bS({\bphi}^0)\big\|_\zeta\nonumber\\
    &\le O_p\Big(\frac{p}{\sqrt{pq\wedge n}}\epsilon_{nq}\Big)=o_p(1).\label{BB55}\end{align}

  Next, we show  $\tilde{\bphi}$ lies in the interior of $\cB (D)$. Again by Lemma~\ref{lemma:first-order derivative} and Lemma~\ref{lemma: colsum norm cH tilde}, we have
\begin{align}
    &\,\| \tilde{\bbf}_v - \bbf_v^0\|_{\infty} \nonumber\\\le& \|q^{-1/2}\tilde \cM_{ff^{\prime}} \|_{\infty} \|\sqrt{q}(\bS_{f}(\tilde\bphi)-\bS_f(\bphi^0) )\|_{\infty}+\|n^{-1/2}\tilde \cM_{uf^{\prime}} \|_{\infty} \|\sqrt{n}(\bS_{u}(\tilde\bphi)-\bS_u(\bphi^0)) \|_{\infty} \nonumber \\\le& \|q^{-1/2}\tilde \cM_{ff^{\prime}} \|_{\infty} \|\Db_q^{1/2}(\bS(\tilde\bphi)-\bS(\bphi^0)) \|_{\zeta}+\|n^{-1/2}\tilde \cM_{uf^{\prime}} \|_{\infty} \|\Db_q^{1/2}(\bS(\tilde\bphi)-\bS(\bphi^0)) \|_{\zeta}\nonumber\\
    \le& 2\big(q^{-1/2} \|\tilde \cM_{ff^{\prime}} \|_{\infty}+n^{-1/2}\|\tilde \cM_{uf^{\prime}} \|_{\infty}\big)\|\Db_q^{1/2}\bS(\bphi^0) \|_{\zeta}\nonumber
    \\\le &O_p\left(\frac{\sqrt p(\sqrt n+\sqrt {q})}{\sqrt{nq}}(nq)^{3/\xi+\epsilon}\right)=o_p(1),\nonumber
    \end{align}
under Assumption~\ref{assumption: Scaling}. Similarly, we have
\begin{align}
    \|\tilde{\bU}_v - \bU_v^0 \|_{\infty} &\le O_p\left(\frac{\sqrt p(\sqrt {pq}+\sqrt n)}{\sqrt{nq}}(nq)^{3/\xi+\epsilon}\right)  = o_p(1). \nonumber
\end{align}
    Hence, we show $\tilde{\bphi}$, the solution to \eqref{eq:firstorderopti}, to be an interior point of the specified space $\cB(D)\cap\big\{\bphi:\sqrt{p}
    \big\|\Db_q^{-1/2}(\bphi-\bphi^0)\big\|\le m\big\}$ {\em w.h.p.}.

    By the definition of $\tilde{\bphi}$, we have 
$\partial_{\bphi} \|\Db_{q}^{1/2}\bS({\bphi})\|_\zeta^\zeta |_{\bphi = \tilde{\bphi}}\;=0$, which further gives $\zeta  H (\tilde{\bphi}) $ $ \Db_q^{1/2} [\Db_{q}^{1/2}\bS(\tilde{\bphi})]^{\zeta-1}=0$. By Lemma~\ref{prop:min eigenvalue}, we know that $\Db_q^{1/2} \cH (\tilde{\bphi}) \Db_q^{1/2}$ is positive definite within the space $\cB(D)\cap\big\{\bphi:\sqrt {p}\big\|\Db_q^{-1/2}(\bphi-\bphi^0)\big\|\le m\big\}$. Hence, we have $\bS(\tilde\bphi)=0$  {\em w.h.p.}.
    
    Lastly, according to average consistency results in Lemma~\ref{prop:average consistency}, we know that the global maximizer 
 $\hat{\bphi}^0$ to objective function $\cL(\bphi)$ in $\cB(D)$ lies in the space $\cB(D)\cap\big\{\bphi:\sqrt{p}\big\|\Db_q^{-1/2}(\bphi-\bphi^0)\big\|\le m\big\}$ {\em w.h.p.}. Hence, we conclude that $\tilde\bphi=\hat\bphi^0$ and therefore $S(\hat\bphi^0)=0$ {\em w.h.p.}.


\end{proof}


  \subsection{Proof of Lemma~\ref{lemma: tech_for_normality}}
\begin{lemma}\label{lemma: tech_for_normality}
 Under Assumptions~\ref{assumption: psd covariance}--\ref{assumption:asymptotic normality}, we have estimates on $\bphi^0$ as follows:

 (i) $\big\|\bLambda_1^\T\Hb_{Lff^{\prime}}^{-1}\bS_f\big\|=O_p\big(\sqrt{p/(nq) }\epsilon_{nq})\big)$; (ii) $\big\|\bLambda_2^\T\Hb_{Lff^{\prime}}^{-1}\bS_f\big\|=O_p\big(\sqrt{p/(nq) }\epsilon_{nq}\big)$
 
 (iii) $\big\|[\cH_{uf^{\prime}}  \cH_{ff^\prime}^{-1}\bS_f]_{[K_i]}\big\|=O_p\big(pn^{-3/2}q^{-1/2} (nq)^{1/\xi}\epsilon_{nq}\big)$;
 
 (iv) $\big\|[\cH_{fu^{\prime}}  \cH_{uu^\prime}^{-1}\bS_u]_{[P_j]}\big\|=O_p\big(pn^{-1/2}q^{-3/2}(nq)^{1/\xi}\epsilon_{nq} \big)$;

 (v)$\big\|\cH_{uf^{\prime}}  \cH_{ff^\prime}^{-1}\bS_f\big\|=O_p\big(pn^{-1}q^{-1/2}(nq)^{1/\xi}\epsilon_{nq}\big)$;
 
 (vi) $\big\|\cH_{fu^{\prime}}  \cH_{uu^\prime}^{-1}\bS_u\big\|=O_p\big(pn^{-1/2}q^{-1}(nq)^{1/\xi}\epsilon_{nq}\big)$.
\end{lemma}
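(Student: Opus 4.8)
All six quantities are products of a penalty‑decomposition factor ($\bLambda_1,\bLambda_2$) or an off‑diagonal Hessian block ($\cH_{uf^{\prime}},\cH_{fu^{\prime}}$) with an inverse Hessian block acting on the score. The plan is to reduce each expression to an explicit sum over the item index $j$ or the subject index $i$ by exploiting the block‑diagonal structure of $\Hb_{Lff^{\prime}}$ and $\Hb_{Luu^{\prime}}$, to split that sum into a leading term with controllable conditional mean and a mean‑zero fluctuation, and then to bound the fluctuation by a Bernstein/moment inequality for sums of conditionally independent sub‑exponential summands, using $\EE l_{ij}^{\prime}(w_{ij}^0)=0$ and $\|l_{ij}^{\prime}(w_{ij}^0)\|_{\varphi_1}\le M$ from Assumption~\ref{assumption:smoothness}. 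Logarithmic factors from the union bounds would be absorbed into $\epsilon_{nq}=(nq)^{\epsilon}$, and the worst‑case growth $\max_{i,j}|l_{ij}^{\prime}(w_{ij}^0)|=O_p((nq)^{1/\xi})$ would supply the $(nq)^{1/\xi}$ factors. All block‑norm inputs are taken from Lemma~\ref{lemma: l1 estimates}.

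\textbf{Parts (i)--(ii).} Since $\Hb_{Lff^{\prime}}$ is block diagonal, I would write $[\Hb_{Lff^{\prime}}^{-1}\bS_f]_{[P_j]}=\big(\sum_i l_{ij}^{\prime\prime}(w_{ij}^0)\bZ_i^0(\bZ_i^0)^{\T}\big)^{-1}\big(\sum_i l_{ij}^{\prime}(w_{ij}^0)\bZ_i^0\big)$, so that $\bLambda_1^{\T}\Hb_{Lff^{\prime}}^{-1}\bS_f=\sum_j[\bLambda_1]_{[P_j,\cdot]}^{\T}[\Hb_{Lff^{\prime}}^{-1}\bS_f]_{[P_j]}$, and analogously for the companion quantity built from $\bLambda_2,\Hb_{Luu^{\prime}},\bS_u$. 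The factors $\bLambda_1,\bLambda_2$ are deterministic given the covariates and $\bphi^0$, with per‑block norms $O(q^{-1})$ and $O(\sqrt p\,n^{-1})$ by parts (vii)--(viii) of Lemma~\ref{lemma: l1 estimates}. Replacing $l_{ij}^{\prime\prime}$ by its expectation renders the inner matrix deterministic up to a lower‑order remainder, after which each coordinate becomes a mean‑zero double sum over $(i,j)$ of the centered scores $l_{ij}^{\prime}(w_{ij}^0)$; a Bernstein bound over the $nq$ independent summands yields the $(nq)^{-1/2}$ rate, and accumulation over the $O(p)$ covariate directions supplies the factor $\sqrt p$, giving $O_p(\sqrt{p/(nq)}\,\epsilon_{nq})$.

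\textbf{Parts (iii)--(iv).} Here I would first replace $\cH_{ff^{\prime}}^{-1}$ (resp. $\cH_{uu^{\prime}}^{-1}$) by the block‑diagonal $\Hb_{Lff^{\prime}}^{-1}$ (resp. $\Hb_{Luu^{\prime}}^{-1}$) through the Woodbury identity, the correction being controlled by parts (i)--(ii) together with $\|\Hb_{Lff^{\prime}}^{-1}\bLambda_1\|=O_p(\sqrt q)$. This reduces, for example, (iv) to $[\cH_{fu^{\prime}}\Hb_{Luu^{\prime}}^{-1}\bS_u]_{[P_j]}=\sum_i[\cH_{fu^{\prime}}]_{[P_j,K_i]}[\Hb_{Luu^{\prime}}^{-1}\bS_u]_{[K_i]}$, where $[\Hb_{Luu^{\prime}}^{-1}\bS_u]_{[K_i]}=(\sum_{j'}l_{ij'}^{\prime\prime}\bgamma_{j'}\bgamma_{j'}^{\T})^{-1}(\sum_{j'}l_{ij'}^{\prime}\bgamma_{j'}^0)$. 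The crucial observation is that both $[\cH_{fu^{\prime}}]_{[P_j,K_i]}$ and this inner block involve the response $Y_{ij}$ through $l_{ij}^{\prime},l_{ij}^{\prime\prime}$; isolating the $j'=j$ contribution of the inner sum produces a term of nonzero conditional mean, while the $j'\neq j$ contributions are mean‑zero. I would bound the first deterministically using $b_L\le -l_{ij}^{\prime\prime}\le b_U$, the moment control $\max_{i,j}|l_{ij}^{\prime}|=O_p((nq)^{1/\xi})$, and the norm estimates $\|(\sum_{j'}l_{ij'}^{\prime\prime}\bgamma_{j'}\bgamma_{j'}^{\T})^{-1}\|=O_p(q^{-1})$ and $\max_j\|[\cH_{fu^{\prime}}]_{[P_j,\cdot]}\|$ from Lemma~\ref{lemma: l1 estimates}(iv); the mean‑zero remainder would be handled by the same Bernstein argument as above, uniformly in $j$. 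The symmetric computation handles (iii) uniformly in $i$.

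\textbf{Parts (v)--(vi) and the main obstacle.} The global norms would follow by aggregation: $\|\cH_{uf^{\prime}}\cH_{ff^{\prime}}^{-1}\bS_f\|^2=\sum_i\|[\cH_{uf^{\prime}}\cH_{ff^{\prime}}^{-1}\bS_f]_{[K_i]}\|^2\le n\max_i\|[\cH_{uf^{\prime}}\cH_{ff^{\prime}}^{-1}\bS_f]_{[K_i]}\|^2$, so (v) is $\sqrt n$ times the uniform bound (iii), and likewise (vi) is $\sqrt q$ times (iv). The hard part is in (iii)--(iv): a naive Cauchy--Schwarz bound $\|[\cdot]_{[K_i]}\|\le\|[\cH_{uf^{\prime}}]_{[K_i,\cdot]}\|\,\|\cH_{ff^{\prime}}^{-1}\bS_f\|$ is too lossy to reach the claimed rate, so the improvement hinges precisely on extracting and separately controlling the self‑coupled ``diagonal'' terms in which $\cH_{uf^{\prime}}$ and the inverse‑Hessian‑times‑score share the same response index, and on showing that the remaining cross terms concentrate. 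Verifying the conditional independence and moment conditions needed for that concentration, uniformly over all $n$ (resp. $q$) blocks, is the delicate step.
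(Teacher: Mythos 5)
Your proposal is correct and follows essentially the same route as the paper's proof: blockwise explicit expressions obtained from the block-diagonal $\Hb_{Lff'}$ and $\Hb_{Luu'}$, Bernstein-type concentration of the mean-zero centered-score sums against (deterministically bounded) weights — which supplies the $\epsilon_{nq}$ factors and, via $\max_{i,j}|l_{ij}'(w_{ij}^0)|=O_p((nq)^{1/\xi})$, the $(nq)^{1/\xi}$ factors — a Woodbury reduction of $\cH_{ff'}^{-1},\cH_{uu'}^{-1}$ to their $\Hb_L$ blocks with corrections controlled by parts (i)–(ii) and the norm estimates of Lemma~\ref{lemma: l1 estimates}, and finally (v)–(vi) obtained as $\sqrt{n}$ (resp.\ $\sqrt{q}$) times the uniform block bounds (iii)–(iv). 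You also correctly interpret the dimensionally inconsistent statement of (ii) as its companion quantity $\bLambda_2^{\T}\Hb_{Luu'}^{-1}\bS_u$, which is exactly what the paper bounds, and your separation of self-coupled ``diagonal'' response terms from mean-zero cross terms is the same mechanism the paper implements by splitting $\cH_{uf'}=\Hb_{Luf'}+\Hb_{Ruf'}+\bLambda_2\bLambda_1^{\T}$ and bounding each piece.
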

\begin{proof}
For (i), since $\bLambda_{1}=(\bLambda_{1\bullet},\zero_{q(K+p)\times Kp})$, we only need to prove first $K^2$ entries of $\bLambda_1^\T\Hb_{Lff^{\prime}}^{-1}\bS_f$ can be bounded by $O_p\big(\sqrt{p }n^{-1/2}\delta_{nq}^{-1}\epsilon_{nq}\big)$. By (ii) of Assumption~\ref{assumption:asymptotic normality}, for all $r\in[K]$, the $K(r)$th element can be bounded as
 \begin{align*}
    \Big|\big[\bLambda_1^\T   \Hb_{Lff^\prime}^{-1}\bS_f\big]_{[K(r)]}\Big|&=\Big|\frac{1}{q}\sum_{j=1}^q\gamma_{jr}^0 (\one_r^{(K+p)})^\T \Big[\sum_{t=1}^nl^{\prime\prime}_{tj}(w_{ij}^0)\bZ_t^0(\bZ_t^0)^\T \Big]^{-1}\Big(\sum_{i=1}^nl^{\prime}_{ij}(w_{ij}^0)\bZ_i^0\Big)\Big|\\&=\Big|\frac{1}{q}\sum_{j=1}^q\sum_{i=1}^nl^{\prime}_{ij}(w_{ij}^0)\gamma_{jr}^0 (\one_r^{(K+p)})^\T \Big[\sum_{t=1}^nl^{\prime\prime}_{tj}(w_{tj}^0)\bZ_t^0(\bZ_t^0)^\T \Big]^{-1}\bZ_i^0\Big|\\&\le \Big|\frac{1}{q}\sum_{j=1}^q\sum_{i=1}^nl^{\prime}_{ij}(w_{ij}^0)b_U\Big\{\lambda_{\min}\big[\sum_{t=1}^n\bZ_t^0(\bZ_t^0)^\T\big] \Big\}^{-1}\gamma_{jr}^0\big\|\bZ_i^0\big\|\Big|\\&= O_p\Big(\sqrt{\frac{p}{nq}}\epsilon_{nq}\Big)\end{align*}
    Here the term $\epsilon_{nq}$ arises from using Bernstein bound as $\{l_{ij}^{\prime}\}_{i,j}$ are independent with zero mean. We suppress the commonly used $\log$ terms and replace them with $\epsilon_{nq}$ for brevity.
    and for $h\neq l\in[K]$, the $K(h,l)$th element can be bounded as
    \begin{align*}
    &\,\Big|\big[\bLambda_1^\T   \Hb_{Lff^\prime}^{-1}\bS_f\big]_{[K(h,l)]}\Big|\\&=\Big|\frac{1}{q}\sum_{j=1}^q \big(\gamma_{jh}^0(\one_l^{(K+p)})+\gamma_{jl}^0(\one_h^{(K+p)})\big)^\T \Big[\sum_{t=1}^nl^{\prime\prime}_{tj}(w_{ij}^0)\bZ_t^0(\bZ_t^0)^\T \Big]^{-1}\Big[\sum_{i=1}^nl^\prime_{ij}(w_{ij}^0)\bZ_i^0\Big]\Big|\\&\le \Big|\frac{1}{q}\sum_{j=1}^q\sum_{i=1}^nb_L\Big\{\lambda_{\min}\big[\sum_{t=1}^n(w_{tj}^0)\bZ_t^0(\bZ_t^0)^\T\big] \Big\}^{-1}2|\gamma_{jh}^0+\gamma_{jl}|\big\|\bZ_i^0\big\|(\one_r^{(K+p)})^\T l^{\prime}_{ij}(w_{ij}^0)\Big|\\&= O_p\Big(\sqrt{\frac{p}{nq}}\epsilon_{nq}\Big)
\end{align*}
So we proved (i).

For (ii), note that $\cH_{uf^{\prime}}$ can be written as a sum of three parts: $
    \Hb_{Luf^{\prime}}+\Hb_{Ruf^{\prime}}+\bLambda_2\bLambda_1^\T$, $[\cH_{uf^{\prime}}  \cH_{ff^\prime}^{-1}\bS_f]_{[K_i]}$ can be bounded by
    \begin{equation*}
    \|[\cH_{uf^{\prime}}  \cH_{ff^\prime}^{-1}\bS_f]_{[K_i]}\|\le \big\|[\Hb_{Luf^{\prime}}\cH_{ff^{\prime}}^{-1}\bS_f]_{[K_i,]}\big\|+ \big\|[\Hb_{Ruf^{\prime}}\cH_{ff^{\prime}}^{-1}\bS_f]_{[K_i,]}\big\|+ \big\|[\bLambda_2\bLambda_1^\T\cH_{ff^{\prime}}^{-1}\bS_f]_{[K_i,]}\big\|,
\end{equation*}
with $\cH_{ff^{\prime}}^{-1}=\Hb_{Lff}^{-1}-\Hb_{Lff}^{-1}\bLambda_1(\Ib_{K^2}+\bLambda_1^\T\Hb_{Lff^{\prime}}^{-1}\bLambda_1)^{-1}\bLambda_1^\T\Hb_{Lff^{\prime}}^{-1}$.
Then for the first term, by Assumptions~\ref{assumption: psd covariance}--\ref{assumption:smoothness}, we have
\begin{align}
       &\,\Big|[\Hb_{Luf^{\prime}}  \Hb_{Lff^\prime}^{-1}\bS_f]_{[(i-1)K+r]}\Big|\nonumber\\&=\Big|(nq)^{-1}\sum_{j=1}^ql_{ij}^{\prime\prime}(w_{ij}^0)\gamma_{jr}^0{\bZ_i^0}^\T\big(\sum_{t=1}^nl_{tj}^{\prime\prime}(w_{tj}^0)\bZ_t^0{\bZ_t^0}^\T\big)^{-1}\big(\sum_{t=1}^nl_{tj}^{\prime}(w_{tj}^0)\bZ_t^0\big)\Big|\nonumber\\&\le \frac{1}{nq}\Big|\sum_{j=1}^q\sum_{t=1}^nl_{tj}^{\prime}(w_{tj}^0)b_U^2\Big\{\lambda_{\min}\big[\sum_{s=1}^n\bZ_s^0(\bZ_s^0)^\T\big]\Big\}^{-1}\|\bZ_t\|\|\bZ_i^0\||\gamma_{jr}|\Big|\nonumber\\&= O_p\big(pn^{-3/2}q^{-1/2}\epsilon_{nq}\big),\label{eq_lemma8_eq1}
\end{align}
and by (i), (iii), (vii) of Lemma~\ref{lemma: l1 estimates} and the previous bound for $\|\bLambda_1^\T\Hb_{Lff^{\prime}}^{-1}\bS_f\|$,
\begin{align}
    \big\|[\Hb_{Luf^{\prime}}&\Hb_{Lff}^{-1}\bLambda_1(\Ib_{K^2}+\bLambda_1^\T\Hb_{Lff^{\prime}}^{-1}\bLambda_1)^{-1}\bLambda_1^\T\Hb_{Lff^{\prime}}^{-1}\bS_f]_{[K_i,]}\big\|\nonumber\\&\le\big\|[\Hb_{Luf^{\prime}}]_{[K_i,]}\big\|\big\|\Hb_{Lff}^{-1}\bLambda_1(\Ib_{K^2}+\bLambda_1^\T\Hb_{Lff^{\prime}}^{-1}\bLambda_1)^{-1}\big\|\big\|\bLambda_1^\T\Hb_{Lff^{\prime}}^{-1}\bS_f\big\|\nonumber\\&= O_p\big(p(nq)^{1/\xi}n^{-3/2}q^{-1/2}\epsilon_{nq}\big).\label{eq_lemma8_eq2}
\end{align}
\eqref{eq_lemma8_eq1} and \eqref{eq_lemma8_eq2} together imply $\|[\Hb_{Luf^{\prime}}  \cH_{ff^\prime}^{-1}\bS_f]_{[K_i]}\|= O_p\big(p(nq)^{1/\xi}n^{-3/2}q^{-1/2}\epsilon_{nq}\big)$. For the second term $[\Hb_{Luf^{\prime}}\cH_{ff^{\prime}}^{-1}\bS_f]_{[K_i,]}$,
\begin{align*}
       \Big|[\Hb_{Ruf^{\prime}}  \Hb_{Lff^\prime}^{-1}\bS_f]_{[(i-1)K+r]}\Big|\le &\left|\frac{1}{nq}\sum_{j=1}^q\Big|\sum_{t=1}^nl_{tj}^{\prime}(w_{tj}^0)\|\bZ_t\|b_U\big\{\lambda_{\min}\big[\sum_{s=1}^n\bZ_s^0(\bZ_s^0)^\T\big]\big\}^{-1}\Big|l_{ij}^{\prime}(w_{ij}^0)\right| \\= &O_p\big(pn^{-3/2}q^{-1/2}\epsilon_{nq} \big).
\end{align*}
Similar with the estimates \eqref{eq_lemma8_eq2} we have 
\begin{equation*}
    \big\|[\Hb_{Ruf^{\prime}}\Hb_{Lff}^{-1}\bLambda_1(\Ib_{K^2}+\bLambda_1^\T\Hb_{Lff^{\prime}}^{-1}\bLambda_1)^{-1}\bLambda_1^\T\Hb_{Lff^{\prime}}^{-1}\bS_f]_{[K_i,]}\big\|= O_p\big(pn^{-3/2}q^{-1/2}(nq)^{1/\xi}\epsilon_{nq}\big),
\end{equation*}
and consequently, we have 
 $\|[\Hb_{Ruf^{\prime}}  \cH_{ff^\prime}^{-1}\bS_f]_{[K_i]}\|= O_p\big(pn^{-3/2}q^{-1/2}(nq)^{1/\xi}\epsilon_{nq}\big)$. Next for \\$[\bLambda_2\bLambda_1^\T  \cH_{ff^\prime}^{-1}\bS_f]_{[K_i]}$, by the previous bound for $\|\bLambda_1^\T\Hb_{Lff^{\prime}}^{-1}\bS_f\|$, estimates of \eqref{eq_lemma8_eq2}, (viii) of Lemma~\ref{lemma: l1 estimates}, we have
\begin{align*}
       \big\|[\bLambda_2\bLambda_1^\T  \Hb_{Lff^\prime}^{-1}\bS_f]_{[K_i]}&\le \big\|[\bLambda_2]_{[K_i,]}\big\|\big\|\big\|\bLambda_1^\T  \Hb_{Lff^\prime}^{-1}\bS_f\big\|=O_p\big(pn^{-3/2}q^{-1/2}\epsilon_{nq}\big),
\end{align*}
and 
\begin{equation*}
   \big\| [\bLambda_2\bLambda_1^\T  \Hb_{Lff}^{-1}\bLambda_1(\Ib_{K^2}+\bLambda_1^\T\Hb_{Lff^{\prime}}^{-1}\bLambda_1)^{-1}\bLambda_1^\T\Hb_{Lff^{\prime}}^{-1}\bS_f]_{[K_i]}\big\|= O_p\big(pn^{-3/2}q^{-1}/2\epsilon_{nq}\big).
\end{equation*}
All these together imply (ii) and (iii).

Next for (iv), since $\bLambda_{2}\in\RR^{nK\times K^2+Kp}$, we show that each element can be bounded by $O_p\big(\sqrt{p}n^{-1/2}q^{-1/2}\epsilon_{nq}\big)$. By Lemma~\ref{lemma:first-order derivative} and (ii) of Lemma~\ref{lemma: l1 estimates}, for all $r\in[K]$, the $K(r)$th element can be bounded as
    \begin{align*}
    \Big|\big[\bLambda_2^\T   \Hb_{Luu^\prime}^{-1}\bS_u\big]_{[K(r)]}\Big|&=\Big|\frac{1}{n}\sum_{i=1}^nU_{ir}^0 (\one_r^{(K)})^\T \Big[\sum_{t=1}^ql^{\prime\prime}_{it}(w_{it}^0)\bgamma_t^0(\bgamma_t^0)^\T \Big]^{-1}\Big(\sum_{j=1}^ql^{\prime}_{ij}(w_{ij}^0)\bgamma_j^0\Big)\Big|\\&\le \Big|\frac{1}{n}\sum_{i=1}^n\sum_{j=1}^ql^{\prime}_{ij}(w_{ij}^0)\|\bgamma_j^0\||U_{ir}^0|\Big\{b_U\lambda_{\min}\big[\sum_{t=1}^q\bgamma_t^0(\bgamma_t^0)^\T \big]\Big\}^{-1}\\&= O_p\Big(\sqrt{\frac{1}{nq}}\epsilon_{nq}\Big),\end{align*}
    and similarly for all $h\neq l\in[K]$, the $K(h,l)$th element can be bounded as
    \begin{align*}
    &\,\Big|\big[\bLambda_2^\T   \Hb_{Luu^\prime}^{-1}\bS_u\big]_{[K(l,h)]}\Big|\\&=\Big|\frac{1}{n}\sum_{i=1}^n\big(U_{ih}^0 \one_l^{(K)}+U_{il}^0\one_{h}^{(K)}\big)^\T \Big[\sum_{t=1}^ql^{\prime\prime}_{it}(w_{it}^0)\bgamma_t^0(\bgamma_t^0)^\T \Big]^{-1}\Big(\sum_{j=1}^ql^{\prime}_{ij}(w_{ij}^0)\bgamma_j^0\Big)\Big|\\&\le \Big|\frac{1}{n}\sum_{i=1}^n\sum_{j=1}^ql^{\prime}_{ij}(w_{ij}^0)\|\bgamma_j^0\|2|U_{il}^0+U_{ih}^0|\Big\{b_U\lambda_{\min}\big[\sum_{t=1}^q\bgamma_t^0(\bgamma_t^0)^\T \big]\Big\}^{-1}\\&= O_p\Big(\sqrt{\frac{1}{nq}}\epsilon_{nq}\Big),\end{align*}
    for all $r\in[K],s\in[p]$, the $\{K^2+(r-1)p+s\}$th element can be bounded as
    \begin{align*}
    &\,\big[\bLambda_2^\T   \Hb_{Luu^\prime}^{-1}\bS_u\big]_{[K^2+(r-1)p+s]}\\&=\frac{1}{n}\sum_{i=1}^nX_{is} (\one_r^{(K)})^\T \Big[\sum_{t=1}^ql^{\prime\prime}_{it}(w_{it}^0)\bgamma_t^0(\bgamma_t^0)^\T \Big]^{-1}\Big(\sum_{j=1}^ql^{\prime}_{ij}(w_{ij}^0)\bgamma_j^0\Big)\\&\le \Big|\frac{1}{n}\sum_{i=1}^n\sum_{j=1}^ql^{\prime}_{ij}(w_{ij}^0)\|\bgamma_j^0\||X_{is}|\Big\{b_U\lambda_{\min}\big[\sum_{t=1}^q\bgamma_t^0(\bgamma_t^0)^\T \big]\Big\}^{-1}\\&= O_p\Big(\sqrt{\frac{1}{nq}}\epsilon_{nq}\Big),\;1\le r\le K\end{align*}
Together we obtain (iv). For (v) and (vi), it suffices to show that each term on the right side of the following
\begin{equation*}
    \|[\cH_{fu^{\prime}}  \cH_{uu^\prime}^{-1}\bS_u]_{[P_j]}\|\le \big\|[\Hb_{Lfu^{\prime}}\cH_{uu}^{-1}\bS_u]_{[P_j,]}\big\|+\big\|[\Hb_{Rfu^{\prime}}\cH_{uu}^{-1}\bS_u]_{[P_j,]}\big\|+\big\|[\bLambda_1\bLambda_2^\T\cH_{uu}^{-1}\bS_u]_{[P_j,]}\big\|
\end{equation*}
 can be bounded by $O_p\big(pn^{-1/2}q^{-3/2}(nq)^{1\xi}\epsilon_{nq}\big)$, which are followed by the technique of proving (ii) and (iii), (ii), (iv), (viii) of Lemma~\ref{lemma: l1 estimates}.
\end{proof}
\subsection{Proof of Lemma~\ref{lemma:asym_prop_res}}
\begin{lemma}\label{lemma:asym_prop_res}
Under Assumptions~\ref{assumption: psd covariance}-\ref{assumption:asymptotic normality}, the residuals defined in the proof of Lemma~\ref{thm:asymptotic normality} can be bounded as follows:
    \begin{equation}
     \big\|\big[\Rb_f\big]_{[P_j]}\big\|=O_p\Big(\frac{p}{q\zeta_{nq,p}^2}\Big),\quad
\big\|\Rb_f\big\|=O_p\Big(\frac{p}{\sqrt q\zeta_{nq,p}^2}\Big).\label{estimateRt}
\end{equation}
\begin{equation}
    \big\|\big[\Rb_u\big]_{[K_i]}\big\|=O_p\Big(\frac{p}{n\zeta_{nq,p}^2}\Big),\quad
\big\|\Rb_u\big\|=O_p\Big(\frac{p}{\sqrt n\zeta_{nq,p}^2}\Big).\label{estimateRt2}
\end{equation}
\end{lemma}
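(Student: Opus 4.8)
The plan is to treat every entry of $\Rb$ as a quadratic form $\bDelta^\T\partial_{\bphi\bphi\theta}\cL(\bphi^\flat)\bDelta$ in the estimation error $\bDelta:=\hat\bphi^0-\bphi^0$, where $\theta$ ranges over the coordinates $\gamma_{jr},\beta_{js}$ (for $\Rb_f$) or $U_{ir}$ (for $\Rb_u$), and $\bphi^\flat$ lies on the segment between $\hat\bphi^0$ and $\bphi^0$. First I would record that, because $w_{ij}=\bbf_j^\T\bZ_i$ couples only the block $\bbf_j$ with $\bU_i$, the third-derivative matrix $\partial_{\bphi\bphi\theta}(-L)$ is highly sparse: for $\theta=\gamma_{jr}$ its only nonzero blocks are $(\bbf_j,\bbf_j)$, $(\bbf_j,\bU_i)$ and $(\bU_i,\bU_i)$ with the single index $j$ fixed, while for $\theta=U_{ir}$ the nonzero blocks are $(\bbf_j,\bbf_j)$, $(\bbf_j,\bU_i)$, $(\bU_i,\bU_i)$ with the single index $i$ fixed but summed over all $j$. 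Each nonzero block carries the prefactor $(nq)^{-1}$ together with a bounded third/second derivative ($|l_{ij}'''|,|l_{ij}''|\le b_U$ by Assumption~\ref{assumption:smoothness}) and bounded factors $U_{ir},\gamma_{jr},X_{is}$; the terms arising from differentiating the explicit coordinate factors (e.g.\ $\partial_{\bU_i}U_{ir}=\be_r$) instead carry $l_{ij}'$, which I would control at the intermediate point via $|l_{ij}'(w_{ij}^\flat)|\le|l_{ij}'(w_{ij}^0)|+b_U|w_{ij}^\flat-w_{ij}^0|$ together with the moment and sub-exponential bounds of Assumption~\ref{assumption:smoothness}.

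The core of the argument is a block-by-block estimate of $\bDelta^\T M\bDelta$ combining these magnitudes with the consistency rates $\|\hat\bbf_j^0-\bbf_j^0\|=O_p(\zeta_{nq,p}^{-1})$, $\|\hat\bU_i^0-\bU_i^0\|=O_p(\sqrt p\,\zeta_{nq,p}^{-1})$ (Lemma~\ref{prop:ind consistency}) and $\|\hat\bbf_v^0-\bbf_v^0\|^2=O_p(q\zeta_{nq,p}^{-2})$, $\|\hat\Ub^0-\Ub^0\|_F^2=O_p(n\zeta_{nq,p}^{-2})$ (Lemma~\ref{prop:average consistency}). The two governing facts are $\|\bZ_i\|_2=O(\sqrt p)$ (since $\|\bX_i\|_\infty\le M$ forces $\|\bX_i\|_2=O(\sqrt p)$ while $\|\bU_i\|_2=O(1)$) and $\lambda_{\max}(\sum_i\bZ_i\bZ_i^\T)=O(n)$. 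For a $(\bbf_j,\bbf_j)$ block I would bound $\sum_i l_{ij}'''U_{ir}(\bZ_i^\T(\hat\bbf_j^0-\bbf_j^0))^2\le b_U M\,\lambda_{\max}(\sum_i\bZ_i\bZ_i^\T)\,\|\hat\bbf_j^0-\bbf_j^0\|^2$; for the cross and $(\bU_i,\bU_i)$ blocks I would apply Cauchy--Schwarz over the summation index to turn the $l_{ij}'$ factors into $(\sum_i(l_{ij}')^2)^{1/2}=O_p(\sqrt n)$ and to pool individual errors into the averaged quantities. The crucial bookkeeping is the asymmetry in the $p$-dependence: for $\Rb_f$ the $(\bU,\bU)$ block is not summed over $j$, so each coordinate is $O_p(q^{-1}\zeta_{nq,p}^{-2})$, whereas for $\Rb_u$ the $(\bbf_j,\bbf_j)$ block \emph{is} summed over $j$ and carries $\|\bZ_i\|_2^2=O(p)$, producing the extra factor $p$ and giving each coordinate $O_p(pn^{-1}\zeta_{nq,p}^{-2})$.

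Having handled the log-likelihood part, I would dispatch the penalty contribution $\partial_{\bphi\bphi\theta}(-P)$ separately: since $P$ is a polynomial of degree at most four in the parameters with the normalizations $n^{-1},q^{-1}$, its third derivatives are linear in the bounded parameters and inherit scalings $(nq)^{-1},n^{-2},q^{-2}$, so the associated quadratic forms are of the same or smaller order and are absorbed into the bounds above. Finally I would assemble the block norms $\|[\Rb_f]_{[P_j]}\|$ and $\|[\Rb_u]_{[K_i]}\|$ by counting the $K+p$ (resp.\ $K$) coordinates, and the full norms $\|\Rb_f\|$, $\|\Rb_u\|$ by squaring and summing over $j$ (resp.\ $i$), invoking the averaged rates of Lemma~\ref{prop:average consistency} so that $\|\hat\bbf_v^0-\bbf_v^0\|^2$ and $\|\hat\Ub^0-\Ub^0\|_F^2$ enter with the correct $\sqrt q$, $\sqrt n$ normalizations. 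The main obstacle is precisely this $p$-tracking: one must retain the $\|\bZ_i\|_2=O(\sqrt p)$ factors, distinguish the single-index from the summed-index block structure, and control the unbounded $l_{ij}'$ terms all at once, applying Cauchy--Schwarz on the correct summation index so that the $\sqrt p$-inflated individual $\bU_i$ rates are replaced by their pooled averages wherever the full norm is needed, rather than naively summing worst-case per-coordinate bounds.
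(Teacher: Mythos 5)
Your route is the same as the paper's: write each coordinate of $\Rb_f$ and $\Rb_u$ as the quadratic form $(\hat\bphi^0-\bphi^0)^\T\partial_{\bphi\bphi\theta}\cL(\bphi^\flat)(\hat\bphi^0-\bphi^0)$, exploit the block sparsity of the third-derivative tensor (fixed $j$ summed over $i$ for $\Rb_f$; fixed $i$ summed over $j$ for $\Rb_u$), bound the likelihood blocks via $\lambda_{\max}(\sum_i\bZ_i\bZ_i^\T)=O(n)$, $\|\bZ_i\|=O(\sqrt p)$, the rates of Lemmas~\ref{prop:average consistency} and~\ref{prop:ind consistency}, and Cauchy--Schwarz on the summation index, then assemble block and full norms. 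This matches the paper's displays \eqref{B71a}--\eqref{B71ccc}, and your key bookkeeping point — the $\|\bZ_i\|^2=O(p)$ factor enters only the $\Rb_u$ coordinates, where the $(\bbf_j,\bbf_j)$ blocks are summed over $j$ — is exactly the source of the stated rates.

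Two corrections are needed. First, your claim that differentiating the explicit coordinate factors produces $l_{ij}'$ terms is false: $w_{ij}=\bgamma_j^\T\bU_i+\bbeta_j^\T\bX_i$ is bilinear in the parameters, so all third partial derivatives of $w_{ij}$ vanish and the third-derivative tensor of the log-likelihood contains only $l_{ij}'''$ and $l_{ij}''$ factors, both bounded by $b_U$ on the compact region (as \eqref{B71a}--\eqref{B71ccc} show); your sub-exponential control of $l'(w_{ij}^\flat)$ is unnecessary, and the argument is simpler than you describe. Second — the genuine gap — the penalty contribution cannot be ``absorbed'' by scaling alone. The matrix $\partial_{\gamma_{jr}}\Hb_P(\bphi^\flat)$ contains rank-one pieces of the form $\Db_q^{-1}\eb_{\gamma_{jr}}\bnu_{rr}^\T\Db_q^{-1}$ (with $\eb_{\gamma_{jr}}$ the coordinate indicator of $\gamma_{jr}$), whose operator norm is of order $q^{-3/2}$ when $q\lesssim n$; a scaling or operator-norm bound against $\|\hat\bphi^0-\bphi^0\|^2=O_p\big((n+q)\zeta_{nq,p}^{-2}\big)$ then gives $O_p\big(nq^{-3/2}\zeta_{nq,p}^{-2}\big)$, which exceeds the required $O_p\big(q^{-1}\zeta_{nq,p}^{-2}\big)$ whenever $n\gg q^{1/2}$ — in particular in the $n\gg q$ regime the paper emphasizes. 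What saves the bound is the factorized structure: one side contracts to the single coordinate $q^{-1}\delta\gamma_{jr}=O_p(q^{-1}\zeta_{nq,p}^{-1})$, while the other side is the averaged inner product $\bnu_{rr}^\T\Db_q^{-1}\delta\bphi=q^{-1}\sum_j\gamma_{jr}^\flat\delta\gamma_{jr}-n^{-1}\sum_iU_{ir}^\flat\delta U_{ir}=O_p(\zeta_{nq,p}^{-1})$ by Cauchy--Schwarz and the average rates. Moreover, since the identifiability conditions fail at $\bphi^\flat$, $\partial_\theta\Hb_P(\bphi^\flat)$ has additional terms proportional to $q^{-1}\sum_j\gamma_{jh}^\flat\gamma_{jl}^\flat$, $n^{-1}\sum_iU_{ih}^\flat U_{il}^\flat$, etc., which must be controlled through the restricted-region bounds \eqref{BB17}--\eqref{BB20}. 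Your own Cauchy--Schwarz pooling philosophy supplies these ingredients, but the penalty step as you stated it would not go through.
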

\begin{proof}
Recall we have assumed that the individual log-likelihood function is three times differentiable, i.e., $l^{\prime\prime\prime}_{ij}$ is continuous. In the following we denote $\delta \upsilon=\hat\upsilon-\upsilon^0$; here $\upsilon$ can be ${\bphi}, \Ub,\bGamma,\bbeta$. Then derivative with respect to $\bbf_{jr}$ of $\delta\bphi^\T\Hb(\bphi^\flat)\delta\bphi$ is given, with $\bphi^\flat$ on the segment between $\hat\bphi$ and $\bphi^0$, as
\begin{subequations}
\begin{align}
    \delta{\bphi_v}^\T \partial_{ \gamma_{jr}}\Hb(\bphi^\flat)\delta{\bphi_v}=&-(nq)^{-1}\Big\{\delta\bbf_j^\T\big(\sum_{i=1}^n  U_{ir}^\flat l^{\prime\prime\prime}_{ij}(w_{ij}^\flat)\bZ_i^\flat(\bZ_i^\flat)^\T\big)\delta\bbf_j\label{B71a}\\& +\sum_{i=1}^n\Big[U_{ir}^\flat l^{\prime\prime\prime}_{ij}(w_{ij}^\flat)\big(\delta\bU_i^\T  \bgamma_j^\flat\big)\big(\delta\bbf_j^\T  \bZ_i^\flat\big)+ l^{\prime\prime}_{ij}(w_{ij}^\flat) \big(\delta\bbf_j^\T\bZ_i^\flat\big)\delta U_{ir} \Big]\label{B71b}\\&+
    \sum_{i=1}^n\Big[U_{ir}^\flat l_{ij}^{\prime\prime\prime}(w_{ij}^\flat)\big(\delta\bU_i^\T\bgamma_j^\flat\big)^2+2l_{ij}^{\prime\prime}(w_{ij}^\flat)\big(\delta\bU_i^\T\bgamma_j^\flat\big)\delta U_{ir}\Big]\Big\},\label{B71c}
\end{align}\end{subequations}
for $r\in[K]$, and 
\begin{subequations}
\begin{align}
    \delta{\bphi_v}^\T \partial_{ \beta_{js}}\Hb(\bphi^\flat)\delta{\bphi_v}=&-(nq)^{-1}\Big\{\delta\bbf_j^\T \big(\sum_{i=1}^n  X_{is} l^{\prime\prime\prime}_{ij}(w_{ij}^\flat)\big(\bZ_i^\flat(\bZ_i^\flat)^\T\big)\delta\bbf_j\label{B71aa}\\& +\sum_{i=1}^n\Big[X_{is} l^{\prime\prime\prime}_{ij}(w_{ij}^\flat)\big(\delta\bU_i^\T  \bgamma_j^\flat\big)\big(\delta\bbf_j^\T  \bZ_i^\flat\big)\Big]\label{B71bb}\\&+
    \sum_{i=1}^n\Big[X_{is}l_{ij}^{\prime\prime\prime}(w_{ij}^\flat)\big(\delta\bU_i^\T\bgamma_j^\flat\big)^2\Big]\Big\},\label{B71cc}
\end{align}\end{subequations}
for $r=s+K$ with $s\in[p]$.
We first give an estimate for \eqref{B71a} and \eqref{B71aa}. Note that for any $r\in[K+p]$, $j\in[q]$, similar to estimating \eqref{eq_minrhominff},
\begin{align*}
    &\,\delta\bbf_j^\T \big(\sum_{i=1}^n  \bZ_{ir}^\flat l^{\prime\prime\prime}_{ij}(w_{ij}^\flat)\big(\bZ_i^\flat(\bZ_i^\flat)^\T\big)\delta\bbf_j\\\le& C\Big[\delta\bbf_j^\T\sum_{i=1}^n\bZ_i^0(\bZ_i^0)^\T\delta\bbf_j+\delta\bbf_j^\T\max_{s}\big(\bZ_i(s)\bZ_i(s)^\T-\bZ_i^0(\bZ_i^0)^\T\big)\delta\bbf_j\Big]
    \\\le &C\Big[n\big\|\delta\bbf_j\big\|^2+\frac{np}{\sqrt{\delta_{nq}}}\big\|\delta\bbf_j\big\|^2\Big],
\end{align*}
with ${\bphi}^\flat\in\cB(D)\cap\{\bphi:\sqrt{p}\|\Db_q^{-1/2}(\bphi-\bphi^0)\le m\}$. Therefore by the results from Lemma~\ref{prop:average consistency} and Lemma~\ref{prop:ind consistency}, we have
\begin{align*}
    \big|\eqref{B71a}\big|=& O_p\big(q^{-1}\|\delta\bbf_j\|^2\big)= O_p\Big(\frac{1}{q\zeta_{nq,p}^2}\Big) \\\big|\eqref{B71b}\big|=&O_p\big((nq)^{-1}\sqrt{p}\|\delta\bbf_j\|\sum\nolimits_{i=1}^n\|\delta\bU_i\|\big)= O_p\Big(\frac{\sqrt p}{q\zeta_{nq,p}^2}\Big)\\\big|\eqref{B71c}\big|=&O_p\big((nq)^{-1}\sum\nolimits_{i=1}^n\|\delta\bU_i\|\big)= O_p\Big(\frac{1}{q\zeta_{nq,p}^2}\Big)\\\big|\eqref{B71aa}\big|=& O_p\big(q^{-1}\|\delta\bbf_j\|^2\big)= O_p\Big(\frac{1}{q\zeta_{nq,p}^2}\Big) \\\big|\eqref{B71bb}\big|=&O_p\big((nq)^{-1}{\sqrt p}\|\delta\bbf_j\|\sum\nolimits_{i=1}^n\|\delta\bU_i\|\big)= O_p\Big(\frac{\sqrt p}{q\zeta_{nq,p}^2}\Big)\\\big|\eqref{B71cc}\big|=&O_p\big((nq)^{-1}\sum\nolimits_{i=1}^n\|\delta\bU_i\|\big)= O_p\Big(\frac{1}{q\zeta_{nq,p}^2}\Big).
\end{align*}
Combining these estimates gives $\|\delta{\bphi}^\T  \partial_{f_j}\Hb_L(  {\bphi}^\flat)\delta{\bphi}\|=O_p\big(pq^{-1}\zeta_{nq,p}^{-2}\big)$.
Next we compute $\partial_{ \gamma_{jr}}\Hb_P$ from \eqref{eq:HP}: 
\begin{align*}
\partial_{ \gamma_{jr}}\Hb_P=&\Db_{q}^{-1}\Big[c\sum_{r=1}^K\partial_{ \gamma_{jr}}\big(\bnu_r\bnu_r^\T \big)+c\sum_{l<h}\partial_{\gamma_{jr}}\big(\bu_{hl}\bu_{hl}^\T \big) \Big]\Db_{q}^{-1}\\&\frac{c}{2}\big(q^{-1}\sum_{j=1}^q\partial_{ \gamma_{jr}}( \gamma_{jr}^2))\Db_{q}^{-1}\begin{bmatrix}
    \Ib_q\otimes\Eb_{rr}^{(K+p)}\\&\Ib_n\otimes\Eb_{rr}^{(K)}
\end{bmatrix}\\&c\sum_{l\neq r}\Big[\partial_{\gamma_{jr}}\big(\sum_{j=1}^qq^{-1}\bgamma_{jr}\bgamma_{jl}\big)
\Db_{q}^{-1}\begin{bmatrix}\Ib_q\otimes(\Eb^{(K+p)}_{rl}+\Eb^{(K+p)}_{lr})\\&\zero_{nK\times nK}\end{bmatrix}\Big]\Db_q^{-1}\\
=&c\Db_{q}^{-1}\Big[\partial_{ \gamma_{jr}}\big(\bnu_r\bnu_r^\T \big)+\partial_{ \gamma_{jr}}\big(\sum_{l<r}\bu_{rl}\bu_{rl}^\T +\sum_{r<l}\bu_{lr}\bu_{lr}^\T \big) \Big]\Db_{q}^{-1}\\&\frac{c}{q}\gamma_{jr}\Db_{q}^{-1}\begin{bmatrix}
    \Ib_q\otimes\Eb_{rr}^{(K+p)}\\&\Ib_n\otimes\Eb_{rr}^{(K)}
\end{bmatrix}\Big]\\&cq^{-1}\Db_q^{-1}\begin{bmatrix}\big(\sum_{l\neq r} \gamma_{jl}\big)\Ib_q\otimes (\Eb^{(K+p)}_{rl}+\Eb^{(K+p)}_{lr})\\&\zero_{nK\times nK}\end{bmatrix}\Db_{q}^{-1}.
\end{align*}
Then $\delta\bphi^\T  \partial_{ \gamma_{jr}}H_P(\bphi^\flat)\delta\bphi $ can be bounded as
\begin{align*}
    \big|\delta\bphi^\T  \partial_{ \gamma_{jr}}H_P(\bphi^\flat)\delta\bphi\big|\le&\frac{2c}{q}|\delta \gamma_{jr}|\Big|\frac{1}{q}\sum_{j=1}^q \gamma_{jr}^\flat\delta \gamma_{jr}+\frac{1}{n}\sum_{i=1}^nU_{ir}^\flat\delta U_{ir}\Big|\\&+\frac{2c}{q}\sum_{ l\neq r}|\delta \gamma_{jl}|\Big[\Big|\frac{1}{q}\sum_{j=1}^q( \gamma_{jl}^\flat\delta \gamma_{jr}+ \gamma_{jr}^\flat\delta \gamma_{jl})\Big|+\Big|\frac{1}{n}\sum_{i=1}^n(U_{il}^\flat\delta U_{ir}+U_{ir}^\flat\delta U_{il})\Big|\Big]\\&+\frac{c}{q}| \gamma_{jr}^\flat|\Big|\frac{1}{n}\sum_{i=1}^n\delta U_{ir}^2+\frac{1}{q}\sum_{j=1}^q\delta \gamma_{jr}^2\Big|\\&+2\frac{1}{q^2}\sum_{r\neq l}| \gamma_{jl}^\flat|\Big|\frac{1}{q}\sum_{j=1}^q\delta \gamma_{jh}\delta \gamma_{jl}\Big|,\end{align*}
    which implies that 
    $
        \big|\delta\bphi^\T  \partial_{ \gamma_{jr}}\Hb_P(\bphi^\flat)\delta\bphi\big|=O_p\big({q^{-1}\zeta_{nq,p}^{-2}}\big)$ using \eqref{BB17}-\eqref{BB20} and ${\bphi}^\flat\in\cB(D)\cap\{\bphi:\sqrt{p}\|\Db_q^{-1/2}(\bphi-\bphi^0)\le m\}$. Therefore with $ \big|\delta\bphi^\T  \partial_{ \beta_{js}}\Hb_P(\bphi^\flat)\delta\bphi\big|=0$, we know $\big\|[\Rb_f]_{[P_j]}\big\|=O_p\big(pq^{-1}\zeta_{nq,p}^{-2}\big)$ and consequently $\big\|\Rb_f\big\|=O_p\big(pq^{-1/2}\zeta_{nq,p}^{-2}\big)$.

For $\Rb_U$, Then derivative with respect to $\bU_i$ of $\delta\bphi_v\Hb(\bphi^\flat)\delta\bphi_v$ is given as
\begin{subequations}
\begin{align}
    \delta{\bphi_v}^\T \partial_{ U_{ir}}\Hb(\bphi^\flat)\delta{\bphi_v}=&-(nq)^{-1}\Big\{\sum_{j=1}^q  \gamma_{jr}^\flat l^{\prime\prime\prime}_{ij}(w_{ij}^\flat)\big(\delta\bbf_j^\T \bZ_i^\flat\big)^2+2l_{ij}^{\prime\prime}(w_{ij}^{\flat})\delta\gamma_{jr}\big(\delta\bbf_j^\T\bZ_i^{\flat}\big)\label{B71aaa}\\& +\sum_{j=1}^q\Big[\gamma_{jr}^\flat l^{\prime\prime\prime}_{ij}(w_{ij}^\flat)\big(\delta\bU_i^\T  \bgamma_j^\flat\big)\big(\delta\bbf_j^\T  \bZ_i^\flat\big)+ l^{\prime\prime}_{ij}(w_{ij}^\flat) \big(\delta\bU_i^\T\bgamma_j^\flat\big)\delta \gamma_{jr} \Big]\label{B71bbb}\\&+
    \sum_{j=1}^q\gamma_{jr}^\flat l_{ij}^{\prime\prime\prime}(w_{ij}^\flat)\big(\delta\bU_i^\T\bgamma_j^\flat\big)^2\Big\},\label{B71ccc}
\end{align}\end{subequations}
Then similar with the estimates \eqref{B71a}-\eqref{B71c}, we have
\begin{align*}
    \big|\eqref{B71aaa}\big|=& O_p\big(p(nq)^{-1}\sum_{j=1}^q\|\delta\bbf_j\|^2+\sqrt{p}(nq)^{-1}\sum_{j=1}^q\|\delta \bbf_j\|\|\delta\bgamma_j\|\big)= O_p\Big(\frac{p}{n\zeta_{nq,p}^2}\Big) \\\big|\eqref{B71bbb}\big|=&O_p\big((nq)^{-1}(\sqrt{p}(nq)^{-1}\|\delta\bU_i\|\sum_{j=1}^q\|\delta\bbf_j\|+(nq)^{-1}\|\delta\bU_i\|\sum_{j=1}^q\|\delta\bgamma_j\|\big)= O_p\Big(\frac{p}{n\zeta_{nq,p}^2}\Big)\\\big|\eqref{B71ccc}\big|=&O_p\big(n^{-1}\|\delta\bU_i\|^2\big)= O_p\Big(\frac{p}{n\zeta_{nq,p}^2}\Big).\end{align*}
    And for $\partial_{ U_{ir}}\Hb_P$, similar with $\partial_{ \gamma_{jr}}\Hb_P$, we compute
\begin{align*}
\partial_{ U_{ir}}\Hb_P=&-\Db_{q}^{-1}\Big[c\sum_{r=1}^K\partial_{ U_{ir}}\big(\bnu_r\bnu_r^\T \big)+c\sum_{h<l}\partial_{U_{ir}}\big(\bu_{hl}\bu_{hl}^\T \big) \Big]\Db_{q}^{-1}\\&-\frac{c}{2}\big(n^{-1}\sum_{i=1}^n\partial_{ U_{ir}}( U_{ir}^2))\Db_{q}^{-1}\begin{bmatrix}
    \Ib_q\otimes\Eb_{rr}^{(K+p)}\\&\Ib_n\otimes\Eb_{rr}^{(K)}
\end{bmatrix}\\&-c\sum_{l\neq r}\Big[\partial_{U_{ir}}\big(\sum_{i=1}^nn^{-1}U_{ir}U_{il}\big)
\Db_{q}^{-1}\begin{bmatrix}\Ib_q\otimes(\Eb^{(K+p)}_{rl}+\Eb^{(K+p)}_{lr})\\&\zero_{nK\times nK}\end{bmatrix}\Big]\Db_q^{-1}\\
=&-c\Db_{q}^{-1}\Big[\partial_{ U_{ir}}\big(\bnu_r\bnu_r^\T \big)+\partial_{ U_{ir}}\big(\sum_{r<l}\bu_{rl}\bu_{rl}^\T +\sum_{l<r}\bu_{lr}\bu_{lr}^\T \big) \Big]\Db_{q}^{-1}\\&-\frac{c}{q}U_{ir}\Db_{q}^{-1}\begin{bmatrix}
    \Ib_q\otimes\Eb_{rr}^{(K+p)}\\&\Ib_n\otimes\Eb_{rr}^{(K)}
\end{bmatrix}\Big]\\&-cq^{-1}\Db_q^{-1}\begin{bmatrix}\zero_{q(K+p)\times q(K+p)}\\&\big(\sum_{l\neq r} U_{il}\big)\Ib_n\otimes (\Eb^{(K)}_{rl}+\Eb^{(K)}_{lr})\end{bmatrix}\Db_{q}^{-1}.
\end{align*}
Then similarly $\delta\bphi^\T  \partial_{ U_{ir}}\Hb_P(\bphi^\flat)\delta\bphi$ can be bounded as follows
\begin{align*}
    \big|\delta\bphi^\T  \partial_{ U_{ir}}\Hb_P(\bphi^\flat)\delta\bphi\big|\le&\frac{2c}{n}|\delta U_{ir}|\Big|\frac{1}{q}\sum_{i=1}^n \gamma_{jr}^\flat\delta \gamma_{jr}+\frac{1}{n}\sum_{i=1}^nU_{ir}^\flat\delta \bU_{ir}\Big|\\&+\frac{2c}{n}\sum_{ l\neq r}|\delta U_{il}|\Big[\Big|\frac{1}{q}\sum_{i=1}^n( \gamma_{jl}^\flat\delta U_{ir}+ U_{ir}^\flat\delta \gamma_{jl})\Big|\\
    &+\Big|\frac{1}{n}\sum_{i=1}^n(U_{il}^\flat\delta U_{ir}+U_{ir}^\flat\delta U_{il})\Big|\Big]\\&+\frac{c}{n}| U_{ir}^\flat|\Big|\frac{1}{q}\sum_{j=1}^q\delta \gamma_{jr}^2+\frac{1}{n}\sum_{i=1}^n\delta U_{ir}^2\Big|\\&+2\frac{1}{n^2}\sum_{r\neq l}| \gamma_{jl}^\flat|\Big|\frac{1}{n}\sum_{i=1}^n\delta U_{ih}\delta U_{il}\Big|,\end{align*}
    which implies that $
        \big|\delta\bphi^\T  \partial_{ U_{ir}}\Hb_P(\bphi^\flat)\delta\bphi\big|= O_p\big({n^{-1}\zeta_{nq,p}^{-2}}\big)$ using \eqref{BB17}-\eqref{BB20} with ${\bphi}^\flat\in\cB(D)\cap\{\bphi:\sqrt{p}\|\Db_q^{-1/2}(\bphi-\bphi^0)\le m\}$.
Then 
\begin{equation}
    \big\|\big[\Rb_U\big]_{[K_i]}\big\|=O_p\Big(\frac{p}{n\zeta_{nq,p}^2}\Big),\quad
\big\|\Rb_U\big\|=O_p\Big(\frac{p}{\sqrt{n}\zeta_{nq,p}^2}\Big).\label{estimateRa}
\end{equation}
    \end{proof}
    
\subsection{Proof of Lemma~\ref{lemma:asym_estimate}}
\begin{lemma}\label{lemma:asym_estimate}
    Under Assumption~\ref{assumption: psd covariance}--\ref{assumption:asymptotic normality}, we have 
    
   \noindent(\romannumeral1) \begin{align} \big\|[\cH ^{-1}(\bphi^0)\bS(\bphi^0)]_{[P_j]}&-[ \Hb_{Lff^\prime}^{-1}(\bphi^0)\bS_f(\bphi^0)]_{[P_j]}\big\|=O_p\Big(\frac{p(nq)^{3/\xi}}{\sqrt{nq}}\epsilon_{nq}\Big);\label{eq:prop5_eq11}\\\big\|[\cH ^{-1}(\bphi^0)\Rb]_{[P_j]}\big\|&=O_p\Big(\frac{p(nq)^{3/\xi}}{\zeta_{nq,p}^2}\Big).\label{eq:prop5_eq12}\end{align}
\noindent(\romannumeral2) \begin{align}\big\|[\cH ^{-1}(\bphi^0)\bS(\bphi^0)]_{[q(K+p)+K_i]}&-[\Hb_{Luu^\prime}^{-1}(\bphi^0)\bS_u(\bphi^0)]_{[K_i]}\big\|\label{eq:prop5_eq21}\\
&=O_p\Big(\frac{p^{3/2}(nq)^{3/\xi}}{\sqrt{nq}}\epsilon_{nq}\Big);\nonumber \\\big\|[\cH ^{-1}(\bphi^0)\Rb]_{[q(K+p)+K_i]}\big\|&=O_p\Big(\frac{p^{3/2}(nq)^{3/\xi}}{\zeta_{nq,p}^2}\Big) .\label{eq:prop5_eq22}\end{align}
\end{lemma}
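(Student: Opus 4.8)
The plan is to pass from the full inverse Hessian to its leading block-diagonal part by block (Schur-complement) inversion, so that every error term becomes a product of off-diagonal Hessian blocks, the low-rank penalty factors, and the score or residual, each of which is already controlled by the preceding lemmas. Solving $\cH(\bphi^0)y=\bS(\bphi^0)$ block-wise gives $y_f=\cH_{-f}(\bS_f-\cH_{fu^\prime}\cH_{uu^\prime}^{-1}\bS_u)$ and, symmetrically, $y_u=\cH_{-u}(\bS_u-\cH_{uf^\prime}\cH_{ff^\prime}^{-1}\bS_f)$, with $\cH_{-f},\cH_{-u}$ the Schur complements and $\tilde\cH_{ff^\prime},\tilde\cH_{fu^\prime},\dots$ the blocks of $\cH^{-1}(\bphi^0)$ as in the proof of Lemma~\ref{lemma: colsum norm cH tilde}. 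Since $\cH_{ff^\prime}=\Hb_{Lff^\prime}+\bLambda_1\bLambda_1^\T$, the resolvent identity yields $\cH_{-f}-\Hb_{Lff^\prime}^{-1}=-\cH_{-f}(\bLambda_1\bLambda_1^\T-\cH_{fu^\prime}\cH_{uu^\prime}^{-1}\cH_{uf^\prime})\Hb_{Lff^\prime}^{-1}$, and an entirely parallel identity holds for $\cH_{-u}-\Hb_{Luu^\prime}^{-1}$ with $\bLambda_2$ in place of $\bLambda_1$.

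For the score estimate \eqref{eq:prop5_eq11} I would combine the two identities to obtain the decomposition
\begin{align*}
[\cH^{-1}\bS]_{[P_j]}-[\Hb_{Lff^\prime}^{-1}\bS_f]_{[P_j]}
&=-[\cH_{-f}\bLambda_1\bLambda_1^\T\Hb_{Lff^\prime}^{-1}\bS_f]_{[P_j]}\\
&\quad+[\cH_{-f}\cH_{fu^\prime}\cH_{uu^\prime}^{-1}\cH_{uf^\prime}\Hb_{Lff^\prime}^{-1}\bS_f]_{[P_j]}\\
&\quad-[\cH_{-f}\cH_{fu^\prime}\cH_{uu^\prime}^{-1}\bS_u]_{[P_j]},
\end{align*}
and bound each piece on the $P_j$ block by expanding $\cH_{-f}=\cH_{ff^\prime}^{-1}+\cH_{ff^\prime}^{-1}\cH_{fu^\prime}\cH_{-u}\cH_{uf^\prime}\cH_{ff^\prime}^{-1}$ and using that the leading factor $\cH_{ff^\prime}^{-1}$ is block-diagonal up to the $\bLambda_1$ low-rank correction, so that applying it to a per-block-bounded vector preserves the per-block rate. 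The penalty piece is handled through $\|\bLambda_1^\T\Hb_{Lff^\prime}^{-1}\bS_f\|=O_p(\sqrt{p/(nq)}\,\epsilon_{nq})$ (Lemma~\ref{lemma: tech_for_normality}(i)) together with $\|\cH_{ff^\prime}^{-1}\|=O_p(q)$, $\|\bLambda_1\|=O(q^{-1/2})$, $\|[\bLambda_1]_{[P_j,\cdot]}\|=O(q^{-1})$ from Lemma~\ref{lemma: l1 estimates}; the cross piece uses the per-block bound $\|[\cH_{fu^\prime}\cH_{uu^\prime}^{-1}\bS_u]_{[P_j]}\|=O_p(pn^{-1/2}q^{-3/2}(nq)^{1/\xi}\epsilon_{nq})$ (Lemma~\ref{lemma: tech_for_normality}(iv)) amplified only by the block-diagonal $\cH_{ff^\prime}^{-1}$; the double-Schur piece is of strictly higher order in the off-diagonal blocks and is absorbed via Lemma~\ref{lemma: l1 estimates}(iii)--(vi). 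Part (ii), \eqref{eq:prop5_eq21}, follows identically after exchanging the roles of $f$ and $u$; the sole difference is that the per-subject latent-factor estimates carry an extra $\sqrt p$ (cf. Lemma~\ref{prop:ind consistency} and Lemma~\ref{lemma: l1 estimates}(vi)), which is exactly the source of the $p^{3/2}$ in place of $p$.

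For the residual bounds \eqref{eq:prop5_eq12} and \eqref{eq:prop5_eq22} I would write $[\cH^{-1}\Rb]_{[P_j]}=[\tilde\cH_{ff^\prime}\Rb_f]_{[P_j]}+[\tilde\cH_{fu^\prime}\Rb_u]_{[P_j]}$ and again separate the block-diagonal leading part of $\tilde\cH_{ff^\prime}$, which acts on the per-block residual bound $\|[\Rb_f]_{[P_j]}\|=O_p(p/(q\zeta_{nq,p}^2))$ of Lemma~\ref{lemma:asym_prop_res} and therefore contributes only $O_p(p/\zeta_{nq,p}^2)$, from the off-diagonal corrections, which are controlled through the $l_\infty$ bounds $\|\tilde\cH_{ff^\prime}\|_\infty=O_p(\sqrt p\,q(nq)^{2/\xi})$ and $\|\tilde\cH_{fu^\prime}\|_\infty=O_p(\sqrt p\,n(nq)^{3/\xi})$ of Lemma~\ref{lemma: colsum norm cH tilde} paired with $\|\Rb_f\|_\infty,\|\Rb_u\|_\infty$. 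The dominant contribution comes from the cross term $\tilde\cH_{fu^\prime}\Rb_u$ and produces the stated $O_p(p(nq)^{3/\xi}/\zeta_{nq,p}^2)$; the $u$-block version \eqref{eq:prop5_eq22} once more picks up the extra $\sqrt p$.

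The hard part will be the norm bookkeeping rather than any single identity: each correction term is a long product of matrices whose operator, column-sum, row-sum, and per-block norms scale very differently in $n,q,p$, so a naive use of operator norms throughout injects spurious factors of $p$ and overshoots the target rates. The crux is to exploit the approximate block-diagonality of $\cH_{ff^\prime}^{-1}$ and $\cH_{uu^\prime}^{-1}$ — so that applying them to the per-block-controlled vectors from Lemma~\ref{lemma: tech_for_normality} and Lemma~\ref{lemma:asym_prop_res} retains the fast per-block rate — and to route every remaining off-diagonal factor through the sharp $l_1$ or $l_\infty$ estimate of Lemma~\ref{lemma: l1 estimates} or Lemma~\ref{lemma: colsum norm cH tilde} rather than through the crude spectral norm. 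Tracking precisely where the asymmetry between the $f$- and $u$-blocks contributes the additional $\sqrt p$ is what separates the four advertised rates.
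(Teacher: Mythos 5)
Your handling of the score bounds \eqref{eq:prop5_eq11} and \eqref{eq:prop5_eq21} is essentially the paper's own argument in different algebraic packaging: the paper expands $[\cH^{-1}(\bphi^0)\bS(\bphi^0)]_{[P_j]}$ into four explicit Woodbury terms, while you reach an equivalent three-term decomposition via the resolvent identity for $\cH_{-f}-\Hb_{Lff'}^{-1}$; both are then closed with the same inputs (Lemma~\ref{lemma: tech_for_normality} for the score-weighted products, Lemma~\ref{lemma: l1 estimates} for per-block and spectral norms, and $\|\cH_{-f}\|=O_p(q)$, $\|\cH_{-u}\|=O_p(n)$ from Lemma~\ref{prop:min eigenvalue}), and your accounting of where the extra $\sqrt p$ enters in part (ii) is correct. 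One caveat: your double-Schur piece $\cH_{-f}\cH_{fu'}\cH_{uu'}^{-1}\cH_{uf'}\Hb_{Lff'}^{-1}\bS_f$ cannot be ``absorbed via Lemma~\ref{lemma: l1 estimates}(iii)--(vi)'' alone --- pairing those generic norms with $\|\bS_f\|$ forfeits the concentration of the score and can exceed the target rate when $q\gg p$; it must be routed through the score-weighted bound $\|\cH_{uf'}\cH_{ff'}^{-1}\bS_f\|=O_p\big(pn^{-1}q^{-1/2}(nq)^{1/\xi}\epsilon_{nq}\big)$ of Lemma~\ref{lemma: tech_for_normality}(v), exactly as the paper does for its corresponding term.

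The genuine gap is in your residual bounds \eqref{eq:prop5_eq12} and \eqref{eq:prop5_eq22}. You propose to control the dominant cross term $[\tilde\cH_{fu'}\Rb_u]_{[P_j]}$ (and the off-diagonal corrections of $\tilde\cH_{ff'}$) by pairing the row-sum bounds of Lemma~\ref{lemma: colsum norm cH tilde} with entrywise bounds on the residual. That pairing gives $\|\tilde\cH_{fu'}\|_{\infty}\,\|\Rb_u\|_{\infty}=O_p\big(\sqrt p\,n(nq)^{3/\xi}\big)\cdot O_p\big(p\,n^{-1}\zeta_{nq,p}^{-2}\big)=O_p\big(p^{3/2}(nq)^{3/\xi}\zeta_{nq,p}^{-2}\big)$ merely for the sup-norm of the block, and passing to the Euclidean norm of the $(K+p)$-dimensional block $P_j$ costs another $\sqrt{K+p}$, i.e.\ $O_p\big(p^{2}(nq)^{3/\xi}\zeta_{nq,p}^{-2}\big)$ --- a full factor of $p$ above the rate claimed in \eqref{eq:prop5_eq12}. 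This excess is not absorbable: the assumptions allow $p\to\infty$ while $(nq)^{1/\xi}$ stays bounded (e.g.\ the logistic model), and these exact rates feed into the scaling conditions of Theorem~4.2, so weakening them by $p$ (or even $\sqrt p$) breaks the downstream normality results. The $\ell_\infty$ bounds of Lemma~\ref{lemma: colsum norm cH tilde} are built for the sup-norm consistency argument of Lemma~\ref{prop:infinty bound} and are intrinsically $\sqrt p$-lossy; the paper never invokes them here. Instead it runs the residual through the same four-term Woodbury decomposition as the score, pairing per-block quantities with full-vector $\ell_2$ norms: the leading block-diagonal term is $\|[\Hb_{Lff'}^{-1}]_{[P_j,P_j]}\|\,\|[\Rb_f]_{[P_j]}\|=O_p(q)\cdot O_p(pq^{-1}\zeta_{nq,p}^{-2})=O_p(p\zeta_{nq,p}^{-2})$, and every off-diagonal term is routed through $\max_j\|[\cH_{ff'}^{-1}\cH_{fu'}]_{[P_j,]}\|=O_p((nq)^{1/\xi}n^{-1/2})$ from Lemma~\ref{lemma: l1 estimates}(v) together with $\|\cH_{uu'}^{-1}\|=O_p(n)$, $\|\cH_{-u}\|=O_p(n)$, $\|\cH_{uf'}\|,\|\cH_{fu'}\|=O_p((nq)^{-1/2+1/\xi})$, $\|\cH_{ff'}^{-1}\|=O_p(q)$, and the $\ell_2$ residual bounds $\|\Rb_f\|=O_p(pq^{-1/2}\zeta_{nq,p}^{-2})$, $\|\Rb_u\|=O_p(pn^{-1/2}\zeta_{nq,p}^{-2})$ of Lemma~\ref{lemma:asym_prop_res}. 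With this bookkeeping the worst term, $\cH_{ff'}^{-1}\cH_{fu'}\cH_{-u}\cH_{uf'}\cH_{ff'}^{-1}\cH_{fu'}\cH_{uu'}^{-1}\Rb_u$, assembles --- all powers of $n$ and $q$ cancel --- to exactly $O_p\big(p(nq)^{3/\xi}\zeta_{nq,p}^{-2}\big)$ with no spurious $p$; the identical repair, in which the extra $\sqrt p$ comes only from $\max_i\|[\cH_{uu'}^{-1}\cH_{uf'}]_{[K_i,]}\|$ and $\|[\bLambda_2]_{[K_i,]}\|$, yields \eqref{eq:prop5_eq22}.
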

    We write the Hessian matrix on $\bphi^0$ as
     \begin{equation*}
         \cH(\bphi^0)=\begin{bmatrix}
             \cH_{ff^{\prime}}&\cH_{fu^{\prime}}\\\cH_{uf^{\prime}}&\cH_{uu^{\prime}}
         \end{bmatrix},
     \end{equation*}
     with
     \begin{align*}
         \cH_{ff^{\prime}}=\Hb_{Lff^{\prime}}+\Hb_{Rff^{\prime}}+\bLambda_1\bLambda_1^\T,\;\cH_{fu^{\prime}}=\Hb_{Lfu^{\prime}}+\bLambda_1\bLambda_2^\T,\\\cH_{uf^{\prime}}=\Hb_{Luf^{\prime}}+\bLambda_2\bLambda_1^\T,\;\cH_{uu^{\prime}}=\Hb_{Luu^{\prime}}+\Hb_{Ruu^{\prime}}+\bLambda_2\bLambda_2^\T.
     \end{align*}
     We omit the Hessian matrix's dependence on $\bphi^0$ because in this proof we only need estimates of the Hessian matrix on $\bphi^0$.
     By Sherman–Morrison–Woodbury formula, we get the exact expression for the blocks in \begin{equation*}\cH(\bphi^0)^{-1}=\begin{bmatrix}
             \tilde\cH_{ff^{\prime}}&\tilde\cH_{fu^{\prime}}\\\tilde\cH_{uf^{\prime}}&\tilde\cH_{uu^{\prime}}\end{bmatrix},\end{equation*} as $\tilde\cH_{ff^{\prime}}=\cH_{ff^{\prime}}^{-1}+\cH_{ff^{\prime}}^{-1}\cH_{fu^{\prime}}\cH_{-u}\cH_{uf^{\prime}}\cH_{ff^{\prime}}^{-1}$, $\tilde\cH_{fu^{\prime}}=\tilde\cH_{ff^{\prime}}\cH_{fu^{\prime}}\cH_{uu^{\prime}}^{-1}$, $\tilde\cH_{uf^{\prime}}=\tilde\cH_{uu^{\prime}}\cH_{uf^{\prime}}\cH_{ff^{\prime}}^{-1}$ and $\tilde\cH_{uu^{\prime}}=\cH_{uu^{\prime}}^{-1}+\cH_{uu^{\prime}}^{-1}\cH_{uf^{\prime}}\cH_{-f}\cH_{fu^{\prime}}\cH_{uu^{\prime}}^{-1}$ and . Here we define $\cH_{-f}=(\cH_{ff^{\prime}}-\cH_{fu^{\prime}}\cH_{uu^{\prime}}^{-1}\cH_{uf^{\prime}})^{-1}$ and $\cH_{-u}=(\cH_{uu^{\prime}}-\cH_{uf^{\prime}}\cH_{ff^{\prime}}^{-1}\cH_{fu^{\prime}})^{-1}$. By Lemma \ref{prop:min eigenvalue} we have $l_2$ estimates for $\cH_{-f}$ and $\cH_{-u}$ as $$\|\cH_{-f}\|=O_p(q),\;\|\cH_{-u}\|=O_p(n)$$
     Then $j$th block of the first row of $\cH^{-1}(\bphi^0)\bS(\bphi^0)$ is given as 
\begin{subequations}\begin{align}
    \big[\cH ^{-1}(\bphi^0)\bS(\bphi^0)\big]_{[P_j]}=&\big[\cH_{ff^\prime}^{-1}\bS_f\big]_{[P_j]}\label{eq_norma_hs_t1}
    \\&-\big[\cH_{ff^\prime}^{-1}\cH_{fu^{\prime}}\cH_{-u}\cH_{uf^{\prime}}  \cH_{ff^\prime}^{-1}\bS_f\big]_{[P_j]}\label{eq_norma_hs_t2}
    \\&+\Big[\cH_{ff^\prime}^{-1}\cH_{fu^{\prime}} \cH_{uu^{\prime}}^{-1}\bS_u\Big]_{[P_j]}\label{eq_norma_hs_t3}
    \\& -\Big[\cH_{ff^\prime}^{-1}\cH_{fu^{\prime}}\cH_{-u}\cH_{uf^{\prime}}   \cH_{ff^\prime}^{-1}\cH_{fu^{\prime}} \cH_{uu^{\prime}}^{-1}\bS_u\Big]_{[P_j]}.\label{eq_norma_hs_t4}
\end{align}\end{subequations}
and the $j$th block of
the first row of $\cH^{-1}(\bphi^0)\Rb$ is given as 
\begin{subequations}
\begin{align}
    \big[\cH ^{-1}(\bphi^0)\Rb\big]_{[P_j]}&=\big[\cH_{ff^\prime}^{-1}\Rb_f\big]_{[P_j]}\label{Beq_norma_hr_t1}\\&-\big[\cH_{ff^\prime}^{-1}\cH_{fu^{\prime}}\cH_{-u}\cH_{uf^{\prime}}  \cH_{ff^\prime}^{-1}\Rb_f\big]_{[P_j]}\label{Beq_norma_hr_t2}\\
&+\Big[\cH_{ff^\prime}^{-1}\cH_{fu^{\prime}} \cH_{uu^{\prime}}^{-1}\Rb_u\Big]_{[P_j]}\label{Beq_norma_hr_t3}\\&-\Big[\cH_{ff^\prime}^{-1}\cH_{fu^{\prime}}\cH_{-u}\cH_{uf^{\prime}}  \cH_{ff^\prime}^{-1}\cH_{fu^{\prime}} \cH_{uu^{\prime}}^{-1}\Rb_u\Big]_{[P_j]},\label{Beq_norma_hr_t4}\end{align}
\end{subequations}
Here all blocks in $\cH$ are taken on $\bphi^0$.
 For \eqref{eq_norma_hs_t1}, write $$\cH^{-1}_{ff^{\prime}}=\big(\Hb_{Lff^{\prime}}+\bLambda_1\bLambda_1^\T\big)^{-1}=\Hb_{Lff^{\prime}}^{-1}-\Hb_{Lff^{\prime}}^{-1}\bLambda_1\big(\Ib_{K^2}+\bLambda_1^\T\Hb_{Lff^{\prime}}\bLambda_1\big)^{-1}\bLambda_1^\T\Hb_{Lff^{\prime}}^{-1}.$$
By from (i) and (vii) of Lemma~\ref{lemma: l1 estimates} and (i) of Lemma~\ref{lemma: tech_for_normality},
\begin{align*}
\Big\|\big[&\Hb_{Lff^{\prime}}^{-1}\bLambda_1\big(\Ib_{K^2}+\bLambda_1^\T\Hb_{Lff^{\prime}}\bLambda_1\big)^{-1}\bLambda_1^\T\Hb_{Lff^{\prime}}^{-1}\bS_f\big]_{[P_j]}\Big\|\\&=\big\|[\Hb_{Lff^{\prime}}^{-1}]_{[P_j,P_j]}\big\|\big\|[\bLambda_1]_{[P_j,]}\big\|\big\|\big(\Ib_{K^2}+\bLambda_1^\T\Hb_{Lff^{\prime}}^{-1}\bLambda_1\big)^{-1}\big\|\big\|\bLambda_1^\T\Hb_{Lff^{\prime}}^{-1}\bS_f\big\|\le O_p\Big(\sqrt{\frac{p}{nq}}\epsilon_{nq}\Big),
\end{align*}
which implies $\big\|\eqref{eq_norma_hs_t1}-\big[ \Hb_{Lff^\prime}^{-1}\bS_f\big]_{[P_j]}\big\|\le O_p\big(\sqrt{p/(nq)}\big)$.
For the second term~\eqref{eq_norma_hs_t2}, by (i), (iv), (v) of Lemma~\ref{lemma: l1 estimates} and (v) of Lemma \ref{lemma: tech_for_normality}, we have 
\begin{align*}
    \eqref{eq_norma_hs_t2}\le &\big\|\big[\cH_{ff^\prime}^{-1}\cH_{fu^{\prime}}\big]_{[P_j,]}\big\|\big\|\cH_{-u}\big\|\big\|\cH_{uf^{\prime}} \cH_{ff^\prime}^{-1}\bS_f\big\|\\\le &
    O_p\Big(\frac{p(nq)^{2/\xi}}{\sqrt{nq}}\epsilon_{nq}\Big)
\end{align*}
For \eqref{eq_norma_hs_t3}, by (ii), (iii) of Lemma~\ref{lemma: l1 estimates} and (iv), (vi) of Lemma~\ref{lemma: tech_for_normality} we have
\begin{align*}
    \eqref{eq_norma_hs_t3}=&\big\|\big[\Hb_{Lff^{\prime}}^{-1}\big]_{[P_j,P_j]}\big[\cH_{fu^{\prime}} \cH_{uu^{\prime}}^{-1}\bS_u\big]_{[P_j]}\big\|\\&+\big\|\big[\Hb_{Lff^{\prime}}^{-1}\big]_{[P_j,P_j]}\big[\bLambda_1\big]_{[P_j,]}\big(\Ib_{K^2}+\bLambda_1^\T\Hb_{Lff^{\prime}}\bLambda_1\big)^{-1}\bLambda_1^\T\Hb_{Lff^{\prime}}^{-1}\cH_{fu^{\prime}} \cH_{uu^{\prime}}^{-1}\bS_u\big\|\\
    =&O_p\Big(\frac{p}{\sqrt{nq}}(nq)^{1/\xi}\epsilon_{nq}\Big).
\end{align*}
And for \eqref{eq_norma_hs_t4}, by (i), (iii), (v) of Lemma~\ref{lemma: l1 estimates} and (vi) of Lemma~\ref{lemma: tech_for_normality}, we have
\begin{align*}
    \eqref{eq_norma_hs_t4}=&\big\|\big[\cH_{ff^\prime}^{-1}\cH_{fu^{\prime}}\big]_{[P_j,]}\big\|\big\|\cH_{-u}\big\|\big\|\cH_{uf^{\prime}} \big\|\big\| \cH_{ff^\prime}^{-1}\big\|\big\|\cH_{fu^{\prime}} \cH_{uu^{\prime}}^{-1}\bS_u\big\|\\&\le O_p\Big(\frac{\sqrt p(nq)^{3/\xi}}{\sqrt{nq}}\epsilon_{nq}\Big).
\end{align*}
Combining these together, we have 
\begin{equation*}
     \big\|[\cH ^{-1}(\bphi^0)\bS(\bphi^0)]_{[P_j]}-[ \Hb_{Lff^\prime}^{-1}(\bphi^0)\bS_f(\bphi^0)]_{[P_j]}\big\|=O_p\Big(\frac{p(nq)^{3/\xi}}{\sqrt{nq}}\epsilon_{nq}\Big).
\end{equation*}

 For~\eqref{Beq_norma_hr_t1}, the first term can be given as
\begin{align*}
    \Big\|\big[\cH_{ff^{\prime}}^{-1}\Rb_f\big]_{[P_j]}\Big\|\le &\big\|\big[\Hb_{Lff^{\prime}}^{-1}\big]_{[P_j,P_j]}\big\|\big\|\big[\Rb_{f}\big]_{[P_j]}\big\|\\&+\big\|\big[\Hb_{Lff^{\prime}}^{-1}\big]_{[P_j,P_j]}\big\|\big\|[\bLambda_1]_{[P_j,]}\big\|\big\|\big(\Ib_{K^2}+\bLambda_1^\T\Hb_{Lff^{\prime}}\bLambda_1\big)^{-1}\big\|\big\|\bLambda_1^\T\big\|\big\|\Hb_{Lff^{\prime}}^{-1}\big\|\big\|\Rb_f\big\|
\end{align*}
By (i), (vii) of Lemma~\ref{lemma: l1 estimates} and Lemma~\ref{lemma:asym_estimate}, the first term in bounded by $O_p(p\zeta_{nq,p}^{-2})$ and the second term is bounded by $O_p(p\zeta_{nq,p}^{-2})$. 
For the second part~\eqref{Beq_norma_hr_t2}, by (i), (iv), (v) of Lemma~\ref{lemma: l1 estimates} and Lemma~\ref{lemma:asym_estimate} we know \begin{equation*}
    \eqref{Beq_norma_hr_t2}\le \big\|[\cH_{ff^{\prime}}^{-1}\cH_{fu^{\prime}}]_{[P_j,]}\big\|\big\|\cH_{-u}\big\|\big\|\cH_{uf^{\prime}}\big\|\big\|\cH_{ff^{\prime}}^{-1}\big\|\big\|\Rb_f\big\|=O_p\Big(\frac{p(nq)^{2/\xi}}{\zeta_{nq,p}^2}\Big)
\end{equation*}
For third part~\eqref{Beq_norma_hr_t3}, by (ii), (v) of Lemma~\ref{lemma: l1 estimates} and Lemma~\ref{lemma:asym_estimate} we have
\begin{equation*}
\eqref{Beq_norma_hr_t3}=\big\|\big[\cH_{ff^{\prime}}^{-1}\cH_{fu^{\prime}}]_{[P_j,]}\big\|\big\|\cH_{uu^{\prime}}^{-1}\big\|\big\|\Rb_{u}\big\|\le O_p\Big(\frac{p(nq)^{1/\xi}}{\zeta_{nq,p}^2}\Big)
\end{equation*}
For fourth part~\eqref{Beq_norma_hr_t4}, similar by (i), (ii), (iii), (v) of Lemma~\ref{lemma: l1 estimates} and Lemma~\ref{lemma:asym_estimate}, we have
\begin{align*}
\big\|\eqref{Beq_norma_hr_t4}\big\|\le&\big\|\big[\cH_{ff^{\prime}}^{-1}\cH_{fu^{\prime}}\big]_{[P_j,]}\big\|\big\|\cH_{-u}\big\|\big\|\cH_{uf^{\prime}} \big\|\big\| \cH_{ff^{\prime}}^{-1}\big\|\big\|\cH_{fu^{\prime}} \big\|\big\|\cH_{uu^{\prime}}^{-1}\big\|\big\|\Rb_{u}\big\|
    \\&=O_p\Big(\frac{p(nq)^{3/\xi}}{\zeta_{nq,p}^2}\Big).
\end{align*}
Then $\big[\cH ^{-1}(\bphi^0)\Rb\big]_{[P_j]}$ can be bounded by $O_p\big(p(nq)^{3/\xi}\zeta_{nq,p}^{-2}\big)$.

For the $i$th block of $\cH^{-1}(\bphi^0)\bS(\bphi^0)$, it can be written as
\begin{subequations}\begin{align}
    \big[\cH ^{-1}(\bphi^0)\bS(\bphi^0)\big]_{[K_i]}=&\big[\cH_{uu^\prime}^{-1}\bS_u\big]_{[K_i]}\label{eq_norma_hsu_t1}
    \\&-\big[\cH_{uu^\prime}^{-1}\cH_{uf^{\prime}}\cH_{-f}\cH_{fu^{\prime}}  \cH_{uu^\prime}^{-1}\bS_u\big]_{[K_i]}\label{eq_norma_hsu_t2}
    \\&+\Big[\cH_{uu^\prime}^{-1}\cH_{uf^{\prime}} \cH_{ff^{\prime}}^{-1}\bS_f\Big]_{[K_i]}\label{eq_norma_hsu_t3}
    \\& -\Big[\cH_{uu^\prime}^{-1}\cH_{uf^{\prime}}\cH_{-f}\cH_{fu^{\prime}}   \cH_{uu^\prime}^{-1}\cH_{uf^{\prime}} \cH_{ff^{\prime}}^{-1}\bS_f\Big]_{[K_i]}.\label{eq_norma_hsu_t4}
\end{align}\end{subequations}
and similarly
the $i$th block of second row of $\cH^{-1}(\bphi^0)\Rb$ is given as 
\begin{subequations}
\begin{align}
    \big[\cH ^{-1}(\bphi^0)\Rb\big]_{[K_i]}&=\big[\cH_{uu^\prime}^{-1}\Rb_u\big]_{[K_i]}\label{Beq_norma_hru_t1}\\&-\big[\cH_{uu^\prime}^{-1}\cH_{uf^{\prime}}\cH_{-f}\cH_{fu^{\prime}}  \cH_{uu^\prime}^{-1}\Rb_u\big]_{[K_i]}\label{Beq_norma_hru_t2}\\
&+\Big[\cH_{uu^\prime}^{-1}\cH_{uf^{\prime}} \cH_{ff^{\prime}}^{-1}\Rb_f\Big]_{[K_i]}\label{Beq_norma_hru_t3}\\&-\Big[\cH_{uu^\prime}^{-1}\cH_{uf^{\prime}}\cH_{-f}\cH_{fu^{\prime}}  \cH_{uu^\prime}^{-1}\cH_{uf^{\prime}} \cH_{ff^{\prime}}^{-1}\Rb_f\Big]_{[K_i]},\label{Beq_norma_hru_t4}\end{align}
\end{subequations}
Here all blocks in $\cH$ are taken on $\bphi^0$.
For \eqref{eq_norma_hsu_t1}, similar to the procedure of bounding \eqref{eq_norma_hs_t1}, by (ii) and (viii) of Lemma~\ref{lemma: l1 estimates} and (v) of Lemma~\ref{lemma: tech_for_normality} we have
\begin{align*}
\big\|\eqref{eq_norma_hsu_t1}-&\big[\Hb_{Luu^{\prime}}\big]_{[K_i,K_i]}[\bS_u]_{[K_i]}\big\|\le\Big\|\big[\Hb_{Luu^{\prime}}^{-1}\bLambda_2\big(\Ib_{K^2+Kp}+\bLambda_2^\T\Hb_{Luu^{\prime}}\bLambda_2\big)^{-1}\bLambda_2^\T\Hb_{Luu^{\prime}}^{-1}\bS_u\big]_{[K_i]}\Big\|\\&=
\big\|[\Hb_{Luu^{\prime}}^{-1}]_{[K_i,K_i]}\big\|\big\|[\bLambda_2]_{[K_i,]}\big\|\big\|\big(\Ib_{K^2+Kp}+\bLambda_2^\T\Hb_{Luu^{\prime}}^{-1}\bLambda_2\big)^{-1}\big\|\big\|\bLambda_2^\T\Hb_{Luu^{\prime}}^{-1}\bS_u\big\|\\
\\&\le O_p\Big({\frac{p}{\sqrt {nq}}}(nq)^{1/\xi}\epsilon_{nq}\Big).
\end{align*}
 For \eqref{eq_norma_hsu_t2}, by (ii), (iv), (v) of Lemma~\ref{lemma: l1 estimates} and (vi) of Lemma \ref{lemma: tech_for_normality} we have 
\begin{align*}
    \|\eqref{eq_norma_hsu_t2}\|\le\big\|\big[\cH_{uu^\prime}^{-1}\cH_{uf^{\prime}}\big]_{[K_i,]}\cH_{-f}\cH_{fu^{\prime}}  \cH_{uu^\prime}^{-1}\bS_u\big\|\le O_p\Big(\frac{p^{3/2}(nq)^{2/\xi}}{\sqrt{nq}}\epsilon_{nq}\Big)
\end{align*}
For \eqref{eq_norma_hsu_t3}, by Lemma (ii), (vi) of Lemma~\ref{lemma: tech_for_normality} and (iv), (v) of Lemma~\ref{lemma: l1 estimates} we have
\begin{align*}
    \eqref{eq_norma_hsu_t3}\le&\big\|\big[\Hb_{Luu^{\prime}}^{-1}\big]_{[K_i,K_i]}\big\|\big\|\big[\cH_{uf^{\prime}} \cH_{ff^{\prime}}^{-1}\bS_f\big]_{[K_i]}\big\|\\&+\big\|\big[\Hb_{Luu^{\prime}}^{-1}\big]_{[K_i,K_i]}\big\|\big\|\big[\bLambda_2\big]_{[K_i,]}\big\|\big\|\big(\Ib_{K^2+Kp}+\bLambda_2^\T\Hb_{Luu^{\prime}}\bLambda_2\big)^{-1}\big\|\\&\,\times \big\|\bLambda_2^\T\big\|\big\|\Hb_{Luu^{\prime}}^{-1}\big\|\big\|\cH_{uf^{\prime}} \cH_{ff^{\prime}}^{-1}\bS_f\big\|\\
    \le &O_p\Big(\frac{p^{3/2}}{\sqrt{nq}}(nq)^{1/\xi}\epsilon_{nq}\Big).
\end{align*}
Finally for \eqref{eq_norma_hsu_t4}, by (ii), (iv), (vi) of Lemma~\ref{lemma: l1 estimates} and (iii) of Lemma~\ref{lemma: tech_for_normality}, we have
\begin{align*}
    \|\eqref{eq_norma_hsu_t4}\|\le\Big\|\big[\cH_{uu^\prime}^{-1}\cH_{uf^{\prime}}\big]_{[K_i,]}\cH_{-f}\cH_{fu^{\prime}}  \cH_{uu^\prime}^{-1}\cH_{uf^{\prime}} \cH_{ff^{\prime}}^{-1}\bS_f\Big\|
    =O_p\Big(\frac{p^{3/2}(nq)^{3/\xi}}{\sqrt{nq}}\epsilon_{nq}\Big).
\end{align*}
For the first term~\eqref{Beq_norma_hru_t1}, similar with \eqref{Beq_norma_hr_t1}, by (ii), (viii) of Lemma~\ref{lemma: l1 estimates} and Lemma ~\ref{lemma:asym_estimate}, we have
\begin{align*}
\big\|\eqref{Beq_norma_hru_t1}\big\|&\le\big\|[\Hb_{Luu^{\prime}}^{-1}]_{[K_i,K_i]}\big\|\big\|[\bLambda_2]_{[K_i,]}\big\|\big\|\big(\Ib_{K^2+Kp}+\bLambda_2^\T\Hb_{Luu^{\prime}}^{-1}\bLambda_2\big)^{-1}\big\|\big\|\bLambda_2^\T\big\|\big\|\Hb_{Luu^{\prime}}^{-1}\big\|\big\|\Rb_u\big\|\\&=
O_p\Big(\frac{p^{3/2}}{\zeta_{nq,p}^2}\Big).
\end{align*}
For the second term~\eqref{Beq_norma_hru_t2}, similar with \eqref{Beq_norma_hr_t2}, by (ii), (iii), (vi) of Lemma~\ref{lemma: l1 estimates} and Lemma ~\ref{lemma:asym_estimate}, we have
\begin{align*}
\|\eqref{Beq_norma_hru_t2}\|\le&\big\|\big[\cH_{uu^\prime}^{-1}\cH_{uf^{\prime}}\big]_{[K_i,]}\cH_{-f}\cH_{fu^{\prime}}  \cH_{uu^\prime}^{-1}\Rb_u\big\|\\
    \le &\big\|\big[\cH_{uu^\prime}^{-1}\cH_{uf^{\prime}}\big]_{[K_i,]}\big\|\big\|\cH_{-f}\big\|\big\|\cH_{fu^{\prime}}  \big\|\big\|\cH_{uu^\prime}^{-1}\big\|\big\|\Rb_u\big\|\\
    =&
    O_p\Big(\frac{p^{3/2}(nq)^{2/\xi}}{\zeta_{nq,p}^2}\Big).
\end{align*}
For the third term~\eqref{Beq_norma_hru_t3}, similar with \eqref{Beq_norma_hr_t3}, by (i), (vi) of Lemma~\ref{lemma: l1 estimates} and Lemma ~\ref{lemma:asym_estimate}, we have
\begin{equation*}
\eqref{Beq_norma_hru_t3}\le\big\|\big[\cH_{uu^{\prime}}^{-1}\cH_{uf^{\prime}}\big]_{[K_i,]}\big\|\big\| \cH_{ff^{\prime}}^{-1}\big\|\big\|\Rb_f\big\|
    =O_p\Big(\frac{p^{3/2}(nq)^{1/\xi}}{\zeta_{nq,p}^2}\Big).
\end{equation*}
For the fourth term in~\eqref{Beq_norma_hru_t4}, similar with \eqref{Beq_norma_hr_t4}, by (i), (ii), (iii), (vi) of Lemma~\ref{lemma: l1 estimates} and Lemma~\ref{lemma:asym_estimate}, we have
\begin{align*}
\|\eqref{Beq_norma_hru_t4}\|\le&\Big\|\big[\cH_{uu^\prime}^{-1}\cH_{uf^{\prime}}\big]_{[K_i,]}\big\|\big\|\cH_{-f}\big\|\big\|\cH_{fu^{\prime}}  \big\|\big\|\cH_{uu^\prime}^{-1}\big\|\big\|\cH_{uf^{\prime}} \big\|\big\|\cH_{ff^{\prime}}^{-1}\big\|\big\|\Rb_f\big\|
   \\
   =&O_p\Big(\frac{p^{3/2}(nq)^{3/\xi}}{\zeta_{nq,p}^2 }\Big).
\end{align*}

\section{Additional Simulation Results}\label{sec:a_simula}

\subsection{Additional Results at $p_* = 10$}
In Section~5 of the main text, we present simulation results at $p_* = 5$ and $p_* = 30$, across all considered settings. This section provides additional simulation results at $p_* = 10$. Readers are directed to Section~5 for comprehensive details on the considered settings, data generation processes, and computation of averaged powers and type I errors.
The powers and type I errors are presented in Figure~\ref{fig:non-anchor-correlated p10} and Figure~\ref{fig:non-anchor-correlated p10 dense} for sparse and dense settings, respectively.
Comparing Figures~\ref{fig:non-anchor-correlated p10}--\ref{fig:non-anchor-correlated p10 dense} in this section with Figures 3--4 at $p_*=5$ and Figures 5--6 at $p_* = 30$ in Section~5 in the main text, we observe similar patterns in simulation results, indicating the robustness of our proposed method across varying covariate dimensions $p_*$.

\begin{figure}[h!]
\centering    
   \subfigure{
        \includegraphics[width=1.9in]{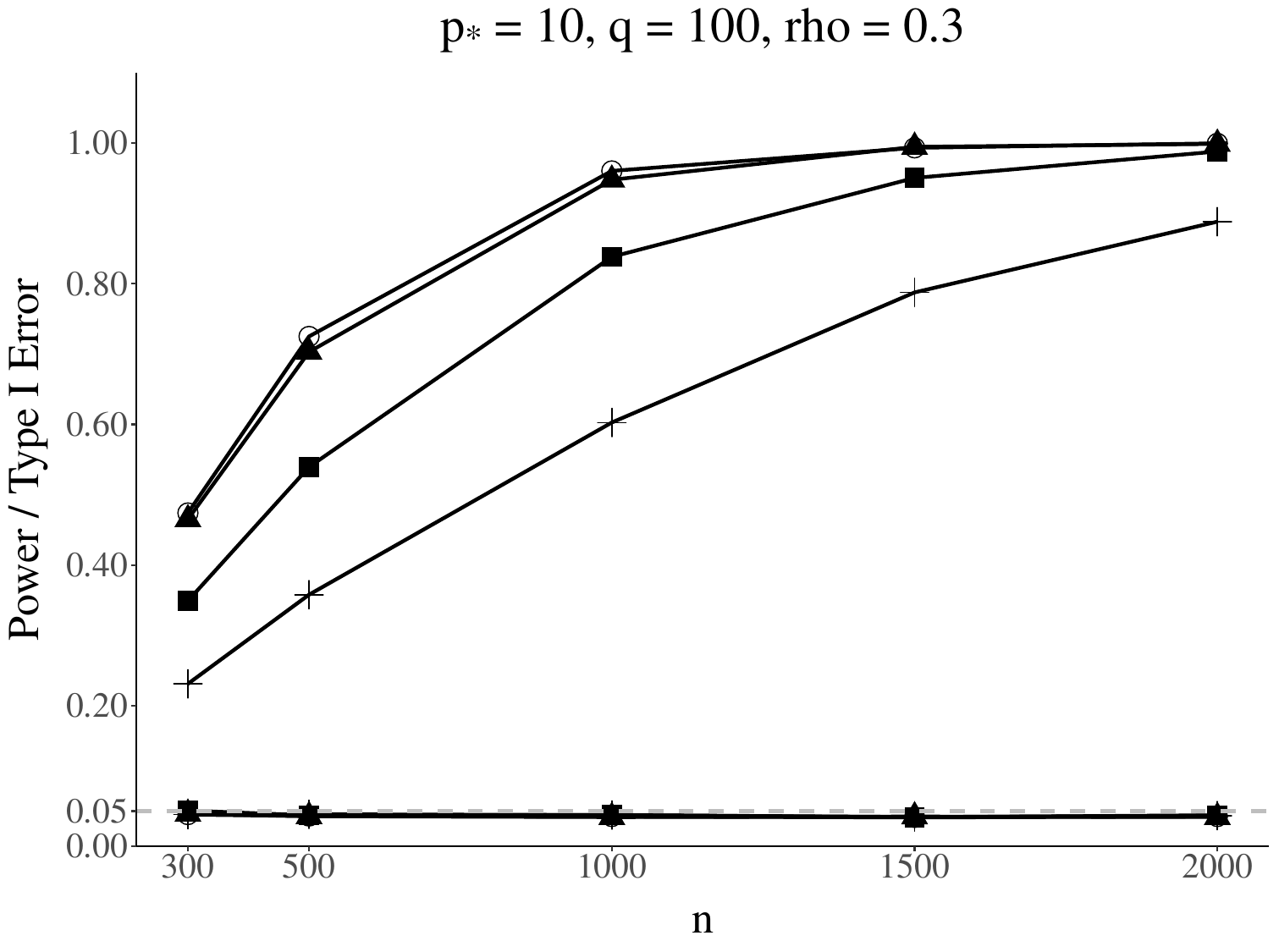}}%
         \subfigure{
        \includegraphics[width=1.9in]{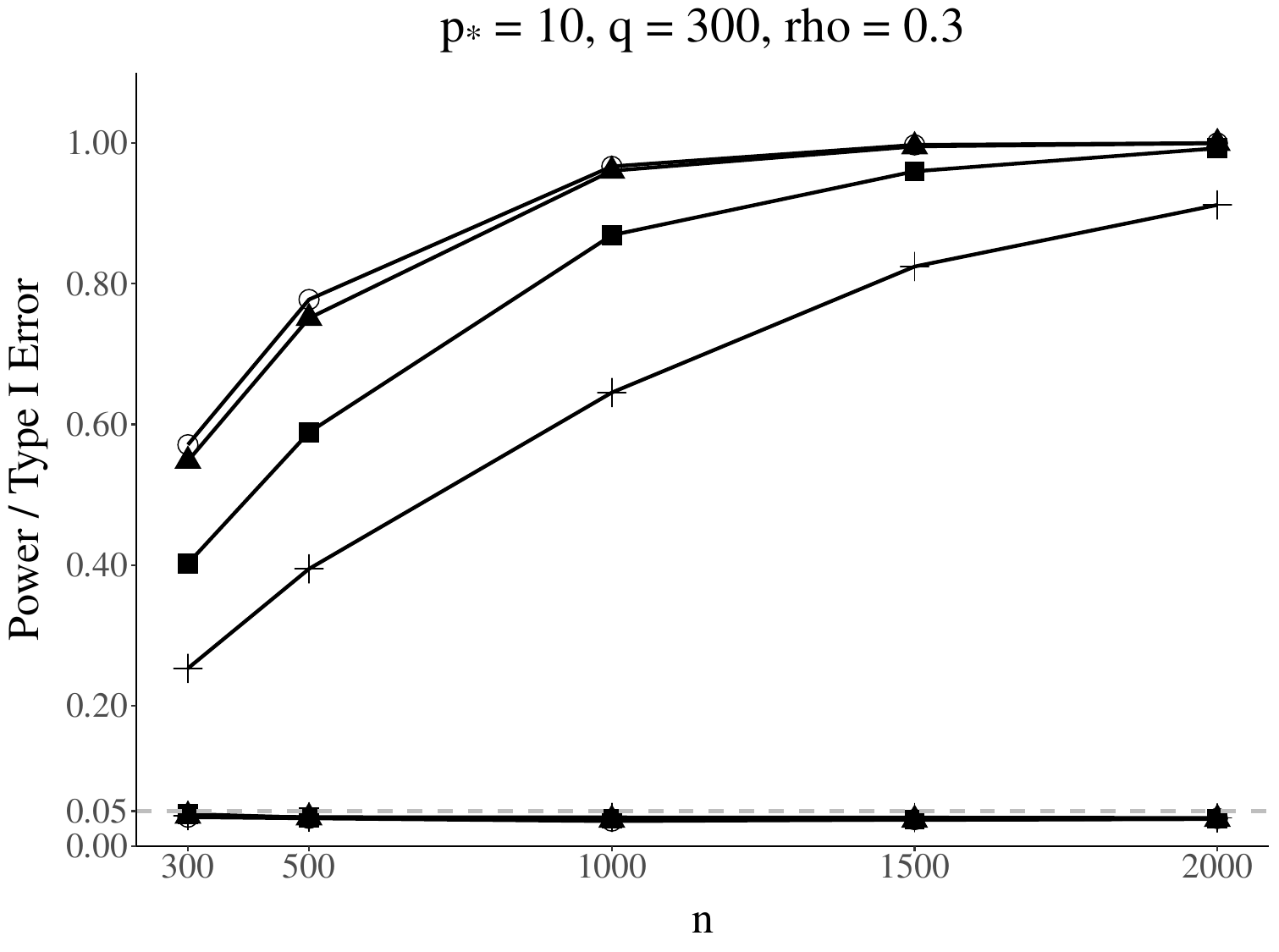}}%
        \subfigure{
        \includegraphics[width=1.9in]{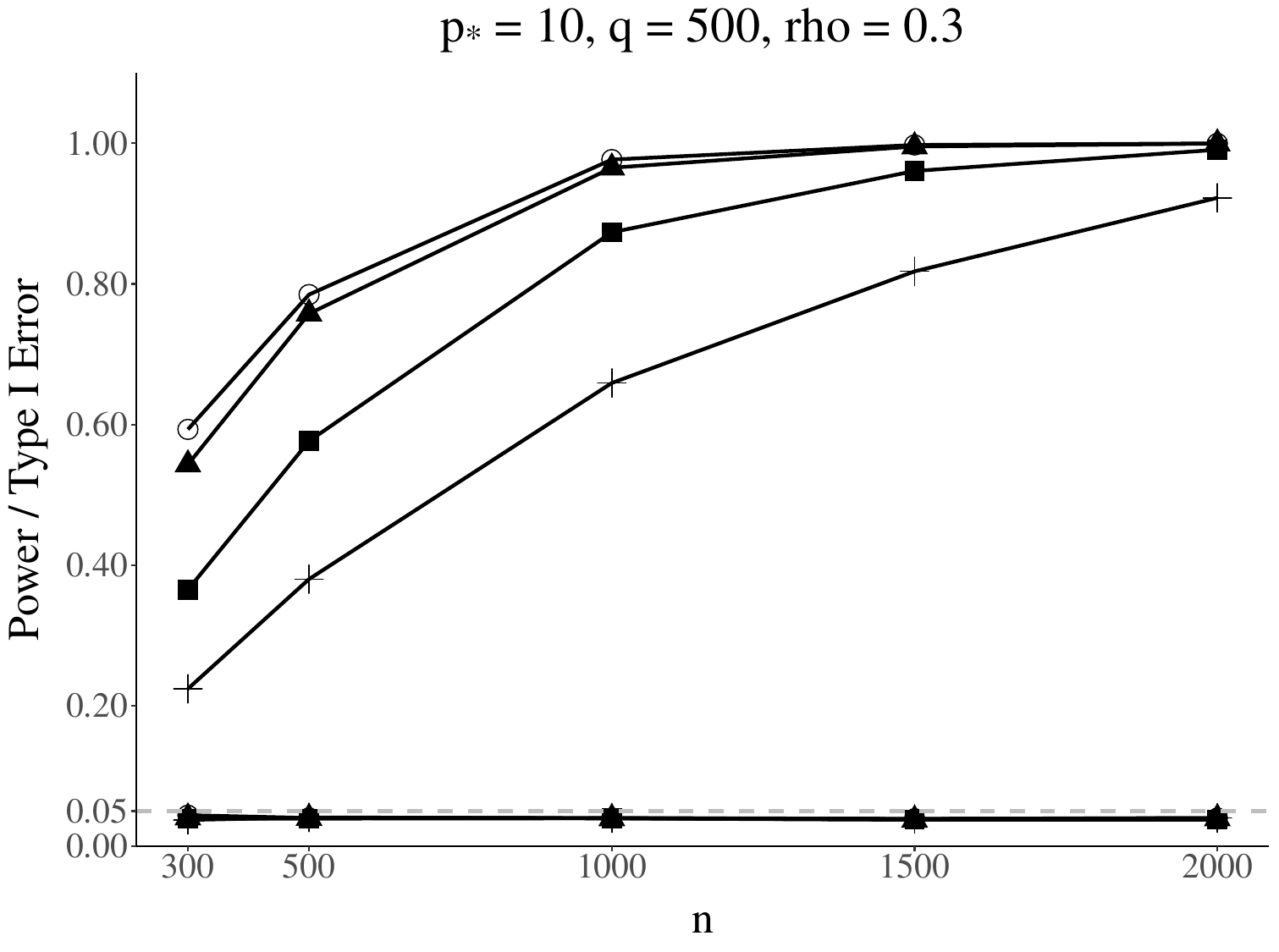}}%
        \\
          \subfigure{
        \includegraphics[width=1.9in]{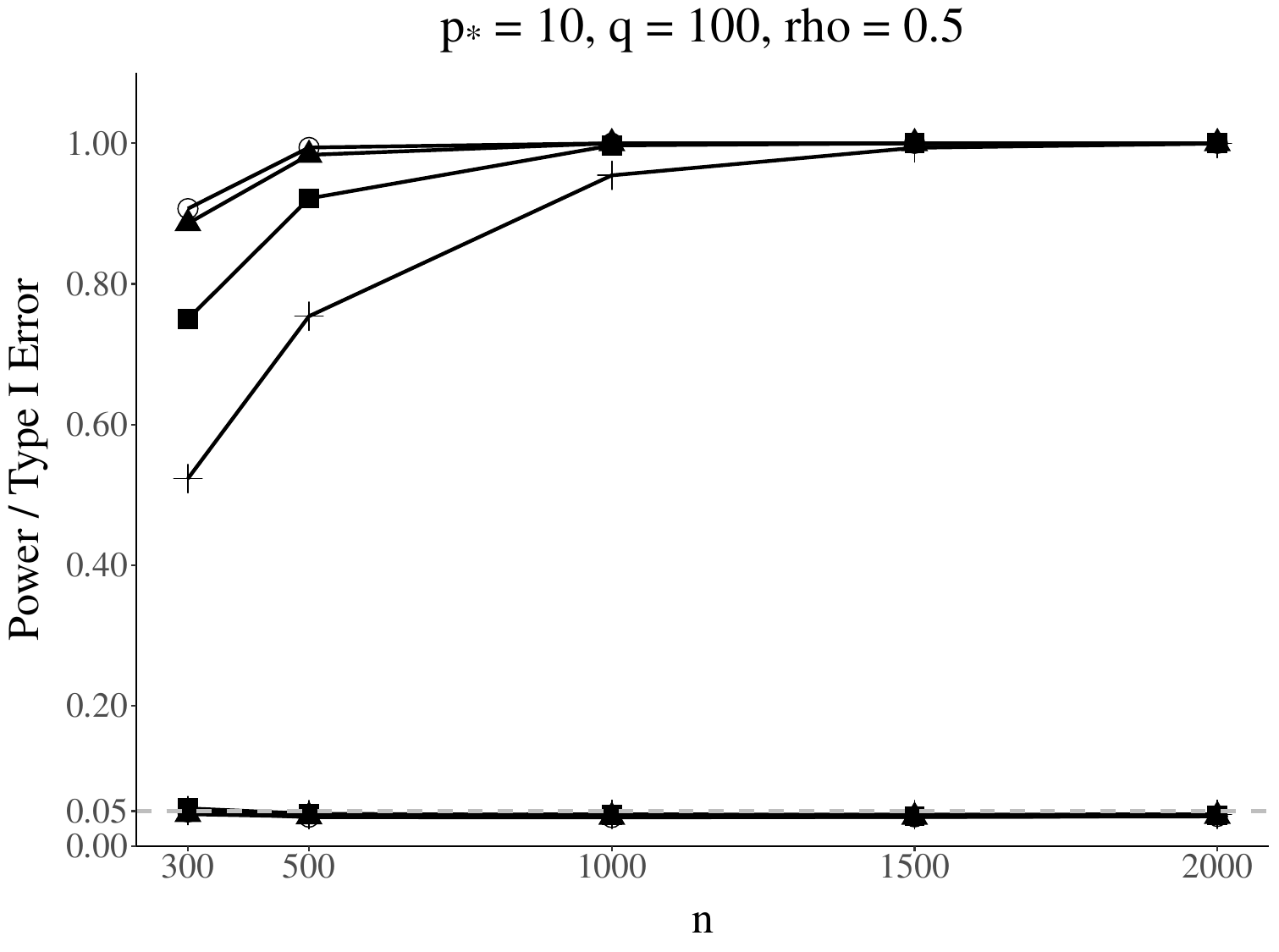}}%
          \subfigure{
        \includegraphics[width=1.9in]{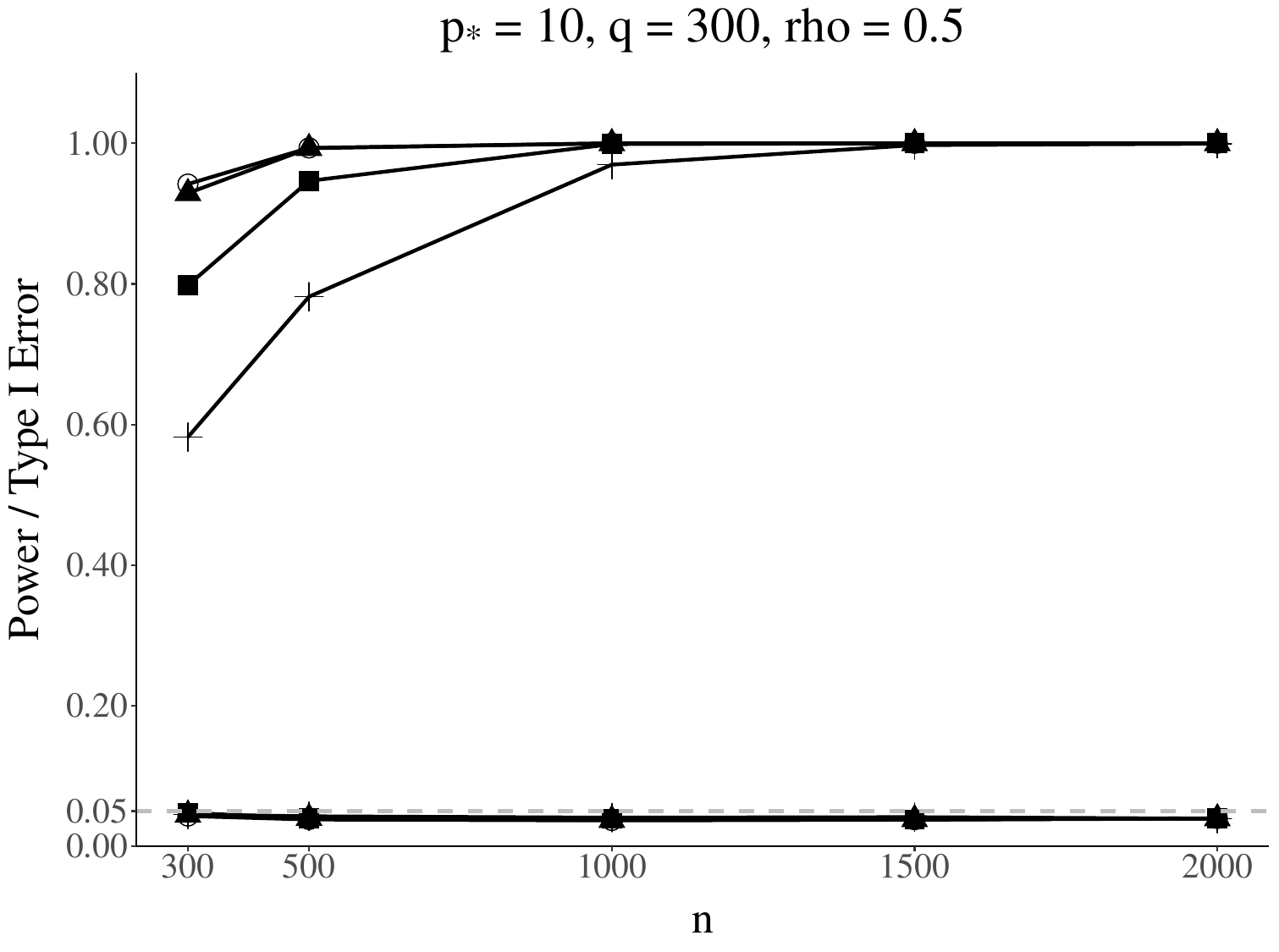}}%
          \subfigure{
        \includegraphics[width=1.9in]{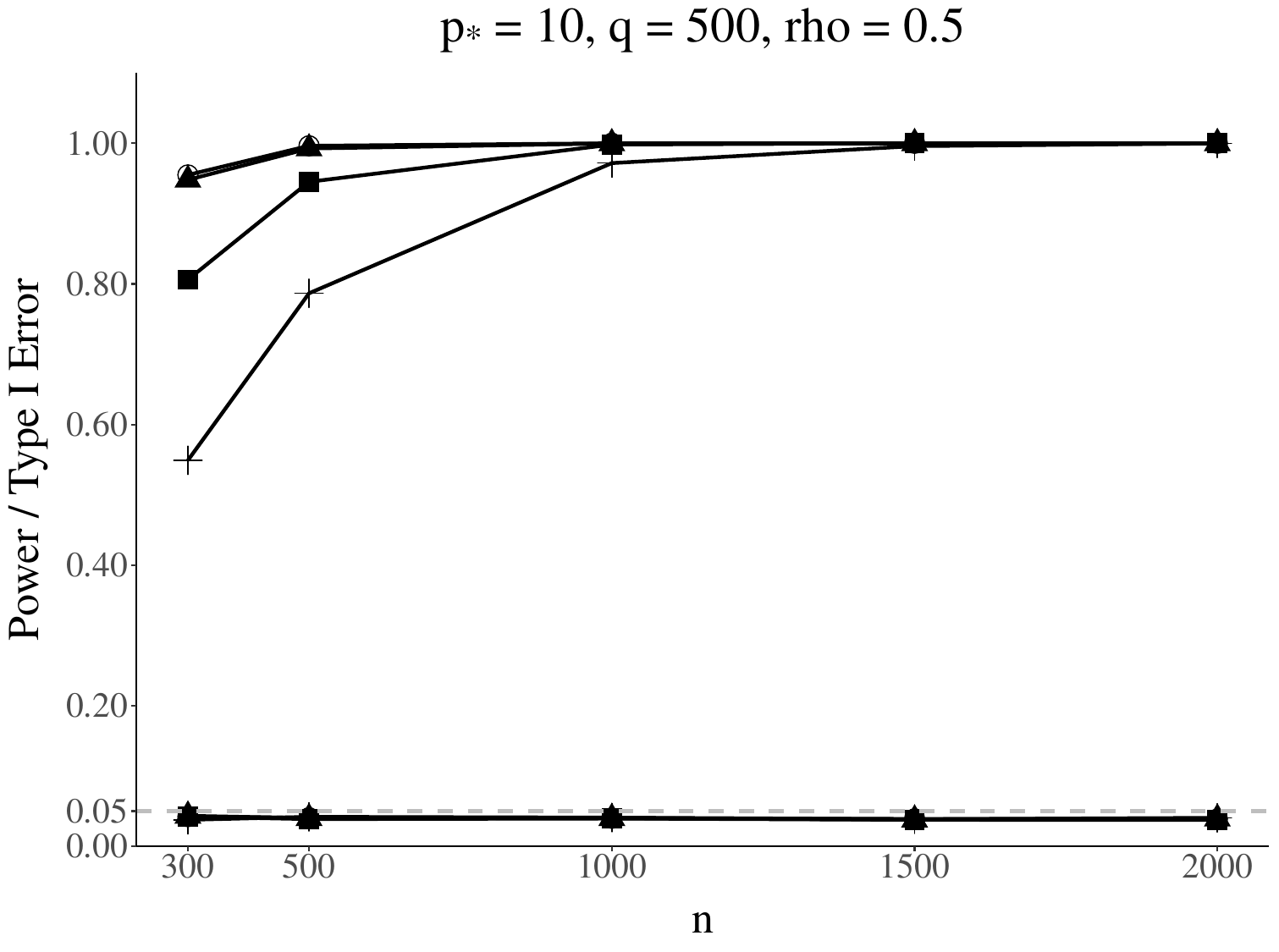}}%
    \caption{Powers and type I errors under sparse setting at $p_*=10$. Circles (\protect\includegraphics[height=0.8em]{legend/new_rho0.png}) denote correlation parameter $\tau = 0$. Triangles (\protect\includegraphics[height=0.8em]{legend/rho0.2.png})  represent the case $\tau = 0.2$. Squares (\protect\includegraphics[height=0.8em]{legend/rho0.5.png}) indicate $\tau = 0.5$ and crosses (\protect\includegraphics[height=1em]{legend/rho0.7.png}) represent the $\tau = 0.7$.}
    \label{fig:non-anchor-correlated p10}
\end{figure}

\begin{figure}[h!]
\centering    
   \subfigure{
        \includegraphics[width=1.9in]{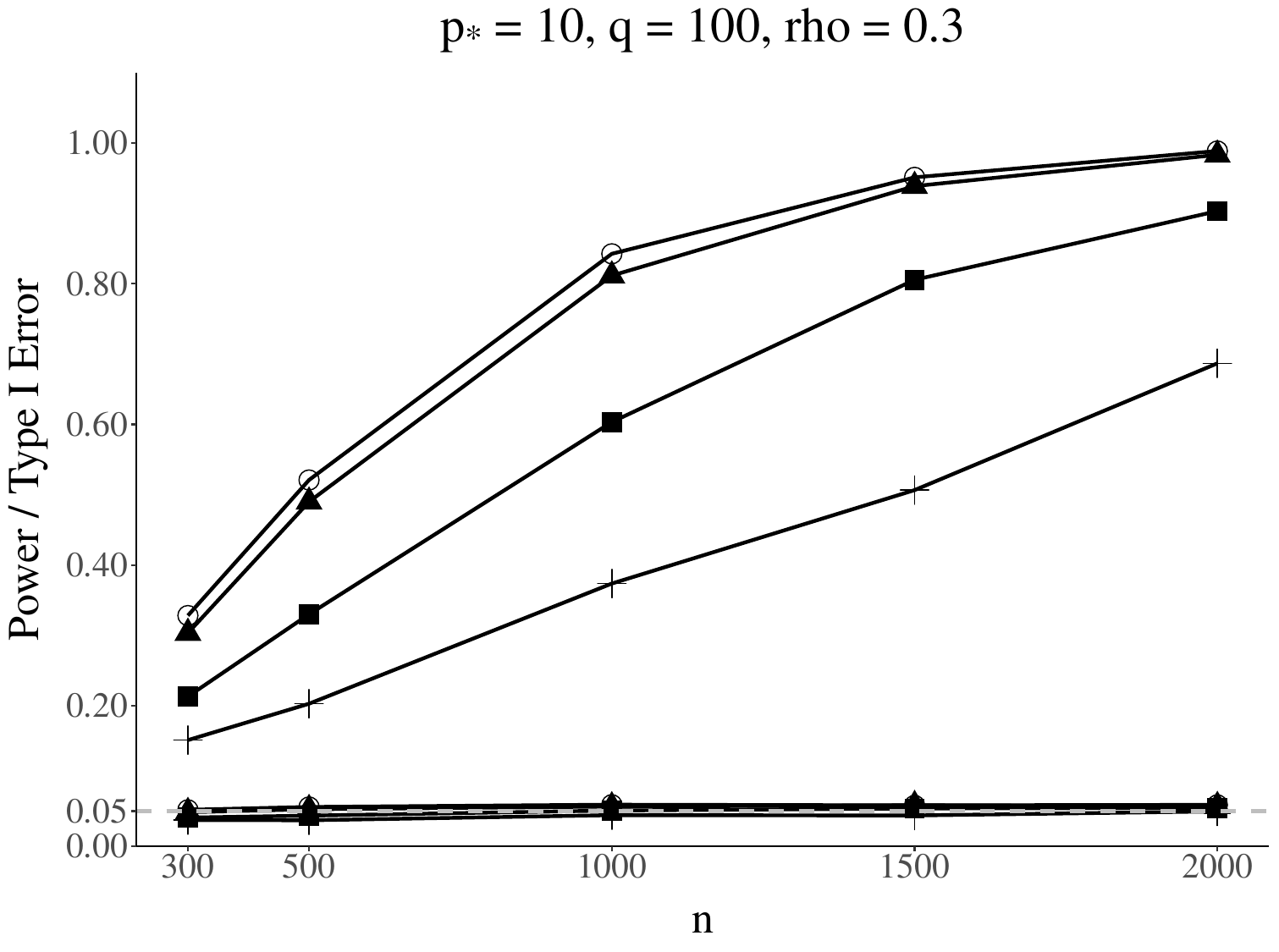}}%
        \subfigure{
        \includegraphics[width=1.9in]{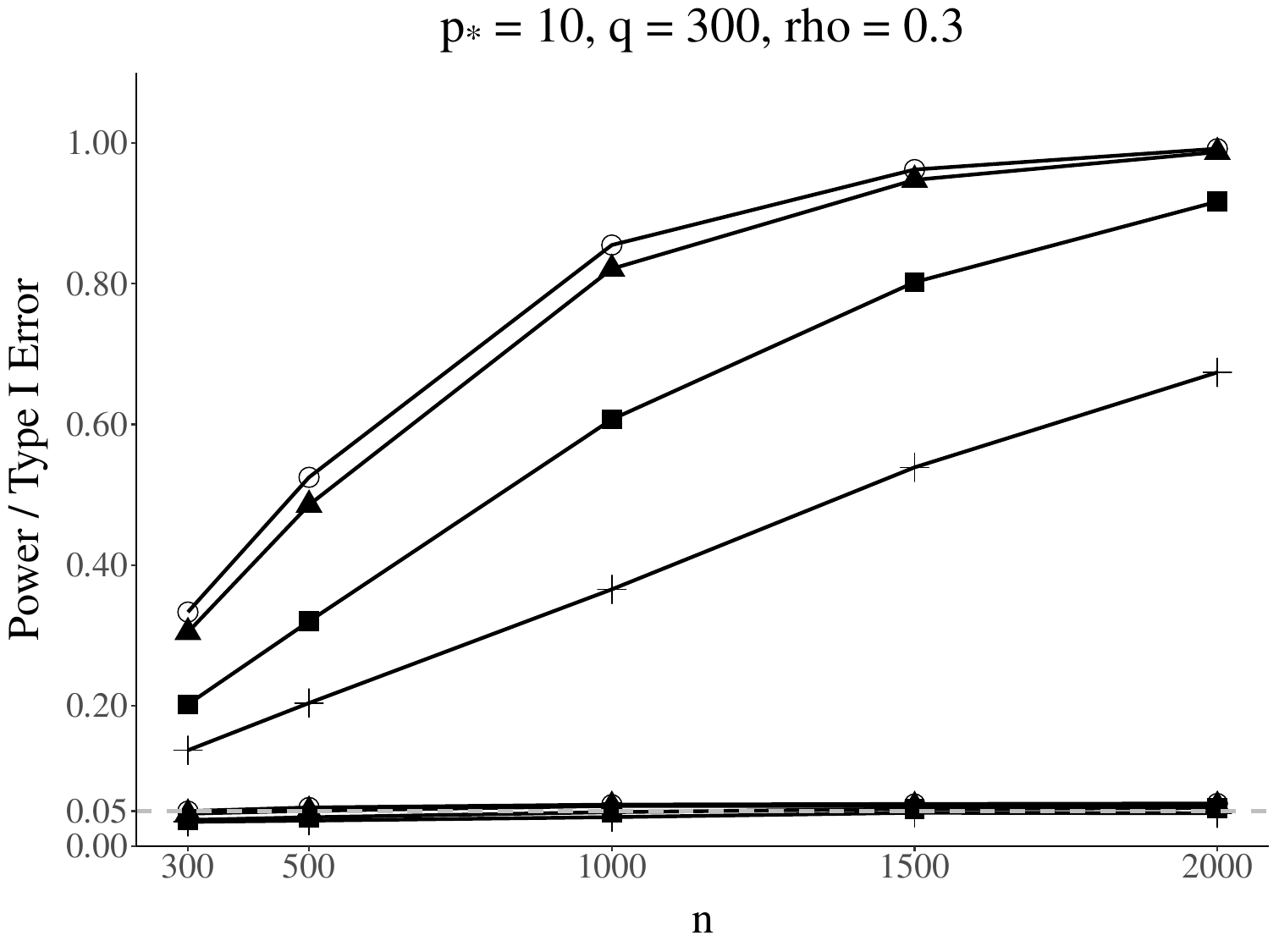}}%
         \subfigure{
        \includegraphics[width=1.9in]{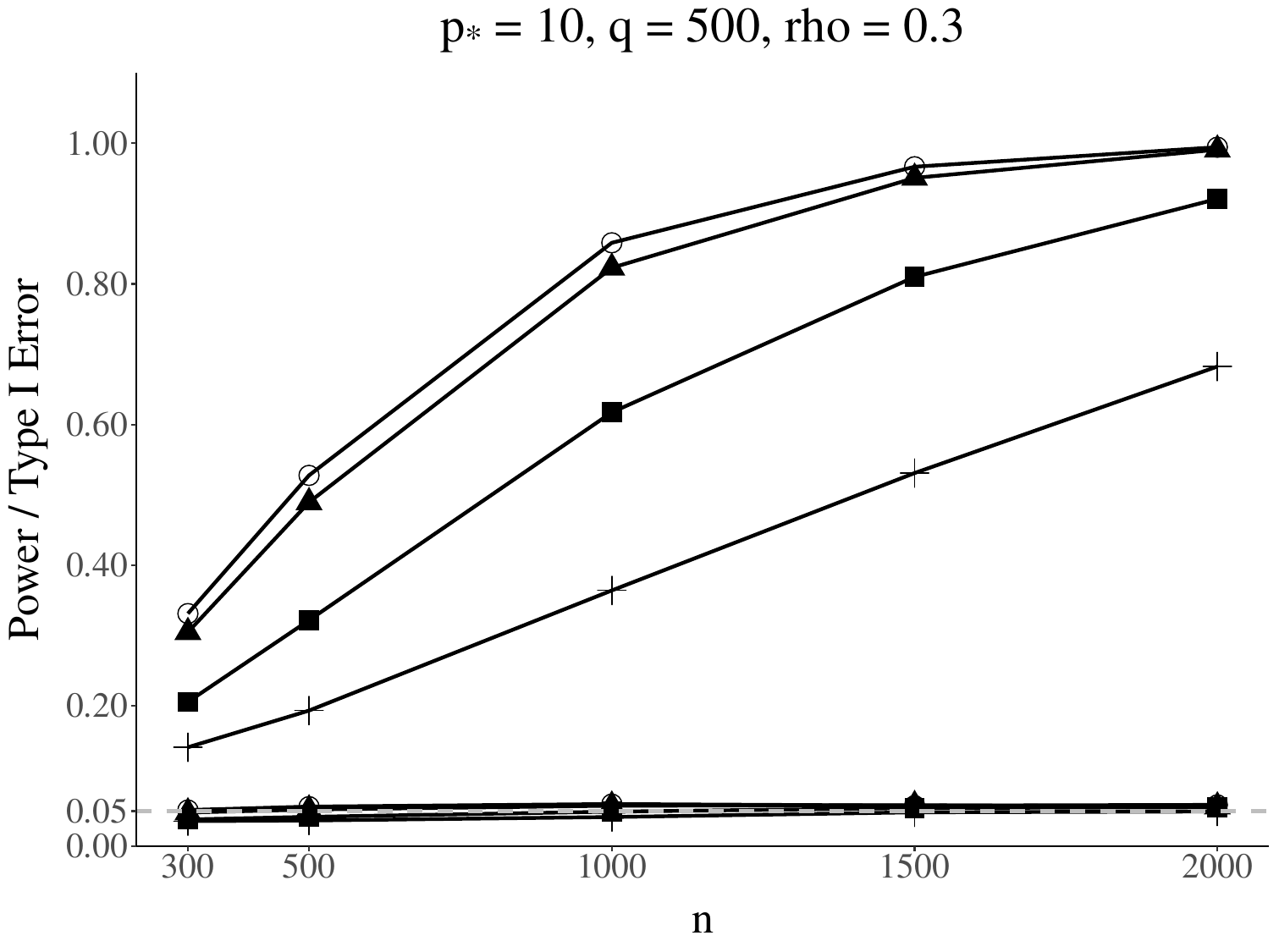}}%
        \\
          \subfigure{
        \includegraphics[width=1.9in]{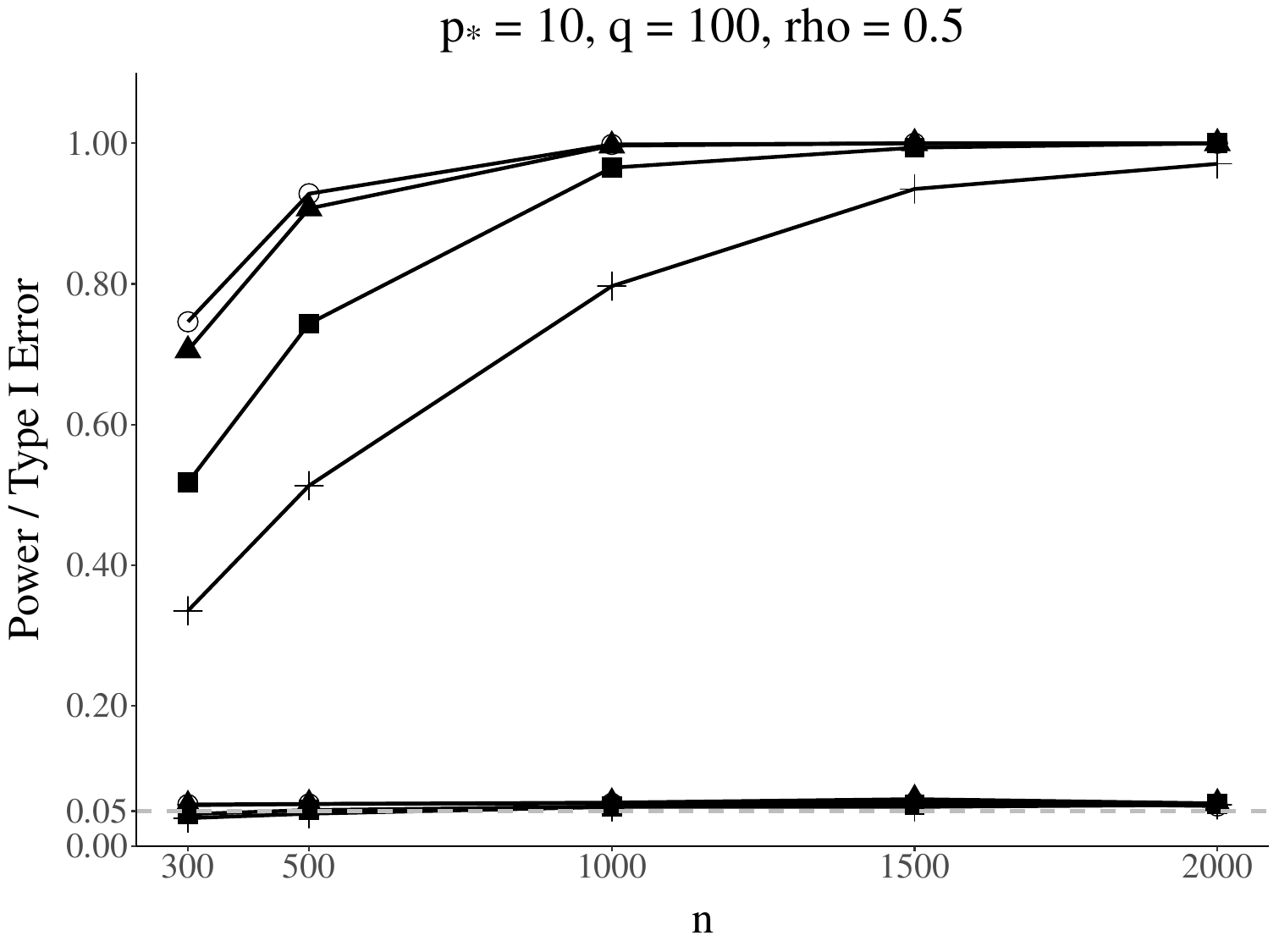}}
          \subfigure{
        \includegraphics[width=1.9in]{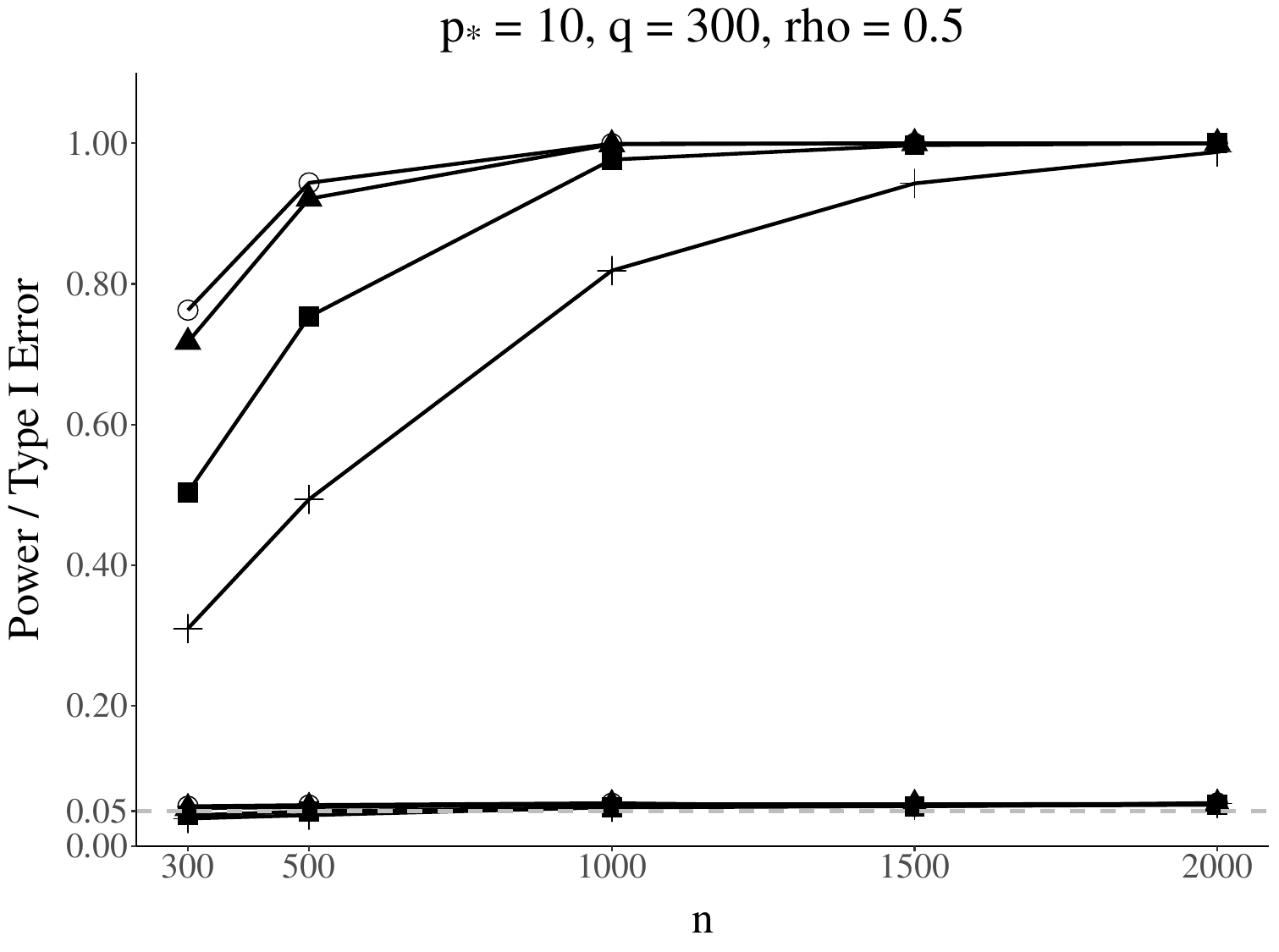}}%
          \subfigure{
        \includegraphics[width=1.9in]{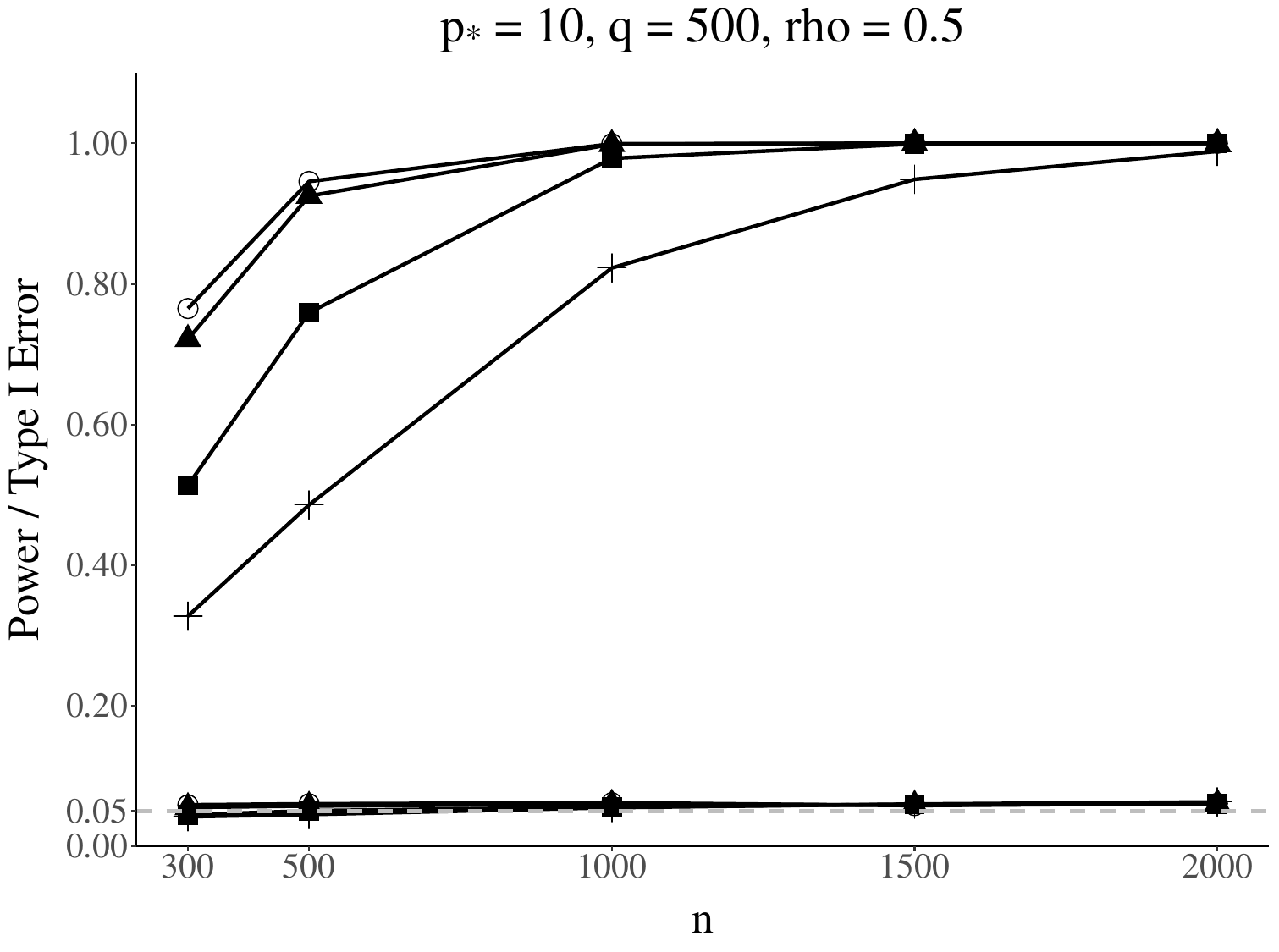}}%
    \caption{Powers and type I errors under dense setting at $p_*=10$. Circles (\protect\includegraphics[height=0.8em]{legend/new_rho0.png}) denote correlation parameter $\tau = 0$. Triangles (\protect\includegraphics[height=0.8em]{legend/rho0.2.png})  represent the case $\tau = 0.2$. Squares (\protect\includegraphics[height=0.8em]{legend/rho0.5.png}) indicate $\tau = 0.5$ and crosses (\protect\includegraphics[height=1em]{legend/rho0.7.png}) represent the $\tau = 0.7$.}
    \label{fig:non-anchor-correlated p10 dense}
\end{figure}

\subsection{Comparing Linear and Nonlinear Models for Binary Responses}

Recall that in regression analysis,  it is well-recognized that when modeling binary or categorical outcomes, generalized linear models (GLMs) with nonlinear link functions (e.g., logit or log links) should be used rather than linear regression models~\citep{nelder1972generalized, mccullagh2019generalized}. In psychometrics, item responses are often discrete, such as binary item response indicating whether a student answered a question correctly. To model such discrete data, it is common to use models with nonlinear link functions. Typically, the item Response Theory (IRT) models with a logit link are widely used, as they relate observed item responses to underlying latent traits (e.g., ability or proficiency) through a nonlinear link function. These models are common in educational and psychological assessments to account for both item characteristics and latent factors~\citep{birnbaum1968some, skrondal2004generalized, reckase2009}. Given that our motivating dataset, i.e., PISA 2018 data, contains mostly binary item responses, it is natural to use generalized (nonlinear) latent variable modeling framework for the estimation and inference in DIF analysis.

To illustrate the advantage of using nonlinear models over linear models for binary outcomes, we conducted a numerical study comparing the two frameworks. Specifically, we generate binary item response following the generating process described in Section~5 of the main text, and then fit the binary data using logistic model framework and linear model framework, separately. For illustration, we conduct simulation studies under (1) sparse setting with $ n \in \{300, 1000, 2000\}, q = 100,  p_*= 5, \rho \in \{0.3, 0.5\}$ and $\tau \in \{0, 0.5, 0.7\}$. The results for power and type I error under logistic model are presented in Figure~\ref{fig:true logistic q1}, while those under linear model are summarized in 
Figure~\ref{fig:3-Misspecified r1q1}. In addition, the empirical coverage probabilities of $\bU_i$ under both logistic model and linear model are summarized in
Table~\ref{tab:9-U_coverage r1}.

Comparing the upper part of Figures~\ref{fig:true logistic q1} with that of~\ref{fig:3-Misspecified r1q1}, the power under the logistic model is generally higher than that under the linear model across all settings. Besides, comparing the bottom part of Figures~\ref{fig:true logistic q1} with that of Figure~\ref{fig:3-Misspecified r1q1}, the type I error rates under logistic model are approximately at significance level 0.05, while those under linear model are much smaller than 0.05, indicating an overly conservative performance. Furthermore, as shown in Table~\ref{tab:9-U_coverage r1}, the empirical coverage probabilities for latent factors under logistic model is close to the nominal level 0.95, while the coverage probabilities under linear model are far below 0.95. These results indicate that linear models underperform in the hypothesis testing on not only covariate (DIF) effect but also latent factors as well.

\begin{figure}[ht]
\centering    
        \includegraphics[width=2.5in]{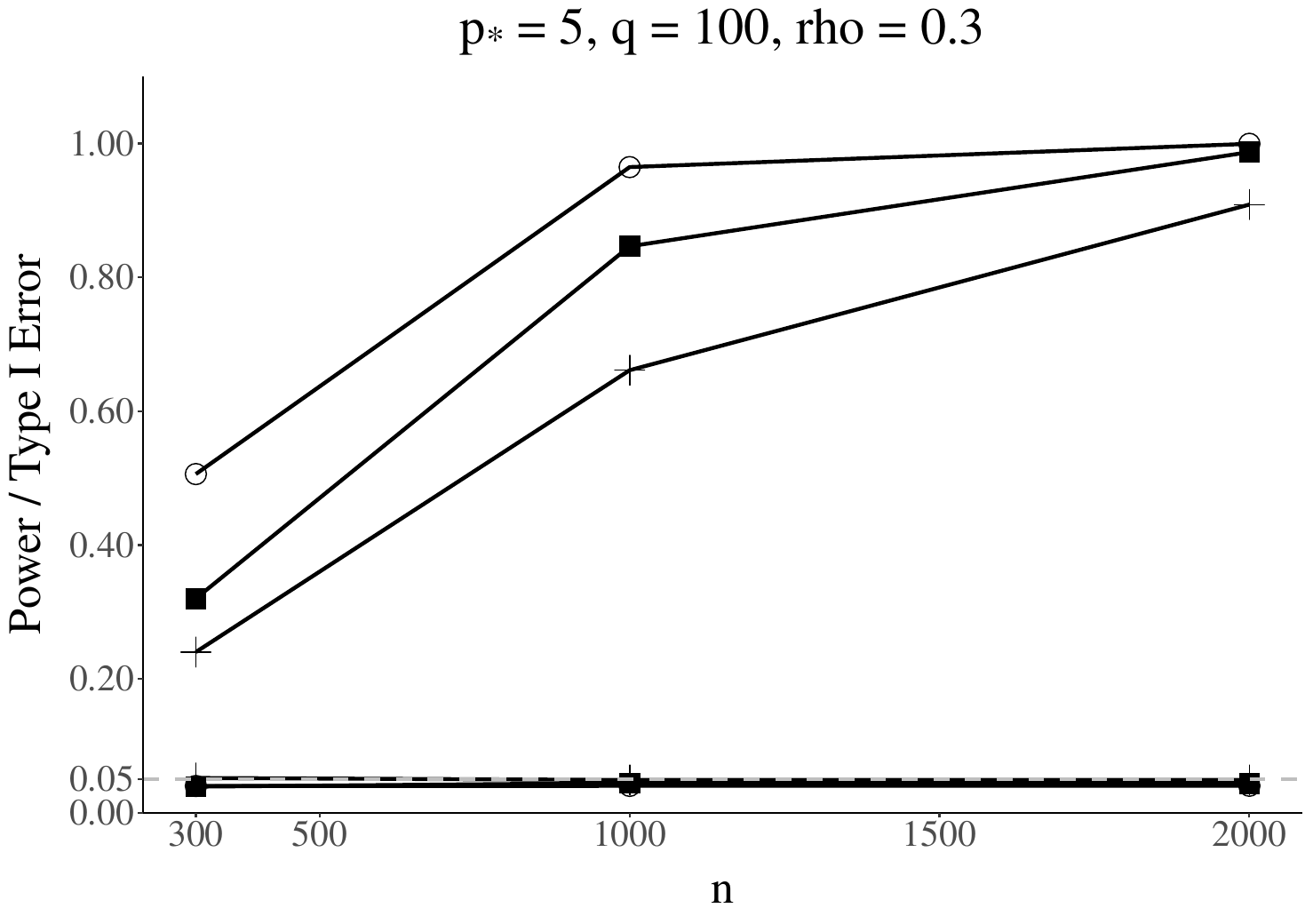}~
        \includegraphics[width=2.5in]{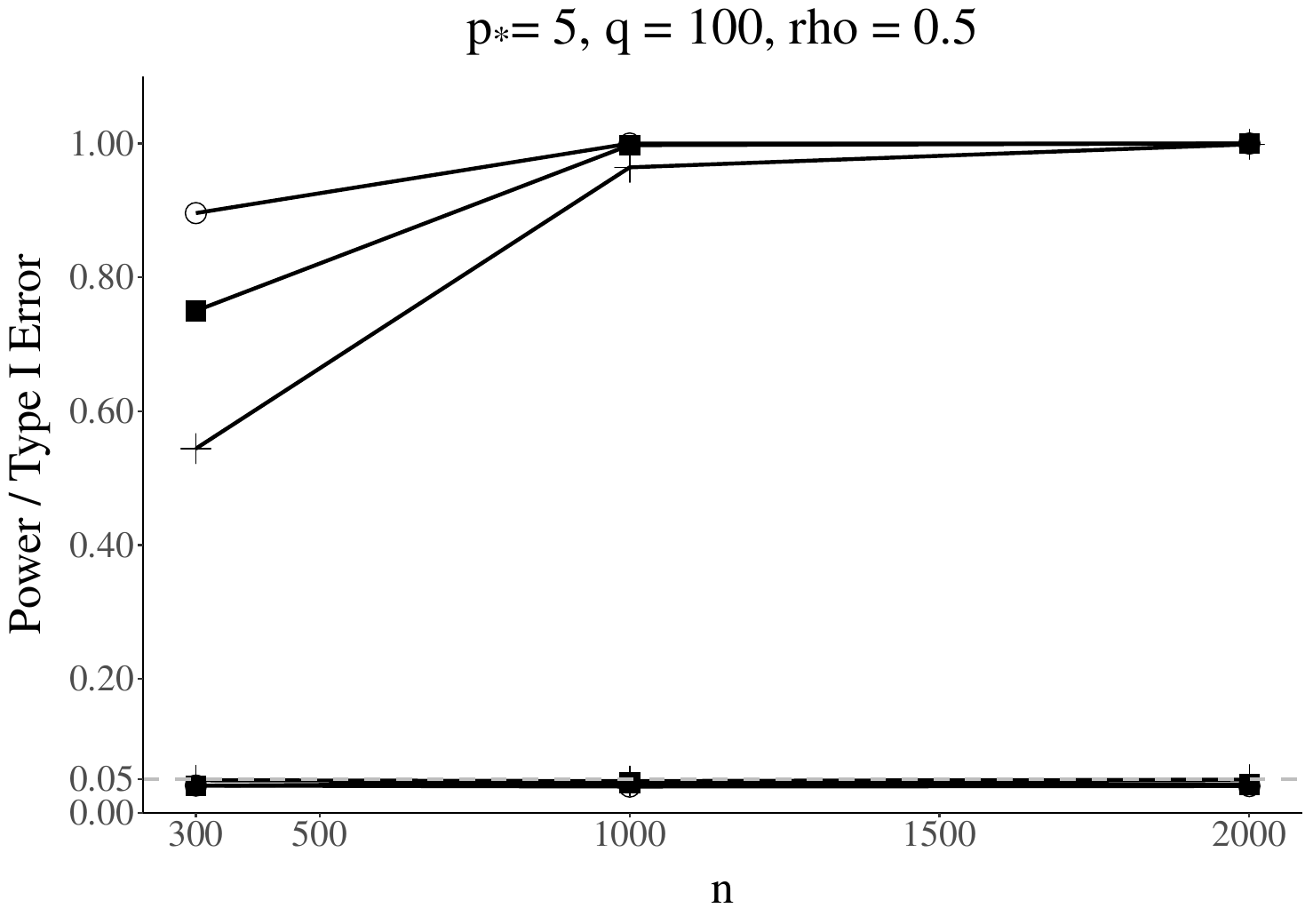}
    \caption{Powers and type I errors under sparse setting at $p_*=5$ and logistic framework. Circles (\protect\includegraphics[height=0.8em]{legend/new_rho0.png}) denote correlation parameter $\tau = 0$. Squares (\protect\includegraphics[height=0.8em]{legend/rho0.5.png}) indicate $\tau = 0.5$. Crosses (\protect\includegraphics[height=1em]{legend/rho0.7.png}) represent the $\tau = 0.7$.}
    \label{fig:true logistic q1}
\end{figure} 

\begin{figure}[ht]
\centering    
        \includegraphics[width=2.5in]{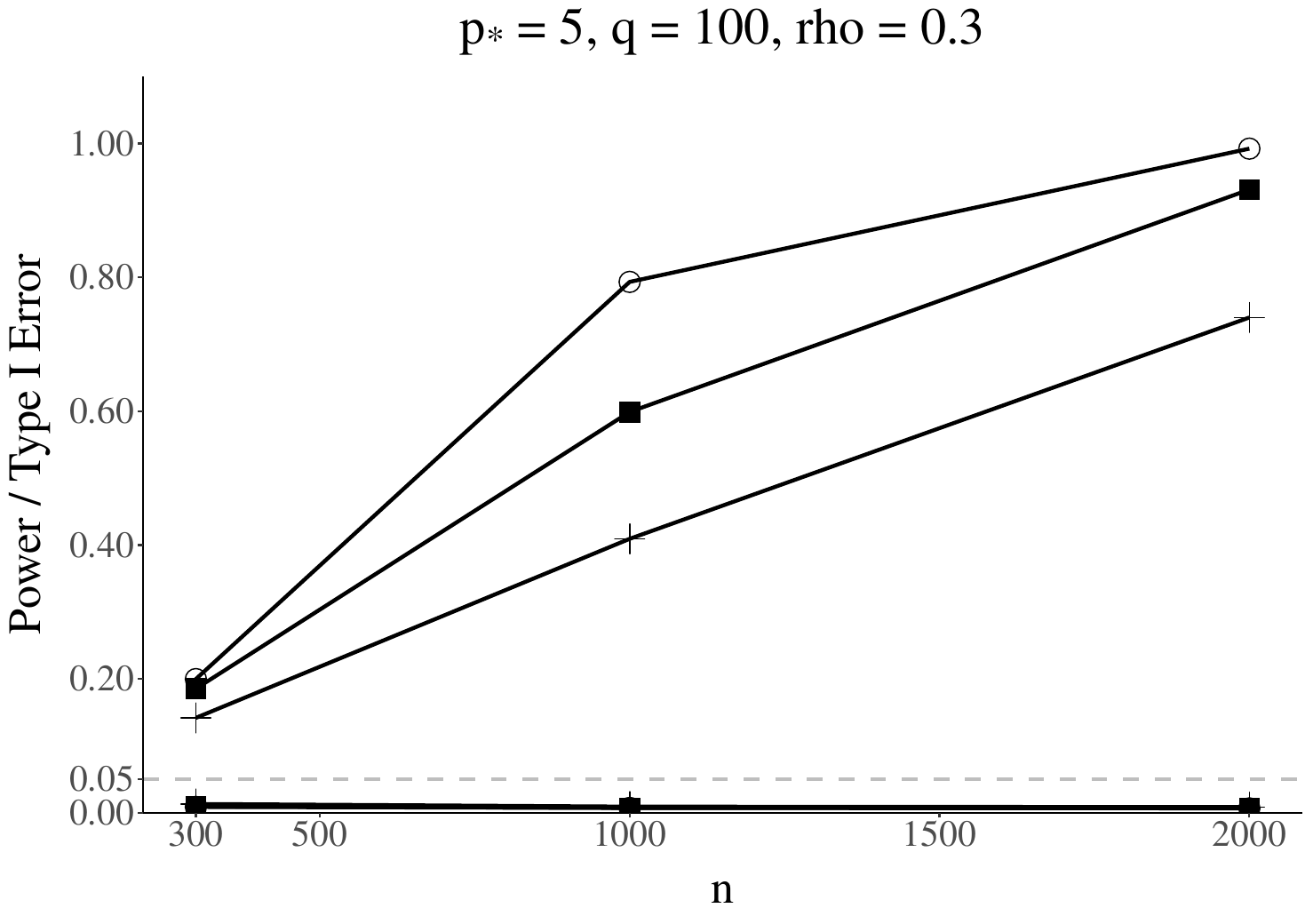}~
        \includegraphics[width=2.5in]{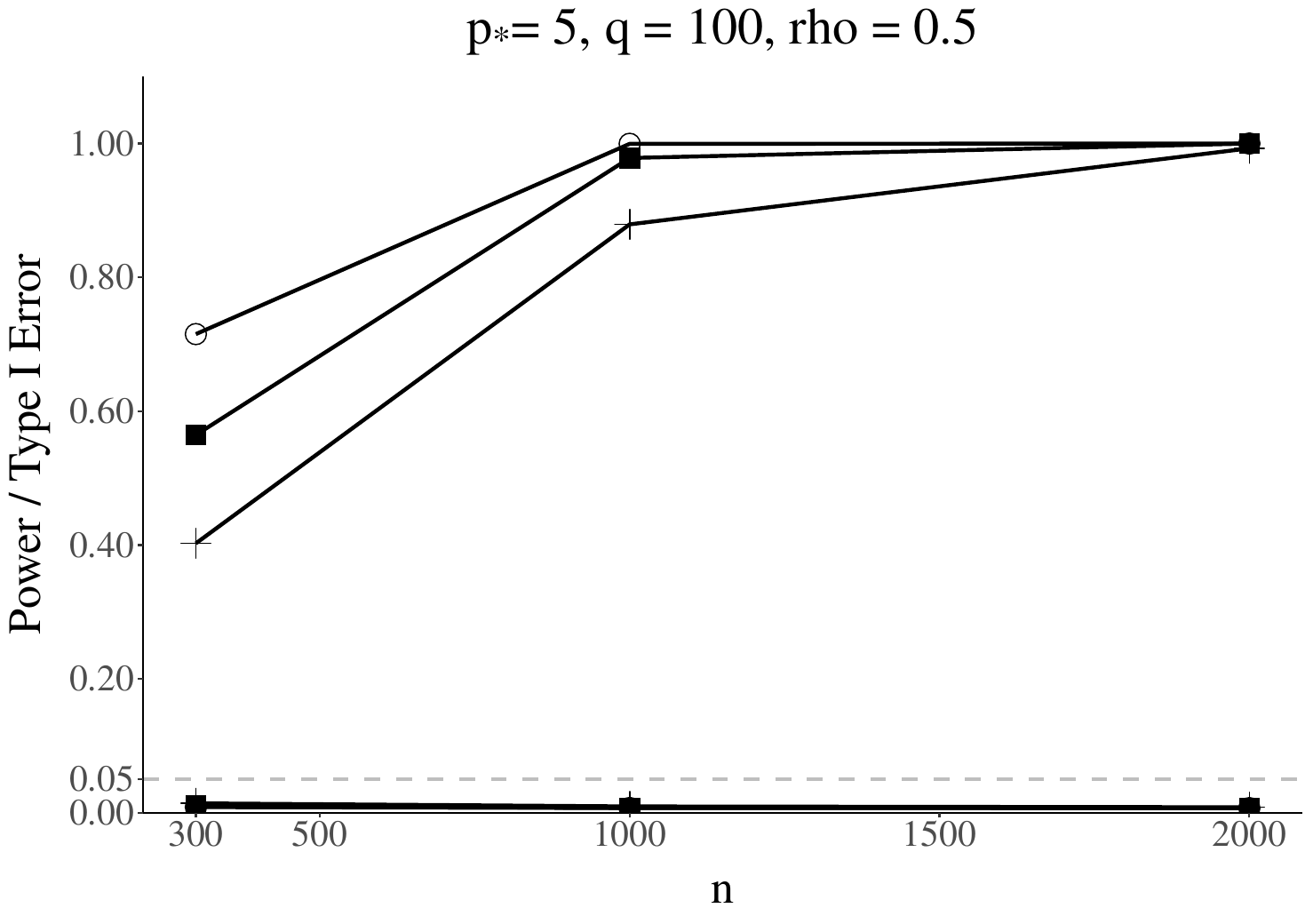}
    \caption{Powers and type I errors under sparse setting at $p_*=5$ and misspecified linear framework. Circles (\protect\includegraphics[height=0.8em]{legend/new_rho0.png}) denote correlation parameter $\tau = 0$. Squares (\protect\includegraphics[height=0.8em]{legend/rho0.5.png}) indicate $\tau = 0.5$. Crosses (\protect\includegraphics[height=1em]{legend/rho0.7.png}) represent the $\tau = 0.7$.}
    \label{fig:3-Misspecified r1q1}
\end{figure} 

\begin{table}[ht]
\centering
\begin{tabular}{clll}
  \hline
\multicolumn{4}{c}{Logistic model} \\
  \hline  
 & n = 300 & n = 1000 & n = 2000 \\ 
  \hline
$\rho = $ 0.5, $\tau = $ 0 & 0.879 (0.0742) & 0.931 (0.0259) & 0.941 (0.0086) \\ 
  $\rho = $ 0.3, $\tau = $ 0 & 0.878 (0.0739) & 0.931 (0.0264) & 0.942 (0.0084) \\ 
  $\rho = $ 0.5, $\tau = $ 0.5 & 0.898 (0.0408) & 0.932 (0.0183) & 0.942 (0.0067) \\ 
  $\rho = $ 0.3, $\tau = $ 0.5 & 0.896 (0.0483) & 0.932 (0.0180) & 0.941 (0.0087) \\ 
  $\rho = $ 0.5, $\tau = $ 0.7 & 0.888 (0.0570) & 0.929 (0.0234) & 0.939 (0.0102) \\ 
  $\rho = $ 0.3, $\tau = $ 0.7 & 0.889 (0.0582) & 0.929 (0.0228) & 0.940 (0.0098) \\ 
   \hline
\end{tabular}

\begin{tabular}{clll}
  \hline
\multicolumn{4}{c}{Linear model} \\
  \hline
 & n = 300 & n = 1000 & n = 2000 \\ 
  \hline
$\rho = $ 0.5, $\tau = $ 0 & 0.586 (0.0473) & 0.612 (0.0212) & 0.622 (0.0110) \\ 
  $\rho = $ 0.3, $\tau = $ 0 & 0.597 (0.0382) & 0.619 (0.0143) & 0.625 (0.0074) \\ 
  $\rho = $ 0.5, $\tau = $ 0.5 & 0.598 (0.0365) & 0.612 (0.0245) & 0.621 (0.0124) \\ 
  $\rho = $ 0.3, $\tau = $ 0.5 & 0.605 (0.0350) & 0.620 (0.0150) & 0.625 (0.0085) \\ 
  $\rho = $ 0.5, $\tau = $ 0.7 & 0.606 (0.0408) & 0.611 (0.0248) & 0.614 (0.0178) \\ 
  $\rho = $ 0.3, $\tau = $ 0.7 & 0.609 (0.0349) & 0.616 (0.0224) & 0.620 (0.0112) \\ 
   \hline
\end{tabular}
\caption{Empirical coverage probability for latent factors $U_{ik}^*$ under (1) sparse setting at $p_*=5$ and $q = 100$. Each entry reports the mean coverage (standard deviation) across 100 simulation replicates. }
\label{tab:9-U_coverage r1}
\end{table}

\subsection{Inferential Results for Latent Factors}
 Valid inference for the latent factors is important in both theoretical and practical aspects. From a theoretical perspective, developing inferential results leads to a deeper understanding of the behaviour of the MLE in nonlinear latent factor models, especially under complex identifiability constraints. It addresses a notable gap in the existing literature.
From a practical perspective, inference on latent factors enables a range of downstream analyses under DIF settings, such as comparing the latent abilities of individuals across different groups. These applications are of great importance in educational assessment and related fields. 

Motivated by the importance of, we include an additional evaluation metrics for simulation studies, specifically, we construct confidence intervals for each $U_{ik}$ for $i \in [n]$ and $k \in [K]$ and calculate the empirical coverage probabilities of these intervals on true parameter values $U_{ik}^*$ over 100 replications. For illustration, we conduct simulation studies under (1) sparse setting with $ n \in \{300, 1000, 2000\}, q = 100,  p_*= 5, \rho \in \{0.3, 0.5\}$ and $\tau \in \{0, 0.5, 0.7\}$. The empirical coverage results are summarized in the top panel of Table~\ref{tab:9-U_coverage r1}. As shown in the top panel of Table~\ref{tab:9-U_coverage r1}, the coverage probabilities for latent factors are close to the nominal level of 0.95 across all settings, showing the validity and reliability of the proposed inference results.

\subsection{Comparison with Anchor-Based Methods}
We have added anchor-based method as a comparative method and examined its performances under two cases: (a) the correctly specified anchor case: true DIF items are used as anchors; and (b) the misspecified anchor case: true non-DIF items are used as anchors.
    For illustrative purposes, we consider (1) sparse setting with combinations of sample size and covariate/item dimensions: $ n \in \{300, 1000, 2000\}, q = 100,  p=5, \rho \in \{0.3, 0.5\}$ and $\tau \in \{0, 0.5, 0.7\}$. We vary the number of anchor items to be 5 and 10.
    For case (a), the anchor items are set as the items 1 to 5 and items 1 to 10. For case (b), the anchor items are set as items $q-4$ to $q$ and items $q-9$ to $q$. The power and type I error results of anchor-based method are summarized in Figure~\ref{fig:7-anchor_item_method r2}.

    Comparing top two panels of Figure~\ref{fig:7-anchor_item_method r2} with its bottom two panels, we observe that the power will be lower in case (a) compared to case (b) since the latter case correctly specifies DIF-free items as anchors, while the former relies on misspecified DIF items. Moreover, even for the better performed case with non-DIF items as anchors, the type I error rates cannot be controlled under significance level 0.05 for $\tau = 0.5$ and $\tau =0.7$. While for our proposed method (see Figure~\ref{fig:true logistic r2}), the power are close to 1 when $n$ is large and the type I error is controlled under significance level 0.05 across all settings.

\subsection{Relative Scales of $n$ and $q$}

We considered a simulation setting with a fixed relative scale of $n$ and $q$. Specifically, we set $q = 5n$, where $n \in \{300, 500, 1000\}$, and the remaining data-generating process follows the same in Section~5 of the main text. The results are summarized in Figure~\ref{fig:8-relative_scale_nq_r3}. We notice that under the regime of $q = 5n$, the type I error rates remain contolled under the significance level 0.05 and the powers grow when sample sizes $n$ grow and are close to 1.

\subsection{Robustness to Discrete Covariates}
We include an additional simulation study in which the covariates $\bX_i^c$ are discrete to demonstrate the robustness of the proposed method under various settings. Specifically, we jointly generate $\bX_i^c$ and $\bU_i^*$ from $\cN(\bm{0},\bSigma)$ and then dichotomize the generated $\bX_i^c$ into binary random variable by thresholding each entry at zero. Specifically, each entry of $\bX_i^c$ is set to 1 if its original value is greater than 0, and 0 otherwise. We apply our proposed method to the resulting binary covariates. From Figure~\ref{fig:5-binary_X r3}, we observe that the type I error remains under the significance level and power remains close to 1, indicating that the method performs great when covariates are discrete.

\subsection{Sensitivity Analysis with Misspecified Number of Factors}

In our work, the covariate-adjusted generalized factor model is used within a confirmatory factor analysis framework. In contrast to exploratory factor analysis, where the number of latent factors is empirically determined, confirmatory factor analysis specifies the number of factors a priori based on theoretical considerations or prior research~\citep{brown2012confirmatory, chen2025item}.

In practice, the number of latent factors is typically pre-specified based on domain expertise and prior analysis. For example, in the data application to PISA 2018, the assessment evaluates student performances through three core domains—mathematics, science, and reading. Accordingly, we pre-specify the number of latent factors as $K = 3$ to align with the three underlying abilities targeted in these domains~\citep{schleicher2019pisa, pisatechnicalreport2018}. Other popularly used methods for selecting the number of factors can be found in exploratory factor analysis (EFA), including scree plots~\citep{cattell1966scree}, parallel analysis~\citep{horn1965}, the eigenvalue ratio method~\citep{kaiser1960application, lam2012factor}, and information-based criteria such as AIC~\citep{akaike1987factor}, BIC~\citep{schwarz1978estimating}, and other information criterion based on joint likelihood~\citep{chen2022determining}.


To evaluate the impact of the factor dimension $K$ on the estimation and inference of DIF effects, we conduct a sensitivity analysis by comparing model performance under different values of $K$. In our simulation studies, we assess the robustness of our method by examining cases where the factor dimension used in estimation differs from the true value $K = 4$. Specifically, we generate data following the same process in Section~5 of main text with factor dimension set to be $K = 4$. Then we fit the model with ${K}_{mis}=3, 4, 5, 8$ to assess the method performances of underestimating, correctly estimating, and overestimating the number of factors. We use the proposed method and follow the same procedure of hypothesis testing on covariate effect as in the main text, and summarize the results in Figure~\ref{fig:6-sensitivity analysis}.
    As shown in the second to fourth panels of Figure~\ref{fig:6-sensitivity analysis}, the type I error rates are below the significance level 0.05 for ${K}_{mis} = 4, 5, 8$, implying that our method is robust to overestimation of factor dimension. However, when the factor dimension is underestimated (${K}_{mis} = 3$), the type I error rates exceed the significance level 0.05 as shown in the top panel of Figure~\ref{fig:6-sensitivity analysis}, showing an underperformance of the proposed method when the factor dimension is lower than the true value of factor dimension.

\subsection{Robustness to the Number of Random Initializations}
In terms of finding the global optimum of the proposed alternating minimization algorithm, we follow a common practice of generating multiple random initial values and returning the solution that yields the largest likelihood~\citep{collins2001generalization}.
 We conduct simulation studies to show that the algorithm is robust to the number of random initial values. Specifically, we follow the same data-generating process as in Section~5 of the main text, but vary the number of initial values in the algorithm to be $3, 5$, and $8$ in the algorithm. The results are presented in Figure~\ref{fig:11-num_initial}. Across different numbers of initial values, the pattern of type I error and power remain very similar. In particular, the method performs consistently great with type I errors under 0.05 and empirical powers close to 1.

\subsection{Full DIF Effect Model}
Consider a full DIF effect model incorporating the effect of $\bX_i^c$ that runs via $\bU_i$ with:
\begin{align*}
w_{i j}=\beta_{j 0}+\boldsymbol{\gamma}_j^T \boldsymbol{U}_i+\boldsymbol{\beta}_{j c}^T \boldsymbol{X}_i^c+\boldsymbol{\gamma}_j^T\left(\ba_0+\boldsymbol{a}_1^T \boldsymbol{X}_i^c\right)
\end{align*}
where $\ba_0$ is the intercept in the regression of $\boldsymbol{U}_i$ on $\boldsymbol{X}_i^c$ (which is commonly fixed to 0 for identification purposes) and $\ba_1 \in \RR^{K\times p}$ is the slope parameter.

Under the present joint maximum likelihood framework, fixing $\ba_1$ to $\bm 0_p$ will not affect the estimation or inference for $\bbeta_{jc}$, as the effect of $\ba_1$ is fully absorbed into the estimates of $\bU_i$. 
First, we view the model (1) in the main text as a reparameterization of the full DIF effect model.
Specifically, we can rearrange the suggested model into:
\begin{subequations}\begin{align}
    w_{ij} & = \beta_{j0} + \bgamma_j^{\T} \bU_i + \bbeta_{jc}^{\T} \bX_i^c + \bgamma_j^{\T} (\ba_0 + \ba_1^{\T} \bX_i^c) \nonumber \\
    & = (\beta_{j0} + \bgamma_j^{\T}\ba_0 ) + \bgamma_j^{\T} (\bU_i + \ba_1 ^{\T} \bX_i^c)  + \bbeta_{jc}^{\T} \bX_i^c \label{eq:rearrange r3} \\
 & = \beta_{j0}  + \bgamma_j^{\T} (\bU_i + \ba_1^{\T} \bX_i^c)  + \bbeta_{jc} ^{\T} \bX_i^c \label{eq:decomposition wij} \\
 & =: \tilde{\beta}_{j0}+\tilde{\bgamma}_j^{\T} \tilde{\bU}_i + \tilde{\bbeta}_{jc}^{\T} \bX_i^c, \label{eq:our model}
\end{align}\end{subequations}
where the simplification~\eqref{eq:decomposition wij} follows from setting $\ba_0 = \bm{0}$, a common practice and also suggested by the reviewer. As a result, our model~\eqref{eq:our model} is equivalent to suggested full DIF model after parameter transformation: 
\begin{align}
\label{eq:r2 transformation}
    \tilde{\beta}_{j0} = \beta_{j0},\; \tilde{\bgamma}_j = \bgamma_j,\; \tilde{\bU}_i =\bU_i + \ba_1^{\T} \bX_i^c,\; \tilde{\bbeta}_{jc} = \bbeta_{jc}.
\end{align}
Applying our proposed estimation method in the Section 3.2 of the main text, we obtain the estimators $\hat{\beta}_{j0}^*, \hat{\bgamma}_j^*, \hat{\bU}_i^*, \hat{\bbeta}_{jc}^*$ for the parameters $ \tilde{\beta}_{j0}, \tilde{\bgamma}_j , \tilde{\bU}_i , \tilde{\bbeta}_{jc}$, respectively. Since the true parameters $\bbeta_{jc}^*$ satisfy the Condition 1(ii) $\sum_{j=1}^q \| \bbeta_{jc} \|_1 < \sum_{j=1}^q \| \bbeta_{jc} - \Ab_{c}^{\T} \bgamma_j\|_1$ for any $\Ab_c\neq \bm{0}$, all the effect of $\ba_1$ is absorbed into estimates of $\bU_i$ instead of entering the parameter $\tilde{\bbeta}_{jc}$.

In addition, the slope parameter $\ba_1$ in the regression of $\bU_i$ on $\bX_i^c$ captures the dependence between $\bU_i$ and $\bX_i^c$. As the existence of $\ba_1$ will not affect the estimation and inference of $\bbeta_{jc}$, such model reparameterization implies that our considered framework allow arbitrary dependence between $\bU_i$ and $\bX_i^c$. 

Next, we perform a numerical study to demonstrate that the nonzero $\ba_1$ is fully absorbed in the $\bU_i$ estimates and do not affect the estimation and inference results for $\bbeta_{jc}$. We adopt the same data-generating process for the covariate $\bX_i$ and parameters $\bU_i$, $\bgamma_j$, $\bbeta_{jc}$, $\beta_{j0}$ as in Section~5 of main text. In addition, following the suggested full DIF model, we incorporate $\ba_1 = (a\Ib_K, \bm{0})$, with $a\in \{0.3, 0.5\}$ when generating the item response data. We then apply our proposed method to the resulting data and covariates. The results for power and type I error of $\bbeta_{jc}$ are presented in Figure~\ref{fig:13-marginal_full_DIF}. As shown, the type I errors of $\bbeta_{jc}$ remain well controlled below 0.05 across all the settings with nonzero $\bm{a}_1$, confirming that nonzero $\ba_1$ does not affect the inference results for $\bbeta_{jc}$. 

\clearpage

\begin{figure}[htbp]
\centering   
\includegraphics[width=2.5in]{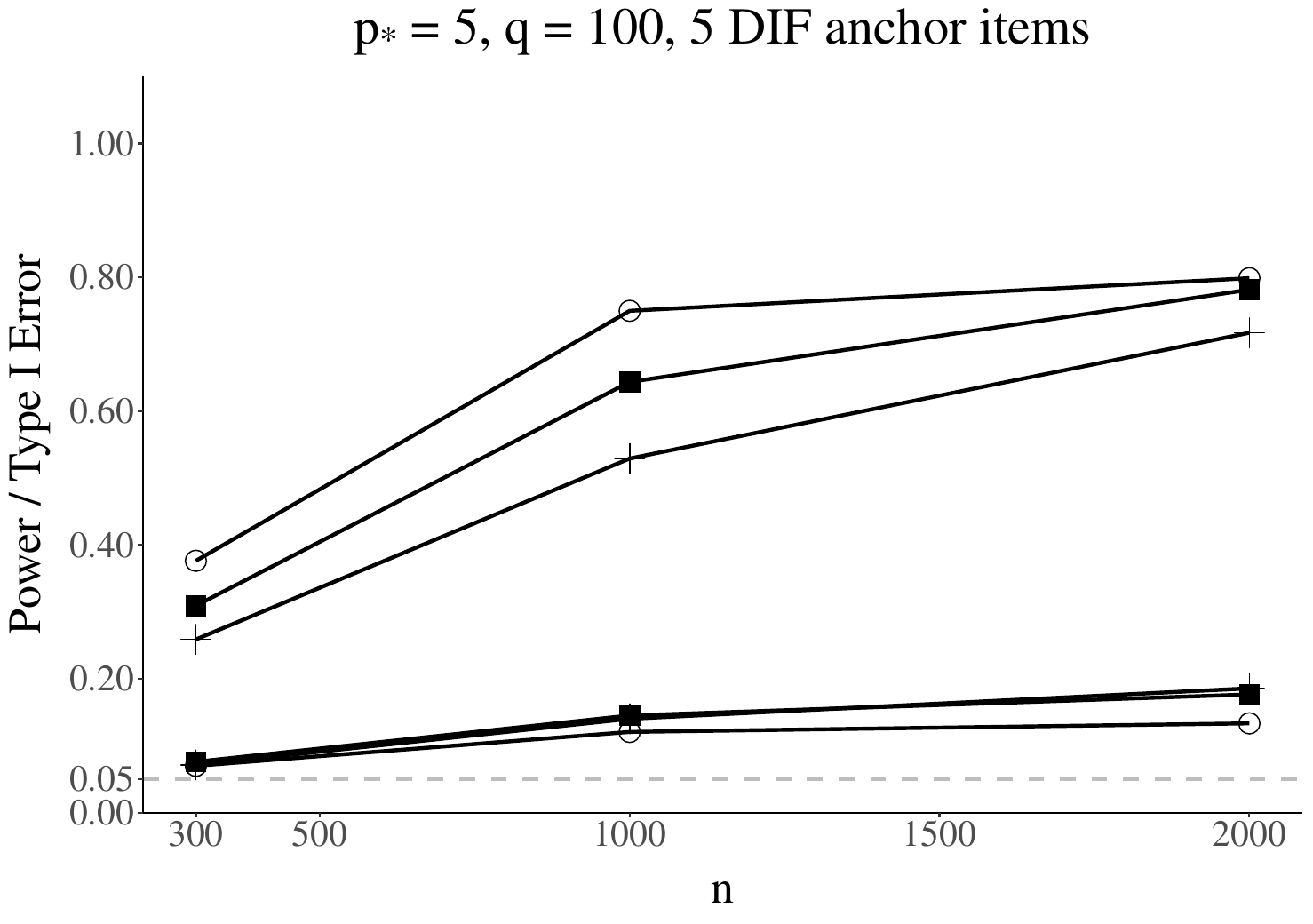}~
        \includegraphics[width=2.5in]{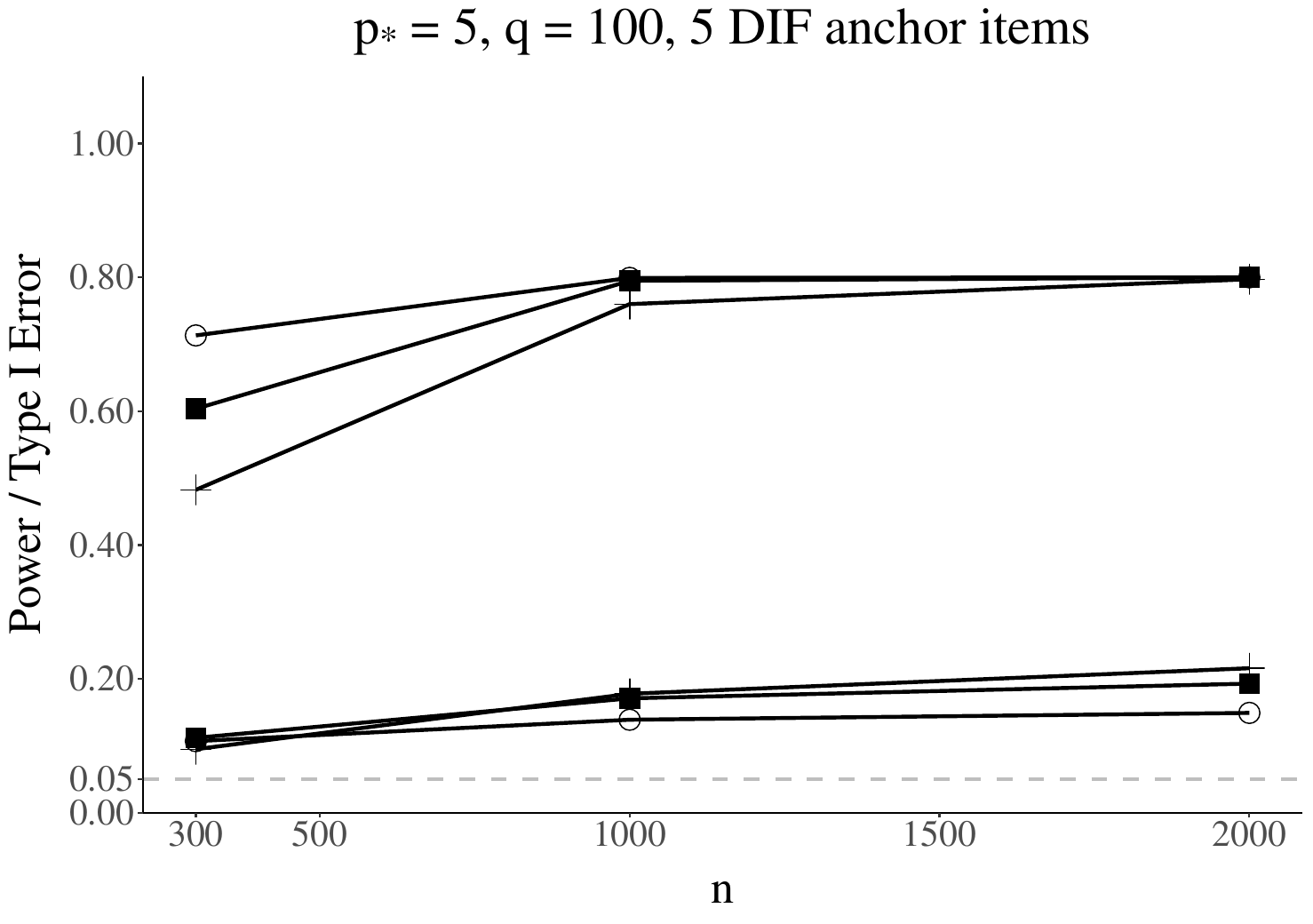}\\
         \includegraphics[width=2.5in]{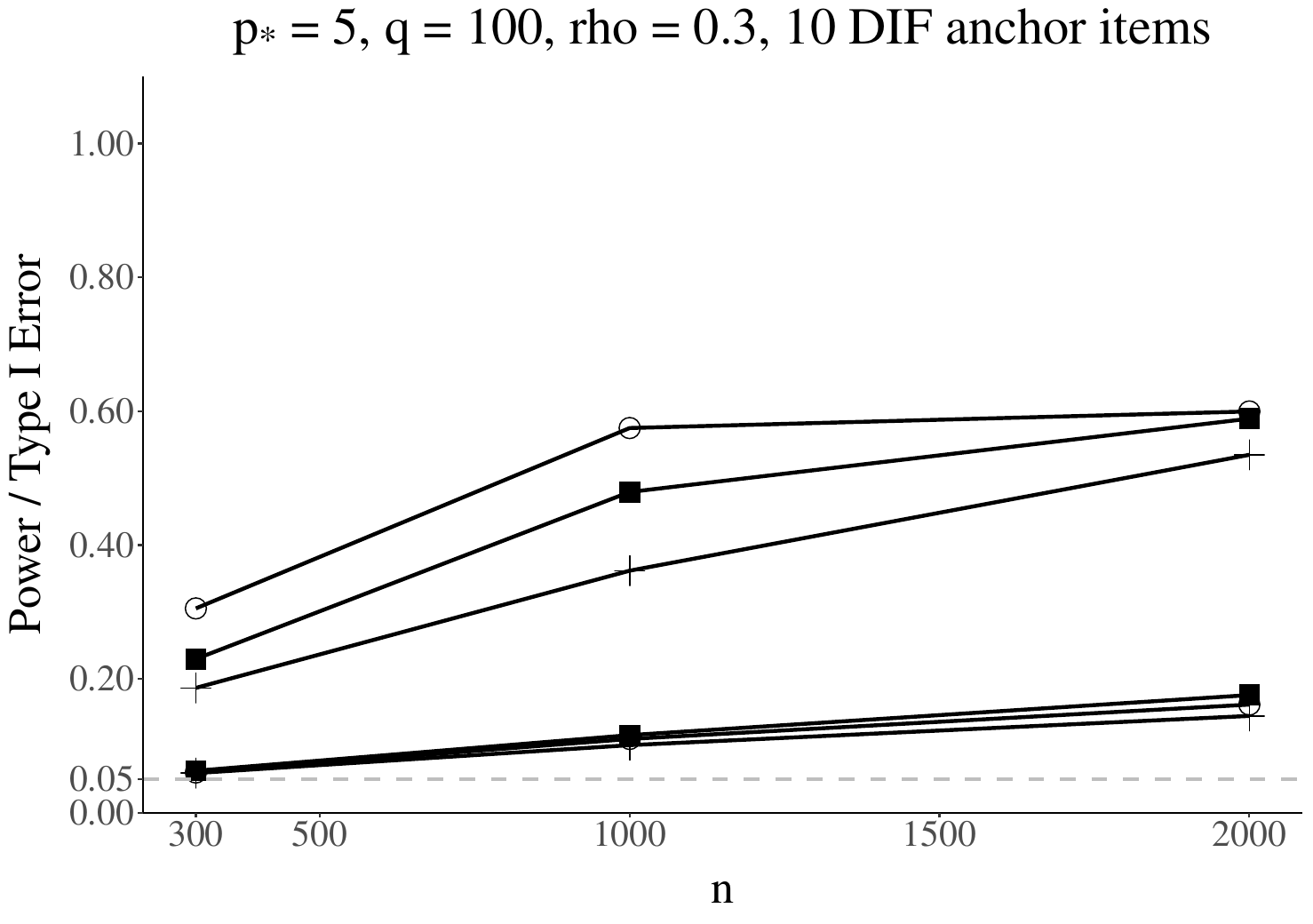}~
        \includegraphics[width=2.5in]{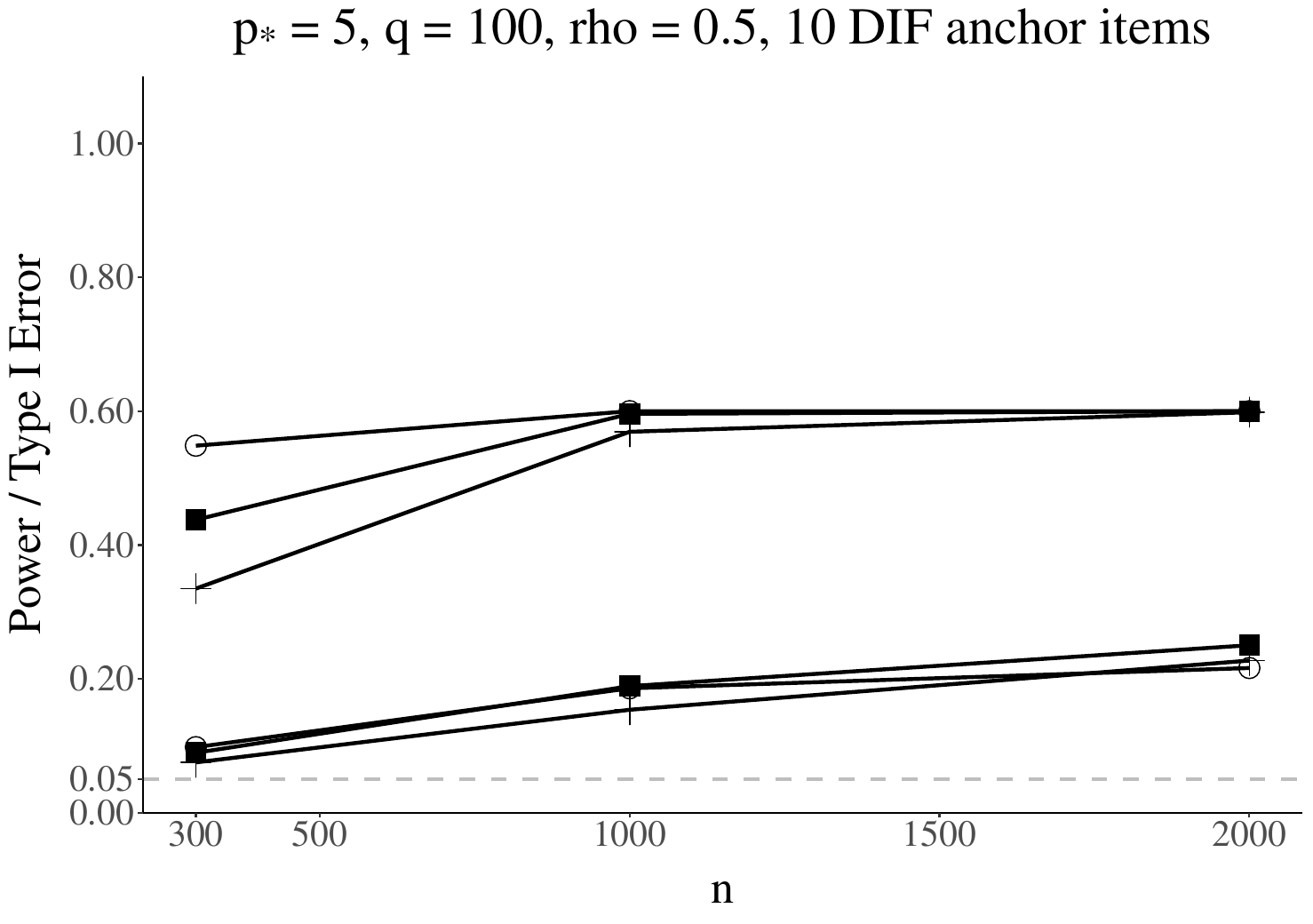}\\
         \includegraphics[width=2.5in]{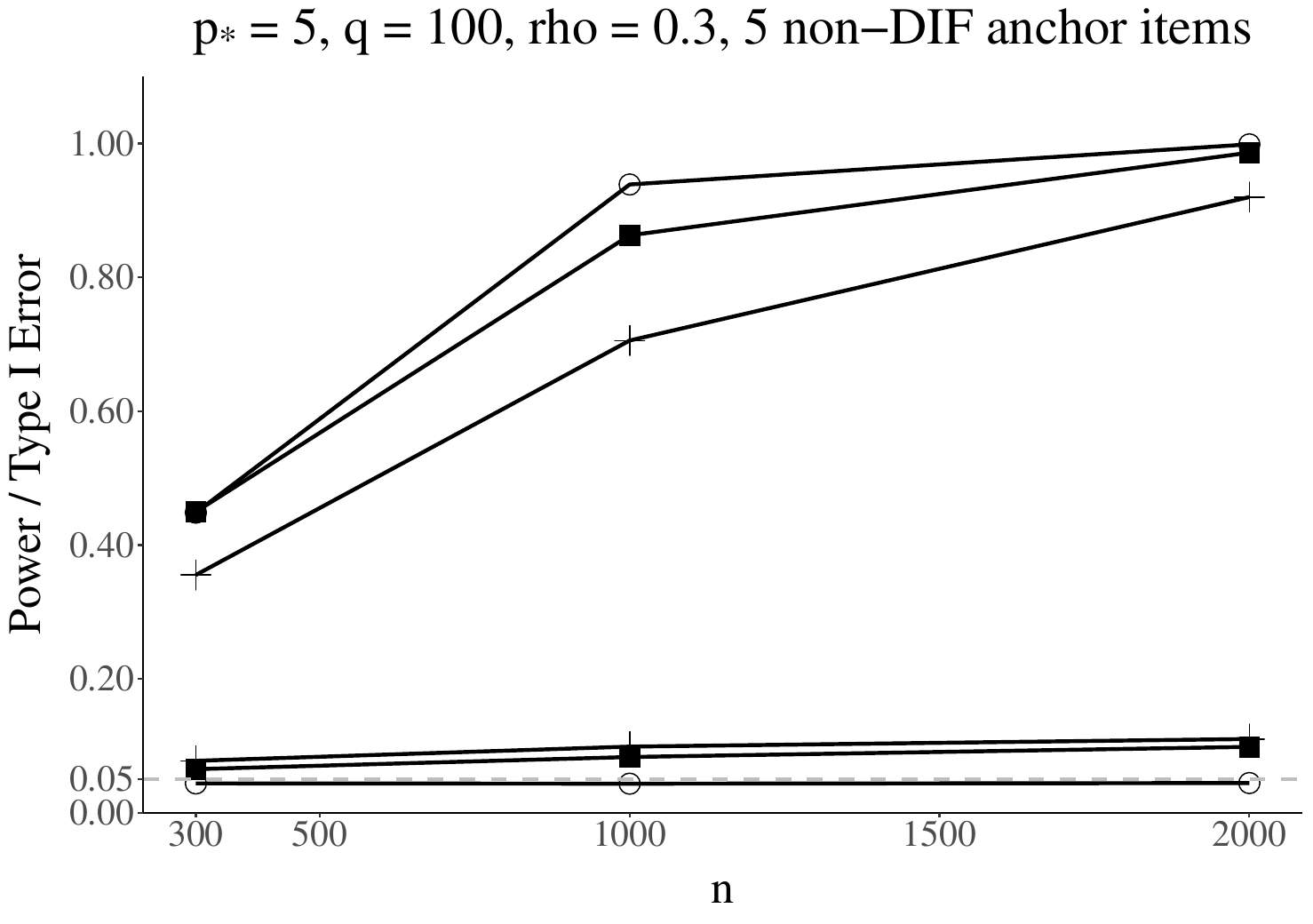}~
        \includegraphics[width=2.5in]{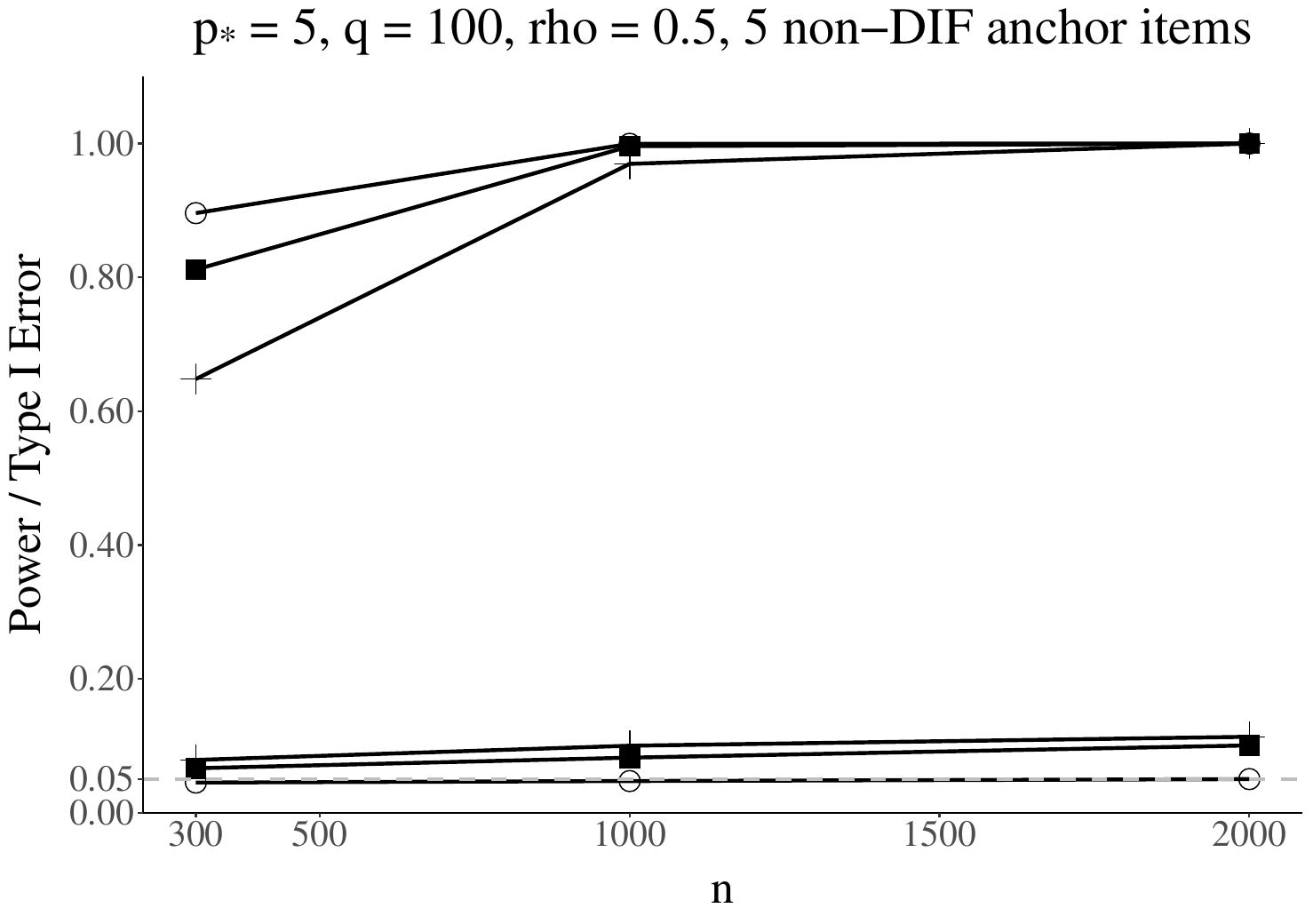}\\
         \includegraphics[width=2.5in]{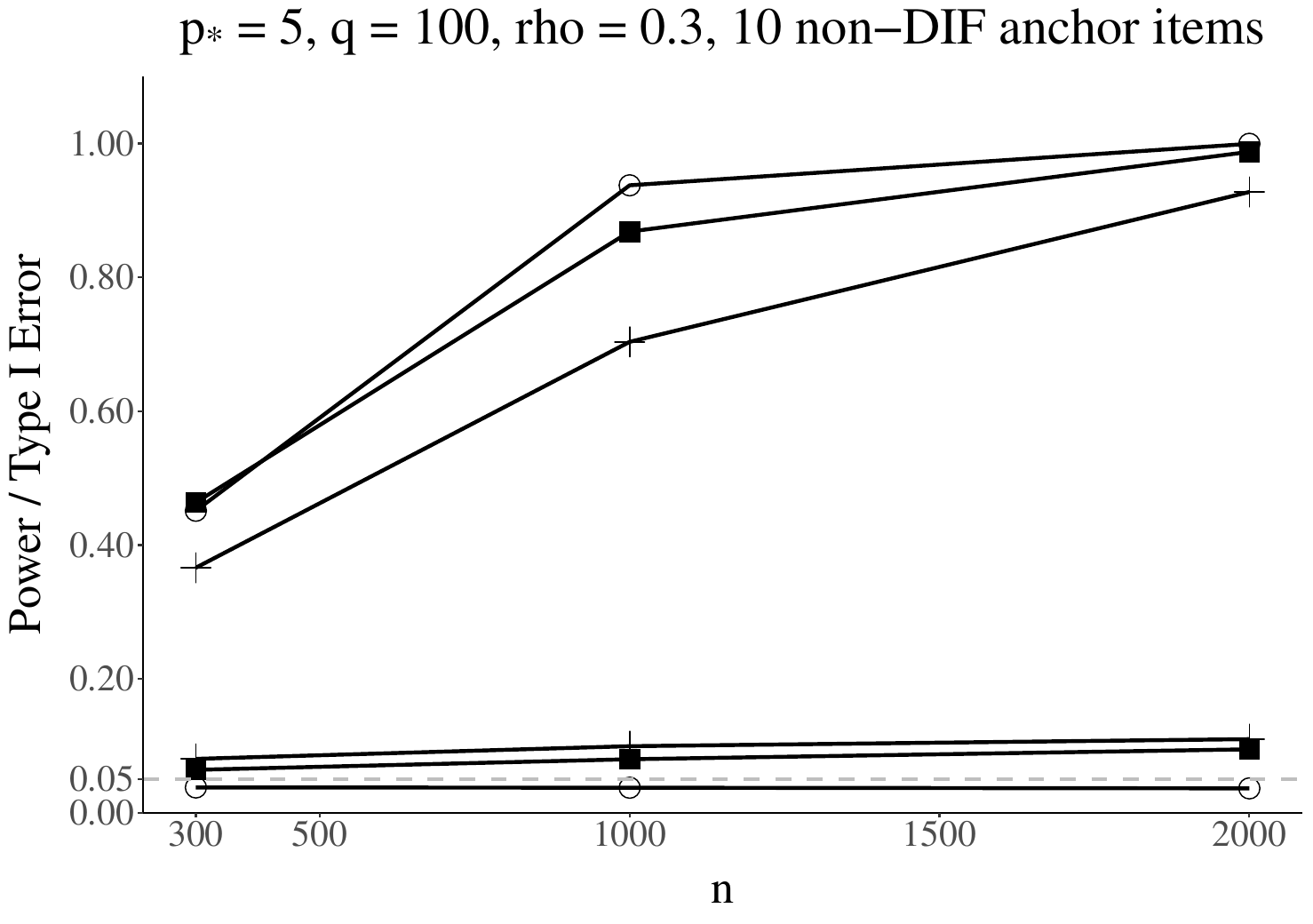}~
        \includegraphics[width=2.5in]{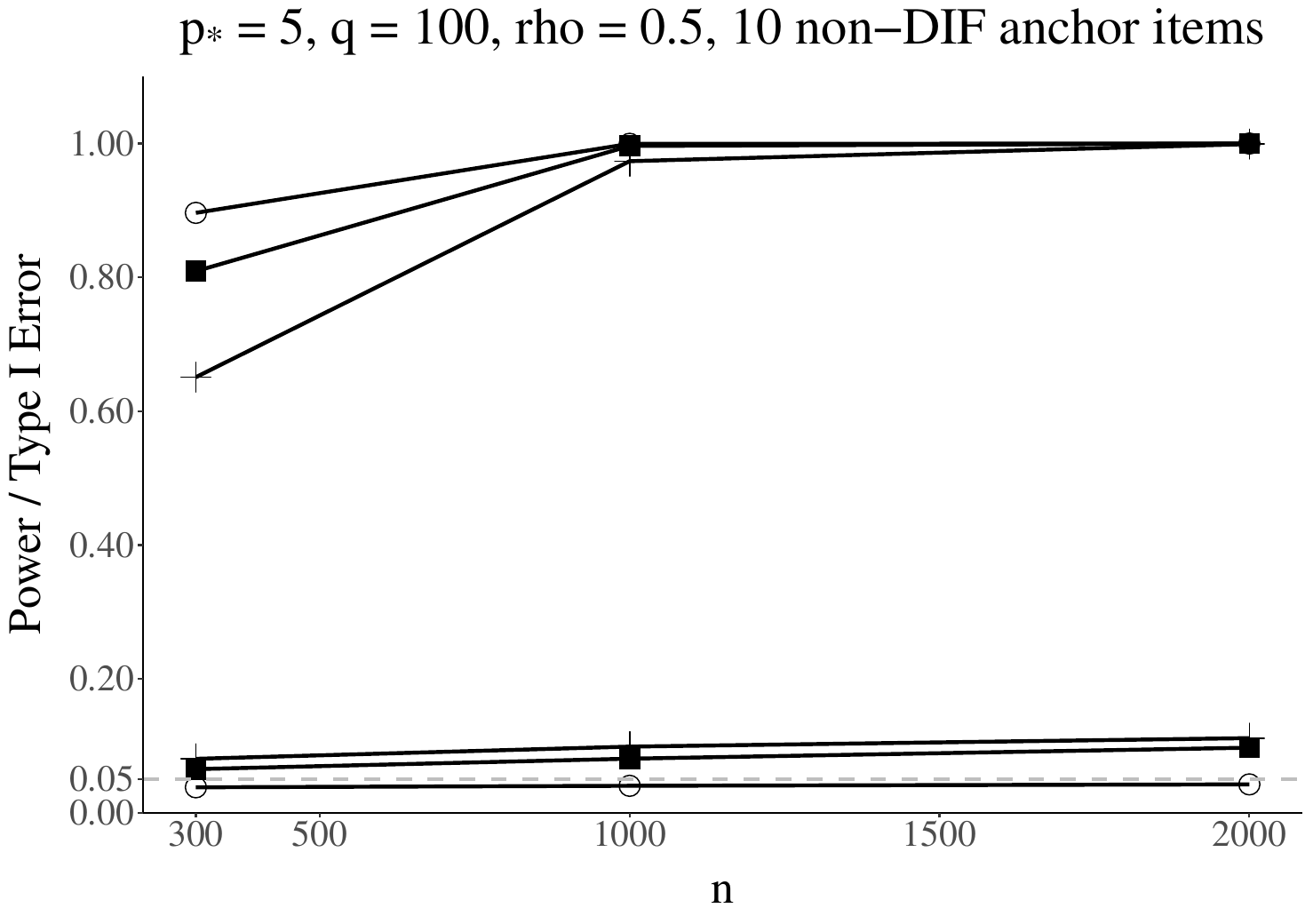}\\
    \caption{Powers and type I errors by anchor-item based estimation method and sparse setting at $p_*=5$. Circles (\protect\includegraphics[height=0.8em]{legend/new_rho0.png}) denote correlation parameter $\tau = 0$. Squares (\protect\includegraphics[height=0.8em]{legend/rho0.5.png}) indicate $\tau = 0.5$. Crosses (\protect\includegraphics[height=1em]{legend/rho0.7.png}) represent the $\tau = 0.7$.}
    \label{fig:7-anchor_item_method r2}
\end{figure}

\begin{figure}[ht]
\centering    
        \includegraphics[width=2.5in]{plots_set1/p_5_q_100_r_2_signal_0.3_r1q1.pdf}~
        \includegraphics[width=2.5in]{plots_set1/p_5_q_100_r_2_signal_0.5_r1q1.pdf}
    \caption{Powers and type I errors under sparse setting at $p_*=5$ and logistic framework. Circles (\protect\includegraphics[height=0.8em]{legend/new_rho0.png}) denote correlation parameter $\tau = 0$. Squares (\protect\includegraphics[height=0.8em]{legend/rho0.5.png}) indicate $\tau = 0.5$. Crosses (\protect\includegraphics[height=1em]{legend/rho0.7.png}) represent the $\tau = 0.7$.}
    \label{fig:true logistic r2}
\end{figure} 

\begin{figure}[h!]
\centering    
        \includegraphics[width=2.5in]{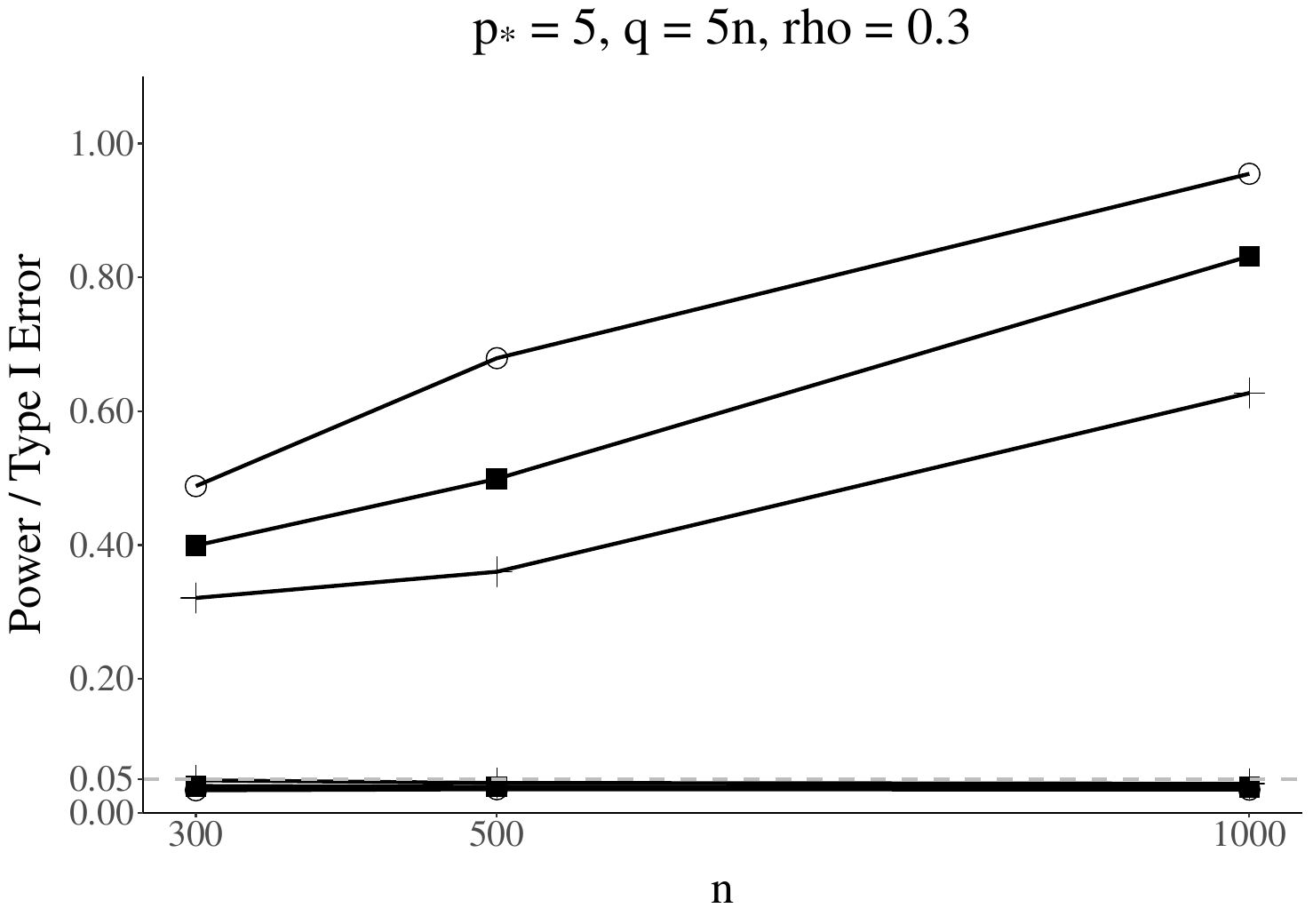}~
        \includegraphics[width=2.5in]{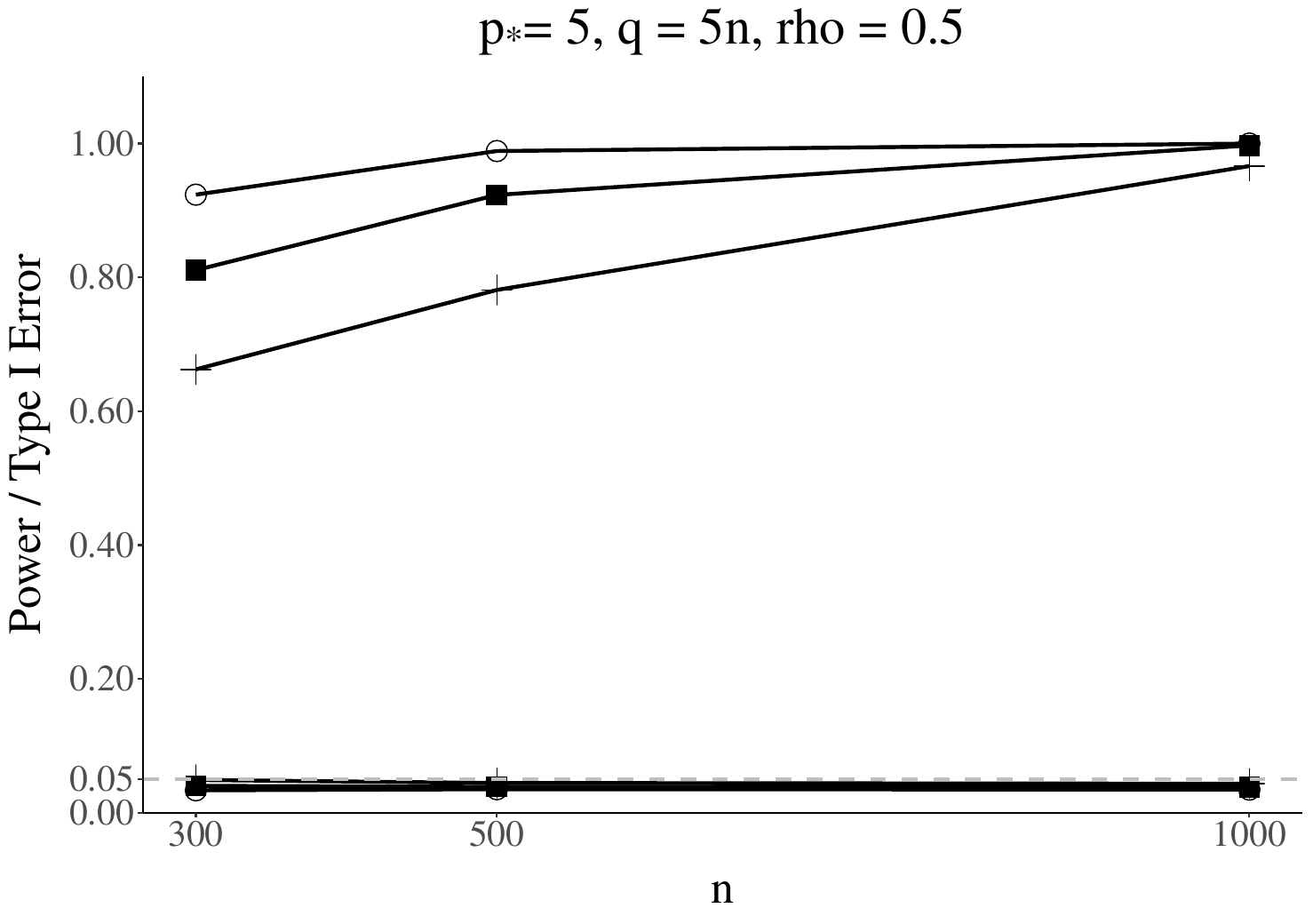}
    \caption{Powers and type I errors for sparse setting at $p_*=5$ and $q = 5n$. Circles (\protect\includegraphics[height=0.8em]{legend/new_rho0.png}) denote correlation parameter $\tau = 0$. Squares (\protect\includegraphics[height=0.8em]{legend/rho0.5.png}) indicate $\tau = 0.5$. Crosses (\protect\includegraphics[height=1em]{legend/rho0.7.png}) represent the $\tau = 0.7$.}
    \label{fig:8-relative_scale_nq_r3}
\end{figure} 

\begin{figure}[h]
\centering    
        \includegraphics[width=2.5in]{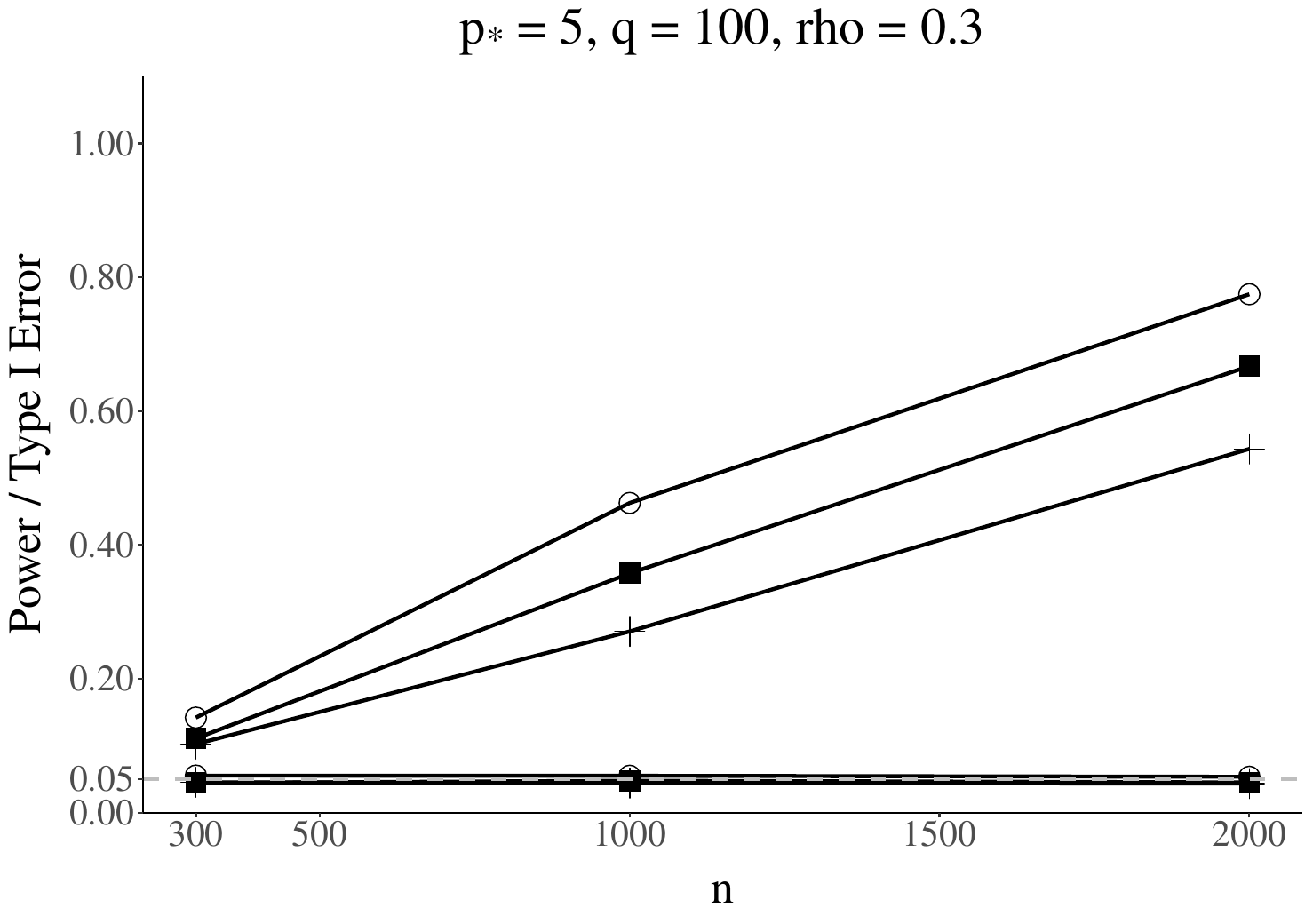}~
        \includegraphics[width=2.5in]{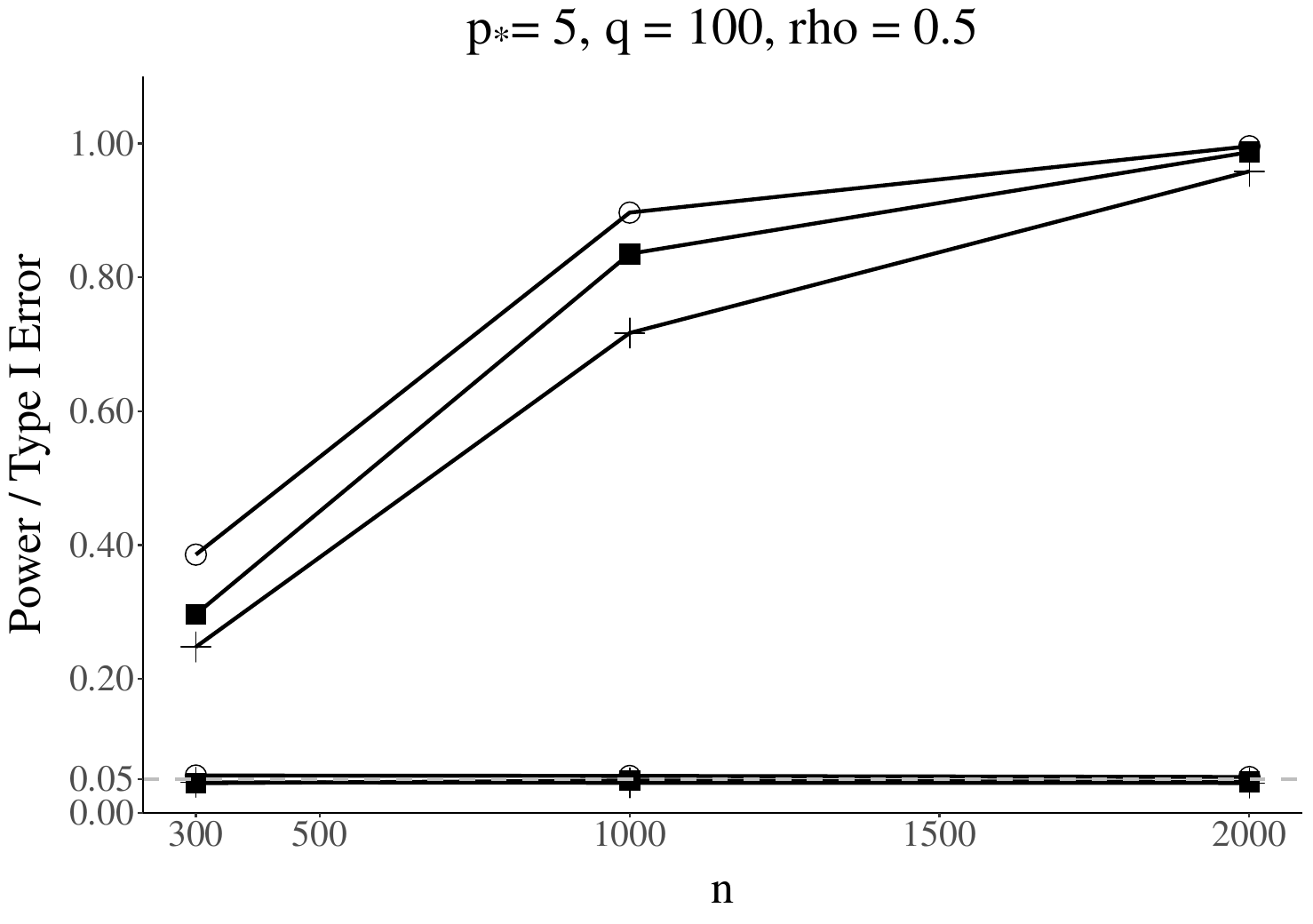}
    \caption{Powers and type I errors for sparse setting at $p_*=5$ and binary $\bX_i^c$. Circles (\protect\includegraphics[height=0.8em]{legend/new_rho0.png}) denote correlation parameter $\tau = 0$. Squares (\protect\includegraphics[height=0.8em]{legend/rho0.5.png}) indicate $\tau = 0.5$. Crosses (\protect\includegraphics[height=1em]{legend/rho0.7.png}) represent the $\tau = 0.7$.}
    \label{fig:5-binary_X r3}
\end{figure}

\begin{figure}[htbp]
\centering    
        \includegraphics[width=2.5in]{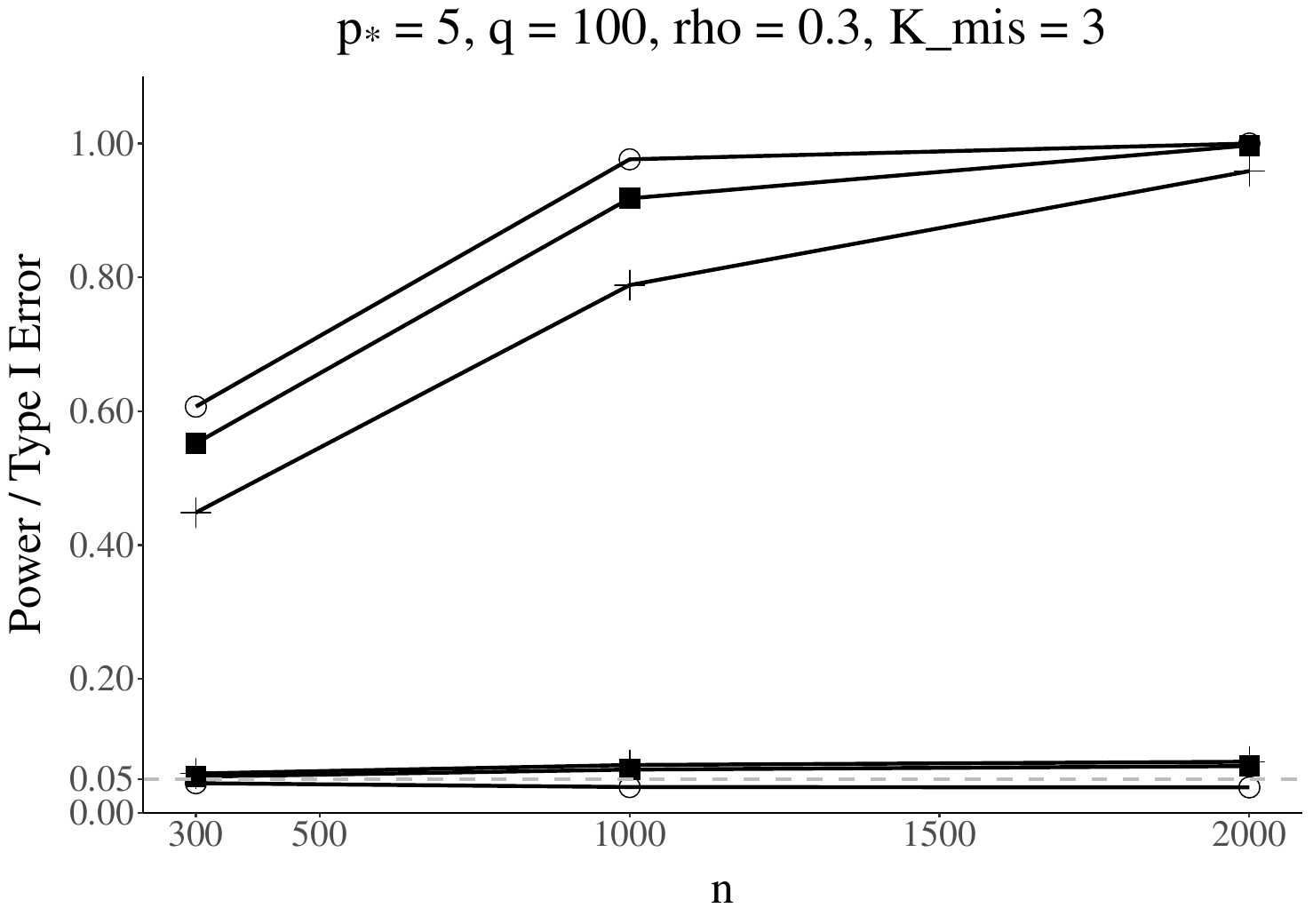}~
        \includegraphics[width=2.5in]{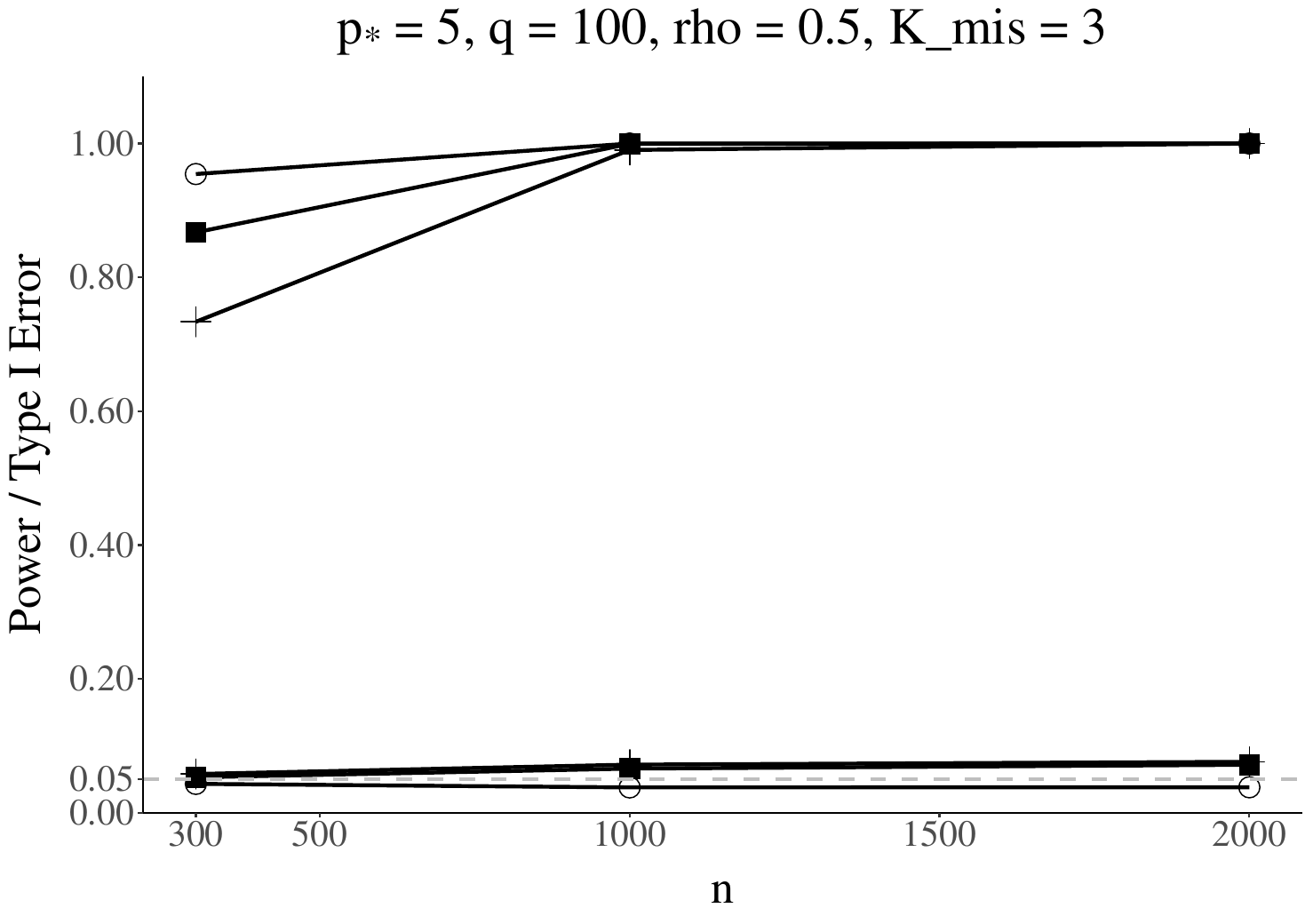}\\
         \includegraphics[width=2.5in]{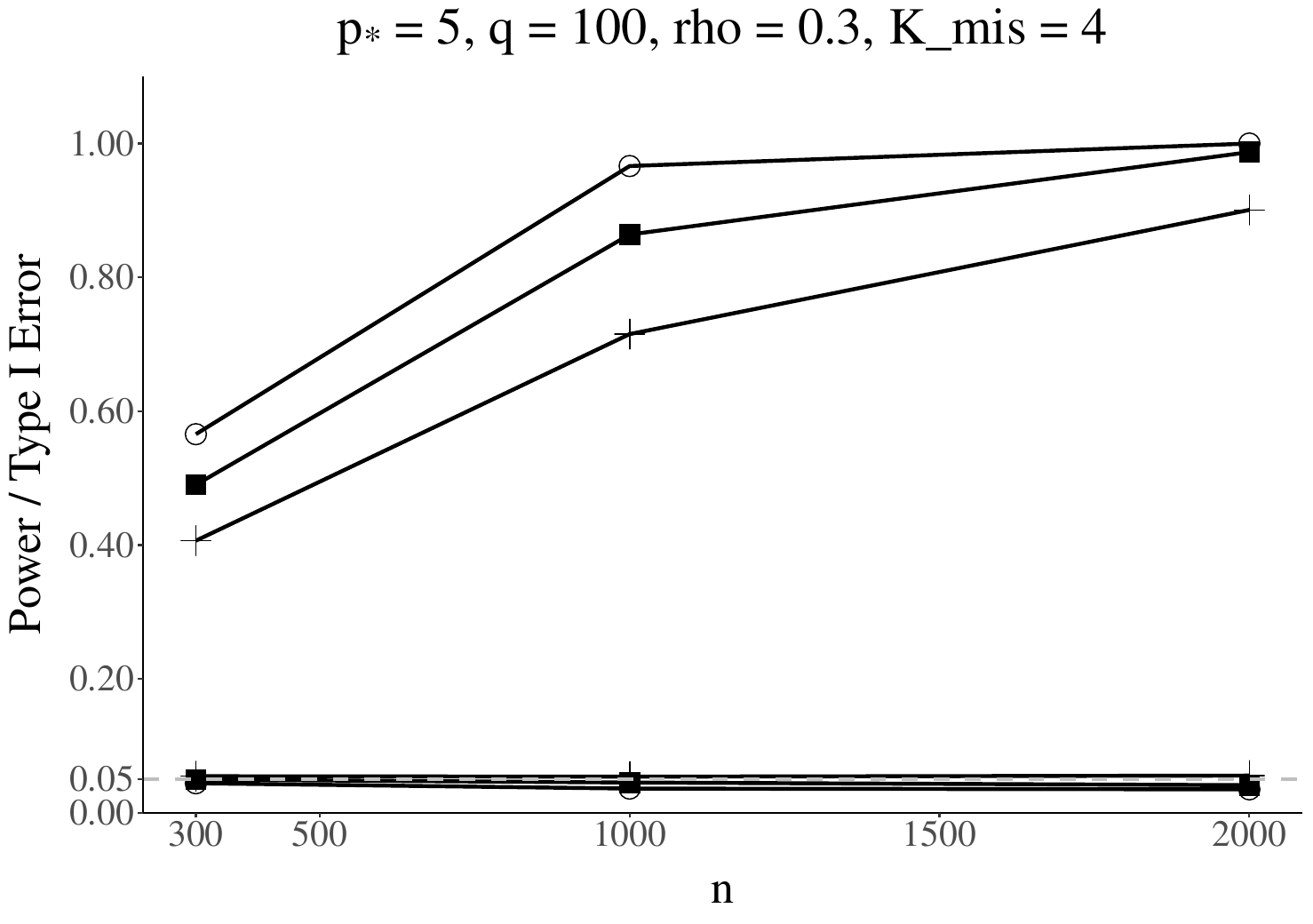}~
        \includegraphics[width=2.5in]{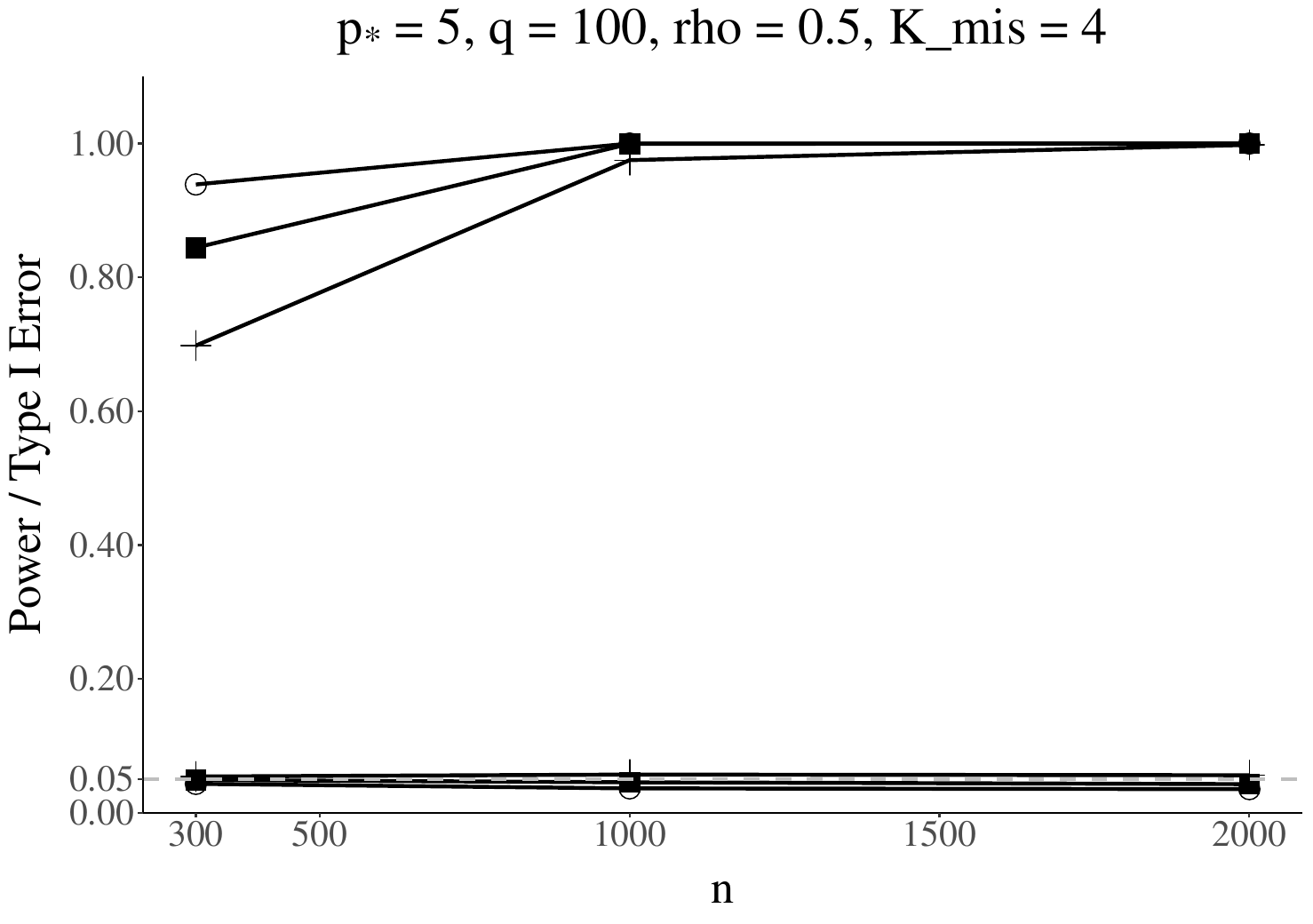}\\
         \includegraphics[width=2.5in]{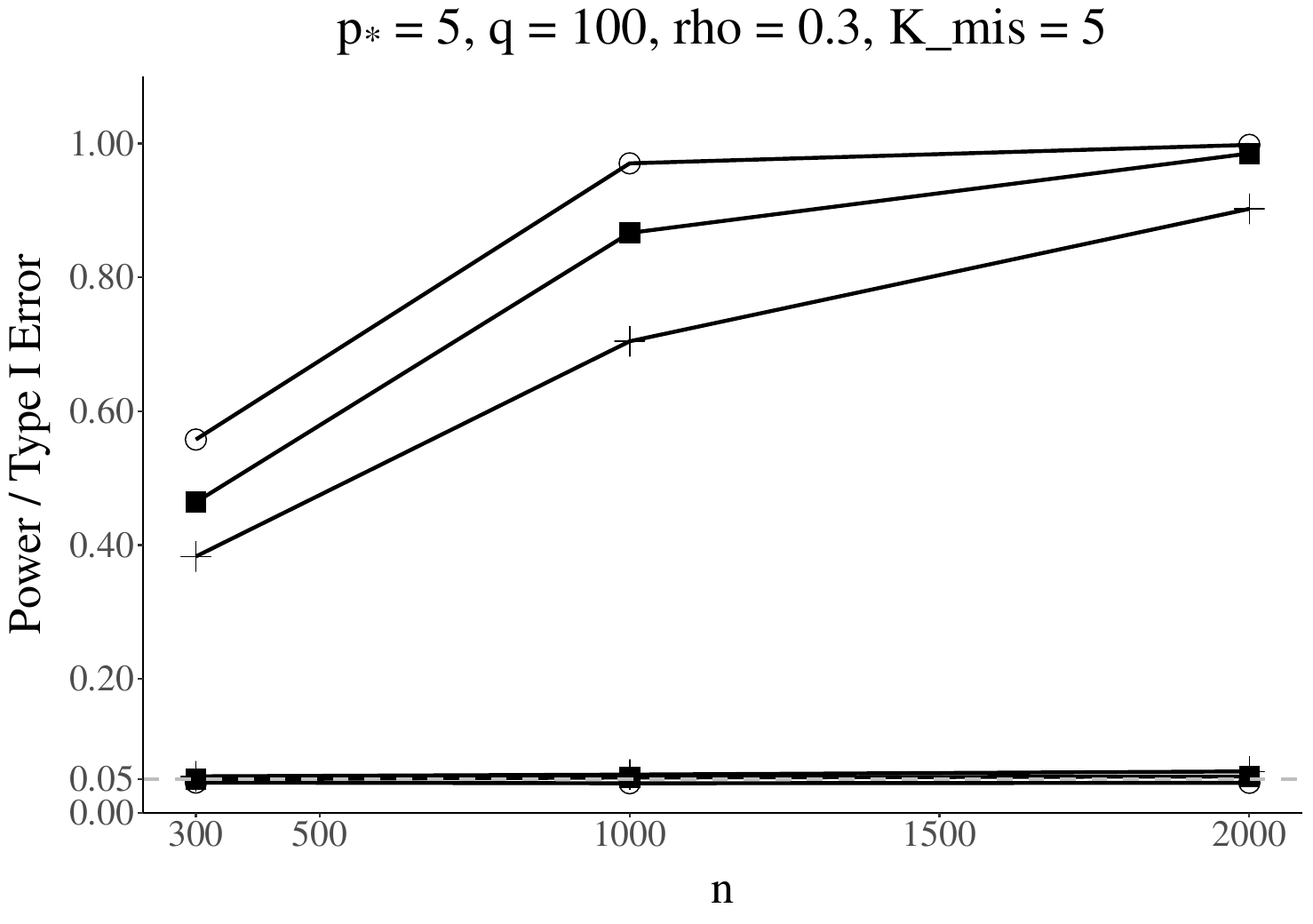}~
        \includegraphics[width=2.5in]{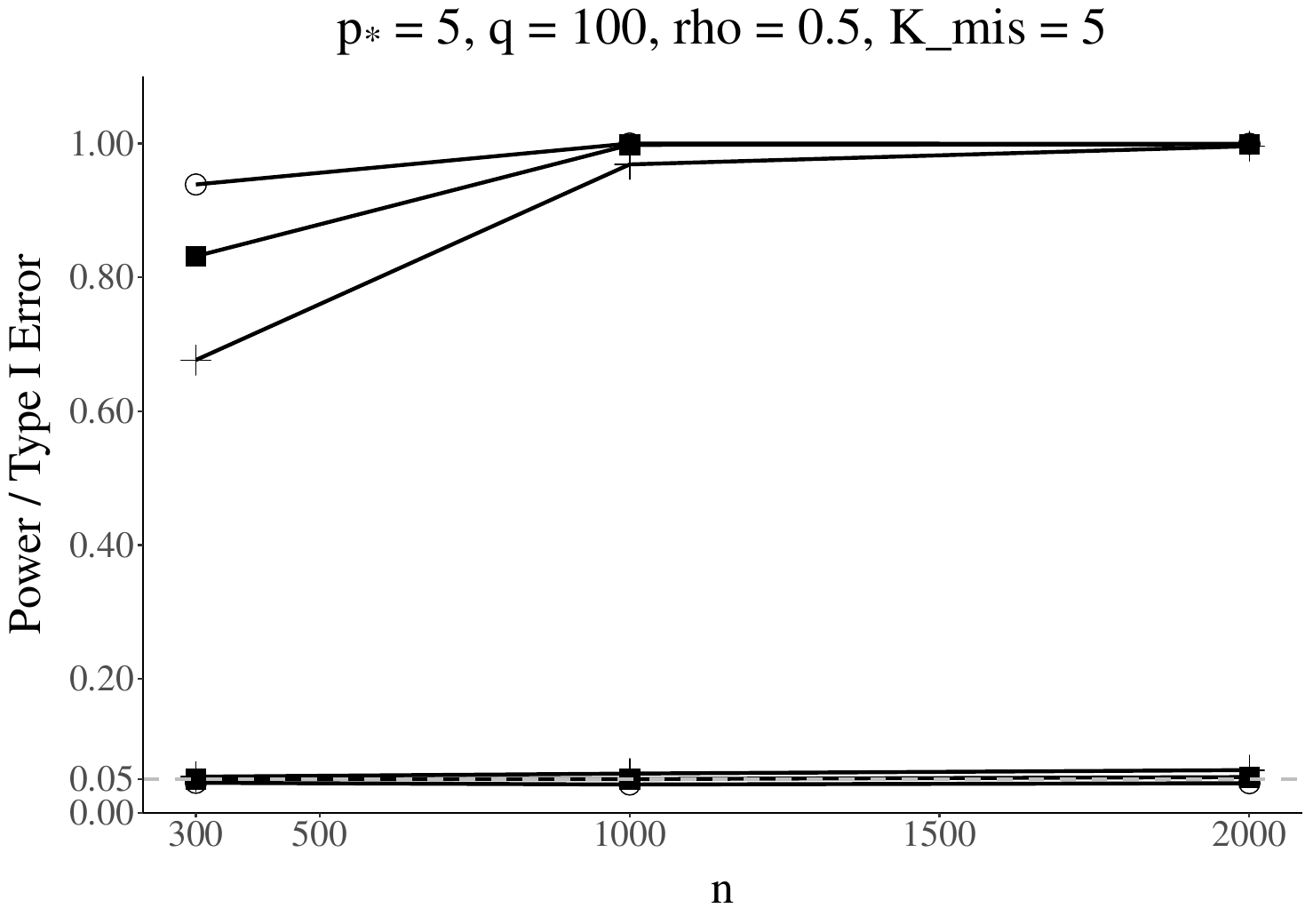}\\
         \includegraphics[width=2.5in]{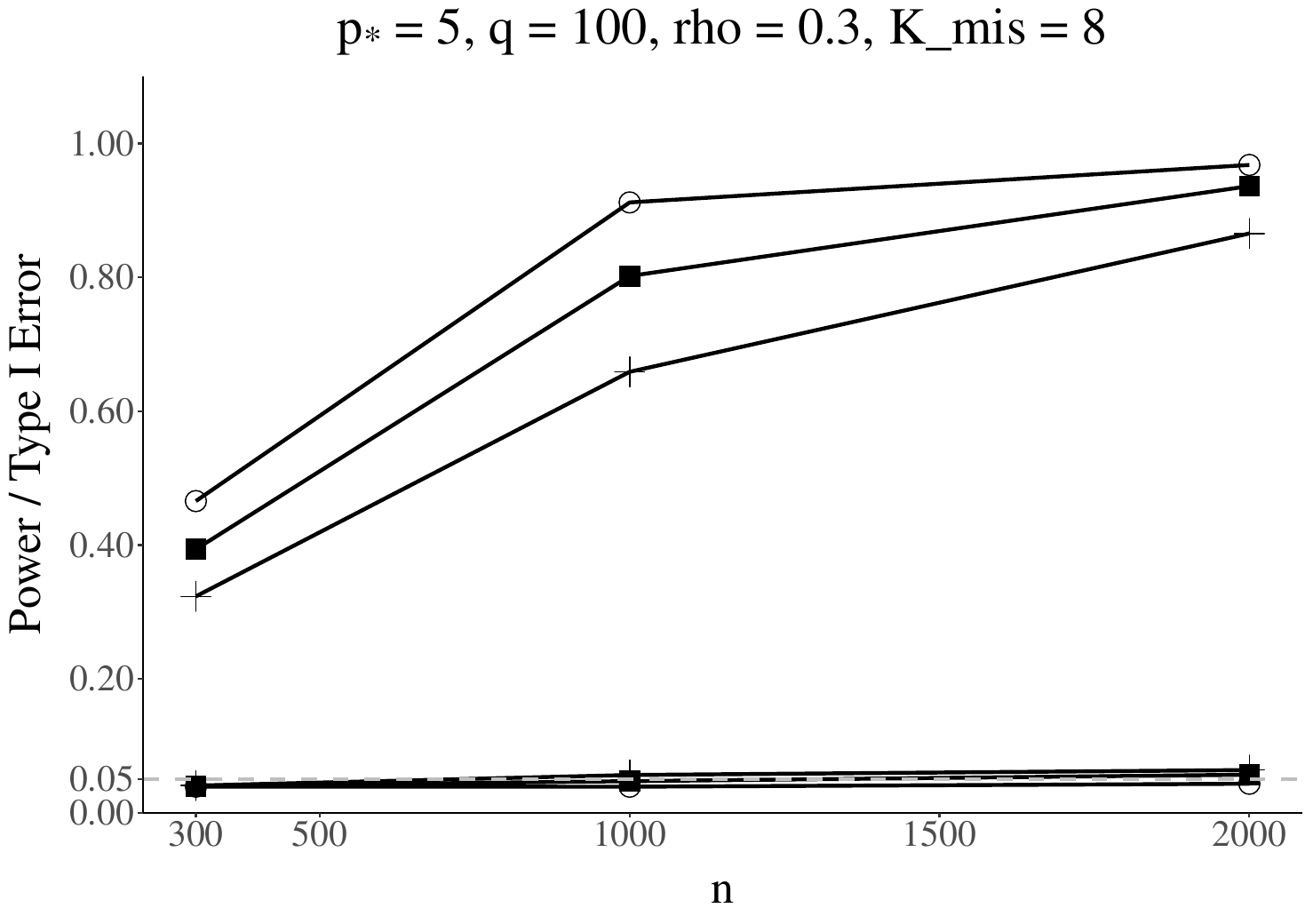}~
        \includegraphics[width=2.5in]{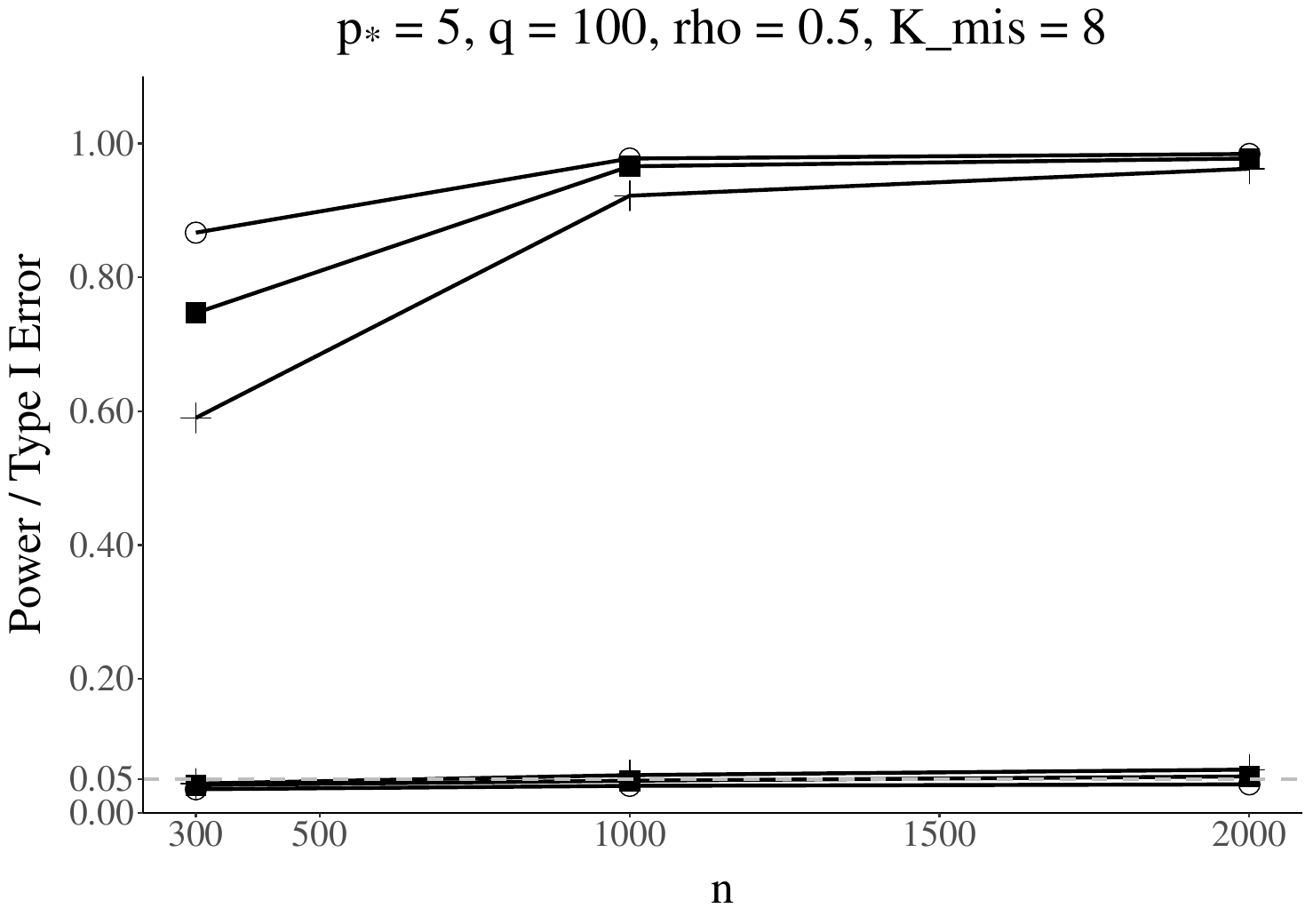}\\
    \caption{Powers and type I errors under misspecified number of latent factors ${K}_{mis} = 3, 4, 5,8$ and sparse setting at $p_*=5$. Circles (\protect\includegraphics[height=0.8em]{legend/new_rho0.png}) denote correlation parameter $\tau = 0$. Squares (\protect\includegraphics[height=0.8em]{legend/rho0.5.png}) indicate $\tau = 0.5$. Crosses (\protect\includegraphics[height=1em]{legend/rho0.7.png}) represent the $\tau = 0.7$.}
    \label{fig:6-sensitivity analysis}
\end{figure}

\begin{figure}[htbp]
\centering    
        \includegraphics[width=2.5in]{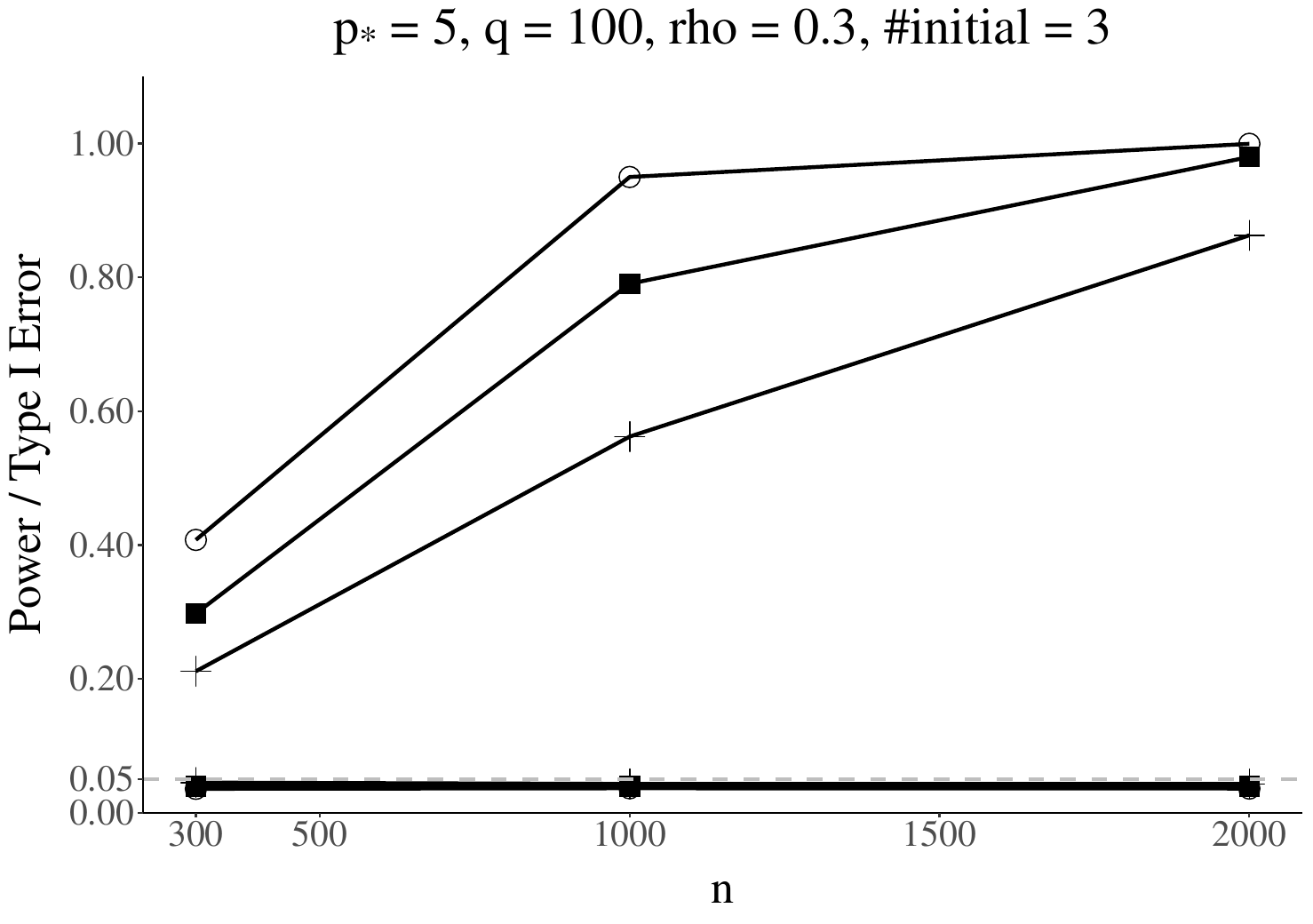}~
        \includegraphics[width=2.5in]{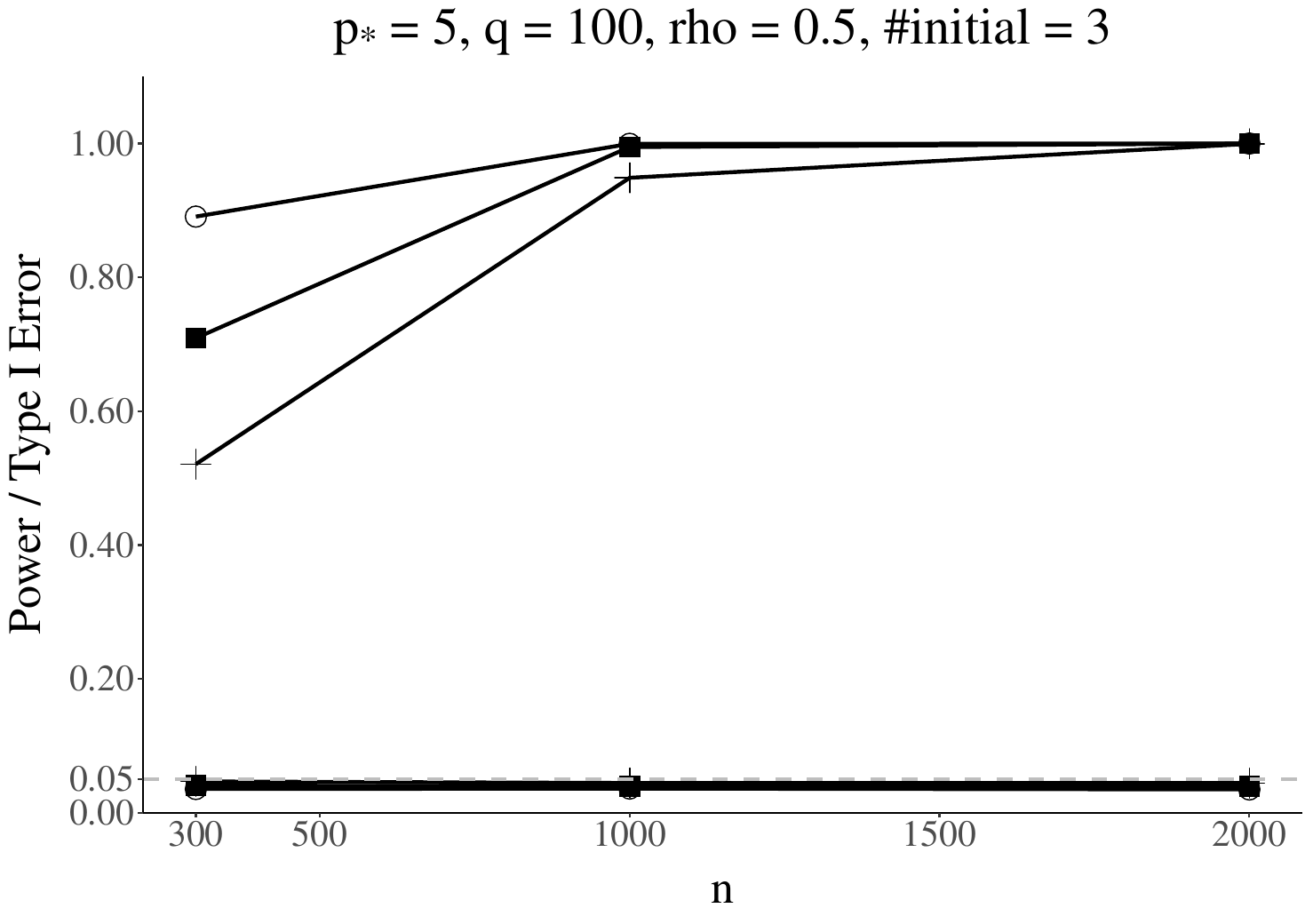}\\
        \includegraphics[width=2.5in]{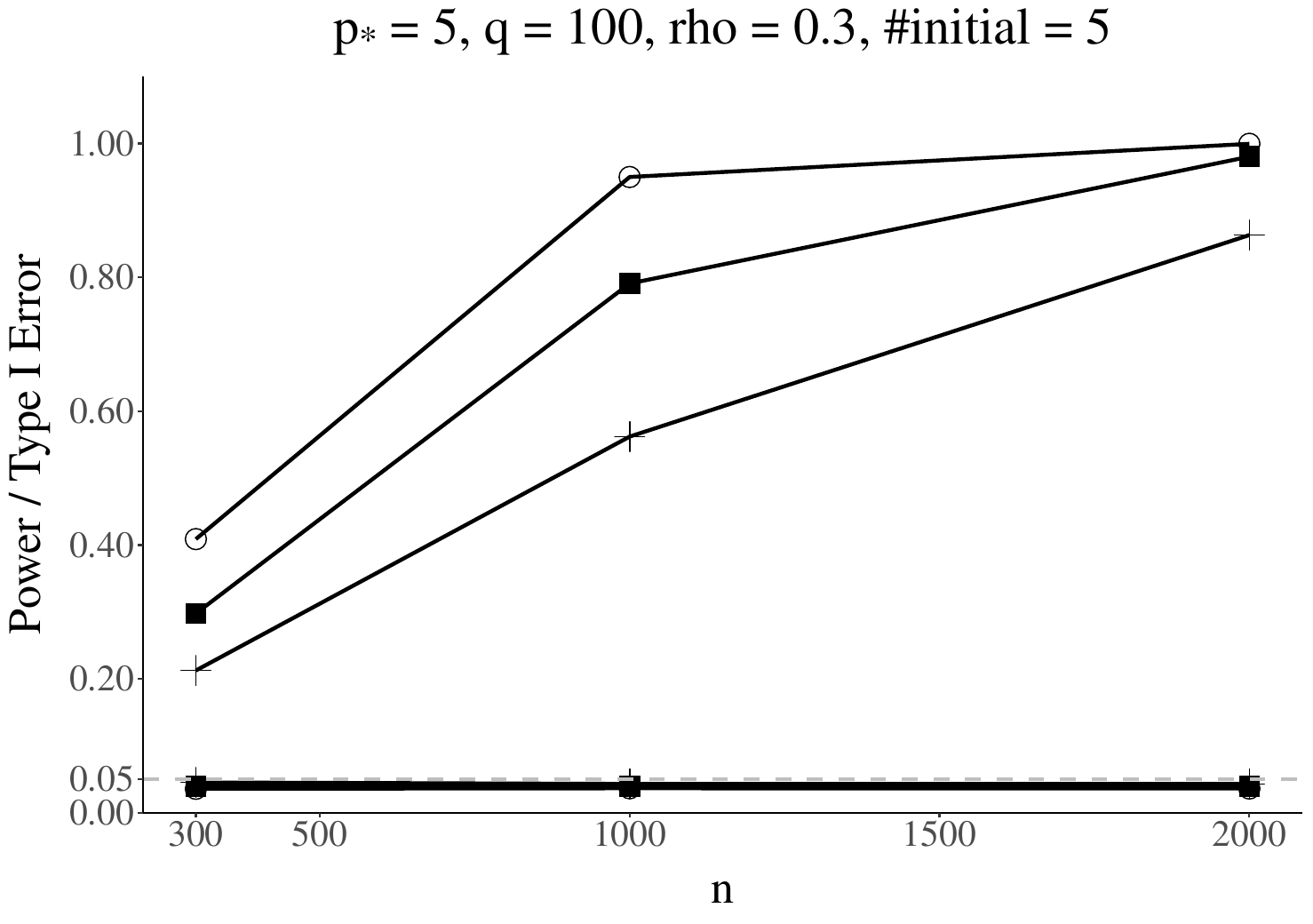}~
        \includegraphics[width=2.5in]{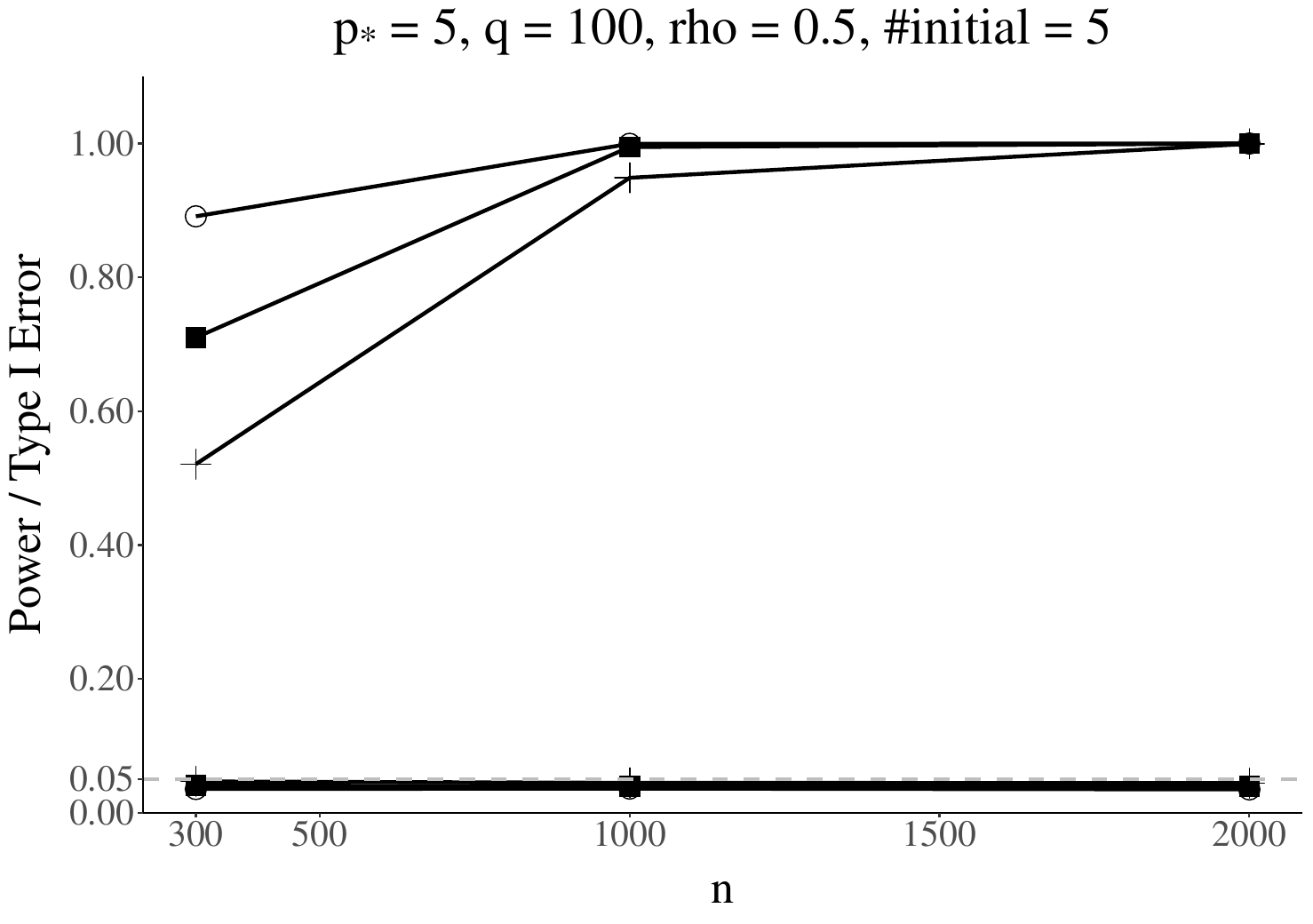}\\
        \includegraphics[width=2.5in]{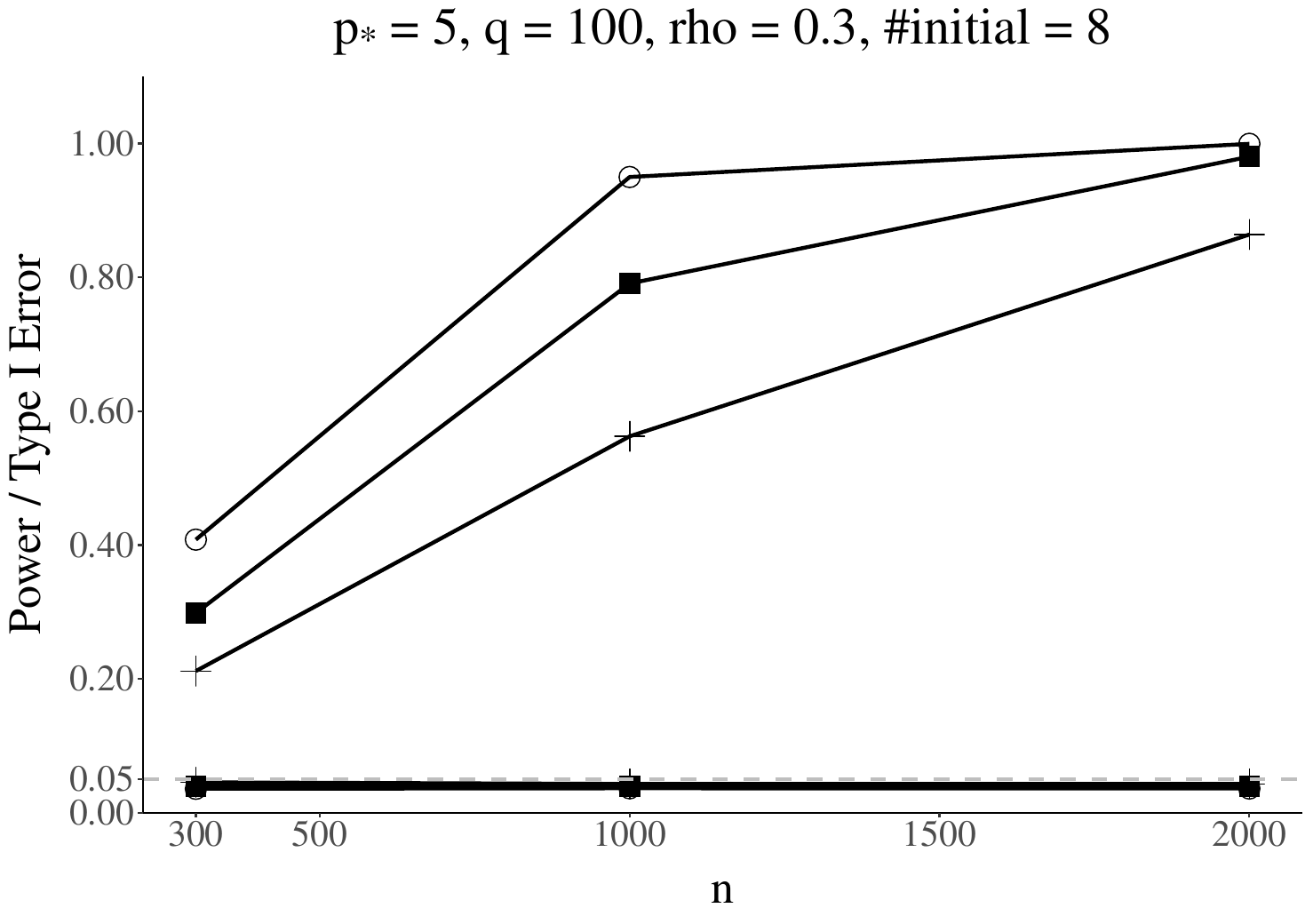}~
        \includegraphics[width=2.5in]{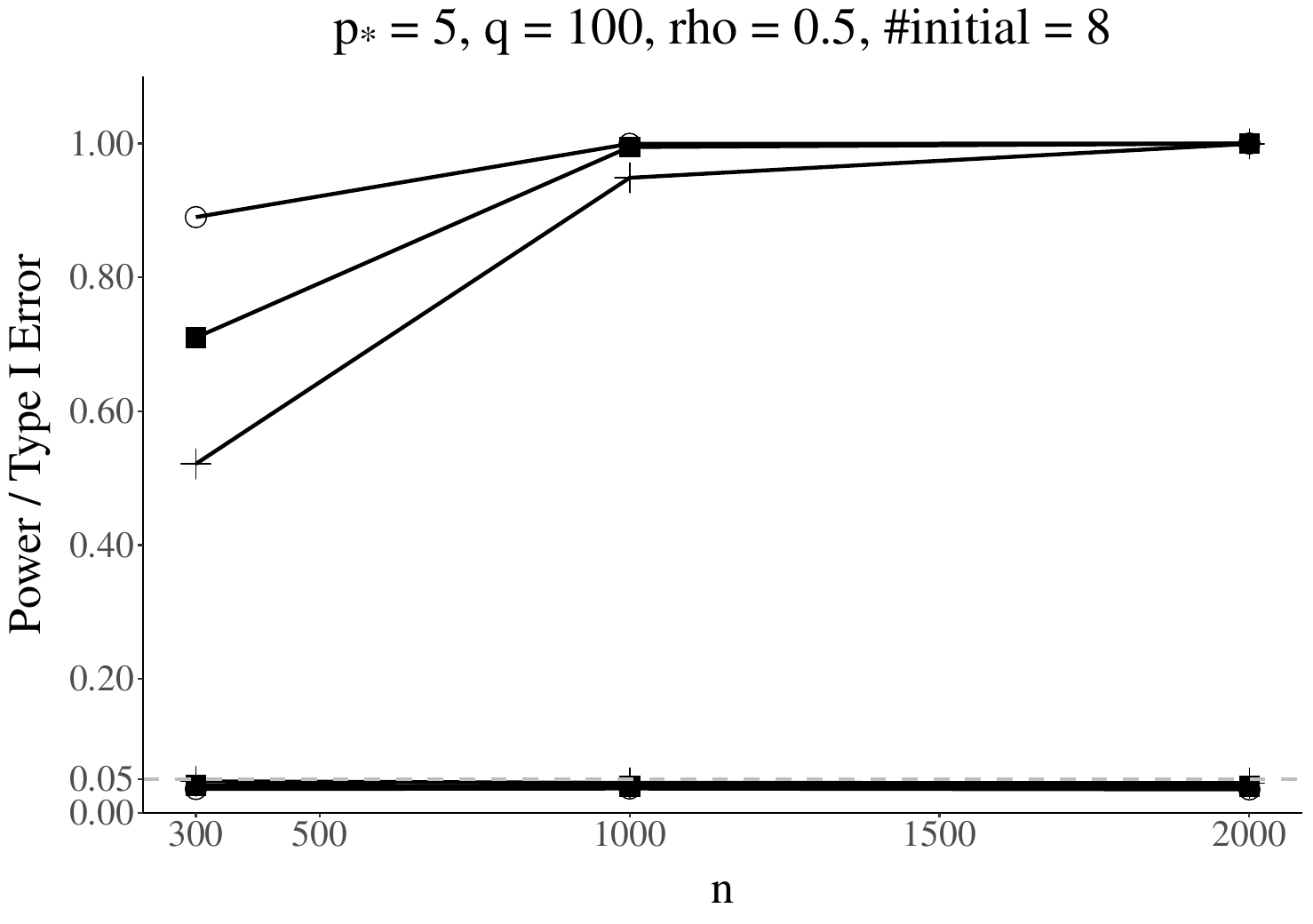}\\
    \caption{Powers and type I errors under different number of initial points $3, 5, 8$ and sparse setting at $p_*=5$. Circles (\protect\includegraphics[height=0.8em]{legend/new_rho0.png}) denote correlation parameter $\tau = 0$. Squares (\protect\includegraphics[height=0.8em]{legend/rho0.5.png}) indicate $\tau = 0.5$. Crosses (\protect\includegraphics[height=1em]{legend/rho0.7.png}) represent the $\tau = 0.7$.}
    \label{fig:11-num_initial}
\end{figure}

\begin{figure}[htbp]
\centering    
        \includegraphics[width=2.5in]{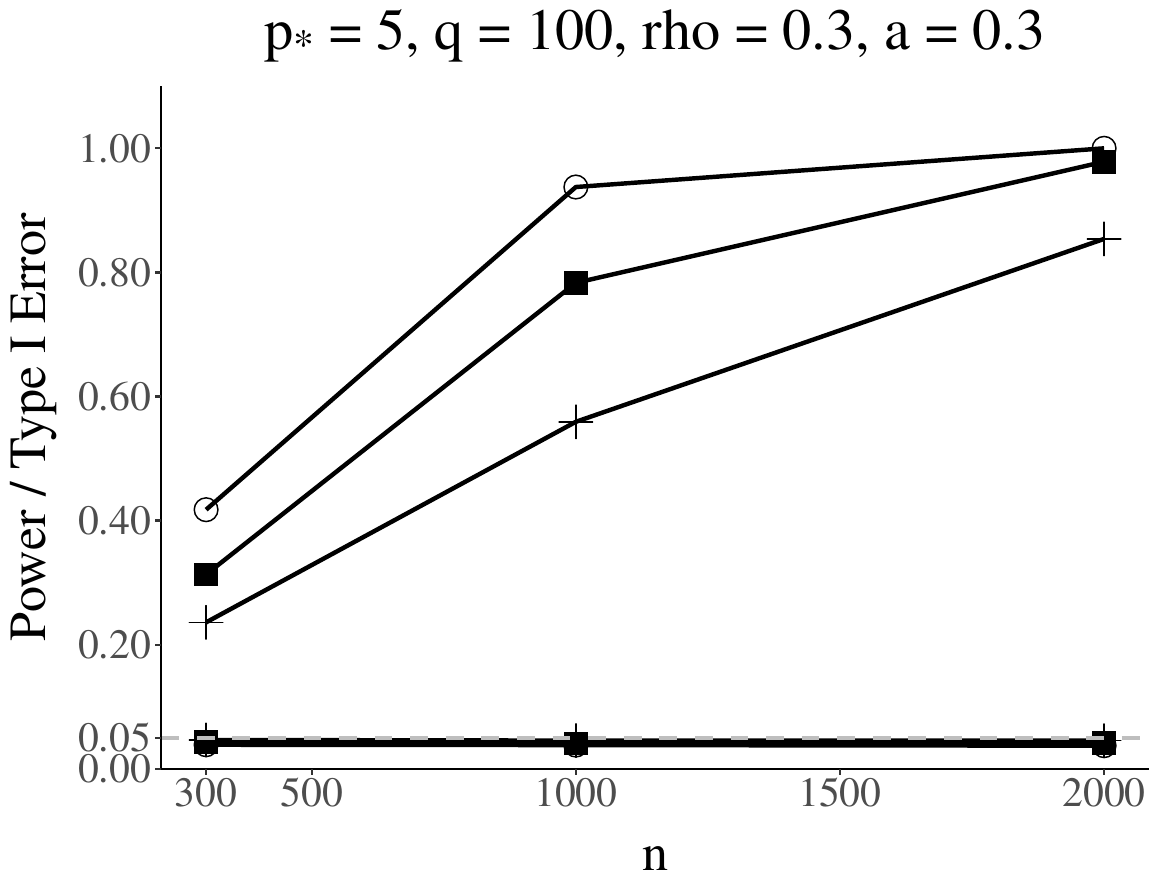}~
        \includegraphics[width=2.5in]{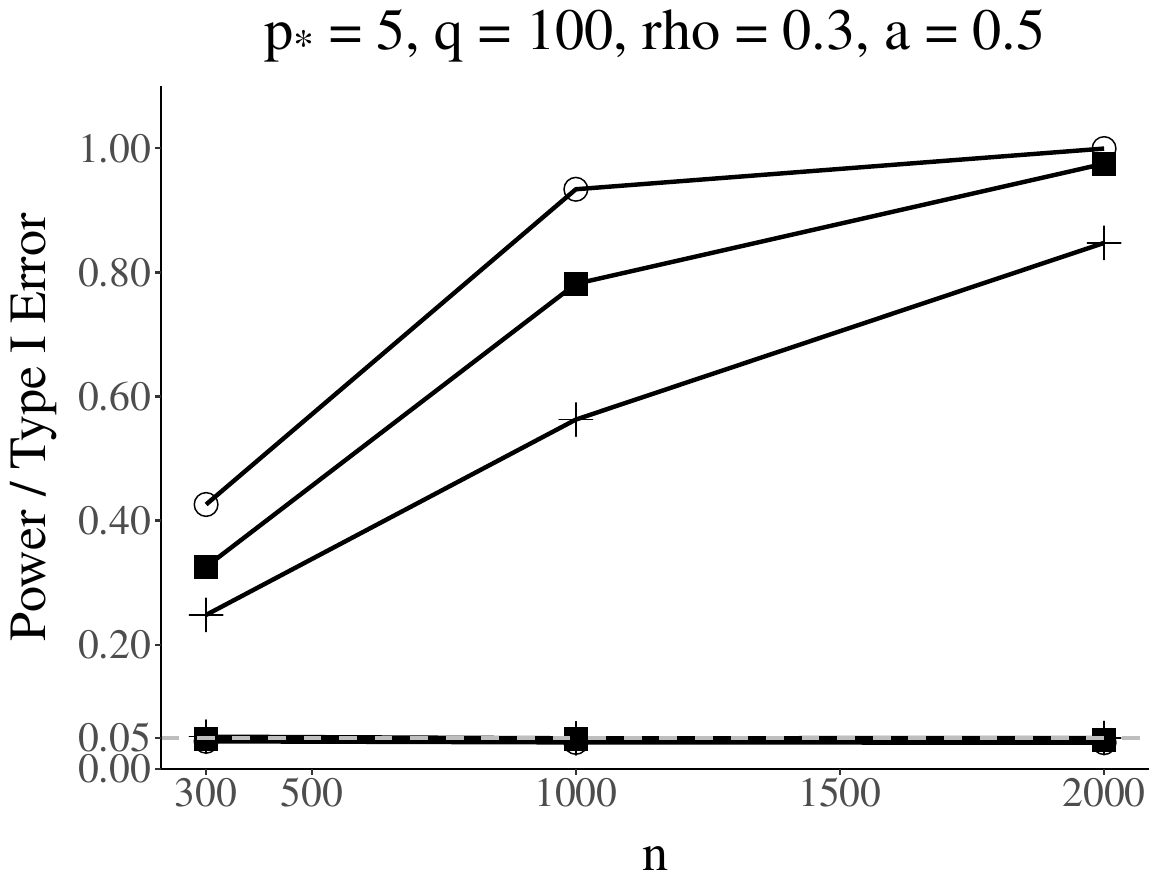}\\
        \includegraphics[width=2.5in]{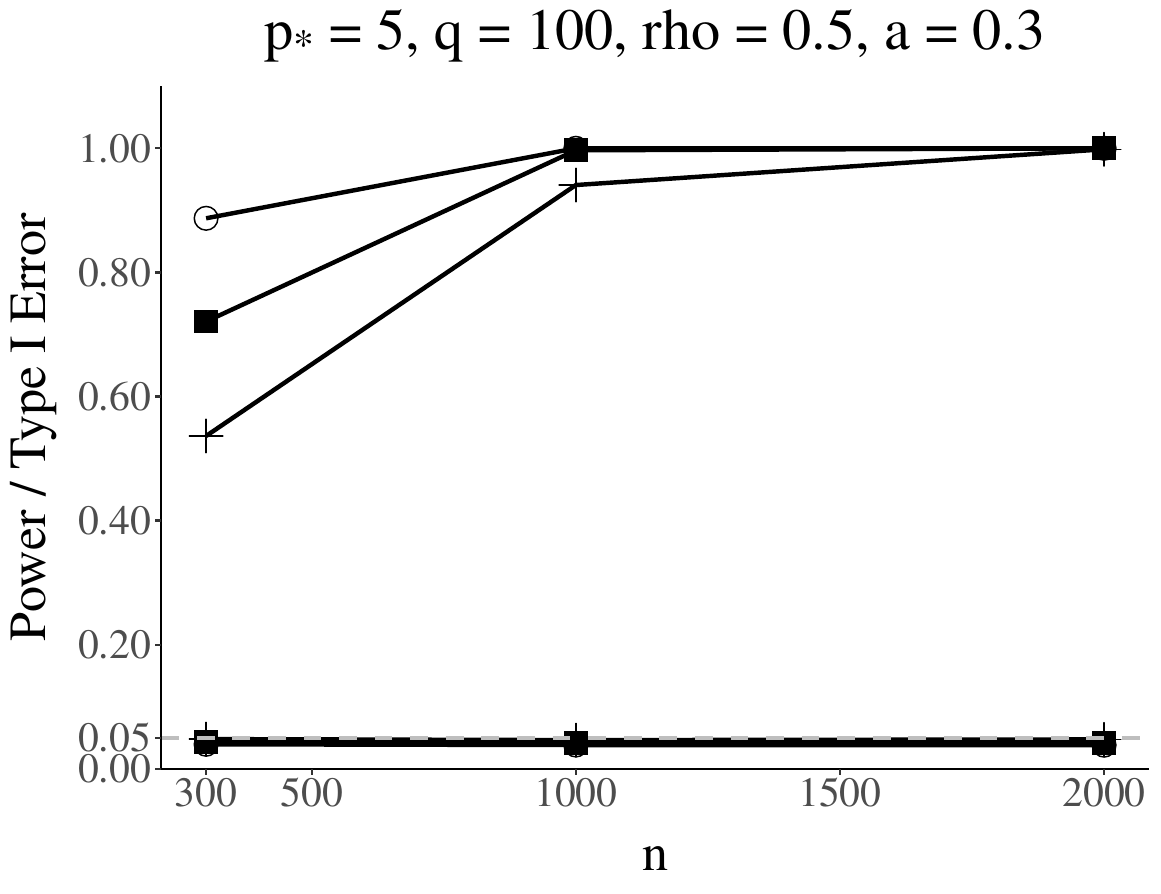}~
        \includegraphics[width=2.5in]{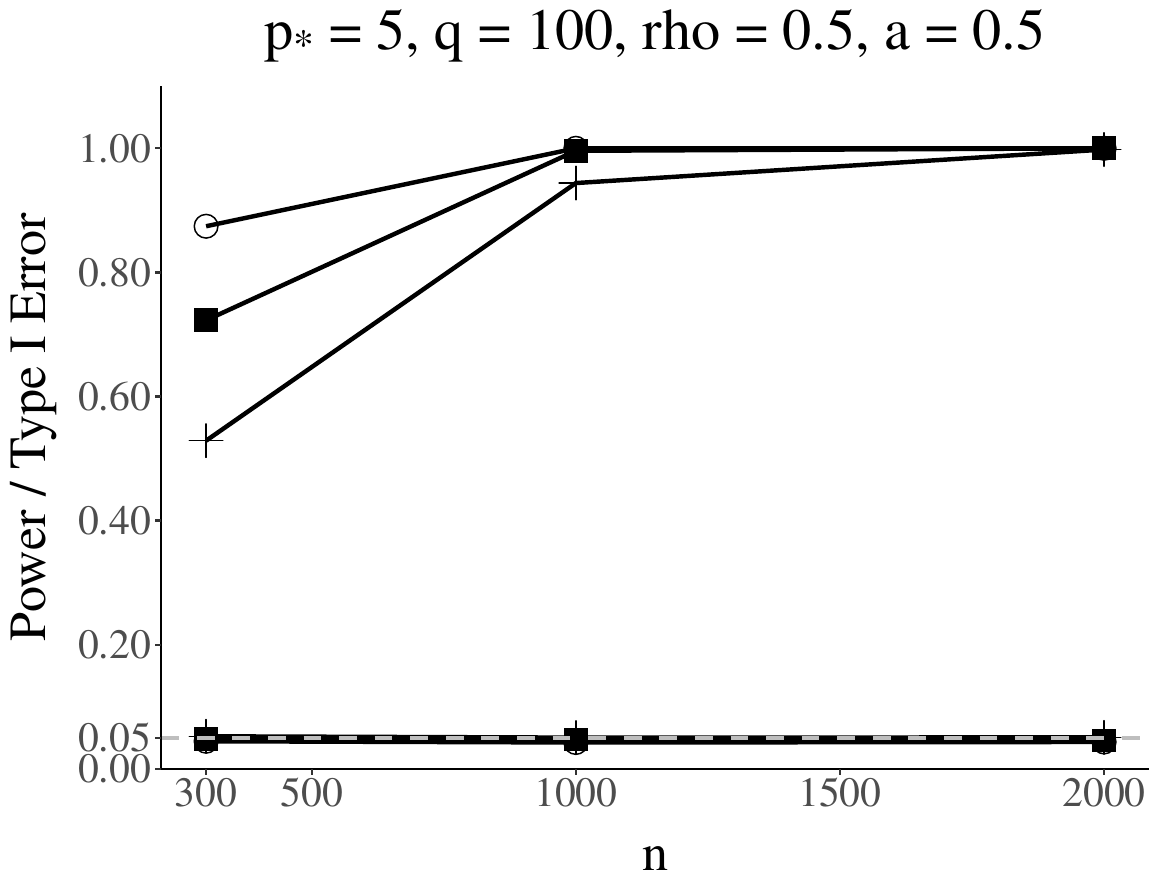}
    \caption{Powers and type I errors under full DIF model and sparse setting at $p_*=5$. Circles (\protect\includegraphics[height=0.8em]{legend/new_rho0.png}) denote correlation parameter $\tau = 0$. Squares (\protect\includegraphics[height=0.8em]{legend/rho0.5.png}) indicate $\tau = 0.5$. Crosses (\protect\includegraphics[height=1em]{legend/rho0.7.png}) represent the $\tau = 0.7$.}
    \label{fig:13-marginal_full_DIF}
\end{figure}

\clearpage

\section{Additional Data Application Results}

\subsection{Remark on Accommodating Missing Response Data}
In the data application to PISA, we extend our proposed methods and theoretical results to accommodate missing data. We note that under commonly studied missing pattern, such as when the missingness status indicators are independently and identically distributed Bernoulli random variables~\citep{davenport20141}, follow non-uniform distributions~\citep{cai2013max}, or follow flexible missing-entry scheme that generalizes beyond random sampling scheme~\citep{chen2023statistical},
then our results can be easily extended by maximizing the joint likelihood function that accounts for the missingness status random variables as they are independent of the response distribution given the latent factors and other model parameters.

Specifically, we first modify the joint log-likelihood function defined in Equation (2) of Section 2 in the main text to $$L^{obs}(\Yb \mid \bGamma, \Ub, \Bb,\Xb) = \sum_{i=1}^n \sum_{j \in \cQ_i} l_{ij}(\bgamma_j^{\intercal}\bU_i  + \bbeta_j^{\intercal}\bX_i),$$
where $\cQ_i$ denotes the set of items for which responses from student $i$ are observed. Following the same approach as in Equation (5) in Section 3.2 of the main text, we define the estimator based on the observed data by maximizing the modified log-likelihood as
\begin{equation*}
    \hat\bphi^{obs} = \mathop{\arg\max}_{\phi \in \cB(D)} L^{obs}(\bY\mid \bphi,\Xb)
\end{equation*}
with $\hat\bphi^{obs} = (\hat\bGamma^{obs}, \hat\bU^{obs},\hat\bB^{obs})$. By applying the same transformation steps as in Equations (6)--(8) (with $\hat\bphi$ replaced by $\hat\bphi^{obs}$), we obtain the final estimator under the missing data setting, denoted by $\hat\bphi^{obs}_* = (\hat\bGamma^{obs}_*, \hat\bU^{obs}_*,\hat\bB^{obs}_*)$.

Under many common missing data patterns~\citep{davenport20141, cai2013max, chen2023statistical}, it can then be verified that our consistency results in Theorem 1 still hold for $\hat\bphi^{obs}_*$ under some regularity conditions.  Under this setup, the asymptotic normality for $\hat\bphi^{obs}_*$ presented in Theorem 2 and can also be established, with appropriate modifications to the asymptotic variance-covariance matrices. Specifically, in the expressions for $\bSigma_{\beta,j}^*$ and $\bSigma_{\gamma,j}^*$, the summations $\sum_{i=1}^n$ should be replaced by $\sum_{i\in\cN_j}$, where $\cN_j$ is the set of subjects whose responses to item $j$ are observed.
Similarly, in the expression for $\bSigma_{u,i}^*$, the summation $\sum_{j=1}^q$ should be replaced by $\sum_{j \in \cQ_i}$, where $\cQ_i$ is defined above.
Finally, Corollary 1, which establishes the estimators for the asymptotic covariance matrices $\bSigma_{\beta,j}^*$, $\bSigma_{\gamma,j}^*$ and $\bSigma_{u,i}^*$, remains valid under the same modifications. Our real data analysis is carried out with these adjustment to accommodate missing data.

\subsection{Complete Point and Interval Estimation for School Strata Effects}
In real data analysis, we apply the proposed method to item response and covariate data with the covariate dimension $p_* = 9$. In particular, we include one gender variable and eight school strata variables, which are indicator variables capturing the type and location of the school attended by each student. Their detailed definitions are as follows:

\begin{enumerate}
        \item Whether the attended school is a public school;
         \item Whether the attended school is in rural area (as opposed to urban or suburban);
         \item Whether the attended school is in urban area (as opposed to suburban or rural);
    \item Whether the attended school is a junior high school;
    \item Whether the attended school is a regular senior secondary school;
        \item Whether the attended school is a skill-based senior secondary school;
        \item Whether the attended school is a comprehensive senior secondary school;
        \item Whether the attended school is a five-year junior college;
\end{enumerate}

These variables are included as covariates and the corresponding point and interval estimation results are provided in Figure~\ref{fig:School Strata}.

\begin{figure}[htbp]
\centering    
        \includegraphics[width=2in]{application_revision/TAP_public_improved_plot.pdf}~
        \includegraphics[width=2in]{application_revision/TAP_rural_improved_plot.pdf}\\
        \includegraphics[width=2in]{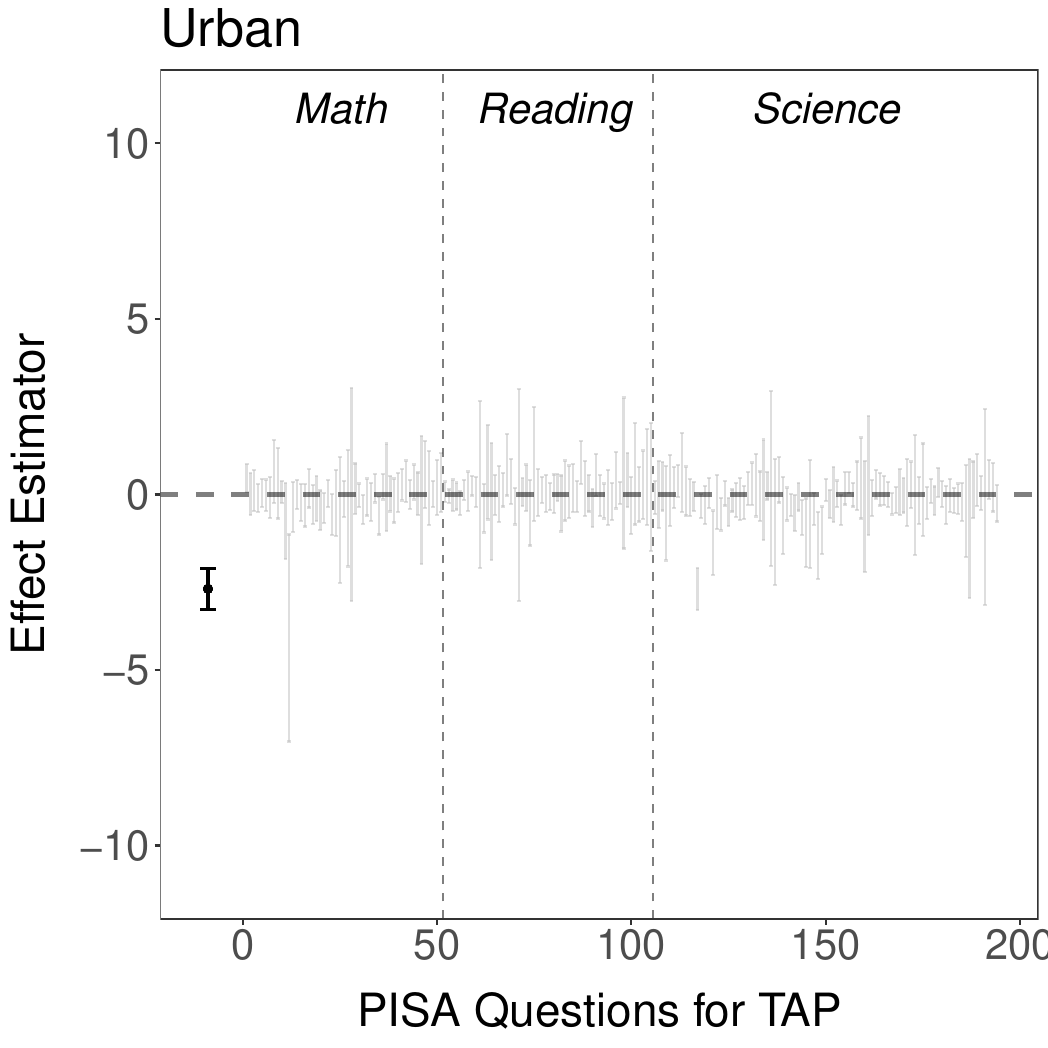}~
        \includegraphics[width=2in]{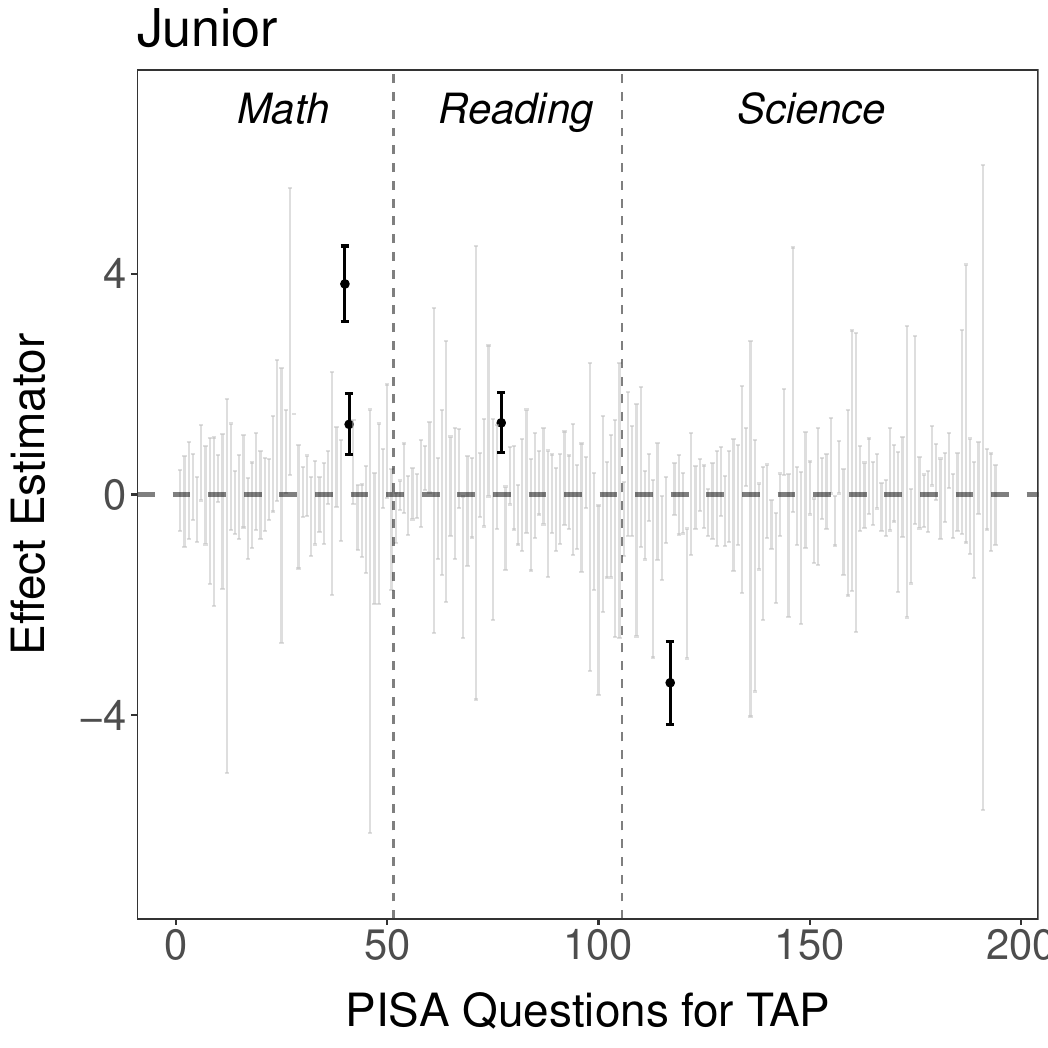}\\
        \includegraphics[width=2in]{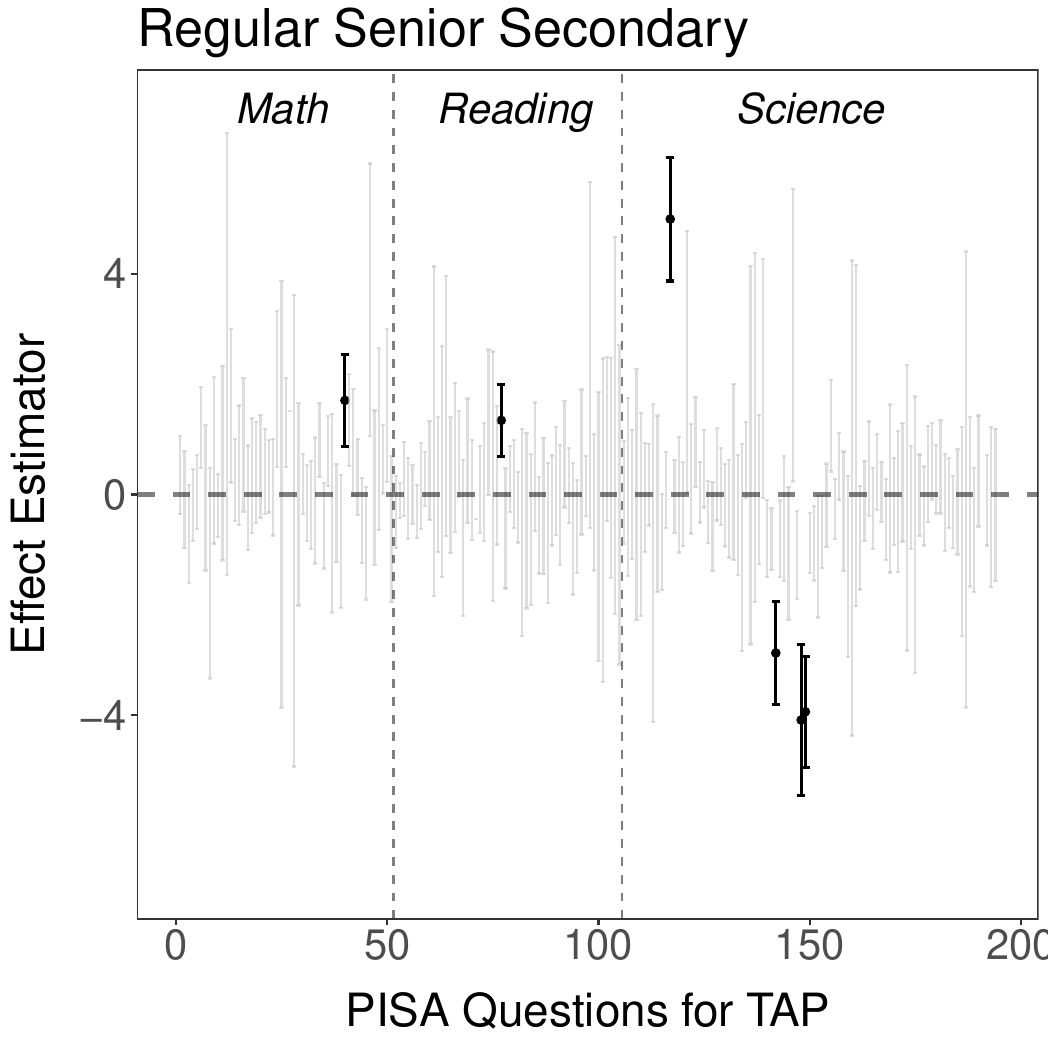}~
        \includegraphics[width=2in]{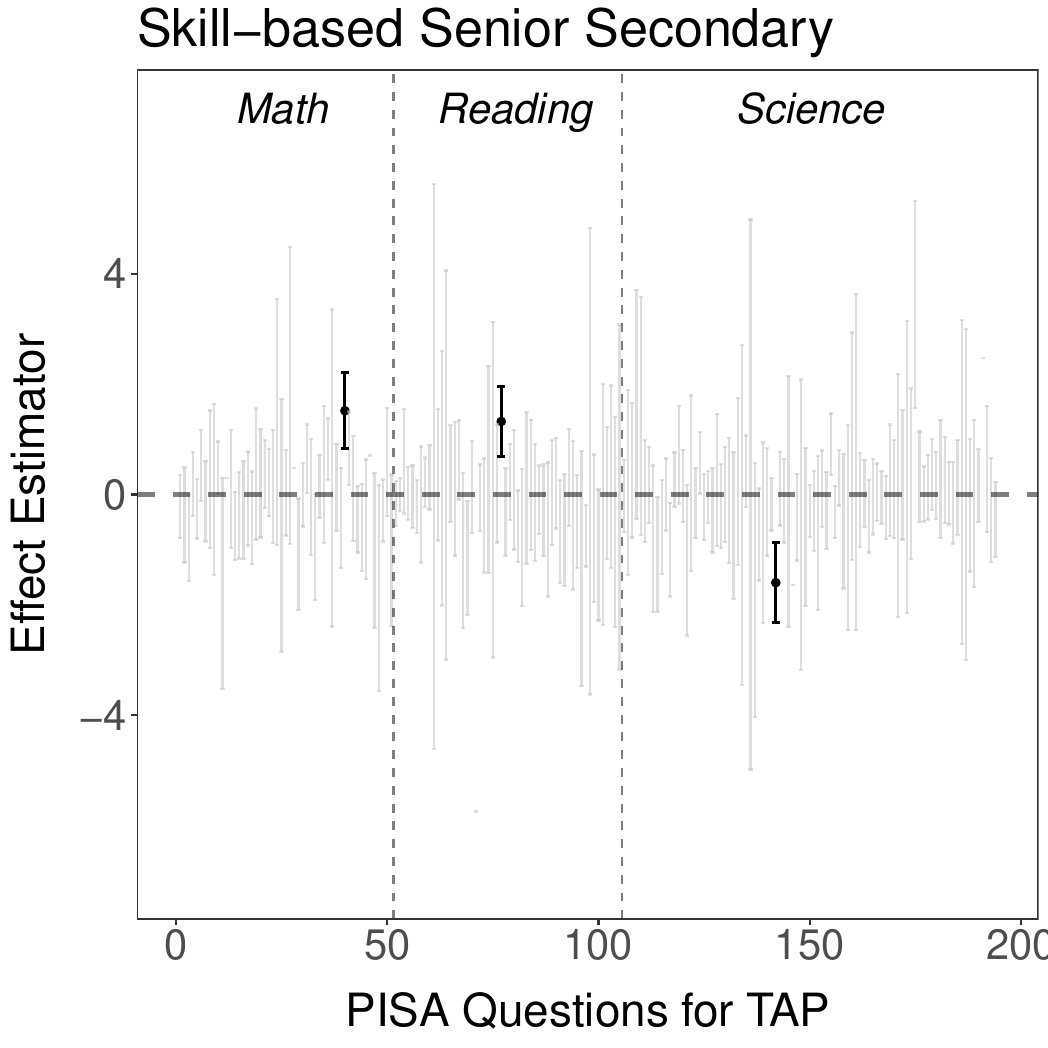}\\
        \includegraphics[width=2in]{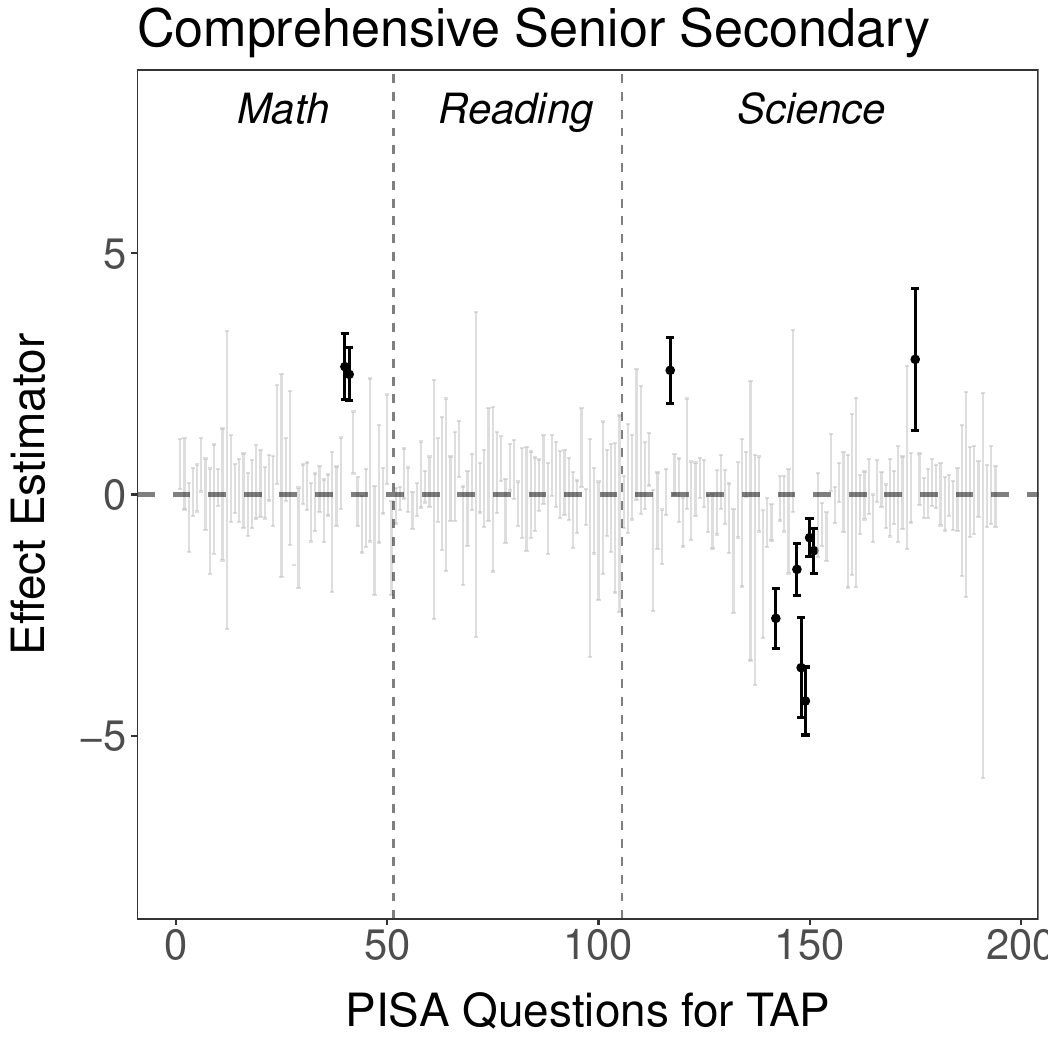}~
        \includegraphics[width=2in]{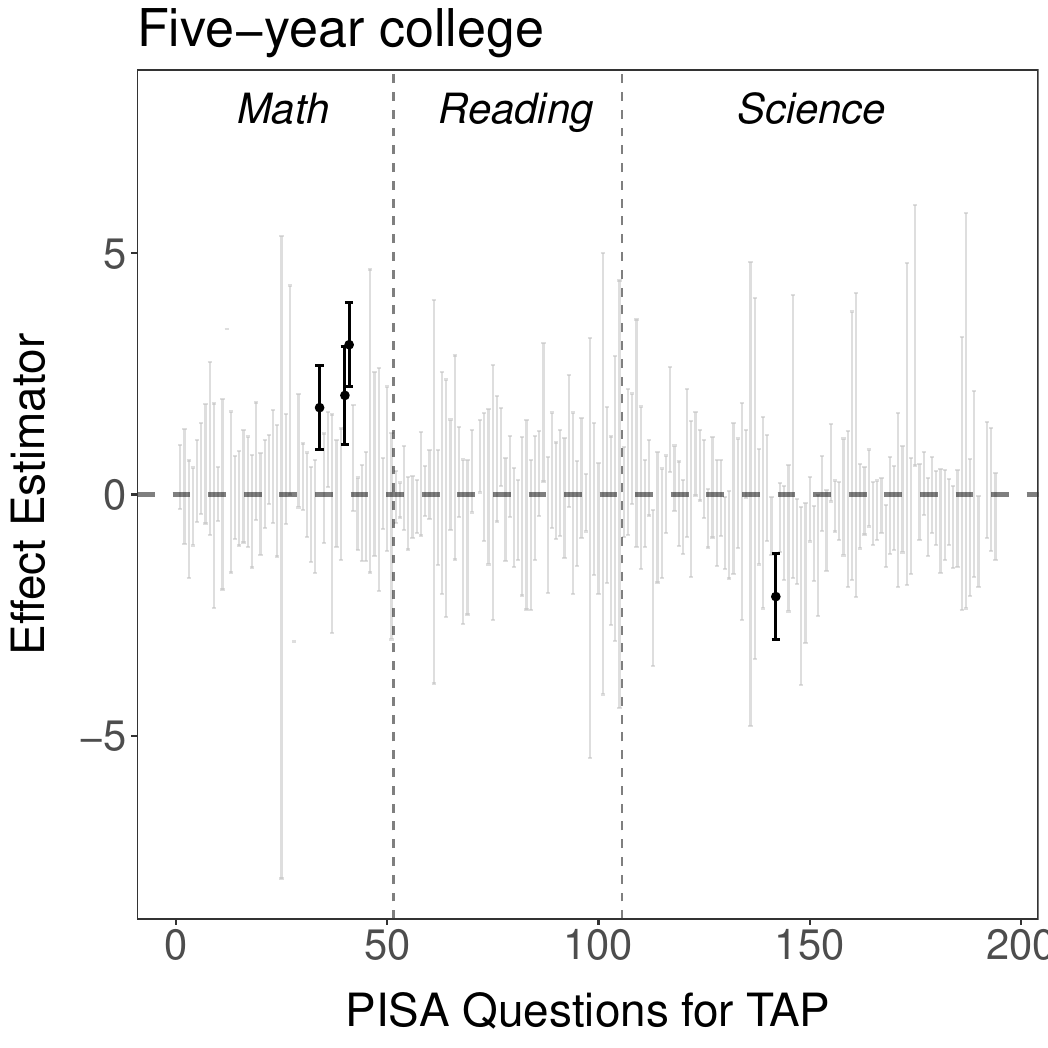}\\
    \caption{Confidence intervals for the effect of all eight school stratum covariate on each PISA question. Dark intervals correspond to confidence intervals for questions with significant school stratum bias after Bonferroni correction. }
    \label{fig:School Strata}
\end{figure}

\end{appendix}

\newpage

\end{document}